\newcommand{\X}{\raisebox{-2pt}{\includegraphics[scale=0.45]{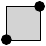}}}
\newcommand{\SymbolTilde}{\raisebox{-2pt}{\includegraphics[scale=0.65]{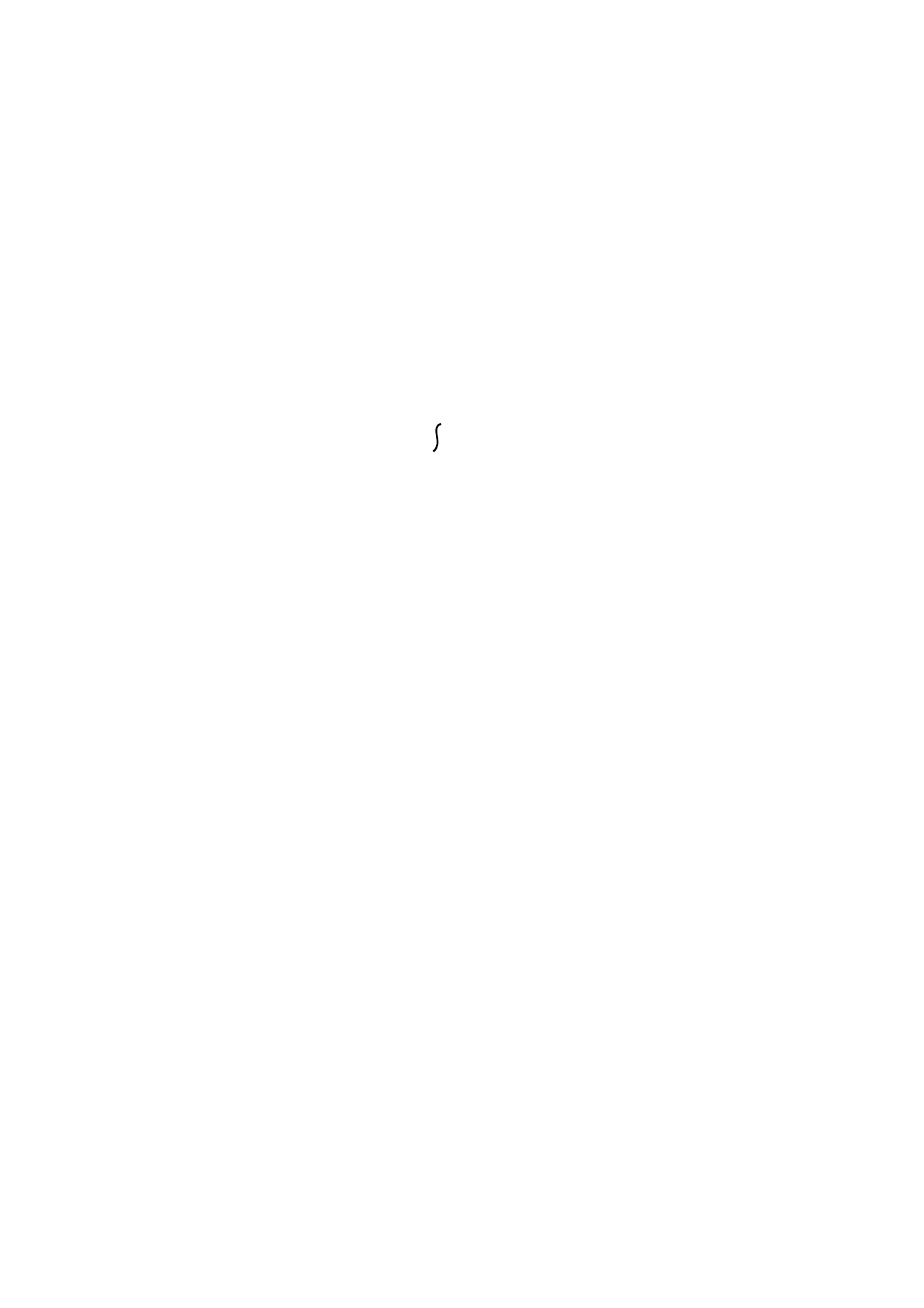}}}
\newcommand{\D}{\raisebox{-2pt}{\includegraphics[scale=0.45]{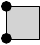}}\hspace{1pt}}
\renewcommand{\L}{\raisebox{-2pt}{\includegraphics[scale=0.45]{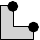}}}
\newcommand{\C}{\raisebox{-2pt}{\includegraphics[scale=0.45]{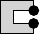}}}
\newcommand{\x}{\raisebox{-2pt}{\includegraphics[scale=0.25]{X.pdf}}}
\renewcommand{\d}{\raisebox{-2pt}{\includegraphics[scale=0.25]{D.pdf}}}
\renewcommand{\c}{\raisebox{-2pt}{\includegraphics[scale=0.25]{C.pdf}}}
\renewcommand{\l}{\raisebox{-2pt}{\includegraphics[scale=0.25]{L.pdf}}}
\newcommand{\oneB}{\raisebox{-2pt}{\includegraphics[scale=0.45]{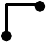}}}
\newcommand{\zeroB}{\raisebox{-2pt}{\includegraphics[scale=0.45]{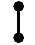}}}
\newcommand{\oneb}{\raisebox{-2pt}{\includegraphics[scale=0.25]{1-bend.pdf}}}
\newcommand{\zerob}{\raisebox{-2pt}{\includegraphics[scale=0.25]{0-bend.pdf}}}
\newcommand{\myparagraph}[1]{\smallskip\noindent\textbf{\boldmath #1}}
\newcommand{\skel}{\mathrm{skel}\xspace}
\newcommand{\rect}{\overline}
\newcommand{\fx}{\mathrm{fx}\xspace}
\DeclareMathOperator{\intr}{intr}
\DeclareMathOperator{\extr}{extr}
\DeclareMathOperator{\flex}{flex}
\DeclareMathOperator{\coflex}{coflex}
\newtheorem{observation}{Observation}
\renewcommand{\qed}{\hfill $\square$}
\crefname{theorem}{Theorem}{Theorems}
\crefname{lemma}{Lemma}{Lemmas}
\crefname{hypothesis}{Hypothesis}{Hypotheses}
\crefname{property}{Property}{Properties}
\crefname{section}{Section}{Sections}
\crefname{subsection}{Section}{Sections}
\crefname{figure}{Fig.}{Figs.}
\crefname{equation}{Equation}{Equations}
\title{Optimal Orthogonal Drawings of Planar 3-Graphs in Linear Time\thanks{A preliminary version of this paper is published in the Proceedings of the ACM-SIAM Symposium on Discrete Algorithms (SODA '20)~\cite{dlop-oodlt-20}.
Research partially supported by MUR PRIN Proj. 2022TS4Y3N - ``EXPAND: scalable algorithms for EXPloratory Analyses of heterogeneous and dynamic Networked Data''.}}
\author{W. Didimo\thanks{Universit\`a degli Studi di Perugia, Italy (\email{walter.didimo@unipg.it}).}
 \and G. Liotta\thanks{Universit\`a degli Studi di Perugia, Italy
 (\email{giuseppe.liotta@unipg.it}).}
 \and G. Ortali\thanks{Universit\`a degli Studi di Perugia, Italy
 (\email{giacomo.ortali@unipg.it}).}
 \and M. Patrignani\thanks{Universit\`a Roma Tre, Italy
 (\email{maurizio.patrignani@uniroma3.it}).}
}
\begin{document}
%\author{W. Didimo$^1$, G. Liotta$^1$, G. Ortali$^1$, and M. Patrignani$^2$}

%\authorrunning{W. Didimo {\em et al.}}

%\institute{$^1$ Universit\`a degli Studi di Perugia, Italy \\
%	\email{walter.didimo@unipg.it,giuseppe.liotta@unipg.it,giacomo.ortali@gmail.com}\\
%	$^2$ Roma Tre University, Rome, Italy\\
%	\email{maurizio.patrignani@uniroma3.it}}

% \author{\thanks{Universit\`a degli Studi di Perugia, Italy
% 		(\email{}).}
% 	\and \thanks{Universit\`a degli Studi di Perugia, Italy
% 		(\email{}).}
% 	\and \thanks{Universit\`a degli Studi di Perugia, Italy
% 		(\email{}).}
% 	\and \thanks{Universit\`a Roma Tre, Italy
% 		(\email{maurizio.patrignani@uniroma3.it}).}
% }

% The next statement enables references to information in the
% supplement. See the xr-hyperref package for details.

%\externaldocument{ex_supplement}

% FundRef data to be entered by SIAM
%<funding-group>
%<award-group>
%<funding-source>
%<named-content content-type="funder-name">
%</named-content>
%<named-content content-type="funder-identifier">
%</named-content>
%</funding-source>
%<award-id> </award-id>
%</award-group>
%</funding-group>

\maketitle

% Copyright Statement
% When submitting your final paper to a SIAM proceedings, it is requested that you include
% the appropriate copyright in the footer of the paper.  The copyright added should be
% consistent with the copyright selected on the copyright form submitted with the paper.
% Please note that "20XX" should be changed to the year of the meeting.

% Default Copyright Statement
%\fancyfoot[R]{\scriptsize{Copyright \textcopyright\ 2020 by SIAM\\
%		Unauthorized reproduction of this article is prohibited}}

% REQUIRED
\begin{abstract}
A \emph{planar orthogonal drawing} $\Gamma$ of a connected planar graph~$G$ is a geometric representation of~$G$ such that the vertices are drawn as distinct points of the plane, the edges are drawn as chains of horizontal and vertical segments, and no two edges intersect except at common end-points. A \emph{bend} of $\Gamma$ is a point of an edge where a horizontal and a vertical segment meet. Drawing~$\Gamma$ is \emph{bend-minimum} if it has the minimum number of bends over all possible planar orthogonal drawings of~$G$. Its \emph{curve complexity} is the maximum number of bends per edge.
In this paper we present a linear-time algorithm for the computation of planar orthogonal drawings of 3-graphs (i.e., graphs with vertex-degree at most three), that minimizes both the total number of bends and the curve complexity.
The algorithm works in the so-called variable embedding setting, that is, it can choose among the exponentially many planar embeddings of the input graph.
While the time complexity of minimizing the total number of bends of a planar orthogonal drawing of a 3-graph in the variable embedding settings is a long standing, widely studied, open question, the existence of an orthogonal drawing that is optimal both in the total number of bends and in the curve complexity was previously unknown.
Our result combines several graph decomposition techniques, novel data-structures, and efficient approaches to re-rooting decomposition trees.

\end{abstract}

%\tableofcontents

 %SIAM REQUIRED
 \begin{keywords}
   Graph Drawing, Orthogonal Graph Drawing, Bend Minimization, Planar Graphs, Efficient Algorithms
 \end{keywords}

 %SIAM REQUIRED
 \begin{AMS}
   68R10, 68Q25, 05C10
 \end{AMS}

%\input{introduction}

%\clearpage

\section{Introduction}\label{se:intro}
Graph drawing is a well established research area that addresses the problem of constructing geometric representations of abstract graphs and networks~\cite{DBLP:books/ph/BattistaETT99,DBLP:conf/dagstuhl/1999dg,DBLP:books/ws/NishizekiR04,DBLP:reference/crc/2013gd}. It  combines flavors of topological graph theory, computational geometry, and graph algorithms. Various
% graphic conventions
visualization paradigms
%visual idioms
have been proposed for the representation of graphs. In the largely adopted node-link paradigm each vertex is represented by a distinct point in the plane and each edge is represented by a Jordan arc joining the points associated with its end-vertices. In particular, an {\em orthogonal drawing} is such that the edges are chains of horizontal ad vertical segments (see \cref{fi:intro}).  Orthogonal drawings are among the earliest and most studied % graph layout standards
% visualization paradigms
subjects in graph drawing, because of their direct application in several domains, including software engineering, database design, circuit design, and visual interfaces (see, e.g., \cite{DBLP:journals/jss/BatiniTT84,dl-gvdm-07,DBLP:journals/ivs/EiglspergerGKKJLKMS04,DBLP:books/sp/Juenger04,Lengauer-90,DBLP:books/ws/NishizekiR04}). Since the readability of an orthogonal drawing is negatively affected by edge crossings and edge bends (see, e.g.,~\cite{DBLP:journals/access/BurchHWPWH21,DBLP:books/ph/BattistaETT99}), a
% classical research subject studies
rich body of literature is devoted to the complexity of computing
planar (i.e., crossing-free) orthogonal drawings with the minimum number of bends. A limited list includes~\cite{DBLP:conf/compgeom/ChangY17,DBLP:journals/siamcomp/BattistaLV98,dlop-oodlt-20,dlp-hvpac-19,DBLP:journals/siamcomp/GargT01,DBLP:journals/jgaa/RahmanNN99,DBLP:journals/jgaa/RahmanNN03,DBLP:journals/siamcomp/Tamassia87,DBLP:journals/siamdm/ZhouN08}; see also~\cite{DBLP:conf/compgeom/BhoreGKMN23,DBLP:journals/jcss/GiacomoLM22,DBLP:conf/gd/digiacomo-gd-2023,DBLP:conf/isaac/Didimo0LOP23,DBLP:conf/gd/jansen-gd-2023} for parameterized approaches.

An early paper by Valiant proved that a graph admits a planar orthogonal drawing if and only if it is a planar \emph{4-graph}~\cite{v-ucvc-81}, i.e., its vertices have degree at most four. More in general, a planar graph with vertex-degree at most $k$ ($k > 0$) is called a planar \emph{$k$-graph}.
\begin{figure}[t]
	\centering
	\subfloat[]{\label{fi:intro-a}\includegraphics[width=0.17\columnwidth]{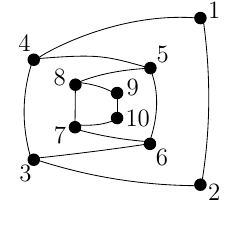}}
	\hfil
	\subfloat[]{\label{fi:intro-b}\includegraphics[width=0.17\columnwidth]{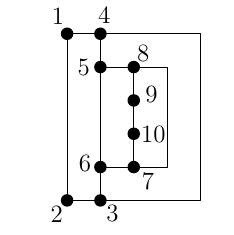}}
	\hfil
	\subfloat[]{\label{fi:intro-c}\includegraphics[width=0.17\columnwidth]{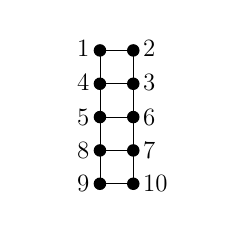}}
	\caption{(a) A planar embedded $3$-graph $G$. (b) A bend-minimum orthogonal drawing of $G$ in the fixed embedding setting. (c) A bend-minimum orthogonal drawing of $G$ in the variable embedding setting.}\label{fi:intro}
\end{figure}
Storer~\cite{DBLP:conf/stoc/Storer80} conjectured that computing a planar orthogonal drawing with the minimum number of bends is computationally hard. The conjecture was proved incorrect by Tamassia~\cite{DBLP:journals/siamcomp/Tamassia87} in the so-called ``fixed embedding setting'', where the input is a planar 4-graph~$G$ together with a planar embedding and the algorithm computes a bend-minimum orthogonal drawing of~$G$ within its given planar embedding. Conversely, Garg and Tamassia~\cite{DBLP:journals/siamcomp/GargT01} proved that the conjecture of Storer is correct in the ``variable embedding setting'', that is, when the algorithm can choose among the (exponentially many) planar embeddings of~$G$. On the positive side, a breakthrough result established that the problem can be solved in polynomial time for the family of planar $3$-graphs~\cite{DBLP:journals/siamcomp/BattistaLV98}.
% We remark that
It may be worth noticing that there are infinitely many planar embedded $3$-graphs for which any bend-minimum orthogonal drawing requires linearly many bends in the fixed embedding setting, but which admit an orthogonal drawing with no bends in the variable embedding setting~\cite{DBLP:journals/siamcomp/BattistaLV98}.
% See also
Compare, for example, \cref{fi:intro-b} and~\cref{fi:intro-c}.

The polynomial-time algorithm presented in~\cite{DBLP:journals/siamcomp/BattistaLV98} has time complexity $O(n^5 \log n)$, where $n$ is the number of vertices of the planar $3$-graph.  Since the first publication of this algorithm more than twenty years ago, the question of establishing the best computational upper bound to the problem of computing a bend-minimum orthogonal drawing of a planar $3$-graph has been studied by several papers, and mentioned as open in books and surveys (see, e.g.,~\cite{DBLP:conf/gd/BrandenburgEGKLM03,DBLP:conf/compgeom/ChangY17,DBLP:books/ph/BattistaETT99,DBLP:journals/siamcomp/BattistaLV98,dlt-dg-13,dlt-gd-17,DBLP:conf/gd/DidimoLP18,DBLP:conf/cocoon/Hasan019,DBLP:journals/ieicet/RahmanEN05,DBLP:journals/siamdm/ZhouN08}). A significant improvement was presented by Chang and Yen~\cite{DBLP:conf/compgeom/ChangY17} who achieve ${\tilde{O}}(n^{\frac{17}{7}})$ time complexity by exploiting a result for the efficient computation of a min-cost flow in unit-capacity networks~\cite{DBLP:conf/focs/CohenMTV17}. The complexity bound
of Chang and Yen
is reduced to $O(n^2)$ in~\cite{DBLP:conf/gd/DidimoLP18}, where the first algorithm that does not use a network flow approach to compute a bend-minimum orthogonal drawing of a planar $3$-graph in the variable embedding setting is presented.

%; similar to~\cite{DBLP:journals/siamcomp/BattistaLV98}, it uses the concept of orthogonal spirality and a dynamic programming approach that performs $O(n)$ visits of block-cutvertex trees and SPQR trees to find the embedding that minimizes the number of bends. Each visit requires $O(n)$ time.

\smallskip\noindent{\bf Contribution.} In this paper
%we definitely answer the aforementioned long-standing open question for 3-planar graphs.
we close the aforementioned long-standing open problem.
Namely, we describe the first $O(n)$-time algorithm that minimizes the number of bends when computing an orthogonal drawing of an $n$-vertex planar $3$-graph in the variable embedding setting. Furthermore, the solutions of our algorithm are also optimal in terms of maximum number of bends per edge, other than in terms of total number of bends. Indeed, our algorithm guarantees that the computed drawing has at most one bend per edge, with the only exception of the complete graph $K_4$, which is known to require an edge with two bends.
%This is one of the very few linear-time graph drawing algorithms in the variable embedding setting.
We remark that the existence of a bend-minimum orthogonal drawing $\Gamma$ with at most one bend per edge for any graph~$G$ distinct from $K_4$ was previously unknown; we call $\Gamma$ an \emph{optimal drawing} of $G$.
% and it is the first algorithm that optimizes both the number of bends per edge and the total number of bends in an orthogonal drawing.
%More formally, we prove the following result.
The main result of this paper is the following theorem.

\begin{theorem}\label{th:main}
	%\begin{theorem}\label{th:main}
	Let $G$ be an $n$-vertex planar $3$-graph distinct from $K_4$. There exists an $O(n)$-time algorithm that computes an orthogonal drawing of $G$ with the minimum number of bends and at most one bend per edge in the variable embedding setting.
	%\end{theorem}
\end{theorem}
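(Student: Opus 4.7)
The plan is to combine an SPQR-tree decomposition of $G$ with new data structures that support efficient re-rooting. First, I would compute the SPQR-tree $\mathcal{T}$ of (each biconnected component of) $G$ in $O(n)$ time by standard techniques. Since $\mathcal{T}$ compactly encodes all planar embeddings of $G$, the choice of an embedding translates into (i) the reflection of each R-node skeleton and (ii) the choice of the outer face, which is conveniently modelled by rooting $\mathcal{T}$ at a specific node or edge. Optimizing in the variable-embedding setting then becomes optimizing over all rootings of $\mathcal{T}$.

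With $\mathcal{T}$ rooted at some initial node, I would run a bottom-up dynamic program that, for every node $\mu$, computes a constant-size table recording the minimum number of bends of the pertinent subgraph of $\mu$ for each possible ``shape'' at its two poles. The table has constant size because $G$ is a 3-graph: each pole has at most three incident edges, and in an orthogonal drawing at least one of its four geometric ports is always free, leaving only a constant number of relevant bend patterns across the interface with the rest of the drawing. Running this bottom-up computation for a fixed root takes $O(n)$ time, and essentially recovers the guarantees of the $O(n^2)$ algorithm in~\cite{DBLP:conf/gd/DidimoLP18}.

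To handle the variable embedding setting without paying a quadratic factor, I would re-root $\mathcal{T}$ at each candidate root and track the minimum over all rootings. Naive re-rooting recomputes the tables along a full path and costs $\Theta(n)$ per step. The main obstacle, and the core technical contribution I expect to develop, is to show that moving the root between two adjacent nodes affects only a constant number of table entries, provided that suitable auxiliary data structures are maintained during a careful traversal of $\mathcal{T}$. This would amortize the total cost of all re-rootings to $O(n)$ and meet the time bound of the theorem; I expect this step to be the most delicate part of the construction, and to require the novel data structures and the re-rooting techniques announced in the abstract.

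Finally, I would prove the curve-complexity bound, namely that for every planar 3-graph $G \neq K_4$ one of the bend-minimum drawings produced by the dynamic program has at most one bend per edge. I would argue this by structural induction on $\mathcal{T}$, showing that the DP can always pick, among equal-cost shapes, one in which bends are pushed to the poles of split components rather than placed in the interior of a virtual edge; the port that each vertex leaves free (because its degree is at most three) provides the extra geometric freedom needed to perform this local transformation. The only rigid obstruction to such a redistribution of bends is the skeleton of $K_4$, which is explicitly excluded from the statement. Maintaining this one-bend-per-edge invariant inside the dynamic program, together with the bend-minimality guaranteed by the DP, then yields an optimal drawing in the sense of the theorem within $O(n)$ total time.
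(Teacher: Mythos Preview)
Your high-level architecture---SPQR-tree, constant-size shape tables per node, and an amortized re-rooting sweep---does match the paper's three-ingredient scheme, and your ``reusability'' intuition is exactly Lemma~\ref{le:reusability}. Two places in your outline, however, hide essentially all of the work and are not handled by the mechanisms you describe.

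First, the R-node case. For a rigid node $\mu$, the skeleton is a plane triconnected cubic graph, and after re-rooting one must recompute the bend cost of $G_\rho(\mu)$ for a new reference edge and a new external face of $\skel(\mu)$, with the virtual edges carrying ``flexibilities'' inherited from their S-node children. Your claim that moving the root between adjacent nodes affects only a constant number of table entries is precisely where the paper needs the \texttt{Bend-Counter} (Theorem~\ref{th:bend-counter}): an $O(n_\mu)$-size data structure on $\skel(\mu)$ that, after $O(n_\mu)$ preprocessing, returns in $O(1)$ time the minimum cost for \emph{any} choice of external face and supports $O(1)$-time flexibility updates. Building it requires a separate theory of demanding 3-extrovert and 3-introvert cycles, a coloring scheme, and a closed-form cost formula (Sections~\ref{se:fixed-embedding-cost-one}--\ref{se:bend-counter}); it does not fall out of generic tree re-rooting. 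Second, the one-bend-per-edge guarantee is not obtained in the paper as a DP invariant that ``pushes bends to the poles''. It is proved first, as a standalone existence result (Lemmas~\ref{le:2-bends}--\ref{le:1-bend}), by taking the rectilinear image of a bend-minimum representation and redistributing subdivision vertices among 2- and 3-extrovert cycles using Theorem~\ref{th:RN03} and \textsf{NoBendAlg}; only afterwards does the DP restrict to representative shapes (Theorem~\ref{th:shapes}). Your free-port induction is too vague to replace this, and your reading of the $K_4$ exception is off: the obstruction is $G=K_4$ itself (its external 3-cycle forces two bends on one edge), not $K_4$ occurring as an R-node skeleton, which the algorithm handles routinely via flexible virtual edges.
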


\medskip
We highlight that when the conference version of this paper appeared~\cite{dlop-oodlt-20}, the only known linear-time algorithm for the bend minimization problem in orthogonal drawings in the variable embedding setting was by Nishizeki and Zhou, who however studied a rather restricted family of graphs, specifically the biconnected series-parallel $3$-graphs~\cite{DBLP:journals/siamdm/ZhouN08}. Additionally, the literature included the linear-time algorithm proposed by Rahman, Egi, and Nishizeki~\cite{DBLP:journals/ieicet/RahmanEN05} for testing whether a subdivision of a planar triconnected cubic graph admitted an orthogonal drawing without bends; however, this algorithm did not address the bend minimization problem. It's worth noting that subsequent to the publication of~\cite{dlop-oodlt-20}, some of the ideas from the proof of \cref{th:main} have been incorporated into other papers. These papers introduce new linear-time algorithms for the bend minimization problem in orthogonal drawings within the variable embedding setting, focusing on special families of planar 4-graphs~\cite{DBLP:journals/jgaa/DidimoKLO23,DBLP:journals/comgeo/Frati22}.

From a methodological point of view, the proof of \cref{th:main} exploits three main ingredients: $(i)$~A combinatorial argument proving the existence of a bend-minimum orthogonal drawing with at most one bend per edge for any planar $3$-graph distinct from $K_4$. $(ii)$ A linear-time labeling algorithm that assigns a label to each  edge $e$ of $G$, representing the number of bends of an optimal orthogonal drawing of $G$ with $e$ on the external face; the efficiency of this labeling algorithm relies on the use of a novel data structure, called \texttt{Bend-Counter}. For each face $f$ of a planar triconnected cubic graph, the \texttt{Bend-Counter} returns in $O(1)$ time the minimum number of bends of an orthogonal drawing having $f$ as the external face. $(iii)$ A linear-time algorithm that constructs an optimal drawing of $G$ based on a visit of the block-cutvertex tree of the SPQR-tree of $G$ and that exploits efficient approaches to re-rooting these trees.

%We remark that at the time when the conference version of this paper appeared~\cite{dlop-oodlt-20} a linear-time algorithm the bend minimization problem of orthogonal drawings in the variable embedding setting was only known for a rather restricted family of graphs, namely the biconnected series-parallel $3$-graphs~\cite{DBLP:journals/siamdm/ZhouN08}. The literature also included the linear-time algorithm of~\cite{DBLP:journals/ieicet/RahmanEN05} for testing whether a subdivision of a planar triconnected cubic graph admitted an orthogonal drawing without bends, which however did not consider the bend minimization problem. We also note that after~\cite{dlop-oodlt-20} was published, some of the ideas behind the proof of \cref{th:main} have been used in other papers that present new linear-time algorithms for the bend minimization problem of orthogonal drawings in the variable embedding setting of special families of 4-planar graphs~\cite{DBLP:journals/jgaa/DidimoKLO23,DBLP:journals/comgeo/Frati22}.
%
The remainder of the paper is organized as follows. \cref{se:preliminaries} gives basic definitions and terminology used throughout the paper. \cref{se:proof-structure} outlines the aforementioned three ingredients and shows how they are used to prove \cref{th:main}. The results behind our three main ingredients are demonstrated in \cref{se:thshapes,se:labeling,se:thgd2018-enhanced,se:fixed-embedding-cost-one,se:ref-embedding,se:bend-counter}.
%states the results behind \cref{th:main} and outlines its proof.
%\cref{se:thshapes,se:labeling,se:fixed-embedding-cost-one,se:ref-embedding,se:thgd2018-enhanced} prove the different results illustrated in \cref{se:proof-structure}.
Future research directions are discussed in \cref{se:conclusions}.

%For space reasons some (full) proofs are moved to the appendix; the statements of the corresponding lemmas and theorems are marked with a~[*].

\section{Preliminaries}\label{se:preliminaries}
We assume familiarity with basic concepts of graph connectivity~\cite{Harary1969}. A 2-connected (resp. 3-connected) graph will be also called \emph{biconnected} (resp. \emph{triconnected}).
In the remainder of this paper we always assume that a graph is connected, i.e., at least $1$-connected, otherwise each connected component is processed independently.
%
%, drawing, and planarity. Refer to \cref{se:preliminaries-app} for the definitions used in this paper and not reported here.
For a graph $G$, we denote by $V(G)$ and $E(G)$ the set of vertices and the set of edges of~$G$, respectively. We consider \emph{simple} graphs, i.e., graphs with neither self-loops nor multiple edges. The \emph{degree} of a vertex $v \in V(G)$, denoted as $\deg (v)$, is the number of its adjacent vertices. $\Delta(G)$ denotes the maximum degree of a vertex of~$G$; if $\Delta(G) \leq k$ ($k \geq 1$), we say that $G$ is a \emph{$k$-graph}.

\myparagraph{Drawings and Planarity.}
%A graph $G$ is \emph{$1$-connected} (or equivalently \emph{connected}) if there is a path between any two vertices. Graph $G$ is \emph{$k$-connected} ($k \geq 2$) if the removal of $k-1$ vertices leaves the graph $1$-connected. A $2$-connected ($3$-connected) graph is also called \emph{biconnected} (\emph{triconnected}). In the remainder of this paper we  always assume $G$ to be (at least) $1$-connected.
%
A \emph{planar drawing} of $G$ is a geometric representation of $G$ in $\mathbb{R}^2$ such that: $(i)$ each vertex $v \in V(G)$ is drawn as a distinct point $p_v$; $(ii)$ each edge $e=(u,v) \in E(G)$ is drawn as a Jordan arc connecting $p_u$ and $p_v$; $(iii)$ no two edges intersect in $\Gamma$ except at common end-vertices. A graph is \emph{planar} if it admits a planar drawing. A planar drawing $\Gamma$ of $G$ divides the plane into topologically connected regions, called \emph{faces}. The \emph{external face} of $\Gamma$ is the region of unbounded size; the other faces are \emph{internal}.
A \emph{planar embedding} of $G$ is an equivalence class of planar drawings that define the same set of (internal and external) faces, and it can be described by the clockwise sequence of vertices and edges on the boundary of each face plus the choice of the external face. Graph $G$ together with a given planar embedding is an \emph{embedded planar graph}, or simply a \emph{plane graph}. If $f$ is a face of a plane graph, the {\em cycle of $f$}, denoted as $C_f$, consists of the vertices and edges that form the boundary of~$f$. If $\Gamma$ is a planar drawing of a plane graph $G$ whose face set is the same as the one described by the planar embedding of $G$, we say that $\Gamma$ \emph{preserves} this embedding, or equivalently that $\Gamma$ is an \emph{embedding-preserving drawing} of $G$.

\myparagraph{Orthogonal Drawings and Algorithm \textsf{NoBendAlg}.}
%\label{sse:pre-orthogonal}
Let $G$ be a planar graph. An \emph{orthogonal drawing} $\Gamma$ of $G$ is a planar drawing of $G$ where the Jordan arc representing each edge is a chain of horizontal and vertical segments. A graph $G$ admits an orthogonal drawing if and only if it is a planar $4$-graph, i.e., $\Delta(G) \leq 4$~\cite{v-ucvc-81}. A \emph{bend} of $\Gamma$ is a point of an edge where a horizontal and a vertical segment meet. $\Gamma$ is \emph{bend-minimum} if it has the minimum number of bends over all planar embeddings of $G$.
%Since in this paper we only consider planar drawings, we often use the term ``orthogonal drawing'' in place of ``planar orthogonal drawing''.

Let $p$ be a path between any two vertices in an orthogonal drawing of $G$. The \emph{turn number} of $p$, denoted as $t(p)$, is the absolute value of the difference between the number of right turns and the number of left turns encountered when traversing $p$ from one end-vertex to the other. A turn along $p$ is caused either by a bend along an edge of $p$ or by an angle of $90^\circ$ or $270^\circ$ at a vertex of~$p$.

A graph $G$ is {\em rectilinear planar} if it admits an orthogonal drawing without bends. Rectilinear planarity testing is NP-complete for planar $4$-graphs~\cite{DBLP:journals/siamcomp/GargT01}, but it is polynomial-time solvable for planar $3$-graphs~\cite{DBLP:conf/compgeom/ChangY17,DBLP:journals/siamcomp/BattistaLV98} and linear-time solvable for subdivisions of planar triconnected cubic graphs~\cite{DBLP:journals/ieicet/RahmanEN05}.
Recently, a linear-time algorithm for rectilinear planarity testing of biconnected planar $3$-graphs has been presented~\cite{DBLP:conf/cocoon/Hasan019}. In the fixed-embedding setting, by extending a result of Thomassen~\cite{Th84} about $3$-graphs that have a rectilinear drawing with all rectangular faces, Rahman et al.~\cite{DBLP:journals/jgaa/RahmanNN03} characterize rectilinear plane $3$-graphs (see \cref{th:RN03}).

\begin{figure}[tb]
	\centering
	\subfloat[]{\label{fi:introvert-extrovert-example-a}\includegraphics[width=0.2\columnwidth]{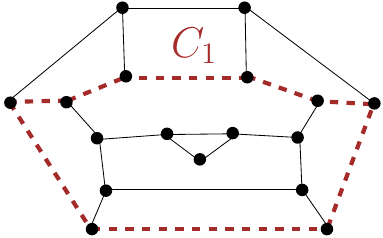}}
	\hfil
	\subfloat[]{\label{fi:introvert-extrovert-example-b}\includegraphics[width=0.2\columnwidth]{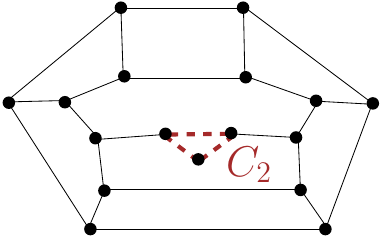}}
	\hfil
	\subfloat[]{\label{fi:introvert-extrovert-example-c}\includegraphics[width=0.2\columnwidth]{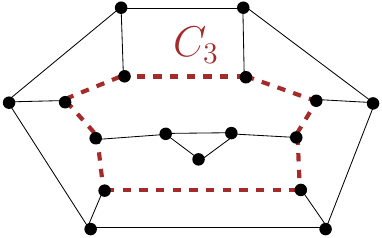}}
	\hfil
	\caption{Different cycles (dashed) of the same plane graph: (a) $C_1$ is 4-extrovert and 4-introvert. (b) $C_2$ is 2-extrovert. (c) $C_3$ is 6-extrovert and 2-introvert.}\label{fi:introvert-extrovert-example}
\end{figure}

For a plane graph $G$, let $C_o(G)$ be its external cycle, i.e., the boundary of the external face; $C_o(G)$ is simple if $G$ is biconnected. Also, if $C$ is a simple cycle of $G$, $G(C)$ denotes the plane subgraph of $G$ that consists of $C$ and of the vertices and edges inside $C$ (hence, $G(C_o(G))=G$).
A \emph{chord} of $C$ is an edge $e \notin C$ that connects two vertices of $C$: If $e$ is embedded outside $C$ it is an \emph{external chord}, otherwise it is an \emph{internal chord}.
An edge $e$ is a \emph{leg} of $C$ if exactly one of its end-vertices belongs to $C$; such an end-vertex of $e$ is a \emph{leg vertex} of $C$: If $e$ is embedded inside $C$ then $e$ is an \emph{internal leg} of $C$; else it is an \emph{external leg}. Cycle $C$ is a \emph{$k$-extrovert cycle} of $G$ if $C$ has exactly $k$ external legs and $C$ has no external chord. Symmetrically, $C$ is a \emph{$k$-introvert cycle} if $C$ has exactly $k$ internal legs and $C$ has no internal chord.
For the sake of brevity, if $C$ is a $k$-extrovert ($k$-introvert) cycle, we simply refer to the $k$ external (internal) legs of $C$ as the \emph{legs}~of~$C$.
% If the $k$ legs of a $k$-extrovert ($k$-introvert) cycle $C$ are all attached to the same vertex $u \notin C$, then we say that $C$ is \emph{degenerate}.

%Observe that if $C$ is a degenerate $k$-extrovert cycle the vertex $v$ shared by its legs is on the external face of $G$. Removing $v$ from $G$ yields a plane graph whose external boundary is $C$. Also, if $G$ is a cubic $3$-connected plane graph then every vertex $v$ of the external face defines a degenerate 3-extrovert cycle of $G$ whose legs share $v$.

Clearly, a cycle $C$ may be $k$-extrovert and $k'$-introvert at the same time, for two (possibly coincident) constants $k$ and $k'$.
%Since we are going to extensively exploit the interplay between these two types of cycles, in this paper we adopt a different terminology that, in our opinion, better relates them.
\cref{fi:introvert-extrovert-example} depicts different $k$-extrovert/introvert cycles of the same plane graph.
%In \cref{fi:introvert-extrovert-example-a} cycle $C_1$ has four external legs and four internal legs, thus it is 4-extrovert and 4-introvert. In \cref{fi:introvert-extrovert-example-b} cycle $C_2$ has two external legs and no internal leg, hence it is just 2-extrovert. In \cref{fi:introvert-extrovert-example-c} cycle $C_3$ is 6-extrovert and 2-introvert.
%
% \begin{property}\label{pr:degenerate}
% Let $G$ be a connected plane graph with maximum vertex degree $k \geq 3$. Let $C$ be a degenerate k-extrovert cycle of $G$ and let $v$ be the vertex shared by the legs of $C$. Every vertex $w \in G$ such that $w \neq v$ is either in the interior of $C$ or along the boundary of $C$.
% \end{property}
% \begin{proof}
% Suppose that $w$ is in the exterior of $C$. Since $G$ is connected there is a simple path $\Pi$ connecting $v$ and $w$. Since $w$ has degree $k$, $\Pi$ must include some vertex of $C$, which is impossible because $C$ is $k$-extrovert.
% \end{proof}
%
We remark that $k$-extrovert cycles are called \emph{$k$-legged cycles} in~\cite{DBLP:conf/cocoon/Hasan019,DBLP:journals/ieicet/RahmanEN05,DBLP:journals/jgaa/RahmanNN03} and $k$-introvert cycles are called \emph{$k$-handed cycles} in~\cite{DBLP:conf/cocoon/Hasan019,DBLP:journals/ieicet/RahmanEN05}. The next theorem rephrases a characterization in~\cite{DBLP:journals/jgaa/RahmanNN03}, using our terminology.

\begin{theorem}[\cite{DBLP:journals/jgaa/RahmanNN03}]\label{th:RN03}
	Let $G$	be a biconnected plane $3$-graph. $G$ admits an orthogonal drawing without bends if and only if: $(i)$ $C_o(G)$ has at least four degree-2 vertices; $(ii)$ each $2$-extrovert cycle has at least two degree-2 vertices; $(iii)$ each $3$-extrovert cycle has at least one degree-2 vertex.
\end{theorem}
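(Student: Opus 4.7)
The plan is to prove both directions separately: necessity by an angle-counting argument on the drawing, and sufficiency by induction on $|V(G)|$ with a structural decomposition along small extrovert cycles.

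\textbf{Necessity.} In a bendless orthogonal drawing every face-vertex incidence contributes an angle in $\{90^\circ,180^\circ,270^\circ\}$ summing to $360^\circ$ around each vertex, so a degree-$3$ vertex has the unique face-angle multiset $\{90^\circ,90^\circ,180^\circ\}$. Let $C$ be a simple cycle drawn as a rectilinear polygon and let $n_i$ denote the number of vertices of $C$ whose interior angle equals $i\cdot 90^\circ$; the polygon angle-sum formula gives $n_1-n_3=4$. A short case analysis on where the ``third edge'' of a degree-$3$ vertex of $C$ lies shows that a leg vertex (third edge embedded outside $C$) contributes an interior angle in $\{90^\circ,180^\circ\}$, because the $270^\circ$ side is precisely the side that swallows the outgoing leg; a non-leg degree-$3$ vertex contributes in $\{180^\circ,270^\circ\}$; and only degree-$2$ vertices of $C$ can contribute $270^\circ$. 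Applied to a $2$-extrovert cycle, the two legs supply at most two of the four required $90^\circ$ corners, forcing at least two degree-$2$ vertices of $C$ to supply the rest, which yields (ii); the same counting on a $3$-extrovert cycle yields (iii). For the external face the polygon relation flips to $n_3-n_1=4$, and since $270^\circ$ external-face corners can only come from degree-$2$ vertices of $C_o(G)$, condition~(i) follows.

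\textbf{Sufficiency.} I would induct on $|V(G)|$. If $G$ contains a $2$-extrovert cycle, I pick $C$ to be \emph{innermost}, so that $G(C)$ contains no smaller $2$- or $3$-extrovert cycle; the two leg vertices together with the two degree-$2$ vertices guaranteed by (ii) play the role of the four convex corners of a rectangle bounding $G(C)$, and $G(C)$ can be drawn rectilinearly inside such a rectangle by Thomassen's construction for plane $3$-graphs with all-rectangular internal faces. I then contract $G(C)$ to a straight edge (or, for an innermost $3$-extrovert cycle, to a small degree-$3$ gadget), obtaining a smaller plane $3$-graph $G'$ on which I recurse after verifying that $G'$ still satisfies (i)--(iii); the final drawing is assembled by inserting the precomputed drawing of $G(C)$ into the rectangular slot left by its contraction in the drawing of $G'$. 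The base case is the absence of any $2$- or $3$-extrovert cycle; condition~(i) then forces $G$ to be rigid enough that Thomassen's rectangularization, applied to $C_o(G)$ with its four guaranteed degree-$2$ vertices as outer corners, directly produces a bendless orthogonal drawing.

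\textbf{Main obstacle.} The hard part is the inductive step, specifically showing that conditions (i)--(iii) propagate to $G'$ after contraction: a cycle of $G'$ that crosses the contracted region may expose new extrovert behaviour, and one must argue that the gadget inherits from $C$ exactly the degree-$2$ vertices such a cycle needs. Choosing $C$ innermost is essential, since it forbids hidden extrovert cycles from lurking inside $G(C)$ and later failing the count in $G'$. A secondary difficulty is making the base case airtight: one has to check that the absence of small extrovert cycles, combined with (i), leaves only configurations on which Thomassen's construction applies without further modification.
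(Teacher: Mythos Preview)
This theorem is quoted from Rahman, Nishizeki, and Naznin and is not proved in the present paper; the paper only sketches their constructive algorithm \textsf{NoBendAlg} for the sufficiency direction. Your necessity argument is the standard corner count and matches the intuition the paper records after the statement. The one slip---``only degree-$2$ vertices of $C$ can contribute $270^\circ$''---is false for interior angles of an extrovert cycle (non-leg degree-$3$ vertices can too), but it is harmless because the argument only needs to locate the sources of the $90^\circ$ corners.

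For sufficiency, \textsf{NoBendAlg} proceeds \emph{top-down}, the opposite of your plan: it fixes four degree-$2$ corners on $C_o(G)$, finds the \emph{maximal} bad $2$- and $3$-extrovert cycles (those not containing enough of the designated corners), collapses each such $G(C)$ to a single supernode, applies Thomassen's rectangular-drawing theorem to the coarse graph, and only then recurses inside each collapsed region, with the leg vertices forming part of the new corner set. The key structural lemma, absent from your outline, is that once the corners are fixed the maximal bad cycles are pairwise non-intersecting, so the simultaneous collapse is well-defined. Your bottom-up scheme has two concrete gaps. First, an innermost $2$-extrovert cycle $C$ may still contain $3$-extrovert cycles inside $G(C)$, so ``$G(C)$ contains no smaller $2$- or $3$-extrovert cycle'' is false as written and Thomassen need not apply; at minimum you would need innermost among \emph{all} $2$- and $3$-extrovert cycles. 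Second, contracting $G(C)$ to a straight edge in $G'$ leaves no ``rectangular slot'': if you later replace that segment by a rectangularly drawn $G(C)$, the face of $G'$ absorbing the bulging side of the rectangle acquires two new reflex corners and fails Property~\textsf{H2}. The top-down algorithm avoids both problems because it collapses to a \emph{vertex}, whose incident angles in the rectangular drawing of the coarse graph then prescribe exactly the boundary shape the recursive sub-drawing must realise, and it never has to re-verify conditions (i)--(iii) on a contracted graph.
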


%With the same notation used in~\cite{DBLP:journals/jgaa/RahmanNN03}, we call

Intuitively, in an orthogonal drawing each cycle of $G$ must have at least four reflex angles in its outside, also called \emph{corners}. Condition~$(i)$ guarantees that there are at least four corners on the external cycle $C_o(G)$. Conditions~$(ii)$ and~$(iii)$ reflect the fact that two (resp. three) corners of a 2-extrovert (resp. a 3-extrovert) cycle coincide with its leg vertices.
A biconnected plane $3$-graph that satisfies the conditions of \cref{th:RN03} will be called a \emph{good plane graph}.

The sufficiency of \cref{th:RN03} is constructively proved in~\cite{DBLP:journals/jgaa/RahmanNN03} by means of an algorithm that we call \textsf{NoBendAlg} in the remainder of the paper. This algorithm computes a no-bend orthogonal drawing $\Gamma$ of a good plane graph $G$.
A high level description of \textsf{NoBendAlg} is as follows. Refer to \cref{fi:RN03} for an illustration.

In the first step of \textsf{NoBendAlg} four degree-2 vertices $v_1$, $v_2$, $v_3$, and $v_4$ are arbitrarily chosen on the external face of $G$. These four vertices are the \emph{designated corners of $G$}. A 2-extrovert cycle (resp. 3-extrovert cycle) of the graph is \emph{bad} with respect to the designated corners if it does not contain at least two (resp. one) of them; a bad cycle $C$ is \emph{maximal} if it is not contained in $G(C')$ for some other bad cycle $C'$. The algorithm finds every maximal bad cycle $C$ and it collapses $G(C)$ into a supernode $v_C$ (since we previously added the four corners in the external face, the maximal bad cycles do not intersect each other~\cite{DBLP:journals/jgaa/RahmanNN03}). Then it computes a rectangular drawing $R$ of the resulting coarser plane graph (i.e., a drawing with all rectangular faces) where each of $v_1$, $v_2$, $v_3$, and $v_4$ (or a supernode containing it) forms an angle of $270^\circ$ on the external face of $R$. Such a drawing $R$ exists because the graph satisfies a characterization of Thomassen~\cite{Th84}. For each supernode $v_C$, \textsf{NoBendAlg} recursively applies the same approach to compute an orthogonal drawing of $G(C)$; if $C$ is 2-extrovert (resp. 3-extrovert), then two (resp. three) of the designated corners of $G(C)$ coincide with the leg vertices of $C$. The representation of each supernode is then ``plugged'' into~$R$.

\cref{fi:RN03} illustrates the execution of \textsf{NoBendAlg} on the good plane graph of \cref{fi:RN03-a}: The external face of $G$ contains exactly four degree-2 vertices, which are chosen as the designated corners in the first step of \textsf{NoBendAlg}.
In the figure, the bad cycles with respect to the designated corners are highlighted with a dashed line; the two cycles with thicker boundaries are maximal and they are collapsed as shown in \cref{fi:RN03-b}. One of the two maximal bad cycles includes a designated corner; once this cycle is collapsed, the corresponding supernode becomes the new designated corner. \cref{fi:RN03-c} depicts a rectangular drawing of the graph in \cref{fi:RN03-b}, and it also shows the drawings of the subgraphs in the supernodes, computed in the recursive procedure of \textsf{NoBendAlg}; these drawings are plugged into the rectangular drawing, in place of the supernodes, yielding the final drawing of \cref{fi:RN03-d}.
The following lemma rephrases relevant properties of orthogonal drawings computed by \textsf{NoBendAlg} (see also Theorem~2 and Corollary~6 of~\cite{DBLP:journals/jgaa/RahmanNN03}).

\begin{figure}[tb]
	\centering
	\subfloat[]{\label{fi:RN03-a}\includegraphics[width=0.22\columnwidth]{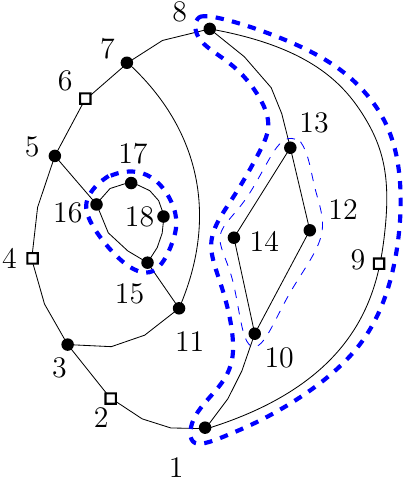}}
	\hfill
	\subfloat[]{\label{fi:RN03-b}\includegraphics[width=0.20\columnwidth]{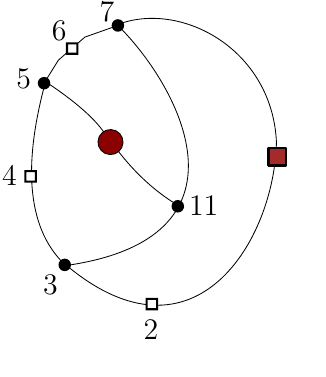}}
	\hfill
	\subfloat[]{\label{fi:RN03-c}\includegraphics[width=0.25\columnwidth]{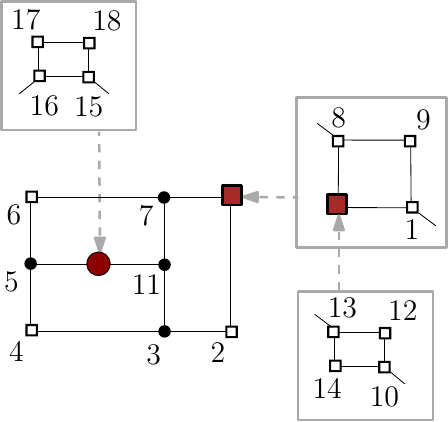}}
	\hfill
	\subfloat[]{\label{fi:RN03-d}\includegraphics[width=0.20\columnwidth]{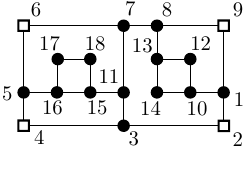}}
	\caption{An illustration of the algorithm \textsf{NoBendAlg}, described by Rahaman, Nishizeki, and Naznin~\cite{DBLP:journals/jgaa/RahmanNN03}.}\label{fi:RN03}
\end{figure}

\begin{lemma}[\cite{DBLP:journals/jgaa/RahmanNN03}]\label{le:NoBendAlg}
Let $G$ be a good plane biconnected graph with $n$ vertices. Let $C$ be any 2-extrovert cycle of $G$ and let $p_l$ and $p_r$ be the two edge-disjoint paths of $C$ between its leg vertices. For any choice of four degree-2 vertices as the designated corners of $G$, \textsf{NoBendAlg} computes in $O(n)$ time a no-bend orthogonal drawing $\Gamma$ of $G$ such that:

$(i)$ The drawing of $C$ in $\Gamma$ is such that
%$t(p_l) + t(p_r) = 2$, that is
either $t(p_l) = t(p_r) = 1$, or $t(p_l)=0$ and $t(p_r) = 2$, or $t(p_l)=2$ and $t(p_r) = 0$.

$(ii)$ Every designated corner forms an angle of $270^\circ$ in the external face of $\Gamma$ and $t(p)=0$ for each path $p$ in the external face of $\Gamma$ between any two consecutive designated corners.
\end{lemma}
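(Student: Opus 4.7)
Both parts of the lemma essentially unpack the construction of \textsf{NoBendAlg} as analysed in Theorem~2 and Corollary~6 of~\cite{DBLP:journals/jgaa/RahmanNN03}; my plan is to translate that analysis into our terminology and proceed by induction on the depth of the recursion.

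For property~$(ii)$, the top level of \textsf{NoBendAlg} collapses every maximal bad cycle into a supernode and applies Thomassen's rectangular-drawing routine to the coarsened graph, placing each of $v_1,\ldots,v_4$ (or the supernode containing it) at a convex corner of the outer rectangle. By construction each such corner forms a $270^\circ$ external angle, and the part of the outer boundary strictly between two consecutive designated corners lies on a single side of the outer rectangle, hence is drawn as a horizontal or vertical chain with $t(p)=0$. When a designated corner is a supernode, the inductive hypothesis applied to its recursive drawing supplies the matching $270^\circ$ external angle at the corresponding leg vertex, so the plug-in preserves the external shape.

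For property~$(i)$, let $C$ have leg vertices $u,v$ splitting it into $p_l,p_r$. Since the drawing has no bends, $C$ is a simple rectilinear polygon and, walking around it with its interior on the left, the total signed turn is $+4$. Because $G$ is a $3$-graph and each leg edge is embedded in the exterior of $C$, the interior angle of $C$ at $u$ and at $v$ lies in $\{90^\circ,180^\circ\}$, so each of $u,v$ contributes $+1$ or $0$ to the signed turn. I would then establish two facts from the invariants of \textsf{NoBendAlg}: first, each of $u,v$ actually contributes $+1$; second, no internal vertex of $p_l$ or $p_r$ is reflex in $C$. The first holds because whenever a $2$-extrovert cycle is bad it is collapsed and drawn recursively with its two leg vertices as designated corners, and property~$(ii)$ applied to that recursive call forces $270^\circ$ external (equivalently, $90^\circ$ internal) angles there; for a non-bad $C$ the same conclusion comes from the rectangular-drawing case analysis in~\cite{DBLP:journals/jgaa/RahmanNN03}. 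Granting both facts, the contributions of the internal vertices of $p_l$ and $p_r$ sum to $+2$ and are individually nonnegative, which forces $(t(p_l),t(p_r)) \in \{(1,1),(0,2),(2,0)\}$.

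The main obstacle is the second fact above, namely that no reflex angle occurs strictly inside $p_l$ or $p_r$. The cleanest route is to localise the rectangular-drawing subroutine and observe that the only reflex corners it introduces are at the four designated corners of its current input, which by induction always coincide with either a leg vertex of $C$ or a vertex lying in the exterior of $C$. This is the content of Theorem~2 of~\cite{DBLP:journals/jgaa/RahmanNN03}; the $O(n)$ running time is then immediate because the recursive calls act on pairwise disjoint subgraphs and Thomassen's rectangular-drawing routine itself runs in linear time.
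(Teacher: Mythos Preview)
The paper does not supply its own proof of this lemma; it is stated with a citation to Theorem~2 and Corollary~6 of~\cite{DBLP:journals/jgaa/RahmanNN03} and followed only by worked examples. So there is no in-paper argument to compare against, and your proposal is best read as an attempt to unpack the cited result.

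Your treatment of property~$(ii)$ is a faithful sketch of the top-level step of \textsf{NoBendAlg} and of how the recursive plug-ins preserve the external shape.

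For property~$(i)$, there is a gap. Your reduction hinges on the ``second fact'' that no internal vertex of $p_l$ or $p_r$ is reflex in~$C$, which is strictly stronger than what the lemma asserts. Your justification for it is that reflex external angles arise only at the four designated corners of each recursive call, and you then appeal to property~$(ii)$ inductively. But property~$(ii)$ only yields $t(p)=0$ on each boundary arc between consecutive designated corners; this allows balanced reflex/convex pairs along that arc and does \emph{not} force every intermediate angle to be $180^\circ$. To close the argument you would need the stronger invariant that each side of the rectangular frame produced at every level of the recursion is genuinely straight (no turns at all, not merely net turn zero), and that this straightness survives the plug-in of each supernode. That invariant is true for Thomassen's rectangular drawing at the top level, but propagating it through the expansion of supernodes requires more care than you give. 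Also, your assertion that the recursive designated corners ``always coincide with either a leg vertex of $C$ or a vertex lying in the exterior of $C$'' is not accurate: when $C$ itself is a bad $2$-extrovert cycle, two of its four designated corners are the degree-$2$ vertices guaranteed by Condition~$(ii)$ of Theorem~\ref{th:RN03}, and these lie strictly inside $p_l$ or $p_r$. In the end you defer the crucial step back to ``the content of Theorem~2 of~\cite{DBLP:journals/jgaa/RahmanNN03}'', which is precisely what the paper does by citing the lemma rather than proving it.
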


For example, in the drawing $\Gamma$ of \cref{fi:RN03-d} the 2-extrovert cycle $\langle 10, 14, 13, 12 \rangle$ is such that $t(\langle 10,14,13 \rangle)=t(\langle 10,12,13 \rangle)=1$; the 2-extrovert cycle $\langle 15, 16, 17, 18 \rangle$ is such that $t(\langle 16,15 \rangle)=0$ and $t(\langle 16,17,18,15 \rangle)=2$.
Also, the four designated corners are the vertices $2, 4, 6, 9$ in \cref{fi:RN03-a}, which in fact form an angle of $270^\circ$ in the external face of $\Gamma$ in \cref{fi:RN03-d}. Observe that, $t(p)=0$ for any path $p$ along the boundary of the external face of $\Gamma$ between any two consecutive designated corners.

%\myparagraph{Orthogonal Representations.}
%The partial description of an orthogonal drawing $\Gamma$ in terms of left/right bends along each edge and of the geometric angles at each vertex is an \emph{orthogonal representation} and is denoted by~$H$. This description abstracts from the vertex and bend coordinates of $\Gamma$. More precisely, $H$ represents the class of orthogonal drawings with the same planar embedding, the same set of vertex-angles, and the same set of bends as~$\Gamma$.
%
% JOURNAL: This paragraph could be used to formalize the concept of orthogonal representation.
% ============================================================================================
\myparagraph{Orthogonal Representations.}
%The concept of orthogonal representation can be formalized as follows.
Let $G$ be a plane $3$-graph and let $\Gamma$ be an embedding-preserving orthogonal drawing of $G$. Let $e_1$ and $e_2$ be two edges of $\Gamma$ that are consecutive in the clockwise order around a common end-vertex $v$.
A \emph{vertex-angle of $\Gamma$ at $v$} is the angle formed by the segments of $e_1$ and $e_2$ incident to $v$ in $\Gamma$. For a vertex $v$ that has degree one in $\Gamma$, the vertex-angle of $\Gamma$ at $v$ is $360^\circ$.
Let $e$ be an edge of $\Gamma$. An \emph{edge-angle of $\Gamma$ along $e$} is an angle at a bend of $e$ in $\Gamma$, formed by the two consecutive segments that share the bend point. The \emph{left angle sequence of $\Gamma$ along $e=(u,v)$} is the sequence of edge-angles encountered on the left side of $e$ while traversing it from $u$ to $v$. Analogously, the \emph{right angle sequence of $\Gamma$ along $e=(u,v)$} is the sequence of edge-angles encountered on the right side of $e$ while traversing it from $u$ to $v$.
Let $\Gamma'$ be an embedding-preserving orthogonal drawing of $G$ distinct from $\Gamma$.
We say that $\Gamma$ and $\Gamma'$ are \emph{equivalent} if: $(i)$ For each pair of edges $e_1$ and $e_2$ that are consecutive in the clockwise order around a common end-vertex $v$, the corresponding vertex-angle at $v$ is the same in $\Gamma$ and $\Gamma'$, and $(ii)$ for each edge $e=(u,v)$ of $G$, the left angle sequence and the right angle sequence of $e$ is the same in $\Gamma$ and in $\Gamma'$.
%
%An \emph{orthogonal representation} $H$ of $G$ is a class of equivalent orthogonal drawings of $G$; $H$ can be described by the embedding of $G$ together with an \emph{angle labeling of $G$} that specifies the geometric value of each angle of $G$ ($90^\circ$, $180^\circ$, $270^\circ$, $360^\circ$) and the sequence of angles along each edge in each of its incident faces (each angle corresponds to a bend and has a value in $\{90^\circ, 270^\circ\}$).
%
An \emph{orthogonal representation} $H$ of $G$ is a class of equivalent orthogonal drawings of $G$. Representation $H$ can be described by a planar embedding of $G$ and by an \emph{angle labeling} that specifies: For each vertex $v$ of~$G$ the vertex-angles at $v$, and for each edge $e$ of $G$ the ordered sequence of edge-angles along $e$ in any drawing of the equivalence class described by~$H$.
It is well known (see, e.g., \cite{dett-gd-99}) that a plane graph with a given angle labeling is an orthogonal representation of~$G$ if and only if the following properties~hold:
\smallskip
\begin{itemize}
	\item[{\bf \textsf{H1}}] For each vertex $v$ of $G$ the sum of the vertex-angles at $v$ equals $360^\circ$;
	\item[{\bf \textsf{H2}}] For each face $f$ of $G$ we have $N_{90} - N_{270} - 2 N_{360} = 4$ if $f$ is internal, and $N_{90} - N_{270} - 2 N_{360} = -4$ if $f$ is external, where $N_{a}$ is the number of vertex-angles or edge-angles that describe an $a^\circ$ angle in~$f$,  with $a \in \{90,270,360\}$.
\end{itemize}

\smallskip
A \emph{flip} of an orthogonal representation $H$ is the orthogonal representation obtained from $H$ by reversing, for every vertex $v$, the clockwise ordering of the edges incident to $v$ and by replacing, for each edge $e$, the left angle sequence of $e$ with its right angle sequence and vice versa.
If $\Gamma$ is an orthogonal drawing whose orthogonal representation is~$H$, we say that $\Gamma$ is a \emph{drawing of~$H$}.
Since for a given orthogonal representation~$H$, an orthogonal drawing of~$H$ can be computed in linear time~\cite{DBLP:journals/siamcomp/Tamassia87}, the bend-minimization problem for orthogonal drawings can be studied at the level of orthogonal representations. Hence, from now on we focus on orthogonal representations rather than on orthogonal drawings.
Given an orthogonal representation $H$, we denote by $b(H)$ the total number of bends of $H$ and by $b(e)$ the number of bends along an edge $e$ of $H$. If $v$ is a vertex of $G$, a \emph{$v$-constrained bend-minimum} orthogonal representation $H$ of $G$ is an orthogonal representation that has $v$ on its external face and that has the minimum number of bends among all the orthogonal representations with $v$ on the external face. Analogously, for an edge $e$ of $G$, an \emph{$e$-constrained bend-minimum} orthogonal representation of $G$ has $e$ on its external face and has the minimum number of bends among all the orthogonal representations with $e$ on the external face.

\myparagraph{Decomposition Trees: BC-Trees and SPQR-Trees.} Let $G$ be a 1-connected graph. A biconnected component of $G$ is also called a \emph{block} of $G$. A block is \emph{trivial} if it consists of a single edge.
The \emph{block-cutvertex tree} $\cal T$ of $G$, also called \emph{BC-tree} of $G$, describes the decomposition of $G$ in terms of its blocks (see, e.g.,~\cite{dett-gd-99}). Each node of $\cal T$ either represents a block of $G$ or it represents a cutvertex of $G$. A \emph{block-node} of $\cal T$ is a node that represents a block of $G$; a \emph{cutvertex-node} of $\cal T$ is a node that represents a cutvertex of $G$. There is an edge between two nodes of $\cal T$ if and only if
one node represents a cutvertex of $G$, and the other node represents a block that contains the cutvertex.

Let $G$ be a biconnected graph. The \emph{SPQR-tree} $T$ of $G$ is a data-structure defined in~\cite{DBLP:books/ph/BattistaETT99} that represents the decomposition of $G$ into its triconnected components~\cite{DBLP:journals/siamcomp/HopcroftT73}. An example of SPQR-tree is in \cref{fi:spqr-tree}. Each triconnected component corresponds to a node $\mu$ of $T$ of degree larger than one; the triconnected component itself is called the \emph{skeleton} of $\mu$ and is denoted as $\skel(\mu)$. The node $\mu$ can be: $(i)$ an \emph{R-node}, if $\skel(\mu)$ is a triconnected graph; $(ii)$ an \emph{S-node}, if $\skel(\mu)$ is a simple cycle of length at least three; $(iii)$ a \emph{P-node}, if $\skel(\mu)$ is a bundle of at least three parallel edges.
%$(iv)$ \emph{Q-nodes}, if it is a leaf of $T$; in this case $\mu$ represents a single edge of $G$ and $\skel(\mu)$ consists of two parallel edges
A degree-one node of $T$ is a \emph{Q-node} and represents a single edge of~$G$.
%$(iv)$ \emph{Q-nodes}, if it is a leaf of $T$; in this case $\mu$ represents a single edge of $G$ and $\skel(\mu)$ is not defined.
A \emph{real edge} in $\skel(\mu)$ corresponds to a Q-node adjacent to $\mu$ in $T$.
A \emph{virtual edge} in $\skel(\mu)$ corresponds to an S-, P-, or R-node adjacent to $\mu$ in $T$.
Tree $T$ is such that neither two $S$- nor two $P$-nodes are adjacent in~$T$. The SPQR-tree of a biconnected graph can be computed in linear time~\cite{DBLP:books/ph/BattistaETT99,DBLP:conf/gd/GutwengerM00}.

%If $T$ is rooted at one of its Q-nodes $\rho$, every skeleton (except the one of $\rho$) contains exactly one virtual edge that has a counterpart in the skeleton of its parent: This virtual edge is the \emph{reference edge} of $\skel(\mu)$ and of $\mu$, and its endpoints are the \emph{poles} of $\skel(\mu)$ and of~$\mu$.
%

In this paper we consider SPQR-trees rooted at Q-nodes. If $\rho$ is a Q-node of $T$, we denote by $T_\rho$ the tree $T$ rooted at $\rho$; the internal node of $T_\rho$ adjacent to $\rho$ is the \emph{root child of $T_\rho$}. Any node that is neither the root nor the root child is an \emph{inner node} of $T_\rho$.
Let $\mu$ be an inner node of $T_\rho$ that is not a Q-node. The skeleton $\skel(\mu)$ contains a virtual edge that is associated with a virtual edge in the skeleton of its parent; this virtual edge is the \emph{reference edge} of $\skel(\mu)$ and of $\mu$, and is denoted as $e_{\rho}(\mu)$.
For example, in \cref{fi:spqr-tree} $e_\rho(\mu)$ is the (green) virtual edge $(3,9)$ and $e_\rho(\nu)$ is the (red) virtual edge $(1,14)$.

The reference edge of the root child of $T_\rho$ is the real edge corresponding to $\rho$ and $T_\rho$ is the SPQR-tree of $G$ \emph{with respect to $\rho$}. For example, in \cref{fi:spqr-tree} the reference edge of $\zeta$ is the real edge $(1,14)$. The endpoints of the reference edge $e_{\rho}(\mu)$ are the \emph{poles} of $\skel(\mu)$ and of~$\mu$.
%
%The edge $e$ of $G$ corresponding to $\rho$ is the \emph{reference edge} of $G$, and $T_\rho$ is the SPQR-tree of $G$ \emph{with respect to $\rho$}.
The SPQR-tree $T_\rho$ describes all planar embeddings of $G$ with its reference edge on the external~face; they are obtained by combining the different planar embeddings of the skeletons of P- and R-nodes with their reference edges on the external~face. For a P-node $\mu$, the embeddings of $\skel(\mu)$ are the different permutations of its non-reference edges; for an R-node $\mu$, $\skel(\mu)$ has two possible planar embeddings, obtained by flipping $\skel(\mu) \setminus e_\rho(\mu)$ at its poles. For example, \cref{fi:spqr-tree-a,fi:spqr-tree-b} show two different embeddings of $G$ with the reference edge $(1,14)$ on the external face.

For every node $\mu \neq \rho$ of $T_\rho$, the subtree $T_\rho(\mu)$ rooted at $\mu$ induces a subgraph $G_\rho(\mu)$ of $G$ called the \emph{pertinent graph of $\mu$}: The edges of $G_\rho(\mu)$ correspond to the Q-nodes (leaves) of $T_\rho(\mu)$. Graph $G_\rho(\mu)$ is also called the \emph{$\mu$-component of $G$ with respect to~$\rho$}, namely $G_\rho(\mu)$ is a P-, an R-, or an S-component depending on whether $\mu$ is a P-, an R-, or an S-component, respectively.

\begin{figure}[t]
%\subfloat[]{\label{introvert-extrovert-example-a}\includegraphics[width=0.2\columnwidth]{introvert-extrovert-example-a}}
\centering
\subfloat[]{\label{fi:spqr-tree-a}\includegraphics[width=0.2\columnwidth]{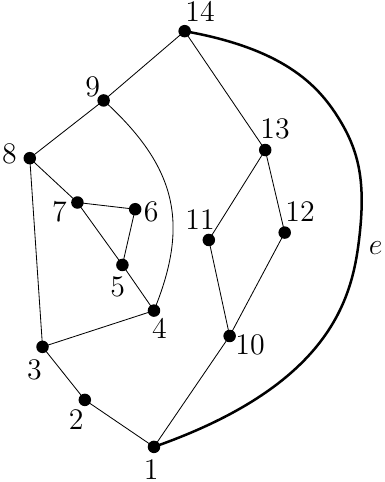}}
\hfil
\subfloat[]{\label{fi:spqr-tree-b}\includegraphics[width=0.2\columnwidth]{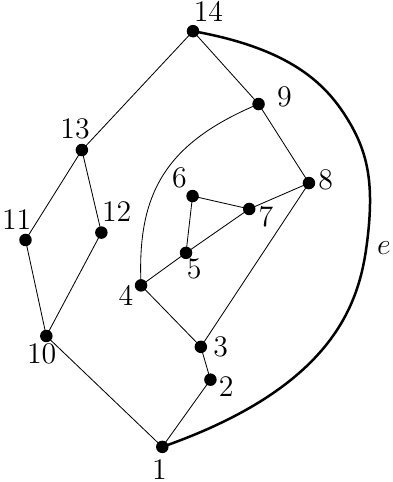}}
\hfil
\subfloat[]{\label{fi:spqr-tree-c}\includegraphics[width=0.47\columnwidth]{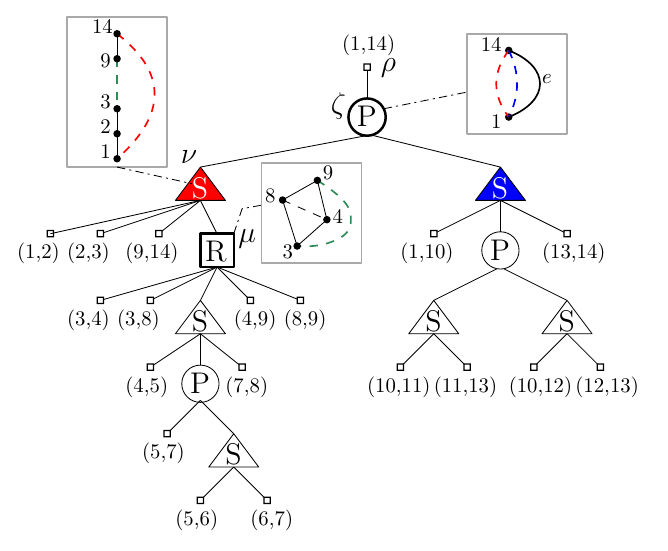}}

\caption{(a)-(b) Two different embeddings of a planar $3$-graph $G$. (c) The SPQR-tree of~$G$ with respect to $\rho$ (corresponding to the reference edge $e=(1,14)$); the skeletons of three nodes, $\zeta$, $\nu$, and $\mu$, are shown. In each skeleton we represent virtual edges as dashed. The embedding in (b) has been obtained from the embedding in (a) by changing the embeddings of $\skel(\zeta)$ and $\skel(\mu)$.}\label{fi:spqr-tree}
\end{figure}

Observe that, for each node $\mu$ of $T_\rho$, the graph $\skel(\mu)$ does not change when we root $T$ at a different Q-node (thus changing the reference edge of the graph). Instead, the poles and the reference edge of $\skel(\mu)$ vary over the different choices for the root of~$T$.

%
%Figure~\ref{fi:spqr-tree-a} shows a plane graph $G$ with a reference edge $e$; the SPQR-tree $T$ of $G$ with respect to $e$ is in Fig.~\ref{fi:spqr-tree-b}; the skeletons of a P-node $\nu$ and of an R-node $\mu$ are shown, with planar embeddings consistent with that of~$G$. Figure~\ref{fi:spqr-tree-c} depicts a different planar embedding of $G$ with $e$ on the external face, obtained by changing the embeddings of both $\skel(\mu)$ and $\skel(\nu)$.
%The child of $\rho$ is the \emph{root child} of $T$. Its pertinent graph is the \emph{root child component} of $G$.
%The pertinent graph of the root child of $T_\rho$ is the \emph{root child component} of $G$.
%An \emph{inner node} of $T_\rho$ is neither the root nor the root child of $T_\rho$.
%The pertinent graph of an inner node is an \emph{inner component of $G$}.
The next lemma summarizes basic properties of the SPQR-tree of a planar $3$-graph $G$. Its proof is omitted as it is an immediate consequence of the fact that~$\Delta(G) \leq 3$ (see, e.g., \cite{DBLP:conf/gd/GargL99}).

\begin{lemma}\label{le:spqr-tree-3-graph}
	Let $G$ be a biconnected planar $3$-graph, let $T_\rho$ be its SPQR-tree rooted at a Q-node $\rho$, and let $\mu$ be a node of $T_\rho$. The following properties hold:
    \begin{itemize}
	\item[{\em \bf \textsf{T1}}] If $\mu$ is a P-node, it has exactly two children, one being an S-node and the other being an S- or a Q-node; if $\mu$ is the root child, both of its children are S-nodes.
	\item[{\em \bf \textsf{T2}}] If $\mu$ is an R-node, each child of $\mu$ is either an S-node or a Q-node.
	\item[{\em \bf \textsf{T3}}]
	If $\mu$ is an S-node, no two virtual edges in $\skel(\mu)$ share a vertex.
	%No two virtual edges in the skeleton $\skel(\mu)$ of an S-node $\mu$ share a vertex.
	Also, if $\mu$ is an inner S-node, the edges of $\skel(\mu)$ incident to the poles of $\mu$ and different from its reference edge are real edges.
    \end{itemize}
\end{lemma}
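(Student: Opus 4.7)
The plan is to establish the three properties by degree counting, systematically using the bound $\Delta(G) \leq 3$. I would first prove a handy auxiliary fact: if $\nu$ is a P-node or an R-node and $u$ is a pole of $\nu$, then $\deg_{G_\rho(\nu)}(u) \geq 2$. For an R-node this follows because $\skel(\nu)$ is triconnected, so $u$ has at least three edges in $\skel(\nu)$, of which at most one is the reference edge; the remaining $\geq 2$ edges each contribute at least $1$ to the degree of $u$ in $G_\rho(\nu)$. For a P-node the same lower bound follows from the fact that $\skel(\nu)$ is a bundle of $\geq 3$ parallel edges at $u$, only one of which is the reference edge.

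For \textbf{T1}, let $\mu$ be a P-node with poles $u, v$ and $k + 1 \geq 3$ parallel edges in $\skel(\mu)$. Each of these edges contributes at least $1$ to $\deg_G(u)$, so $k + 1 \leq \deg_G(u) \leq 3$ forces $k + 1 = 3$; hence $\mu$ has exactly two children, and each non-reference edge at $u$ must contribute exactly $1$. The auxiliary fact thus rules out R-node children, while the rule in the excerpt that no two P-nodes are adjacent in the SPQR-tree rules out P-node children; so every child is an S- or a Q-node. Two Q-node children would produce a multi-edge in $G$, as would a Q-node child together with the real reference edge of a root-child P-node; since $G$ is simple, this forbids two Q-node children in general, and any Q-node child when $\mu$ is the root child, giving T1.

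For \textbf{T2}, let $\mu$ be an R-node and suppose some child $\nu$ of $\mu$ is a P- or R-node, corresponding to a virtual edge $(u, w)$ in $\skel(\mu)$. The auxiliary fact gives $\deg_{G_\rho(\nu)}(u) \geq 2$; triconnectivity of $\skel(\mu)$ forces $u$ to have at least two further edges in $\skel(\mu)$, each contributing $\geq 1$ to the degree of $u$ through edges disjoint from $G_\rho(\nu)$. Therefore $\deg_G(u) \geq 4$, a contradiction, so every child of $\mu$ must be an S- or a Q-node.

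For \textbf{T3}, $\skel(\mu)$ is a cycle in which every vertex has degree exactly $2$. For part (i), assume some $v \in \skel(\mu)$ is incident to two virtual edges. If $v$ is not a pole of $\mu$, both such edges correspond to children of $\mu$, which must be P- or R-nodes (since the SPQR-tree contains no two adjacent S-nodes); by the auxiliary fact each contributes $\geq 2$ to $\deg_G(v)$, giving $\deg_G(v) \geq 4$, a contradiction. If $v$ is a pole, one of the two incident edges is the reference edge of $\mu$; both being virtual forces $\mu$ to be inner and the non-reference edge to correspond to a P- or R-node child, so the inside contribution to $\deg_G(v)$ is $\geq 2$. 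To handle the outside contribution I would examine the parent $\pi$ of $\mu$, which is a P- or R-node by T1 and T2: if $\pi$ is an R-node, triconnectivity of $\skel(\pi)$ at $v$ immediately yields an outside contribution of $\geq 2$; if $\pi$ is a P-node, T1 gives it exactly two children (the second contributing $\geq 1$ at $v$), plus either a virtual reference edge of $\pi$ (if $\pi$ is inner, contributing $\geq 1$ more) or a real reference edge (if $\pi$ is the root child, contributing exactly $1$). In every subcase the outside contribution is $\geq 2$, giving $\deg_G(v) \geq 4$ and a contradiction. Part (ii) follows from the very same inside/outside split: if the non-reference edge at a pole of an inner S-node were virtual, the inside contribution would be $\geq 2$ by the auxiliary fact, whereas we already argued the outside contribution is $\geq 2$, again forcing $\deg_G(v) \geq 4$.

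The main obstacle will be cleanly bookkeeping the outside contribution at the pole of an inner S-node, since it requires separately invoking T1 for a P-node parent, triconnectivity for an R-node parent, and the special treatment of a parent that is itself a root child with a real reference edge; beyond that the argument is a uniform application of the $\Delta(G)\leq 3$ bound together with the auxiliary lower bound on the degree of a pole of a P- or R-node.
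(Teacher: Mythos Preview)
Your proposal is correct and aligns with the paper's approach: the paper omits the proof entirely, stating only that it ``is an immediate consequence of the fact that $\Delta(G) \leq 3$,'' and your systematic degree-counting argument is precisely the natural way to unpack that claim. The auxiliary fact, the exclusion of multi-edges via simplicity of $G$, and the inside/outside bookkeeping at the poles are all sound, so there is nothing to add.
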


\medskip

%%%%%%%
%%%%%%%
%%%%%%%
%%%%%%%
\section{Key Ingredients and Proof of \cref{th:main}}\label{se:proof-structure}
%Computing Optimal Orthogonal Drawings: An Overview of the Algorithm
%
Let $G$ be a planar $3$-graph. An orthogonal representation of $G$ is {\em optimal} if it has the minimum number of bends and at most one bend per edge.
%We assume that $G$ is connected, since otherwise we can independently compute optimal orthogonal drawings of its connected components.
\cref{th:main} relies on three main ingredients; we describe them and show how they are used to prove \cref{th:main}. The theorems stated for our main ingredients are proved in the next sections.

\myparagraph{First~ingredient:~Representative~shapes.}
%The first ingredient is a proof of the existence of an optimal orthogonal representation of a biconnected planar $3$-graph. This is achieved as a consequence of a stronger result:
We show the existence of an optimal orthogonal representation of a biconnected planar $3$-graph whose components have one of a constant number of possible ``orthogonal shapes'', which we define later in this section.
%which are going to be formally defined in this section.
As a consequence, we can restrict the search space for an optimal orthogonal representation of a planar $3$-graph to these shapes.

Let $T_\rho$ be the SPQR-tree of $G$ rooted at a Q-node $\rho$, and let $e$ be the edge corresponding to $\rho$. Let $H$ be an orthogonal representation of $G$ with $e$ on the external face. For a node $\mu$ of $T_\rho$, denote by $H_\rho(\mu)$ the restriction of $H$ to the pertinent graph $G_\rho(\mu)$ of $\mu$. We call $H_\rho(\mu)$ the \emph{orthogonal $\mu$-component of $H$ with respect to $\rho$}. We say that $H_\rho(\mu)$ is an S-, P-, Q-, or R-component depending on whether $\mu$ is an S-, P-, Q-, or R-node of $T_\rho$, respectively.
%If $\mu$ is the root child of $T_\rho$, $H_\rho(\mu)$ is the \emph{root child component} of $H$ with respect to $\rho$.
%; if $\mu$ is not the root child, $H_\rho(\mu)$ is an \emph{inner component of $H$} with respect to~$e$.
%
Let $u$ and $v$ be the two poles of $\mu$ in $T_\rho$. The \emph{inner degree} of $u$ (of $v$, respectively) is the number of edges of $H_\rho(\mu)$ incident to $u$ (to $v$, respectively). The \emph{left path $p_l$ of $H_\rho(\mu)$} is the path from $u$ to $v$ traversed when walking clockwise on the external boundary of $H_\rho(\mu)$. Similarly, the \emph{right path $p_r$ of $H_\rho(\mu)$} is the path from $u$ to $v$ traversed when walking counterclockwise on the external boundary of $H_\rho(\mu)$.
If $\mu$ is a P- or an R-node, both its poles have inner degree two and $p_l$ and $p_r$ are edge disjoint. If $\mu$ is a Q-node, both $p_l$ and $p_r$ coincide with the single edge represented by the Q-node. If $\mu$ is an S-node, $p_l$ and $p_r$ share some edges and they coincide when $H_\rho(\mu)$ is a simple path.  Also, the poles $u$ and $v$ of an S-node $\mu$ have both inner degree one if $\mu$ is an inner node, while they may have inner degree two if $\mu$ is the root child.
We define two {\em alias vertices} $u'$ and $v'$ of the poles $u$ and $v$ of an S-node. If the inner degree of $u$ is one, $u'$ coincides with $u$. If the inner degree of~$u$ is two, let $e_u$ be the edge of $H$ incident to~$u$ and such that $e_u \not\in H_\rho(\mu)$. The alias vertex $u'$ of~$u$ subdivides $e_u$ in such a way that there is no bend between $u$ and $u'$. We call {\em alias edge} of $u$ the edge connecting $u$ to~$u'$. The definition of alias vertex $v'$ and of alias edge of~$v$ are analogous. See \cref{fi:alias} for an illustration.
%In \cref{le:alias-a} the inner degree of $u$ and $v$ is one, $u'$ coincides with~$u$, and $v'$ coincides with $v$. In \cref{le:alias-b} the inner degree of~$u$ is one and $u'$ coincides with~$u$, while the inner degree of~$v$ is two and $v'$ is the alias vertex. In \cref{le:alias-c} the inner degree of~$u$ and~$v$ is two and $u'$ and $v'$ are the alias vertices.

\begin{figure}[t]
	%\subfloat[]{\label{introvert-extrovert-example-a}\includegraphics[width=0.2\columnwidth]{introvert-extrovert-example-a}}
	\centering
	\subfloat[]{\label{le:alias-a}\includegraphics[width=0.23\columnwidth]{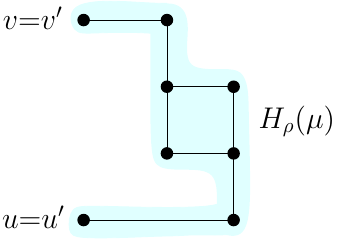}}
	\subfloat[]{\label{le:alias-b}\includegraphics[width=0.23\columnwidth]{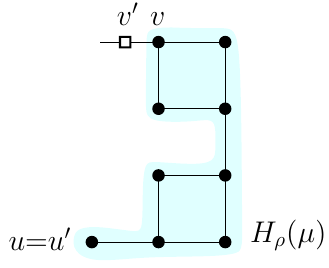}}
	\subfloat[]{\label{le:alias-c}\includegraphics[width=0.23\columnwidth]{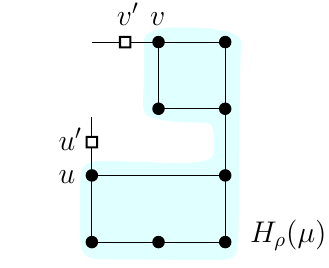}}
	\caption{Different examples of alias vertices of the poles of $S$-nodes. In (a) the alias vertices coincide with the poles. In (b) and (c) the alias vertices distinct from the poles are depicted as little white squares.}\label{fi:alias}
\end{figure}

%Let $p$ be a path between any two vertices in $H$. The \emph{turn number} of $p$, denoted as $t(p)$, is the absolute value of the difference between the number of right turns and the number of left turns encountered along $p$. A turn along $p$ is caused either by a bend along an edge of $p$ or by an angle of $90^\circ$ or $270^\circ$ at a vertex of~$p$.
Let $p$ be a path between any two vertices in $H$. The concept of \emph{turn number} of $p$, still denoted as $t(p)$, naturally extends the one given for a path in an orthogonal drawing. Namely $t(p)$ is the absolute value of the difference between the number of right turns and the number of left turns encountered along $p$ in $H$.
%
%The following lemma holds~\cite{DBLP:journals/siamcomp/BattistaLV98}.

\begin{lemma}[\cite{DBLP:journals/siamcomp/BattistaLV98}]\label{le:k-spiral}
	Let $\mu$ be an S-node of $T_\rho$ and let $H_\rho(\mu)$ be the orthogonal $\mu$-component with respect to $\rho$. Let $p_1$ and $p_2$ be any two paths in $H_\rho(\mu)$ between its alias vertices. Then $t(p_1)=t(p_2)$.
\end{lemma}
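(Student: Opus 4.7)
The plan is to prove the stronger signed statement $\tau(p_1) = \tau(p_2)$, where $\tau(p)$ denotes the net signed turn of $p$ (a right turn contributes $+1$, a left turn $-1$, with both edge bends and $90^\circ/270^\circ$ vertex angles counted); since $t(p) = |\tau(p)|$, this implies the lemma. A first observation is that every path from $u'$ to $v'$ begins along the alias edge at~$u'$ and ends along the alias edge at~$v'$, so all such paths share the same initial and final walking directions. This already forces $\tau(p_1) \equiv \tau(p_2) \pmod 4$; the nontrivial content is therefore to rule out a difference equal to a nonzero multiple of~$4$.

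To do this, I would study the symmetric difference $p_1 \triangle p_2$ as a set of edges. Every vertex has even degree in this subgraph, and because $G$ is a $3$-graph two divergent arcs cannot share any internal vertex (doing so would require four edges at a degree-$3$ vertex). Hence $p_1 \triangle p_2$ decomposes into edge-disjoint \emph{simple} cycles, each arising from a maximal pair of divergent arcs $\gamma_1 \subseteq p_1$ and $\gamma_2 \subseteq p_2$ sharing endpoints~$x$ and~$y$. It suffices to prove that each such pair contributes the same signed turn to $p_1$ and~$p_2$.

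Fix one divergence cycle $C = \gamma_1 \cdot \gamma_2^{-1}$. At $x$ three edges of $H_\rho(\mu)$ meet: a common incoming edge~$e_0$ and the distinct first edges $e_1, e_2$ of $\gamma_1, \gamma_2$; similarly at~$y$ with a common outgoing edge~$e_{\text{out}}$. Because $G$ is a $3$-graph, the three angles around~$x$ are a permutation of $(90^\circ, 90^\circ, 180^\circ)$, and a short case analysis over these three orderings yields the identity
\[
|\theta_1^x - \theta_2^x| + \theta_C^x = 2,
\]
where $\theta_i^x$ is the turn of $p_i$ at~$x$ and $\theta_C^x$ is the turn at~$x$ when~$C$ is traversed with its interior on the right; the analogous identity holds at~$y$. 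Summing property~\textsf{H2} over the faces enclosed by~$C$ yields the cycle equation $\tau(\gamma_1) - \tau(\gamma_2) = 4 - \theta_C^x - \theta_C^y$, provided $p_1$ is labelled so that $\gamma_1$ leaves~$x$ along the edge clockwise-next to~$e_0$. Under this labelling, planarity of the embedding also forces both $\delta^x := \theta_1^x - \theta_2^x$ and $\delta^y := \phi_1^y - \phi_2^y$ to be non-positive, where $\phi_i^y$ is the turn of $p_i$ at~$y$. Substituting the two vertex identities into the cycle equation then gives
\[
\tau(p_1|_{[x,y]}) - \tau(p_2|_{[x,y]}) = \delta^x + \delta^y + |\delta^x| + |\delta^y| = 0,
\]
so every divergence contributes equally. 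Summing over all cycles in $p_1 \triangle p_2$ yields $\tau(p_1) = \tau(p_2)$, as desired.

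The main obstacle is verifying the sign and orientation consistency that drives the final cancellation: one must show that the clockwise-next labelling at~$x$ is compatible with the corresponding choice at~$y$, so that the orientation of~$C$ really produces the $+4$ (and not the $-4$) in the cycle equation while simultaneously making both $\delta^x$ and $\delta^y$ non-positive. This compatibility hinges on the $3$-graph hypothesis, which leaves only three cyclic angle orderings at each degree-$3$ vertex, together with the fact that in a planar embedding the clockwise orders at the two endpoints of a parallel-like bundle are reversed, so that a single convention at~$x$ uniformly determines the matching convention at~$y$.
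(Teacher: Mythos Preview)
The paper does not prove this lemma; it is quoted from Di~Battista, Liotta, and Vargiu without argument, so there is no paper proof to compare against. Your architecture---pass to the signed turn~$\tau$, split $p_1\triangle p_2$ into divergence cycles (which are simple because $G$ is a $3$-graph), and show each contributes zero---is sound, and the vertex identity $|\delta^x|+\theta_C^x=2$ together with the cycle equation are correct.

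The gap is exactly where you flag it, but your proposed resolution is not the right one. The reversal of clockwise orders at the two ends of a parallel bundle concerns only the two arcs $\gamma_1,\gamma_2$; it says nothing about where the \emph{third} edges $e_0$ and $e_{\mathrm{out}}$ sit relative to~$C$. If $e_0$ lies in the interior of~$C$ while $e_{\mathrm{out}}$ lies in the exterior, then under your clockwise-next convention at~$x$ the traversal $\gamma_1\cdot\gamma_2^{-1}$ has interior on the \emph{left}, so $\tau(C)=-4$, and $\delta^y$ turns out positive; the cancellation fails and the two $[x,y]$-contributions differ by~$4$. (Concretely: take $\gamma_2$ going N,\,E,\,S,\,S and $\gamma_1$ going S,\,E between $x$ and~$y$, with $e_0$ pointing east from~$x$ into the enclosed region and $e_{\mathrm{out}}$ pointing south from~$y$ out of it.) What actually excludes this configuration is global: $H_\rho(\mu)$ is the restriction of a planar representation~$H$ of all of~$G$, and the poles $u,v$ are joined through $G\setminus G_\rho(\mu)$ (in particular through the reference edge on the external face of~$H$). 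Hence both alias vertices lie on the outer face of~$H_\rho(\mu)$, and a short simplicity argument then forces every shared segment of $p_1,p_2$---in particular $e_0$ and $e_{\mathrm{out}}$ for each divergence---into the exterior of~$C$. With both third edges exterior, your convention at~$x$ does give $\tau(C)=+4$ and the matching sign pattern at~$y$, and the computation closes. So the missing ingredient is not local degree-$3$ combinatorics but the fact that the S-component sits inside a larger planar drawing with its poles on the outer boundary.
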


Based on \cref{le:k-spiral}, the orthogonal shape of an S-component is described in terms of the turn number of any path $p$ between its two alias vertices. As for P-components and R-components, their orthogonal shapes are described in terms of the turn numbers of the two paths $p_l$ and $p_r$ connecting their poles on the external face. Precisely, we consider the following \emph{orthogonal shapes} for~$H_\rho(\mu)$.

\smallskip
\begin{description}	
	\item[$\mu$ is a Q-node:] $H_\rho(\mu)$ has a \emph{0-shape}, or equivalently \emph{\zeroB-shape}, if it is a straight-line segment; $H_\rho(\mu)$ has a \emph{1-shape}, or equivalently \emph{\oneB-shape}, if it has exactly one bend.
	
	 \item[$\mu$ is an S-node:] The shape of $H_\rho(\mu)$ is a \emph{$k$-spiral}, for some integer $k \geq 0$, if the turn number of any path $p$ between its two alias vertices is $t(p) = k$; if $H_\rho(\mu)$ is a $k$-spiral, we also say that $H_\rho(\mu)$  has \emph{spirality}~$k$.

	 \item[$\mu$ is either a P-node or an R-node:]
	 $H_\rho(\mu)$ has a \emph{D-shape}, or equivalently \emph{\D-shape}, if $t(p_l)=0$ and $t(p_r)=2$, or vice versa;
	 $H_\rho(\mu)$ has an \emph{X-shape}, or equivalently \emph{\X-shape}, if $t(p_l)=t(p_r)=1$;
	 $H_\rho(\mu)$ has an \emph{L-shape}, or equivalently \emph{\L-shape}, if $t(p_l)=3$ and $t(p_r)=1$, or vice versa, and the inner angle at each pole of $\mu$ is a $90^\circ$ angle;
	 $H_\rho(\mu)$ has a \emph{C-shape}, or equivalently \emph{\C-shape}, if $t(p_l)=4$ and $t(p_r)=2$, or vice versa, and the inner angle at each pole of $\mu$ is a $90^\circ$ angle.
	 \end{description}

\smallskip

%The next theorem identifies, for each orthogonal component $H_{\mu}$ with respect to $\rho$, the set of its representative shapes. An implication of the theorem is the existence of a bend-minimum orthogonal representation of any planar $3$-graph distinct from $K_4$ such that there is at most one bend per edge.

The next theorem proves that every biconnected planar $3$-graph distinct from $K_4$ admits a bend-minimum orthogonal representation with at most one bend per edge; it also identifies the set of orthogonal shapes that can be used for the components of such a representation.

%\begin{restatable}{theorem}{thShapes}\label{th:shapes}
\begin{theorem}\label{th:shapes}
	A biconnected planar $3$-graph distinct from $K_4$ admits a bend-minimum orthogonal representation $H$ such that for any edge $e$ of the external face of $H$, denoted by $\rho$ the Q-node corresponding to $e$, by $T_\rho$ the SPQR-tree of $G$ with respect to $\rho$, and by $\mu$ a node of $T_\rho$,
	the following properties hold for $H_\rho(\mu)$:
	%and for the orthogonal components of $H$ with respect to~$e$, the following properties hold.
	
%	\smallskip\noindent{\em \bf \textsf{O1}} Every Q-component is either \zeroB- or \oneB-shaped. Also, edge $e$ has at most one bend.
%	\noindent{\em \bf \textsf{O2}} Every P-component or R-component is either \D- or \X-shaped if it is an inner component and it is either \L- or \C-shaped if it is the root child component.
%	\noindent{\em \bf \textsf{O3}} Every S-component has spirality at most four.
%	\noindent{\em \bf \textsf{O4}} Every component has the minimum number of bends within its shape.
\begin{itemize}
	\item[{\em \bf \textsf{O1}}] If $H_\rho(\mu)$ is a Q-component, it has either \zeroB- or \oneB-shape. Also, edge $e$ has at most one bend.
	
	\item[{\em \bf \textsf{O2}}] If $H_\rho(\mu)$ is a P-component or an R-component, it is has either \L- or \C-shape when $\mu$ is the root child and it has either \D- or \X-shape otherwise.
	
	\item[{\em \bf \textsf{O3}}] If $H_\rho(\mu)$ is an S-component, it has spirality at most four.
	
	\item[{\em \bf \textsf{O4}}] $H_\rho(\mu)$ has the minimum number of bends within its shape.
\end{itemize}
%
%
%	\begin{itemize}
%	\item[{\em \bf \textsf{O1}}] Every Q-component has either \zeroB- or \oneB-shape. Also, edge $e$ has at most one bend.
%	
%	\item[{\em \bf \textsf{O2}}] Every P-component or R-component $H_\rho(\mu)$ has either \L- or \C-shape if $\mu$ is the root child and it has either \D- or \X-shape otherwise.
%	
%	\item[{\em \bf \textsf{O3}}] Every S-component is a $k$-spiral with $k \in [0,4]$.
%	
%	\item[{\em \bf \textsf{O4}}] Every component has the minimum number of bends within its shape.
%	\end{itemize}

\end{theorem}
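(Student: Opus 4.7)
My plan is to start with an arbitrary bend-minimum orthogonal representation $H_0$ of $G$ and modify it in stages so as to enforce properties \textsf{O1}--\textsf{O4}. I first establish \textsf{O1} by a local redistribution argument: if some edge $e$ carries $b(e)\ge 2$ bends, then at an internal vertex $v$ of degree at most $3$ incident to $e$ there is a free $90^\circ$ quadrant into which one bend of $e$ can be ``absorbed'' by sliding it across $v$ onto an adjacent edge (possibly straightening that edge at the same time). A short case analysis over the possible local configurations around $v$ shows that the absorption is feasible without increasing $b(H_0)$, with $K_4$ being the unique obstruction (whence the hypothesis $G\neq K_4$).

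Once \textsf{O1} holds, I fix an external edge $e$ with corresponding Q-node $\rho$ and analyze each node $\mu$ of $T_\rho$. For a P- or R-node $\mu$ that is not the root child, both poles $u,v$ have inner degree $2$ in $G_\rho(\mu)$ because $\Delta(G)\leq 3$ and the reference edge is external to $G_\rho(\mu)$; the internal angle at each pole therefore lies in $\{90^\circ,180^\circ,270^\circ\}$. Applying property \textsf{H2} to the external face of $H_\rho(\mu)$ (equivalently, summing signed rotations along the boundary made of $p_l$ and $p_r$) gives a linear identity that relates $t(p_l)$, $t(p_r)$, and the two internal pole angles. Enumerating the finitely many admissible solutions, and then ruling out solutions whose total turn exceeds the minimum needed (using bend-minimality together with the flip and the possibility of replacing the component by an equivalent one), leaves exactly the D- and X-shapes claimed by \textsf{O2}. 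For the root child $\mu_0$ the reference edge $e$ lies on the external face of $G$, so the inner angle at each pole of $\mu_0$ is forced to $90^\circ$; the same calculation then singles out the L- and C-shapes.

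Property \textsf{O3} follows from \textsf{O2} together with \cref{le:k-spiral}: an S-component is attached, via its alias vertices, inside a parent P-, R-, or root-child component whose boundary turn numbers are already bounded by $4$; the spirality of the S-component equals the turn number of any internal path through it and is therefore bounded above by the turn numbers of the enclosing external paths. Property \textsf{O4} is then a consequence of bend-minimality: if $H_\rho(\mu)$ used more bends than necessary within its shape, we could splice in a bend-minimum representation of $G_\rho(\mu)$ having the same shape at the poles (which by definition presents the same interface to the rest of $G$), strictly decreasing $b(H)$, a contradiction. Iterating this splicing across all components yields \textsf{O4}.

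The hard part, and the point on which most of the care in the proof must be spent, is that the statement requires \textsf{O1}--\textsf{O4} to hold \emph{simultaneously for every external edge} $e$, whereas my exchanges and splicings depend on the choice of $\rho$. To handle this, I propose to choose $H$ among all bend-minimum orthogonal representations of $G$ with at most one bend per edge so as to lexicographically minimize a global potential, for instance the sorted multiset of turn numbers of the boundary paths of all P- and R-components together with the spiralities of all maximal S-components. Any one of the local transformations above, applied under any rooting $\rho$, strictly decreases this potential without increasing the bend count; hence the minimizer $H$ already satisfies \textsf{O2}--\textsf{O4} under every admissible rooting. Verifying that the splicing step really preserves the representation globally, i.e.\ that replacing $H_\rho(\mu)$ by another realization of the same shape is consistent with the surrounding representation, is the most delicate technical ingredient and is where \cref{le:NoBendAlg} and the structural properties summarized in \cref{le:spqr-tree-3-graph} will be needed.
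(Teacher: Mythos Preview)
Your proposal has genuine gaps in the arguments for \textsf{O1}, \textsf{O2}, and \textsf{O3}, and the overall strategy differs from the paper in a way that does not obviously close.

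For \textsf{O2}, applying \textsf{H2} to the external boundary of an inner P/R-component yields only a single linear relation among $t(p_l)$, $t(p_r)$, and the pole angles; it does not produce a finite list of candidate shapes. Even after \textsf{O1}, a path $p_l$ may have many edges and hence large turn number, so ``enumerating the finitely many admissible solutions'' is not well-defined. More importantly, bend-minimality of $H$ does \emph{not} by itself rule out large turn numbers on a component boundary: the bends on $p_l$ and $p_r$ may be forced by $2$- and $3$-extrovert cycles elsewhere in $G$, so you cannot simply ``replace the component by an equivalent one'' with smaller turn numbers without first knowing such a replacement exists---which is exactly what you are trying to prove. The paper avoids this circularity by a constructive route: it passes to the rectilinear image $\overline{H^*}$ of a bend-minimum representation satisfying \textsf{O1}, then \emph{re-draws} $\overline{G^*}$ with \textsf{NoBendAlg} after carefully choosing the four designated corners (\cref{le:O2-O3}). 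The shape bounds then come from a structural property of \textsf{NoBendAlg} itself (\cref{le:NoBendAlg-shapes}), not from minimality.

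Your \textsf{O1} argument is likewise too local. ``Sliding a bend across a degree-$\le 3$ vertex into a free quadrant'' does not preserve the orthogonal representation in general: the angle reassignment at $v$ propagates constraints around both incident faces, and there is no reason the adjacent edge can absorb a bend without creating one elsewhere. The paper's proof (\cref{le:2-bends,le:external-face,le:internal-edge,le:external-edge,le:1-bend}) is global: it works in the rectilinear image, shows that smoothing one bend-vertex and subdividing a carefully chosen \emph{free} edge keeps the plane graph good in the sense of \cref{th:RN03} (analyzing all $2$-extrovert cycles through the edge), and then invokes \textsf{NoBendAlg} to redraw. The $K_4$ exception arises from this analysis, not from a local obstruction at a single vertex. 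Your \textsf{O3} bound also does not follow from \textsf{O2} as stated: an S-component need not lie ``inside'' a P/R-component whose boundary turns are already $\le 4$; its spirality bound in the paper is again a byproduct of the \textsf{NoBendAlg} construction. Your \textsf{O4} via splicing is essentially the paper's substitution lemma (\cref{le:substitution,le:O4}), but it presupposes \textsf{O1}--\textsf{O3}.
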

%\end{restatable}

Based on \cref{th:shapes}, it suffices to consider only the orthogonal representations whose components have one of the shapes stated in Properties~\textsf{O1}--\textsf{O3}, which we call the \emph{representative shapes} of the orthogonal $\mu$-component $H_\rho(\mu)$ or, equivalently, of $\mu$.

Regarding the number of bends per edge, we recall that Kant shows that every planar $3$-graph (except $K_4$) has an orthogonal representation with at most one bend per edge~\cite{DBLP:journals/algorithmica/Kant96}, but the total number of bends is not guaranteed to be the minimum. On the other hand, in~\cite{DBLP:conf/gd/DidimoLP18} it is shown how to compute a bend-minimum orthogonal representation of a planar $3$-graph in the variable embedding setting with constrained shapes for its orthogonal components, but there can be more than one bend per edge. \cref{fi:opt-orth-bend-min-2-bends,fi:opt-orth-1-bend,fi:opt-orth-bend-min-1-bend} show different orthogonal representations of the same planar $3$-graph. The representation in \cref{fi:opt-orth-bend-min-2-bends} is optimal in terms of total number of bends but has some edges with two bends. The representation in \cref{fi:opt-orth-1-bend} has at most one bend per edge, but it does not minimize the total number of bends. The representation in \cref{fi:opt-orth-bend-min-1-bend} is optimal both in terms of total number of bends and in terms of maximum number of bends per edge.

\begin{figure}[tb]
	\centering
%	\subfloat[]{\label{fi:opt-orth-k4}\includegraphics[width=0.16\columnwidth]{}}
%	\hfil
	\subfloat[]{\label{fi:opt-orth-bend-min-2-bends}\includegraphics[width=0.16\columnwidth]{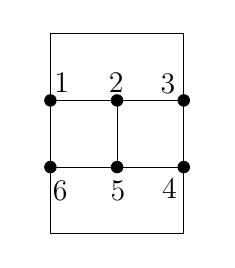}}
	\hfil
	\subfloat[]{\label{fi:opt-orth-1-bend}\includegraphics[width=0.16\columnwidth]{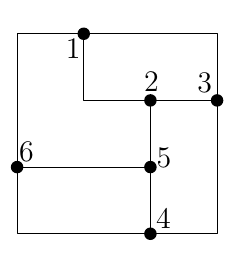}}
	\hfil
	\subfloat[]{\label{fi:opt-orth-bend-min-1-bend}\includegraphics[width=0.16\columnwidth]{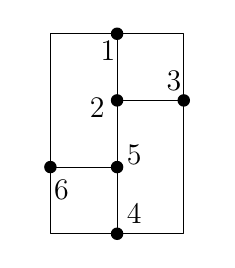}}
	\caption{(a) Bend-minimum orthogonal representation with at most 2 bends per edge. (b) Orthogonal representation with at most 1 bend per edge that is not bend-minimum. (c) Optimal orthogonal representation.}\label{fi:opt-orth}
\end{figure}

\myparagraph{Second ingredient: Labeling algorithm.} The second ingredient for the proof of \cref{th:main} is a linear-time labeling algorithm that applies to 1-connected planar $3$-graphs distinct from $K_4$. Each edge $e$ of a block $B$ of $G$ is labeled with the number $b_e(B)$ of bends of an $e$-constrained optimal orthogonal representation of~$B$. If every $e$-constrained bend-minimum orthogonal representation of~$B$ requires two bends on some edge, then $b_e(B)$ is set to $\infty$; note that, by \cref{th:shapes}, $B$ always has some edge $e'$ such that $b_{e'}(B)$ is finite.
%
%The existence of such an $e$-constrained optimal representation of $B$ is guaranteed by \cref{th:key-result-2}.
%, if such a representation exists; if all $e$-constrained bend-minimum orthogonal representations require two bends on some edge we set $b(e)$ equal to~$\infty$.
The labeling easily extends to the vertices of~$B$. Namely, for each vertex $v$ of~$B$, $b_v(B)$ is the minimum of the labels associated with the edges of~$B$ incident to $v$.
The labeling of the vertices is used in the drawing algorithm when we compose the orthogonal representations of the blocks of a 1-connected graph. We also label each block~$B$ of~$G$ with the number of bends $b_B(G)$ of an optimal \emph{$B$-constrained} orthogonal representation of~$G$, i.e., an optimal orthogonal representation of~$G$ such that at least one edge of~$B$ is on the external face.
%(by \cref{th:shapes} such a representation always exists if~$G$ is different from $K_4$).

%Let $T_{\rho^*}$ be the SPQR-tree of a block of $G$ rooted at a Q-node $\rho^*$ that corresponds to an arbitrary edge~$e^*$ of $G$. In a pre-processing step we perform an $O(n)$-time bottom-up visit of $T_{\rho^*}$. Based on \cref{th:shapes}, for each node $\mu$ of $T_{\rho^*}$ it suffices to consider its $O(1)$ representative shapes and, for each such shape, to compute the minimum number of bends. Namely, for each node $\mu$ and for each representative shape $\sigma$ of $\mu$ (i.e., those in \cref{th:shapes}), we compute the minimum number of bends $b_{e^*}^{\sigma}(\mu)$ of an orthogonal component $H_{\mu}$ with respect to $e^*$ such that $H_{\mu}$ has shape~$\sigma$ and at most one bend per edge. When $\mu$ is the root of $T$ (i.e., $\mu$ is the Q-node associated with $e^*$), the label of $e^*$ is $b(e^*)=\min\{b_{e^*}^{\zerob}(\mu),b_{e^*}^{\oneb}(\mu)\}$, where $b_{e^*}^{\zerob}(\mu)$ (resp. $b_{e^*}^{\oneb}(\mu)$) corresponds to the number of bends of an optimal $e^*$-constrained representation of $G$ where $e^*$ has zero bends (resp. one bend).

For a block $B$ of $G$, let $n$ be the number of vertices of $B$, let $\{e_1, e_2, \dots, e_m\}$ be the set of edges of $B$, and let $\{\rho_1, \rho_2, \dots, \rho_m\}$ be the corresponding Q-nodes of the SPQR-tree $T$ of $B$.
To compute $b_{e_1}(B)$, the labeling algorithm performs an $O(n)$-time bottom-up visit of $T_{\rho_1}$. Let $\mu$ be the currently visited node; by \cref{th:shapes} it suffices to consider the $O(1)$ representative shapes for the component associated with $\mu$. Namely, for each node $\mu$ and for each representative shape $\sigma$ of $\mu$ (i.e., those in \cref{th:shapes}), we compute the minimum number of bends $b_{\rho_1}^{\sigma}(\mu)$ of the orthogonal $\mu$-component $H_{\rho_1}(\mu)$ with respect to $\rho_1$ such that $H_{\rho_1}(\mu)$ has shape~$\sigma$ and at most one bend per edge. When $\mu=\rho_1$, the label of $e_1$ is $b_{e_1}(B))=\min\{b_{\rho_1}^{\zerob}(\mu),b_{\rho_1}^{\oneb}(\mu)\}$, where $b_{\rho_1}^{\zerob}(\mu)$ (resp. $b_{\rho_1}^{\oneb}(\mu)$) corresponds to the number of bends of an optimal $e_1$-constrained representation of $B$ where $e_1$ has zero bends (resp. one bend).
In each step $i=2, \dots, m$, we consider tree $T_{\rho_i}$ and compute $b_{\rho_i}^{\sigma}(\mu)$. As proved in \cref{se:labeling}, the values $b_{\rho_i}^{\sigma}(\mu)$ can be computed in $O(1)$ time for each node~$\mu$~of~$T_{\rho_i}$.
%
%The following lemma holds after the aforementioned pre-processing step for every edge $e' \neq e$ of $G$.
%
%\begin{lemma}\label{le:key-result-1}
%	Let $T$ be the SPQR-tree of $G$ rooted at $e'$. By performing a bottom-up visit of $T$ it is possible to compute in $O(1)$ time the values $b_{e'}^{\sigma}(\mu)$ for each node $\mu$ of $T$ and for each representative shape $\sigma$.
%\end{lemma}
%
%The proof of \cref{le:key-result-1} describes how to efficiently update the values $b_e^{\sigma}(\mu)$ when $T$ is re-rooted (we root $T$ at all its Q-nodes in order to label the corresponding edges of $G$).

The computation of $b_{\rho_i}^{\sigma}(\mu)$ is particularly challenging when $\mu$ is an R-node. In this case $\skel(\mu)$ is a planar triconnected cubic graph and each virtual edge $e_\nu$ of $\skel(\mu)$ (different from the reference edge $e_{\rho_i}(\mu)$ of $\skel(\mu)$), corresponds to an S-component of $B$, associated with a child node $\nu$ of $\mu$ in~$T_{\rho_i}$. In \cref{le:shape-cost-set-S} and \cref{co:elbow-function} we show that the spirality of an orthogonal representation of $B_{\rho_i}(\nu)$ can be increased up to a certain value without introducing extra bends along the real edges of $\skel(\nu)$. This value characterizes the `flexibility' of $e_\nu$ which, by Property~\textsf{O4} of \cref{th:shapes}, can be assumed to be at most $4$. More formally,  each edge $e$ of $\skel(\mu)$ is given a non-negative integer $\flex(e) \in [0,4]$ called \emph{flexibility} of~$e$.  An edge $e$ is called \emph{flexible} if $\flex(e) > 0$ and \emph{inflexible} if $\flex(e) = 0$.
We model the problem of computing $b_{\rho_i}^{\sigma}(\mu)$ as the problem of constructing a \emph{cost-minimum} $\sigma$-shaped orthogonal representation $H(\skel(\mu))$ of $\skel(\mu)$.
Let $c(e)$ be the \emph{cost of $e$}, defined as the number of bends of $e$ in $H(\skel(\mu))$ exceeding $\flex(e)$. The \emph{cost} $c(H(\skel(\mu)))$ of $H(\skel(\mu))$ is the sum of $c(e)$ for all edges $e$ of $\skel(\mu)$. If $\skel(\mu)$ has only inflexible edges, the cost of $H(\skel(\mu))$ coincides with its total number of bends, i.e., $c(H(\skel(\mu)))=b(H(\skel(\mu)))$.
%The real edges of $\skel(\mu)$ can be regarded as `inflexible' edges, i.e., edges with flexibility zero.
% The \emph{cost} of an edge is the number of bends exceeding its flexibility and the \emph{cost} of $H(\skel(\mu))$ is the sum of the costs of its edges.
Since $\skel(\mu)$ is a planar triconnected cubic graph with flexible edges, the labeling algorithm exploits the following crucial results (\cref{th:fixed-embedding-cost-one,th:bend-counter}) about cost-minimum orthogonal representations of such graphs.
%holds in the variable embedding setting and refers to a data structure called \texttt{Bend-Counter}.

\begin{restatable}{theorem}{thFixedEmbeddingCostOne}\label{th:fixed-embedding-cost-one}
	Let $G$ be an $n$-vertex plane triconnected cubic graph which may have flexible edges. Let~$f$ be the external face of~$G$ and let $\flex(e)$ denote the flexibility of an edge $e$. There exists a cost-minimum embedding-preserving orthogonal representation $H$ of~$G$ that satisfies the following properties:
\begin{itemize}
\item[{\em \bf \textsf{P1}}] If $f$ is a 3-cycle with all inflexible edges, then each flexible edge $e$ of $G$ has at most $\flex(e)$ bends in $H$ and each inflexible edge has at most one bend, except one edge of $f$ that has two bends.

\item[{\em \bf \textsf{P2}}] If $f$ is a 3-cycle with at least one flexible edge and all flexible edges of $f$ have flexibility one, then each inflexible edge of $G$ has at most one bend in $H$ and each flexible edge $e$ has at most $\flex(e)$ bends, except one flexible edge of $f$ that has two bends.

\item[{\em \bf \textsf{P3}}] Else (if $f$ is not a 3-cycle or if $f$ is a 3-cycle with at least one edge having flexibility larger than one), each inflexible edge of $G$ has at most one bend in $H$ and each flexible edge $e$ has at most~$\flex(e)$~bends.
\end{itemize}
Also, there exists an algorithm that computes $H$ in $O(n)$ time.
\end{restatable}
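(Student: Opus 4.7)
The plan is to reduce the problem to invoking \textsf{NoBendAlg} of \cref{le:NoBendAlg} via a subdivision that converts edge flexibility into degree-2 vertices. Let $G'$ be the plane graph obtained from $G$ by subdividing each flexible edge $e$ into $\flex(e)+1$ edges with $\flex(e)$ new degree-2 vertices, keeping the planar embedding and the external face $f$ unchanged. Since $G$ is triconnected cubic, $G'$ is a biconnected plane 3-graph whose only degree-2 vertices are these subdivision vertices. Embedding-preserving orthogonal representations of $G$ correspond bijectively to those of $G'$: a $90^\circ$ or $270^\circ$ vertex-angle at a subdivision vertex of $G'$ becomes a bend on the original flexible edge that is absorbed by flexibility (zero cost), while a bend on an edge of $G'$ becomes a cost-contributing bend in $G$. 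Hence minimizing cost in $G$ coincides with minimizing the total number of bends in $G'$.

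I would then analyze when $G'$ is a \emph{good plane graph} in the sense of \cref{th:RN03}. Using the triconnectivity and cubicity of $G$, I would argue that every 2-extrovert cycle $C$ of $G'$ contains at least two subdivision vertices and every 3-extrovert cycle contains at least one: otherwise, contracting the subdivisions would produce either a separation pair of $G$ or a structural defect violating cubicity. Thus conditions~(ii) and~(iii) of \cref{th:RN03} hold automatically, and the only obstruction is condition~(i), which requires at least four degree-2 vertices on $C_o(G')$. This condition fails exactly when $f$ is a 3-cycle and the total flexibility on its edges is less than four, which matches the split into Cases~P1, P2, P3 of the statement.

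In Case~P3 condition~(i) already holds, so I would invoke \textsf{NoBendAlg} on $G'$ directly to obtain a bendless embedding-preserving representation in $O(n)$ time and contract the subdivisions to get $H$: each flexible edge $e$ carries at most $\flex(e)$ bends, each inflexible edge carries no bend, and the total cost is zero. In Cases~P1 and~P2 I would select one edge $e^*$ of $f$ and further subdivide it (twice in Case~P1, once in Case~P2) to supply the missing degree-2 vertices, run \textsf{NoBendAlg} on the augmented good plane graph, and then contract all subdivisions. By \cref{le:NoBendAlg}(ii) the extra subdivision vertices can be placed as designated corners so that the resulting bends concentrate on $e^*$: two bends on one inflexible edge of $f$ (Case~P1) or two bends on a flexibility-one edge of $f$ (Case~P2). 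Cost-minimality in these cases follows from Property~\textsf{H2} applied to $C_o(G)$: the external face must realise four net reflex angles, and when $f$ is a 3-cycle these cannot all be supplied by vertex-angles, forcing at least the indicated number of non-absorbed bends along~$f$.

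The main obstacle, I expect, will be making the claim about 2-extrovert and 3-extrovert cycles of $G'$ fully rigorous. One must carefully enumerate how a $k$-extrovert cycle of $G'$ can arise after the subdivision of flexible edges, ruling out small extrovert cycles whose interior contains too few subdivision vertices, and then handle any residual bad cycles by a local modification that either preserves cost-minimality or, as in Cases~P1 and~P2, is shown optimal by an angle-sum lower bound. A secondary concern is running time: since $\flex(e)\le 4$, the subdivision enlarges $G$ by only a constant factor, so the subdivision, \textsf{NoBendAlg}, and the contraction step are all linear, keeping the overall complexity $O(n)$.
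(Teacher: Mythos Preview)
Your argument has a fundamental gap at its core claim. You assert that every 3-extrovert cycle of $G'$ contains at least one subdivision vertex, reasoning that otherwise triconnectivity or cubicity of $G$ would be violated. This is false: a 3-extrovert cycle $C$ of the original triconnected cubic graph $G$ may consist entirely of inflexible edges, in which case $C$ survives as a 3-extrovert cycle of $G'$ with no degree-2 vertex on it, violating condition~(iii) of \cref{th:RN03}. There is no contradiction with triconnectivity here: $C$ has three external legs, which is perfectly compatible with $G$ being triconnected (triconnectivity rules out 2-extrovert cycles in $G$, not 3-extrovert ones). Consequently $G'$ is in general \emph{not} a good plane graph, and your conclusion that the cost is zero in Case~\textsf{P3} is wrong: $c(G)$ can be linear in~$n$. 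The same issue undermines your claim about 2-extrovert cycles: subdividing a single flexible edge on $C_o(G)$ creates a 2-extrovert cycle (\cref{le:2-extrovert}) that need not contain any other subdivision vertex.

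The paper's proof (\cref{se:fixed-embedding-cost-one,se:ref-embedding}) is built precisely around this obstruction. It isolates the \emph{demanding} 3-extrovert cycles --- those that cannot be satisfied by a bend on a flexible edge or by a bend shared with a descendant cycle --- via a red-green-orange colouring rule, and proves the exact formula $c(G) = |D(G)| + 4 - \min\{4, |D_f(G)| + \flex(f)\}$ (\cref{th:fixed-embedding-min-bend}), where $D(G)$ is the set of such cycles and each contributes one unavoidable costly bend. The constructive upper bound (\cref{le:cost-upper-bound} and the case analysis for $\flex(f)$) inserts one costly subdivision vertex per demanding cycle, plus at most four on the external face, and only then invokes \textsf{NoBendAlg}; linear time is obtained by computing an inclusion tree of the 3-extrovert cycles in a reference embedding. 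Your subdivision idea is the right opening move, but the substance of the theorem lies in identifying and paying for the demanding cycles, which your proposal omits entirely.
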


While \cref{th:fixed-embedding-cost-one} holds for a plane graph, \cref{th:bend-counter}, allows us to efficiently handle all possible choices of the external face.

\begin{theorem}\label{th:bend-counter}
	Let $G$ be an $n$-vertex planar triconnected cubic graph with flexible edges.
	There exists a data structure such that: $(i)$ it returns in $O(1)$ time the cost of a cost-minimum orthogonal representation of $G$ for any choice of the external face of $G$; $(ii)$ it can be constructed in $O(n)$ time and updated in $O(1)$ time when the flexibility of an edge of $G$ is changed to any value in $\{1,2,3,4\}$.
\end{theorem}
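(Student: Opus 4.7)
The plan is to realize the data structure as a table $C[\cdot]$ indexed by the faces of $G$, where $C[f]$ stores the cost of a cost-minimum orthogonal representation of $G$ with $f$ as the external face; queries then become $O(1)$ table lookups. Since $G$ is cubic and planar, $|F(G)| = O(n)$ (by Euler's formula), so the table has $O(n)$ entries, and the task reduces to populating and maintaining it within the stated time bounds.

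For the construction, I would first apply \cref{th:fixed-embedding-cost-one} once, with an arbitrary choice $f_0$ of external face, obtaining $C[f_0]$ together with a concrete cost-minimum orthogonal representation $H_0$ in $O(n)$ time. I would then fill the remaining entries of $C[\cdot]$ via a re-rooting argument on the dual of $G$: traversing the dual along a spanning tree, at each step I switch the current external face from $f$ to an adjacent face $f'$ across a shared edge $e$, and compute $C[f']$ from $C[f]$ using only local information. The key structural lemma to prove is that the difference $C[f'] - C[f]$ depends only on the lengths of $f$ and $f'$ and the flexibility values of a constant-size neighborhood of $e$, and can therefore be evaluated in $O(1)$ per dual edge, for $O(n)$ preprocessing in total.

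For the update operation, when the flexibility of an edge $e$ is changed to any value in $\{1,2,3,4\}$, I would argue that only a localized recomputation is needed. Since the flexibility stays in $\{1,2,3,4\}$, the edge remains flexible both before and after the update; the set of inflexible edges does not change, and the cost of $e$ itself in any representation that respects its flexibility stays at $0$. The only genuine effect is a possible change in which of cases \textsf{P1}, \textsf{P2}, \textsf{P3} of \cref{th:fixed-embedding-cost-one} applies to the (at most two) faces of $G$ incident to $e$, together with an $O(1)$-sized ripple through the table determined by the re-rooting lemma. Both the base cost and the affected table entries can therefore be refreshed in $O(1)$.

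The main obstacle will be the re-rooting lemma. Establishing that switching the external face across a single edge alters the cost by an $O(1)$-computable quantity requires a careful combinatorial argument on top of \cref{th:fixed-embedding-cost-one}, because the three cases \textsf{P1}, \textsf{P2}, \textsf{P3} distinguish whether the external face is a 3-cycle of a particular flexibility profile; switching the external face to or from a 3-cycle can alter the applicable case and force an extra bend on a designated edge of that face. The analysis must trace how such an extra bend propagates through the cost-minimum representation and show that the adjustment is localized, exploiting that $G$ is cubic and triconnected — in a cubic graph each vertex is forced into a $90^\circ/90^\circ/180^\circ$ angle pattern, which tightly constrains the possible redistributions of bends — together with the rigidity of the embedding inherited from triconnectivity. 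Once this lemma is in place, the preprocessing, query, and update claims follow directly.
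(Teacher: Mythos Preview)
Your proposal has a genuine gap: the re-rooting lemma you identify as the obstacle is not just hard to prove --- its locality premise is false. By \cref{th:fixed-embedding-min-bend}, the cost with external face $f^*$ is
\[
c(G_{f^*}) \;=\; |D(G_{f^*})| + 4 - \min\{4,\, |D_{f^*}(G_{f^*})| + \flex(f^*)\},
\]
where $D(G_{f^*})$ is the set of non-degenerate non-intersecting \emph{demanding $3$-extrovert cycles} of $G_{f^*}$. When you move the external face across a dual edge from $f$ to an adjacent $f'$ sharing primal edge $e$, every $3$-extrovert cycle whose boundary contains $e$ flips to $3$-introvert (and conversely), cf.\ \cref{ob:extrovert-introvert}. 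In a triconnected cubic plane graph one can have $\Theta(n)$ nested $3$-extrovert cycles all passing through a single edge $e$; the change $|D(G_{f'})| - |D(G_{f})|$ then aggregates the demanding/non-demanding status of all of them, which is not recoverable from ``the lengths of $f$ and $f'$ and the flexibility values of a constant-size neighborhood of $e$''. Your appeal to \cref{th:fixed-embedding-cost-one} does not help here: that theorem controls the number of bends per edge in one fixed embedding, not how the global count of demanding cycles varies across embeddings.

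The paper's route is structurally different and sidesteps re-rooting on the dual altogether. It fixes a \emph{reference embedding} $G_f$, builds the inclusion tree $T_f$ of its non-degenerate $3$-extrovert cycles, pairs each such cycle $C$ with a canonical $3$-introvert partner $\phi(C)$ (\cref{le:3-extro-3-intro}), and precomputes for every node $C$ of $T_f$ the prefix counts $\extr(C)$, $\intr(C)$ of demanding $3$-extrovert/$3$-introvert cycles on the root-to-$C$ path, together with per-face counters $\tau(\cdot)$, $\delta_{\tt extr}(\cdot)$, $\delta_{\tt intr}(\cdot)$, $m(\cdot)$, $s(\cdot)$ and per-flexible-edge counters $\lambda_{\tt extr}(\cdot)$, $\lambda_{\tt intr}(\cdot)$. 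Each query for a face $f^*$ is then answered \emph{directly} by plugging these stored invariants into Equation~\ref{eq:fixed-embedding-cost} (\cref{le:demanding-algo,le:demanding-external-algo,le:flex-f-algo}), with no traversal and no incremental update of a cost table.

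This design choice is also what makes the $O(1)$ flexibility update work. Because a change of $\flex(e)$ within $\{1,2,3,4\}$ keeps $e$ flexible, the \textsc{$3$-Extrovert}/\textsc{$3$-Introvert Coloring Rules} are unaffected and no demanding flag, prefix count, or $\delta$/$\lambda$ value changes; only the two sums $s(f')$, $s(f'')$ for the faces incident to $e$ need to be adjusted (\cref{th:bend-counter-update}). By contrast, in your tabulated scheme the value $C[f^*]$ for a face $f^*$ not incident to $e$ can still depend on $\flex(e)$ through the $\coflex(\cdot)$ terms entering $\flex(f^*)$ (see the $m(f^*)\in\{1,2\}$ cases of \cref{th:fixed-embedding-min-bend}), so you would additionally have to prove that only $O(1)$ table entries actually change --- a nontrivial claim you have not established.
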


%\begin{theorem}\label{th:bend-counter}
%	Let $G$ be an $n$-vertex planar triconnected cubic graph, which may have flexible edges.
%	There exists a data structure that for each possible face $f^*$ of $G$ returns in $O(1)$ time the cost of a cost-minimum orthogonal representation of $G$ with $f^*$ as external face. This data structure can be computed in $O(n)$ time.
%\end{theorem}
%The data structure of \cref{th:bend-counter}, called \texttt{Bend-Counter}, is described in \cref{se:ref-embedding}.
%To handle this case we exploit a new data structure, called {\tt Bend-Counter}, that can be constructed in $O(n)$ time and that, for any face $f$ of the graph, returns in $O(1)$ time the number of bends of a bend-minimum orthogonal representation with $f$ as the external face.
%

We call \texttt{Bend-Counter} the data structure of \cref{th:bend-counter}. The \texttt{Bend-Counter} together with a `reusability principle' that allows us to take advantage of previous computations when re-rooting the SPQR-tree of a biconnected planar graph $G$, is used in the proof of the following.

%The proof of the next theorem (see \cref{se:labeling}) is based on the efficient computation of the values $b_{e}^{\sigma}(\mu)$ (for all edges $e$ of $G$). This is done by exploiting \cref{th:bend-counter}, some other technical results about Q-, P-, S-, and R-nodes (see \cref{le:shape-cost-set-inner-Q,le:shape-cost-set-P,le:shape-cost-set-S,le:shape-cost-set-inner-R,le:b(e)}), and a reusability principle by which each arc of $T$ is traversed $O(1)$ times over all re-rootings of $T$ (see \cref{le:reusability}). These ideas can find applications beyond the scope of this paper, for any generic algorithm that performs a bottom-up  visit of an SPQR-tree and computes a set of values associated with each node of the tree.

\begin{theorem}\label{th:key-result-2}
	Let $G$ be a biconnected planar $3$-graph.
	% distinct from $K_4$.
	%There is an $O(n)$-time algorithm that labels each edge $e$ of $G$ with the number $b(e)$ of bends of a bend-minimum $e$-constrained orthogonal representation of $G$ having at most one bend per edge, if such a representation exists.
    There exists an $O(n)$-time algorithm that labels every edge $e$ of $G$ with the number $b_e(G)$ of bends of an optimal $e$-constrained orthogonal representation of $G$, where $b_e(G)=\infty$ if such an optimal representation does not exist.
    %, if such a representation exists.
\end{theorem}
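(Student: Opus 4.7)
The plan is to decompose $G$ via its SPQR-tree $T$, perform a single bottom-up dynamic-programming pass that computes the values $b_{\rho_1}^\sigma(\mu)$ for an arbitrary initial Q-root $\rho_1$, and then slide the root along an Euler tour of $T$, handling each re-rooting step in $O(1)$ time. By \cref{th:shapes} it suffices to consider a constant number of representative shapes $\sigma$ for every node $\mu$, so the bottom-up traversal visits each $\mu$ and computes $b_{\rho_1}^\sigma(\mu)$ for each such $\sigma$. Q-nodes are trivial; for an S-node, \cref{le:k-spiral} gives a simple $O(|\skel(\mu)|)$-time combination of the children's values; for a P-node with two children the value is the minimum of $O(1)$ arrangements; for an R-node $\mu$, each virtual edge $e_\nu$ of $\skel(\mu)$ is assigned flexibility $\flex(e_\nu)\in\{0,1,2,3,4\}$ derived from the $O(1)$ values $b^\sigma(\nu)$ of its child $\nu$ (Property~\textsf{O3} of \cref{th:shapes} bounds spirality by $4$), and \cref{th:fixed-embedding-cost-one} produces $b_{\rho_1}^\sigma(\mu)$ in $O(|\skel(\mu)|)$ time. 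Summed over the entire tree, the initialization runs in $O(n)$ time, together with the $O(n)$-time construction of a \texttt{Bend-Counter} (\cref{th:bend-counter}) for every R-node skeleton. Then $b_{e_1}(G)=\min\{b_{\rho_1}^{\zerob}(\nu),b_{\rho_1}^{\oneb}(\nu)\}$, where $\nu$ is the root child.

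To compute $b_{e_i}(G)$ for the remaining edges I slide the root of $T$ along an Euler tour over the internal nodes, so that each internal tree edge is crossed exactly twice overall. The key re-rooting invariant is that $b^\sigma(\mu)$ depends only on the subtree hanging from $\mu$ in the currently rooted tree; hence, when the root moves across a single tree edge $(\rho,\rho')$ only $b^\sigma(\rho)$ and $b^\sigma(\rho')$ can change, because every other node's subtree is unaltered. For S- and P-node endpoints the recomputation is $O(1)$ via the same combinatorial formulas as in the initialization, since only one child is added to or removed from the child list. For an R-node endpoint a different virtual edge of its skeleton becomes the reference edge; the cost of the shape-constrained cost-minimum representation for the new external face is returned by the \texttt{Bend-Counter} in $O(1)$ time (Property~$(i)$ of \cref{th:bend-counter}), and any flexibility that changes as a result of an updated $b^\sigma$ value of the affected child is pushed into the \texttt{Bend-Counter} in $O(1)$ time (Property~$(ii)$). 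Whenever the root reaches the internal neighbor of a not-yet-processed Q-node $\rho_i$, I output $b_{e_i}(G)=\min\{b_{\rho_i}^{\zerob}(\nu),b_{\rho_i}^{\oneb}(\nu)\}$, with the convention that $\min\{\infty,\infty\}=\infty$, where $\nu$ is the root child with respect to $\rho_i$. Over the full Euler tour there are $O(n)$ single-edge moves, each costing $O(1)$, giving overall time $O(n)$.

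The main obstacle is the bookkeeping at R-nodes during re-rooting: when $\rho$ becomes a child of $\rho'$, a different virtual edge of $\skel(\rho)$ becomes its reference edge and, simultaneously, the virtual edge of $\skel(\rho')$ corresponding to $\rho$ must be assigned a new flexibility derived from the freshly recomputed $b^\sigma(\rho)$. Both operations must be performed in $O(1)$ time, which is precisely the functionality the \texttt{Bend-Counter} of \cref{th:bend-counter} provides; formalizing the correctness of this update requires arguing that the flexibility encoding faithfully captures the $\sigma$-shape cost contributions of the pertinent subtree, which follows from Property~\textsf{O3} and \cref{co:elbow-function}. A secondary subtlety is that Property~\textsf{O2} of \cref{th:shapes} distinguishes the shapes \L, \C (used when $\mu$ is the root child) from \D, \X (used when $\mu$ is an inner node); at the boundary of these two regimes the combination formula must be switched, but the switch is a constant-time table lookup over finitely many cases. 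After verifying these cases, the $O(n)$-time labeling of all edges follows by amortization over the Euler tour.
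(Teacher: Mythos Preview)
Your plan is essentially the paper's proof: an $O(n)$ bottom-up pass over $T_{\rho_1}$ (using \cref{th:fixed-embedding-cost-one} and the \texttt{Bend-Counter} of \cref{th:bend-counter} at R-nodes) followed by $O(1)$-per-step re-rooting, where the paper packages the re-rooting as a general ``Reusability Principle'' that stores two darts per tree edge (\cref{le:reusability}); this is equivalent to your Euler-tour walk, since both schemes compute exactly the $O(n)$ values $b_{\rho}(\mu)$ indexed by the parent direction of~$\mu$. Two details you should tighten: the paper does \emph{not} start from an arbitrary $\rho_1$ but chooses it so that the root child of $T_{\rho_1}$ is an S-node whenever $G$ is not triconnected (a \emph{good sequence}), precisely so that every R-node is an inner node in $T_{\rho_1}$ and therefore has its \texttt{Bend-Counter} built during the first pass (this is what makes \cref{le:labeling-R} applicable for all $i\ge 2$); and the $O(1)$ update at an S-node is not automatic from ``one child added or removed'' but relies on caching the aggregates $n^{Q}_{\rho_1}$, $n^{\d}_{\rho_1}$, and $\sum_j b^{\min}_{\rho_1}(\nu_j)$ during the first pass (see the proof of \cref{le:shape-cost-set-S}, Case~3).
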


Finally, we extend the ideas of \cref{th:key-result-2} to label the blocks of a 1-connected planar $3$-graph $G$.

\begin{theorem}\label{th:1-connected-labeling}
	Let $G$ be a 1-connected planar $3$-graph distinct from $K_4$.
	There exists an $O(n)$-time algorithm that labels each block $B$ of $G$ with the number $b_B(G)$ of bends of an optimal $B$-constrained orthogonal representation~of~$G$.
\end{theorem}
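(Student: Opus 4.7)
The plan is to reduce $b_B(G)$ to the biconnected case handled by \cref{th:key-result-2} and then to orchestrate a dynamic program on the block-cutvertex tree $\mathcal{T}$ of $G$. First I would observe that since $\Delta(G)\leq 3$ and $G\neq K_4$, no block of $G$ can be $K_4$: a cutvertex lying inside a $K_4$-block already uses all three of its incident edges, leaving none to connect to the rest of $G$. Hence every block is a biconnected planar $3$-graph distinct from $K_4$, and \cref{th:key-result-2} labels every edge $e\in E(B)$ with $b_e(B)$ in $O(|V(B)|)$ time per block, i.e., $O(n)$ time overall. I would then set $b_v(B)=\min\{b_e(B):e\in E(B),\,v\in e\}$ for every $v\in V(B)$ and build $\mathcal{T}$, also in linear time.

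For each oriented edge of $\mathcal{T}$ I would define $\mathrm{sub}(v,B)$ as the minimum total number of bends of an orthogonal representation of $B$ together with every block on the $B$-side of the cutvertex $v$, constrained so that $v$ lies on the external face of~$B$; symmetrically, $\mathrm{sub}(B,v)$ is the analogous quantity on the opposite subtree. These satisfy
\[
\mathrm{sub}(v,B)=b_v(B)+\sum_{w\in\mathrm{cuts}(B)\setminus\{v\}}\mathrm{sub}(B,w),\qquad
\mathrm{sub}(B,v)=\sum_{B'\ni v,\,B'\neq B}\mathrm{sub}(v,B'),
\]
and the target value is $b_B(G)=\min_{e\in E(B)}b_e(B)+\sum_{v\in\mathrm{cuts}(B)}\mathrm{sub}(B,v)$. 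The lower bound is immediate because bends are edge-local: any $B$-constrained representation of $G$ restricts on each block $B'$ to a representation whose ``parent'' cutvertex lies on its external face, hence contributing at least $b_v(B')$ bends. The upper bound is a composition argument at each cutvertex $v$: since $v$ has degree at most $3$ in $G$, it has degree $1$ or $2$ in each block containing it, so the vertex-angles of the surrounding block at $v$ sum to $360^\circ$ with at least one of them at least $180^\circ$, and the subtrees at $v$ fit inside that face with no extra bends, after a block-flip if necessary.

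Finally, all the $\mathrm{sub}(\cdot,\cdot)$ values for every oriented edge of $\mathcal{T}$ can be computed in $O(n)$ time by a standard all-roots tree DP: root $\mathcal{T}$ arbitrarily, fill the values along edges directed toward the root by a bottom-up pass, then fill the reverse-directed values by a top-down pass that subtracts a child's contribution from the parent's running sum. Reading off $b_B(G)$ for every block $B$ then takes $O(n)$ additional time. The main obstacle is the angular-compatibility claim in the upper bound: one must verify that some bend-minimum representation of each block places a sufficiently large vertex-angle at each cutvertex that receives a subtree. In the 3-graph regime this is forced automatically (a degree-$2$ vertex has angles $90+270$, $180+180$, or $270+90$, always leaving a usable face), and it is the only point beyond \cref{th:key-result-2} where additional combinatorial care is needed.
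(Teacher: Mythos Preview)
Your approach mirrors the paper's: label edges within each block via \cref{th:key-result-2}, run a DP on the block-cutvertex tree, and amortize over all rootings. The paper's ``Reusability Principle'' (\cref{le:reusability}) is exactly your bottom-up/top-down reroot DP, and the recurrences you write coincide with those of \cref{le:alg-1-connected-block}.

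The one gap is in the angular-compatibility step you correctly flag as the main obstacle. Your argument that a degree-$2$ vertex always has \emph{some} face with angle $\geq 180^\circ$ handles only the receiving side of the glue. When the child block $B'$ has degree~$2$ at the cutvertex $v$ (equivalently, the parent block at $v$ is a single edge), the composition also requires the \emph{external} face of the child's $v$-constrained representation to carry an angle $\geq 180^\circ$ at $v$; a block-flip does not change that angle. This is not automatic --- a bend-minimum $v$-constrained representation can have a $90^\circ$ external angle at $v$ --- and the paper resolves it via \cref{le:1-bend}(ii) (packaged as \cref{th:gd2018-enhanced-v}) by re-running \textsf{NoBendAlg} with $v$ chosen as a designated corner. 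You should invoke that result rather than asserting the angle condition comes for free.
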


We remark that the problem of computing orthogonal drawings of graphs with flexible edges is also studied by Bl\"asius et al.~\cite{DBLP:journals/algorithmica/BlasiusKRW14,DBLP:journals/comgeo/BlasiusLR16,DBLP:journals/talg/BlasiusRW16}, who however consider computational questions different from ours.

\myparagraph{Third ingredient: Drawing procedure.} The third ingredient is the drawing algorithm. When $G$ is biconnected, we use \cref{th:key-result-2} and choose an edge~$e$ such that $b_e(G)$ is minimum (the label of all the edges is~$\infty$ only when $G = K_4$). We then construct an optimal orthogonal representation of $G$ with $e$ on the external face by visiting the SPQR-tree of $G$ rooted at~$e$. We prove the following.
%Based on \cref{th:shapes} and on a refinement and improvement of the techniques of~\cite{DBLP:conf/gd/DidimoLP18} we can prove the following.

\begin{restatable}{theorem}{thGdEnhanced}\label{th:gd2018-enhanced}
	%\begin{theorem}\label{th:gd2018-enhanced}
	Let $G$ be an $n$-vertex biconnected planar $3$-graph distinct from $K_4$. Let $e$ be an edge of $G$ whose label $b_e(G)$ is minimum. There exists an $O(n)$-time algorithm that computes an optimal orthogonal representation of~$G$ with $b_e(G)$ bends.
	%\end{theorem}
\end{restatable}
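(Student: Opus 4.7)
The plan is to root the SPQR-tree $T$ of $G$ at the Q-node $\rho$ corresponding to the given edge $e$, producing $T_\rho$. A bottom-up traversal of $T_\rho$ first computes, for every node $\mu$ and every representative shape $\sigma$ of $\mu$ listed in \cref{th:shapes}, the minimum number $b_\rho^\sigma(\mu)$ of bends of an orthogonal representation of the pertinent graph $G_\rho(\mu)$ that has shape $\sigma$ and at most one bend per edge. Using the tools developed in \cref{se:labeling}, each Q-, S-, or P-node $\mu$ can be processed in time proportional to its number of children, while every R-node $\mu$ is processed in $O(|\skel(\mu)|)$ time by applying \cref{th:fixed-embedding-cost-one} to $\skel(\mu)$ with the flexibilities inherited from its S-children (which are at most $4$ by Property~\textsf{O3}). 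Since the total size of all skeletons of $T_\rho$ is $O(n)$, this pass takes $O(n)$ total time and yields at the root both the value $b_e(G)$ and the shape pair for $e$ and its root child that achieves it.

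Next, a top-down traversal of $T_\rho$ materializes an actual orthogonal representation of each pertinent graph, guided by the target shape fixed by its parent. At the root we pick shapes for $e$ (0-shape or 1-shape) and for the root child (L-shape or C-shape) whose combined cost is exactly $b_e(G)$, as recorded in the bottom-up pass. For each non-root node $\mu$ with chosen shape $\sigma$, the procedure is as follows: a Q-node emits its edge with the prescribed number of bends; an S-node distributes its chosen spirality among its children, where by Property~\textsf{T3} of \cref{le:spqr-tree-3-graph} the real edges incident to the poles can absorb leftover turns, so the distribution is done in time proportional to the number of children; a P-node has only two children by Property~\textsf{T1}, and their shapes realizing the prescribed turn numbers of $p_l$ and $p_r$ are chosen by a constant-size case analysis; an R-node re-invokes \cref{th:fixed-embedding-cost-one} on $\skel(\mu)$ with the same external face and flexibilities used in the bottom-up pass, which returns in $O(|\skel(\mu)|)$ time an embedding-preserving orthogonal representation of $\skel(\mu)$ whose cost equals $b_\rho^\sigma(\mu)$; the bends placed on each virtual edge then dictate the spirality to request from the corresponding child.

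The local orthogonal representations obtained at the nodes of $T_\rho$ are glued along their poles using, when needed, the alias vertices and alias edges at the poles of S-components, so that the turn counts seen by an S-component agree with those imposed by its parent. Properties~\textsf{H1}--\textsf{H2} hold piecewise and are preserved by the composition, hence the resulting $H$ is a valid orthogonal representation of $G$ with $e$ on the external face; by construction its total number of bends is $b_e(G)$. Summing over all nodes of $T_\rho$, both passes run in $O(n)$ time. The main obstacle is the R-node case: one must verify that for each representative shape $\sigma$ the external face and flexibility assignment used in the top-down pass truly reproduce the cost-minimum representation identified during the bottom-up pass. This is ensured by invoking \cref{th:fixed-embedding-cost-one} deterministically on the same input in both passes, so that the pre-recorded $b_\rho^\sigma(\mu)$ coincides with the cost of the representation we actually build, and correctness then propagates recursively along~$T_\rho$.
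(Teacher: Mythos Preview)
Your overall approach matches the paper's: compute the shape-cost sets bottom-up on $T_\rho$ using \cref{le:shape-cost-set-inner-Q,le:shape-cost-set-P,le:shape-cost-set-S,le:shape-cost-set-inner-R,le:labeling-S-P,le:labeling-R}, then assemble an optimal representation guided by these values. The paper actually builds, during the same bottom-up visit, an orthogonal representation $H^\sigma_\rho(\mu)$ for \emph{every} finite representative shape $\sigma$ rather than selecting a single shape top-down, but this is a minor structural variation and both versions are linear-time.

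There is, however, a genuine gap in your treatment of inner R-nodes. \cref{th:fixed-embedding-cost-one} returns a cost-minimum embedding-preserving representation $H$ of $\skel(\mu)$, but it promises nothing about the shape of $H \setminus e_\rho(\mu)$. Even if you set $\flex(e_\rho(\mu))=2$ (which your proposal does not specify), the resulting $H$ may place $0$, $1$, or $2$ bends on $e_\rho(\mu)$, and $H \setminus e_\rho(\mu)$ need not be \D-shaped. What the paper does---and what your proposal omits---is to post-process $H$ via \cref{le:d-shape-triconnected} (respectively \cref{le:x-shape-triconnected}): these lemmas reapply \textsf{NoBendAlg} to the rectilinear image of $H \setminus e_\rho(\mu)$ with carefully chosen designated corners, producing a representation with the prescribed \D-shape (respectively \X-shape) in which every edge other than $e_\rho(\mu)$ has no more bends than in $H$. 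Without this reshaping step, the skeleton representation you plug in does not have the turn numbers on $p_l,p_r$ that the parent S-node demanded, so the glued object fails Property~\textsf{H2} at the leg faces and is not an orthogonal representation. Your closing remark about invoking \cref{th:fixed-embedding-cost-one} deterministically in both passes guarantees that the \emph{cost} is reproduced, not that the \emph{shape} is, and therefore does not close this gap.
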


%If $G$ is 1-connected, we consider the block-cutvertex tree $\mathcal{T}$ of $G$. We use \cref{th:1-connected-labeling} and choose a block $B$ such that $b(B)$ is minimum. We root $\mathcal{T}$ at $B$ and compute an orthogonal representation of $B$ by using \cref{th:gd2018-enhanced}. We recursively attach the orthogonal representations of each other block at the cutvertex shared with its parent block. Let $c$ be the cutvertex in common between a block $B'$ and its parent block. The orthogonal drawing of $B'$ is computed with the constraint that $c$ is on the external face of $B'$ and its external angle is larger than $90^\circ$. This third ingredient improves techniques of~\cite{DBLP:conf/gd/DidimoLP18}, where two bends per edge were allowed and handling the 1-connected case required quadratic time.
For 1-connected graphs, we use the next theorem to suitably merge the orthogonal representations of the different blocks of the graph.

\begin{restatable}{theorem}{thgdenhanced-v}\label{th:gd2018-enhanced-v}
	Let $G$ be an $n$-vertex biconnected planar $3$-graph distinct from $K_4$. Let $v$ be a designated vertex of $G$ with $\deg(v) \leq 2$. There exists an $O(n)$-time algorithm that computes an optimal $v$-constrained orthogonal representation of $G$ whose external face has an angle larger than $90^\circ$ at $v$.
\end{restatable}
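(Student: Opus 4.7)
The plan is to adapt the algorithm of \cref{th:gd2018-enhanced} so that it additionally enforces an angle larger than $90^\circ$ at the designated vertex~$v$. If $\deg(v)=1$, then $G$ consists of a single edge (since $G$ is biconnected), and the theorem is immediate. So I assume $\deg(v)=2$, and write $e_1=(v,u_1)$ and $e_2=(v,u_2)$ for the two edges incident to~$v$. Because any orthogonal representation with $v$ on the external face has both $e_1$ and $e_2$ on that face, being $v$-constrained is equivalent to being $e_1$-constrained (and to being $e_2$-constrained); by \cref{th:key-result-2} this yields $b_v(G)=b_{e_1}(G)=b_{e_2}(G)$, and this common value can be computed in $O(n)$ time.

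I would then invoke \cref{th:gd2018-enhanced} with $e=e_1$ to produce, in $O(n)$ time, an optimal $e_1$-constrained orthogonal representation $H$. Since $v$ already lies on the external face of $H$, the remaining task is to ensure that the angle $\alpha_v$ of $H$ at $v$ in the external face exceeds $90^\circ$. To analyze $\alpha_v$, I would root the SPQR-tree $T$ of $G$ at the Q-node $\rho_1$ associated with $e_1$ and study the root child $\mu$. Vertex $v$ has inner degree $1$ in the pertinent graph $G_{\rho_1}(\mu)$, because its unique edge outside that graph is $e_1$; on the other hand, both poles of any P- or R-node have inner degree $2$. Consequently $\mu$ must be an S-node, and Property \textsf{T3} of \cref{le:spqr-tree-3-graph} forces $\skel(\mu)$ to be a cycle containing $e_1$ as the virtual reference edge and $e_2$ as a real edge incident to~$v$.

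In this structure the value $\alpha_v$ is determined by a constant number of choices, namely the spirality $k\in\{0,1,2,3,4\}$ of $H_{\rho_1}(\mu)$ (Property \textsf{O3} of \cref{th:shapes}), the bend configuration of $e_1$ (at most one bend, by Property \textsf{O1}), and the external-face angle at the opposite pole $u_1$. My plan is to show that among the optimal representations enumerated by \cref{th:gd2018-enhanced}, at least one achieves $\alpha_v>90^\circ$. The key structural fact is that at a degree-$2$ vertex two of the three admissible external-face angles are larger than $90^\circ$. If the combination chosen by \cref{th:gd2018-enhanced} happens to give $\alpha_v=90^\circ$, I would produce an equivalent optimal representation with $\alpha_v\in\{180^\circ,270^\circ\}$ by either (i) shifting the spirality of $\mu$ by $\pm 1$ while absorbing the extra turn through a local re-shape of one child of $\mu$ in $\skel(\mu)$ or through repositioning the bend of $e_1$, or (ii) re-running the algorithm with $e=e_2$ and exploiting the symmetric role played by $e_1$ and $e_2$ around the cycle $\skel(\mu)$.

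The main obstacle is the case analysis in the previous step: I must enumerate the few feasible tuples (spirality $k$, shape assigned to the inner S-components of $\skel(\mu)$, bend configuration of $e_1$) and, for each tuple yielding $\alpha_v=90^\circ$, exhibit a constant-size rewriting that raises $\alpha_v$ above $90^\circ$ while preserving the bend count. The hardest subcase is $b_v(G)=0$, since no bend on $e_1$ is available for redistribution and one must rely exclusively on the symmetry between $e_1$ and $e_2$ inside $\skel(\mu)$. Once this analysis is complete, the rewriting is a local modification near the root of $T_{\rho_1}$ that, using the labels precomputed by \cref{th:key-result-2}, can be carried out in $O(1)$ time; the overall bound of $O(n)$ therefore follows from \cref{th:key-result-2} and \cref{th:gd2018-enhanced}.
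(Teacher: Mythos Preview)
Your setup is correct: the degree-$1$ case is trivial, and for $\deg(v)=2$ you rightly observe that the root child $\mu$ of $T_{\rho_1}$ must be an S-node because $v$ has inner degree~$1$ in $G_{\rho_1}(\mu)$. However, everything after that point is unnecessary work. The paper's proof is much shorter: once $\mu$ is known to be an S-node, it simply observes that the drawing procedure of \cref{th:gd2018-enhanced}, which for an S-node root child follows the case analysis of \cref{le:labeling-S-P}, \emph{already} produces an orthogonal representation in which the angle at $v$ on the external face is either $180^\circ$ or $270^\circ$. There is nothing to fix, no rewriting, no symmetry argument between $e_1$ and $e_2$, and no ``hardest subcase''.

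The insight you are missing is that the concrete shapes built in \cref{le:labeling-S-P} for an S-node root child (see \cref{fi:s-root-child}, in particular panels (c)--(i), where the pole $v$ has inner degree one) all place $v$ with an external angle of $180^\circ$ or $270^\circ$; the $90^\circ$ case simply never arises in that construction. So instead of running \cref{th:gd2018-enhanced} and then entering a repair phase, just inspect how the algorithm assembles the root child in the S-node case and note that the angle condition at $v$ is a byproduct. Your plan is not wrong in principle, but it creates an obstacle (the case analysis you flag) that the direct route avoids entirely.
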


%%%%%%%
%%%%%%%
%%%%%%%
%%%%%%%
%%%%%%%
%\subsection{Proof of \cref{th:main}.}
%\medskip

\myparagraph{Proof of Theorem \ref{th:main}.}
	Since $G$ is distinct from $K_4$ and $\Delta(G) \leq 3$, every block of $G$ is also distinct from~$K_4$.
	To prove the theorem we use the BC-tree $\cal T$ of $G$.
	%We recall that $\cal T$ describes the decomposition of $G$ into its blocks, i.e., its biconnected components (see, e.g.,~\cite{dett-gd-99}). Each node of $\cal T$ is either a block or a cutvertex, and cutvertices are incident to the blocks they belong to.
	Since $\Delta(G) \leq 3$, non-trivial blocks are only adjacent to trivial blocks. Also, a cutvertex node of $\cal T$ of degree three in $\cal T$ is adjacent to three trivial-block nodes.
	We use \cref{th:1-connected-labeling} and choose a block $B$ such that $b_B(G)$ is minimum. We compute an optimal orthogonal representation $H$ of $B$ by using \cref{th:gd2018-enhanced}.
	%Clearly, each trivial block can be drawn as a straight edge.
	%Root $\cal T$ at $B$ and let $H$ be a bend-minimum orthogonal representation of $B$ with at most one bend per edge, which can be computed with the algorithm of \cref{th:gd2018-enhanced} applied on the edge $e$ of $B$ for which $b(e)$ is minimum.
	%This takes linear time in the size of $B$.
	Denote by $\mathcal{T}_B$ the BC-tree $\cal T$ of $G$ rooted at $B$. Let $v$ be a cutvertex of $G$ that belongs to $H$ and let $B_v$ be a child block of $v$ in $\mathcal{T}_B$. Denote by $H_v$ an optimal $v$-constrained orthogonal representation of $B_v$.
	Since $\deg(v) \leq 2$ in $B_v$, by \cref{th:gd2018-enhanced-v} we can assume that the angle at $v$ on the external face of $H_v$ is larger than $90^\circ$.
	Since $\deg(v) \leq 2$ in $H$, there is a face of $H$ where $v$ forms an angle larger than $90^\circ$. Also, if $\deg(v) = 2$ in $H$ then $B_v$ is a trivial block (i.e., a single edge) and if $\deg(v) = 1$ in $H$ then $B$ is a trivial block. Hence, $H_v$ can always be inserted into a face of $H$ where $v$ forms an angle larger than $90^\circ$, yielding an optimal orthogonal representation of the graph $B \cup B_v$.
	%Observe that $H_u$ can always be attached to $H$ to construct a new orthogonal representation with at most one bend per edge. Namely, $H_u$ can be inserted into the face of $H$ where $u$ forms an angle larger than $90^\circ$; such a face exists because $\deg(u) \leq 2$ in $H$ (clearly, if $B$ is a trivial block, $H$ has only the external face and $u$ has an angle of $360^\circ$ on this face).
	Any other block of $G$ can be added by recursively applying this procedure, so to get an optimal orthogonal representation of $G$ with $b_B(G)$ bends. We have that: $(i)$ computing the labels of all blocks of $G$ takes $O(n)$ time (\cref{th:1-connected-labeling}); $(ii)$ computing an optimal orthogonal representation for the root block $B$ takes linear time in the size of $B$ (\cref{th:gd2018-enhanced}); $(iii)$ computing an optimal $v$-constrained orthogonal representation of each block $B_v$ takes linear time in the size of $B_v$ (\cref{th:gd2018-enhanced-v}). Hence, the theorem follows. \qed
	
	\smallskip
	The remainder of the paper is devoted to proving the key ingredients for \cref{th:main}. Namely, \cref{se:thshapes} proves \cref{th:shapes}, \cref{se:labeling} proves \cref{th:key-result-2,th:1-connected-labeling}, and \cref{se:thgd2018-enhanced} proves \cref{th:gd2018-enhanced,th:gd2018-enhanced-v}. Since \cref{th:fixed-embedding-cost-one,th:bend-counter} focus on bend-minimum orthogonal drawings of triconnected cubic graphs, which is a topic
% rather technical
of independent interest (see, e.g., \cite{DBLP:journals/ieicet/RahmanEN05,DBLP:journals/jgaa/RahmanNN99,DBLP:conf/wg/RahmanN02}), we postpone their proofs to \cref{se:fixed-embedding-cost-one,se:ref-embedding,se:bend-counter}.

\section{First Ingredient: Representative~Shapes (\cref{th:shapes})}\label{se:thshapes}
%In this section we prove each of the properties stated by \cref{th:shapes}.
Given an orthogonal representation~$H$, we denote by~$\rect{H}$ the orthogonal representation obtained from $H$ by replacing each bend with a dummy vertex. $\rect{H}$ is called the \emph{rectilinear image} of $H$ and a dummy vertex in $\rect{H}$ is a \emph{bend-vertex}. By definition $b(\rect{H})=0$. The representation $H$ is also called the \emph{inverse} of $\rect{H}$.
If $w$ is a degree-$2$ vertex with neighbors $u$ and $v$, \emph{smoothing} $w$ is the reverse operation of an edge subdivision, i.e., it replaces the two edges $(u,w)$ and $(w,v)$ with the single edge $(u,v)$. If $H$ is an orthogonal representation of a graph $G$ and $\rect{G}$ is the underlying graph of $\rect{H}$, graph $G$ is obtained from $\rect{G}$ by smoothing all its bend-vertices.

\subsection{Proof of Property~\textsf{O1} of \cref{th:shapes}}\label{sse:O1}
We prove that any biconnected planar 3-graph $G$ distinct from~$K_4$ admits a bend-minimum orthogonal representation with at most one bend per edge.
To this aim, we show in \cref{le:1-bend} the following result: If $v$ is any arbitrarily chosen vertex of $G$, there always exists a $v$-constrained bend-minimum orthogonal representation $H$ of $G$ with at most one bend per edge. Clearly, the $v$-constrained orthogonal representation that has the minimum number of bends over all possible choices for the vertex $v$ is a bend-minimum orthogonal representation of $G$ that satisfies Property~\textsf{O1}.
As a preliminary step we prove the following.

\begin{lemma}\label{le:2-bends}
	Let $G$ be a biconnected planar $3$-graph and let $e$ be a designated edge of~$G$. There exists an $e$-constrained bend-minimum orthogonal representation of $G$ with at most two bends per edge.
\end{lemma}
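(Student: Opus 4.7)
The plan is to start from any $e$-constrained bend-minimum orthogonal representation $H$ of $G$ and show that, through a sequence of bend-preserving local modifications that leave $e$ on the external face, $H$ can be transformed into one with at most two bends per edge. Formally, among all $e$-constrained bend-minimum representations of $G$ I would pick $H$ to minimize the lexicographic vector of per-edge bend counts sorted in non-increasing order. Assuming, for contradiction, that some edge $f$ in this $H$ satisfies $b(f) \ge 3$, it suffices to exhibit a single bend-preserving modification of $H$ that strictly decreases this lexicographic vector.

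The core step is a local surgery on $f$. In the rectilinear image $\rect{H}$, the edge $f$ is a path through $b(f)$ degree-two bend-vertices, so any three consecutive such bend-vertices identify a sub-portion $\pi$ of $f$ made of three unit segments meeting at two bends. The two faces $F_c,F_r$ of $\rect{H}$ incident to $\pi$ are asymmetric: one sees $\pi$ as two $90^\circ$ angles (the ``convex'' side $F_c$) and the other as two $270^\circ$ angles (the ``reflex'' side $F_r$). I would argue that there is a zero-cost augmenting cycle in Tamassia's min-cost flow network for $\rect{H}$ passing through $F_c$ and $F_r$, along which a unit of angle-flow can be rerouted so as to smooth the two bends of $\pi$ while creating two compensating $90^\circ$ angles absorbed as vertex-angle adjustments along a short path on the boundary of $F_r$. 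The $3$-graph hypothesis is essential here: every internal vertex has degree at most three, so at each such vertex the slack $4-\deg(v) \ge 1$ in the angle budget allows a $90^\circ$ exchange between any two of its incident faces at no extra bend. The net effect is that $b(f)$ drops by two while the total number of bends, and the property that $e$ lies on the external face, are preserved — contradicting the lexicographic minimality of $H$.

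The principal obstacle is verifying that the zero-cost augmenting cycle is always available in every combinatorial configuration of $F_r$ and that the resulting angle labeling still satisfies Properties~\textsf{H1} and~\textsf{H2}. My approach is a case analysis on the sequence of vertex degrees around $F_r$ in the neighborhood of $\pi$: in each case, either a degree-2 or a degree-3 vertex on $\partial F_r$ hosts a face-angle that can be reduced by $90^\circ$, and the bend-minimality of $H$ implies, via an exchange argument, that such a vertex is reached before the augmenting path loops back, for otherwise $H$ itself could be improved. The external-face constraint is automatic as long as $f$ does not bound the external face; if it does, the symmetric move across $\pi$ toward $F_r$ (rather than $F_c$) yields the desired modification, since the external face then plays the role of $F_c$ and a $270^\circ$ angle on its boundary is shifted to a $90^\circ$ angle that is absorbed internally. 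Iterating this surgery removes every edge with three or more bends, completing the proof of the lemma.
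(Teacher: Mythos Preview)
Your central surgery step is internally inconsistent. You claim a \emph{zero-cost} augmenting cycle in Tamassia's network that smooths two bends of the edge (call it $g$, to avoid clashing with the paper's use of $f$ for faces) and compensates purely by ``vertex-angle adjustments,'' while also asserting that ``the total number of bends \ldots\ are preserved.'' In Tamassia's model bend-arcs carry cost~$1$ and vertex-arcs cost~$0$, so rerouting bend-flow entirely through vertices has strictly negative cost; if such a cycle existed, $H$ would not be bend-minimum. Conversely, if the cycle is truly zero-cost, the two removed bends must reappear on \emph{other} edges, and you never argue why those edges end up with fewer bends than $g$ had, so the claimed lexicographic decrease is unproven. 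Your slack argument ``$4-\deg(v)\ge 1$'' also ignores direction: a degree-$3$ vertex has exactly one $180^\circ$ angle, and if that angle already lies in $F_r$ no further unit can be pushed into $F_r$; the promised ``case analysis on the sequence of vertex degrees around $F_r$'' does not establish that the needed augmenting path exists. The external-face case is likewise only a sketch and does not address why Condition~$(i)$ of \cref{th:RN03} survives the move.

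The paper bypasses flow entirely and argues directly with the good-plane-graph characterization of \cref{th:RN03}. If $g$ has three bends, let $v_1,v_2,v_3$ be the corresponding bend-vertices in the rectilinear image $\overline{G}$. When $g$ is internal, smoothing $v_1$ still leaves every cycle through $g$ with the two degree-$2$ vertices $v_2,v_3$, so Conditions~$(ii)$--$(iii)$ of \cref{th:RN03} hold and the graph remains good; its no-bend representation yields an inverse with strictly fewer bends than $H$, contradicting minimality. Thus an internal edge in a bend-minimum representation simply \emph{cannot} carry three bends---no relocation is needed. When $g$ is external, either $C_o(\overline{G})$ has more than four degree-$2$ vertices (same contradiction after smoothing $v_1$), or it has exactly four, three of them on $g$; then one smooths $v_1$ and subdivides a bend-free external edge (which exists because $C_o(G)$ has at least three edges and at most four bends, three of them on $g$). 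The resulting good plane graph gives a representation with the same total number of bends but only two on~$g$.
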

\begin{proof}
	Let $H$ be an $e$-constrained bend-minimum orthogonal representation of $G$ and suppose that there is an edge $g$ of $H$ (possibly coincident with $e$) with at least three bends. Let $\rect{H}$ be the rectilinear image of $H$ and $\rect{G}$ its underlying plane graph. Since $b(\rect{H})=0$, $\rect{G}$ is a good plane graph. Denote by $v_1$, $v_2$, and $v_3$ three bend-vertices in $\rect{H}$ that correspond to three bends of $g$ in $H$. We distinguish between two cases.
	
	\smallskip\paragraph{Case 1: $g$ is an internal edge of $H$} Let $C$ be any cycle of $\rect{G}$ passing through $g$ and let $\rect{G'}$ be the plane graph obtained from $\rect{G}$ by smoothing $v_1$. Since $C$ contains three vertices of degree two in $\rect{G}$, $C$ satisfies Condition $(ii)$ or $(iii)$ of \cref{th:RN03} even in $\rect{G'}$. Hence,
 $\rect{G'}$ is still a good plane graph and there exists an (embedding-preserving) orthogonal representation $\rect{H'}$ of $\rect{G'}$ without bends; the inverse $H'$ of $\overline{H'}$ is a representation of $G$ such that $b(H') < b(H)$, contradicting the fact that $H$ is bend-minimum.
	
	\smallskip\paragraph{Case 2: $g$ is an external edge of $H$} If $C_o(\rect{G})$ contains more than four vertices of degree two, then we can smooth vertex $v_1$ and apply the same argument as above to contradict the bend-minimality of $H$ (note that, such a smoothing does not violate Condition $(i)$ of \cref{th:RN03}). Suppose vice versa that $C_o(\rect{G})$ contains exactly four vertices of degree two (three of them being $v_1$, $v_2$, and $v_3$). In this case, just smoothing $v_1$ violates Condition~$(i)$ of \cref{th:RN03}. However, we can smooth $v_1$ and subdivide an edge of $C_o(\rect{G}) \cap C_o(G)$; such an edge corresponds to an edge with no bend in $H$, and it exists because $C_o(G)$ has at least three edges and, by hypothesis, at most four bends, three of which on the same edge. The resulting plane graph $\rect{G''}$ still satisfies the three conditions of \cref{th:RN03} and admits a representation $\rect{H''}$ without bends; the inverse of $\rect{H''}$ is a bend-minimum orthogonal representation of $G$ where $g$ has two bends.%with at most two bends per edge.
	
	Given any $e$-constrained bend-minimum orthogonal representation of $G$, we perform the operations described by Cases 1 and 2 for every edge having more than 2 bends in order to obtain an $e$-constrained bend-minimum orthogonal representation of $G$ with at most two bends per edge.
\end{proof}

Note that, if $v$ is any vertex of $G$, \cref{le:2-bends} holds in particular for any edge $e$ incident to $v$. Thus, the following corollary immediately holds by iterating \cref{le:2-bends} over all edges incident to $v$ and by retaining the bend-minimum representation.

\begin{corollary}\label{co:2-bends}
	Let $G$ be a biconnected planar $3$-graph and let $v$ be any designated vertex of $G$. There exists a $v$-constrained bend-minimum orthogonal representation of $G$ with at most two bends per edge.
\end{corollary}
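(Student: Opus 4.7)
The plan is to derive the corollary directly from \cref{le:2-bends} by iterating over the (constantly many) edges incident to $v$ and retaining the best outcome. Because $G$ is a $3$-graph, the edges incident to $v$ form a set $\{e_1, \dots, e_k\}$ with $k \le 3$. For each such $e_i$, \cref{le:2-bends} provides an $e_i$-constrained bend-minimum orthogonal representation $H_i$ of $G$ with at most two bends per edge. Since $e_i$ lies on the external face of $H_i$ and $v$ is an endpoint of $e_i$, the vertex $v$ lies on the external face of $H_i$ as well; hence every $H_i$ is already a $v$-constrained orthogonal representation of $G$.

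The main step is then to let $H^\ast$ be a representation among $H_1, \dots, H_k$ with the fewest bends and to verify that $H^\ast$ attains the $v$-constrained minimum. Given any competing $v$-constrained orthogonal representation $H'$, biconnectivity of $G$ implies that the external face of $H'$ is bounded by a simple cycle passing through $v$, so at least one edge $e_j$ incident to $v$ appears on that boundary; consequently $H'$ is also $e_j$-constrained. The bend-minimality statement in \cref{le:2-bends} then yields $b(H_j) \le b(H')$, and the choice of $H^\ast$ yields $b(H^\ast) \le b(H_j)$, so $b(H^\ast) \le b(H')$. Since each $H_i$ has at most two bends per edge, so does $H^\ast$, completing the argument. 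No substantive obstacle is anticipated; the only minor check is the claim that every $v$-constrained representation of a biconnected graph has at least one edge incident to $v$ on the external face, which follows because the external boundary is a simple cycle through $v$.
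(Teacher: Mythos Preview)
Your proposal is correct and follows essentially the same approach as the paper: apply \cref{le:2-bends} to each edge incident to $v$ and retain the representation with the fewest bends. You have simply made explicit the verification that this minimum is indeed the $v$-constrained optimum, which the paper leaves implicit.
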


\noindent The next lemma will be used to prove the main result of this section; it is also of independent interest.
%It claims that for any biconnected graph with at least five vertices, a bend-minimum orthogonal representation with at most two bends per edge can be searched among those planar embeddings that have an external face with at least four vertices (i.e, the external face is not a 3-cycle), even under the constraint that a given vertex is designated to be on the external face.

\begin{lemma}\label{le:external-face}
	Let $G$ be a biconnected planar $3$-graph with $n \geq 5$ vertices and let $v$ be any designated vertex of $G$. There exists a $v$-constrained bend-minimum orthogonal representation of $G$ with at most two bends per edge and at least four vertices on the external face.
\end{lemma}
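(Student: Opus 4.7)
The starting point is to apply \cref{co:2-bends} to $G$ and $v$, obtaining a $v$-constrained bend-minimum orthogonal representation $H$ of $G$ with at most two bends per edge. If the external face of $H$ already contains at least four vertices, the lemma is immediate, so the proof reduces to the case in which the external face is a simple cycle of length~$3$, i.e., a triangle $T=\langle a,b,c\rangle$ with $v\in\{a,b,c\}$ (a 2-cycle is excluded because $G$ is simple and biconnected). In this case the goal is to produce, from $H$, another $v$-constrained bend-minimum representation $H''$ with at most two bends per edge whose external face has at least four vertices of~$G$.

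A first, purely combinatorial ingredient is to show that, in \emph{any} planar embedding of a biconnected planar $3$-graph with $n\ge 5$, every vertex lies on a face of size at least~$4$. This follows from two short degree-saturation arguments. If $\deg(v)=3$ and all three faces around $v$ were triangles, then its three neighbors would be pairwise adjacent, $\{v\}\cup N(v)$ would induce $K_4$ with all vertices saturated, and biconnectivity would force $G=K_4$ (so $n=4$), a contradiction. If $\deg(v)=2$ and both faces around $v$ were triangles, both faces would be bounded by $v$, its two neighbors, and the edge joining them, so $G$ itself would be a triangle and $n=3$, again a contradiction. Applied to the embedding of $H$, this yields a face $f^\ast$ of $G$ with $|f^\ast|\ge 4$ and $v\in f^\ast$.

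The main step is to realize the change of external face at the level of the rectilinear image. Since $b(\overline{H})=0$, the plane graph $\overline{G}$ underlying $\overline{H}$ is a good plane graph in the sense of \cref{th:RN03}, whose external cycle consists of the three edges of $T$ subdivided by bend-vertices. A biconnectivity argument shows that at most one of $a,b,c$ can have degree~$2$ in $G$ (otherwise some vertex of $T$ would be a cut vertex), so among the at least four degree-$2$ vertices required by Condition~$(i)$ of \cref{th:RN03} on the external cycle of $\overline{G}$, at least three are bend-vertices sitting on the edges of $T$. The plan is to change the external face of $\overline{G}$ to the boundary $\overline{f^\ast}$ of $f^\ast$ in $\overline{G}$, possibly relocating one of the surplus bend-vertices on $\overline{T}$ onto an edge of $\overline{f^\ast}$ so that the new external cycle has the four degree-$2$ vertices required by \cref{th:RN03}. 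Running \textsf{NoBendAlg} on the resulting good plane graph produces a rectilinear representation whose inverse is an orthogonal representation $H''$ of $G$ with $f^\ast$ on the external face, with at most two bends per edge, and with $b(H'')\le b(H)$; bend-minimality of $H$ then forces $b(H'')=b(H)$, which is what is needed.

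The main obstacle is the verification that the re-embedding step preserves the hypotheses of \cref{th:RN03} globally. Once $\overline{f^\ast}$ is declared external and a bend-vertex of $\overline{T}$ is possibly migrated onto an edge of $\overline{f^\ast}$, the collection of $2$-extrovert and $3$-extrovert cycles of $\overline{G}$ changes (two-legged and three-legged cycles swap their ``inside'' and ``outside''), and one must check that each new such cycle still has enough degree-$2$ vertices. Treating this by case analysis on whether $T$ is $2$-extrovert or $3$-extrovert (the only two possibilities, given the degree constraints above) and using the surplus bend-vertices on $\overline{T}$ to supply the missing degree-$2$ vertices to the newly critical cycles is the technical heart of the proof.
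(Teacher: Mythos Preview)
Your overall plan—pass from the triangular external face to a larger one containing $v$ and argue via \cref{th:RN03} that the same stock of bend-vertices, possibly relocated, still yields a good plane graph—is essentially the paper's strategy. Since any internal face of $G$ containing the triangle vertex $v$ must share a triangle edge with $T$, your ``change the external face to $f^\ast$'' coincides with what the paper describes as ``rerouting'' one edge of the triangle (so that the opposite vertex becomes internal). Your combinatorial observation that $v$ always lies on a face of size $\ge 4$ is a pleasant way to guarantee such a target exists; the paper achieves the same thing by arranging that the two interior edges at the endpoints of the rerouted edge do not share a vertex.

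The gap is that you defer the entire verification to an unstated case analysis, and the outline you give is too coarse to be completed as written. First, splitting on whether $T$ is $2$- or $3$-extrovert in the new embedding is not enough: the paper splits on the \emph{degrees} of the three triangle vertices and, crucially, on whether the degree-$2$ vertex (if any) is the designated $v$, because this governs which edge may be rerouted while keeping $v$ external. Second, the decisive newly critical cycle is not $T$ but the $2$-extrovert cycle through $u,v,w$ obtained by replacing the rerouted edge with the path around the rest of the graph; you never identify this cycle or argue that it acquires two degree-$2$ vertices. Third, the paper does not merely ``relocate one surplus bend-vertex from $\overline{T}$ onto $\overline{f^\ast}$'': it first rearranges bends \emph{within} $T$ so that the rerouted edge already carries two of them, and then moves bends from the remaining triangle edges onto the specific interior edges $e_u,e_v$ lying on the new $2$-extrovert cycle—checking explicitly that no edge ends up with more than two bends. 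None of this bookkeeping is present in your sketch, and it is precisely where the content of the lemma lies.
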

\begin{proof}	
	By \cref{co:2-bends} there exists a $v$-constrained bend-minimum orthogonal representation $H$ of $G$ with at most two bends per edge. Embed $G$ in such a way that its planar embedding coincides with the planar embedding of $H$.  If the external face of $G$ contains at least four vertices, the statement holds. Otherwise, the external boundary of $G$ is a 3-cycle with vertices $u$, $v$, $w$ and edges $e_{uv}$, $e_{vw}$, $e_{wu}$ ($v$ is the designated vertex). Let $\rect{G}$ be the underlying plane graph of the rectilinear image $\rect{H}$ of $H$. Recall that since $\rect{H}$ has no bends, $\rect{G}$ is a good plane graph.
	For an edge $e$ of $G$, denote by $\rect{e}$ the subdivision of $e$ with bend-vertices in $\rect{G}$  (if $e$ has no bend in $H$, then $e$ and $\rect{e}$ coincide). Since $G$ is biconnected and $n \geq 5$, at least two of its three external vertices have degree three. The following cases (up to vertex renaming) are possible:
	
	% \begin{figure}[tb]
	% 	\centering
	% 	\subfloat[Case 1]{\label{fi:external-face-1}\includegraphics[width=0.6\columnwidth]{}}
	% 	\hfil
	% 	\subfloat[Case 2]{\label{fi:external-face-2}\includegraphics[width=0.6\columnwidth]{}}
	% 	\hfil
	% 	\subfloat[Case 3]{\label{fi:external-face-3}\includegraphics[width=0.6\columnwidth]{}}
	% 	\caption{Illustration for the proof of \cref{le:external-face}.}\label{xxxfi:external-face}
	% \end{figure}

	\begin{figure}[tb]
		\centering
		\subfloat[Case 1]{\label{fi:external-face-1}\includegraphics[width=0.22\columnwidth,page=2]{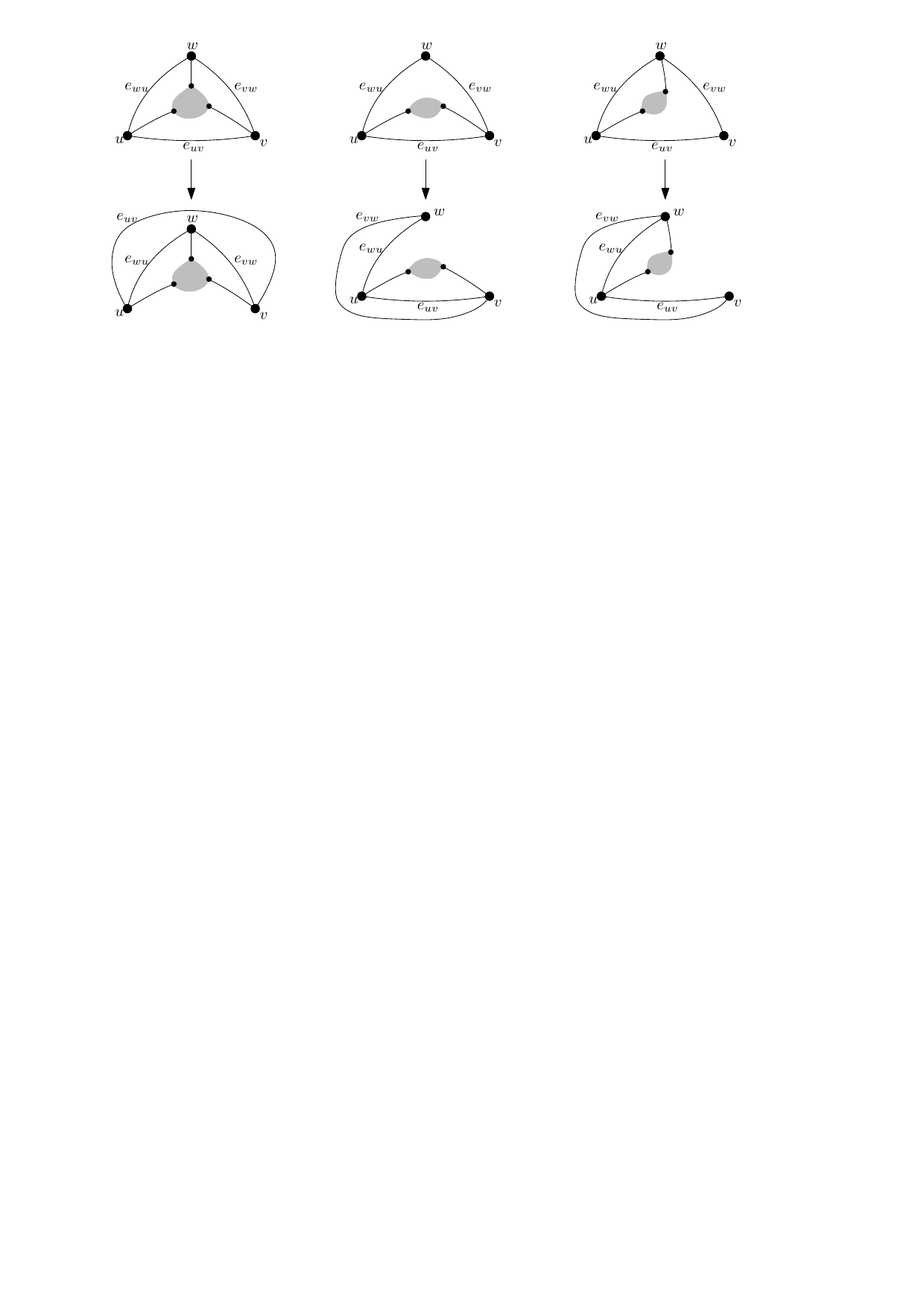}}
		\hfil
		\subfloat[Case 2]{\label{fi:external-face-2}\includegraphics[width=0.22\columnwidth,page=3]{external-face-123}}
		\hfil
		\subfloat[Case 3]{\label{fi:external-face-3}\includegraphics[width=0.22\columnwidth,page=4]{external-face-123}}
		\caption{Illustration for the proof of \cref{le:external-face}.}\label{fi:external-face}
	\end{figure}

	\smallskip\paragraph{Case 1: $\deg(u)=\deg(v)=\deg(w)=3$} Refer to \cref{fi:external-face-1}. In this case, $H$ has at least four bends on the external face, and hence two of them are on the same edge. Denote by $e_u$, $e_v$, and $e_w$ the internal edges incident to $u$, $v$, and $w$, respectively. Since $G$ is not $K_4$, at most two of $e_u$, $e_v$, and $e_w$ can share a vertex. Assume that $e_v$ does not share a vertex with $e_u$ (otherwise, we relabel the vertices exchanging the identity of $u$ and $w$).
	%Also, without loss of generality, we can assume that one of the two external edges incident to $v$, say $e_{uv}$, has two bends.
	Also, without loss of generality, we can assume that $e_{uv}$ has two bends.
	Indeed, if this is not the case, one between $e_{vw}$ or $e_{wu}$ has two bends and we can simply move one of these two bends from it to $e_{uv}$. Since $G$ cannot have 2-extrovert cycles that contain an external edge (because $\deg(u)=\deg(v)=\deg(w)=3$), this transformation guarantees that the resulting plane graph is still good. Let $\rect{G'}$ be the plane graph obtained from $\rect{G}$ by rerouting $\rect{e_{uv}}$ so that $w$ becomes an internal vertex.

	If the sum of the bends along $e_u$ and $e_v$ in $H$ is at least two, then $\rect{G'}$ is a good plane graph. Namely: The external face of $\rect{G'}$ still contains at least four vertices of degree two; the new 2-extrovert cycle passing through $u$, $v$, and $w$ contains at least two bend-vertices (e.g., those of $\rect{e_u}$ and $\rect{e_v}$); any other 2- or 3-extrovert cycle of $\rect{G'}$ is also a cycle in $\rect{G}$ and it contains in $\rect{G'}$ the same number of degree-2 vertices as in $\rect{G}$. Therefore, in this case $\rect{G'}$ has an embedding-preserving orthogonal representation $\rect{H'}$ without bends, and the inverse $H'$ of $\rect{H'}$ is a $v$-constrained bend-minimum orthogonal representation of $G$ with at most two bends per edge. This because $v$ is still on the external face of $H'$ and each edge of $G$ has the same number of bends in $H$ and in $H'$. Also, $H'$ has at least four vertices on the external face.
	
	Suppose vice versa that the total number of bends along $e_u$ and $e_v$ in $H$ is less than two. We move bends from $e_{wu}$ to $e_u$ and from $e_{vw}$ to $e_v$ until we achieve at least two bends in total along $e_u$ and $e_v$, and no more than two bends per edge. This is always possible because we know that $e_{wu}$ and $e_{vw}$ have in total at least two bends in $H$. Let $\rect{G''}$ be the plane graph obtained from $\rect{G'}$ after we have smoothed the bends along $\rect{e_{wu}}$ and $\rect{e_{vw}}$, and after we have subdivided the edges $\rect{e_u}$ and $\rect{e_v}$, according to the strategy above described.
	We claim that $\rect{G''}$ is still a good plane graph. In fact, $\rect{G''}$ has at least four degree-2 vertices on the external face (Condition~$(i)$ of \cref{th:RN03}). Furthermore, consider a cycle $C$ that passes through $\rect{e_{wu}}$. Clearly, $C$ also passes through $\rect{e_{uv}}$ or through $\rect{e_u}$: In the first case, $\rect{e_{uv}}$ has at least two degree-2 vertices; in the second case the sum of the degree-2 vertices along $\rect{e_u}$ and $\rect{e_{wu}}$ is the same as in $\rect{G'}$. It follows that $C$ satisfies Condition~$(ii)$ or Condition~$(iii)$ of \cref{th:RN03} also in $\rect{G''}$. An analogous argument applies for the cycles passing through $\rect{e_{vw}}$. Therefore $\rect{G''}$ admits a rectilinear orthogonal representation $\rect{H''}$ and, with the same arguments as in the previous case, the inverse $H''$ of $\rect{H''}$ is a $v$-constrained bend-minimum orthogonal representation of $G$ with at most two bends per edge and at least four vertices on the external face.

	\smallskip\paragraph{Case 2: $\deg(u)=\deg(v)=3$ and $\deg(w)=2$} Refer to \cref{fi:external-face-2}. In this case $H$ has at least three bends on the external face. Let $e_u$ and $e_v$ be the internal edges of $G$ incident to $u$ and to $v$, respectively. Let $\rect{G'}$ be the plane graph obtained from $\rect{G}$ by rerouting $\rect{e_{vw}}$ so that $u$ becomes internal. We have two~subcases:
	%\begin{itemize}
		%\item
		
	\smallskip\noindent $-$ Each of the external edges $e_{uv}$, $e_{vw}$, $e_{wu}$ of $G$ has a bend in $H$. If at least one among $e_u$ and $e_v$ has a bend in $H$, then $\rect{G'}$ remains a good plane graph and has a rectilinear representation $\rect{H'}$. The inverse $H'$ of $\rect{H'}$ is a $v$-constrained bend-minimum orthogonal representation of $G$ with at most two bends per edge and at least four external vertices. If neither $e_u$ nor $e_v$ has a bend in $H$, then, with the same argument as above, we can move a bend-vertex from $\rect{e_{uv}}$ to $e_u$, i.e., we smooth a bend-vertex from $\rect{e_{uv}}$ and subdivide $e_u$ with a bend-vertex. The resulting plane graph is still good, and from it we can get a $v$-constrained bend-minimum orthogonal representation of $G$ with at most two bends per edge and at least four external vertices.
		
		%\item
	\smallskip\noindent $-$ One of the external edges $e_{uv}$, $e_{vw}$, $e_{wu}$ of $G$ has no bend in $H$. In this case, at least one of these three edges has two bends. Suppose that $e_{uv}$ has two bends and $e_{wu}$ has no bend; the other cases can be handled similarly. If $e_u$ (resp. $e_v$) has no bend in $H$, we move one of the two bend-vertices of $\rect{e_{uv}}$ on $e_u$ (resp. $e_v$). As in the previous cases, this transformation guarantees that the resulting plane graph $\rect{G''}$ is good, and from it we get a $v$-constrained bend-minimum orthogonal representation of $G$ with at most two bends per edge and at least four external vertices.
	%\end{itemize}

     \smallskip\paragraph{Case 3: $\deg(u)=\deg(w)=3$ and $\deg(v)=2$} Refer to \cref{fi:external-face-3}. Also in this case $H$ must have at least three bends on the external face. Let $e_u$ and $e_w$ be the internal edges of $G$ incident to $u$ and to $w$, respectively. Consider again the plane graph $\rect{G'}$ obtained from $\rect{G}$ by rerouting $\rect{e_{vw}}$ in such a way that $u$ becomes internal. The analysis follows the line of Case 2, where the roles of $v$ and $w$ are exchanged.
\end{proof}

The next steps towards \cref{le:1-bend} are two technical results, namely \cref{le:internal-edge} and \cref{le:external-edge}.
They are used to prove that, given a $v$-constrained bend-minimum orthogonal representation of a biconnected $3$-graph with at least five vertices and at most \emph{two} bends per edge (which exists by \cref{co:2-bends}), we can iteratively transform it into a new $v$-constrained bend-minimum orthogonal representation with at most \emph{one} bend per edge. The transformation of \cref{le:internal-edge} is used to remove bends from internal edges, while \cref{le:external-edge} is used to remove bends from external edges.

\begin{lemma}\label{le:internal-edge}
	Let $G$ be a biconnected planar $3$-graph with $n \geq 5$ vertices, $v$ be a designated vertex of $G$, and $H$ be a $v$-constrained bend-minimum orthogonal representation of $G$ with at most two bends per edge and at least four vertices on the external face. If $e$ is an \emph{internal} edge of $H$ with two bends, there exists a $v$-constrained bend-minimum orthogonal representation $H^*$ of $G$ such that: (a) $e$ has at most one bend in~$H^*$; (b) every edge $e' \neq e$ has at most two bends in $H^*$, and $e'$ has two bends in $H^*$ only if it has two bends in $H$; (c) $H^*$ has at least four vertices on the external face.
\end{lemma}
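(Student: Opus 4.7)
I work with the rectilinear image $\overline{H}$, whose underlying plane graph $\overline{G}$ is good by Theorem~\ref{th:RN03}. Let $v_1, v_2$ be the two bend-vertices subdividing $e$ in $\overline{G}$. First I attempt to smooth $v_1$ to obtain a plane graph $\overline{G}'$. Since $e$ is internal, the external boundary is unchanged and Condition~$(i)$ of Theorem~\ref{th:RN03} is inherited from $\overline{G}$. Every cycle through $v_1$ must also contain $v_2$ (since $v_1$ has degree two with both neighbors on $\overline{e}$), and $v_2$ remains degree-two in $\overline{G}'$, so Condition~$(iii)$ is preserved. Only Condition~$(ii)$ can fail, and only for 2-extrovert cycles $C$ through $\overline{e}$ whose only degree-2 vertices in $\overline{G}$ are $v_1$ and $v_2$.

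If no such \emph{bad} cycle exists, then $\overline{G}'$ is good, and \textsf{NoBendAlg} applied to $\overline{G}'$ produces a bend-free representation whose inverse has $b(H)-1$ bends, contradicting the bend-minimality of $H$. Hence at least one bad 2-extrovert cycle $C$ must exist, and in any such $C$ every vertex other than $v_1, v_2$ has degree three in $\overline{G}$, which forces every edge of the underlying $G$-cycle $\hat C$ different from $e$ to be bend-free in $H$. The repair step is to move one bend from $\overline{e}$ onto such a bend-free edge: pick $e' \in \hat C \setminus \{e\}$ and let $\overline{G}''$ be obtained from $\overline{G}$ by smoothing $v_1$ and subdividing $e'$ with a single new bend-vertex. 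The new bend-vertex restores two degree-2 vertices on $C$; it cannot violate Conditions~$(ii)$ or~$(iii)$ for any other cycle, since adding a degree-2 vertex never decreases a cycle's degree-2 count; and Condition~$(i)$ is either unaffected (if $e'$ is internal) or strengthened (if $e'$ lies on the outer boundary).

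Once $\overline{G}''$ is good, \textsf{NoBendAlg} returns a bend-free representation whose inverse $H^*$ satisfies property~(a) because $e$ now has only one bend, property~(b) because $e'$ had zero bends in $H$ and gains exactly one (so no new two-bend edge is created and no previously-one-bend edge is promoted to two bends), and property~(c) because the external vertex set of $G$ is untouched. The total number of bends is preserved, so $H^*$ is still $v$-constrained bend-minimum.

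The main obstacle is the case in which several bad 2-extrovert cycles share $\overline{e}$, because a single subdivision of $e'$ only restores the degree-2 count of those bad cycles that contain $e'$. I plan to exploit the rigid structure of bad cycles---each consisting of the three edges of $\overline{e}$ together with a bend-free $G$-path between the poles $u,v$ of $e$ through degree-three vertices---to show either that a common bend-free edge exists on all of them (typically a non-$e$ edge incident to $u$ or $v$, forced by the degree sequence at the poles), or that the coexistence of two edge-disjoint bad cycles outside $\overline{e}$ would itself contradict the bend-minimality of $H$ via an alternative rerouting that saves a bend.
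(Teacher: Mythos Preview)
Your plan is sound for the single-bad-cycle case and matches the paper's Case~1 (and essentially Case~2). The gap is exactly where you anticipate it: two bad 2-extrovert cycles $C, C'$ through $e$ that share \emph{only} $e$. Neither of your fallback strategies works here. For the first, at each pole $u$ there are exactly two non-$e$ edges, and since $C \cap C' = \{e\}$ the two cycles must take different ones at $u$ (and likewise at $v$); so there is no common bend-free edge at the poles, or anywhere else. For the second, this configuration does \emph{not} contradict bend-minimality: both bends on $e$ are genuinely required in the current embedding, and any redistribution preserves the total count. Smoothing $v_1$ and subdividing a single edge of $C\setminus\{e\}$ leaves $C'$ with only $v_2$, violating Condition~(ii).

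The paper resolves this by \emph{changing the embedding}. In the nested sub-case ($C' \subset G(C)$, Case~3.1) it flips $G(C')$ at the two leg vertices of $C'$; this replaces the old 2-extrovert cycle $C$ by a new one $C'' = (C \cup C') \setminus \{e\}$, and now the two edges $g, g'$ of $C'$ adjacent to $e$ lie on \emph{both} $C'$ and $C''$. Smoothing both $v_1, v_2$ and subdividing $g, g'$ once each then gives every relevant 2-extrovert cycle its two degree-2 vertices with the same total bend count. In the interlaced sub-case (Case~3.2), $(C \cup C') \setminus \{e\}$ is already the external boundary; a counting argument shows that all its degree-2 vertices lie on one of the two arcs, so the other arc furnishes a free edge that works for both cycles. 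Both resolutions go beyond the smooth-one/subdivide-one template of your plan, and the flip in particular is the key idea you are missing.
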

\begin{proof}
	As before, given the rectilinear image  $\rect{H}$ of $H$, we denote by $\rect{G}$ the underlying graph of $\rect{H}$. To simplify the notation, if $C$ is a cycle in $G$, we also denote by $C$ the subdivision of $C$ in $\rect{G}$.
	Note that a bend along $C$ in $H$ is a degree-2 vertex in $\rect{G}$.
	
	Let $v_1$ and $v_2$ be the bend-vertices of $H$ associated with the bends of $e$.  By \cref{th:RN03} and since $H$ has the minimum number of bends, $e$ necessarily belongs to a 2-extrovert cycle $C$ of $H$. Indeed, if $e$ does not belong to a 2-extrovert cycle, then we can smooth from the underlying plane graph $\rect{G}$ of $\rect{H}$ one of $v_1$ and $v_2$. The resulting plane graph $\rect{G'}$ is a good plane graph and then it admits an orthogonal representation $\rect{H'}$ without bends; the inverse $H'$ of $\rect{H'}$ is an orthogonal representation of $G$ with less bends than $H$, a contradiction. We call \emph{free edge} an edge of $G$ without bends in $H$. We distinguish between three cases:
	
	\smallskip\paragraph{Case 1: $C$ does not share $e$ with other 2-extrovert cycles of $H$} All edges of $C$ distinct from $e$ are free in $H$, or else we could remove one of the bends from $e$ contradicting the fact that $H$ is bend-minimum. Let $g$ be any free edge of $C$. Consider the plane graph $\rect{G^*}$ obtained from $\rect{G}$ by smoothing $v_1$ and by subdividing $g$ with a new (bend) vertex. $\rect{G^*}$ is a good plane graph and thus it admits an orthogonal representation $\rect{H^*}$ without bends. The inverse $H^*$ of $\rect{H^*}$ is an orthogonal representation of $G$ that satisfies Properties~(a) and~(b). Also $b(H^*)=b(H)$, thus $H^*$ is bend-minimum. Finally, since $H^*$ has the same planar embedding as $H$, $H^*$ is $v$-constrained and Property~(c) is also guaranteed.
	
	\smallskip\paragraph{Case 2: $C$ shares $e$ and at least another edge with exactly one 2-extrovert cycle $C'$ of $H$} $C$ and $C'$ must share a free edge $g$, otherwise, as in the previous case, we could remove one of the bends from $e$ contradicting the fact that $H$ is bend-minimum.
	As above, $H^*$ is obtained from $H$ by removing a bend from $e$ and by adding a bend along $g$.

	\smallskip\paragraph{Case 3: $C$ shares only $e$ with exactly one 2-extrovert cycle $C'$ of $H$} There are two subcases: $C$ and $C'$ are \emph{nested} if either $C \in G(C')$ or $C' \in G(C)$ (see \cref{fi:nested}); otherwise they are \emph{interlaced} (see \cref{fi:interlaced}).
		%\begin{itemize}
			%\item
			
			\begin{itemize}
				\item {\em Case~3.1: $C$ and $C'$ are nested}. Without loss of generality, assume that $C'$ is inside $C$ (the argument is symmetric in the opposite case). Let $g$ and $g'$ be the two edges of $C'$ adjacent to $e$. Note that $g$ cannot belong to other $2$-extrovert or $3$-extrovert cycles other than $C'$. In fact, any cycle passing through $g$ also passes through $g'$. Either this cycle coincides with $C'$ or it has $e$ as an external chord. Therefore, since $g$ and $g'$ are not on the external face of $H$, and since $H$ is bend-minimum, $g$ and $g'$ are free in $H$. Consider the plane graph $\rect{G''}$ obtained from $\rect{G}$ by flipping $C'$ at its leg vertices and let $C''$ be the new 2-extrovert cycle that has $C'$ inside it; see \cref{fi:nested-1}. $C''$ consists of the edges of $(C \cup C') \setminus \{e\}$.
				The other 2-extrovert and 3-extrovert cycles of $\rect{G''}$ stay the same as in $\rect{G}$. Consider the plane graph $\rect{G^*}$ obtained from $\rect{G''}$ by smoothing the two bend-vertices $v_1$ and $v_2$, and by subdividing both $g$ and $g'$ with a new (bend) vertex. Since $\rect{G^*}$ has two bend-vertices along the path shared by $C'$ and $C''$, and the rest of the 2-extrovert and 3-extrovert cycles are not changed with respect to $\rect{G''}$, $\rect{G^*}$ is a good plane graph and it has a rectilinear representation $\rect{H^*}$. The inverse $H^*$ of $\rect{H^*}$ is a bend-minimum orthogonal representation of $G$ that satisfies Properties~(a) and~(b). Finally, all the vertices of $H$ that were on the external face remain on the external face of $H^*$. Therefore, $H^*$ is also $v$-constrained and Property~(c) is guaranteed.
			
			\begin{figure}[tb]
				\centering
				\subfloat[]{\label{fi:nested}\includegraphics[width=0.2\columnwidth]{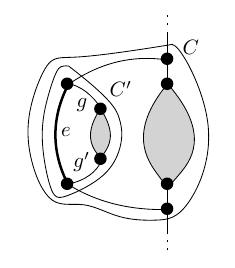}}
				\hfil
				\subfloat[]{\label{fi:nested-1}\includegraphics[width=0.2\columnwidth]{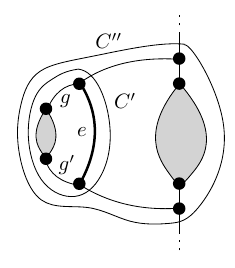}}
				\hfil
				\subfloat[]{\label{fi:interlaced}\includegraphics[width=0.2\columnwidth]{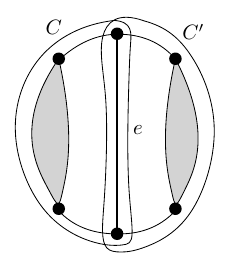}}
				\hfil
				\subfloat[]{\label{fi:multiple-nested}\includegraphics[width=0.2\columnwidth]{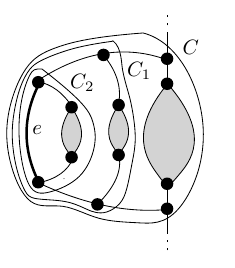}}
				\caption{Illustration for the proof of \cref{le:internal-edge}. (a) Two nested 2-extrovert cycles $C$ and $C'$ that share edge $e$ only. (b) Flipping $C'$ at its leg vertices. (c) Two interlaced 2-extrovert cycles $C$ and $C'$ that share edge $e$ only; the external face of the graph consists of $(C \cup C') \setminus \{e\}$. (d) Two 2-extrovert cycles $C_1$ and $C_2$ that share $e$ (and possibly some other edges) with $C$.}\label{fi:internal}
			\end{figure}
		
			%\item
			\item {\em Case~3.2: $C$ and $C'$ are interlaced}. The external face of $H$ is formed by $(C \cup C') \setminus \{e\}$. Let $\rect{G}$ be the underlying plane graph of $\rect{H}$. By Condition~(i) of \cref{th:RN03}, $\rect{G}$ has at least four degree-2 vertices on its external face (which can be real or bend-vertices). We claim that such degree-2 vertices all belong to either $C \cap C_o(\rect{G})$ or $C' \cap C_o(\rect{G})$. Indeed, if both $C \cap C_o(\rect{G})$ and $C' \cap C_o(\rect{G})$
			contain a degree-2 vertex, the plane graph obtained from $\rect{G}$ by smoothing one of the bend-vertices associated with the bends of $e$ would still be good, contradicting the fact that $H$ is bend-minimum. Without loss of generality assume that $C \cap C_o(\rect{G})$ has no degree-2 vertices in $\rect{G}$, which implies that all edges of $C \cap C_o(\rect{G})$ are free edges in $H$. If we smooth from $\rect{G}$ a bend-vertex associated with a bend of $e$ and subdivide a free edge of $C$ with a new (bend) vertex, we obtain a good plane graph $\rect{G^*}$, which admits a rectilinear representation $\rect{H^*}$. The inverse $H^*$ of $\rect{H^*}$ is a bend-minimum orthogonal representation of $G$ that satisfies Properties (a) and (b). Also, since $H^*$ and $H$ have the same planar embedding, $H^*$ is still $v$-constrained and Property~(c) holds.
		\end{itemize}
		
	\smallskip\paragraph{Case 4: $C$ shares $e$, and possibly some other edges, with more than one 2-extrovert cycle of $H$} Let $C_1, \dots, C_j$ $(j \geq 2)$ be the 2-extrovert cycles that share $e$ (and possibly some other edges) with $C$. See for example \cref{fi:multiple-nested} where $j=2$. In this case, any two cycles $C', C'' \in \{C, C_1, \dots, C_j\}$ are nested.
	Without loss of generality, assume that $C$ is the most external cycle and that $C_i$ is inside $C_{i-1}$ $(i=2, \dots, j)$. Let $p$ be the path shared by $C$ and $C_j$. Note that $p$ also belongs to $C_i$, for any $i \in \{1, \dots, j-1\}$.
	There are two subcases:
	\begin{itemize}
		\item {\em Case~4.1: $p$ contains $e$ and at least another edge}. In this case apply the same strategy as in Case~2, where $C_j$ plays the role of $C'$.
		\item {\em Case~4.2: $p$ coincides with $e$}. In this case apply the same strategy as in Case~3.1, where $C_j$ plays the role of $C'$ and $C'$ is inside $C$.
	\end{itemize}
\end{proof}

\begin{lemma}\label{le:external-edge}
	Let $G$ be a biconnected planar $3$-graph with $n \geq 5$ vertices, $v$ be a designated vertex of $G$, and $H$ be a $v$-constrained bend-minimum orthogonal representation of $G$ with at most two bends per edge and at least four vertices on the external face. If $e$ is an \emph{external} edge of $H$ with two bends, there exists a $v$-constrained bend-minimum orthogonal representation $H^*$ of $G$ such that: (a) $e$ has at most one bend in $H^*$; (b) every edge $e' \neq e$ has at most two bends in $H^*$, and $e'$ has two bends in $H^*$ only if it has two bends in $H$; (c) $H^*$ has at least four vertices on the external face.
\end{lemma}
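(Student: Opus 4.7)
The proof adapts the strategy of \cref{le:internal-edge}, with the added complication that the external edge $e$ lies on the external cycle $C_o(\rect{G})$ of the rectilinear image $\rect{H}$. As before, let $\rect{G}$ be the underlying plane graph of $\rect{H}$; by \cref{th:RN03}, $\rect{G}$ is a good plane graph. Let $v_1, v_2$ be the bend-vertices of $\rect{H}$ associated with the two bends of $e$. Because $H$ is bend-minimum, smoothing $v_1$ (or $v_2$) in $\rect{G}$ must violate condition (i), (ii), or (iii) of \cref{th:RN03}. The hypothesis that $H$ has at least four vertices on the external face is exploited throughout: it guarantees that $C_o(G)$ contains at least three edges other than $e$, leaving room to relocate bend-vertices on the external boundary.

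I would distinguish cases parallel to Cases 1--4 of \cref{le:internal-edge}, with the extra role of $C_o(\rect{G})$ tracked alongside the 2-extrovert and 3-extrovert cycles through $e$.

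\emph{Case 1: $e$ lies on no 2-extrovert cycle of $\rect{G}$.} Then the obstruction to smoothing $v_1$ is condition (i) on $C_o(\rect{G})$, i.e., $C_o(\rect{G})$ has exactly four degree-2 vertices. Using bend-minimality of $H$ together with the assumption of at most two bends per edge, I find an edge $g$ of $C_o(G) \setminus \{e\}$ whose subdivision (with a new bend-vertex) together with the smoothing of $v_1$ yields a good plane graph $\rect{G^*}$. Inverting a bend-free rectilinear representation of $\rect{G^*}$ gives the desired $H^*$.

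\emph{Case 2: $e$ lies on a 2-extrovert cycle $C$, and $C$ is the unique such cycle through $e$.} I split into subcases according to whether $C$ shares with $C_o(\rect{G})$ just $e$, or a longer path, mirroring the nested/interlaced subcases of \cref{le:internal-edge}. In each subcase, I either move one of the two bends from $e$ to a free edge of $C$ (or of $C$ joined with another 2-extrovert cycle that interacts with $C_o(\rect{G})$), or flip $C$ at its leg vertices and redistribute the two bend-vertices $v_1, v_2$ of $e$ onto two free edges incident to the leg vertices of $C$. In every case, after the transformation I verify that: (a) the new external cycle still has at least four degree-2 vertices (condition (i)), (b) every 2-extrovert cycle keeps at least two degree-2 vertices, and (c) every 3-extrovert cycle keeps at least one. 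The inverse of the resulting bend-free rectilinear representation is $H^*$.

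\emph{Case 3: $e$ is shared by multiple 2-extrovert cycles.} As in Case 4 of \cref{le:internal-edge}, planarity and $\Delta(G)\le 3$ imply that these cycles are nested; I apply the strategy of Case 2 to the innermost such cycle.

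\textbf{Main obstacle.} The principal difficulty is bookkeeping the contribution of $v_1$ and $v_2$ to condition (i) of \cref{th:RN03}: since $e$ is external, the bend-vertices of $e$ participate in the degree-2 count of $C_o(\rect{G})$, so smoothing either of them in isolation can destroy goodness even when no 2-extrovert or 3-extrovert cycle is tight. The hypothesis that $H$ has at least four external vertices is exactly what allows the compensation step to succeed: it guarantees the availability of an external edge (or of a free external path traversed by a flipped cycle $C$) on which a new bend-vertex can be placed, without violating condition (i) and without exceeding the two-bends-per-edge bound. Once goodness of $\rect{G^*}$ is established, the final properties (a), (b), (c) of $H^*$ follow because the planar embedding is preserved (so $v$ remains on the external face and the external face still has at least four vertices), and the total bend count equals that of $H$ by construction.
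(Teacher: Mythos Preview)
Your plan is essentially on target and uses the same two core moves as the paper---find a free external edge and shift a bend onto it, or flip a 2-extrovert cycle through $e$ and redistribute the two bends onto its two edges adjacent to $e$---but your case structure is more elaborate than necessary.

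The paper does \emph{not} mirror the four-case analysis of \cref{le:internal-edge}. It has only two cases: (1) $e$ belongs to no 2-extrovert cycle, and (2) $e$ belongs to some 2-extrovert cycle $C$. In Case~2 there is no uniqueness assumption on $C$; the paper simply picks one and splits on whether $C$ has only $e$ on the external face (flip $C$ and move both bends onto the two free internal edges $g,g'$ of $C$ adjacent to $e$) or has another external edge (find a free external edge, arguing as in Case~1). This makes your Case~3 superfluous: there is no need to analyse how several 2-extrovert cycles through $e$ interact, so the nested/interlaced machinery from the internal lemma never enters. Your Case~2 subcases (``$C$ shares with $C_o$ just $e$'' vs.\ ``a longer path'') do coincide with the paper's 2.1/2.2, but labelling them as ``nested/interlaced'' is misleading, since $C_o$ is not a 2-extrovert cycle and the interlaced configuration of \cref{le:internal-edge} cannot occur when $e$ is external (it would force $e$ off the external boundary).

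In short: your argument would go through, but you can drop the uniqueness hypothesis in Case~2, delete Case~3 entirely, and replace the nested/interlaced language with the simpler dichotomy ``$C\cap C_o(G)=\{e\}$'' versus ``$C\cap C_o(G)\supsetneq\{e\}$''.
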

\begin{proof}
	As in the proof of \cref{le:internal-edge}, a \emph{free edge} of $H$ is an edge without bends. Let $\overline{H}$ be the rectilinear image of $H$ and let $v_1$ and $v_2$ be the bend-vertices of $H$ associated with the bends of $e$. Since $\overline{H}$ has no bends, its underlying graph $\overline{G}$ is a good plane graph. For simplicity, if $C$ is a cycle of $G$ we also call $C$ the cycle of $\overline{G}$ that corresponds to the subdivision of $C$ in $\overline{G}$.
	Note that a bend along $C$ in $H$ is a degree-2 vertex in $\rect{G}$.
	We distinguish between two cases:
	
	\smallskip\paragraph{Case 1: $e$ does not belong to a 2-extrovert cycle of $H$} We claim that there is at least a free edge on the external face of $H$. Suppose by contradiction that this is not true. By hypothesis $H$ has at least four external edges; if all these edges were not free, then there would be at least five bends on the external boundary of $H$. Smoothing $v_1$ from $\rect{G}$ we get a resulting plane graph $\rect{G'}$ that is still a good plane graph, because by hypothesis $e$ does not belong to a 2-extrovert cycle of $H$ and because we still have four vertices of degree two on the external face of $\overline{G'}$. This would imply that $\rect{G'}$ has an orthogonal representation $\rect{H'}$ without bends, and the inverse $H'$ of $\rect{H'}$ has less bends than $H$, a contradiction. Let $g$ be a free edge on the external face of $H$. Moving a bend from $e$ to $g$ we get the desired $v$-constrained orthogonal representation $H^*$.
	
	\begin{figure}[tb]
		\centering
		\subfloat[]{\label{fi:external-edge-1}\includegraphics[width=0.2\columnwidth]{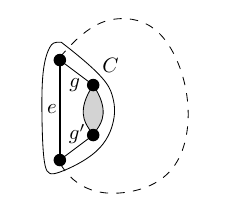}}
		\hfil
		\subfloat[]{\label{fi:external-edge-2}\includegraphics[width=0.2\columnwidth]{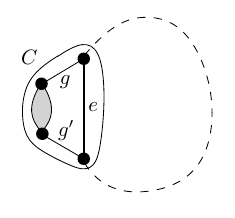}}
		\caption{Illustration for {\em Case~2.1} in the proof of \cref{le:external-edge}. (a) A 2-extrovert cycle $C$ that shares exactly one edge $e$ on the external face; $g$ and $g'$ are free edges; the dashed curve represents the rest of the boundary of the external face. (b) Flipping $C$ around its leg vertices, $g$ and $g'$ become external edges, and we can move the two bends of $e$ one on $g$ and the other on $g'$.}\label{fi:external}
	\end{figure}

	\smallskip\paragraph{Case 2: $e$ belongs to a 2-extrovert cycle $C$ of $H$} We consider the following two subcases:
	%\begin{itemize}
		%\item
		
	    \begin{itemize}
		\item {\em Case~2.1: $C$ has only edge $e$ on the external face of $G$.} Refer to \cref{fi:external-edge-1}. With the same reasoning as in the proof of {\em Case~3.1} of \cref{le:internal-edge}, we have that the two (internal) edges $g$ and $g'$ of $C$ incident to $e$ are free edges in $H$. Consider the plane graph $\rect{G'}$ obtained from $\rect{G}$ by flipping $C$ around its two leg vertices (see \cref{fi:external-edge-2}). The graph $\rect{G^*}$ obtained from $\rect{G'}$ by subdividing both $g$ and $g'$ with a vertex and by smoothing $v_1$ and $v_2$ is still a good plane graph. Hence, $\rect{G^*}$ admits a rectilinear orthogonal representation $\rect{H^*}$ without bends. The inverse $H^*$ of $\rect{H^*}$ has the same number of bends as $\rect{H}$. Also, edge $e$ has no bend in $H^*$, $g$ and $g'$ have one bend in $H^*$, and every other edge of $H^*$ has the same number of bends as in $H$. Finally, the external face of $H^*$ contains all the vertices of the external face of $H$. Therefore, $H^*$ is the desired $v$-constrained orthogonal representation.
		
		\item {\em Case~2.2: $C$ has at least another edge $g \neq e$ on the external face of $G$.} If $g$ is a free edge of $H$, then we can simply move a bend from $e$ to $g$, thus obtaining the desired $v$-constrained orthogonal representation $H^*$. Suppose now that $g$ is not a free edge. In this case there exists another free edge $g'$ on the external face. Indeed, if all the edges of the external face of $G$ were not free, we could smooth $v_1$ from $\rect{G}$, and the resulting graph $\rect{G'}$ would be a good plane graph (recall that there are at least four edges on the external face and that $C$ has at least three bends in $H$ if $g$ is not free): Given an orthogonal representation $\rect{H'}$ of $\rect{G'}$ without bends, the inverse $H'$ of $\rect{H'}$ would be an orthogonal representation of $G$ with less bends than $H$, a contradiction. It follows that we can move a bend from $e$ to $g'$, thus obtaining the desired $v$-constrained representation $H^*$.
	\end{itemize}
\end{proof}

%-------

We are now ready to prove the following lemma which, as explained at the beginning of the section, implies Property~\textsf{O1} of \cref{th:shapes}.

\begin{lemma}\label{le:1-bend}
%	Let $G$ be a biconnected planar $3$-graph distinct from $K_4$ and let $v$ be a designated vertex of $G$. There exists a $v$-constrained bend-minimum orthogonal representation of $G$ with at most one bend per edge. Also, if $v$ is a degree-2 vertex, such a representation can be constructed in such a way that $v$ has an angle larger than $90^\circ$ on the external face.
	Let $G$ be a biconnected planar $3$-graph distinct from $K_4$ and let $v$ be a designated vertex of $G$. There exists a $v$-constrained bend-minimum orthogonal representation $H$ of $G$ such that: (i) $H$ has at most one bend per edge; (ii)
	if $\deg(v)=2$, the angle at $v$ on the external face of $H$ is larger than~$90^\circ$.
	%Also, if $v$ is a degree-2 vertex, such a representation can be constructed in such a way that $v$ has an angle larger than $90^\circ$ on the external face.
\end{lemma}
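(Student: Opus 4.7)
The plan is to first secure property $(i)$ by iterating the reduction lemmas proved earlier in this section, and then, when needed, to upgrade the resulting representation so that property $(ii)$ is also met, via a controlled re-application of algorithm \RN.

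I would begin by dispatching the small cases $n\in\{3,4\}$, in which $G$ is $K_3$, $C_4$, or $K_4-e$: in each case one verifies by direct inspection that an optimal orthogonal representation with at most one bend per edge exists in which any designated degree-$2$ vertex can be placed as an exterior $270^\circ$ corner. For $n \ge 5$, I would invoke \cref{le:external-face} to obtain a starting $v$-constrained bend-minimum representation $H_0$ with at most two bends per edge and at least four vertices on the external face. Then, while some edge $e$ of the current representation $H_i$ still carries two bends, I would apply \cref{le:internal-edge} if $e$ is internal, or \cref{le:external-edge} if $e$ is external, producing $H_{i+1}$ that preserves the invariants of $H_i$ but strictly decreases the number of edges with two bends. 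The process terminates at a $v$-constrained bend-minimum representation $H^*$ with at most one bend per edge, yielding $(i)$.

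To obtain $(ii)$, suppose $\deg(v)=2$ and the external angle at $v$ in $H^*$ equals $90^\circ$. Consider the rectilinear image $\rect{H^*}$, whose underlying plane graph $\rect{G}$ is good (since $\rect{H^*}$ is a no-bend orthogonal representation of it) and in which $v$ is still a degree-$2$ vertex on the external face. I would invoke \RN on $\rect{G}$, selecting $v$ as one of the four designated corners; by \cref{le:NoBendAlg} the resulting no-bend representation $\rect{H'}$ places $v$ at an external $270^\circ$ corner. Taking the inverse of $\rect{H'}$, by smoothing every bend-vertex, then yields a representation $H'$ of $G$ in which $v$ retains its $270^\circ$ external angle.

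The heart of the argument, and the step that will require the most care, is to verify that $H'$ still satisfies $(i)$. On one hand, since $H'$ is $v$-constrained and $H^*$ is $v$-constrained bend-minimum, we have $b(H^*)\le b(H')$. On the other hand, every bend-vertex of $\rect{G}$ has angle in $\rect{H'}$ either equal to $(180^\circ,180^\circ)$ (contributing no bend to $H'$) or to $(90^\circ,270^\circ)$ (contributing exactly one bend), so $b(H')$ is at most the number of bend-vertices of $\rect{G}$, which equals $b(H^*)$. Thus $b(H')=b(H^*)$, and equality forces every bend-vertex to carry a $(90^\circ,270^\circ)$ angle in $\rect{H'}$; consequently each edge of $G$ has in $H'$ exactly as many bends as in $H^*$, hence at most one. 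The main obstacle is precisely this bend-preservation step: it hinges on the bend-minimality of $H^*$ to preclude the re-run of \RN from implicitly erasing any bend-vertex and yielding a $v$-constrained representation with strictly fewer bends than $H^*$.
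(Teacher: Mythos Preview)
Your proposal is correct and follows essentially the same approach as the paper's own proof: handle small cases directly, invoke \cref{le:external-face} and then iterate \cref{le:internal-edge,le:external-edge} to obtain property~$(i)$, and for property~$(ii)$ pass to the rectilinear image, re-run \RN\ with $v$ among the designated corners, and take the inverse. Your explicit squeeze argument ($b(H^*)\le b(H')\le b(H^*)$, with equality forcing every bend-vertex to carry a $(90^\circ,270^\circ)$ angle) is in fact more detailed than the paper, which simply asserts that each edge has the same number of bends in $H^+$ and~$H^*$.
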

\begin{proof}
	If $n \leq 4$ the statement trivially holds by choosing a planar embedding of $G$ with all the vertices on the external face; all the bend-minimum orthogonal representations with one bend per edge of non-isomorphic graphs are depicted in \cref{fi:upto4} (all angles at the vertices on the external face are larger than~$90^\circ$).

	\begin{figure}[h]
		\centering
		\includegraphics[width=0.55\columnwidth]{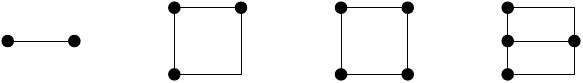}
		\caption{Bend-minimum orthogonal drawings with at most one bend per edge for a biconnected planar $3$-graph distinct from $K_4$ and having at most four vertices.}\label{fi:upto4}
	\end{figure}

	Suppose vice versa that $n \geq 5$. By \cref{le:external-face}, there exists a $v$-constrained bend-minimum orthogonal representation $H$ of $G$ with at most two bends per edge and at least four vertices on the external face. If all edges of $G$ have at most one bend in $H$, Property~$(i)$ holds. Otherwise, starting from $H$ we can iteratively apply \cref{le:internal-edge} and \cref{le:external-edge} to construct a $v$-constrained bend-minimum orthogonal representation $H^*$ of $G$ with at most one bend per edge and at least four vertices on the external face. %In fact, thanks to Properties~(a) and~(b) of \cref{le:internal-edge} and \cref{le:external-edge}, at each iteration the number of edges with two bends decreases by at least one unit. Also, Property~(c) guarantees that we can continue the iterative process until $H^*$ is achieved.
	
	About Property~$(i)$, suppose that $\deg(v)=2$ and that $v$ has an angle of $90^\circ$ on the external face of $H^*$. Consider the underlying plane graph $\rect{G^*}$ of $\rect{H^*}$. Since $\rect{H^*}$ has no bend, $\rect{G^*}$ is a good plane graph. Based on \cref{le:NoBendAlg},
	%Rahman et al.~\cite{DBLP:journals/jgaa/RahmanNN03} show that an embedding-preserving orthogonal representation of $\rect{G^*}$ without bends can always be computed by arbitrarily choosing four degree-2 external vertices that form reflex corners on the external face, i.e., which form angles of $270^\circ$ on the external face.
	%Therefore, we exploit~\cite{DBLP:journals/jgaa/RahmanNN03}
	we apply \textsf{NoBendAlg} to compute an orthogonal representation $\rect{H^+}$ of $\rect{G^*}$ where $v$ is one of the four designated corners, which implies that the angle at $v$ on the external face is equal to $270^\circ$ in $\rect{H^+}$. The inverse $H^+$ of $\rect{H^+}$ is such that $b(H^+)=b(H^*)$ and each edge of $G$ has the same number of bends in $H^+$ and in $H^*$. Hence, $H^+$ is a $v$-constrained bend-minimum orthogonal representation of $G$ with at most one bend per edge and with an angle larger than $90^\circ$ at $v$ on the external face.
\end{proof}

%We remark that the existence of an orthogonal representation of a planar $3$-graph distinct from $K_4$ with at most one bend per edge was already known~\cite{DBLP:journals/algorithmica/Kant96}. However, the proof in~\cite{DBLP:journals/algorithmica/Kant96} does not optimize the total number of bends in the drawing.

\subsection{Proof of Properties~\textsf{O2} and~\textsf{O3} of \cref{th:shapes}}\label{sse:O2-O3}

We first prove useful properties of the shapes of orthogonal components in an orthogonal representation of a good plane graph~computed~by~\textsf{NoBendAlg}.

\begin{lemma}\label{le:NoBendAlg-shapes}
Let $G$ be a good plane biconnected graph and let $H$ be a no-bend orthogonal representation of $G$ computed by \textsf{NoBendAlg}. Let $e$ be any edge in the external face of $G$, let $T_\rho$ be the SPQR-tree of~$G$ rooted at the node $\rho$ corresponding to~$e$, let $\mu$ be a node of $T_\rho$, and let $H_\rho(\mu)$ be the orthogonal $\mu$-component of $H$ with respect to $\rho$.
If $\mu$ is an inner P- or R-node, $H_\rho(\mu)$ is either $\D$-shaped or $\X$-shaped; if $\mu$ is an S-node, $H_\rho(\mu)$ has spirality at most four.
\end{lemma}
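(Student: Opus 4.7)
I would prove the two claims of the lemma separately, establishing the inner P/R case first and then using it as a tool for the S-node case.

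\textbf{Inner P- or R-node case.} For such a $\mu$, the paths $p_l$ and $p_r$ are edge-disjoint and their union $C_\mu := p_l \cup p_r$ is a simple cycle of $G$. Because $\mu$ is an inner node, each pole has inner degree $2$ in $G_\rho(\mu)$, and since $G$ is a $3$-graph, each pole has exactly one additional edge going to the rest of $G$. Hence $C_\mu$ has exactly two external legs (one at each pole) and, since the rest of $G$ is attached to $G_\rho(\mu)$ only at the poles, no external chord; thus $C_\mu$ is a $2$-extrovert cycle of $G$. Applying \cref{le:NoBendAlg}(i) to $C_\mu$ yields one of the three possibilities $t(p_l)=t(p_r)=1$, $(t(p_l),t(p_r))=(0,2)$, or $(t(p_l),t(p_r))=(2,0)$, which are precisely the conditions for $H_\rho(\mu)$ to be $\X$- or $\D$-shaped.

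\textbf{S-node case.} Let $p$ be a path between the alias vertices of $\mu$; by \cref{le:k-spiral}, $t(p)$ equals the spirality of $H_\rho(\mu)$. The plan is to bound $|t(p)|$ via a face-turn argument combined with the already-established shapes of the P/R children of $\mu$. By \cref{le:spqr-tree-3-graph}, every child of $\mu$ is either a Q-, P-, or R-node: the portion of $p$ crossing a Q-child contributes $0$ turn, while the portion crossing an inner P- or R-child contributes a turn determined by the $\D$- or $\X$-shape just established (at most $2$ in absolute value). The remaining contributions arise from the angles at the separation vertices $w_i$ of the S-skeleton. I would argue that these cannot accumulate to more than a small constant without producing an extrovert cycle of $\rect{G}$ whose degree-$2$-vertex count falls short of the threshold required by \cref{th:RN03}, contradicting the fact that the rectilinear image $\rect{H}$ is a good plane graph. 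For the root-child S-node, $p$ additionally includes two short segments of the reference edge $e$ between each pole and its alias vertex; these segments contribute $0$ turn by the definition of alias edges, and the argument of the inner case applies with minor adjustments to account for the inner-degree-$2$ angles at the poles.

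\textbf{Main obstacle.} The decisive step is showing that the spirality cannot exceed $4$ in the S-case; a naive face-cycle decomposition $t(C_f)=t(p)+t(q)+\alpha_u+\alpha_v=\pm 4$ yields only a weak bound because $|t(q)|$ is not a~priori controlled. Attaining the sharp value $4$ requires, whenever $t(p) \geq 5$, exhibiting an extrovert cycle within $\rect{H}$ that fails one of the conditions of \cref{th:RN03}; the construction of such an offending cycle must be case-split on how $p$ threads through the P- and R-children of $\mu$ and on the position of the designated corners supplied by \cref{le:NoBendAlg}(ii) relative to the poles. This case analysis, together with the root-child bookkeeping along the subdivided reference edge, is where the bulk of the technical work lies.
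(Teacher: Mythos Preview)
Your treatment of the inner P- or R-node case is correct and coincides with the paper's argument: the outer boundary of $H_\rho(\mu)$ is a $2$-extrovert cycle of $G$, and \cref{le:NoBendAlg}(i) immediately yields the three admissible pairs $(t(p_l),t(p_r))$.

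The S-node case, however, has a genuine gap. Your plan is to argue by contradiction: if the spirality exceeded four, some extrovert cycle of the good plane graph $\rect{G}=G$ would violate one of the conditions of \cref{th:RN03}. But such an argument, if it worked, would establish the bound for \emph{every} no-bend orthogonal representation of $G$, not just for the output of \textsf{NoBendAlg}, and that stronger statement is false. Take $G$ to be an $8$-cycle; it is biconnected and trivially a good plane graph (all vertices have degree two, so Condition~(i) of \cref{th:RN03} holds and Conditions~(ii)--(iii) are vacuous). Draw $G$ as a square with a rectangular notch cut from one side: six vertices receive a $90^\circ$ interior angle and the two vertices at the base of the notch receive a $270^\circ$ interior angle. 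Rooting at the edge $e$ joining the two reflex vertices, the root-child S-component is the complementary $7$-edge path, which makes six turns all in the same direction, so its spirality is~$6$. There is no $2$- or $3$-extrovert cycle to exhibit, hence your contradiction cannot be produced. Nothing in your outline invokes any property of $H$ beyond ``$\rect{H}$ is a good plane graph'', and that property alone does not bound the spirality of S-components.

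The paper's proof instead exploits the internals of \textsf{NoBendAlg}. At each recursive step the algorithm collapses the maximal bad cycles into supernodes and draws the resulting coarse graph with all faces rectangular. One locates the innermost recursive call $G(C)$ in which $G_\rho(\mu)$ is not contained in a bad cycle and distinguishes two subcases. If all edges of $G_\rho(\mu)$ are internal to $G(C)$, its image in the coarse graph is a path shared by two rectangular internal faces and is therefore straight, giving spirality zero. If $G_\rho(\mu)$ meets the boundary of $C$, the corresponding path lies on the outer rectangle of the coarse drawing and hence has turn number at most four (this is where property~(ii) of \cref{le:NoBendAlg} enters). The bound of four thus comes from the rectangularity of the intermediate drawings in the recursion, a feature your contradiction scheme does not access.
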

\begin{proof}
Let $\mu$ be an inner P- or R-node of $T_\rho$ and let $u$ and $v$ be its poles.
The external boundary of $H_\rho(\mu)$ is a 2-extrovert cycle in $H$ whose leg vertices are the poles $u$ and $v$. The external boundary of $H_\rho(\mu)$ consists of two edge-disjoint paths $p_l$ and $p_r$ from $u$ to $v$. By \cref{le:NoBendAlg}, either $t(p_l) = t(p_r) = 1$ or $t(p_l)=0$ and $t(p_r) = 2$, or $t(p_l)=2$ and $t(p_r) = 0$. It follows that $H_\rho(\mu)$ is either $\D$-shaped of $\X$-shaped.

Let $\mu$ be a (not necessarily inner) S-node with poles $u$ and $v$. Let $\nu_1, \dots \nu_h$ be the children of $\mu$ in $T_\rho$ that are either P- or R-nodes. To simplify the notation, we denote by $G_i$ the pertinent graph $G_\rho(\nu_i)$, with $i=1,\dots,h$. Consider a generic step of \textsf{NoBendAlg} that computes an orthogonal representation of $G(C)$ for some cycle $C$ such that $G_\rho(\mu) \subseteq G(C)$. Either $C=C_o(G)$ (in the first step of the algorithm) or $C$ is a 2-extrovert or 3-extrovert bad cycle in the previous step of the algorithm. Also, $C$ has four designated corners, two (resp. three) of which correspond to its leg vertices if it is a 2-extrovert (resp. 3-extrovert) cycle of $G$.
%Let $v_1$, $v_2$, $v_3$, and $v_4$ be the four designated corners of $G(C)$.
We distinguish between two cases.

\begin{figure}[h]
	\centering
	\subfloat[]{\label{fi:spirality-4-11-a}\includegraphics[page=1,height=0.26\columnwidth]{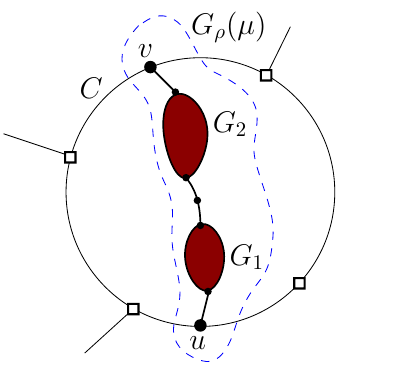}}
	\hfil
	\subfloat[]{\label{fi:spirality-4-11-b}\includegraphics[page=3,height=0.26\columnwidth]{spirality-4}}
	\hfil
	\subfloat[]{\label{fi:spirality-4-11-c}\includegraphics[page=5,height=0.26\columnwidth]{spirality-4}}
	\caption{Illustration of Case~1.1 in the proof of \cref{le:NoBendAlg-shapes}. (a) A (3-extrovert) cycle $C$ with four designated corners (squared vertices) and an S-component $G_\rho(\mu)$ with poles $u$ and $v$ inside it. $C_o(G_1)$ and $C_o(G_2)$ are 2-extrovert bad cycles. (b) A rectangular representation $R$ of the coarser graph $C'(G)$ obtained by collapsing $G_1$ and $G_2$ into supernodes. (c) A rectilinear representation of $G(C)$, where $H_\rho(\mu)$ has spirality~zero.}\label{fi:spirality-4-1}
\end{figure}

%\begin{figure}[h]
%	\centering
%	\includegraphics[page=1, width=0.9\columnwidth]{spirality-4}
%	\caption{Illustration of Case~1.1 in the proof of \cref{le:NoBendAlg-shapes}. From left to right: $(i)$ A (3-extrovert) cycle $C$ with four designated corners (squared vertices) and with an S-component $G_\rho(\mu)$ with poles $u$ and $v$ inside it. Cycles $C_o(G_1)$ and $C_o(G_2)$ are 2-extrovert bad cycles. $(ii)$ A rectangular representation of the coarser graph $C'(G)$ obtained by collapsing $G_1$ and $G_2$ into supernodes. $(ii)$ A rectilinear representation of $G(C)$, where $H_\rho(\mu)$ has spirality zero.}\label{fi:spirality-4-1}
%\end{figure}

	\smallskip\noindent{\em Case 1: $G_\rho(\mu)$ is not inside any bad cycle of $G(C)$}.
	%Each $C_o(G_i)$ is a 2-extrovert cycle; it can be either a bad cycle or it contains at least two designated corners of $G(C)$. Since $G_\rho(\mu)$ is not contained in any bad cycle of $G(C)$, if $C_o(G_i)$ is a bad cycle, it is also a maximal bad cycle. Denote by $G'(C)$ the coarser graph obtained from $G(C)$ by collapsing its maximal bad cycles. Each $G_i$ such that $C_o(G_i)$ is a bad cycle corresponds to a supernode of degree two in $G'(C)$. Denote by $G'_\rho(\mu)$ the subgraph of $G'(C)$ corresponding to $G_\rho(\mu)$. Let $R$ be the rectangular representation of $G'(C)$ computed by \textsf{NoBendAlg}. Since all faces of $R$ are rectangles, the four corners of every internal face are degree-3 vertices. The four corners of the external face of $R$ are degree-2 vertices, possibly the supernodes corresponding to some $G_i$. If no vertices
	We consider two subcases:
	\begin{itemize}
		\item {\em Case 1.1. All edges of $G_\rho(\mu)$ are internal edges of $G(C)$}. Refer to \cref{fi:spirality-4-1}. The external cycle $C_o(G_i)$ of each $G_i$ is a bad 2-extrovert cycle, as it contains no designated corner of~$C$. Also, since $G_\rho(\mu)$ is not contained in any bad cycle of $G(C)$, $C_o(G_i)$ is a maximal bad cycle. Let $G'(C)$ be the coarser graph obtained from $G(C)$ by collapsing its maximal bad cycles. Each $G_i$ corresponds to a supernode of degree two in $G'(C)$. Thus, $G_\rho(\mu)$ corresponds to a path $p$ in $G'(C)$ and this path is shared by two internal faces.
		Let $R$ be the rectangular representation of $G'(C)$ computed in this step of \textsf{NoBendAlg}. Since all faces of $R$ are rectangles, all edges of $p$ are collinear. Hence, when \textsf{NoBendAlg} draws all subcomponents of $G_\rho(\mu)$ and plugs them into $R$, $H_\rho(\mu)$ has spirality zero.
		
		\item {\em Case 1.2. $G_\rho(\mu)$ has some edges on the external face of $G(C)$}. Refer to \cref{fi:spirality-4-2}. Observe that, in this case, both poles $u$ and $v$ of $G_\rho(\mu)$ and the poles of every $G_i$ belong to $C$.
		Consider a path $p$ in $G_\rho(\mu)$ from $u$ to $v$ such that $p$ is contained in $C$.
		As in the previous case, let $G'(C)$ denote the coarser graph obtained from $G(C)$ by collapsing its maximal bad cycles and let $R$ be the rectangular representation of $G'(C)$ computed in this step of \textsf{NoBendAlg}. Also, let $p'$ be the path corresponding to $p$ in $G'(C)$; namely, $p'$ consists of the vertices of $p$ that remain vertices in $G'(C)$ and of the supernodes corresponding to those $G_i$ that were bad cycles of $G(C)$ (if any). Since $p'$ belongs to the external cycle of $R$, it has turn number at most four in $R$. Also, since the spirality of $H_\rho(\mu)$ equals the turn number of $p'$, the spirality of $H_\rho(\mu)$ is at most four.
		%Clearly, every simple path contained in the external face of $R$ has turn number at most four. Since, as observed above, the poles of $G_\rho(\mu)$ and of each $G_i$ belong to $C$, the spirality of $H_\rho(\mu)$ is at most four.
		%Some $C_o(G_i)$ might not be a maximal bad cycle (if so, it contains at least two designated corners); in this case the spirality of $H_\rho(\mu)$ will be equal to the number of designated corners on its external edges, which is at most four.
	\end{itemize}

	\begin{figure}[h]
		\centering
		\subfloat[]{\label{fi:spirality-4-12-a}\includegraphics[page=2,height=0.26\columnwidth]{spirality-4}}
		\hfil
		\subfloat[]{\label{fi:spirality-4-12-b}\includegraphics[page=4,height=0.26\columnwidth]{spirality-4}}
		\hfil
		\subfloat[]{\label{fi:spirality-4-12-c}\includegraphics[page=6,height=0.26\columnwidth]{spirality-4}}
		\caption{Illustration of Case~1.2 in the proof of \cref{le:NoBendAlg-shapes}. (a) A (3-extrovert) cycle $C$ with four designated corners (squared vertices) and with an S-component $G_\rho(\mu)$ with poles $u$ and $v$ inside it; the component shares edges with $C$ and contains one of the four designated corners. (b) A rectangular representation $R$ of the coarser graph $C'(G)$ obtained by collapsing $G_1$ and $G_2$ into supernodes. (c) A rectilinear representation of $G(C)$, where $H_\rho(\mu)$ has spirality one.}\label{fi:spirality-4-2}
	\end{figure}
	
%	\begin{figure}[h]
%		\centering
%		\includegraphics[page=2, width=0.9\columnwidth]{spirality-4}
%		\caption{Illustration of Case~1.2 in the proof of \cref{le:NoBendAlg-shapes}. From left to right: $(i)$ A (3-extrovert) cycle $C$ with four designated corners (squared vertices) and with an S-component $G_\rho(\mu)$ with poles $u$ and $v$ inside it; the component shares edges with $C$ and contains one of the four designated corners. $(ii)$ A rectangular representation of the coarser graph $C'(G)$ obtained by collapsing $G_1$ and $G_2$ into supernodes. $(iii)$ A rectilinear representation of $G(C)$, where $H_\rho(\mu)$ has spirality one.}\label{fi:spirality-4-2}
%	\end{figure}
	
%	If all edges of $G_\mu$ are internal edges of $G(C)$, the external cycle $C_o(B_i)$ of each $B_i$ is a maximal bad 2-extrovert cycle (as it contains no designated corner). In this case each $G(C_o(B_i))$ is collapsed into a supernode; in the rectangular representation $R$ of the resulting graph all degree-2 vertices and supernodes of $G_\mu$ belong to the same side of some rectangular face of $R$. Thus, when all subcomponents of $G_\mu$ are drawn and plugged into $R$, $H_\mu$ gets spirality zero. If $G_\mu$ has some edges on the external face, some $C_o(B_i)$ might not be a maximal bad cycle (if so, it contains at least two designated corners); in this case the spirality of $H_\mu$ will be equal to the number of designated corners on its external edges, which is at most four.
	\smallskip\noindent{\em Case 2: $G_\rho(\mu)$ is inside a bad cycle of $G(C)$}. Denote by $C_m$ the maximal bad cycle of $G(C)$ such that $G_\rho(\mu) \subseteq G(C_m)$ and let $G'(C)$ be the coarser graph obtained from $G(C)$ by collapsing its maximal bad cycles into supernodes. $G'(C)$ has a supernode that results from collapsing $G(C_m)$. After computing a rectangular representation of $G(C)$, \textsf{NoBendAlg} goes recursively on $G(C_m)$.  Consider this recursion until it reduces to a cycle $C^*$ such that $G_\rho(\mu) \subseteq G(C^*)$ and $G_\rho(\mu)$ is not inside any bad cycle of $G(C^*)$. By the same analysis as in Case~1 (where $C^*$ plays the role of $C$), the spirality of $H_\rho(\mu)$ is at most four.
%
%	We consider two subcases:
%	
%	\begin{itemize}
%		\item{\em Case 2.1: $C'$ passes through the poles $u$ and $v$ of $G_\rho(\mu)$}. Let $G'(C)$ be the coarser graph obtained from $G(C)$ by collapsing its maximal bad cycles and let $R$ be the rectangular representation of $G'(C)$ computed in this step of \textsf{NoBendAlg}. Since $C'$ is a maximal bad cycle, $G(C')$ corresponds to a supernode $v_{C'}$ of $R$. Then, $v_{C'}$ is replaced in $R$ with a rectangular representation of $G(C')$. With the same reasoning as in Case~1.2, where $C'$ plays the role of $C$, the spirality of $H_\rho(\mu)$ is at most four.
%		%Since $C'$ is represented as a rectangle, $u$ and $v$ are on $C'$; and the turn number of a path between any two vertices of $C'$ is at most four, we have that the spirality of $H_\mu$ in $H$ is at most four.
%		
%		\item{\em Case 2.2: $C'$ does not pass through the poles $u$ and $v$ of $G_\rho(\mu)$}. In this case, $G(C')$ is still collapsed into a supernode $v_{C'}$ and a rectangular drawing $R$ of the resulting graph is computed. In one of the subsequent steps of \textsf{NoBendAlg} applied on $G(C')$ we will fall either in Case~1.1, Case~1.2, or in Case~2.1.
%	\end{itemize}
\end{proof}

\noindent We are now ready to prove Properties~\textsf{O2} and~\textsf{O3} of \cref{th:shapes}.

\begin{lemma}\label{le:O2-O3}
A biconnected planar $3$-graph distinct from $K_4$ admits a bend-minimum orthogonal representation $H$ such that for any edge $e$ of the external face of $H$, denoted by $T_\rho$ the SPQR-tree of $G$ with respect to $e$ and by $\mu$ a node of $T_\rho$, the following properties hold for $H_\rho(\mu)$:
\begin{itemize}
	\item[{\em \bf \textsf{O2}}]
	If $H_\rho(\mu)$ is a P-component or an R-component, it is has either \L- or \C-shape when $\mu$ is the root child and it has either \D- or \X-shape otherwise.	
	\item[{\em \bf \textsf{O3}}]
	If $H_\rho(\mu)$ is an S-component, it has spirality at most four.
\end{itemize}
% Let $G$ be a biconnected planar $3$-graph distinct from $K_4$ and let~$T$ the SPQR-tree of $G$. $G$ admits an optimal orthogonal representation $H$ with the following properties. Let $e$ be an edge on the boundary of the external face of $H$ and let $\rho$ be the corresponding Q-node in $T$. For each node $\mu$ of $T_\rho$
% %For each P- or R-component $H_{\rho}(\mu)$ of $H$ with respect to $e$,
% the following holds: If $\mu$ is the root child, $H_{\rho}(\mu)$ has either $\C$-shape or $\L$-shape and it has either $\D$-shape or $\X$-shape otherwise.
% %Also, for each S-component $H_\rho(\mu)$ of $H$ with respect to $e$,
% if $\mu$ is an S-node, the spirality of $H_\rho(\mu)$ is at most four.
\end{lemma}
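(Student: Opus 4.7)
My plan is to construct $H$ by applying \textsf{NoBendAlg} to the rectilinear image of a representation supplied by Property~\textsf{O1}, and to derive the shape properties from \cref{le:NoBendAlg-shapes} together with a separate turn-sum argument for the root-child P/R case.

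I would first invoke Property~\textsf{O1} (\cref{le:1-bend}) to obtain a bend-minimum orthogonal representation $H_0$ with at most one bend per edge, and let $\rect{G}$ be the underlying plane graph of its rectilinear image; by \cref{th:RN03}, $\rect{G}$ is a good plane biconnected graph. I would then run \textsf{NoBendAlg} on $\rect{G}$ to obtain a no-bend representation $\rect{H}$ and take $H$ to be its inverse. Smoothing bend-vertices that carry a $180^\circ$ angle in $\rect{H}$ can only reduce the bend count, so $b(H) \le b(H_0)$, whence the bend-minimality of $H_0$ forces $b(H)=b(H_0)$, and $H$ retains at most one bend per edge. For any edge $e$ on the external face of $H$ and any node $\mu$ of $T_\rho$ (with $\rho$ the Q-node of $e$), the corresponding node $\rect{\mu}$ of the SPQR-tree of $\rect{G}$ has pertinent graph that is a subdivision of $G_\rho(\mu)$, and turn numbers along external-boundary paths are preserved (a bend of $H$ becomes a reflex vertex angle of $\rect{H}$, both contributing equally to turns). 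Property~\textsf{O3} and the inner case of Property~\textsf{O2} then follow directly from \cref{le:NoBendAlg-shapes}.

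For the root-child case of Property~\textsf{O2}, let $\mu$ be a P- or R-node that is the root child of $T_\rho$, with poles $u,v$. I would prove (a) that the inner angle of $H_\rho(\mu)$ at each pole is $90^\circ$, and then (b) derive the L- or C-shape by a turn sum on the external face of $G$ (which equals $-4$): the two pole contributions are $0$, the at-most-one bend of $e$ contributes $-1$ when present (because any bend of $e$ in $H$ corresponds to a designated corner of $\rect{G}$ carrying a $270^\circ$ external angle) and $0$ otherwise, forcing $\{t(p_l),t(p_r)\} \in \{\{2,4\},\{1,3\}\}$. For~(a), a degree-three pole on the external face of $\rect{G}$ cannot be a designated corner of \textsf{NoBendAlg} (which must be degree two), so it inherits a $180^\circ$ external angle in the rectangular drawing of the coarser graph; the angle sum at the pole then forces a $90^\circ$ inner angle in $H_\rho(\mu)$. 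When a pole $w \in \{u,v\}$ is instead a leg vertex of a maximal $2$-extrovert bad cycle $C$ that gets collapsed, it becomes a designated corner of the recursive call on $G(C)$, and the $270^\circ$ wedge outside $G(C)$ at $w$ is split by the leg $e$ into sub-wedges of sizes $90^\circ$ and $180^\circ$; a mirror reflection of the drawing of $G(C)$ about the line through its two leg vertices (which preserves bend count and all inner shape properties) may be invoked, if needed, to align the $180^\circ$ sub-wedge with the external face of $G$, thereby restoring the $180^\circ$ external-angle property at~$w$.

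The hard part will be the root-child P/R case: verifying that the mirror-reflection adjustment at collapsed poles can be made consistently across all maximal bad cycles and all choices of $e$ on the external face, without breaking bend-minimality or the shape properties already established for S- and inner P/R-components.
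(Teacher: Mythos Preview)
For inner P- or R-components and for S-components your plan coincides with the paper's: apply \textsf{NoBendAlg} to the rectilinear image of an optimal representation and invoke \cref{le:NoBendAlg-shapes}. The difficulty you correctly isolate is the root-child P/R case, and there your argument does not close. With an arbitrary choice of the four designated corners, nothing guarantees that the bend-vertex of $e$ receives a $270^\circ$ external angle (it may sit inside a collapsed bad cycle and receive a $90^\circ$ external angle in the recursion, which would give $t(p_l)=5$), and your $180^\circ$-at-the-poles claim only covers the case where the pole is not absorbed into a supernode. The mirror-reflection patch alters $H$ in a way that depends on $e$, so different external edges may demand incompatible reflections; you rightly flag this as the hard part, and it is where the argument breaks.

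The paper avoids the reflection issue entirely by \emph{choosing} the designated corners of \textsf{NoBendAlg} strategically rather than arbitrarily. Write $p'$ for the external path of $\rect{G}$ from $u$ to $v$ avoiding $\rect{e}$. If $e$ has a bend (bend-vertex $w$), take as corners $w$, the first degree-$2$ vertex $x$ on $p'$ past $v$, the first degree-$2$ vertex $y$ on $p'$ past $u$, and any degree-$2$ vertex $z$ of $p'$ between them; if $e$ has no bend, place all four corners on $p'$ with $x,y$ chosen the same way. Property~$(ii)$ of \cref{le:NoBendAlg} makes the turn number zero on each external arc between consecutive corners. The arc containing $v$ has no interior degree-$2$ vertex by the choice of $x$, so no vertex on it can carry a $270^\circ$ external angle; turn number zero then forces every angle on it---in particular at $v$---to be $180^\circ$, and symmetrically at $u$. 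The turn numbers $t(p')\in\{3,4\}$ and $t(p'')\in\{1,2\}$ now follow directly via Property~\textsf{H2}, yielding the \L- or \C-shape without any post-hoc adjustment.
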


\begin{proof}
By \cref{le:1-bend}, $G$ always admits an optimal orthogonal representation. Let $H^*$ be any such representation of $G$ and suppose that $H^*$ does not satisfy Properties~\textsf{O2} and~\textsf{O3}. We show how to obtain another optimal orthogonal representation $H$ from $H^*$ such that $H$ satisfies Properties~~\textsf{O2} and~\textsf{O3}.
%Let $H^*$ be an optimal orthogonal representation of $G$ (the existence of such a representation is proved in \cref{sse:O1}).
%
Let $\rect{H^*}$ be the rectilinear image of $H^*$ and let $\rect{G^*}$ be the good plane graph represented by $\rect{H^*}$. %Observe that $\rect{G^*}$ is a good plane graph as it has a no-bend orthogonal representation.
For every bend $b$ of $H^*$, let $\rect{b}$ be the corresponding bend-vertex of degree two in $\rect{H^*}$. Since $H^*$ has at most one bend per edge, every bend-vertex of $\rect{H^*}$ is adjacent to two (non-bend) vertices.
%Let $e$ be any edge of the external face of $H^*$ and consider the SPQR-tree $T_\rho$ rooted at $e$.
We distinguish between two cases.

\smallskip\noindent{\em Case 1: The root child of $T_\rho$ is an S-node}. Let $\rect{H}$ be a no-bend orthogonal representation of $\rect{G^*}$ computed by using \textsf{NoBendAlg}. By \cref{le:NoBendAlg-shapes} every inner P- or R-component of $\rect{H}$ is either $\D$-shaped or $\X$-shaped and every S-component of $\rect{H}$ has spirality at most four.
Let $H$ be the inverse of $\rect{H}$.
We have that there is a bijection between the bend-vertices of $\rect{H}$ and the bends of $H^*$, every bend-vertex of $\rect{H}$ is adjacent to two vertices of $H^*$, and $H^*$ is bend-minimum. It follows that $H$ is also bend-minimum and it has at most one bend per edge, that is, $H$ is an optimal orthogonal representation of $G$. Furthermore, since replacing bend-vertices with bends does not change the turn number of any path, we have that every inner P- or R-component of $H$ is either $\D$-shaped or $\X$-shaped and every S-component of $H$ has spirality at most four.
Therefore, the statement holds when the root child of $T$ is an S-node.

\smallskip\noindent{\em Case 2: The root child of $T_\rho$ is either a P-node or an R-node}. See \cref{fi:case2-o2o3} for a schematic illustration. Let $u$ and $v$ be the end-vertices of $e$ encountered in this order when traversing $e$ so to leave the external face of $H^*$ on the right side. Note that $u$ and $v$ are degree-3 vertices in $H^*$ because the root child is either a P- or an R-node. We consider two subcases depending on whether $e$ has a bend or not.

\begin{itemize}
  \item {\em Case 2.1: $e$ has a bend in $H^*$}. Refer to \cref{fi:property-c}. Let $w$ be the bend-vertex of $\rect{G^*}$ that corresponds to bend of $e$ and let $\rect{e}$ be the subdivision of $e$ in $\rect{G^*}$. Let $p'$ be the path of the external face of $\rect{G^*}$  between $u$ and $v$ not containing~$\bar{e}$. Since $\rect{G^*}$ is a good plane graph and $w$ is a degree-2 vertex of the external face of $\rect{G^*}$, there are at least three degree-2 vertices along $p'$ in $\rect{G^*}$. Let $x$ be the first degree-2 vertex encountered along $p'$ moving counterclockwise from $v$; let $y$ be the first degree-2 vertex along $p'$ in the clockwise direction from $u$; let $z$ be any degree-2 vertex along $p'$ between~$x$~and~$y$.

  Compute a no-bend orthogonal representation $\rect{H}$ of $\rect{G^*}$ by using \cref{le:NoBendAlg} where $x$, $y$, $z$, and $w$ are chosen as designated corners. By \cref{le:NoBendAlg}, the turn number of the path along the external face of $\rect{H}$ between $w$ and $x$ is zero. This fact and the absence of degree-2 vertices going from $w$ to $x$ counterclockwise (which excludes the presence of $270^\circ$ angles) imply that there is no angle of $90^\circ$ between $w$ and $x$. Hence, $\rect{H}$ has an angle of $180^\circ$ at $v$ on the external face.
  With the same argument by considering the path from $y$ to $w$, we have that $\rect{H}$ has an angle of $180^\circ$ at $u$ on the external face. Consider now the orthogonal representation $\rect{H} \setminus \rect{e}$: $u$ and $v$ split the external boundary of this representation into two paths, namely $p'$ and another path $p''$ between $u$ and $v$. From the discussion above, $t(p')=3$. Also, the five angles at $u$, $v$, $x$, $y$, and $z$ in the external face of $\rect{H} \setminus \rect{e}$ are $270^\circ$ angles; by Property~\textsf{H2}, this implies that $t(p'')=1$.
  %It follows that the turn number of $p'$ in $\rect{H} \setminus \rect{e}$ is equal to three and the turn number of the other external path of $\rect{H} \setminus \rect{e}$ between $u$ and $v$ is equal to one (as $u$ and $v$ form two angles of $270^\circ$ on the external face of $\rect{H} \setminus \rect{e}$ and the turn number of the external face of $\rect{H} \setminus \rect{e}$ is four).
  As in Case~1, by \cref{le:NoBendAlg-shapes} every inner P- or R-component of $\rect{H}$ is $\D$-shaped or $\X$-shaped and every S-component of $\rect{H}$ has spirality~at~most~four.

  Let $H$ be the orthogonal representation of $G$ obtained by replacing every bend-vertex of $\rect{H}$ with a bend. As in Case~1, $H$ is an optimal orthogonal representation of $G$. In particular, edge $e$ has one bend and it is on the external face of $H$. Since replacing bend-vertices with bends does not change the turn number of any path, by the discussion above we have that: If $\mu$ is the root child of $T_\rho$, then $H_\rho(\mu)$ is $\L$-shaped; if $\mu$ is an inner P- or R-node of $T_\rho$, then $H_\rho(\mu)$ is either $\D$-shaped or $\X$-shaped; if $\mu$ is an S-node, then $H_\rho(\mu)$ has spirality at most four.

	\begin{figure}[tb]
		\centering
		\subfloat[Case 2.1]{\label{fi:property-c}\includegraphics[width=0.5\columnwidth]{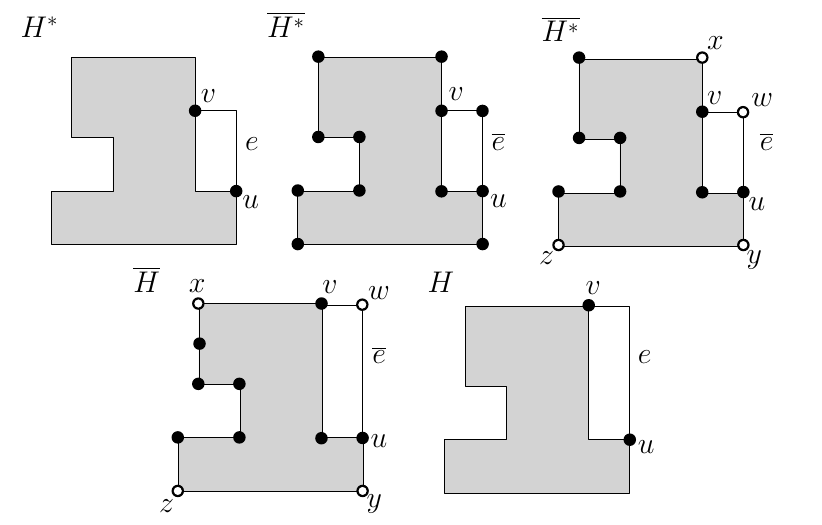}}
		\subfloat[Case 2.2]{\label{fi:property-d}\includegraphics[width=0.5\columnwidth]{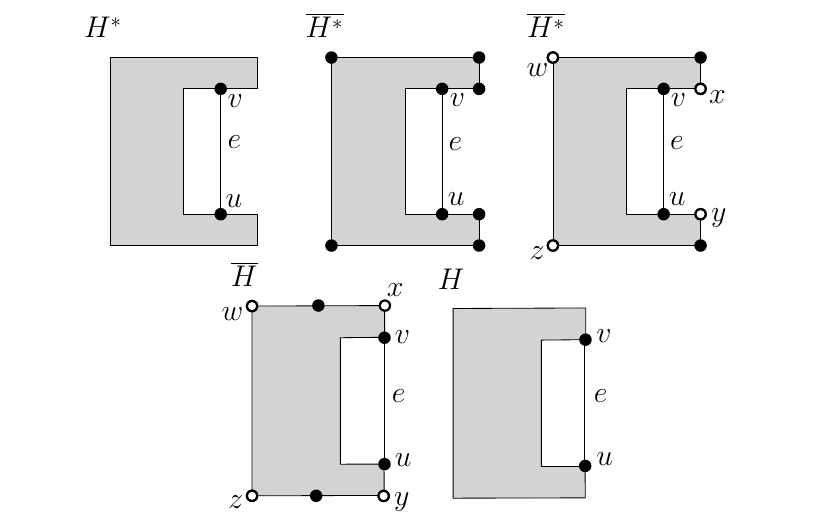}}
		\caption{Schematic illustration of Case~2 in the proof of \cref{le:O2-O3}.\label{fi:case2-o2o3}}
	\end{figure}

\item{\em Case 2.2: $e$ does not have bend in $H^*$}. The argument is similar to the one of the previous case. Refer to  \cref{fi:property-d}.
Let $p'$ the path of the external face of $\rect{G^*}$ between $u$ and $v$ not containing $e$. Since $\rect{G^*}$ is a good plane graph and both $u$ and $v$ are degree-3 vertices, there are at least four degree-2 vertices along $p'$ in $\rect{G^*}$. Let $x$ be the first degree-2 vertex encountered along $p'$ while moving counterclockwise from $v$; let $y$ be the first degree-2 vertex along $p'$ in the clockwise direction from $u$; let $z$ and $w$ be any two degree-2 vertices along $p'$ between $x$ and $y$.

Compute a no-bend orthogonal representation $\rect{H}$ of $\rect{G^*}$ by using \cref{le:NoBendAlg} where $x$, $y$, $z$, and $w$ are chosen as designated corners. By \cref{le:NoBendAlg}, the turn number of the path along the external face of $\rect{H}$ between $x$ and $y$ passing through $e$ is zero. This fact and the absence of degree-2 vertices going from $x$ to $y$ counterclockwise (which excludes the presence of $270^\circ$ angles) imply that there is no angle of $90^\circ$ between $x$ and $y$. Hence, $\rect{H}$ has an angle of $180^\circ$ at $u$ and $v$ on the external face.
Consider now the orthogonal representation $\rect{H} \setminus \rect{e}$: $u$ and $v$ split the external boundary of this representation into two paths, namely $p'$ and another path $p''$ between $u$ and $v$. From the discussion above, $t(p')=4$. Also, the six angles at $u$, $v$, $x$, $y$, $w$, and $z$ in the external face of $\rect{H} \setminus \rect{e}$ are $270^\circ$ angles; by Property~\textsf{H2}, this implies that $t(p'')=2$.

Let $H$ be the orthogonal representation of $G$ obtained by replacing every bend-vertex of $\rect{H}$ with a bend. With the same argument as in Case~2.1 we have that: If $\mu$ is the root child of $T_\rho$, then $H_\rho(\mu)$ is $\C$-shaped; if $\mu$ is an inner P- or R-node of $T_\rho$, then $H_\rho(\mu)$ is either $\D$-shaped or $\X$-shaped; if $\mu$ is an S-node, then $H_\rho(\mu)$ has spirality at most four.
\end{itemize}
\end{proof}

\subsection{Proof of Property~\textsf{O4} of \cref{th:shapes}}\label{sse:O4}
Based on the discussion in the previous sections, we can restrict our attention to orthogonal representations that satisfy Properties~\textsf{O2} and~\textsf{O3} of \cref{th:shapes}, that is, each P- and R-component is either $\D$-shaped, or  $\X$-shaped, or $\L$-shaped, or $\C$-shaped, while each S-component
is a $k$-spiral for same $k \in \{0,1,2,3,4\}$.
The proof of Property~\textsf{O4} of \cref{th:shapes} is based on a substitution technique of orthogonal components of different types but with ``equivalent shapes'' (for example we can substitute a Q-component with a ``shape-equivalent'' S-component).
To this aim, we extend the substitution techniques discussed  in~\cite{DBLP:journals/siamcomp/BattistaLV98,dlp-hvpac-19}.

Let $G$ and $G'$ be two biconnected plane $3$-graphs, possibly coincident, and let $T_\rho$ and $T'_{\rho'}$ be the SPQR-trees of $G$ and $G'$ rooted at Q-nodes $\rho$ and $\rho'$, respectively.
Let $\mu$ and $\mu'$ be two inner nodes of $T_\rho$ and $T'_{\rho'}$ and let $G_\rho(\mu)$ and $G'_{\rho'}(\mu')$ be the $\mu$-component and $\mu'$-component with respect to $\rho$ and to $\rho'$, respectively.
Let $u$ and $v$ be the poles of $\mu$ and let $u'$ and $v'$ be the poles of $\mu'$. Define a bijection between $u$ and $u'$ and between $v$ and $v'$.
%Assume that $u$ and $u'$ have the same inner degree in $G_\rho(\mu)$ and $G'_{\rho'}(\mu')$. Similarly, assume that $v$ and $v'$ have the same inner degree in $G_\rho(\mu)$ and $G'_{\rho'}(\mu')$.
%
Let $H$ and $H'$ be two orthogonal representations of $G$ and $G'$ that satisfy  Properties~\textsf{O1},~\textsf{O2}, and~\textsf{O3} of \cref{th:shapes}. Let $H_\rho(\mu)$ and $H'_{\rho'}(\mu')$ be the orthogonal $\mu$-component and orthogonal $\mu'$-component of $H$ and $H'$ with respect to $\rho$ and to $\rho'$, respectively.
%We say that $H_\mu$ and $H'_{\mu'}$ are \emph{shape-equivalent} if they have the same shape, i.e., they are both $\D$-shaped, $\X$-shaped, $\L$-shaped, $\C$-shaped, or they are both a $k$-spiral for some $k \in \{0,1,2,3,4\}$.
We say that $H_\rho(\mu)$ and $H'_{\rho'}(\mu')$ are \emph{shape-equivalent} if one of the following holds:
\begin{itemize}
\item $\mu$ is a P- or an R-node; $\mu'$ is a P- or an R-node; $H_\rho(\mu)$ and $H'_{\rho'}(\mu')$ are both $\D$-shaped, or both $\X$-shaped, or both $\L$-shaped, or both $\C$-shaped.
\item $\mu$ and $\mu'$ are both S-nodes; $H_\rho(\mu)$ and $H'_{\rho'}(\mu')$ are both a $k$-spiral for the same $k \in \{0,1,2,3,4\}$; $u$ and $u'$ (resp. $v$ and $v'$) have the same inner degree in $G_\rho(\mu)$ and $G'_{\rho'}(\mu')$.
\item $\mu$ and $\mu'$ are both Q-nodes; $H_\rho(\mu)$ and $H'_{\rho'}(\mu')$ have the same turn number.
\item $\mu$ is a Q-node and $\mu'$ is an S-node (or vice versa); the turn number $k$ of $H_\rho(\mu)$ equals the value $k$ for which $H'_{\rho'}(\mu')$ is a $k$-spiral; $u$, $u'$, $v$, and $v'$ have inner degree one in $G_\rho(\mu)$ and $G'_{\rho'}(\mu')$, respectively.
\end{itemize}

\begin{figure}[tb]
	\centering
	\subfloat[]{\label{fi:substitution-a}\includegraphics[page=2,width=0.225\columnwidth]{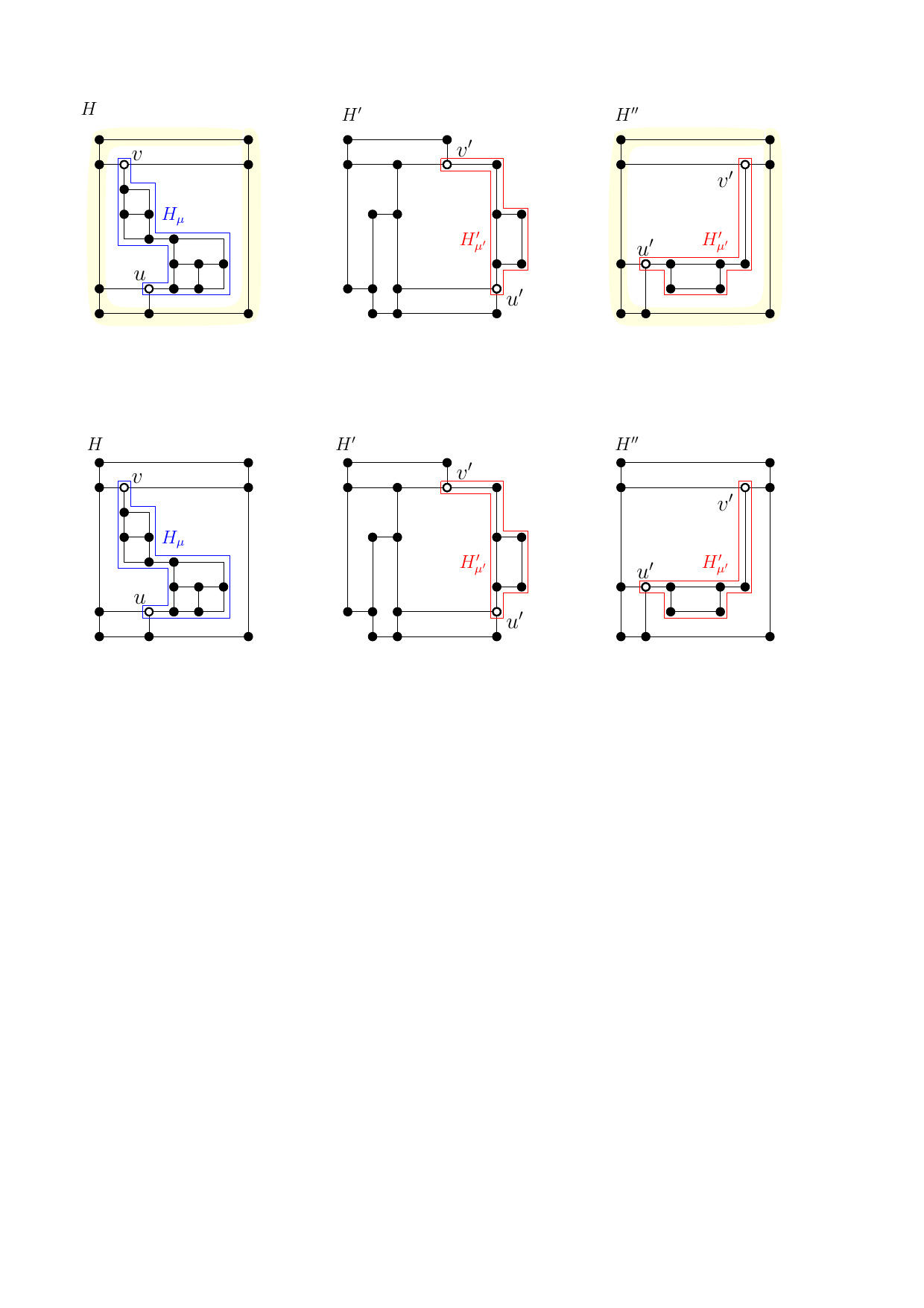}}
	\hfil
	\subfloat[]{\label{fi:substitution-b}\includegraphics[page=3,width=0.225\columnwidth]{substitution}}
	\hfil
	\subfloat[]{\label{fi:substitution-c}\includegraphics[page=4,width=0.225\columnwidth]{substitution}}
	\caption{(a)-(b) Two orthogonal representations $H$ and $H'$ with two shape-equivalent S-components $H_\rho(\mu)$ and $H_{\rho'}(\mu')$. (c) The representation $H''$ is obtained by substituting $H_\rho(\mu)$ with $H'_{\rho'}(\mu')$ in $H$.\label{fi:substitution}}
\end{figure}

For example, consider the two orthogonal representations $H$ and $H'$ in \cref{fi:substitution-a,fi:substitution-b}; the highlighted subgraphs $H_\rho(\mu)$ and $H_{\rho'}(\mu')$ are two shape-equivalent S-components.

\smallskip
\emph{Substituting} $H_\rho(\mu)$ in $H$ with a shape-equivalent $H'_{\rho'}(\mu')$ is an operation that defines a new plane labeled graph $H''$ as follows.
Let $p_l$ and $p_r$ be the left and right path of $H_\rho(\mu)$ from $u$ to $v$, respectively, and let $p'_l$ and $p'_r$ be the left and right path of $H'_{\rho'}(\mu')$ from $u'$ to $v'$. Since $H'_{\rho'}(\mu')$ and $H_\rho(\mu)$ are shape-equivalent, either (1) $t(p_l) = t(p'_l)$ and $t(p_r) = t(p'_r)$ or (2) $t(p_l) = t(p'_r)$ and $t(p_r) = t(p'_l)$.
Without loss of generality we can assume that Case~(1) holds (otherwise we can flip~$H'$).
Denote by $f_l$ ($f_r$, respectively) the face of $H$ outside $H_\rho(\mu)$ incident to $p_l$ ($p_r$, respectively). Also, for each pole $w \in \{u,v\}$, denote by $a_{w,l}$ ($a_{w,r}$, respectively) the angle at $w$ in face $f_l$ ($f_r$, respectively).
Analogously, with respect to $H'$ and $H'_{\rho'}(\mu')$ we define $f'_l$, $f'_r$, $a'_{w',l}$, $a'_{w',r}$, where $w' \in \{u',v'\}$.

The plane labeled graph $H''$ is defined as follows:

\begin{itemize}
\item The vertex set of $H''$ is $V(H'') = V(G) \setminus (V(G_\rho(\mu)) \setminus \{u,v\}) \cup V(G'_{\rho'}(\mu'))$, where $u$ is identified with $u'$ and $v$ is identified with $v'$.
\item The edge set of $H''$ is $E(H'') = E(G) \setminus E(G_\rho(\mu)) \cup E(G'_{\rho'}(\mu'))$.
\item The faces of $H''$ are: (i) all faces of $G$ different from $f_l$ and $f_r$ and not belonging to $G_\rho(\mu)$; (ii) all faces of $G'_{\rho'}(\mu')$; (iii) a face $f''_l$ obtained from $f_l$ by replacing $p_l$ with $p'_l$; (iv) a face $f''_r$ obtained from $f_r$ by replacing $p_r$ with $p'_r$.
\item For each vertex $w$ of $H''$
   \begin{itemize}
   \item If $w \in V(G) \setminus V(G_\rho(\mu))$, the vertex-angles at $w$ in $H''$ coincide with the vertex-angles at $w$~in~$H$.
   \item If $w \in V(G'_{\rho'}(\mu')) \setminus \{u',v'\}$, the vertex-angles at $w$ in $H''$ coincide with the vertex-angles at $w$ in $H'_{\rho'}(\mu')$.
   \item If $w \in \{u=u',v=v'\}$, the vertex-angles at $w$ formed by any two edges of $E(G) \setminus E(G_\rho(\mu))$ coincide with the vertex-angles at $w$ in $H$.
   \item If $w \in \{u=u',v=v'\}$, the vertex-angles at $w$ formed by any two edges of $E(G'_{\rho'}(\mu'))$ coincide with the vertex-angles at $w$~in~$H'$.
   \item If $w \in \{u=u',v=v'\}$, the vertex-angle at $w$ in $f''_l$ ($f''_r$, respectively) coincides with $a_{w,l}$ (with $a_{w,r}$, respectively).
   \end{itemize}
\item For each edge $e$ of $H''$
   \begin{itemize}
   	 \item If $e \in E(G) \setminus E(G_\rho(\mu))$, the ordered sequence of edge-angles along $e$ is the same as the one in~$H$.
   	 \item If $e \in E(G'_{\rho'}(\mu'))$, the ordered sequence of edge-angles along $e$ is the same as the one in~$H'$.
   \end{itemize}
\end{itemize}

For example~\cref{fi:substitution-c} shows the representation $H''$ obtained by substituting $H_\rho(\mu)$ with $H_{\rho'}(\mu')$ in $H$.

\begin{lemma}\label{le:substitution}
The plane labeled graph $H''$ obtained by substituting $H_\rho(\mu)$ in $H$ with a shape-equivalent $H'_{\rho'}(\mu')$ is an orthogonal representation.
\end{lemma}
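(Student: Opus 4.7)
The plan is to verify that $H''$ satisfies both characterizing properties of orthogonal representations, namely Properties \textsf{H1} and \textsf{H2}. Since $H''$ inherits a planar embedding (given explicitly by the face list in the definition) and an angle labeling, it suffices to show that (a) the vertex-angles at every vertex sum to $360^\circ$ and (b) for every face $f$ of $H''$, the quantity $N_{90}(f) - N_{270}(f) - 2 N_{360}(f)$ equals $4$ if $f$ is internal and $-4$ if $f$ is external.

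For Property \textsf{H1}, I would partition the vertices of $H''$ into three groups. First, every vertex $w \in V(G) \setminus V(G_\rho(\mu))$ keeps exactly the vertex-angles it has in $H$, so the sum at $w$ is $360^\circ$ because $H$ is an orthogonal representation. Second, every vertex $w \in V(G'_{\rho'}(\mu')) \setminus \{u',v'\}$ keeps exactly the vertex-angles it has in $H'_{\rho'}(\mu')$, so the sum is $360^\circ$ because $H'$ is an orthogonal representation. The only delicate case is $w \in \{u,v\}$: here I would use shape-equivalence to argue that the inner degree of $w$ in $G_\rho(\mu)$ equals the inner degree of the corresponding pole in $G'_{\rho'}(\mu')$ (this is built into the definition of shape-equivalence for S-nodes, and is automatic for P/R/Q-components since the poles always have inner degree two or one depending on the type). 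Consequently the two ``angular sectors'' at $w$ carved out by $f_l$ and $f_r$ in $H$ have the same opening as those carved out in $H'$; since the angles at $w$ outside $G_\rho(\mu)$ are inherited from $H$ and the angles inside $G'_{\rho'}(\mu')$ are inherited from $H'$, and since $a''_{w,l} = a_{w,l}$ and $a''_{w,r} = a_{w,r}$ by definition, the local sum at $w$ in $H''$ equals the local sum at $w$ in $H$, which is $360^\circ$.

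For Property \textsf{H2}, I would again split the faces into three groups. Faces of $H$ that are disjoint from $G_\rho(\mu)$ and faces entirely inside $G'_{\rho'}(\mu')$ pose no problem: their vertex- and edge-angles are inherited unchanged from $H$ or from $H'$, so the equation $N_{90} - N_{270} - 2N_{360} = \pm 4$ transfers directly. The only non-trivial faces are $f''_l$ and $f''_r$, obtained from $f_l$ and $f_r$ by replacing the sub-path $p_l$ (respectively $p_r$) of $H_\rho(\mu)$ with the shape-equivalent sub-path $p'_l$ (respectively $p'_r$) of $H'_{\rho'}(\mu')$. The key identity I would use is that, for a path $p$ traversed along the boundary of a face, the contribution of $p$ to $N_{90} - N_{270} - 2N_{360}$ in that face (over all interior vertex- and edge-angles of the path, excluding the two endpoint angles) equals a fixed function of the turn number of $p$; specifically, each right turn contributes $+1$ and each left turn contributes $-1$ to the face equation from the perspective of that face, whereas from the opposite face they contribute with swapped signs. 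Since shape-equivalence guarantees $t(p_l) = t(p'_l)$ and $t(p_r) = t(p'_r)$ (after a flip of $H'$ if needed), the net contribution of $p_l$ to $f_l$ coincides with the net contribution of $p'_l$ to $f''_l$, and analogously for the right side. The angles at the poles $u$ and $v$ in $f''_l$ and $f''_r$ coincide with those in $f_l$ and $f_r$ by definition. Therefore the face-sum for $f''_l$ equals that for $f_l$, and similarly for $f''_r$, so both faces satisfy \textsf{H2}.

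The main obstacle is a clean bookkeeping of the turn-number/face-sum correspondence along the replaced subpath, ensuring signs and orientation match when the path $p_l$ (resp.\ $p_r$) is traversed along the boundary of $f_l$ (resp.\ $f_r$), and verifying that shape-equivalence together with the possible flip of $H'$ really does line up $t(p_l)$ with $t(p'_l)$ and $t(p_r)$ with $t(p'_r)$ in a single coherent orientation; the Q-versus-S case has to be handled separately, but there the subpath reduces to a single edge or to a path with a single well-defined turn number, and the equivalence condition was defined precisely so that this alignment is automatic.
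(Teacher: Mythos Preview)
Your overall architecture matches the paper's proof exactly: verify \textsf{H1} by splitting the vertices into ``outside'', ``inside'', and ``poles'', and verify \textsf{H2} by splitting the faces into ``outside'', ``inside'', and $\{f''_l,f''_r\}$, using the equality of turn numbers for the last case. The \textsf{H2} argument is fine.

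There is, however, a genuine gap in your \textsf{H1} argument at the poles. You write that equal inner degree at the pole $w$ implies that ``the two angular sectors at $w$ carved out by $f_l$ and $f_r$ in $H$ have the same opening as those carved out in $H'$.'' Equal inner degree alone does \emph{not} force this: a pole with inner degree two could, a priori, have an internal angle of $90^\circ$, $180^\circ$, or $270^\circ$, and nothing in the bare definition of shape-equivalence pins this down. What you actually need is that the \emph{internal} vertex-angle at $w$ in $H_\rho(\mu)$ equals the internal vertex-angle at $w'$ in $H'_{\rho'}(\mu')$; only then does the sum at $w$ in $H''$ reproduce the sum at $w$ in $H$. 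The paper closes this gap by invoking the standing hypothesis that $H$ and $H'$ satisfy Property~\textsf{O2}: when the inner degree of a pole is two, that pole is a pole of a \D-, \X-, \L-, or \C-shaped P- or R-component (either $\mu$ itself or a child of the S-node $\mu$), and each of those shapes, by definition, has a $90^\circ$ inner angle at its poles. Hence the internal angle is $90^\circ$ on both sides, and \textsf{H1} follows. You should make this step explicit.
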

\begin{proof}
We show that $H''$ satisfies Properties {\bf \textsf{H1}} and {\bf \textsf{H2}}  of an orthogonal representation (see \cref{se:preliminaries}). Denote by $u$ and $v$ the poles of $H_\rho(\mu)$ and by $u'$ and $v'$ the poles of $H'_{\rho'}(\mu')$. Each vertex of $H''$ distinct from $u$ and $v$ inherits the labels describing its vertex-angles either from $H$ or from~$H'$. Since $H$ and $H'$ are orthogonal representations, Property~{\bf \textsf{H1}} holds for all vertices of $H''$ distinct from $u$ and $v$. Analogously, each face $f''$ of $H''$ distinct from $f''_l$ and $f''_r$ is either a face of $H$ or a face of $H'$, thus the angle labeling of the vertices and edges of $f''$ satisfies Property~{\bf \textsf{H2}}.
It remains to show that Property~{\bf \textsf{H1}} holds for $u=u'$ and $v=v'$, and that Property~{\bf \textsf{H2}} holds for $f''_l$ and $f''_r$.

Consider a pole $w \in \{u=u',v=v'\}$. We say that a vertex-angle at $w$ is \emph{internal} if it is between two consecutive edges of $H_\rho(\mu)$ or $H'_{\rho'}(\mu')$ incident to $w$. Observe that the inner degree of $w$ is at most two and that, since $H_\rho(\mu)$ and $H'_{\rho'}(\mu')$ are shape-equivalent, the inner degree of $w$ is the same in $H_\rho(\mu)$ and in $H'_{\rho'}(\mu')$. If the inner degree of $w$ is one then there is no internal vertex-angle, otherwise the internal vertex-angle at $w$ is $90^\circ$ both in $H_\rho(\mu)$ and in $H'_{\rho'}(\mu')$, because, by Property~\textsf{O2}, $w$ is a pole of either a $\D$-shaped, or an $\X$-shaped, or an $\L$-shaped, or a $\C$-shaped component; such a component coincides with $H_\rho(\mu)$ (resp. $H'_{\rho'}(\mu')$) if $\mu$ (resp. $\mu'$) is a P-node or an R-node, otherwise it is a child P-node or a child R-node of the S-node $\mu$ (resp. $\mu'$).
By definition of substitution, the sum of the vertex-angles at $w$ in $H''$ equals the sum of the vertex-angles at $w$ in $H$, which implies Property~{\bf \textsf{H1}} for $w$ in $H''$.

We finally prove that Property~{\bf \textsf{H2}} holds for $f''_l$ of $H''$ (the proof of Property~{\bf \textsf{H2}} for $f''_r$ is analogous). The vertex- or edge-angles of $f''_l$ are of three kinds: vertex-angles at the poles $u$ and $v$, vertex- or edge-angles along the path $p''_l$, and vertex- or edge-angles along the path $q''_l$ consisting of the edges of $f''_l$ minus the edges of $p''_l$. By definition of substitution, the vertex-angles at the poles $u$ and $v$ and the vertex- or edge-angles of $q''_l$ coincide with those in $f_l$. Also, since $p''_l$ and $p_l$ have the same turn number, we have that $N_{90} - N_{270}$ along $p_l$ in $H$ and along $p''_l$ in $H''$ are the same. It follows that Property~{\bf \textsf{H2}} holds for $f''_l$ of $H''$.
\end{proof}

%Properties~\textsf{O1}-\textsf{O3} of \cref{th:shapes} make it possible to compute an optimal orthogonal representation of a biconnected planar $3$-graph by looking only at a small number of possible shapes for each component. Property~\textsf{O4} guarantees that for each such shape it is enough to compute a representation that has the minimum number of bends and at most one bend per edge.

We are now ready to prove Property~\textsf{O4} of \cref{th:shapes}.
Let $H$ be an orthogonal representation of $G$ that satisfies Properties~\textsf{O1}-\textsf{O3} of \cref{th:shapes}. Let $e$ be an edge of the external face of $H$, let $T_\rho$ be the SPQR-tree of $G$ rooted at the Q-node $\rho$ corresponding to $e$, let $\mu$ be any non-root node of $T_\rho$, and let $H_\rho(\mu)$ be the orthogonal $\mu$-component of $H$ with respect to $\rho$.
%If $\mu$ is a P- or an R-node, two representations $H_\rho(\mu)$ and $H'_\rho(\mu)$ are \emph{equivalent} if they are both \D-shaped or both \X-shaped or both \C-shaped or both \L-shaped. If $\mu$ is an S-node, $H_\rho(\mu)$ and $H'_\rho(\mu)$ are \emph{equivalent} if they have the same spirality.
We say that $H_\rho(\mu)$ is \emph{optimal within its shape} if $H_\rho(\mu)$ has one bend per edge and has the minimum number of bends among all orthogonal representations of $G_\rho(\mu)$ that are shape-equivalent to $H_\rho(\mu)$ and that have at most one bend per edge.
%there is no equivalent representation $H'_\mu$ such that $H'_\mu$ has at most one bend per edge and a total number of bends that is smaller than the one of $H_\mu$.

%The proof of the next lemma uses arguments similar to~\cite{DBLP:journals/siamcomp/BattistaLV98} and it is based on a characterization of orthogonal representations first stated in~\cite{DBLP:journals/siamcomp/VijayanW85}.
%Let $G$ be a biconnected embedded planar graph and let $\phi$ be a function that assigns: $(i)$ a value in $\{90^\textrm{o},180^\textrm{o},270^\textrm{o}\}$ to each pair of consecutive edges in the circular order around each vertex $v$ of~$G$ and $(ii)$ a sequence of left and right turns to each edge $e=(u,v)$ of~$G$. Denote by $r(f)$ the \emph{roll number} of a face $f$, that is, the difference between the number of right and the number of left turns encountered while traversing the border of~$f$ clockwise. Then $\phi$ corresponds to an orthogonal representation $H$ of $G$ if and only if $r(f)=4$ for each face $f$ of $H$. We call such a function~$\phi$ the \emph{shape function} of $H$.

\begin{lemma}\label{le:O4}
Let $G$ be a biconnected planar $3$-graph distinct from $K_4$ and let $H$ be an orthogonal representation of $G$ that satisfies Properties~\textsf{O1}-\textsf{O3} of \cref{th:shapes}. Let $e$ be an edge of the external face of $H$, let $T_\rho$ be the SPQR-tree of $G$ rooted at the Q-node $\rho$ corresponding to $e$, let $\mu$ be any non-root node of $T_\rho$, and let $H_\rho(\mu)$ be the orthogonal $\mu$-component of $H$ with respect to $\rho$. If $H$ is bend-minimum then $H_\rho(\mu)$ is optimal within its shape.
\end{lemma}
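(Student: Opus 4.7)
The plan is to argue by contradiction using the substitution machinery developed just above the lemma. Suppose $H$ is bend-minimum and satisfies Properties~\textsf{O1}--\textsf{O3}, but for some non-root node $\mu$ of $T_\rho$ the component $H_\rho(\mu)$ fails to be optimal within its shape. Since $H$ satisfies~\textsf{O1}, every edge of $H$ (and hence of $H_\rho(\mu)$) has at most one bend, so the only way $H_\rho(\mu)$ can fail to be optimal within its shape is that there exists an orthogonal representation $H^\star_\rho(\mu)$ of the same pertinent graph $G_\rho(\mu)$, shape-equivalent to $H_\rho(\mu)$, with at most one bend per edge, and with strictly fewer bends than $H_\rho(\mu)$.

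Next I would apply the substitution operation, taking $G'=G$ and $\mu'=\mu$ but swapping in $H^\star_\rho(\mu)$ in place of $H_\rho(\mu)$ inside $H$. By \cref{le:substitution}, the resulting plane labeled graph $H''$ is a valid orthogonal representation of $G$ (it inherits the correct vertex-angle sums at the poles because shape-equivalence guarantees that the inner degree of each pole and, when present, the $90^\circ$ internal vertex-angle are identical in $H_\rho(\mu)$ and $H^\star_\rho(\mu)$, while the faces $f''_l, f''_r$ obtained by replacing $p_l,p_r$ with $p^\star_l,p^\star_r$ satisfy Property~\textsf{H2} because the new paths have the same turn numbers as the old ones).

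Now I count bends. The substitution changes nothing outside $H_\rho(\mu)$: all vertex-angle labels at vertices of $V(G)\setminus V(G_\rho(\mu))$, and all edge-angle sequences of edges in $E(G)\setminus E(G_\rho(\mu))$, are copied verbatim from $H$. Hence
\[
b(H'') \;=\; b(H) \;-\; b(H_\rho(\mu)) \;+\; b(H^\star_\rho(\mu)) \;<\; b(H),
\]
contradicting the bend-minimality of $H$. Therefore $H_\rho(\mu)$ must be optimal within its shape, proving the lemma.

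The main obstacle here is not the contradiction itself but making sure the substitution is genuinely legitimate, i.e.\ that swapping $H^\star_\rho(\mu)$ for $H_\rho(\mu)$ produces an orthogonal representation of exactly $G$ with the claimed bend count. This is precisely what \cref{le:substitution} delivers, but one should verify carefully that shape-equivalence is strong enough: it matches the inner degrees of the poles (so Property~\textsf{H1} is preserved at $u=u'$ and $v=v'$) and matches the turn numbers of the boundary paths (so Property~\textsf{H2} is preserved in the two newly formed faces). Once these checks are in place, the argument reduces to the one-line bend accounting above.
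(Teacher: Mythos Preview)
Your argument is essentially the paper's, but you have a genuine gap: the substitution lemma (\cref{le:substitution}) is stated only for \emph{inner} nodes of $T_\rho$, i.e., nodes that are neither the root nor the root child. The statement you are proving, however, is for any \emph{non-root} node~$\mu$, which includes the root child. When $\mu$ is the root child, the ``outside'' of $H_\rho(\mu)$ in $H$ is just the single reference edge $e$, so the substitution machinery as formulated (with faces $f_l,f_r$ bounding the component from outside) does not directly apply. You cannot simply invoke \cref{le:substitution} in that case.

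The paper handles this by splitting into two cases. For inner $\mu$ it does exactly what you do. For $\mu$ the root child it argues directly: $b(H)=b(H_\rho(\mu))+b(e)$, and since $H_\rho(\mu)$ and the hypothetical cheaper $H^\star_\rho(\mu)$ are shape-equivalent (hence compatible with the same number of bends on $e$, either $0$ or $1$ by Property~\textsf{O1}), one obtains a full representation $H^\star$ of $G$ with $b(H^\star)=b(H^\star_\rho(\mu))+b(e)<b(H)$, contradicting bend-minimality. You should add this case explicitly.
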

\begin{proof}
Suppose by contradiction that there exists another bend-minimum orthogonal representation $H'$ of $G$ with $e$ on the external face such that: $H'$ satisfies Properties~\textsf{O1}-\textsf{O3} of \cref{th:shapes}; the restriction $H'_\rho(\mu)$ of $H'$ to $G_\rho(\mu)$ is shape-equivalent to $H_\rho(\mu)$; and $b(H'_\rho(\mu)) < b(H'_\rho(\mu))$.

Assume first that $\mu$ is not the root child. By \cref{le:substitution}, there exists an orthogonal representation $H''$ of $G$ with $e$ on the external face, obtained by substituting $H_\rho(\mu)$ with $H'_\rho(\mu)$ in $H$. Since each edge of $H''$ that does not belong to $G_\rho(\mu)$ has the same number of bends as in $H$, and since in $H''$ each edge of $G_\rho(\mu)$ has the same number of bends as in $H'$,
we have $b(H'') < b(H)$, a contradiction.

Assume now that $\mu$ is the root child. Since $H$ and $H'$ satisfy Property~\textsf{\bf (O1)} of \cref{th:shapes}, $e$ has either zero or one bend in each of the two representations. Since $H_\rho(\mu)$ and $H'_\rho(\mu)$ are shape-equivalent and since $H$ and $H'$ are bend-minimum, edge $e$ must have the same number of bends $b(e)$ both in $H$ and in $H'$. It follows that $b(H)=b(H_\rho(\mu))+b(e)>b(H'_\rho(\mu))+b(e)=b(H')$, which contradicts the optimality of $H$.
\end{proof}

We conclude this section by observing that \cref{le:1-bend,le:O2-O3,le:O4} imply \cref{th:shapes}.
As a consequence, we can construct an optimal orthogonal representation of a biconnected graph by considering only a limited number of possible shapes for each component of the graph and by computing for each such shape a representation that is optimal within its shape. The remainder of the paper is devoted to proving that these representations can be computed in linear time over all possible planar embeddings of $G$.
%We start with the R-components, whose skeleton is a planar triconnected cubic graph.

% LABELING
%    __      __         ___
%   / /___ _/ /_  ___  / (_)___  ____ _
%  / / __ `/ __ \/ _ \/ / / __ \/ __ `/
% / / /_/ / /_/ /  __/ / / / / / /_/ /
%/_/\__,_/_.___/\___/_/_/_/ /_/\__, /
%                             /____/

\section{Second Ingredient: The Labeling Algorithm}\label{se:labeling}
%\section{Second Ingredient: Labeling Algorithm (\cref{th:key-result-2,th:1-connected-labeling})}\label{se:labeling}

%Let $G$ be a 1-connected planar $3$-graph. We show how to label each edge $e$ of $G$ with the number $b(e)$ of bends of an orthogonal representation $H$ of $G$ with the following properties: $(i)$ Edge $e$ is on the external face of $H$; $(ii)$ every edge of $H$ has at most one bend (see Property~\textsf{O1} of \cref{th:shapes}); $(iii)$ $H$ is bend-minimum among all orthogonal representations of $G$ that satisfy~$(i)$--$(ii)$.

Let $G$ be a planar $3$-graph and let $B$ be a biconnected component (i.e., a block) of $G$. The labeling algorithm is the second ingredient for the proof of \cref{th:main}. It associates each edge $e$ of $B$ with the number $b_e(B)$ of bends of an optimal $e$-constrained orthogonal representation of $B$. Also, it labels $B$ with the number $b_B(G)$ of bends of an optimal $B$-constrained orthogonal representation of $G$.
As explained in \cref{se:proof-structure}, while the second ingredient consists of \cref{th:fixed-embedding-cost-one,th:bend-counter,th:key-result-2,th:1-connected-labeling}, the proofs of \cref{th:fixed-embedding-cost-one,th:bend-counter} are of independent interest and are postponed to \cref{se:fixed-embedding-cost-one,se:ref-embedding,se:bend-counter}.
Hence, in this section we prove \cref{th:key-result-2,th:1-connected-labeling} assuming that \cref{th:fixed-embedding-cost-one,th:bend-counter} hold.
In particular, \cref{sse:labeling-2-connected} describes how to efficiently compute $b_e(B)$, while \cref{sse:labeling-1-connected} describes how to efficiently compute $b_B(G)$ for each block $B$ of $G$.
%
%Throughout this section we assume that \cref{th:fixed-embedding-cost-one,th:bend-counter} hold; their proofs are given in
%\cref{se:fixed-embedding-cost-one} and
%\cref{se:ref-embedding,se:bend-counter}.
%, respectively.

\subsection{Labeling biconnected graphs}\label{sse:labeling-2-connected}

%To describe the labeling algorithm for 2-connected graphs we need to define the concept of shape-cost set.

Let $G$ be an $n$-vertex biconnected graph distinct from $K_4$ and let $T$ be the SPQR-tree of $G$.
Let $\{e_1, e_2, \dots, e_m\}$ be the set of edges of $G$ and let $\{\rho_1, \rho_2, \dots, \rho_m\}$ be the Q-nodes of $T$, where $\rho_i$ corresponds to edge $e_i$ ($1 \leq i \leq m$). Let $T_{\rho_1}, T_{\rho_2}, \dots, T_{\rho_m}$ be a sequence of trees obtained by rooting $T$ at its Q-nodes.
Let $\mu$ be a non-root node of~$T_{\rho_i}$ ($1 \leq i \leq m$), i.e., $\mu \neq \rho_i$.
The {\em shape-cost set of~$\mu$} is the set $b_{\rho_i}(\mu) = \{b_{\rho_i}^{\sigma_1}(\mu), b_{\rho_i}^{\sigma_2}(\mu), \ldots, b_{\rho_i}^{\sigma_h}(\mu)\}$, where $\sigma_j$ ($1 \leq j \leq h$) is one of the representative shapes defined by \cref{th:shapes} and $b_{\rho_i}^{\sigma_j}(\mu)$ is the number of bends of an orthogonal representation of $G_{\rho_i}(\mu)$ that is optimal within shape $\sigma_j$.
%
%The {\em shape-cost set of~$\mu$} is the set $b_{\rho_i}(\mu) = \{b_{\rho_i}^{\sigma_1}(\mu), b_{\rho_i}^{\sigma_2}(\mu), \ldots, b_{\rho_i}^{\sigma_h}(\mu)\}$, where $\sigma_j$ ($1 \leq j \leq h$) is one of the representative shapes defined by \cref{th:shapes} and $b_{\rho_i}^{\sigma_j}(\mu)$ is the number of bends of an orthogonal representation $H_{\rho_i}(\mu)$ of $G_{\rho_i}(\mu)$ such that: $(i)$ the poles of $\mu$ are on the external face of $H_{\rho_i}(\mu)$; $(ii)$ every edge of $H_{\rho_i}(\mu)$ has at most one bend; $(iii)$ $H_{\rho_i}(\mu)$ has shape $\sigma_j$; $(iv)$ $H_{\rho_i}(\mu)$ is bend-minimum among all orthogonal representations of $G_{\rho_i}(\mu)$ that satisfy~$(i)$--$(iii)$.
%
Namely, if $\mu$ is a Q-node $b_{\rho_i}(\mu) = \{b_{\rho_i}^{\zerob}(\mu), b_{\rho_i}^{\oneb}(\mu)\}$; if $\mu$ is an inner P-node or an inner R-node $b_{\rho_i}(\mu) = \{b_{\rho_i}^{\d}(\mu), b_{\rho_i}^{\x}(\mu)\}$; if $\mu$ is a P-node or an inner R-node and it is the root child $b_{\rho_i}(\mu) = \{ b_{\rho_i}^{\l}(\mu), b_{\rho_i}^{\c}(\mu)\}$; if $\mu$ is an S-node $b_{\rho_i}(\mu) = \{b_{\rho_i}^{0}(\mu), b_{\rho_i}^{1}(\mu), b_{\rho_i}^{2}(\mu), b_{\rho_i}^{3}(\mu), b_{\rho_i}^{4}(\mu)\}$ where $b_{\rho_i}^{k}(\mu)$ is the number of bends of an orthogonal representation of $\mu$ that is a $k$-spiral with $k\in[0,4]$.

A first ingredient of our labeling strategy is an algorithm $\cal A$ that executes a bottom-up visit of $T_{\rho_i}$ to compute the label $b_{e_i}(G)$ ($1 \leq i \leq m$). To this aim, $\cal A$ equips each node $\mu \neq \rho_i$ of $T_{\rho_i}$ with its shape-cost set $b_{\rho_i}(\mu)$. If $k$ is the number of children of $\mu$, $\cal A$ computes $b_{\rho_i}(\mu)$ in $O(k)$ time when $i=1$, and in $O(1)$ time when $2 \leq i \leq m$. An exception is when $\mu$ is an R-node and it is the root child: In this case $\mathcal{A}$ computes $b_{e_i}(G)$ without explicitly constructing the shape-cost set of $\mu$. Crucial for algorithm $\cal A$ is to properly define the first tree in the sequence $T_{\rho_1}, T_{\rho_2}, \dots, T_{\rho_m}$. If $G$ is not triconnected, by Properties~\textsf{T1--T3} of \cref{le:spqr-tree-3-graph}, the SPQR-tree of $G$ always has an S-node adjacent to a Q-node; we choose $\rho_1$ to be such a Q-node and we say that the sequence $T_{\rho_1}, T_{\rho_2}, \dots, T_{\rho_m}$ is a \emph{good sequence} of SPQR-trees of $G$. If $G$ is triconnected, any $T_{\rho_i}$ consists of exactly one R-node and $m$ Q-nodes; in this case any possible sequence $T_{\rho_1}, T_{\rho_2}, \dots, T_{\rho_m}$ is a \emph{good sequence} of SPQR-trees of $G$.
Algorithm~$\cal A$ is described in Sections~\ref{sse:shape-cost-QPS}, \ref{sse:shape-cost-R},~and~\ref{sse:shape-cost-root}.
%(see \cref{sse:proof-key-result-2} for details).

The second ingredient of the labeling procedure is an algorithm $\cal A^+$ that exploits $\cal A$ in combination with a ``reusability principle'' to label all edges of $G$ in $O(n)$ time. Algorithm $\cal A^+$ is described in \cref{sse:reusability}. Finally, the proof of \cref{th:key-result-2} is in \cref{sse:proof-key-result-2}.

\subsubsection{Shape-cost sets of Q-, P-, and S-nodes}\label{sse:shape-cost-QPS}

%Let $G$ be a planar $3$-graph with $n$ vertices, let $\mathcal{T}$ be the block-cut vertex tree of $G$, let $B_1, B_2, \dots, B_h$ be the block-nodes of $\mathcal{T}$, and let $\mathcal{T}_{B_1}, \mathcal{T}_{B_2}, \dots, \mathcal{T}_{B_h}$ be a sequence of trees obtained by rooting $\mathcal{T}$ at its block-nodes. Let $B'$ be a block of $G$ and let $k$ be the degree of block-node $B'$ in $\mathcal{T}$. There exists an algorithm $\mathcal{A}$ that executes a post-order traversals of each $\mathcal{T}_{B_i}$ ($i = 1, 2, \dots, h$) and computes the cost $b_{B_i}(B')$ in $O(k)$ time when $i=1$ and in $O(1)$ time when $2 \leq i \leq h$.

%\textcolor{red}{In some cases, if we know that a specific representative shape is never used in a bend-minimum representation of the graph, we conventionally set its cost to infinity.}

\begin{lemma}\label{le:shape-cost-set-inner-Q}
    Let $G$ be a biconnected planar $3$-graph with $m$ edges and let $T_{\rho_1}, T_{\rho_2}, \dots, T_{\rho_m}$ be a good sequence of SPQR-trees of $G$.
    Let $\mu$ be a Q-node of $T_{\rho_i}$ distinct from $\rho_i$. There exists an algorithm that computes $b_{\rho_i}(\mu)$ in $O(1)$ time.
\end{lemma}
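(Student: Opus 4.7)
The plan is to observe that when $\mu$ is a Q-node, its pertinent graph $G_{\rho_i}(\mu)$ consists of a single edge, so the shape-cost set has a trivial closed-form description that does not depend on $i$ or on the particular Q-node chosen. By the definition given just before the lemma, for a Q-node we have $b_{\rho_i}(\mu) = \{b_{\rho_i}^{\zerob}(\mu), b_{\rho_i}^{\oneb}(\mu)\}$, where the two entries are the bend counts of orthogonal representations of $G_{\rho_i}(\mu)$ that are optimal within the \zeroB-shape and the \oneB-shape, respectively.

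First, I would unpack the definitions of \zeroB-shape and \oneB-shape from Property~\textsf{O1} of \cref{th:shapes}. The \zeroB-shape forces the single edge to be drawn as a straight-line segment, so the representation has exactly $0$ bends; hence $b_{\rho_i}^{\zerob}(\mu) = 0$. The \oneB-shape forces the edge to have exactly one bend, so the representation has exactly $1$ bend; hence $b_{\rho_i}^{\oneb}(\mu) = 1$. Both representations trivially satisfy the at-most-one-bend-per-edge constraint, so they are indeed optimal within their respective shapes in the sense of Property~\textsf{O4}.

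Finally, since both entries $b_{\rho_i}^{\zerob}(\mu)$ and $b_{\rho_i}^{\oneb}(\mu)$ are the constants $0$ and $1$, the algorithm simply returns the pair $\{0,1\}$, which takes $O(1)$ time. No bottom-up traversal or combinatorial search is required for Q-nodes, and the computation is independent of both the index $i$ of the reference edge and the structure of the rest of $T_{\rho_i}$. There is no real obstacle here; this lemma serves as the base case for the recursion that will be invoked when combining shape-cost sets at parent nodes (P-, S-, and R-nodes) in the bottom-up visit performed by algorithm~$\cal A$ described in the subsequent subsections.
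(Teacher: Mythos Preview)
Your proposal is correct and matches the paper's own proof essentially verbatim: the paper simply observes that $b_{\rho_i}^{\zerob}(\mu)=0$ and $b_{\rho_i}^{\oneb}(\mu)=1$, so the shape-cost set is trivially computed in $O(1)$ time. Your additional unpacking of the definitions is fine but not needed.
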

\begin{proof}
	The shape-cost set of $\mu$ is $b_{\rho_i}(\mu) = \{b_{\rho_i}^{\zerob}(\mu), b_{\rho_i}^{\oneb}(\mu)\}$, where $b_{\rho_i}^{\zerob}(\mu)=0$ and $b_{\rho_i}^{\oneb}(\mu)=1$, thus it can be trivially computed in $O(1)$ time.
\end{proof}
% \begin{proof}
% 	The procedure to compute and update the shape-cost set of $\mu$ is the same. Let $e$ be the edge associated with the root of $T$ (possibly $e=e^*$).
% 	Since we look for a representation with at most one bend per edge, the shape-cost set of a Q-node is $b_{e}(\mu) = \{b_{e}^{\zerob}(\mu), b_{e}^{\oneb}(\mu)\}$, where $b_{e}^{\zerob}(\mu)=0$ and $b_{e}^{\oneb}(\mu)=1$.
% \end{proof}

\begin{lemma}\label{le:shape-cost-set-P}
    Let $G$ be a biconnected planar $3$-graph with $m$ edges and let $T_{\rho_1}, T_{\rho_2}, \dots, T_{\rho_m}$ be a good sequence of SPQR-trees of $G$.
    Let $\mu$ be a P-node of $T_{\rho_i}$, with $1 \leq i \leq m$, and assume that the shape-cost sets of the children of $\mu$ are given. There exists an algorithm that computes $b_{\rho_i}(\mu)$ in $O(1)$ time.
\end{lemma}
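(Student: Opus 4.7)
The plan is to exploit Property \textsf{T1} of \cref{le:spqr-tree-3-graph}, which ensures that a P-node $\mu$ of the SPQR-tree of a planar $3$-graph has exactly two children $\nu_1$ and $\nu_2$, each of which is either an S-node or a Q-node. By Property \textsf{O3} of \cref{th:shapes} and by the definition of representative shapes for Q-nodes, each child's shape-cost set $b_{\rho_i}(\nu_j)$ contains at most five entries, so there are only $O(1)$ combinations $(\sigma_1,\sigma_2)$, where $\sigma_j$ is a representative shape for $\nu_j$. The idea is to combine these in constant time via a fixed lookup table.

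First I would characterize, for each representative shape $\sigma$ of $\mu$ (that is, $\sigma \in \{\d,\x\}$ if $\mu$ is inner and $\sigma \in \{\l,\c\}$ if $\mu$ is the root child), the combinations of shapes of its children that yield an orthogonal $\mu$-component of shape $\sigma$. The external boundary of $H_{\rho_i}(\mu)$ consists of two edge-disjoint paths $p_l$ and $p_r$ between the poles $u$ and $v$, one contributed by each child. By \cref{le:k-spiral}, the turn number of the path of $\nu_j$ inside $H_{\rho_i}(\nu_j)$ that connects $u$ and $v$ equals the spirality $k_j$ of $\nu_j$ if $\nu_j$ is an S-node, or the number of bends along the unique edge of $\nu_j$ if $\nu_j$ is a Q-node. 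The turn numbers $t(p_l)$ and $t(p_r)$ are then entirely determined by $k_1$, $k_2$, and by the vertex-angles at the poles $u$ and $v$ inside $H_{\rho_i}(\mu)$; since $\mu$ belongs to a $3$-graph, these pole-angles are constrained by Properties \textsf{H1} and \textsf{H2} (applied to the internal face of $H_{\rho_i}(\mu)$ bounded by $p_l$ and $p_r$) to lie in a constant-size set of feasible choices.

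Next, for each target shape $\sigma$ of $\mu$ I would consult the fixed lookup table that lists the constantly-many pairs $(\sigma_1,\sigma_2)$ of children's shapes that realize shape $\sigma$ for the P-component, and set
\[
b_{\rho_i}^{\sigma}(\mu) \;=\; \min\bigl\{\, b_{\rho_i}^{\sigma_1}(\nu_1) + b_{\rho_i}^{\sigma_2}(\nu_2) \,:\, (\sigma_1,\sigma_2) \text{ is a valid combination for } \sigma \,\bigr\}.
\]
Since the table has constant size and the shape-cost sets of $\nu_1$ and $\nu_2$ are given as input, this minimization takes $O(1)$ time, so the whole shape-cost set $b_{\rho_i}(\mu)$ is computed in $O(1)$ time. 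The fact that every valid combination can be realized by an orthogonal representation whose restrictions to $G_{\rho_i}(\nu_1)$ and $G_{\rho_i}(\nu_2)$ are optimal within shapes $\sigma_1$ and $\sigma_2$, respectively, follows from \cref{le:substitution}: one can substitute each child with a shape-equivalent optimal orthogonal representation without altering the shape of the P-component.

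The main obstacle is the bookkeeping of the first step, i.e., correctly enumerating, for each pair $(k_1,k_2)$ of children's spiralities or Q-turn numbers and each feasible assignment of inner vertex-angles at the poles, the resulting pair $(t(p_l),t(p_r))$ and the inner angles at $u$ and $v$, and then matching them against the definitions of \D-, \X-, \L-, and \C-shape to obtain the lookup table. Because the numerical ranges involved are bounded by the constants coming from Property \textsf{O3} of \cref{th:shapes} and by the degree bound $\Delta(G)\le 3$, this analysis is finite and purely local.
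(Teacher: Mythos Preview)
Your proposal is correct and follows essentially the same approach as the paper: use Property~\textsf{T1} to reduce to two children with $O(1)$-size shape-cost sets, then minimize over the finitely many compatible shape pairs. The paper is simply more explicit where you stay abstract: it observes directly that for a P-node the turn numbers $t(p_l)$ and $t(p_r)$ coincide with the spiralities $k_1$ and $k_2$ of the two children (extending the notion of $k$-spiral to Q-nodes), so the lookup table is just $\{k_1,k_2\}=\{0,2\}$ for \D, $\{1,1\}$ for \X, $\{3,1\}$ for \L, and $\{4,2\}$ for \C; your discussion of pole-angle bookkeeping via \textsf{H1}/\textsf{H2} is unnecessary here since the inner angles at the poles are forced.
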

% \begin{lemma}\label{le:shape-cost-set-P}
%     Let $G$ be a biconnected planar $3$-graph distinct from $K_4$ and let $\{e_1, e_2, \dots, e_m\}$ be the edges of $G$. Let $T$ be the SPQR-tree of $G$ and let $\{\rho_1, \rho_2, \dots, \rho_m\}$ be the Q-nodes of $T$, where $\rho_i$ corresponds to edge $e_i$, with $1 \leq i \leq m$. Let $T_{\rho_1}, T_{\rho_2}, \dots, T_{\rho_m}$ be a sequence of trees obtained by rooting $T$ at its Q-nodes.
% 	Let $\mu$ be a P-node of $T_{\rho_i}$, with $1 \leq i \leq m$. There exists an algorithm that computes $b_{e_i}(\mu)$ in $O(1)$ time.
% \end{lemma}
\begin{proof}
	Since $\mu$ is a P-node and $G$ is a planar $3$-graph, $\mu$ has two children $\nu_1$ and $\nu_2$ in $T_{\rho_i}$, each being either a Q-node or an S-node (Property~\textsf{T1} of \cref{le:spqr-tree-3-graph}). For simplicity, we extend the definition of 0-spiral (resp. 1-spiral), introduced for S-nodes, to Q-nodes. Namely, we say that the \zeroB-shaped (resp. \oneB-shaped) representation of the edge associated with a Q-node is 0-spiral (resp. 1-spiral).

	Suppose first that $\mu$ is not the root child. By Property~\textsf{O2} of \cref{th:shapes}, $b_{\rho_i}(\mu) = \{b_{\rho_i}^{\d}(\mu),b_{\rho_i}^{\x}(\mu)\}$.
	A bend-minimum \D-shaped representation of $G_{\rho_i}(\mu)$ is obtained by composing in parallel a 0-spiral representation stored at $\nu_1$ with a 2-spiral representation stored at $\nu_2$, or vice versa.
	Hence, $b_{\rho_i}^{\d}(\mu)=\min\{b_{\rho_i}^0(\nu_1) + b_{\rho_i}^2(\nu_2), b_{\rho_i}^2(\nu_1) + b_{\rho_i}^0(\nu_2) \}$. Similarly, a bend-minimum \X-shaped representation of $G_{\rho_i}(\mu)$ is obtained by composing in parallel a 1-spiral representation stored at $\nu_1$ with a 1-spiral representation stored at $\nu_2$. Hence $b_{\rho_i}^{\x}(\mu)= b_{\rho_i}^1(\nu_1) + b_{\rho_i}^1(\nu_2)$. It follows that  $b_{\rho_i}(\mu)$ is computed in $O(1)$ time.
	
	Suppose now that $\mu$ is the root child. By Property~\textsf{O2} of \cref{th:shapes}, $b_{\rho_i}(\mu) = \{b_{\rho_i}^{\c}(\mu), b_{\rho_i}^{\l}(\mu)\}$. A bend-minimum \C-shaped representation for $G_{\rho_i}(\mu)$ is obtained by composing in parallel a 4-spiral representation of $G_{\rho_i}(\nu_1)$ with a 2-spiral representation of $G_{\rho_i}(\nu_2)$, or vice versa. A bend-minimum \L-shaped representation of $G_{\rho_i}(\mu)$ is obtained by composing in parallel a 3-spiral representation of $G_{\rho_i}(\nu_1)$ with a 1-spiral representation of $G_{\rho_i}(\nu_2)$, or vice versa. Hence we have: $b_{\rho_i}^{\c}(\mu)=\min\{b_{\rho_i}^4(\nu_1) + b_{\rho_i}^2(\nu_2), b_{\rho_i}^2(\nu_1) + b_{\rho_i}^4(\nu_2) \}$ and $b_{\rho_i}^{\l}(\mu)=\min\{b_{\rho_i}^3(\nu_1) + b_{\rho_i}^1(\nu_2), b_{\rho_i}^1(\nu_1) + b_{\rho_i}^3(\nu_2) \}$. It follows that $b_{\rho_i}(\mu)$ is computed in $O(1)$ time.
\end{proof}

We now turn our attention to the problem of efficiently computing the shape-cost-set of an S-node. We start with a general lemma that relates the number of bends of an orthogonal representation of an S-component with its spirality. In its generality, the lemma does not assume any bound on the maximum number of bends per edge.

Let $\mu$ be an S-node of $T_{\rho_i}$, let $u$ and $v$ be the poles of $\mu$, and let $e_{\rho_i}(\mu)=(u,v)$ be the reference edge of $\mu$ in $T_{\rho_i}$.
Let $n^{Q}_{\rho_i}$ be the number of Q-nodes that are children of $\mu$ in $T_{\rho_i}$. Let $n^a_{\rho_i}$ be the number of poles of $G_{\rho_i}(\mu)$ that have inner degree two.
%Observe that these two values may vary for $i \neq j$.
Namely, if $\mu$ is an inner node $n^a_{\rho_i}=0$ because both $u$ and $v$ have degree one in $G_{\rho_i}(\mu)$; if $\mu$ is the root child we may also have $n^a_{\rho_i}=1$ if exactly one of $u$ and $v$ has degree two in $G_{\rho_i}(\mu)$, or $n^a_{\rho_i}=2$ if both $u$ and $v$ have degree two in $G_{\rho_i}(\mu)$.
Each virtual edge of $\skel(\mu) \setminus e_{\rho_i}(\mu)$ corresponds to a child of $\mu$ in $T_{\rho_i}$, which is either a P-node or an R-node. Let $\nu_1, \dots, \nu_h$ be the children of $\mu$ that correspond to such virtual edges.
Suppose that for each $\nu_j$ an orthogonal representation $H_{\rho_i}(\nu_j)$ of $G_{\rho_i}(\nu_j)$ is given such that $H_{\rho_i}(\nu_j)$ is either $\D$-shaped or $\X$-shaped ($1 \leq j \leq h$). Let $n^{\d}_{\rho_i}$ be the number of representations in $\{H_{\rho_i}(\nu_1), \dots, H_{\rho_i}(\nu_h)\}$ that are $\D$-shaped.

\begin{lemma}\label{le:series-extra-bends}
Let $\mu$ be an S-node of $T_{\rho_i}$ and let $h \geq 0$ be the number of children of
$\mu$ that are either P-nodes or R-nodes. If $h > 0$, let
$\nu_1, \dots, \nu_h$ be the P- and R-nodes that are children of $\mu$ and, for $j = 1, \dots, h$, let $H_{\rho_i}(\nu_j)$ be an orthogonal representation of $G_{\rho_i}(\nu_j)$ that is either $\X$-shaped or $\D$-shaped. Let $k$ be any non-negative integer number. Let $H_{\rho_i}(\mu)$ be a bend-minimum orthogonal representation of $G_{\rho_i}(\mu)$ among those that verify the following properties: (i) $H_{\rho_i}(\mu)$ has spirality $k$ and (ii) for each $j = 1, \dots, h$ the restriction of $H_{\rho_i}(\mu)$ to $G_{\rho_i}(\nu_j)$ coincides with $H_{\rho_i}(\nu_j)$.
The number of bends of $H_{\rho_i}(\mu)$ on the real edges of $\skel(\mu) \setminus e_{\rho_i}(\mu)$ is ${\cal B}_{\rho_i}^k(\mu) = \max\{0, k - n^{\d}_{\rho_i} - n^Q_{\rho_i} - n^a_{\rho_i} + 1\}$.
\end{lemma}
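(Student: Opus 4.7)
The plan is to express the spirality $k$ of $H_{\rho_i}(\mu)$ as the absolute value of the signed turn along a path from the alias $u'$ to the alias $v'$, and to decompose this signed turn additively over independent local elements along the skeleton of $\mu$. The elements are: (i) each bend on a real edge of $\skel(\mu) \setminus e_{\rho_i}(\mu)$, contributing $\pm 1$ with free sign; (ii) each non-pole degree-two vertex of the S-component, i.e., a vertex shared by two consecutive real skeleton edges, contributing $\{-1, 0, +1\}$ via its $90^\circ/180^\circ/270^\circ$ vertex angle; (iii) each P/R-child $\nu_j$, whose aggregated contribution (internal signed turn along either side of $\nu_j$ plus the vertex angles at its two poles inside $H_{\rho_i}(\mu)$) depends on the shape of $H_{\rho_i}(\nu_j)$ and on the angle configurations at the two poles of $\nu_j$.

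The crux is to determine the contribution range of each type-(iii) element. By \cref{le:k-spiral} this range is independent of which side of $\nu_j$ the path passes through. Since $\nu_j$ is a P- or R-node in a planar $3$-graph, each of its poles has degree three with vertex angles forming a cyclic permutation of $(90^\circ, 90^\circ, 180^\circ)$ (its inner angle between the $p_l$- and $p_r$-edges being $90^\circ$). A direct enumeration of the two pole-angle configurations at each of the two poles together with the two flips of $\nu_j$ yields the aggregated contribution range $\{-1, 0, +1\}$ for an X-shaped $\nu_j$ and $\{-2, -1, 0, +1, +2\}$ for a D-shaped $\nu_j$; the wider range for D-shapes arises because the ``$2$-side'' of $\nu_j$ contributes $\pm 2$ to the internal signed turn, which can either reinforce or cancel the $\pm 2$ coming from the pole-angle choices.

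Counting elements, Property \textsf{T3} of \cref{le:spqr-tree-3-graph} gives $n^Q - h - 1$ type-(ii) vertices for inner S-nodes; for root-child S-nodes a direct count yields $n^Q - h - 1 + n^a$ such vertices, since the $n^a$ poles of $\mu$ with inner degree two are necessarily shared with P/R-children and therefore contribute $n^a$ fewer degree-three internal vertices. Summing maximum absolute contributions gives the free capacity
\[
C \;=\; (n^Q - h - 1 + n^a) \;+\; (h - n^{\d}) \cdot 1 \;+\; n^{\d} \cdot 2 \;=\; n^{\d} + n^Q + n^a - 1.
\]
Since the slot choices are local and independent, any signed turn in $[-(C + B), C + B]$ is realizable with exactly $B$ bends on real edges. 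The minimum $B$ satisfying $C + B \geq k$ is $B = \max\{0, k - C\}$, establishing ${\cal B}_{\rho_i}^k(\mu) = \max\{0, k - n^{\d} - n^Q - n^a + 1\}$. The main obstacle is the pole-angle enumeration for type-(iii) elements: one must verify that both sides of each $\nu_j$ always yield the same aggregated signed turn (for consistency with \cref{le:k-spiral}) and that every integer inside the claimed range is attained by some valid angle configuration, which is a finite but careful case analysis; one must also check in the root-child case that no double-counting occurs between the angle freedom at a shared pole of $\mu$ and the pole-angle freedom of the incident $\nu_j$, so that the free capacity is exactly $n^{\d} + n^Q + n^a - 1$.
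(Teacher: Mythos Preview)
Your argument is correct and rests on the same core observation as the paper, though the packaging differs. The paper proceeds by induction on $h$: contract one $\nu_j$ to a degree-two vertex, invoke the inductive bound, then reinsert $H_{\rho_i}(\nu_j)$ and argue from \cref{fi:expansion} that an \X-shape leaves the maximum no-bend spirality unchanged while a \D-shape raises it by one. Your direct additive decomposition into independent local slots---each internal degree-two vertex contributing a value in $\{-1,0,1\}$, each \X-block in $\{-1,0,1\}$, each \D-block in $\{-2,\dots,2\}$, each bend $\pm 1$---is the unrolled form of that induction: the paper's ``a \D-shape adds one more than the vertex it replaces'' becomes your ``the \D-block capacity exceeds that of a bare degree-two vertex by one''. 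Both treatments leave the verification of the \D/\X contribution ranges at comparable informality (figure versus claimed enumeration). One subtlety you flag correctly by including ``the two flips of $\nu_j$'' in the enumeration: the full symmetric range $\{-2,\dots,2\}$ for a \D-block genuinely requires the freedom to flip $H_{\rho_i}(\nu_j)$, since a fixed orientation yields only $\{0,1,2\}$ or $\{-2,-1,0\}$; both your reading and the paper's implicitly interpret condition~(ii) as coincidence up to flip, which is consistent with how the lemma is consumed in the subsequent shape-cost computations.
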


\begin{proof}
    By Property~\textsf{T3} of \cref{le:spqr-tree-3-graph}, no two virtual edges in $\skel(\mu)$ share a vertex and, if $\mu$ is an inner node, the edges of $\skel(\mu) \setminus e_{\rho_i}(\mu)$ incident to the poles of $\mu$ are distinct real edges.
    We consider two cases:

    \smallskip\paragraph{Case 1: $n^a_{\rho_i}=0$} In this case $\mu$ can be either an inner S-node or the root child. We first prove by induction on $h$ that the maximum value of spirality that an orthogonal representation of $G_{\rho_i}(\mu)$ can have without bends along the real edges of  $\skel(\mu) \setminus e_{\rho_i}(\mu)$ is $n^Q_{\rho_i} + n^{\d}_{\rho_i} - 1]$. In the base case $h=0$, i.e. $\mu$ has only Q-node children and $G_{\rho_i}(\mu)$ is a path of real edges. It is immediate to see that $G_{\rho_i}(\mu)$ admits an orthogonal representation without bends for any value of spirality in $[0,\dots,n^Q_{\rho_i} - 1]$. Suppose now that the statement holds for $h \geq 0$. We prove the statement for $h+1$. Let $\nu_j$ be any child of $\mu$ in $T_{\rho_i}$ corresponding to a virtual edge of $\skel(\mu) \setminus e_{\rho_i}(\mu)$. Let $G'_{\rho_i}(\mu)$ be the graph obtained from $G'_{\rho_i}(\mu)$ by contracting $G_{\rho_i}(\nu_j)$ into a single vertex. Let $H'_{\rho_i}(\mu)$ be the orthogonal representation of $G'_{\rho_i}(\mu)$ without bends along the real edges of  $\skel(\mu) \setminus e_{\rho_i}(\mu)$ such that $H'_{\rho_i}(\mu)$ has the maximum value of spirality. By inductive hypothesis, the value of spirality of $H'_{\rho_i}(\mu)$ is either $n^Q_{\rho_i} + n^{\d}_{\rho_i} - 1$, if $H_{\rho_i}(\nu_j)$ is $\X$-shaped, or $n^Q_{\rho_i} + n^{\d}_{\rho_i} - 2$, if $H_{\rho_i}(\nu_j)$ is $\D$-shaped.

    We reinsert $H_{\rho_i}(\nu_j)$ in $H'_{\rho_i}(\mu)$ as illustrated in \cref{fi:expansion-b} if $H_{\rho_i}(\nu_j)$ is $\X$-shaped and as in \cref{fi:expansion-c} if $H_{\rho_i}(\nu_j)$ is $\D$-shaped. From the figure it is immediate to see that reinserting the $\X$-shaped representation does not make it possible to increase the spirality without adding bends, while reinserting the $\D$-shaped representation allows us to increase the spirality by one unit without bending any real edge of $\skel(\mu) \setminus e_{\rho_i}(\mu)$ (see \cref{fi:expansion-d}). Therefore, in both cases we have that the maximum value of spirality that an orthogonal representation of $G_{\rho_i}(\mu)$ can have without bends along the real edges of  $\skel(\mu) \setminus e_{\rho_i}(\mu)$ is $n^Q_{\rho_i} + n^{\d}_{\rho_i} - 1]$.

    From the above reasoning and the fact that $\skel(\mu) \setminus e_{\rho_i}(\mu)$ has at least one real edge, it follows that for any value $k>0$ a bend-minimum orthogonal representation of $G_{\rho_i}(\mu)$ with spirality $k$ can be obtained by the one having maximum spirality and no bends and then adding the extra necessary bends on real edges. It follows that ${\cal B}_{\rho_i}^k(\mu) = \max\{0, k - n^{\d}_{\rho_i} - n^Q_{\rho_i} - n^a_{\rho_i} + 1\}$.

\begin{figure}[htb]
	\centering
	\subfloat[]{\label{fi:expansion-a}\includegraphics[page=1,width=0.15\columnwidth]{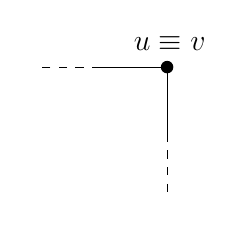}}
	\hfil
	\subfloat[]{\label{fi:expansion-b}\includegraphics[page=4,width=0.15\columnwidth]{expansion}}
	\hfil
	\subfloat[]{\label{fi:expansion-c}\includegraphics[page=2,width=0.15\columnwidth]{expansion}}
	\hfil
	\subfloat[]{\label{fi:expansion-d}\includegraphics[page=3,width=0.15\columnwidth]{expansion}}
	\caption{(a) Two consecutive edges of $H'_{\rho_i}(\mu)$.
	(b) Inserting an X-shaped $H'_{\rho_i}(\nu_j)$ between these edges does not make it possible to increase the spirality without extra bends.
	(c) Inserting a  D-shaped $H'_{\rho_i}(\nu_j)$ between these edges.
	(d) Increasing the spirality by one unit when $H'_{\rho_i}(\nu_j)$ is a D-shaped representation.}\label{fi:expansion}
\end{figure}

    \smallskip\paragraph{Case 2: $n^a_{\rho_i} > 0$} This implies that $\mu$ is the root child of $T_{\rho_i}$. In this case the spirality of an orthogonal representation $H_{\rho_i}(\mu)$ of $G_{\rho_i}(\mu)$ is computed by taking into account the possible alias edges of the poles $u$ and $v$. The alias edges can be considered as real edges when computing the maximum value of spirality that $H_{\rho_i}(\mu)$ can achieve; while alias edges cannot be bent $\skel(\mu) \setminus e_{\rho_i}(\mu)$ has at least one real edge that can be bent. With the same reasoning as in the previous case, it follows that ${\cal B}_{\rho_i}^k(\mu) = \max\{0, k - (n^Q_{\rho_i} + n^{\d}_{\rho_i} + n^a_{\rho_i} - 1)\}$.
\end{proof}

We are now ready to prove the following lemma.

\begin{lemma}\label{le:shape-cost-set-S}
    Let $G$ be a biconnected planar $3$-graph with $m$ edges and let $T_{\rho_1}, T_{\rho_2}, \dots, T_{\rho_m}$ be a good sequence of SPQR-trees of $G$.
    Let $\mu$ be an S-node of $T_{\rho_i}$, with $1 \leq i \leq m$, and assume that the shape-cost sets of the children of $\mu$ are given. There exists an algorithm that computes $b_{\rho_i}(\mu)$ in $O(n_\mu)$ time when $i = 1$ and in $O(1)$ time for $2 \leq i \leq m$, where $n_\mu$ is the number of children of $\mu$.
\end{lemma}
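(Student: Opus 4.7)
The plan is to reduce the computation of the shape-cost set $b_{\rho_i}(\mu) = \{b_{\rho_i}^0(\mu), \dots, b_{\rho_i}^4(\mu)\}$ to a simple combinatorial optimization over the children of $\mu$. By Property~\textsf{T3} of \cref{le:spqr-tree-3-graph} no child of $\mu$ is an S-node, so each child is either a Q-node or a P-/R-node; let $\nu_1,\dots,\nu_h$ be the P-/R-children, and recall that Property~\textsf{O2} of \cref{th:shapes} forces $H_{\rho_i}(\nu_j)$ to be either $\D$-shaped (with $b_{\rho_i}^{\d}(\nu_j)$ bends) or $\X$-shaped (with $b_{\rho_i}^{\x}(\nu_j)$ bends). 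Every bend of a $k$-spiral representation of $G_{\rho_i}(\mu)$ is therefore either absorbed by the chosen shape of some child $\nu_j$ or placed on a real edge of $\skel(\mu)\setminus e_{\rho_i}(\mu)$.

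For a fixed choice of shapes $\sigma_1,\dots,\sigma_h\in\{\D,\X\}$, let $n^{\d}_{\rho_i}$ be the number of $\D$-choices. By \cref{le:series-extra-bends}, the minimum number of bends that must be placed on the real edges of $\skel(\mu)\setminus e_{\rho_i}(\mu)$ to realize spirality $k$ equals ${\cal B}_{\rho_i}^k(\mu) = \max\{0,\, k - n^{\d}_{\rho_i} - n^Q_{\rho_i} - n^a_{\rho_i} + 1\}$. Enforcing the at-most-one-bend-per-edge constraint of an optimal representation requires ${\cal B}_{\rho_i}^k(\mu) \leq n^Q_{\rho_i}$; if this inequality fails for every admissible choice of shapes, we set $b_{\rho_i}^k(\mu)=\infty$. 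Otherwise
\[
b_{\rho_i}^k(\mu) \;=\; \min_{\sigma_1,\dots,\sigma_h}\ \sum_{j=1}^{h} b_{\rho_i}^{\sigma_j}(\nu_j) \;+\; \max\{0,\, k - n^{\d}_{\rho_i} - n^Q_{\rho_i} - n^a_{\rho_i} + 1\}.
\]

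The crucial observation for efficiency is that switching a child $\nu_j$ from $\X$ to $\D$ changes the sum by $\delta_j := b_{\rho_i}^{\d}(\nu_j) - b_{\rho_i}^{\x}(\nu_j)$ and reduces the extra-bend term by one (as long as it is still positive). Hence, starting from the all-$\X$ assignment, an optimal solution is obtained by greedily switching children in increasing order of $\delta_j$, and only while $\delta_j\leq 1$ and ${\cal B}_{\rho_i}^k(\mu)>0$. Consequently, for every $k\in\{0,\dots,4\}$ the optimum is fully determined by the constants $n^Q_{\rho_i}$, $n^a_{\rho_i}$, $h$, the sum $\sum_j b_{\rho_i}^{\x}(\nu_j)$, and the counts of P-/R-children with $\delta_j\leq 0$ and with $\delta_j=1$ (children with $\delta_j\geq 2$ are never switched). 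I~would maintain these \textit{five} aggregates in $O(1)$ words per S-node, compute them from scratch for $i=1$ by a single sweep over the $n_\mu$ children (the $O(n_\mu)$ bound), and answer all five queries $b_{\rho_i}^k(\mu)$ in $O(1)$ time via the closed-form expression above. For $i\geq 2$, the re-rooting from $\rho_{i-1}$ to $\rho_i$ alters at most one child of $\mu$ (the one lying on the path from $\mu$ to the new root in the unrooted SPQR-tree), so the five aggregates update in $O(1)$ time, formally relying on the re-rooting principle later formalized in \cref{sse:reusability}.

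The main obstacle is bookkeeping the corner cases rather than any deep argument: (i) when $\mu$ is the root child, some pole may have inner degree two, giving $n^a_{\rho_i}\in\{1,2\}$ and forcing the corresponding alias edges to be counted as contributing to spirality but not as bendable real edges; (ii) when the re-rooting moves an alias from one pole to the other, both $n^a_{\rho_i}$ and the identity of one child may change simultaneously; (iii) when $\mu$ has very few real-edge children, the constraint ${\cal B}_{\rho_i}^k(\mu)\leq n^Q_{\rho_i}$ may be violated for large $k$, so the $\infty$-case must be propagated correctly. Each case only affects a constant number of aggregates, and the closed-form for $b_{\rho_i}^k(\mu)$ still evaluates in $O(1)$~time.
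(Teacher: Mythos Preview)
Your high-level strategy---choose $\D$/$\X$ for each P-/R-child, invoke \cref{le:series-extra-bends} for the real-edge bends, and cache $O(1)$ aggregates at $\mu$---is the paper's, but two concrete points make the argument as written fail. First, the aggregate $\sum_j b_{\rho_i}^{\x}(\nu_j)$ need not be finite: for inner R-node children, $b_{\rho_i}^{\x}(\nu_j)$ is set to $\infty$ whenever the $\X$-shape is no cheaper than the $\D$-shape (see \cref{le:shape-cost-set-inner-R}), so your ``all-$\X$'' baseline and the greedy built on it are undefined. The paper instead stores $\textsc{sum}_{\rho_1}(\mu)=\sum_j b^{\min}_{\rho_1}(\nu_j)$ together with $n^{\d}_{\rho_1}$ (the number of children whose minimum is attained at $\D$, ties broken toward $\D$). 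With this baseline no greedy is needed in the generic case: when $n^Q_{\rho_i}\ge 3$ (inner node) or $n^Q_{\rho_i}\ge 2$ with $n^a_{\rho_i}\ge 1$ (root child), switching any child away from its $b^{\min}$ shape is never cheaper than one more real-edge bend, and feasibility is automatic since ${\cal B}^k_{\rho_i}(\mu)\le 2\le n^Q_{\rho_i}$. The remaining cases have at most three children and are dispatched by an explicit $O(1)$ case analysis; it is precisely there that a $\D$-shape may be \emph{forced} for feasibility even when $b^{\x}<b^{\d}$, something your assertion ``children with $\delta_j\ge 2$ are never switched'' does not cover. (That $\delta_j\in\{-\infty,0,1\}$ always holds does follow from \cref{le:shape-cost-set-P,le:shape-cost-set-inner-R}, but you never argue it.)

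Second, for $i\ge 2$ you update from $T_{\rho_{i-1}}$, but the reusability principle you cite (\cref{le:reusability}) does not visit the roots sequentially; the only persistent data guaranteed at $\mu$ are those stored during the visit of $T_{\rho_1}$. The paper therefore compares $T_{\rho_i}$ directly to $T_{\rho_1}$: if $\eta_{\rho_1}$ and $\eta_{\rho_i}$ denote the parents of $\mu$ in the two trees, then $\eta_{\rho_i}$ was a child of $\mu$ in $T_{\rho_1}$ and $\eta_{\rho_1}$ is a child in $T_{\rho_i}$, and one corrects the stored sum by $-\,b^{\min}_{\rho_1}(\eta_{\rho_i})+b^{\min}_{\rho_i}(\eta_{\rho_1})$ (the latter given by hypothesis) and adjusts $n^{\d}$, $n^Q$, $n^a$ in the same way. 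Your ``at most one child changes'' is also off---a parent swap alters two entries of the child set---though that is of course still $O(1)$.
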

\begin{proof}
	%Let $e^*_\mu=(u,v)$ be the reference edge of $\mu$ when $e^*$ is the reference edge of~$G$.
    % We use the same notation as in \cref{le:series-extra-bends}.
	Let $\nu_1, \dots, \nu_h$ be the children of $\mu$ that are P- or R-nodes (if any) and let $b^{\min}_{\rho_i}(\nu_j) = \min\{b^{\d}_{\rho_i}(\nu_j),$ $b^{\x}_{\rho_i}(\nu_j)\}$ ($1 \leq j \leq h$).
    %Since $\mu$ is an S-node, $\skel(\mu)$ is a simple cycle.
	We distinguish between the the following cases.

    \smallskip\paragraph{Case~1: $i = 1$ and $\mu$ is an inner node}
		By Property~{\bf \textsf{T3}} of \cref{le:spqr-tree-3-graph}
		$\skel(\mu) \setminus e_{\rho_i}(\mu)$ is a path starting and ending with a real edge and such that no two virtual edges are adjacent. This implies that $n^a_{\rho_1} = 0$ and $n^Q_{\rho_1} \geq 2$. We have two sub-cases.
    	
    	\begin{itemize}
		\item $n^Q_{\rho_1} = 2$: In this case $\mu$ has either two or three children. If $\mu$ has two children the number $b^k_{\rho_1}(\mu)$ of bends of an optimal orthogonal representation of $G_{\rho_1}(\mu)$ having spirality $k \in \{0,1,2,3,4\}$ is as follows:
		$b^0_{\rho_1}(\mu)=0$;
		$b^1_{\rho_1}(\mu)=0$;
		$b^2_{\rho_1}(\mu)=1$;
		$b^3_{\rho_1}(\mu)=2$;
		$b^4_{\rho_1}(\mu)=\infty$.
		For $k \leq 3$ these values are an immediate consequence of \cref{le:series-extra-bends}. The value $b^4_{\rho_1}(\mu)=\infty$ is a consequence of \cref{le:series-extra-bends} and of the observation that spirality four would require one edge with two bends, which is not allowed in an optimal orthogonal representation of~$G$.
		
		If $\mu$ has three children let $\nu$ be the child of $\mu$ that is either a P- or an R-node and let $b_{\rho_1}(\nu)=\{b^{\d}_{\rho_1}(\nu),b^{\x}_{\rho_1}(\nu)\}$ be the shape-cost set of $\nu$.
		The number of bends $b^k_{\rho_1}(\mu)$ of an optimal orthogonal representation of $G_{\rho_1}(\mu)$ having spirality $k \in \{0,1,2,3,4\}$ is as follows:
		% $b^0_{\rho_1}(\mu)=b^{\min}_{\rho_1}(\nu)$;
		% $b^1_{\rho_1}(\mu)=b^{\min}_{\rho_1}(\nu)$;
		% if $b^{\min}_{\rho_1}(\nu)=b^{\d}_{\rho_1}(\nu)$ then $b^2_{\rho_1}(\mu)=b^{\min}_{\rho_1}(\nu)$, else $b^2_{\rho_1}(\mu)=b^{\min}_{\rho_1}(\nu)+1$;
		% if $b^{\min}_{\rho_1}(\nu)=b^{\d}_{\rho_1}(\nu)$ then $b^3_{\rho_1}(\mu)=b^{\min}_{\rho_1}(\nu)+1$, else $b^3_{\rho_1}(\mu)=b^{\min}_{\rho_1}(\nu)+2$;
		% finally, $b^4_{\rho_1}(\mu)=b^{\d}_{\rho_1}(\nu)+2$.
		\medskip
		\begin{itemize}
		\item[$\circ$] $b^0_{\rho_1}(\mu)=b^{\min}_{\rho_1}(\nu)$;
		\item[$\circ$] $b^1_{\rho_1}(\mu)=b^{\min}_{\rho_1}(\nu)$;
		\item[$\circ$] $b^2_{\rho_1}(\mu)=b^{\min}_{\rho_1}(\nu)$, if $b^{\min}_{\rho_1}(\nu)=b^{\d}_{\rho_1}(\nu)$;
		$b^2_{\rho_1}(\mu)=b^{\min}_{\rho_1}(\nu)+1$, otherwise;
		\item[$\circ$] $b^3_{\rho_1}(\mu)=b^{\min}_{\rho_1}(\nu)+1$, if $b^{\min}_{\rho_1}(\nu)=b^{\d}_{\rho_1}(\nu)$;
		$b^2_{\rho_1}(\mu)=b^{\min}_{\rho_1}(\nu)+2$, otherwise;
		\item[$\circ$] $b^4_{\rho_1}(\mu)=b^{\d}_{\rho_1}(\nu)+2$.
        \end{itemize}

        \medskip
		For $k \leq 3$ these values are an immediate consequence of \cref{le:series-extra-bends}. For $k = 4$ the value of $b^4_{\rho_1}(\mu)$ is a consequence of \cref{le:series-extra-bends} and of the observation that spirality~$4$ with at most one bend per edge requires $H_{\rho_1}(\nu)$ to be \D-shaped.

		\item $n^Q_{\rho_1} > 2$: For each P- or R-node child $\nu_j$, $j=1, \dots, h$, we choose the shape that corresponds to $b^{\min}_{\rho_1}(\nu_j)$ where the \D-shape is preferred over the \X-shape in case of ties. Let $n^{\d}_{\rho_1}$ be the number of children of $\mu$ for which the $\D$-shape is chosen.
		By \cref{le:series-extra-bends} and since $n^a_{\rho_1} = 0$, we can achieve spirality $k \in \{0,1,2,3,4\}$ introducing ${\cal B}_{\rho_1}^k(\mu) = \max\{0, k - n^Q_{\rho_1} - n^{\d}_{\rho_1} + 1\}$ bends along the (at least three) real edges of $\skel(\mu) \setminus e_{\rho_1}(\mu)$.
		%Since there are at least three real edges in $\skel(\mu) \setminus e_{\rho_i}(\mu)$, all values of spirality in the set $\{0,1,2,3,4\}$ can be obtained with at most one bend along each real edge.
		Therefore, $b^k_{\rho_1}(\mu) = {\cal B}_{\rho_1}^k(\mu) + \sum_{j=1,...,h}b^{\min}_{\rho_1}(\nu_j)$.
		Note that in this case a bend-minimum orthogonal representation of $G_{\rho_1}(\mu)$ with spirality $k$ can always be constructed by choosing the shape of minimum cost for each~$\nu_j$; choosing a shape of a $\nu_j$ that is not of minimum cost would not be more convenient than adding a bend on a real edge of~$\skel(\mu) \setminus e_{\rho_1}(\mu)$.
		\end{itemize}

        \smallskip\paragraph{Case 2: $i = 1$ and $\mu$ is the root child}
		In this case $e_{\rho_1}(\mu)$ coincides with the edge $e_1$ corresponding to the root $\rho_1$. If $n^a_{\rho_1} = 0$, the edges of $\skel(\mu) \setminus e_{\rho_i}(\mu)$ incident to the poles of $\mu$ are two distinct real edges and the shape-cost set of $\mu$ is defined as in the previous case. We now consider the case $n^a_{\rho_1} > 0$. Since by Property~{\bf \textsf{T3}} of \cref{le:spqr-tree-3-graph}
        no two virtual edges of $\skel(\mu) \setminus e_{\rho_i}(\mu)$ are adjacent, at least one child of $\mu$ is a Q-node. Also, recall that the alias edges incident to the poles of $\mu$ can be considered as real edges that cannot be bent when computing the maximum value of spirality that an optimal orthogonal representation of $G_{\rho_1}(\mu)$ can achieve. We consider the following subcases.
        \begin{itemize}

		\item $n^Q_{\rho_1} = 1$ and $n^a_{\rho_1} = 1$. This implies that $\mu$ has exactly one child that is a P- or an R-node, i.e., $h=1$.
		The number of bends $b^k_{\rho_1}(\mu)$ of an optimal orthogonal representation of $G_{\rho_1}(\mu)$ having spirality $k \in \{0,1,2,3,4\}$ is as follows (see also \cref{fi:shape-cost-series-root-n^a=1} for an example):

		\begin{itemize}
		\item[$\circ$] $b^0_{\rho_1}(\mu)=b^1_{\rho_1}(\mu)=b^{\min}_{\rho_1}(\nu_1)$;
		\item[$\circ$] $b^2_{\rho_1}(\mu)=b^{\min}_{\rho_1}(\nu_1)$, if $b^{\min}_{\rho_1}(\nu_1)=b_{\rho_1}^{\d}(\nu_1)$;
		$b^2_{\rho_1}(\mu)=b^{\min}_{\rho_1}(\nu_1)+1$, otherwise;
		\item[$\circ$] $b^3_{\rho_1}(\mu)=b^{\d}_{\rho_1}(\nu_1)+1$;
		\item[$\circ$] $b^4_{\rho_1}(\mu)=\infty$.
        \end{itemize}

 %\medskip
 %       {\renewcommand{\arraystretch}{1.5}
%
%        \begin{center}
%        \begin{tabular}{| c | c | c | }
%        \hline
%                            & $b^{\min}_{\rho_1}(\nu)=b^{\d}_{\rho_1}(\nu)$ & $b^{\min}_{\rho_1}(\nu)< b^{\d}_{\rho_1}(\nu)$ \\
%        \hline
%        $b^0_{\rho_1}(\mu)$ & $b^{\min}_{\rho_1}(\nu)$ & $b^{\min}_{\rho_1}(\nu)$ \\ \hline
%        $b^1_{\rho_1}(\mu)$ & $b^{\min}_{\rho_1}(\nu)$ & $b^{\min}_{\rho_1}(\nu)$ \\ \hline
%        $b^2_{\rho_1}(\mu)$ & $b^{\min}_{\rho_1}(\nu)$ & $b^{\min}_{\rho_1}(\nu)+1$ \\ \hline
%        $b^3_{\rho_1}(\mu)$ & $b^{\min}_{\rho_1}(\nu)+1$ & $b^{\d}_{\rho_1}(\nu)+1 \geq b^{\min}_{\rho_1}(\nu)+2$ \\ \hline
%        $b^4_{\rho_1}(\mu)$ & $\infty$ & $\infty$ \\ \hline
%        \end{tabular}
%        \end{center}
%        }
%
%
%        \medskip
		For $k \leq 2$ these values are an immediate consequence of \cref{le:series-extra-bends}. For $k = 3$ we must choose a \D-shaped representation for $G_{\rho_1}(\nu_1)$ since otherwise the real edge of $\skel(\mu) \setminus e_{\rho_i}(\mu)$ would have two bends.
		The value $b^4_{\rho_1}(\mu)=\infty$ is a consequence of \cref{le:series-extra-bends} and of the observation that spirality 4 would require at least two bends along the real edge of $\skel(\mu) \setminus e_{\rho_i}(\mu)$ for any choice of a \D-shaped or \X-shaped representation of $G_{\rho_1}(\nu_1)$.

\begin{figure}[t]
	\centering
	\includegraphics[width=0.75\columnwidth,page=1]{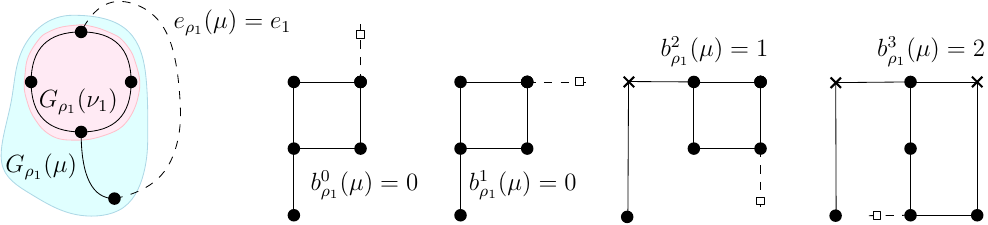}
	\caption{The shape-cost set of an S-node $\mu$ when $\mu$ is the root child, $n_{\rho_1}^Q=1$, and $n_{\rho_1}^a=1$. $G_{\rho_1}(\nu_1)$ has an X-shaped representation with zero bends and a D-shaped representation with one bend. }
	\label{fi:shape-cost-series-root-n^a=1}
\end{figure}

		\item $n^Q_{\rho_1} = 1$ and $n^a_{\rho_1} = 2$. This implies that $\mu$ has exactly three children and that $h=2$.
 		The number of bends $b^k_{\rho_1}(\mu)$ of an optimal orthogonal representation of $G_{\rho_1}(\mu)$ having spirality $k \in \{0,1,2,3,4\}$ is as follows (see also \cref{fi:shape-cost-series-root-n^a=2} for an example):
		
		\begin{itemize}
		\item[$\circ$] $b^0_{\rho_1}(\mu)=b^1_{\rho_1}(\mu)=b^2_{\rho_1}(\mu)=b^{\min}_{\rho_1}(\nu_1)+b^{\min}_{\rho_1}(\nu_2)$;
		\item[$\circ$]$b^3_{\rho_1}(\mu)=b^{\min}_{\rho_1}(\nu_1)+b^{\min}_{\rho_1}(\nu_2)$, if either $b^{\min}_{\rho_1}(\nu_1)=b^{\d}_{\rho_1}(\nu_1)$ or $b^{\min}_{\rho_1}(\nu_2)=b^{\d}_{\rho_1}(\nu_2)$;
		\item[~] $b^3_{\rho_1}(\mu)=b^{\min}_{\rho_1}(\nu_1)+b^{\min}_{\rho_1}(\nu_2)+1$, otherwise;
		\item[$\circ$] $b^4_{\rho_1}(\mu)=b^{\min}_{\rho_1}(\nu_1)+b^{\min}_{\rho_1}(\nu_2)$, if $b^{\min}_{\rho_1}(\nu_1)=b^{\d}_{\rho_1}(\nu_1)$ and $b^{\min}_{\rho_1}(\nu_2)=b^{\d}_{\rho_1}(\nu_2)$;
       % \item[~] $b^4_{\rho_1}(\mu)=b^{\min}_{\rho_1}(\nu_1)+b^{\min}_{\rho_1}(\nu_2)+1$, if either $b^{\min}_{\rho_1}(\nu_1)=b^{\d}_{\rho_1}(\nu_1)$ or $b^{\min}_{\rho_1}(\nu_2)=b^{\d}_{\rho_1}(\nu_2)$;
        \item[~] $b^4_{\rho_1}(\mu)=\min\{b^{\d}_{\rho_1}(\nu_1)+b^{\x}_{\rho_1}(\nu_2),b^{\x}_{\rho_1}(\nu_1)+b^{\d}_{\rho_1}(\nu_2)\}+1$, otherwise.
        \end{itemize}
		For $k \leq 2$ these values are an immediate consequence of \cref{le:series-extra-bends}.
		By the same lemma, if either $b^{\min}_{\rho_1}(\nu_1)=b^{\d}_{\rho_1}(\nu_1)$ or $b^{\min}_{\rho_1}(\nu_2)=b^{\d}_{\rho_1}(\nu_2)$, spirality 3 can also be  achieved without additional bends along the real edge of $\skel(\mu) \setminus e_{\rho_1}(\mu)$.
		With similar reasoning we have that the \D-shaped orthogonal representation of one among $G_{\rho_1}(\nu_1)$ and $G_{\rho_1}(\nu_2)$ can be used to achieve spirality 4 without bending more than once the real edge of $\skel(\mu) \setminus e_{\rho_1}(\mu)$.

\begin{figure}[t]
	\centering
	\includegraphics[width=1\columnwidth]{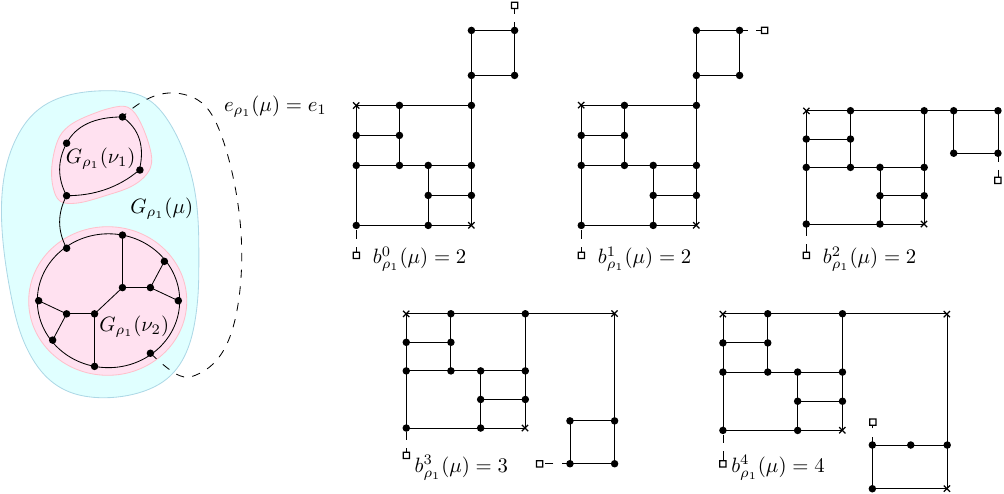}
	\caption{The shape-cost set of an S-node $\mu$ when $\mu$ is the root child, $n_{\rho_1}^Q=1$, and $n_{\rho_1}^a=2$. $G_{\rho_1}(\nu_1)$ and $G_{\rho_1}(\nu_2)$ both have an X-shaped representation with zero bends and a D-shaped representation with one bend. For $k=4$, a D-shaped orthogonal representation of $G_{\rho_1}(\nu_1)$ is chosen.}\label{fi:shape-cost-series-root-n^a=2}
\end{figure}

		\item $n^Q_{\rho_1} \geq 2$ and $1 \leq n^a_{\rho_1} \leq 2$.
		 Since $n^Q_{\rho_1} + n^a_{\rho_1} \geq 3$, by \cref{le:series-extra-bends} we have that $b^k_{\rho_1}(\mu) = 0$ for $k \leq 2$.
		If $b^{\min}_{\rho_1}(\nu_j) \neq b^{\d}_{\rho_1}(\nu_j)$ for every $1 \leq j \leq h$, spirality $k = 3$ can be achieved by inserting one extra bend along one real edge of $\skel(\mu) \setminus e_{\rho_1}(\mu)$ and spirality $k = 4$ is achieved by inserting an additional extra bend on another real edge. These extra bends are not necessary for $k=3$ if there exists one value $1 \leq j \leq h$ such that $b^{\min}_{\rho_1}(\nu_j) = b^{\d}_{\rho_1}(\nu_j)$. Also, these bends are not necessary for $k=4$ if there exist two distinct values $1 \leq j, p \leq h$ such that $b^{\min}_{\rho_1}(\nu_j) = b^{\d}_{\rho_1}(\nu_j)$ and $b^{\min}_{\rho_1}(\nu_p) = b^{\d}_{\rho_1}(\nu_p)$. Therefore, for each child $\nu_j$, we choose the shape that corresponds to $b^{\min}_{\rho_1}(\nu_j)$, where the \D-shape is preferred over the \X-shape in case of ties, and, by \cref{le:series-extra-bends}, we have $b^k_{\rho_1}(\mu) = {\cal B}_{\rho_1}^k(\mu) + \sum_{j=1,...,h}b^{\min}_{\rho_1}(\nu_j)$ for $k \in \{0,1,2,3,4\}$.
		\end{itemize}

    \medskip
	By the above case analysis we have that for $k \in \{0,1,2,3,4\}$, $b^k_{\rho_1}(\mu)$ is computed in $O(1)$ time if $n^Q_{\rho_1} \leq 2$ and in $O(h)$ time if $n^Q_{\rho_1} > 2$. Since $h = O(n_\mu)$, the shape-cost set $b_{\rho_1}(\mu)$ is computed in $O(n_\mu)$~time.

    \smallskip\paragraph{Case 3: $2 \leq i \leq m$} Let $\eta_{\rho_1}$ be the parent of $\mu$ in $T_{\rho_1}$ and let $\eta_{\rho_i}$ be the parent of $\mu$ in $T_{\rho_i}$. If $\eta_{\rho_i} = \eta_{\rho_1}$ then $b_{\rho_i}(\mu)=b_{\rho_1}(\mu)$ and we are done. Hence, assume that $\eta_{\rho_i} \neq \eta_{\rho_1}$.  If $\mu$ is an inner S-node in $T_{\rho_i}$ and $n^Q_{\rho_i} = 2$ or if $\mu$ is the root child in $T_{\rho_i}$ and $n^Q_{\rho_i} = 1$, $\mu$ has a constant number of children and the shape-cost set $b_{\rho_i}(\mu)$ can be computed in $O(1)$ time as in the case when $i = 1$. In all other cases, we can assume that when computing $b_{\rho_1}(\mu)$ the following values are stored at $\mu$: $n^Q_{\rho_1}$, $\textsc{sum}_{\rho_1}(\mu) = \sum_{j=1,...,h}b^{\min}_{\rho_1}(\nu_j)$, and $n^{\d}_{\rho_1}$.
    %, where $n^{\d}_{\rho_1}$ is the number of terms in $\textsc{sum}_{\rho_1}(\mu)$ such that $b^{\min}_{\rho_1}(\nu_j) = b^{\d}_{\rho_1}(\nu_j)$.
    Note that $\eta_{\rho_i}$ is a child of $\mu$ in $T_{\rho_1}$ while $\eta_{\rho_1}$ is a child of $\mu$ in $T_{\rho_i}$ and that, by hypothesis, the shape-cost set of each child of $\mu$ in $T_{\rho_i}$ is given.
    Also, $b^k_{\rho_i}(\mu) = {\cal B}_{\rho_i}^k(\mu) + \textsc{sum}_{\rho_1}(\mu) - b^{\min}_{\rho_1}(\eta_{\rho_i}) + b^{\min}_{\rho_i}(\eta_{\rho_1})$, where $b^{\min}_{\rho_1}(\eta_{\rho_i})$ (resp. $b^{\min}_{\rho_i}(\eta_{\rho_1})$) is equal to zero if $\eta_{\rho_i}$ is a Q-node (resp. if $\eta_{\rho_1}$ is a Q-node).
    Recall that ${\cal B}_{\rho_i}^k(\mu) = \max\{0, k - n^{\d}_{\rho_i} - n^Q_{\rho_i} - n^a_{\rho_i} + 1\}$. Since we know whether $\eta_{\rho_1}$ and $\eta_{\rho_i}$ are Q-, P-, or R-nodes, we can compute in $O(1)$ time $n^{\d}_{\rho_i}$ and $n^{Q}_{\rho_i}$ from $n^{\d}_{\rho_1}$ and $n^{Q}_{\rho_1}$, respectively. Also, $n^a_{\rho_i}$ is computed in $O(1)$ time by looking at the degree of the poles of $\mu$ in $T_{\rho_i}$. It follows that ${\cal B}_{\rho_i}^k(\mu)$ and $b^k_{\rho_i}(\mu)$ can be computed in $O(1)$ time for $k \in \{0,1,2,3,4\}$, i.e., $b_{\rho_i}(\mu)$ can be computed in $O(1)$ time.
\end{proof}

\noindent An immediate consequence of the proof of \cref{le:shape-cost-set-S} is the following corollary.

\begin{corollary}\label{co:elbow-function}
	Let $T_{\rho_i}$ be the SPQR-tree of $G$ rooted at a Q-node $\rho_i$ and let $\mu$ be an inner S-node of~$T_{\rho_i}$. There exists a value $1 \leq \tau_{\rho_i}(\mu) \leq 4$ such that in the shape-cost set of $\mu$ the following holds:
	$(i)$ for $k$ such that $0 \leq k \leq \tau_{\rho_i}(\mu)$, we have $b^k_{\rho_i}(\mu) = b^0_{\rho_i}(\mu)$;
	$(ii)$ for $k = \tau_{\rho_i}(\mu) + 1$, we have $b^k_{\rho_i}(\mu) = b^0_{\rho_i}(\mu) + 1$;
	$(iii)$ for $k$ such that $\tau_{\rho_i}(\mu) +2 \leq k \leq 4$, we have $b^k_{\rho_i}(\mu) > b^{k-1}_{\rho_i}(\mu)$.
%	Let $T_{\rho_i}$ be the SPQR-tree of $G$ rooted at a Q-node $\rho_i$ and let $\mu$ be an inner S-node of~$T_{\rho_i}$. There exists a threshold value $1 \leq \tau_{\rho_i}(\mu) \leq 4$ such that in the shape-cost set of $\mu$ we have the following: $(i)$ for every $k \leq \tau_{\rho_i}(\mu)$, $b^k_{\rho_i}(\mu) = b^0_{\rho_i}(\mu)$; $(ii)$ for $k = \tau_{\rho_i}(\mu) + 1$, $b^k_{\rho_i}(\mu) = b^0_{\rho_i}(\mu) + 1$; $(iii)$ for every $k > \tau_{\rho_i}(\mu) + 1$ (if any), $b^k_{\rho_i}(\mu) \geq b^{k-1}_{\rho_i}(\mu)$.
\end{corollary}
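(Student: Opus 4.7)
The plan is to read off the value $\tau_{\rho_i}(\mu)$ directly from the bend-count formulas established in the proof of \cref{le:shape-cost-set-S} for an inner S-node, and then match them to the three structural properties claimed in the statement. Since $\mu$ is inner, we have $n^a_{\rho_i}(\mu) = 0$, and by Property~{\bf \textsf{T3}} of \cref{le:spqr-tree-3-graph} the path $\skel(\mu) \setminus e_{\rho_i}(\mu)$ contains at least two real edges, so $n^Q_{\rho_i}(\mu) \geq 2$. Letting $n^{\d}_{\rho_i}(\mu)$ denote the number of P- or R-node children of $\mu$ for which the greedy rule of \cref{le:shape-cost-set-S} selects the \D-shape (preferring \D-shape over \X-shape on ties), I would set
\[
\tau_{\rho_i}(\mu) \;=\; \min\bigl\{\, n^{\d}_{\rho_i}(\mu) + n^Q_{\rho_i}(\mu) - 1,\; 4 \,\bigr\},
\]
which lies in $[1,4]$ because $n^Q_{\rho_i}(\mu) \geq 2$ and $n^{\d}_{\rho_i}(\mu) \geq 0$.

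Next I would verify properties~(i)--(iii) by revisiting the subcases of Case~1 in the proof of \cref{le:shape-cost-set-S}. In the generic subcase $n^Q_{\rho_i}(\mu) > 2$, the identity $b^k_{\rho_i}(\mu) = {\cal B}^k_{\rho_i}(\mu) + \sum_{j} b^{\min}_{\rho_i}(\nu_j)$, together with \cref{le:series-extra-bends}, which under $n^a_{\rho_i}(\mu) = 0$ yields ${\cal B}^k_{\rho_i}(\mu) = \max\{0,\, k - n^{\d}_{\rho_i}(\mu) - n^Q_{\rho_i}(\mu) + 1\}$, rewrites as $b^k_{\rho_i}(\mu) = b^0_{\rho_i}(\mu) + \max\{0,\, k - \tau_{\rho_i}(\mu)\}$, from which (i)--(iii) follow by inspection. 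For the boundary subcase $n^Q_{\rho_i}(\mu) = 2$, I would substitute the explicit bend sequences listed in that proof: two Q-children give $(0,0,1,2,\infty)$ and $\tau = 1$; three children with $b^{\min}(\nu) = b^{\d}(\nu)$ give $(c,c,c,c+1,c+2)$ and $\tau = 2$; three children with $b^{\x}(\nu) < b^{\d}(\nu)$ give $(c,c,c+1,c+2,b^{\d}(\nu)+2)$ and $\tau = 1$, where $b^{\d}(\nu)+2 > c+2$ ensures property~(iii) at $k=4$.

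The only subtle point I anticipate is the boundary instance $n^Q_{\rho_i}(\mu) = 2$ with two Q-children, where $b^4_{\rho_i}(\mu) = \infty$ because spirality~$4$ would require three bends distributed over only two real edges of $\skel(\mu) \setminus e_{\rho_i}(\mu)$, which violates the one-bend-per-edge bound. The closed-form $b^k = b^0 + \max\{0, k-\tau\}$ does not literally extend to $k = 4$ here; nevertheless property~(iii) is trivially preserved by $\infty > b^3_{\rho_i}(\mu)$, so the statement still holds. Outside of this single boundary situation, the corollary is a direct structural rereading of the bend-count formulas produced by \cref{le:shape-cost-set-S}.
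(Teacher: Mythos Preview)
Your proposal is correct and follows essentially the same route as the paper: both arguments read the threshold $\tau_{\rho_i}(\mu)$ off the explicit bend-count sequences computed in the proof of \cref{le:shape-cost-set-S} (Case~1, inner S-node) and then check properties (i)--(iii) by inspection of each subcase. Your unified formula $\tau_{\rho_i}(\mu)=\min\{n^{\d}_{\rho_i}+n^Q_{\rho_i}-1,\,4\}$ agrees with the paper's case split in every instance, and your explicit cap at~$4$ is in fact slightly more careful than the paper's own statement of the third case, which writes $\tau_{\rho_i}(\mu)=n^Q_{\rho_i}+n^{\d}_{\rho_i}-1$ without the cap.
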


\begin{proof}
By the proof of \cref{le:shape-cost-set-S} we have the following values $\tau_{\rho_i}(\mu)$ that satisfy the statement:
\begin{itemize}
\item If $n^Q_{\rho_i} = 2$ and $\mu$ has exactly two children then $\tau_{\rho_i}(\mu) = 1$.
\item If $n^Q_{\rho_i} = 2$ and $\mu$ has three children, let $\nu$ be the child of $\mu$ distinct from the two Q-node children. If $b^{\min}_{\rho_i}(\nu) = b^{\x}_{\rho_i}(\nu)$ then $\tau_{\rho_i}(\mu) = 1$. If $b^{\min}_{\rho_i}(\nu) = b^{\d}_{\rho_i}(\nu)$ then $\tau_{\rho_i}(\mu) = 2$.
\item If $n^Q_{\rho_i} \geq 3$ then $\tau_{\rho_i}(\mu) = n^Q_{\rho_i} + n^{\d}_{\rho_i} - 1 \geq 2$.
\end{itemize}

% If $n^Q_{\rho_i} = 2$ and $\mu$ has exactly two children, by the proof of \cref{le:shape-cost-set-S} we have $\tau_{\rho_i}(\mu) = 1$. Indeed $b^{0}_{\rho_i}(\mu) = b^{1}_{\rho_i}(\mu) = 0$, $b^{2}_{\rho_i}(\mu) = 1 = b^{1}_{\rho_i}(\mu) + 1$, $b^{3}_{\rho_i}(\mu) = 2 > b^{2}_{\rho_i}(\mu)$, and $b^{4}_{\rho_i}(\mu) = \infty > b^{3}_{\rho_i}(\mu)$.
%
% If $n^Q_{\rho_i} = 2$ and $\mu$ has three children, let $\nu$ be the child of $\mu$ distinct from the two Q-node children. By the proof of \cref{le:shape-cost-set-S} we have the following:
%
% (a) If $b^{\min}_{\rho_i}(\nu) = b^{\x}_{\rho_i}(\nu)$ then $\tau_{\rho_i}(\mu) = 1$. Indeed $b^{0}_{\rho_i}(\mu) = b^{1}_{\rho_i}(\mu) = b^{\min}_{\rho_i}(\nu)$, $b^{2}_{\rho_i}(\mu) = b^{\min}_{\rho_i}(\nu) + 1 = b^{1}_{\rho_i}(\mu) + 1$, $b^{3}_{\rho_i}(\mu) = b^{\min}_{\rho_i}(\nu) + 2 > b^{2}_{\rho_i}(\mu)$, and $b^{4}_{\rho_i}(\mu) = b^{\d}_{\rho_i}(\nu) + 2 > b^{3}_{\rho_i}(\mu)$.
%
% (b) If $b^{\min}_{\rho_i}(\nu) = b^{\d}_{\rho_i}(\nu)$ then $\tau_{\rho_i}(\mu) = 2$. Indeed $b^{0}_{\rho_i}(\mu) = b^{1}_{\rho_i}(\mu) = b^{2}_{\rho_i}(\mu) = b^{\min}_{\rho_i}(\nu)$, $b^{3}_{\rho_i}(\mu) = b^{\min}_{\rho_i}(\nu) + 1 = b^{2}_{\rho_i}(\mu) +1$, and $b^{4}_{\rho_i}(\mu) = b^{\min}_{\rho_i}(\nu) + 2 > b^{3}_{\rho_i}(\mu)$.
%
% If $n^Q_{\rho_i} \geq 3$ then, by the proof of \cref{le:shape-cost-set-S}, we have $\tau_{\rho_i}(\mu) = n^Q_{\rho_i} + n^{\d}_{\rho_i} - 1 \geq 2$ and it is immediate to see that (i), (ii), and (iii) hold.
\end{proof}

\noindent In the following we call $\tau_{\rho_i}(\mu)$ the \emph{spirality threshold} of the shape-cost set of the S-node~$\mu$.

%\begin{corollary}\label{co:elbow-function}
%Let $T$ be the SPQR-tree of $G$ with respect to an edge $e$ and let $\mu$ be an inner S-node of $T$. The shape-cost set of $\mu$ has the following properties:
%(i) $b^0_e(\mu) = b^1_e(\mu)$; (ii) $b^1_e(\mu) \leq b^2_e(\mu) \leq b^1_e(\mu) + 1$; (iii) $b^2_e(\mu) \leq b^3_e(\mu) \leq b^4_e(\mu)$.
%% \item $b^3_e(\mu) \leq b^1_e(\mu) + 2$
%% \item $b^4_e(\mu) \geq b^3_e(\mu)$
%% Let $b_\mu(k)$ be the number of bends of $H_\mu$.
%% Let $k^* = n^{\d}_{e^*} + n^Q_{e^*} + n^a_{e^*} - 1$.
%% For every $k \leq k^*$ we have that $b_\mu(k) = b_\mu(0)$.
%% For every $k^* < k \leq k^* + n^Q_{e^*}$ we have that $b_\mu(k) = b_\mu(k-1) + 1$.
%% For every $k > k^* + n^Q_{e^*}$ we have that $b_\mu(k) \geq b_\mu(k-1)$.
%%  and let $\nu_1, \nu_2, \dots, \nu_h$ be the children of $\mu$ that are either P-nodes or R-nodes. Let $k$ be any non-negative integer number. Suppose that $G_\mu$ admits an orthogonal representation $H_\mu$ such that:
%% (i) $H_\mu$ has spirality $k$;
%% (ii) each edge of $H_\mu$ has at most one bend;
%% (iii) for $i = 1, \dots h$, the restriction of $H_\mu$ to $G_{\nu_i}$ is either $\X$-shaped or $\D$-shaped; and
%% (iv) $H_\mu$ has the minimum number of bends among those satisfying properties (i)--(iii).
%\end{corollary}

\subsubsection{Shape-cost sets of inner R-nodes}\label{sse:shape-cost-R}
%In this subsection we assume that \cref{th:fixed-embedding-cost-one} and \cref{th:bend-counter} hold. The proof of \cref{th:fixed-embedding-cost-one} will be given in \cref{se:ref-embedding-fixed-embedding} while the proof of \cref{th:bend-counter} will be given in \cref{se:ref-embedding-variable-embedding}.
Let $T_{\rho_1}, \dots, T_{\rho_m}$ be a good sequence of SPQR-trees of $G$ and assume that the currently visited node $\mu$ of $T_{\rho_i}$ ($1 \leq i \leq m$) is an inner R-node (i.e., $\mu$ is not the root child). Recall that $\skel(\mu)$ is a triconnected planar graph consisting of real and virtual edges.
By Properties~\textsf{T1} and~\textsf{T2} of \cref{le:spqr-tree-3-graph}, each virtual edge of $\skel(\mu)$ corresponds to an S-node adjacent to $\mu$ in $T_{\rho_i}$.
Let $\eta_{\rho_i}$ be the parent of $\mu$ in $T_{\rho_i}$ and let $e_{\rho_i}(\mu)$ be its corresponding virtual edge. Edge $e_{\rho_i}(\mu)$ is the reference edge of $\skel(\mu)$ and it is on the boundary of the external face of $\skel(\mu)$ connecting the poles of $\mu$. For every S-node child $\nu$ of $\mu$ in $T_{\rho_i}$, we denote by $e_\nu$ the corresponding virtual edge in $\skel(\mu)$.

Graph $\skel(\mu)$ has two planar embeddings with $e_{\rho_i}(\mu)$ on the external face.
By \cref{th:shapes}, there exists a bend-minimum orthogonal representation of $G$ such that its restriction to $G_{\rho_i}(\mu)$ is either \X- or \D-shaped, thus the shape-cost set of $\mu$ is $b_{\rho_i}(\mu) = \{b_{\rho_i}^{\d}(\mu), b_{\rho_i}^{\x}(\mu)\}$.
As anticipated in \cref{se:proof-structure}, we model the problem of computing $b_{\rho_i}(\mu)$ as the problem of computing a \emph{cost-minimum} orthogonal representation $H(\skel(\mu))$ of $\skel(\mu)$ with desired properties. Namely, we label each edge $e$ of $\skel(\mu)$ with a non-negative integer $\flex(e)$, called the \emph{flexibility} of $e$ (defined below). Recall that the cost $c(e)$ of an edge $e$ in $H(\skel(\mu))$ is the number $b(e)$ of bends along $e$ that exceed its flexibility, i.e., $c(e) = \max \{0,b(e)-\flex(e)\}$; also recall that the cost of $H(\skel(\mu))$ is the sum of the costs of its edges.
%
%To compute the values $b_{e}^{\d}(\mu)$ and $b_{e}^{\x}(\mu)$ we first label each edge $e'$ of $\skel(\mu)$ with a non-negative integer $\flex(e')$, called the \emph{flexibility} of $e'$.
%We then compute an orthogonal representation $H(\skel(\mu))$ of $\skel(\mu)$. For an edge $e'$ of $\skel(\mu)$, the \emph{cost} $c(e')$ of $e'$ in $H(\skel(\mu))$ is the number $b(e')$ of bends along $e'$ that exceed its flexibility, i.e., $c(e') = \min \{0,b(e')-\flex(e')\}$. The \emph{cost} of $H(\skel(\mu))$ is the sum of the costs of its edges.
For every edge $e$ of $\skel(\mu)$, $\flex(e)$ is defined as follows:
\begin{itemize}
	\item If $e$ is a real edge, $\flex(e)=0$. This models the fact that any bend along a real edge of $\skel(\mu)$ increases the values of both $b_{\rho_i}^{\d}(\mu)$ and $b_{\rho_i}^{\x}(\mu)$. 	
	\item If $e = e_\nu$ is a virtual edge distinct from $e_{\rho_i}(\mu)$, $\flex(e) = \tau_{\rho_i}(\nu)$, where $\tau_{\rho_i}(\nu)$ is the spirality threshold of the shape-cost set of $\nu$ in $T_{\rho_i}$. This models the fact that, by \cref{co:elbow-function}, the value $b^k_{\rho_i}(\nu)$ is minimum for any spirality $k \leq \tau_{\rho_i}(\nu)$.
	\item If $e = e_{\rho_i}(\mu)$, $\flex(e)=2$ or $\flex(e)=3$ depending on whether we compute $b_{\rho_i}^{\d}(\mu)$ or $b_{\rho_i}^{\x}(\mu)$, respectively. As it will be shown, these values of flexibility are used to guarantee \D-shaped or \X-shaped orthogonal representations of $\skel(\mu) \setminus e_{\rho_i}(\mu)$ and, therefore, of $G_{\rho_i}(\mu)$.
\end{itemize}

\smallskip
From now on, an edge $e$ will be also called \emph{flexible} if $\flex(e)>0$ and \emph{inflexible} if $\flex(e)=0$.
At a high-level view, $b_{\rho_i}^{\d}(\mu)$ (resp. $b_{\rho_i}^{\x}(\mu)$) will be computed as follows (see \cref{le:shape-cost-set-inner-R} for details):
\begin{itemize}
\item Let $f'$ and $f''$ be the two faces incident to $e_{\rho_i}(\mu)$ in $\skel(\mu)$. By combining \cref{th:fixed-embedding-cost-one} and \cref{le:d-shape-triconnected,le:x-shape-triconnected}, for each face $f \in \{f', f''\}$, we compute the cost $\xi^{\d}_f(\mu)$ (resp. $\xi^{\x}_f(\mu)$) of a cost-minimum orthogonal representation $H(\skel(\mu))$ of $\skel(\mu)$ among those that satisfy the following constraints: (i) $f$ is the external face; (ii) $H(\skel(\mu)) \setminus e_{\rho_i}(\mu)$ is \D-shaped (resp. \X-shaped); and (iii) each inflexible edge is bent at most once.
Also, since $\flex(e_{\rho_i}(\mu)) \geq 2$, by \cref{th:fixed-embedding-cost-one} $H(\skel(\mu))$ has the additional property that each flexible edge is bent no more than its flexibility.

\item Let $\xi^{\d}_{\min}(\mu) = \min \{\xi^{\d}_{f'}(\mu), \xi^{\d}_{f''}(\mu)\}$ (resp. $\xi^{\x}_{\min}(\mu) = \min \{\xi^{\x}_{f'}(\mu), \xi^{\x}_{f''}(\mu)\}$). The value $b_{\rho_i}^{\d}(\mu)$ (resp. $b_{\rho_i}^{\x}(\mu)$) is
obtained by adding to $\xi^{\d}_{\min}(\mu)$ (resp. $\xi^{\x}_{\min}(\mu)$), for each S-node child $\nu$ of $\mu$, the cost of an optimal orthogonal representation of~$G_{\rho_i}(\nu)$ with spirality $b(e_\nu)$. Since we guarantee that $b(e_\nu) \leq \flex(e_\nu)$, by \cref{co:elbow-function} this cost equals $b^0_{\rho_i}(\nu)$.
\end{itemize}

\begin{lemma}\label{le:d-shape-triconnected}
	Let $\mu$ be an inner R-node of $T_{\rho_i}$ and let $n_\mu$ be the number of children of $\mu$ in $T_{\rho_i}$. Let $e_{\rho_i}(\mu) = (u,v)$ be the reference edge of $\skel(\mu)$.
	%and let $\flex(e_{\rho_i}(\mu))=2$.
	Let $H$ be a planar orthogonal representation of $\skel(\mu)$ with $e_{\rho_i}(\mu)$ on the external face and $b(e_{\rho_i}(\mu)) \leq 2$.
	There exists an orthogonal representation $H^*$ of $\skel(\mu)$ such that: (i) $b(e_{\rho_i}(\mu)) = 2$ in $H^*$; (ii) $H^* \setminus e_{\rho_i}(\mu)$ is \D-shaped; and (iii) each edge distinct from $e_{\rho_i}(\mu)$ has in $H^*$ no more bends than it has~in~$H$. Also, $H^*$ can be computed in $O(n_\mu)$-time.
\end{lemma}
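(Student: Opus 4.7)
The plan is to construct $H^*$ by modifying $H$ locally, around the reference edge $e_{\rho_i}(\mu)$, ensuring that no edge other than $e_{\rho_i}(\mu)$ gains a bend. First I would identify the internal face $f$ of $H$ incident to $e_{\rho_i}(\mu)$, and let $p$ and $q$ denote the paths from $u$ to $v$ along the boundaries of the external face and of $f$ respectively, neither using $e_{\rho_i}(\mu)$. When $e_{\rho_i}(\mu)$ is removed, the external face merges with $f$ and the new external boundary is $p \cup q$, so the \D-shape property for $H^* \setminus e_{\rho_i}(\mu)$ amounts to requiring $\{t(p), t(q)\} = \{0, 2\}$ in $H^*$.

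The analytical core is to apply Property~\textsf{H2} to the external face and to $f$ in signed-turn form, and to add the two equations. Since $\skel(\mu)$ is cubic, at each pole $w \in \{u, v\}$ the three incident angles form the multiset $\{90^\circ, 90^\circ, 180^\circ\}$, so the sum of their signed contributions equals $2$. A short calculation then shows that, in $H^*$, the sum of the signed turns of $p$ and of $q$ (measured from the merged-face side) is determined only by the angles at $u$ and $v$ belonging to the face that is neither the external one nor $f$. Matching this against the \D-shape requirement forces both of these ``third-face'' angles to be $90^\circ$ in $H^*$ and also pins down which of the sign patterns $(0,-2)$ or $(-2,0)$ the signed turns of $p$ and $q$ must realize.

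With this target in mind, $H^*$ is obtained from $H$ in a few local steps. For each pole $w$ whose third-face angle equals $180^\circ$ in $H$, I would swap that angle with one of the two angles incident to $e_{\rho_i}(\mu)$ at $w$; combinatorially this corresponds to moving a bend away from an edge incident to $w$ and onto $e_{\rho_i}(\mu)$, which preserves property~(iii) by construction. Afterwards, I would adjust the two bends of $e_{\rho_i}(\mu)$ so that the signed turn they contribute to the external face of $\skel(\mu)$ is exactly the value that, inserted in the \textsf{H2} equations, produces the required $\{0, -2\}$ turn distribution along $p$ and $q$.

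The main obstacle is to verify that all these local modifications are always feasible within the two-bend budget of $e_{\rho_i}(\mu)$ and without ever increasing the bend count of a non-reference edge. I would dispatch this by a short case analysis on $b(e_{\rho_i}(\mu)) \in \{0,1,2\}$ in $H$ and on the initial values of the two third-face angles at $u$ and $v$, exhibiting an explicit local modification in each case and verifying properties (i)--(iii). For the time complexity: identifying $f$ and the paths $p, q$ requires a single traversal of the face boundaries incident to $e_{\rho_i}(\mu)$, whose total size is $O(n_\mu)$, while all other modifications are local at $u$, $v$ and $e_{\rho_i}(\mu)$ and cost $O(1)$. Hence the construction runs in $O(n_\mu)$ time.
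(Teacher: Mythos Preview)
Your approach has a genuine gap. The turn numbers $t(p)$ and $t(q)$ are determined solely by the angles at the internal vertices of $p$ and $q$ and by the bends on their edges; the vertex-angles at the endpoints $u,v$ and the bends on $e_{\rho_i}(\mu)$ do not enter these quantities. Hence your local modifications at $u$, $v$, and $e_{\rho_i}(\mu)$ leave $t(p)$ and $t(q)$ unchanged. Your H2 computation is correct in showing that the \emph{sum} of the signed turns of $p$ and $q$ is governed by the two third-face angles, but this sum being $-2$ is compatible with, e.g., signed turns $(-1,-1)$ (an \X-shape) or $(-4,2)$, not only $(0,-2)$ or $(-2,0)$. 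A direct check of H2 on the external face of $H$ shows that when $b(e_{\rho_i}(\mu))=0$ the signed turn of $p$ lies in $\{-6,-5,-4\}$, so $t(p)\ge 4$; no amount of tinkering at the poles or on $e_{\rho_i}(\mu)$ brings this to $0$ or $2$. Your ``swap the third-face angle'' step, phrased as moving a bend off an incident edge, also fails in general: rotating an edge at $w$ may just as well \emph{create} a bend on the adjacent $p$- or $q$-edge as absorb one, so property~(iii) is not preserved by construction.

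The paper avoids this obstruction by a global recomputation that nevertheless runs in $O(n_\mu)$ time. It passes to the rectilinear image $\overline{H}$ (turning every bend into a degree-$2$ vertex), removes the subdivided reference edge, observes that the resulting plane graph is still good, and then applies \textsf{NoBendAlg} with $u$, $v$, and two suitable degree-$2$ vertices $x,y$ on $p$ as the four designated corners. Lemma~\ref{le:NoBendAlg}(ii) forces the turn numbers between consecutive corners to be zero, which makes the inverse representation $H^-$ \D-shaped; one then reinserts $e_{\rho_i}(\mu)$ with two bends. Property~(iii) holds because the bend-vertices of $\overline{H}\setminus\overline{e}_{\rho_i}(\mu)$ are the same in the recomputed drawing, so each edge of $\skel(\mu)\setminus e_{\rho_i}(\mu)$ recovers exactly the bends it had in $H$. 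The essential idea you are missing is that a global redraw is needed, but can be made bend-preserving by working on the rectilinear image.
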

\begin{proof}
    Consider the rectilinear image $\rect{H}$ of $H$ and its underlying graph $\rect{\skel}(\mu)$ (see, for example, \cref{fi:property-a-1,fi:property-a-2} for a schematic illustration). Since $\rect{H}$ has no bends, $\rect{\skel}(\mu)$ is a good plane graph, i.e., it satisfies the conditions of \cref{th:RN03}.
    Denote by $\rect{e}_{\rho_i}(\mu)$ the subdivision of $e_{\rho_i}(\mu)$ in $\rect{H}$ (path $\rect{e}_{\rho_i}(\mu)$ coincides with $e_{\rho_i}(\mu)$ if $b(e_{\rho_i}(\mu))=0$).
	Denote by $p$ the path of the external face $f_{\textrm{ext}}$ of $\rect{H}$ between $u$ and $v$ not containing $\rect{e}_{\rho_i}(\mu)$.
	Since $b(e_{\rho_i}(\mu)) \leq 2$ and since $f_{\textrm{ext}}$ has at least four vertices that form $270^\circ$ angles, path $p$ in $\rect{H}$ has at least two degree-2 vertices $x$ and $y$ corresponding to these angles (see, for example, \cref{fi:property-a-2}).

	\begin{figure}[tb]
		\centering
		\subfloat[$H$]{\label{fi:property-a-1}\includegraphics[page=2,height=0.15\columnwidth]{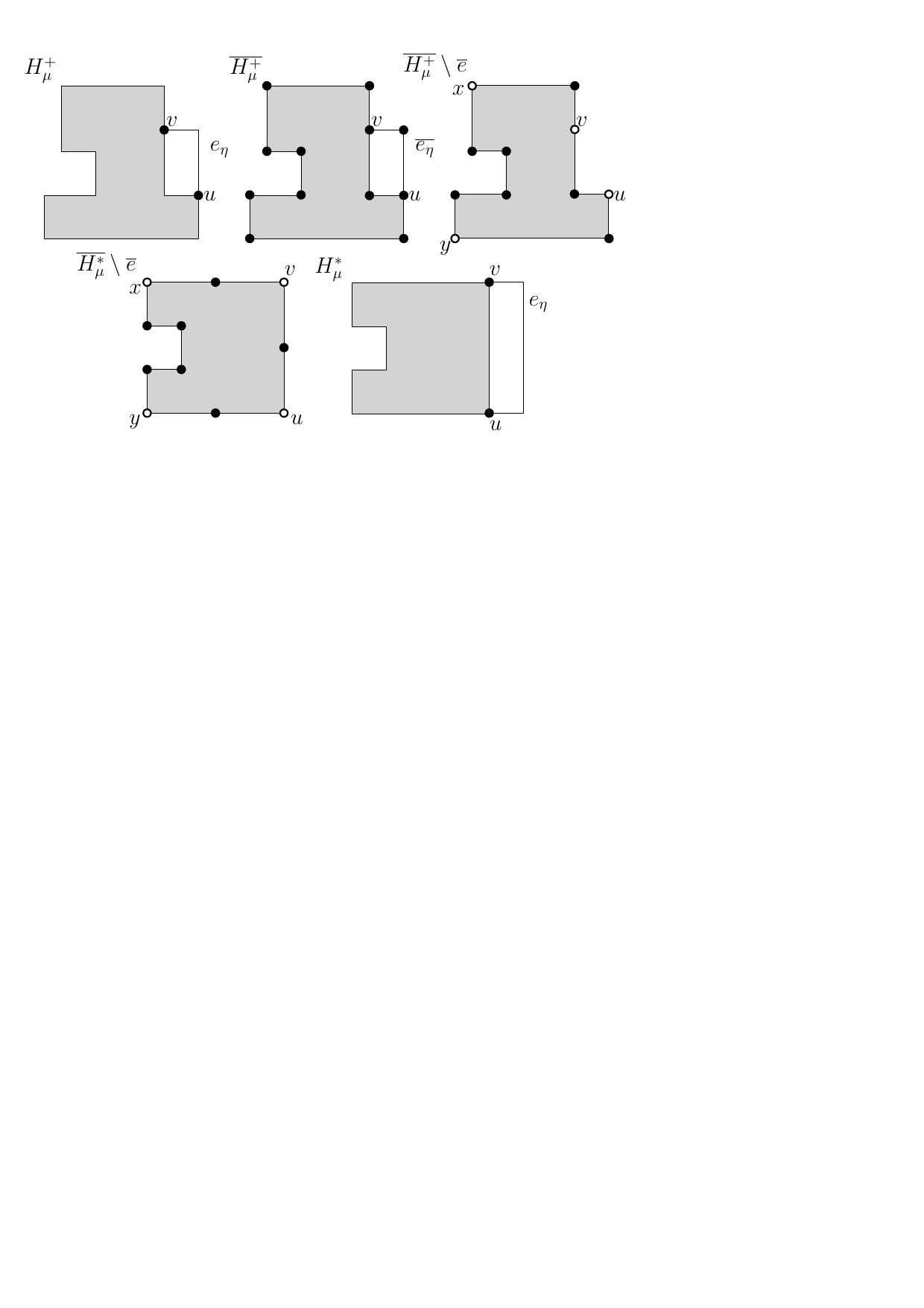}}
		\subfloat[$\rect{H}$]{\label{fi:property-a-2}\includegraphics[page=3,height=0.15\columnwidth]{property-a}}
		\subfloat[$\rect{H} \setminus \rect{e}_{\rho_i}(\mu)$]{\label{fi:property-a-3}\includegraphics[page=4,height=0.15\columnwidth]{property-a}}
		\subfloat[$\rect{H^-}$]{\label{fi:property-a-4}\includegraphics[page=5,height=0.15\columnwidth]{property-a}}
		\subfloat[$H^*$]{\label{fi:property-a-5}\includegraphics[page=6,height=0.15\columnwidth]{property-a}}
		\caption{Illustration for the proof of \cref{le:d-shape-triconnected}.}\label{fi:property-a-new}
	\end{figure}
		
	Consider the plane graph $\rect{\skel^-}(\mu) = \rect{\skel}(\mu) \setminus \rect{e}_{\rho_i}(\mu)$ and its corresponding orthogonal representation $\rect{H} \setminus \rect{e}_{\rho_i}(\mu)$ (see, for example, \cref{fi:property-a-3}).
    Since $\rect{\skel^-}(\mu)$ is biconnected and since $\rect{H} \setminus \rect{e}_{\rho_i}(\mu)$ has no bends, by \cref{th:RN03} we have that $\rect{\skel^-}(\mu)$ is a good plane graph.
    Also, $u$ and $v$ are degree-2 vertices on the external face of $\rect{\skel^-}(\mu)$ distinct from $x$ and $y$.
    By \cref{le:NoBendAlg} we can use Algorithm \textsf{NoBendAlg} to compute an orthogonal representation $\rect{H^-}$ of $\rect{\skel^-}(\mu)$ having $u$, $v$, $x$, and $y$ as the four external designated corners (see, for example, \cref{fi:property-a-4}). Also, by Property~$(ii)$ of the same lemma, the turn number of the path on the external face of $\rect{H^-}$ between each pair of consecutive designated corners is zero. It follows that the turn number of the external path from $u$ to $v$ in $\rect{H^-}$ is zero and the turn number of $p$ is two. Therefore, the inverse $H^-$ of $\rect{H^-}$ is \D-shaped.
	We construct the desired orthogonal representation $H^*$, by adding to $H^-$ edge $e_{\rho_i}(\mu)$ with two bends turning in the same direction (see, for example, \cref{fi:property-a-5}).
	$H^*$ satisfies Properties~$(i)$ and~$(ii)$ by construction. $H^*$ satisfies Property~$(iii)$ because the bends of $H^* \setminus e_{\rho_i}(\mu)$ can only correspond to degree-2 vertices of $\rect{H^-}$ and the set of degree-2 vertices is the same in $\rect{H^-}$ and in $\rect{H} \setminus \rect{e}_{\rho_i}(\mu)$.
	
	Finally, since Algorithm \textsf{NoBendAlg} runs in linear time in the number of vertices of $\rect{H^-}$ (see \cref{le:NoBendAlg}), $H^*$ can be constructed in $O(n_\mu)$ time.
\end{proof}

\smallskip
The following definition is needed for the statement of \cref{le:x-shape-triconnected}. Let $G$ be a plane biconnected $3$-graph, let $H$ be an orthogonal representation of $G$, and let $e$ be a distinguished edge on the external face of $H$. Representation $H$ is \emph{$e$-minimal} if there is no orthogonal representation $H'$ of $G$ such that $H' \setminus e$ coincides with $H \setminus e$ and the number of bends of $e$ in $H'$ is less than the number of bends of $e$ in $H$. In other words, $H$ is $e$-minimal if none of the bends of $e$ can be removed without changing the rest of the representation.

\begin{lemma}\label{le:x-shape-triconnected}
	Let $\mu$ be an inner R-node of $T_{\rho_i}$ and let $n_\mu$ be the number of children of $\mu$ in $T_{\rho_i}$. Let $e_{\rho_i}(\mu) = (u,v)$ be the reference edge of $\skel(\mu)$.
	% and let $\flex(e_{\rho_i}(\mu))=3$.
	Let $H$ be an $e_{\rho_i}(\mu)$-minimal planar orthogonal representation of $\skel(\mu)$ with $e_{\rho_i}(\mu)$ on the external face and with $b(e_{\rho_i}(\mu)) = 3$. There exists an orthogonal representation $H^*$ of $\skel(\mu)$ such that: (i) $b(e_{\rho_i}(\mu)) = 3$ in $H^*$; (ii) $H^* \setminus e_{\rho_i}(\mu)$ is \X-shaped; and (iii) each edge distinct from $e_{\rho_i}(\mu)$ has in $H^*$ no more bends than it has in $H$. Also, $H^*$ can be computed in $O(n_\mu)$-time.
\end{lemma}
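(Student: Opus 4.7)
The plan is to adapt the approach of \cref{le:d-shape-triconnected} to the case of three bends on the reference edge. I would start by considering the rectilinear image $\rect{H}$ of $H$, whose underlying plane graph $\rect{\skel}(\mu)$ is a good plane graph by \cref{th:RN03} (since $\rect{H}$ has no bends); the subdivision $\rect{e}_{\rho_i}(\mu)$ of $e_{\rho_i}(\mu)$ then contains three bend-vertices of degree two on the external face. Let $p$ be the external path of $\rect{H}$ from $u$ to $v$ not containing $\rect{e}_{\rho_i}(\mu)$, and let $f$ be the internal face of $\rect{H}$ incident to $e_{\rho_i}(\mu)$.

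First I would exploit the $e_{\rho_i}(\mu)$-minimality of $H$ with $b(e_{\rho_i}(\mu))=3$ to argue that the three bends of $e_{\rho_i}(\mu)$ all turn in the same direction: otherwise the signed turn along $e_{\rho_i}(\mu)$ would be $\pm 1$, and $e_{\rho_i}(\mu)$ could be rerouted with a single bend without modifying $H\setminus e_{\rho_i}(\mu)$, contradicting minimality. Combining this structural constraint with the fact that every degree-three vertex of $\skel(\mu)$ must have its three vertex-angles in the pattern $(90^\circ,90^\circ,180^\circ)$, an angle-counting argument applied to $f$ via Property~{\bf \textsf{H2}} would yield the existence of a degree-two vertex $y$ on $\partial f\setminus\rect{e}_{\rho_i}(\mu)$. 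Symmetrically, \cref{th:RN03}(i) applied to $\rect{\skel}(\mu)$ guarantees at least four degree-two vertices on its external face, three of which lie on $\rect{e}_{\rho_i}(\mu)$; hence a degree-two vertex $x$ exists on $p$.

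Next I would pass to $\rect{\skel^-}(\mu)=\rect{\skel}(\mu)\setminus\rect{e}_{\rho_i}(\mu)$: the restriction $\rect{H}\setminus\rect{e}_{\rho_i}(\mu)$ is a no-bend orthogonal representation of this biconnected plane graph (the merged angles at $u$ and $v$ remain valid orthogonal angles because $\rect{H}$ satisfies Property~{\bf \textsf{H1}}), so $\rect{\skel^-}(\mu)$ is good by \cref{th:RN03}. In $\rect{\skel^-}(\mu)$ the poles $u$ and $v$ become degree-two and, together with $x$ and $y$, appear in cyclic order $u,x,v,y$ on the external face. I would then invoke \textsf{NoBendAlg} (\cref{le:NoBendAlg}) with $u,v,x,y$ as designated corners, obtaining a no-bend orthogonal representation $\rect{H^-}$ in which each designated corner forms a $270^\circ$ external angle and every path between consecutive designated corners has turn number zero.

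Finally, to construct $H^*$ I would take the inverse of $\rect{H^-}$ and reinsert $e_{\rho_i}(\mu)$ on the external face with three bends all turning in the same direction, splitting the $270^\circ$ external angle of $\rect{H^-}$ at each of $u$ and $v$ into a $180^\circ$ external angle and a $90^\circ$ angle in the newly created internal face $f^*$. Property~(i) would be immediate. Property~(ii) would follow because each of the two boundary paths of $\skel(\mu)\setminus e_{\rho_i}(\mu)$ between $u$ and $v$ contains exactly one $270^\circ$ corner (at $x$ and at $y$, respectively), so both paths have turn number one and $H^*\setminus e_{\rho_i}(\mu)$ is $\X$-shaped. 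Property~(iii) would follow since the bend-vertices of $\rect{H^-}$ are a subset of the bend-vertices of $\rect{H}\setminus\rect{e}_{\rho_i}(\mu)$. The $O(n_\mu)$ time bound would come from \cref{le:NoBendAlg}. The main obstacle I anticipate is the existence of the degree-two vertex $y$ on $\partial f\setminus e_{\rho_i}(\mu)$: establishing this requires a careful angle-counting argument that crucially leverages the ``all three bends turn the same way'' consequence of $e$-minimality.
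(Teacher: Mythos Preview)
Your proposal follows exactly the paper's strategy: pass to the rectilinear image, locate one degree-$2$ vertex on each of the two $u$--$v$ paths that together bound the external face of $\rect{\skel}(\mu)\setminus\rect{e}_{\rho_i}(\mu)$, run \textsf{NoBendAlg} with $u,v$ and these two vertices as designated corners in alternating cyclic order, and reinsert the reference edge with three bends turning the same way.

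The one place your sketch falls short of the paper is the angle count on the internal face (your $f$, the paper's $f_{\textrm{int}}$). Knowing only that the three bends of $e_{\rho_i}(\mu)$ turn the same way is not enough to force a $270^\circ$ angle on the inner path $p'=\partial f_{\textrm{int}}\setminus \rect{e}_{\rho_i}(\mu)$: if the three bends contribute $90^\circ$ each in $f_{\textrm{int}}$ but one of $u,v$ contributes $180^\circ$ there, Property~\textsf{H2} can still hold with $N_{270}=0$, and then no degree-$2$ vertex on $p'$ is forced. The paper actually draws a stronger conclusion from $e_{\rho_i}(\mu)$-minimality: \emph{all five} angles along $\rect{e}_{\rho_i}(\mu)$ inside $f_{\textrm{int}}$ --- the three bend-angles \emph{and} the vertex-angles at $u$ and $v$ --- equal $90^\circ$. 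This gives $N_{90}\geq 5$, so Property~\textsf{H2} forces $N_{270}\geq 1$ in $f_{\textrm{int}}$; since every degree-$3$ vertex contributes only $90^\circ$ or $180^\circ$, that reflex angle must sit at a degree-$2$ (bend) vertex on $p'$. You should make the pole-angle part of the $e$-minimality argument explicit rather than relying solely on ``all three bends turn the same way''.
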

\begin{proof}
    Let $\rect{H}$ be the rectilinear image of $H$ and let $\rect{\skel}(\mu)$ be its underlying graph  (see, \cref{fi:property-b-1}). Since $\rect{H}$ has no bends, $\rect{\skel}(\mu)$ is a good plane graph, i.e., it satisfies the conditions of \cref{th:RN03}.

    Denote by $\rect{e}_{\rho_i}(\mu)$ the subdivision of $e_{\rho_i}(\mu)$ in $\rect{H}$, and let $p$ be the path of the external face $f_{\textrm{ext}}$ of $\rect{H}$ between $u$ and $v$ not containing $\rect{e}_{\rho_i}(\mu)$.
    Since $b(e_{\rho_i}(\mu)) = 3$ and since $f_{\textrm{ext}}$ has at least four vertices that form $270^\circ$ angles, path $p$ in $\rect{H}$ has at least one degree-2 vertex $y$ corresponding to these angles (see, for example, \cref{fi:property-b-2}).
	Let $f_{\textrm{int}}$ be the face of $\rect{H}$ that shares $\rect{e}_{\rho_i}(\mu)$ with $f_{\textrm{ext}}$. Denote by $p'$ the path distinct from $\rect{e}_{\rho_i}(\mu)$ between $u$ and $v$ along the boundary of $f_{\textrm{int}}$.
	Since $H$ is $e_{\rho_i}(\mu)$-minimal, $\rect{H}$ is such that $u$ and $v$ form $90^\circ$ angles in $f_{\textrm{int}}$ and the three degree-2 vertices of $\rect{e}_{\rho_i}(\mu)$ form three $90^\circ$ angles in $f_{\textrm{int}}$.
	Since $f_{\textrm{int}}$ has at least five angles of $90^\circ$, there must be at least one $270^\circ$ angle along $p'$ corresponding to a degree-2 vertex $x$.

	\begin{figure}[tb]
		\centering
		\subfloat[$H$]{\label{fi:property-b-1}\includegraphics[page=2,height=0.17\columnwidth]{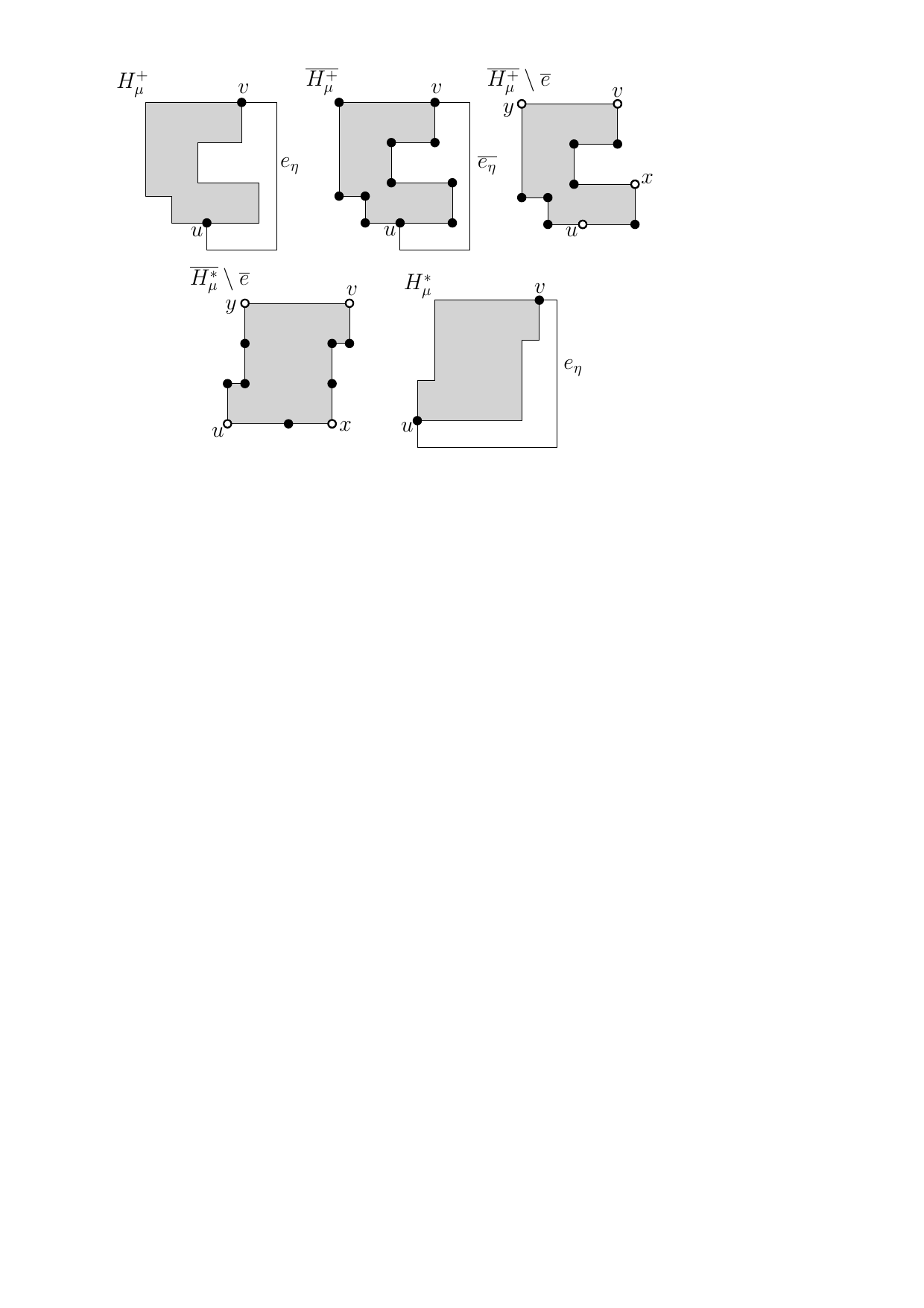}}
		\subfloat[$\rect{H}$]{\label{fi:property-b-2}\includegraphics[page=3,height=0.17\columnwidth]{property-b}}
		\subfloat[$\rect{H} \setminus \rect{e}_{\rho_i}(\mu)$]{\label{fi:property-b-3}\includegraphics[page=4,height=0.17\columnwidth]{property-b}}
		\subfloat[$\rect{H^-}$]{\label{fi:property-b-4}\includegraphics[page=5,height=0.17\columnwidth]{property-b}}
		\subfloat[$H^*$]{\label{fi:property-b-5}\includegraphics[page=6,height=0.17\columnwidth]{property-b}}
		\caption{Illustration for the proof of \cref{le:x-shape-triconnected}.}\label{fi:property-b-new}
		\end{figure}

		Consider the plane graph $\rect{\skel^-}(\mu) = \rect{\skel}(\mu) \setminus \rect{e}_{\rho_i}(\mu)$ and its corresponding orthogonal representation $\rect{H} \setminus \rect{e}_{\rho_i}(\mu)$ (see, for example, \cref{fi:property-b-3}).
        Since $\rect{\skel^-}(\mu)$ is biconnected and since $\rect{H} \setminus \rect{e}_{\rho_i}(\mu)$ has no bends, by \cref{th:RN03} $\rect{\skel^-}(\mu)$ is a good plane graph.
        Also, $u$ and $v$ are degree-2 vertices on the external face of $\rect{\skel^-}(\mu)$ distinct from $x$ and $y$.
		By \cref{le:NoBendAlg} we can use Algorithm \textsf{NoBendAlg} to compute an orthogonal representation $\rect{H^-}$ of $\rect{\skel^-}(\mu)$ having $u$, $x$, $v$, and $y$ as the four external designated corners in this counter-clockwise order along its external face (see, for example, \cref{fi:property-b-4}). By Property~$(ii)$ of the same lemma it follows that $t(p)=t(p')=1$ in $\rect{H^-}$. Hence, the inverse $H^-$ of $\rect{H^-}$ is \X-shaped. We can construct the orthogonal representation $H^*$, by adding to $H^-$ edge $e_{\rho_i}(\mu)$ with three bends turning in the same direction (see, for example, \cref{fi:property-b-5}). By the same reasoning as in proof of \cref{le:d-shape-triconnected}, $H^*$ satisfies the properties of the statement and it can be constructed in $O(n_\mu)$ time.
\end{proof}

We now exploit \cref{le:d-shape-triconnected,le:x-shape-triconnected} to compute the shape-cost set of an inner R-node of $T_{\rho_i}$.
Property~\textsf{T1} of \cref{le:spqr-tree-3-graph} implies that the parent of an inner R-node $\mu$ is an S-node $\eta$. Also, by \cref{le:series-extra-bends}, the number of extra bends needed by the series-component $G_{\rho_i}(\eta)$ to achieve spirality $k$ is ${\cal B}_{\rho_i}^k(\eta) = \max\{0, k - n^{\d}_{\rho_i} - n^Q_{\rho_i} - n^a_{\rho_i} + 1\}$. This implies that when computing the shape-cost set of $\eta$, a \D-shaped representation of $G_{\rho_i}(\mu)$ is always preferred to an \X-shaped representation of $G_{\rho_i}(\mu)$ if the first has no more bends than the second.
Hence, when we compute the shape-cost set of the inner R-node $\mu$, we will set to infinity the cost of the \X-shaped representation of $G_{\rho_i}(\mu)$ if this cost is not less than the cost of the \D-shaped representation.

\begin{lemma}\label{le:shape-cost-set-inner-R}
    Let $G$ be a biconnected planar $3$-graph with $m$ edges and let $T_{\rho_1}, T_{\rho_2}, \dots, T_{\rho_m}$ be a good sequence of SPQR-trees of $G$.
    Let $\mu$ be an inner R-node of $T_{\rho_i}$, with $1 \leq i \leq m$, and assume that the shape-cost sets of the children of $\mu$ are given. Let $n_\mu$ be the number of children of $\mu$. There exists an algorithm that computes $b_{\rho_i}(\mu)$ in $O(n_\mu)$ time when $i = 1$ and in $O(1)$ time for $2 \leq i \leq m$.
\end{lemma}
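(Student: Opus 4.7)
The plan is to reduce the computation of $b^{\d}_{\rho_i}(\mu)$ and $b^{\x}_{\rho_i}(\mu)$ to constructing a cost-minimum orthogonal representation of the plane triconnected cubic graph $\skel(\mu)$ under the flexibility assignment introduced just before the statement: real edges get flexibility $0$, each virtual edge $e_\nu$ corresponding to an S-node child $\nu$ gets flexibility $\tau_{\rho_i}(\nu)$, and the reference edge $e_{\rho_i}(\mu)$ gets flexibility $2$ when computing $b^{\d}_{\rho_i}(\mu)$ and $3$ when computing $b^{\x}_{\rho_i}(\mu)$. For each of the two faces $f$ incident to $e_{\rho_i}(\mu)$, the plan is to invoke \cref{th:fixed-embedding-cost-one} with $f$ as external face. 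Since $\flex(e_{\rho_i}(\mu)) \ge 2$, Property~\textsf{P3} of that theorem applies and the returned representation $H$ bends every flexible edge $e$ at most $\flex(e)$ times and every inflexible edge at most once; in particular $b(e_{\rho_i}(\mu)) \le 2$ in the $\D$-case and $b(e_{\rho_i}(\mu)) \le 3$ in the $\X$-case. Then \cref{le:d-shape-triconnected} (resp.\ \cref{le:x-shape-triconnected}) is used to transform $H$ so that $H \setminus e_{\rho_i}(\mu)$ is $\D$-shaped (resp.\ $\X$-shaped) without increasing the bend count of any other edge; since $H$ is cost-minimum and the transformation only decreases bends, the transformed representation is cost-minimum within its shape. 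Let $\xi^{\d}_f(\mu)$ and $\xi^{\x}_f(\mu)$ be the resulting costs restricted to $\skel(\mu) \setminus e_{\rho_i}(\mu)$ and let $\xi^{\d}_{\min}(\mu)$ and $\xi^{\x}_{\min}(\mu)$ be their minima over the two choices of external face.

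The shape-cost values are then recovered by
\[
b^{\d}_{\rho_i}(\mu) \;=\; \xi^{\d}_{\min}(\mu) + \sum_\nu b^0_{\rho_i}(\nu), \qquad b^{\x}_{\rho_i}(\mu) \;=\; \xi^{\x}_{\min}(\mu) + \sum_\nu b^0_{\rho_i}(\nu),
\]
where the sums range over the S-node children $\nu$ of $\mu$ and each $b^0_{\rho_i}(\nu)$ is read off from the shape-cost set given in input. Correctness will follow from \cref{co:elbow-function}: since every virtual edge $e_\nu$ carries at most $\tau_{\rho_i}(\nu)$ bends in the representation produced above, the pertinent S-component $G_{\rho_i}(\nu)$ realises the required spirality at cost exactly $b^0_{\rho_i}(\nu)$, regardless of the particular bend budget used along $e_\nu$, so that the cost inside $\skel(\mu) \setminus e_{\rho_i}(\mu)$ and the costs inside the S-components decouple additively.

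For $i=1$, the work is dominated by a constant number of invocations of \cref{th:fixed-embedding-cost-one} and of the transforming procedures of \cref{le:d-shape-triconnected,le:x-shape-triconnected}, each running in $O(n_\mu)$ time, which yields the claimed $O(n_\mu)$ bound. During this first visit, the algorithm also builds the \texttt{Bend-Counter} data structure of \cref{th:bend-counter} for $\skel(\mu)$ in $O(n_\mu)$ time, storing alongside it the current flexibility of every edge. For $2 \le i \le m$ the plan is to reuse this structure: when the root moves from $\rho_{i-1}$ to $\rho_i$ only $O(1)$ virtual edges of $\skel(\mu)$ change flexibility (the one corresponding to the reference edge, whose flexibility is re-set to $2$ or $3$, and the virtual edges whose children's spirality thresholds have changed, which, by the reusability principle of \cref{sse:reusability}, are constantly many per step). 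Each such change is pushed to the \texttt{Bend-Counter} in $O(1)$ time by \cref{th:bend-counter}, after which two $O(1)$-time queries (one per face incident to the new reference edge) return $\xi^{\d}_{\min}(\mu)$ and $\xi^{\x}_{\min}(\mu)$; the sums $\sum_\nu b^0_{\rho_i}(\nu)$ are maintained incrementally, also in $O(1)$ time per modified child.

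The main obstacle I expect is the bookkeeping needed to justify that only $O(1)$ flexibilities of $\skel(\mu)$ need refreshing when passing from $T_{\rho_{i-1}}$ to $T_{\rho_i}$; this is exactly the role of the reusability principle developed in \cref{sse:reusability}, without which the update cost per re-rooting step would degrade to $\Theta(n_\mu)$ and destroy the desired $O(1)$ bound. Once that principle is in place, combining \cref{th:fixed-embedding-cost-one}, \cref{th:bend-counter}, \cref{le:d-shape-triconnected} and \cref{le:x-shape-triconnected} as described above yields both running-time bounds claimed in the statement.
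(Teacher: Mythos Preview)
Your overall strategy matches the paper's, but there is a genuine gap in the $\X$-shape computation. \cref{le:x-shape-triconnected} requires that the input representation $H$ be $e_{\rho_i}(\mu)$-minimal with $b(e_{\rho_i}(\mu)) = 3$ exactly. Property~\textsf{P3} of \cref{th:fixed-embedding-cost-one} only guarantees $b(e_{\rho_i}(\mu)) \le 3$; if a cost-minimum representation achieves its cost with at most two bends on the reference edge, then \cref{le:x-shape-triconnected} does not apply and your transformation step fails. The paper resolves this by first comparing the minimum cost with $\flex(e_{\rho_i}(\mu))=3$ against the minimum cost with $\flex(e_{\rho_i}(\mu))=2$: only when the former is strictly smaller does every cost-minimum representation use exactly three bends on the reference edge, making it $e_{\rho_i}(\mu)$-minimal so that \cref{le:x-shape-triconnected} applies. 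When the two costs coincide, the $\X$-shape is no cheaper than the $\D$-shape and the paper sets $b^{\x}_{\rho_i}(\mu)=\infty$ (this convention is justified in the paragraph preceding the lemma, since the parent S-node never prefers an $\X$-shape of equal cost). Your argument needs this case distinction.

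A secondary imprecision concerns the $2 \le i \le m$ case. You compare $\rho_{i-1}$ to $\rho_i$ and invoke the reusability principle to bound the number of changed flexibilities; this is not the right mechanism. The paper instead stores the \texttt{Bend-Counter} in its $\rho_1$ state and, for each subsequent $\rho_i$, updates exactly two edge flexibilities: the old reference edge $e_{\rho_1}(\mu)$ (from $2$ to the spirality threshold of the corresponding new child) and the new reference edge $e_{\rho_i}(\mu)$ (from its old threshold to $2$, respectively $3$). The reason only two edges change is simply that re-rooting $T$ swaps parent and one child of $\mu$; no appeal to \cref{le:reusability} is needed here, and the spirality thresholds of the other children of $\mu$ are unchanged because their pertinent graphs are unchanged.
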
	

\begin{proof}
	Since $\mu$ is an inner R-node we have that $b_{\rho_i}(\mu)=\{b_{\rho_i}^{\d}(\mu), b_{\rho_i}^{\x}(\mu)\}$.
	Also, by Property~\textsf{T2} of \cref{le:spqr-tree-3-graph}, each child node $\nu$ of $\mu$ in $T_{\rho_i}$ is either a Q-node or an S-node.
	Let $e_{\rho_i}(\mu)$ be the reference edge of $\skel(\mu)$ in~$T_{\rho_i}$ ($1 \leq i \leq m$).
	The value $b_{\rho_i}^{\d}(\mu)$ (resp. $b_{\rho_i}^{\x}(\mu)$) is computed as the sum of two terms: the first term, denoted as $\mathcal{B}^{\d}_{\rho_i}(\mu)$ (resp. as $\mathcal{B}^{\x}_{\rho_i}(\mu)$), is the number of bends that the real edges of $\skel(\mu)$ have in a bend-minimum orthogonal \D-shaped (resp. \X-shaped) representation of $G_{\rho_i}(\mu)$; the second term, denoted as $\mathcal{S}_{\rho_i}(\mu)$, is the number of bends along the remaining edges of $G_{\rho_i}(\mu)$. The first term is computed by means of the \texttt{Bend-Counter} data-structure (see \cref{th:bend-counter}); the second term is obtained by looking at the shape-cost sets of the S-node children of $\mu$ in~$T_{\rho_i}$.
	We distinguish between the cases $i = 1$ and $2 \leq i \leq m$.
	% for each case, we show how to compute $b_{\rho_i}^{\d}(\mu)$ and $b_{\rho_i}^{\x}(\mu)$.
	
	\medskip\noindent\textsf{Computing $b_{\rho_1}^{\d}(\mu)$:}
	We first construct a triconnected cubic graph $G'$ with flexible edges that has the same set of vertices and edges as $\skel(\mu)$. The flexibilities of the edges of $G'$ are defined as follows.
	\begin{enumerate}
	\item Edge $e_{\rho_1}(\mu)$ is a flexible edge of $G'$ with $\flex(e_{\rho_1}(\mu))=2$.
	\item Let $\nu$ be a Q-node child of $\mu$ in $T_{\rho_1}$ and let $e_\nu$ be the real edge of $\skel(\mu)$ corresponding to $\nu$. Edge $e_\nu$ is inflexible in $G'$, i.e., $\flex(e_\nu)=0$.
	\item Let $\nu$ be an S-node child of $\mu$ in $T_{\rho_1}$
	%such that $\nu$ is not a Q-node. By Property~\textsf{T2} of \cref{le:spqr-tree-3-graph}, $\nu$ is an S-node.
	and let $e_\nu$ be the virtual edge of $\skel(\mu)$ corresponding to $\nu$. Edge $e_\nu$ is flexible in $G'$ with $\flex(e_\nu)=\tau_{\rho_1}(\nu)$, where $\tau_{\rho_1}(\nu)$ is the spirality threshold of $\nu$~in~$T_{\rho_1}$.
	\end{enumerate}
	
	\smallskip By means of \cref{th:bend-counter} we construct the \texttt{Bend-Counter} of $G'$. Let $f'$ and $f''$ be the two faces of $G'$ incident to $e_{\rho_1}(\mu)$.
	By using the \texttt{Bend-Counter} of $G'$ we obtain the cost $c_{f'}$ of a cost-minimum orthogonal representation of $G'$ with $f'$ as the external face. Analogously, we obtain the cost $c_{f''}$ of a cost-minimum orthogonal representation of $G'$ with $f''$ as its external face. We embed $G'$ by choosing $f'$ as its external face if $c_{f'} \leq c_{f''}$; else we choose $f''$ as its external face.
	By Property~\textsf{P3} of \cref{th:fixed-embedding-cost-one}, there exists a cost-minimum orthogonal representation $H$ of $G'$, such that each inflexible edge of $G'$ has at most one bend in $H$ and each flexible edge $e$ of $G'$ has at most $\flex(e)$ bends in $H$. Hence, we set ${\cal B}^{\d}_{\rho_1}(\mu) = \min\{c_{f'},c_{f''}\}$.
	 Since $\flex(e_{\rho_1}(\mu)) = 2$, edge $e_{\rho_1}(\mu)$ has at most two bends in $H$ and, by \cref{le:d-shape-triconnected}, there exists a \D-shaped orthogonal representation $H^*$ of $G' \setminus e_{\rho_1}(\mu)$ having the same number of bends per edge as $H$.
	
	 Let $\{\nu_1, \dots, \nu_h\}$ be the set of S-node children of $\mu$ in $T_{\rho_1}$ and let $e_{\nu_j}$ be the flexible edge of $G' \setminus e_{\rho_1}(\mu)$ corresponding to $\nu_j$ ($1 \leq j \leq h$); denote by $k_j$ the number of bends of $e_{\nu_j}$ in~$H^*$. Based on \cref{le:substitution}, for each $j=1, \dots, h$, we substitute $e_{\nu_j}$ with an orthogonal representation $H_{\rho_1}(\nu_j)$ of~$G_{\rho_1}(\nu_j)$ having spirality $k_j$. This leads to a \D-shaped orthogonal representation $H_{\rho_1}(\mu)$ of the pertinent graph $G_{\rho_1}(\mu)$ of~$\mu$. Since the flexibility of $e_{\nu_i}$ is $\flex(e_{\nu_j})=\tau_{\rho_1}({\nu_j})$ and since $k_j \leq \flex(e_{\nu_j})$ ($1 \leq j \leq h$), by \cref{co:elbow-function} the number of bends of $H_{\rho_1}(\mu)$ is $b(H_{\rho_1}(\mu)) = {\cal B}^{\d}_{\rho_1}(\mu) + \sum_{1 \leq j \leq h} b^0_{\rho_1}(\nu_j)$.
	 We now show that $H_{\rho_1}(\mu)$ is an optimal \D-shaped representation of $G_{\rho_1}(\mu)$, that is, $b(H_{\rho_1}(\mu)) = b_{\rho_1}^{\d}(\mu)$.

     Suppose by contradiction that $G_{\rho_1}(\mu)$ admits a \D-shaped orthogonal representation $H'_{\rho_1}(\mu)$ %with the same embedding as $H$
     with at most one bend per edge and such that $b(H'_{\rho_1}(\mu))<b(H_{\rho_1}(\mu))$. We construct an orthogonal representation $H'$ of $G'$ obtained from $H'_{\rho_1}(\mu)$ as follows: (i) for each $\nu_j$ let $H'_{\rho_1}(\nu_j)$ be the restriction of $H'_{\rho_1}(\mu)$ to $G_{\rho_1}(\nu_j)$ and let $k'_j$ be the spirality of $H'_{\rho_1}(\nu_j)$ in $H'_{\rho_1}(\mu)$.
	 Based on \cref{le:substitution}, we substitute $H'_{\rho_1}(\nu_j)$ with an edge whose number of bends is exactly $k'_j$; (ii) we add edge $e_{\rho_1}(\mu)$ on the external face of $H'_{\rho_1}(\mu)$ with two bends turning in the same direction. Observe that the external face of $H'$ is either $f'$ or $f''$ and it may be different from the external face of $H$.

	%Denote by $\mathcal{R}^{\d}_{\rho_1}(\mu)$ (resp. $\mathcal{B}'^{\d}_{\rho_1}(\mu)$) the number of bends of $H$ (resp. $H'$) along the real edges of $\skel(\mu)$.
	%As discussed above, we have that $\mathcal{R}^{\d}_{\rho_1}(\mu) = \mathcal{B}^{\d}_{\rho_1}(\mu)$ and thus $b(H_{\rho_1}(\mu)) = \mathcal{R}^{\d}_{\rho_1}(\mu) + \sum_{1 \leq j \leq h} b^0_{\rho_1}(\nu_j)$.
	Let $\mathcal{B}'^{\d}_{\rho_1}(\mu)$ be the number of bends of $H'$ on the real edges of $\skel(\mu)$. Recall that $b(H_{\rho_1}(\mu)) = \mathcal{B}^{\d}_{\rho_1}(\mu) + \sum_{1 \leq j \leq h} b^0_{\rho_1}(\nu_j)$ and $b(H'_{\rho_1}(\mu)) = \mathcal{B}'^{\d}_{\rho_1}(\mu) + \sum_{1 \leq j \leq h} b^{k'_j}_{\rho_1}(\nu_j)$. By \cref{co:elbow-function}, $b^{k'_j}_{\rho_1}(\nu_j) \geq b^0_{\rho_1}(\nu_j) + \max\{0,k'_j-\tau_{\rho_1}(\nu_j)\}$. Hence, $b(H'_{\rho_1}(\mu)) \geq \mathcal{B}'^{\d}_{\rho_1}(\mu) + \sum_{1 \leq j \leq h} b^0_{\rho_1}(\nu_j) + \sum_{1 \leq j \leq h} \max\{0,k'_j-\tau_{\rho_1}(\nu_j)\}$.
	Since by contradiction $b(H_{\rho_1}(\mu)) > b(H'_{\rho_1}(\mu))$, we have $\mathcal{B}^{\d}_{\rho_1}(\mu) > \mathcal{B}'^{\d}_{\rho_1}(\mu) + \sum_{1 \leq j \leq h} \max\{0,k'_j-\tau_{\rho_1}(\nu_j)\}$.
	%Let $\mathcal{C}'^{\d}_{\rho_1}(\mu)$ be the cost of $H'$, i.e., the number of bends exceeding the flexibility of the edges of $G'$; hence $\mathcal{C}'^{\d}_{\rho_1}(\mu) = \mathcal{B}'^{\d}_{\rho_1}(\mu) + \sum_{1 \leq j \leq h} \max\{0,k'_j-\tau_{\rho_1}(\nu_j)\}$.
    Note that $\mathcal{B}'^{\d}_{\rho_1}(\mu) + \sum_{1 \leq j \leq h} \max\{0,k'_j-\tau_{\rho_1}(\nu_j)\}$ is the cost of $H'$, i.e., the number of bends exceeding the flexibility of the edges of $G'$.
	%It follows that, $\mathcal{B}^{\d}_{\rho_1}(\mu) > \mathcal{B}'^{\d}_{\rho_1}(\mu) + \sum_{1 \leq j \leq h} \max\{0,k'_j-\tau_{\rho_1}(\nu_j)\} = \mathcal{C}'^{\d}_{\rho_1}(\mu)$.
	%This contradicts \cref{th:fixed-embedding-cost-one} both in the case when $H$ and $H'$ have the same external face $f'$ and when they have different external faces.
	If $H$ and $H'$ have the same external face $f'$, we have $c_{f'} = \mathcal{B}^{\d}_{\rho_1}(\mu)  > \mathcal{B}'^{\d}_{\rho_1}(\mu) + \sum_{1 \leq j \leq h} \max\{0,k'_j-\tau_{\rho_1}(\nu_j)\}$, which contradicts the fact that $c_{f'}$ is the minimum cost of an orthogonal representation of $G'$ with external face $f'$.
	%this contradicts \cref{th:fixed-embedding-cost-one} by which $\mathcal{B}^{\d}_{\rho_1}(\mu)$ is the minimum cost for an orthogonal representation of $G'$ with external face $f'$.
	Suppose otherwise that the external face of $H$ is $f'$ and the external face of $H'$ is $f''$. Since ${\cal B}^{\d}_{\rho_1}(\mu) = \min\{c_{f'},c_{f''}\}$, we have $c_{f''} > \mathcal{B}'^{\d}_{\rho_1}(\mu) + \sum_{1 \leq j \leq h} \max\{0,k'_j-\tau_{\rho_1}(\nu_j)\}$, which again contradicts the fact that $c_{f''}$ is the minimum cost of an orthogonal representation of $G'$ with $f''$ as its external face.
	%which again contradicts \cref{th:fixed-embedding-cost-one} since $c_{f''}$ is the minimum cost of an orthogonal representation of $G'$ with $f''$ as its external face.
	
	We finally discuss the time complexity of computing $b(H_{\rho_1}(\mu)) = \mathcal{B}^{\d}_{\rho_1}(\mu) + \mathcal{S}_{\rho_1}(\mu)$.
	%Observe that $b(H_{\rho_1}(\mu))$ is the sum of two terms: the first term is $\mathcal{B}^{\d}_{\rho_1}(\mu)$, given by the \texttt{Bend-Counter}, and the second term is $\mathcal{S}_{\rho_1}(\mu) = \sum_{1 \leq j \leq h} b^0_{\rho_1}(\nu_j)$ whose value depends on the shape-cost sets of the children of~$\mu$.
	By \cref{th:bend-counter}, the \texttt{Bend-Counter} of $G'$ can be constructed in $O(n_\mu)$ time and it returns $\mathcal{B}^{\d}_{\rho_1}(\mu)$ in $O(1)$ time. Since the values $b^0_{\rho_1}(\nu_j)$ ($1 \leq j \leq h$) are given by hypothesis and since $h \leq n_\mu$, it follows that also $\mathcal{S}_{\rho_1}(\mu) = \sum_{1 \leq j \leq h} b^0_{\rho_1}(\nu_j)$ can be computed in $O(n_\mu)$ time.

    \medskip\noindent\textsf{Computing $b_{\rho_1}^{\x}(\mu)$:}
    As in the case of $b_{\rho_1}^{\d}(\mu)$, the value $b_{\rho_1}^{\x}(\mu)$ is the sum of two terms: $\mathcal{B}^{\x}_{\rho_1}(\mu)$ which accounts for the bends along real edges of $\skel(\mu)$, plus $\mathcal{S}_{\rho_1}(\mu)$ which accounts for the bends along the remaining edges of $G_{\rho_1}(\mu)$.
    %Regarding $\mathcal{B}^{\x}_{\rho_1}(\mu)$, we construct from $\skel(\mu)$ a triconnected cubic graph $G''$ with flexible edges that has the same set of edges and vertices as $\skel(\mu)$. Graph $G''$ coincides with $G'$ except for the flexibility of $e_{\rho_1}(\mu)$, which is set to $\flex(e_{\rho_1}(\mu))=3$.
    Let $G''$ be the same graph as $G'$ except for the flexibility of $e_{\rho_1}(\mu)$, which is set to $\flex(e_{\rho_1}(\mu))=3$.
    Similar to the previous case, we use the \texttt{Bend-Counter} of $G''$ to compute the cost $\mathcal{B}^{\x}_{\rho_1}(\mu)$ of a cost-minimum orthogonal representation of $G''$ with $e_{\rho_1}(\mu)$ on the external face.
    Since $\flex(e_{\rho_1}(\mu)) = 3$, $\mathcal{B}^{\x}_{\rho_1}(\mu) \leq \mathcal{B}^{\d}_{\rho_1}(\mu)$ and we have $\mathcal{B}^{\x}_{\rho_1}(\mu) < \mathcal{B}^{\d}_{\rho_1}(\mu)$ only if every cost-minimum orthogonal representation of $G''$ has three bends along $e_{\rho_1}(\mu)$.
    By Property~\textsf{P3} of \cref{th:fixed-embedding-cost-one}, there exists a cost-minimum orthogonal representation $H$ of $G''$ such that each inflexible edge of $G''$ has at most one bend in $H$ and each flexible edge $e$ of $G''$ has at most $\flex(e)$ bends in $H$.
    If $\mathcal{B}^{\x}_{\rho_1}(\mu) = \mathcal{B}^{\d}_{\rho_1}(\mu)$ we have that the cost of an \X-shaped representation of $G'' \setminus e_{\rho_1}(\mu)$ cannot be smaller than the cost of a \D-shaped representation of $G' \setminus e_{\rho_1}(\mu)$ and we set $b_{\rho_1}^{\x}(\mu) = \infty$.
    %Since by \cref{th:fixed-embedding-cost-one} the number of bends along the flexible edges of $G'$ and $G''$ never exceed their flexibility, we have that an \X-shaped representation of $G_\mu$ cannot have fewer bends than a \D-shaped representation.
    %Hence, if $\mathcal{B}^{\x}_{\rho_1}(\mu) = \mathcal{B}^{\d}_{\rho_1}(\mu)$, we set $b_{\rho_1}^{\x}(\mu) = \infty$.
%
    If $\mathcal{B}^{\x}_{\rho_1}(\mu) < \mathcal{B}^{\d}_{\rho_1}(\mu)$ then $H$ has three bends along $e_{\rho_1}(\mu)$ and $H$ is $e_{\rho_1}(\mu)$-minimal;
    %it follows that every cost-minimum orthogonal representation $H$ of $G''$ must have three bends along $e_{\rho_1}(\mu)$, which implies that $H$ is $e_{\rho_1}(\mu)$-minimal.
    by \cref{le:x-shape-triconnected}, there exists an \X-shaped orthogonal representation $H^*$ of $G'' \setminus e_{\rho_1}(\mu)$ having the same number of bends per edge as $H$. As in the previous case, by means of \cref{le:substitution} we obtain from $H^*$ an \X-shaped orthogonal representation $H_{\rho_1}(\mu)$ of $G_{\rho_1}(\mu)$ and, with the same reasoning, we have that $H_{\rho_1}(\mu)$ is an optimal \X-shaped orthogonal representation of $G_{\rho_1}(\mu)$ (i.e., $b(H_{\rho_1}(\mu)) = b_{\rho_1}^{\x}(\mu)$).
    Concerning the time complexity, we observe that since $G''$ differs from $G'$ only for the flexibility of edge $e_{\rho_1}(\mu)$, which changes from 2 to 3, by \cref{th:bend-counter} the \texttt{Bend-Counter} of $G''$ can be derived from the \texttt{Bend-Counter} of $G'$ in $O(1)$ time.
    Since the values $b^0_{\rho_1}(\nu_i)$ ($1 \leq i \leq h$) are given by hypothesis, it follows that $b(H_{\rho_1}(\mu))$ can be computed in $O(n_\mu)$ time.

%With the same procedure as in the previous case, we construct an orthogonal representation $H$ of	$\tilde{H}^+_\mu$ of $G_\mu \cup \{e_{\rho_1}(\mu)\}$ from which we derive an orthogonal representation $\tilde{H}_\mu$ of $G_\mu$.
%The number of bends of $\tilde{H}_\mu$ is $b(\tilde{H}_\mu) = \mathcal{C}^{\x}_{\rho_1}(\mu) + \mathcal{S}_{\rho_1}(\mu)$, where $\mathcal{S}_{\rho_1}(\mu) = \sum_{1 \leq i \leq h} b^0_{\rho_1}(\nu_i)$. Observe that $b(\tilde{H}_\mu)$ is strictly less than the number of bends $b(H_\mu)$ computed from $G'$ if and only if $\mathcal{C}^{\x}_{\rho_1}(\mu) < \mathcal{C}^{\d}_{\rho_1}(\mu)$, in which case $\tilde{H}^+_\mu$ is $e_{\rho_1}(\mu)$-minimal. Therefore, if $\mathcal{C}^{\x}_{\rho_1}(\mu) < \mathcal{C}^{\d}_{\rho_1}(\mu)$ by applying \cref{le:d-shape-triconnected,le:x-shape-triconnected} where $e_{\rho_1}(\mu)$ plays the role of edge $e^+$ and $G_\mu \cup \{e_{\rho_1}(\mu)\}$ plays the role of $G^+_\mu$, we obtain an \X-shaped orthogonal representation $H_\mu$ of $G_\mu$ whose number of bends is $b(\tilde{H}_\mu)$.
%Again, with the same reasoning as in the previous case we have that $b(\tilde{H}_\mu)$ can be computed in $O(n_\mu)$ time and that $b_{\rho_1}^{\x}(\mu) = b(\tilde{H}_\mu)$.
%Finally, observe that if $\mathcal{C}^{\x}_{\rho_1}(\mu) = \mathcal{C}^{\d}_{\rho_1}(\mu)$ an \X-shaped representation of $G_\mu$ cannot have fewer bends than a \D-shaped representation and we set $b_{\rho_1}^{\x}(\mu) = \infty$.

    \medskip\noindent\textsf{Computing $b_{\rho_i}(\mu)=\{b^{\d}_{\rho_i}(\mu),b^{\x}_{\rho_i}(\mu)\}$, $2 \leq i \leq m$:}
    Since $T_{\rho_1}, \dots, T_{\rho_m}$ is a good sequence, $\mu$ is an inner R-node in $T_{\rho_1}$. Hence, we can assume that the \texttt{Bend-Counter} of $G'$ and $\mathcal{S}_{\rho_1}(\mu)$, already computed when $i=1$, are stored at $\mu$ when $2 \leq i \leq m$.
    %Also, and since $\mu$ is an inner R-node, by Property~\textsf{T2} of \cref{le:spqr-tree-3-graph} both in $T_{\rho_1}$ and $T_{\rho_i}$ we have that $e_{\rho_1}(\mu)$
    %
    The reference edge $e_{\rho_i}(\mu)$ of $\skel(\mu)$ in $T_{\rho_i}$ may be different from $e_{\rho_1}(\mu)$. If $e_{\rho_i}(\mu) = e_{\rho_1}(\mu)$ then the shape-cost set $b_{\rho_i}(\mu)$ coincides with $b_{\rho_1}(\mu)$ and we are done.
    Otherwise, the reference edge $e_{\rho_1}(\mu)$ of $\mu$ in $T_{\rho_1}$ corresponds to an S-node child $\nu$ of $\mu$ in $T_{\rho_i}$ and the reference edge $e_{\rho_i}(\mu)$ in $T_{\rho_i}$ corresponds to an S-node child $\nu'$ of $\mu$ in $T_{\rho_1}$ (the reference edge of an inner node is not a real edge).

    We compute $b_{\rho_i}^{\d}(\mu)$ as follows.
    Firstly, we update the \texttt{Bend-Counter} of $G'$ changing the flexibility of $e_{\rho_1}(\mu)$ from $2$ to $\flex(e_{\rho_1}(\mu)) = \tau_{\rho_i}(\nu)$, where $\tau_{\rho_i}(\nu)$ is the spirality threshold of $\nu$ in $T_{\rho_i}$.
    Secondly, we update the flexibility of $e_{\rho_i}(\mu)$ from $\tau_{\rho_1}(\nu')$ to $2$.
    Thirdly, we compute $\mathcal{S}_{\rho_i}(\mu)$ as $\mathcal{S}_{\rho_i}(\mu) = \mathcal{S}_{\rho_1}(\mu) - b^0_{\rho_1}(\nu') + b^0_{\rho_i}(\nu)$.
    Finally, we obtain $b_{\rho_i}^{\d}(\mu)$ with the same procedure illustrated for the case $i=1$.
    In order to compute $b_{\rho_i}^{\x}(\mu)$ we update the flexibility of $e_{\rho_i}(\mu)$ from $2$ to $3$ and, again, apply the same procedure as for $i=1$.

    Concerning the time complexity, we observe that by \cref{th:bend-counter} the update of the \texttt{Bend-Counter} when the flexibilities of $e_{\rho_1}(\mu)$ and of $e_{\rho_i}(\mu)$ are changed can be performed in $O(1)$ time (by \cref{co:elbow-function} the spirality threshold is a number in the set $\{1,2,3,4\}$). Also, $\mathcal{S}_{\rho_i}(\mu)$ can be computed in $O(1)$ time because the values $b^0_{\rho_1}(\nu')$ and $b^0_{\rho_i}(\nu)$ are given by hypothesis.
\end{proof}

\subsubsection{Labeling the reference edge}\label{sse:shape-cost-root}

We finally prove how to compute the label $b_{e_i}(G)$, for each edge $e_i$ of $G$, $i = 1, \dots, m$. To label $e_i$ we consider the tree $T_{\rho_i}$, rooted at the Q-node $\rho_i$ corresponding to $e_i$. We consider a good sequence $T_{\rho_1}, T_{\rho_2}, \dots, T_{\rho_m}$ of SPQR-trees and distinguish different cases depending on the type of the root child of $\rho_i$.

\begin{lemma}\label{le:labeling-S-P}
    Let $G$ be a biconnected planar $3$-graph with $m$ edges and let $T_{\rho_1}, T_{\rho_2}, \dots, T_{\rho_m}$ be a good sequence of SPQR-trees of $G$.
    Let $\mu$ be the root child of $T_{\rho_i}$, with $1 \leq i \leq m$ and let $e_{i}$ be the edge of $G$ corresponding to the root $\rho_i$ of $T_{\rho_i}$. Assume that the shape-cost set of $\mu$ is given and that $\mu$ is either a P-node or an S-node. The label $b_{e_i}(G)$ is finite and can be computed in $O(1)$ time.
\end{lemma}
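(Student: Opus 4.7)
The plan is to leverage \cref{th:shapes} to restrict the search space for an optimal $e_i$-constrained orthogonal representation of $G$ to a constant-size collection of candidate shapes of $\mu$ paired with candidate bend counts of $e_i$, and then pick the cheapest compatible pair. By Property~\textsf{O1} of \cref{th:shapes} we may assume $b(e_i)\in\{0,1\}$, and since the shape-cost set of $\mu$ is given by hypothesis, for each representative shape $\sigma$ we already have the minimum number of bends $b^\sigma_{\rho_i}(\mu)$ of an orthogonal representation of $G_{\rho_i}(\mu)$ of shape $\sigma$. Thus $b_{e_i}(G)$ is the minimum of $b^\sigma_{\rho_i}(\mu)+b(e_i)$ over the pairs $(\sigma,b(e_i))$ that are \emph{compatible}, meaning that stitching $e_i$ onto an orthogonal representation of $G_{\rho_i}(\mu)$ of shape $\sigma$ yields a valid orthogonal representation of $G$. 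Compatibility is dictated by Property~\textsf{H2} applied to the external face of $G$ and to the internal face incident to~$e_i$.

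If $\mu$ is a P-node, then by Property~\textsf{O2} the shape-cost set is $b_{\rho_i}(\mu)=\{b^{\C}_{\rho_i}(\mu),\,b^{\L}_{\rho_i}(\mu)\}$. A $\C$-shape has turn numbers $\{4,2\}$ with a $90^\circ$ inner angle at each pole and, by~\textsf{H2}, pairs with a $0$-bend $e_i$; an $\L$-shape has turn numbers $\{3,1\}$ with a $90^\circ$ inner angle at each pole and pairs with a $1$-bend $e_i$. No other combination balances the face-sum on the external face of $G$, so
\[
b_{e_i}(G)=\min\bigl\{\,b^{\C}_{\rho_i}(\mu),\ b^{\L}_{\rho_i}(\mu)+1\,\bigr\},
\]
which is computed in $O(1)$ time from the shape-cost set. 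The value is finite because \cref{th:shapes} guarantees that $G$ admits an optimal orthogonal representation with $e_i$ on the external face.

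If $\mu$ is an S-node, then by Property~\textsf{O3} the shape-cost set is $b_{\rho_i}(\mu)=\{b^k_{\rho_i}(\mu):k\in\{0,1,2,3,4\}\}$. For every pair $(k,b(e_i))\in\{0,1,2,3,4\}\times\{0,1\}$, I will check whether vertex-angles at the two poles $u,v$ can be assigned so that Property~\textsf{H2} holds both on the external face of $G$ and on the internal face incident to $e_i$, given (i)~the inner degrees of $u,v$ in $G_{\rho_i}(\mu)$, which lie in $\{1,2\}$ since $\Delta(G)\le 3$, and (ii)~the position of the alias vertices of $u,v$ relative to the bends of $e_i$. Because this check involves only a constant amount of data, it is a constant-time table lookup, and
\[
b_{e_i}(G)=\min_{(k,\,b(e_i))\text{ compatible}}\bigl(\,b^k_{\rho_i}(\mu)+b(e_i)\,\bigr).
\]
This minimum is finite because \cref{th:shapes} guarantees the existence of an optimal representation of $G$ whose root-child S-component has spirality in $\{0,1,2,3,4\}$ and whose edge $e_i$ has at most one bend.

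The only delicate point is the compatibility enumeration in the S-node case, where several sub-cases arise depending on the inner degrees of the poles (and hence on the location of the alias vertices along $e_i$); nevertheless, since all relevant parameters belong to finite sets of constant size, the enumeration is bounded by an absolute constant and the overall running time remains~$O(1)$.
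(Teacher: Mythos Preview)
Your approach matches the paper's: combine the given shape-cost set of $\mu$ with the $\{0,1\}$ choices for $b(e_i)$, keep only the compatible pairs, and take the minimum. Your P-node formula $b_{e_i}(G)=\min\{b^{\c}_{\rho_i}(\mu),\,b^{\l}_{\rho_i}(\mu)+1\}$ is exactly the paper's. For the S-node case the paper is more explicit than your ``table lookup'': it splits on the inner degrees of the two poles and writes down three closed-form minima (over $\{b^4,b^3+1\}$, $\{b^4,b^3,b^2+1\}$, and $\{b^4,b^3,b^2,b^1+1\}$, respectively). Your compatibility argument via Property~\textsf{H2} and the alias vertices is the right idea and would reproduce these formulas if carried out.

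The one genuine gap is your finiteness argument. You invoke \cref{th:shapes}, but that theorem only asserts the existence of \emph{one} optimal representation of $G$ (with \emph{some} edge on the external face) satisfying \textsf{O1}--\textsf{O4}; it does not claim that for the specific $e_i$ at hand there is an $e_i$-constrained representation with at most one bend per edge. Indeed, the companion \cref{le:labeling-R} shows that $b_{e_i}(G)$ can be $\infty$ when the root child is an R-node, so finiteness here is a genuine feature of the P/S cases that needs its own justification. The paper argues constructively from the shape-cost sets themselves: by \cref{le:shape-cost-set-S}, the entry $b^{3}_{\rho_i}(\nu)$ of any inner S-node $\nu$ is always finite (a $3$-spiral with at most one bend per edge always exists). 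Hence, for a root-child P-node, the \L-shape cost $b^{\l}_{\rho_i}(\mu)$ (built from a $3$-spiral and a $1$-spiral of its S-node children) is finite; for a root-child S-node, the term $b^{3}_{\rho_i}(\mu)$ appearing in each of the three formulas is finite. You should replace the appeal to \cref{th:shapes} with this argument.
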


\begin{proof}
	If $\mu$ is a P-node, let $b_{\rho_i}(\mu) = \{b_{\rho_i}^{\l}(\mu), b_{\rho_i}^{\c}(\mu)\}$ be the shape-cost set of $\mu$. By Property~\textsf{O2} of \cref{th:shapes}, $b_{e_i}(G) = \min\{b_{\rho_i}^{\c}(\mu), b_{\rho_i}^{\l}(\mu)+1\}$. Indeed, $b_{\rho_i}^{\c}(\mu)$ and $b_{\rho_i}^{\l}(\mu)+1$ represent the cost of a bend-minimum representation of $G$ when $e_{i}$ is \zeroB-shaped and \oneB-shaped, respectively.
	Hence, computing $b_{e_i}(G)$ can be done in $O(1)$ time.
	Also note that $b_{e_i}(G) \neq \infty$ since an optimal \L-shaped orthogonal representation of $G_{\rho_i}(\mu)$ is always possible. In fact, as discussed in \cref{le:shape-cost-set-P}, the \L-shaped orthogonal representation of $G_{\rho_i}(\mu)$ is obtained by combining in parallel a $3$-spiral and a $1$-spiral representation of the children of $\mu$. Since the children of $\mu$ are either two S-nodes or one Q-node and one S-node (Property~\textsf{T1} of \cref{le:spqr-tree-3-graph}), and since the pertinent graph of an S-node always admits a $3$-spiral orthogonal representation with at most one bend per edge (\cref{le:shape-cost-set-S}), $G_{\rho_i}(\mu)$ always admits an \L-shaped orthogonal representation with at most one bend per edge.

	If $\mu$ is an S-node, let $b_{\rho_i}(\mu) = \{b_{\rho_i}^0(\mu), b_{\rho_i}^1(\mu), b_{\rho_i}^2(\mu), b_{\rho_i}^3(\mu), b_{\rho_i}^4(\mu)\}$ be the shape-cost set of $\mu$. We have the following cases based on the degree that the poles $u$ and $v$ of $\mu$ have in~$G_{\rho_i}(\mu)$, which is at most two since $G$ is a planar $3$-graph:

\begin{figure}[t]
	%\subfloat[]{\label{introvert-extrovert-example-a}\includegraphics[width=0.2\columnwidth]{introvert-extrovert-example-a}}
	\centering
	\subfloat[]{\label{fi:s-root-child-a}\includegraphics[page=1,height=3cm]{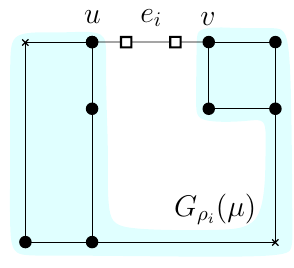}}
	\hfil
	\subfloat[]{\label{fi:s-root-child-b}\includegraphics[page=2,height=3cm]{s-root-child-a}}
	\hfil
	\subfloat[]{\label{fi:s-root-child-c}\includegraphics[page=7,height=3cm]{s-root-child-a}}
	\hfil
	\subfloat[]{\label{fi:s-root-child-d}\includegraphics[page=3,height=3cm]{s-root-child-a}}
	\hfil
	\subfloat[]{\label{fi:s-root-child-e}\includegraphics[page=4,height=3cm]{s-root-child-a}}
	\hfil
	\subfloat[]{\label{fi:s-root-child-f}\includegraphics[page=9,height=3cm]{s-root-child-a}}
	\hfil
	\subfloat[]{\label{fi:s-root-child-g}\includegraphics[page=8,height=3cm]{s-root-child-a}}
	\hfil
	\subfloat[]{\label{fi:s-root-child-h}\includegraphics[page=5,height=3cm]{s-root-child-a}}
	\hfil
	\subfloat[]{\label{fi:s-root-child-i}\includegraphics[page=6,height=3cm]{s-root-child-a}}
	\caption{Illustration of the proof of \cref{le:labeling-S-P} when $\mu$ is an S-node. The small white squares represent alias vertices different from the poles.}\label{fi:s-root-child}
\end{figure}

	\begin{itemize}
	\item Both $u$ and $v$ have degree two in $G_{\rho_i}(\mu)$ (see, for example, \cref{fi:s-root-child-a,fi:s-root-child-b}). In this case $b_{e_i}(G) = \min\{b_{\rho_i}^{4}(\mu), b_{\rho_i}^{3}(\mu)+1\}$. Again, $b_{\rho_i}^{4}(\mu)$ and $b_{\rho_i}^{3}(\mu)+1$ represent the cost of a bend-minimum representation of $G$ when $e_{i}$ is \zeroB-shaped and \oneB-shaped, respectively.
	
	\item Exactly one of $u$ and $v$ has degree two in $G_{\rho_i}(\mu)$ (see, for example, \cref{fi:s-root-child-c,fi:s-root-child-d,fi:s-root-child-e}). In this case $b_{e_i}(G) = \min\{b_{\rho_i}^{4}(\mu), b_{\rho_i}^{3}(\mu), b_{\rho_i}^{2}(\mu) +1\}$.
	
	\item Both $u$ and $v$ have degree one in $G_{\rho_i}(\mu)$ (see, for example, \cref{fi:s-root-child-f,fi:s-root-child-g,fi:s-root-child-h,fi:s-root-child-i}). In this case $b_{e_i}(G) = \min\{b_{\rho_i}^{4}(\mu), b_{\rho_i}^{3}(\mu), b_{\rho_i}^{2}(\mu), b_{\rho_i}^{1}(\mu)+1\}$.
	\end{itemize}
	\noindent In all cases the computation of $b_{e_i}(G)$ takes $O(1)$ time. Finally, observe that, since the pertinent graph of an S-node always admits a $3$-spiral orthogonal representation with at most one bend per edge (\cref{le:shape-cost-set-S}), we have that $b_{e_i}(G) \neq \infty$.
\end{proof}

When the root child $\mu$ is an R-node we do not follow the same approach as in \cref{le:labeling-S-P}, as \cref{le:shape-cost-set-inner-R} only computes the shape-cost sets of the inner R-nodes.

\begin{lemma}\label{le:labeling-R}
    Let $G$ be a biconnected planar $3$-graph with $m$ edges and let $T_{\rho_1}, T_{\rho_2}, \dots, T_{\rho_m}$ be a good sequence of SPQR-trees of $G$.
    Let $\mu$ be the root child of $T_{\rho_i}$, with $2 \leq i \leq m$ and let $e_{i}$ be the edge of $G$ corresponding to the root $\rho_i$ of $T_{\rho_i}$. Assume that $\mu$ is an R-node and that the shape-cost sets of the children of $\mu$ are given. The label $b_{e_i}(G)$ can be computed in $O(1)$ time.
%
%	Let $T$ be the SPQR-tree of $G$ with respect to a reference edge $e^*$ and let $\mu$ be the root child of $T$ such that $\mu$ is an R-node with $n_\mu$ children. Given the shape-cost set of the children of $\mu$, the label $b(e^*)$ can be computed in $O(n_\mu)$ time in the pre-processing visit. For any other choice of a reference edge $e \neq e^*$ such that $\mu$ remains the root child, $b(e)$ can be computed in $O(1)$ time.
\end{lemma}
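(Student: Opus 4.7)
The plan is to mirror the approach of \cref{le:shape-cost-set-inner-R}, but now with $\mu$ as the root child of $T_{\rho_i}$, so that the reference edge of $\mu$ is the real edge $e_i$ itself. By Property~\textsf{O2} of \cref{th:shapes}, the pertinent graph $G_{\rho_i}(\mu)$ is drawn in some bend-minimum $e_i$-constrained orthogonal representation either as \C-shaped (with $e_i$ straight) or as \L-shaped (with $e_i$ bent exactly once). Exactly as in the P-node case of \cref{le:labeling-S-P}, this gives
$$
b_{e_i}(G) \;=\; \min\bigl\{b_{\rho_i}^{\c}(\mu),\; b_{\rho_i}^{\l}(\mu)+1\bigr\},
$$
and my plan is to compute this value via a single \texttt{Bend-Counter} query, without materializing $b_{\rho_i}^{\c}(\mu)$ and $b_{\rho_i}^{\l}(\mu)$ separately.

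Because $i\ge 2$ and $T_{\rho_1},\ldots,T_{\rho_m}$ is a good sequence, $\mu$ has already been visited as an inner R-node in an earlier iteration, so that the \texttt{Bend-Counter} of $\skel(\mu)$ built in \cref{le:shape-cost-set-inner-R} and the cached value $\textsc{sum}_{\rho_1}(\mu)=\sum_{\nu}b^0_{\rho_1}(\nu)$ over the S-node children of $\mu$ are stored at $\mu$. Re-rooting $T$ from $\rho_1$ to $\rho_i$ reassigns the role of exactly two edges of $\skel(\mu)$: the old reference edge $e_{\rho_1}(\mu)$ becomes the virtual edge of the S-node $\eta_{\rho_1}$ and receives flexibility $\tau_{\rho_i}(\eta_{\rho_1})$, while $e_i$ becomes the new reference edge and is made inflexible by setting $\flex(e_i)=0$. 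By \cref{th:bend-counter}, each such flexibility change is processed in $O(1)$, and, as in Case~3 of \cref{le:shape-cost-set-S}, the cached sum is updated in $O(1)$ via $\textsc{sum}_{\rho_i}(\mu)=\textsc{sum}_{\rho_1}(\mu)-b^0_{\rho_1}(\eta_{\rho_i})+b^0_{\rho_i}(\eta_{\rho_1})$. Querying the \texttt{Bend-Counter} on the two faces of $\skel(\mu)$ incident to $e_i$ and taking the smaller cost $\mathcal{B}_{\rho_i}(\mu)$ then yields $b_{e_i}(G)=\mathcal{B}_{\rho_i}(\mu)+\textsc{sum}_{\rho_i}(\mu)$.

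The correctness argument follows the template of \cref{le:shape-cost-set-inner-R}: by Property~\textsf{P3} of \cref{th:fixed-embedding-cost-one} (with Property~\textsf{P1} used as an $O(1)$ fallback when $e_i$ bounds a triangular face of three inflexible edges), the \texttt{Bend-Counter} returns a representation of $\skel(\mu)$ in which every virtual edge $e_\nu$ has at most $\tau_{\rho_i}(\nu)$ bends; by \cref{co:elbow-function}, the corresponding S-component $G_{\rho_i}(\nu)$ realizes such a spirality with exactly $b^0_{\rho_i}(\nu)$ bends, so that the total bend count of $G$ equals $\mathcal{B}_{\rho_i}(\mu)+\textsc{sum}_{\rho_i}(\mu)$. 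Moreover, $e_i$ has either $0$ or $1$ bend in this representation, and analogues of \cref{le:d-shape-triconnected,le:x-shape-triconnected} convert it into a \C- or \L-shaped representation of $H_{\rho_i}(\mu)$ without changing any bend count. The main obstacle is precisely the derivation of these \C- and \L-shape analogues: one must pick the two poles of $\mu$ together with two suitable degree-$2$ vertices on the external boundary of $\rect{\skel}(\mu)\setminus\rect{e_i}$ as the four designated corners and then invoke \textsf{NoBendAlg} on the rectilinear image, exactly as is done in the proofs of \cref{le:d-shape-triconnected,le:x-shape-triconnected}. Since all flexibility updates, sum updates, and \texttt{Bend-Counter} queries take $O(1)$ time, $b_{e_i}(G)$ is obtained in $O(1)$, as required.
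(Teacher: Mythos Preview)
Your overall strategy matches the paper's: reuse the \texttt{Bend-Counter} built when $\mu$ was an inner node in $T_{\rho_1}$, reset the flexibility of the old reference edge to the spirality threshold of the corresponding S-node child, query on the two faces incident to $e_i$, and add the cached sum of $b^0$-values. Two small bookkeeping points first. The subtraction $-b^0_{\rho_1}(\eta_{\rho_i})$ in your sum update is spurious: since $\mu$ is the root child, $\eta_{\rho_i}=\rho_i$ is a Q-node, which was never part of $\mathcal{S}_{\rho_1}(\mu)$; the paper's formula is simply $\mathcal{S}_{\rho_i}(\mu)=\mathcal{S}_{\rho_1}(\mu)+b^0_{\rho_i}(\nu)$ where $\nu$ corresponds to the old reference edge. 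Likewise, ``setting $\flex(e_i)=0$'' is a no-op (it was already a real, inflexible edge in the graph $G'$ of \cref{le:shape-cost-set-inner-R}), and in any case \cref{th:bend-counter} only supports updates to values in $\{1,2,3,4\}$.

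The substantive gap is the case analysis coming from \cref{th:fixed-embedding-cost-one}. You invoke only Properties~\textsf{P3} and~\textsf{P1}, but your treatment of both triangular-face cases is incomplete. When a face $f\in\{f',f''\}$ incident to $e_i$ is a $3$-cycle of \emph{inflexible} edges, Property~\textsf{P1} forces two bends on some real edge of $G$, so no representation with $f$ external can have at most one bend per edge; the paper therefore sets $c_f=\infty$ rather than using the finite value returned by the \texttt{Bend-Counter}. Your ``P1 fallback'' would instead feed that finite cost into the $\min$, which can yield a label that is too small. Separately, you omit Property~\textsf{P2} entirely: when $f$ is a $3$-cycle whose flexible edges all have flexibility~$1$, a cost-minimum representation bends one flexible edge $e_{\nu^*}$ twice, so your claim that ``every virtual edge $e_\nu$ has at most $\tau_{\rho_i}(\nu)$ bends'' is false. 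The paper handles this by observing (via \cref{co:elbow-function}) that the corresponding S-component realizes spirality $2$ with exactly $b^0_{\rho_i}(\nu^*)+1$ bends and at most one bend per edge, so the extra unit of cost reported by the \texttt{Bend-Counter} is accounted for correctly in the total. Your argument needs both of these cases spelled out to be complete.
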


\begin{proof}
    We use the same notation as in the proof of \cref{le:shape-cost-set-inner-R}.
    Since $T_{\rho_1}, T_{\rho_2}, \dots, T_{\rho_m}$ is a good sequence of SPQR-trees of $G$, we have that $\mu$ is an inner node in $T_{\rho_1}$. This implies that, by the proof of \cref{le:shape-cost-set-inner-R}, $\mu$ is already equipped with the \texttt{Bend-Counter} of $G'$ and with the sum $\mathcal{S}_{\rho_1}(\mu) = \sum_{1 \leq j \leq h} b^0_{\rho_1}(\nu_j)$, where $\nu_1, \dots, \nu_h$ are the S-node children of $\mu$ in $T_{\rho_1}$ (i.e., they correspond to all the virtual edges of $\skel(\mu)$ with the only exception of the reference edge $e_{\rho_1}(\mu)$ of $\mu$ in $T_{\rho_1}$).
    Observe that the reference edge $e_{\rho_1}(\mu)$ of $\mu$ in $T_{\rho_1}$ is an S-node child $\nu$ of $\mu$ in $T_{\rho_i}$.

    We update the \texttt{Bend-Counter} of $G'$ changing the flexibility of $e_{\rho_1}(\mu)$ from $2$ to $\flex(e_{\rho_1}(\mu)) = \tau_{\rho_i}(\nu)$, where $\tau_{\rho_i}(\nu)$ is the spirality threshold of $\nu$ in $T_{\rho_i}$.
    We set $\mathcal{S}_{\rho_i}(\mu)$ as $\mathcal{S}_{\rho_i}(\mu) = \mathcal{S}_{\rho_1}(\mu) + b^0_{\rho_i}(\nu)$.

    %%For $f \in \{f',f''\}$, let $H'_f$ be a minimum-cost orthogonal representation of $G'$ with $f$ as the external face. As in the proof of \cref{le:shape-cost-set-inner-R} we obtain an optimal orthogonal representation $H$ of $G$ by substituting each virtual edge $\nu_j$ with

    Let $f'$ and $f''$ be the faces of $\skel(\mu)$ incident to $e_i$.
    If $f \in \{f',f''\}$ is a 3-cycle of inflexible edges,
    based on \cref{th:RN03}, at least one of these edges has two bends in every orthogonal representation of $\skel(\mu)$ such that $f$ is the external face.
    In this case, $f$ will be also the external face of any planar embedding of $G$ obtained from $\skel(\mu)$ by replacing each virtual edge with the pertinent graph of the corresponding S-node. Hence, we set the cost $c_f$ of any cost-minimum orthogonal representation of $G'$ with $f$ as its external face to $\infty$.
    Otherwise we set $c_f$ to the value returned by the \texttt{Bend-Counter} when the external face is $f$.

    We set $b_{e_i}(G) = \min\{c_{f'}, c_{f''}\} + \mathcal{S}_{\rho_i}(\mu)$. Observe that $b_{e_i}(G)$ may be $\infty$ if both $f'$ and $f''$ are 3-cycles of inflexible edges. Assume without loss of generality that $c_{f'} \leq c_{f''}$.
    \begin{itemize}
    \item If $f'$ consists of at least four edges or at least one of its flexible edges has spirality larger than or equal to two, by Property~\textsf{P3} of \cref{th:fixed-embedding-cost-one} $G'$ has a cost-minimum orthogonal representation $H'$ where each inflexible edge has at most one bend and each flexible edge $e$ has at most $\flex(e)$ bends. By \cref{le:substitution}, we replace each virtual edge $e_\nu$ of $H'$ with an orthogonal representation of $G_{\rho_i}(\nu)$ having spirality equal to the number of bends of $e_\nu$, and obtain an optimal $e_i$-constrained orthogonal representation of $G$ having cost $b_{e_i}(G)$.

    \item If $f'$ consists of three edges and all its flexible edges have flexibility at most one, by Property~\textsf{P2} of \cref{th:fixed-embedding-cost-one} $G'$ has a cost-minimum orthogonal representation $H'$ where each inflexible edge has at most one bend and each flexible edge $e$ has at most $\flex(e)$ bends except one flexible edge $e^*$ of $f'$ that has $\flex(e^*)+1$ bends. Let $\nu^*$ be the child of $\mu$ corresponding to $e^*$ in $\skel(\mu)$. Since $\flex(e^*) = \tau_{\rho_i}(\nu^*)$ and because of \cref{co:elbow-function} we have that there exists a bend-minimum orthogonal representation of $G_{\rho_i}(\nu^*)$ having spirality $\flex(e^*)+1$, total number of bends $b^2_{\rho_i}(\nu^*) = b^1_{\rho_i}(\nu^*) + 1 = b^0_{\rho_i}(\nu^*) + 1$, and at most one bend per edge. Again by \cref{le:substitution} we replace each virtual edge $e_\nu$ of $H'$ with an orthogonal representation of $G_{\rho_i}(\nu)$ having spirality equal to the number of bends of $e_\nu$, and obtain an optimal $e_i$-constrained orthogonal representation $H$ of $G$ whose cost can be computed as follows.

    Let $\mathcal{B}_{\rho_i}(\mu)$ be the number of bends along real edges of $H'$.
    By the discussion above $c_{f'} = \mathcal{B}_{\rho_i}(\mu) + 1$ because $e^*$ has one bend exceeding its flexibility. Hence, we have $b(H) = \mathcal{B}_{\rho_i}(\mu) + \sum_{\nu \neq \nu^*} b^0_{\rho_i}(\nu) + b^2_{\rho_i}(\nu^*)  = \mathcal{B}_{\rho_i}(\mu) + \sum_{\nu \neq \nu^*} b^0_{\rho_i}(\nu) + b^0_{\rho_i}(\nu^*) + 1 = c(f') + \mathcal{S}_{\rho_i}(\mu) = b_{e_i}(G)$.

    \end{itemize}

    Regarding the time complexity, by \cref{th:bend-counter} the update of the \texttt{Bend-Counter} when the flexibility of $e_{\rho_1}(\mu)$ is changed can be performed in $O(1)$ time because, by \cref{co:elbow-function}, the spirality threshold is a number in the set $\{1,2,3,4\}$. Also, $\mathcal{S}_{\rho_i}(\mu)$ can be computed in $O(1)$ time because values $b^0_{\rho_i}(\nu)$ are available by hypothesis.
\end{proof}

%We remark that the same R-node $\mu$ of $T$ can happen to be either an inner node or the root child of $T$ depending on the choice of the reference edge $e$. When $\mu$ is an inner R-node, the labeling algorithm computes the shape-cost set of $\mu$ by means of \cref{le:shape-cost-set-inner-R}; when $\mu$ is the root child it computes $b(e)$ by applying \cref{le:labeling-R}. Hence, the labeling algorithm pre-processes $\mu$ at most twice (when $\mu$ is encountered for the first time as the root child or when $\mu$ is encountered for the first time as an inner node). The total cost of the pre-processing on $\mu$ is therefore $O(n_\mu)$. In all other steps, by \cref{le:shape-cost-set-inner-R,le:labeling-R}, the labeling algorithm spends $O(1)$ time on $\mu$.

%\input{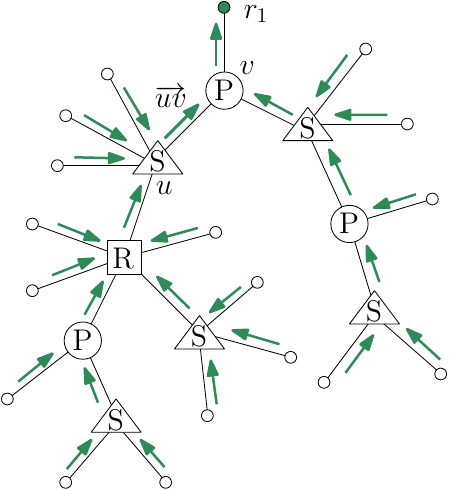}
\subsubsection{A Reusability Principle}\label{sse:reusability}

\cref{sse:shape-cost-QPS,sse:shape-cost-R,sse:shape-cost-root}
%\cref{le:shape-cost-set-inner-Q,le:shape-cost-set-P,le:shape-cost-set-S,le:shape-cost-set-inner-R,le:labeling-S-P,le:labeling-R}
show that computing the shape-cost set of a node $\mu$ takes $O(n_\mu)$ time when traversing $T_{\rho_1}$ and $O(1)$ time when traversing any other $T_{\rho_i}$, with $ 2 \leq i \leq m$. It follows that one can label each edge $e_i$ corresponding to the root $\rho_i$ of $T_{\rho_i}$ in $O(n)$ time and, since there are $O(n)$ rooted SPQR-trees, labeling all edges with this approach gives rise to an $O(n^2)$-time algorithm.
We describe a strategy, that we call \emph{reusability principle},
that makes it possible to reduce the complexity of computations that are commonly executed on decomposition trees (for example, SPQR-trees and BC-trees) and that must take into account all possible re-rootings of these trees.
Such a reusability principle is described in general terms since it will be used also in \cref{sse:labeling-1-connected} and it can have applications beyond the scope of this paper. Indeed, after the publication of the conference version of this work~\cite{dlop-oodlt-20}, the reusability principle has been exploited by several papers (see, e.g.,~\cite{DBLP:journals/jgaa/DidimoKLO23,DBLP:journals/comgeo/Frati22}).
%In the statement of \cref{le:reusability} we denote with $T_r$ a decomposition tree $T$ rooted at a node~$r$.

\begin{lemma}[Reusability Principle]\label{le:reusability}
	Let $T=(V_T,E_T)$ be a decomposition tree of an $n$-vertex graph such that $T$ has size $O(n)$. Let $V_R =\{r_1, r_2, \dots, r_h\} \subseteq V_T$ be a set of nodes of $T$ and let $T_{r_1}, T_{r_2}, \dots, T_{r_h}$ be a sequence of trees obtained by rooting ${T}$ at the nodes in $V_R$.
	Let $\mathcal{A}$ be an algorithm that, for $1 \leq i \leq h$, performs a post-order visit of $T_{r_i}$ and labels every node $v \in V_T$ with a value $\textsc{val}_{r_i}(v)$. Let $k_v$ be the number of children of $v$ in $T_{r_i}$.
	Assume that: (i) If $k_v = 0$, $\mathcal{A}$ computes $\textsc{val}_{r_i}(v)$ in $O(1)$ time with $1 \leq i \leq h$; (ii) If $k_v > 0$, $\mathcal{A}$ computes $\textsc{val}_{r_1}(v)$ in $O(k_v^c)$ time, for some $c \geq 1$, and $\textsc{val}_{r_i}(v)$ in $O(1)$ time with $2 \leq i \leq h$. There exists an algorithm $\mathcal{A^+}$ that computes the set $\{\textsc{val}_{r}(r)| r \in V_R\}$ in $O(n^c)$ time.
	%Let $\mathcal{A}$ be an algorithm such that: (i) for each $r_i$, $1 \leq i \leq h$, $\mathcal{A}$ performs a bottom-up traversal of $T_{r_i}$; (ii) for every node $v \in V_T$ with $k$ children in $T_{r_i}$, $\mathcal{A}$ computes a value $\textsc{val}_{r_i}(v)$ in $O(k^c)$ time, for some $c \geq 1$, for $i=1$ and in $O(1)$ time for $2 \leq i \leq h$. There exists an algorithm $\mathcal{A^+}$ that computes the set $\{\textsc{val}_{r}(r)| r \in V_R\}$ in $O(n^c)$ time.
\end{lemma}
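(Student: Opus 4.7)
The plan is to design algorithm $\mathcal{A}^+$ in two phases. In Phase~1, root $T$ at $r_1$ and run $\mathcal{A}$ on $T_{r_1}$, producing the label $\textsc{val}_{r_1}(v)$ for every $v \in V_T$. The time of this phase is
\[
\sum_{v \in V_T} O(k_v^c) \;\leq\; \Big(\max_{v} k_v\Big)^{c-1} \sum_{v \in V_T} k_v \;\leq\; (n-1)^{c-1}\cdot(n-1) \;=\; O(n^c),
\]
since $\sum_v k_v = |E_T| \le n-1$ and $c \ge 1$. In particular we obtain $\textsc{val}_{r_1}(r_1)$.

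In Phase~2, I would perform a depth-first traversal of $T$ starting at $r_1$, thought of as moving a ``current root'' pointer along the edges of $T$. The invariant is that when the current root is a node $u$, the labels $\textsc{val}_u(v)$ are stored at every $v \in V_T$; whenever the current root coincides with some $r \in V_R$, its value $\textsc{val}_r(r)$ is recorded. The crucial observation is that moving the current root from $u$ to an adjacent node $w$ reverses only the single edge $(u,w)$ in the parent-child orientation, so for every $v \in V_T \setminus \{u,w\}$ the subtree rooted at $v$ is identical in $T_u$ and $T_w$, and therefore $\textsc{val}_w(v)=\textsc{val}_u(v)$ requires no update. Only the children of $u$ and of $w$ change: $u$ loses $w$ as a child, and $w$ gains $u$ as a child. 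By hypothesis~(ii) (or~(i) if the relevant node becomes a leaf), the two affected labels $\textsc{val}_w(u)$ and $\textsc{val}_w(w)$ can each be recomputed in $O(1)$ time from the other already-available labels. Since each edge of $T$ is traversed at most twice in a DFS, Phase~2 takes total time $O(n)$, and over the traversal every $r \in V_R$ is visited and its label $\textsc{val}_r(r)$ is computed. The total cost is $O(n^c)+O(n)=O(n^c)$.

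The main step to defend carefully is the claim that only the two endpoints of the traversed edge change their labels after a single re-rooting. This is a purely combinatorial property of trees: reversing one edge cannot alter the parent of any third node, hence the subtree (and thus the label) of any such node is preserved. The complementary ingredient, that updating the two endpoint labels costs only $O(1)$ each, is precisely the content of hypothesis~(ii), which guarantees that $\mathcal{A}$ can compute $\textsc{val}_{r_i}(v)$ in constant time whenever the children of $v$ and their labels are already at hand. Combining these two facts with a standard Euler tour yields the desired $O(n^c)$ bound.
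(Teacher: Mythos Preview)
Your Euler-tour re-rooting approach is a legitimate alternative to the paper's technique, and the core combinatorial observation---that passing to an adjacent root reverses a single edge and hence only perturbs the labels at its two endpoints---is correct. However, there is a subtle gap in your invocation of hypothesis~(ii). That hypothesis only guarantees the $O(1)$ computation of $\textsc{val}_{r_i}(v)$ when the root $r_i$ lies in $V_R$; during your Euler tour the current root passes through \emph{every} node of $T$, many of which need not be in $V_R$. When you move the root to some $w \notin V_R$, you appeal to hypothesis~(ii) to recompute $\textsc{val}_w(u)$ and $\textsc{val}_w(w)$ in $O(1)$ time, but the lemma's hypothesis says nothing about such computations---indeed, $\textsc{val}_w(\cdot)$ is not even defined by the statement.

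The paper sidesteps this by a \emph{dart} memoization scheme rather than an Euler tour: each edge $(u,v)$ carries two darts, and $\overrightarrow{uv}$ caches $\textsc{val}_{r_i}(u)$ for any $r_i \in V_R$ making $v$ the parent of $u$. Phase~2 then runs a post-order visit of $T_{r_i}$ for each $i\ge 2$, but whenever the needed dart is already filled, the recursion is cut off. Every computation is thus performed while rooted at some $r_i \in V_R$, so hypothesis~(ii) applies directly, and since each of the $O(n)$ darts is written at most once, the total work in Phase~2 is $O(n)$. Both proofs rely on the implicit assumption that $\textsc{val}_r(v)$ depends only on the parent of $v$ (otherwise neither dart reuse nor your claim $\textsc{val}_w(v)=\textsc{val}_u(v)$ would hold), but the paper's argument does not additionally require the $O(1)$ update to make sense at roots outside $V_R$. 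Your proof can be repaired either by making this stronger uniformity assumption explicit, or by replacing the full Euler tour with the paper's scheme of rooting only at $V_R$ nodes and memoizing the darts.
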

\begin{proof}
    Consider an edge $(u,v)$ of $T$. For some choices of the root of $T$ node $u$ is the parent of $v$, while for some other choices of the root of $T$ node $v$ becomes the parent of $u$.
	Algorithm $\mathcal{A^+}$ equips each edge $(u,v)$ of $T$ with two \emph{darts}: dart $\overrightarrow{u v}$ stores the value of $\textsc{val}_{r_i}(u)$ for any $r_i$ such that $v$ is the parent of $u$ in $T_{r_i}$; dart $\overrightarrow{v u}$ stores the value of $\textsc{val}_{r_j}(v)$ for any $r_j$ such that $u$ is the parent of $v$ in $T_{r_j}$ ($1 \leq i, j \leq h$). Consider for example the SPQR-tree  of \cref{fi:amortized} and the two nodes denoted $u$ and $v$. In \cref{fi:amortized-a} the root is $r_1$, $v$ is the parent of $u$, and $\overrightarrow{u v}$ stores the value $\textsc{val}_{r_1}(u)$; in \cref{fi:amortized-c} the root is $r_3$, $u$ is the parent of $v$, and $\overrightarrow{v u}$ stores the value $\textsc{val}_{r_3}(v)$.
	
	Algorithm $\mathcal{A^+}$ executes a post-order visit of $T_{r_1}$ by performing the same operations as Algorithm~$\mathcal{A}$. Namely, during this visit $\mathcal{A^+}$ computes the value of $\textsc{val}_{r_1}(u)$ for each pair of nodes $u$ and $v$ such that $u$ is a child of $v$. In addition, $\mathcal{A^+}$ stores the value of $\textsc{val}_{r_1}(u)$ in the dart $\overrightarrow{u v}$.
	
	For any other choice of the root $r_i$ of $T$, with $2 \leq i \leq h$, $\mathcal{A^+}$ performs a post-order visit of $T_{r_i}$ as follows. Let $v$ be the currently visited node and let $u$ be a child of $v$ in $T_{r_i}$. If dart $\overrightarrow{u v}$ already stores a value, $\mathcal{A^+}$ uses this value without recursively calling the visit on the subtree of $T_{r_i}$ rooted at $u$.
	Otherwise, $\mathcal{A^+}$ executes a post-order visit of this subtree and stores $\textsc{val}_{r_i}(u)$ in dart $\overrightarrow{u v}$.
	Once all children of $v$ have been processed, $\mathcal{A^+}$ computes the value $\textsc{val}_{r_i}(v)$ by performing the same operations as Algorithm~$\mathcal{A}$.
	For example, in \cref{fi:amortized-b} since $u$ remains a child of $v$ when the root changes from $r_1$ to $r_2$, the value $\textsc{val}_{r_2}(u) = \textsc{val}_{r_1}(u)$ is already stored in dart $\overrightarrow{u v}$. Hence, there is no recursive call on the subtree of $T_{r_2}$ rooted at~$u$.
	Conversely, in \cref{fi:amortized-c} when the root is $r_3$, $u$ becomes the parent of $v$ and there is a recursive call on the subtree of $T_{r_3}$ rooted at $v$ to compute the value $\textsc{val}_{r_3}(v)$ to be stored in dart $\overrightarrow{v u}$.
	
	\begin{figure}[tb]


		\centering
		\subfloat[]{\label{fi:amortized-a}\includegraphics[page=1,width=0.25\columnwidth]{amortized}}
		\hfil
		\subfloat[]{\label{fi:amortized-b}\includegraphics[page=2,width=0.25\columnwidth]{amortized}}
		\hfil
		\subfloat[]{\label{fi:amortized-c}\includegraphics[page=3,width=0.25\columnwidth]{amortized}}
		\hfil
		%\subfloat[]{\label{fi:amortized-d}\includegraphics[page=4,width=0.25\columnwidth]{amortized}}
		%\hfil
		\caption{A schematic representation of the Reusability Lemma. (a) The darts computed by the first bottom-up traversal of the SPQR-tree. (b--c) The darts computed by two subsequent bottom-up traversals for two different choices of the root of the SPQR-tree.
			% (d) The darts updated at the end of all bottom up traversals.
		}\label{fi:amortized}
	\end{figure}
	
	When traversing $T_{r_1}$ no dart pointing to the currently visited node $v$ of $T_{r_1}$ stores any value and $\mathcal{A^+}$ computes $\textsc{val}_{r_1}(v)$ in $O(k_v^c)$ time, where $k_v$ is the number of children of $v$. Therefore, the traversal of $T_{r_1}$ is executed in $\sum_{v \in V_T} O(k_v^c) = O(n^c)$ time, since $\sum_{v \in V_T} k_v = O(n)$ and $c \geq 1$.
    Consider now all possible re-rootings of $T$ and the overall number of recursive calls executed by the corresponding post-order visits. This number consists of $h-1$ calls on the roots $r_2, r_3, \dots, r_h$ and of the recursive calls on the descendants of the currently visited nodes. We prove that this second term is $O(n)$. Indeed, at the end of each recursive call executed on a child of the currently visited node one dart is assigned a value. Since the value of a dart is never computed twice, the total number of recursive calls is equal to the number of darts, which is~$O(n)$. Also, the value associated with each node is computed in $O(1)$ time by assumption. Hence, $\mathcal{A^+}$ computes the set $\{\textsc{val}_{r}(r)| r \in V_R\}$ in $O(n^c)$ time.
\end{proof}

\subsubsection{Proof of \cref{th:key-result-2}}\label{sse:proof-key-result-2}

Let $G$ be an $n$-vertex biconnected planar $3$-graph distinct from $K_4$, let $T$ be the SPQR-tree of $G$, and let $V_{R} = \{\rho_1, \rho_2, \dots, \rho_{m}\}$ be the set of the Q-nodes of $T$ such that $\{T_{\rho_1}, T_{\rho_2}, \dots, T_{\rho_m},\}$ is a good sequence of SPQR-trees of~$G$. Denote by $e_i$ the edge of $G$ associated with $\rho_i$, with $1 \leq i \leq m$.
%
%Algorithm $\mathcal{A}^+$ computes the labels $b(e_1), b(e_2), \dots, b(e_m)$ by visiting trees $T_{\rho_1}, T_{\rho_2}, \dots, T_{\rho_m}$, in this order.
%
We consider an Algorithm $\mathcal{A}$ that works as follows.
When $\mathcal{A}$ is executed on a tree $T_{\rho_i}$ such that the child of $\rho_i$ is not an R-node, $\mathcal{A}$ associates each node $\mu$ of $T_{\rho_i}$ with a value $\textsc{val}_{\rho_i}(\mu)$ defined as follows: (i) If $\mu \neq \rho_i$, $\textsc{val}_{\rho_i}(\mu) = b_{\rho_i}(\mu)$, i.e., $\textsc{val}_{\rho_i}(\mu)$ is the shape-cost set of $\mu$;
(ii) If $\mu = \rho_i$, $\textsc{val}_{\rho_i}(\mu) = b_{e_i}(G)$, i.e., the label of the reference edge $e_i$.
In Case (i) $\mathcal{A}$ makes use of \cref{le:shape-cost-set-inner-Q,le:shape-cost-set-P,le:shape-cost-set-S,le:shape-cost-set-inner-R}; in Case (ii) $\mathcal{A}$ makes use of \cref{le:labeling-S-P}.
When Algorithm $\mathcal{A}$ is executed on a tree $T_{\rho_i}$ such that the child of $\rho_i$ is an R-node, then, for each inner node $\mu$, $\mathcal{A}$ computes $\textsc{val}_{\rho_i}(\mu) = b_{\rho_i}(\mu)$ as above, while when $\mu$ is the child of $\rho_i$, $\mu$ is processed together with $\rho_i$ as described in \cref{le:labeling-R}, omitting to explicitly compute $\textsc{val}_{\rho_i}(\mu)$.

Therefore, Algorithm $\mathcal{A}$ processes each node $\mu$ of $T_{\rho_i}$ in $O(n_\mu)$ time for $i = 1$ and in $O(1)$ time when $2 \leq i \leq m$.
By \cref{le:reusability} there exists an algorithm $\mathcal{A^+}$ that computes the set $\{\textsc{val}_{\rho_i}(\rho_i) = b_{e_i}(G)\textrm{~}|\textrm{~}\rho_i \in V_R\}$ in $O(n)$ time.
This concludes the proof of \cref{th:key-result-2}.

%\subsection{Labeling 1-connected graphs}\label{sse:labeling-1-connected}
\subsection{Proof of \cref{th:1-connected-labeling}}\label{sse:labeling-1-connected}

Let $G$ be a 1-connected planar $3$-graph with $n$ vertices and let $\cal T$ be the block-cutvertex tree of $G$. For a block $B$ of $G$ we denote by $\beta$ the corresponding block-node in $\cal T$ and for a cutvertex $c$ of $G$ we denote by $\chi$ the corresponding cutvertex-node in $\cal T$.

Let $B_1, B_2, \dots, B_h$ be the blocks of $G$. For every block $B_i$ of $G$ and for every edge $e$ of $B_i$, we label $e$ with the number $b_e(B_i)$ of bends of an optimal e-constrained orthogonal representation of $B_i$ ($i=1,\dots,h$). If $B_i$ is a trivial block, i.e., it consists of a single edge $e$, we have $b_e(B_i)=0$. All edges of $G$ can be labeled in $O(n)$ time by applying \cref{th:key-result-2} to each block $B_i$.
For each block $B_i$ let $e_i$ be an edge whose label $b_{e_i}(B_i)$ is minimum over all labels of the edges of $B_i$. The set $\{e_1, e_2, \dots, e_h\}$ can be computed in $O(n)$ time.
In what follows we assume that every block-node $\beta_i$ of $\mathcal{T}$ has a pointer to edge $e_i$ and, thus, it can access $b_{e_i}(B_i)$ in $O(1)$ time.
We denote by $\mathcal{T}_{\beta_i}$ the block-cutvertex tree rooted at block-node $\beta_i$. Let $\beta$ be a non-root block-node of $\mathcal{T}_{\beta_i}$ and let $\chi$ its parent in $\mathcal{T}_{\beta_i}$. Denote by $G_{\beta_i}(\beta)$ the \emph{pertinent graph of $\beta$ in $\mathcal{T}_{\beta_i}$}, i.e., the subgraph of $G$ whose block-cutvertex tree is the subtree of $\mathcal{T}_{\beta_i}$ having $\beta$ as its root. Note that $G_{\beta_i}(\beta_i)=G$.
Similarly, let $G_{\beta_i}(\chi)$ be the \emph{pertinent graph of $\chi$ in $\mathcal{T}_{\beta_i}$}, i.e., the subgraph of $G$ whose block-cut vertex tree is the subtree of $\mathcal{T}_{\beta_i}$ having $\chi$ as its root.
The \emph{cost of $\beta$ in $\mathcal{T}_{\beta_i}$}, denoted as $b_{\beta_i}(\beta)$, is the number of bends of an optimal $c$-constrained orthogonal representation of $G_{\beta_i}(\beta)$.
The \emph{cost of $\chi$ in $\mathcal{T}_{\beta_i}$}, denoted as $b_{\beta_i}(\chi)$, is the sum of the costs of the children of $\chi$ in $\mathcal{T}_{\beta_i}$.
Note that, since $\chi$ has at most two children in $\mathcal{T}_{\beta_i}$, its cost can be computed in $O(1)$ time when the costs of its children are known.
The label of the root $\beta_i$ is $b_{B_i}(G) = b_{\beta_i}(\beta_i)$, and coincides with the number of bends of an optimal $e_i$-constrained orthogonal representation of $G$.

% \begin{lemma}\label{le:alg-1-connected-cutvertex}
% Let $G$ be a planar $3$-graph with $n$ vertices, let $\mathcal{T}$ be the block-cutvertex tree of $G$, let $B_1, B_2, \dots, B_h$ be the block-nodes of $\mathcal{T}$, and let $\mathcal{T}_{B_1}, \mathcal{T}_{B_2}, \dots, \mathcal{T}_{B_h}$ be a sequence of trees obtained by rooting $\mathcal{T}$ at its block-nodes.
% Let $c$ be a cutvertex-node of $\mathcal{T}_{B_i}$, with $1 \leq i \leq h$, and assume that the costs of the child block-nodes of $c$ are given.
% There exists an algorithm that computes $b_{B_i}(G_{B_i}(c))$ in $O(1)$ time.
% \end{lemma}
% \begin{proof}
% Since $G$ is a $3$-graph, every cutvertex-node $c$ of $\mathcal{T}_{B_i}$ has at most two children. If $c$ has only one child $B$, then $b_{B_i}(G_{B_i}(c)) = b_{B_i}(G_{B_i}(B))$. If $c$ has two children $B$ and $B'$, then $b_{B_i}(G_{B_i}(c)) = b_{B_i}(G_{B_i}(B)) + b_{B_i}(G_{B_i}(B'))$.
% \end{proof}

\begin{lemma}\label{le:alg-1-connected-block}
Let $G$ be a planar $3$-graph with $n$ vertices, let $\mathcal{T}$ be the block-cutvertex tree of $G$, let $\beta_1, \beta_2, \dots, \beta_h$ be the block-nodes of $\mathcal{T}$, and let $\mathcal{T}_{\beta_1}, \mathcal{T}_{\beta_2}, \dots, \mathcal{T}_{\beta_h}$ be the sequence of trees obtained by rooting $\mathcal{T}$ at its block-nodes.
Let $\beta$ be a block-node of $\mathcal{T}_{\beta_i}$, with $1 \leq i \leq h$, and assume that the costs of the children $\chi_1, \chi_2, \dots, \chi_k$ of $\beta$ are given.
There exists an algorithm that computes $b_{\beta_i}(\beta)$ in $O(k)$ time when $i=1$ and in $O(1)$ time when $2 \leq i \leq h$.
\end{lemma}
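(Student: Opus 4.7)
The plan is to express $b_{\beta_i}(\beta)$ as the sum of two contributions: a \emph{local} cost $\ell_{\beta_i}(\beta)$ determined by the edge of $B$ placed on the external face of its orthogonal representation, and an \emph{aggregate} cost $\Sigma_{\beta_i}(\beta) = \sum_{j=1}^k b_{\beta_i}(\chi_j)$ that comes from the cutvertex-children. Namely, if $\beta = \beta_i$ is the root block-node then $\ell_{\beta_i}(\beta) = b_{e_i}(B)$, where $e_i$ is the distinguished edge of $B$ of minimum label; if $\beta \neq \beta_i$ is non-root and its parent cutvertex in $\mathcal{T}_{\beta_i}$ is $c$, then $\ell_{\beta_i}(\beta) = b_c(B) = \min_{e \ni c} b_e(B)$, which can be obtained in $O(1)$ time since $\deg(c) \leq 2$ in $B$. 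In either case $b_{\beta_i}(\beta) = \ell_{\beta_i}(\beta) + \Sigma_{\beta_i}(\beta)$: correctness follows because each child block attached at a cutvertex can be plugged into any face where the cutvertex has an angle larger than $90^\circ$ (\cref{th:gd2018-enhanced-v}), and since every cutvertex has degree at most~$2$ such a face always exists, so the blocks assemble independently and the total number of bends is the sum of the individual block costs.

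For $i = 1$, the algorithm simply evaluates $\ell_{\beta_1}(\beta)$ in $O(1)$ and computes $\Sigma_{\beta_1}(\beta)$ by summing the $k$ given children costs in $O(k)$ time; in addition, it stores $\Sigma_{\beta_1}(\beta)$ and records which cutvertex $c_1$ is the parent of $\beta$ in $\mathcal{T}_{\beta_1}$ (if any). For $i \geq 2$, we exploit the fact that the children of $\beta$ in $\mathcal{T}_{\beta_i}$ differ from those in $\mathcal{T}_{\beta_1}$ by at most one element: the old parent cutvertex $c_1$ becomes a new child, and the new parent cutvertex $c_i$ (which was previously one of the children of $\beta$ in $\mathcal{T}_{\beta_1}$) is removed from the children set. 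Moreover, every other child $\chi_j$ is rooted at a subtree of $\mathcal{T}$ that is identical in both rootings, so its cost in $\mathcal{T}_{\beta_i}$ coincides with its cost in $\mathcal{T}_{\beta_1}$. Hence
\[
\Sigma_{\beta_i}(\beta) \;=\; \Sigma_{\beta_1}(\beta) \;-\; b_{\beta_1}(c_i) \;+\; b_{\beta_i}(c_1),
\]
where by hypothesis $b_{\beta_i}(c_1)$ is one of the given children costs in $\mathcal{T}_{\beta_i}$ and $b_{\beta_1}(c_i)$ is a value stored when $i=1$ was processed. Adding the updated local cost $\ell_{\beta_i}(\beta)$ (again $O(1)$ via the table of edge labels produced by \cref{th:key-result-2}) yields $b_{\beta_i}(\beta)$ in $O(1)$ time. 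The degenerate case in which $\beta$ has the same parent (or no parent) in both rootings is handled by the trivial identity $b_{\beta_i}(\beta) = b_{\beta_1}(\beta)$.

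The main technical point is to verify the structural claim that re-rooting $\mathcal{T}$ from $\beta_1$ to $\beta_i$ alters the parent/child relationship of $\beta$ by swapping exactly one child for one parent, and that the subtrees rooted at the remaining children are unchanged so their precomputed costs can be reused verbatim; this follows from the standard observation that re-rooting a tree only flips the orientation of edges along the path between the two roots. The boundary cases where $\beta$ coincides with $\beta_1$ or with $\beta_i$ are checked separately by replacing the local cost $b_c(B)$ with the root cost $b_{e_1}(B_1)$ or $b_{e_i}(B_i)$, respectively; both are available from the labeling produced by \cref{th:key-result-2}, so the $O(1)$ bound is preserved. This lemma then feeds directly into the Reusability Principle (\cref{le:reusability}) to yield \cref{th:1-connected-labeling}.
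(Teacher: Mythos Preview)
Your proof is correct and follows essentially the same approach as the paper: you split $b_{\beta_i}(\beta)$ into a local block cost $\ell_{\beta_i}(\beta)$ plus the aggregate children cost $\Sigma_{\beta_i}(\beta)$, compute the latter directly for $i=1$, and for $i\ge 2$ update it via the difference formula $\Sigma_{\beta_i}(\beta)=\Sigma_{\beta_1}(\beta)-b_{\beta_1}(c_i)+b_{\beta_i}(c_1)$, handling the boundary cases $\beta=\beta_1$ and $\beta=\beta_i$ separately. This matches the paper's argument (which uses the notation $\mathcal{S}_{\beta_1}(\beta)$ for your $\Sigma_{\beta_1}(\beta)$ and writes out the same subtract/add update), with the only cosmetic difference that you explicitly invoke \cref{th:gd2018-enhanced-v} to justify that block costs add independently, whereas the paper leaves this implicit from the setup preceding the lemma.
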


\begin{proof}
We distinguish between the case when $i = 1$ and $2 \leq i \leq h$.
%For each case we also distinguish two subcases: when $B \neq B_i$ and when $B = B_i$.
%	We will denote by $e_{\rho_i}(\mu)$ the reference edge of $\skel(\mu)$ in $T_{\rho_i}$.
	
	\medskip\noindent\textsf{Case $i = 1$:} Let $\beta$ be the currently visited block in a bottom-up visit of $\mathcal{T}_{\beta_1}$. We have two subcases.
	\begin{enumerate}
	\item {$\beta \neq \beta_1$.}
		Let $\chi_1, \chi_2, \dots \chi_{k}$ be the children of $\beta$ and let $\chi$ be the parent of $\beta$ in $\mathcal{T}_{\beta_1}$. Let $c$ be the cut-vertex
		of $G$ corresponding to $\chi$ and let $B$ be the block of $G$ corresponding to $\beta$. Recall that $b_c(B)$ denotes the cost of a $c$-constrained optimal orthogonal representation of $B$. If $B$ is a trivial block of $G$, then $b_c(B)=0$. Otherwise, let $e'$ and $e''$ be the two edges of $B$ incident to $c$. Since $c$ has degree two in $B$, any $c$-constrained orthogonal representation of $B$ has both $e'$ and $e''$ on the external face. Hence, $b_c(B) = b_{e'}(B) = b_{e''}(B)$.
        The cost $b_{\beta_1}(\beta)$ is computed by summing up $b_c(B)$ with the costs of the children of $\beta$ in $\mathcal{T}_{\beta_1}$. Namely, $b_{\beta_1}(\beta) = b_c(B) + \mathcal{S}_{\beta_1}(\beta)$, where $\mathcal{S}_{\beta_1}(\beta) = \sum_{j=1}^{k} b_{\beta_1}(\chi_j)$. Hence, $b_{\beta_1}(\beta)$ can be computed in $O(k)$ time.

	\item {$\beta = \beta_1$.}
	    Let $B_1$ be the block corresponding to $\beta_1$ in $\mathcal{T}_{\beta_1}$ and
        let $e_1$ be the edge of block $B_1$ whose label is minimum over all labels of the edges of $B_1$. The cost $b_{\beta_1}(\beta)$ is computed by summing up the cost $b_{e_1}(B_1)$ of an $e_1$-constrained optimal orthogonal representation of $B_1$ with the costs of the $k$ children of $\beta_1$. Namely, $b_{\beta_1}(\beta) = b_{e_1}(B_1) + \mathcal{S}_{\beta_1}(\beta)$, where $\mathcal{S}_{\beta_1}(\beta) = \sum_{j=1}^k b_{\beta_1}(\chi_j)$. Again, this can be computed in $O(k)$ time.
    \end{enumerate}

 	\medskip\noindent\textsf{Case $2 \leq i \leq h$:} Let $\beta$ be the currently visited block-node in a bottom-up visit of $\mathcal{T}_{\beta_i}$.
    We assume that, during the bottom-up visit of the tree rooted at $\beta_1$, the value $\mathcal{S}_{\beta_1}(\beta)$ was stored at $\beta$.

	\begin{enumerate}
	\item {$\beta \neq \beta_i$.} Let $\chi$ be the parent of $\beta$ in $\mathcal{T}_{\beta_i}$ and let $c$ and $B$ be the cutvertex and the block of $G$ corresponding to $\chi$ and $\beta$, respectively. We distinguish between two subcases:
	    \begin{itemize}
	    \item $\beta \neq \beta_1$. Let $\chi'$ be parent node of $\beta$ in $\mathcal{T}_{\beta_1}$. If $\chi = \chi'$ then $b_{\beta_i}(\beta) = b_{\beta_1}(\beta) = b_c(B) + \mathcal{S}_{\beta_1}(\beta)$, else $b_{\beta_i}(\beta) = b_c(B) + \mathcal{S}_{\beta_1}(\beta) - b_{\beta_1}(\chi) + b_{\beta_i}(\chi')$.
	    \item $\beta = \beta_1$. In this case $\beta$ has a parent $\chi$ in $\mathcal{T}_{\beta_i}$ but it has no parent in $\mathcal{T}_{\beta_1}$. We have that $b_{\beta_i}(\beta) = b_c(B) + \mathcal{S}_{\beta_1}(\beta) - b_{\beta_1}(\chi)$.
	    \end{itemize}
	Since the value $b_e(B)$ is known for every edge $e$ of $B$, the cost $b_c(B)$ of an optimal $c$-constrained orthogonal representation of $B$ is computed in $O(1)$ time. Hence, also $b_{\beta_i}(\beta)$ is computed in $O(1)$~time.

	\item {$\beta = \beta_i$.} Let $\chi'$ be the parent of $\beta$ in $\mathcal{T}_{\beta_1}$ and let $B_i$ be the block of $G$ corresponding to $\beta_i$.
	Let $e_i$ be the edge of block $B_i$ whose label is minimum over all labels of the edges of $B_i$.
	Since $\chi'$ is a child of $\beta$ in $\mathcal{T}_{\beta_i}$, we have $b_{\beta_i}(\beta) = b_{e_i}(B_i) + \mathcal{S}_{\beta_1}(\beta) + b_{\beta_i}(\chi')$. Since $\beta_i$ is equipped with a pointer to $e_i$ and $b_{e_i}(B_i)$ can be accessed in $O(1)$ time, we have that $b_{\beta_i}(\beta)$ is computed in $O(1)$ time.
    \end{enumerate}
\end{proof}

\cref{th:1-connected-labeling} is an immediate consequence of \cref{le:reusability,le:alg-1-connected-block}.

%%%%%%%%%%%
%%%%%%%%%%%
%%%%%%%%%%%
%%%%%%%%%%%
%%%%%%%%%%%
%%%%%%%%%%%
%\section{Optimal~Orthogonal~Representations: Proofs of Theorems \ref{th:gd2018-enhanced} and \ref{th:gd2018-enhanced-v}}\label{se:thgd2018-enhanced}

\section{Third Ingredient: Drawing Procedure}\label{se:thgd2018-enhanced}
%\section{Third Ingredient: Drawing Procedure (\cref{th:gd2018-enhanced,th:gd2018-enhanced-v})}\label{se:thgd2018-enhanced}
%Constructing Optimal Orthogonal Representations of Biconnected Graphs

The third ingredient for the proof of \cref{th:main} consists of a drawing procedure.
%In this section we show how to construct an optimal orthogonal representation of a biconnected graph (\cref{th:gd2018-enhanced}). Also, we show how to handle the constraint that a specific vertex forms an angle greater than $90^\circ$ on the external face (\cref{th:gd2018-enhanced-v}).
In this section we prove \cref{th:gd2018-enhanced,th:gd2018-enhanced-v}.
%As seen in \cref{se:proof-structure}, \cref{th:gd2018-enhanced,th:gd2018-enhanced-v} are exploited in \cref{th:main} that proves how to compute in linear time optimal orthogonal representations of 1-connected planar $3$-graphs in the variable embedding setting.
%As in the previous section, we assume that \cref{th:fixed-embedding-cost-one} holds.

%\smallskip\noindent\textbf{Proof of \cref{th:gd2018-enhanced}.}
\subsection{Proof of \cref{th:gd2018-enhanced}}

Let $G$ be a biconnected planar $3$-graph distinct from $K_4$ and let $e$ be an edge of $G$ such that $b_e(G)$ is minimum. Note that $b_e(G)$ is finite. Indeed, since $G$ is not $K_4$ by \cref{th:shapes} it always admits an optimal orthogonal representation.
Let $\rho$ be the Q-node corresponding to $e$ in the SPQR-tree $T$ of $G$ and let $T_\rho$ be the tree $T$ rooted at $\rho$.
To prove \cref{th:gd2018-enhanced},
% we use an algorithm that follows the same strategy as the one of \cref{th:key-result-2}. Precisely,
we construct an optimal $e$-constrained orthogonal representation $H$ of $G$ that satisfies Properties~\textsf{O1--O4} of \cref{th:shapes} by performing a bottom-up visit of $T_\rho$. For each visited node $\mu$ such that $\mu$ is neither the root $\rho$ nor an R-node child of $\rho$, the algorithm computes the shape-cost set of $\mu$ with the procedures of \cref{le:shape-cost-set-inner-Q,le:shape-cost-set-P,le:shape-cost-set-S,le:shape-cost-set-inner-R,le:labeling-S-P}.
Let $G_{\rho}(\mu)$ be the pertinent graph of $\mu$.
For each value $b^{\sigma}_{\rho}(\mu) \neq \infty$ in the shape-cost set of $\mu$, the algorithm computes an orthogonal representation $H^{\sigma}_{\rho}(\mu)$ of $G_{\rho}(\mu)$ that has: $b^{\sigma}_{\rho}(\mu)$ bends; shape $\sigma$; and at most one bend per edge.
Finally, it uses these orthogonal representations to construct an optimal $e$-constrained orthogonal representation $H$ of~$G$.
To achieve overall linear-time complexity, $H^{\sigma}_{\rho}(\mu)$ is constructed incrementally, by suitably combining the representations of the pertinent graphs of the children of $\mu$ in $T_{\rho}$. Namely, for each visited node $\mu$ such that $\mu$ is neither the root $\rho$ nor an R-node child of $\rho$ we consider the following cases:
\begin{itemize}
	\item If $\mu$ is a leaf Q-node, we have that $H^{\sigma}_{\rho}(\mu) \in \{H^{\zerob}_{\rho}(\mu), H^{\oneb}_{\rho}(\mu)\}$ and the two orthogonal representations are trivially constructed.
	%$H^{\sigma}_{\rho}(\mu)$ is trivially constructed.
	
	\item If $\mu$ is a P-node, we have that if $\mu$ is an inner P-node $H^{\sigma}_{\rho}(\mu) \in \{H^{\d}_{\rho}(\mu),H^{\x}_{\rho}(\mu)\}$ while if $\mu$ is the root child $H^{\sigma}_{\rho}(\mu) \in \{H^{\l}_{\rho}(\mu),H^{\c}_{\rho}(\mu)\}$. $H^{\sigma}_{\rho}(\mu)$ is constructed by composing in parallel the representations of the children of $\mu$ with the values of spirality described in \cref{le:shape-cost-set-P}.

	\item If $\mu$ is an S-node, each representative shape $\sigma$ is a $k$-spiral with $k \in [1,4]$. Depending on the value of $k$, we apply the corresponding procedure in the proof of \cref{le:shape-cost-set-S} and construct $H^{\sigma}_{\rho}(\mu)$ by suitably selecting the representative shapes of the children of $\mu$.

	\item If $\mu$ is an inner R-node, by the algorithm in the proof of \cref{le:shape-cost-set-inner-R} we establish if $b^{\x}_{\rho}(\mu)$ is finite, in which case we construct both $H^{\d}_{\rho}(\mu)$ and $H^{\x}_{\rho}(\mu)$. Otherwise, we only construct $H^{\d}_{\rho}(\mu)$. To construct $H^{\d}_{\rho}(\mu)$ and (possibly) $H^{\x}_{\rho}(\mu)$ we use \cref{th:fixed-embedding-cost-one} and then \cref{le:d-shape-triconnected,le:x-shape-triconnected}, respectively.

	% we: (i) Use the auxiliary graph $G'$ described in the proof of \cref{le:shape-cost-set-inner-R}. (ii) Compute a cost-minimum orthogonal representation $H$ of $G'$ by means of \cref{th:fixed-embedding-cost-one}. (iii) Apply \cref{le:d-shape-triconnected} to construct a \D-shaped representation $H'$ of $G' \setminus e_{\rho}(\mu)$. (iv) Based on \cref{le:substitution}, we replace every flexible edge $e$ of $G' \setminus e_{\rho}(\mu)$ that has spirality $k$ in $H'$ with the orthogonal representation $H^k_{\rho}(\nu)$ of the child $\nu$ of $\mu$ corresponding to edge $e$
\end{itemize}

%When $\mu = \rho$ or when $\mu$ is the root child and it is an R-node,
Based on the above computations, we now construct an optimal $e$-constrained orthogonal representation $H$ of $G$ as follows, depending of the whether the root child is a P-, an S-, or an R-node.

\begin{itemize}

	%\item If the child of $\rho$ is an P-node $\mu$, we apply the same procedure in the proof of \cref{le:labeling-S-P} to determine if the edge corresponding to $\rho$ will have zero or one bend in $H$ and attach it to the representation $H^{\c}_{\rho}(\mu)$ or $H^{\l}_{\rho}(\mu)$, respectively.

	\item If the child of $\rho$ is a P-node $\mu$, based on \cref{le:labeling-S-P}, we construct $H$ by composing in parallel either a zero-bend representation of $e$ with $H^{\c}_{\rho}(\mu)$ or a one-bend representation of $e$ with $H^{\l}_{\rho}(\mu)$ depending on which among $b^{\c}_{\rho}(\mu)$ and $b^{\l}_{\rho}(\mu) + 1$ is minimum.

	\item If the child of $\rho$ is an S-node $\mu$, based on the degrees of the poles of $\mu$ in $G_{\rho}(\mu)$, we use the case analysis of \cref{le:labeling-S-P} to determine if the edge $e$ corresponding to $\rho$ has zero or one bend in $H$ and choose the orthogonal representation of $H^{\sigma}_{\rho}(\mu)$ accordingly.

    \item If $\mu$ is a root child R-node, we set the flexibilities of the virtual edges of $\skel(\mu)$ as described in the proof of \cref{le:labeling-R}. For each of the two possible choices of the external face of $\skel(\mu)$, we apply \cref{th:fixed-embedding-cost-one} to the corresponding planar embedding of $\skel(\mu)$ and choose the representation of $\skel(\mu)$ with minimum cost. As described in \cref{le:shape-cost-set-inner-R}, we construct the representation $H$ of $G$ by substituting each flexible edge with the optimal shape-equivalent orthogonal representation of the pertinent graph of the corresponding S-node child (see \cref{le:substitution}).  	
\end{itemize}

Such a representation is constructed in $O(n_\mu)$ time for S- and R-nodes, and in $O(1)$ time for P- and Q-nodes.
Therefore, the whole algorithm takes $O(n)$ time and the orthogonal representation of cost $b_e(G)$ associated with the root $\rho$ of $T_\rho$ is the desired bend-minimum orthogonal representation of $G$. This concludes the proof of \cref{th:gd2018-enhanced}.

%\begin{proof}
%	If $v$ is a degree-1 vertex, $G$ is consists of a single edge, and the statement is obvious.
%	If $v$ is a degree-2 vertex, \cref{le:1-bend} guarantees the existence of a $v$-constrained optimal orthogonal representation of $G$ with an angle larger than $90^\circ$ at $v$ on the external face. Clearly, if $v$ is on the external face, then both its two edges are on the external face. Therefore, for any edge $e$ incident to $v$ we can construct the SPQR-tree $T$ of $G$ rooted at $e$ and we can apply on $T$ the $O(n)$-time algorithm in the proof of \cref{th:gd2018-enhanced}. Since in this case the child-root $\mu$ of $T$ is an $S$-node (because $v$ has degree 2), that algorithm guarantees that the angle at $v$ on the external face is an angle of either $180^\circ$ or $270^\circ$ when it merges a representation of $G_\mu$ with $e$.
%\end{proof}

%\paragraph{Proof of \cref{th:gd2018-enhanced-v}}
\medskip

%\smallskip\noindent\textbf{Proof of \cref{th:gd2018-enhanced-v}.}
\subsection{Proof of \cref{th:gd2018-enhanced-v}}
If $v$ is a degree-1 vertex, $G$ consists of a single edge, and the statement is obvious. If $v$ is a degree-2 vertex, by \cref{le:1-bend} there exists a $v$-constrained optimal orthogonal representation of $G$ with an angle larger than $90^\circ$ at $v$ on the external face.
To compute such an orthogonal representation in $O(n)$ time we proceed as follows.
Let $e_1$ and $e_2$ be the two edges incident to $v$ and let $\rho_1$ and $\rho_2$ be the two nodes of the SPQR-tree $T$ of $G$ corresponding to $e_1$ and $e_2$, respectively.
Arbitrarily choose $\rho$ in $\{\rho_1, \rho_2\}$ and consider the SPQR-tree $T_\rho$ rooted at $\rho$.
We use the same procedure as in the proof of \cref{th:gd2018-enhanced} to compute an optimal $v$-constrained orthogonal representation of $G$.
Observe that the root child of $T_\rho$ is an S-node and that the algorithm in the proof of \cref{th:gd2018-enhanced} (which relies on the case analysis of \cref{le:labeling-S-P}) constructs an orthogonal representation where the angle at $v$ on the external face is either $180^\circ$ or $270^\circ$. This concludes the proof of \cref{th:gd2018-enhanced-v}.

\medskip
To conclude the proof of \cref{th:main} it remains to prove \cref{th:fixed-embedding-cost-one,th:bend-counter}, that are the subject of the next three sections.

%\cref{se:ref-embedding,se:bend-counter}
%\medskip
%It remains to prove \cref{th:fixed-embedding-cost-one,th:bend-counter} that consider planar triconnected cubic graphs with flexible edges in the fixed and and the variable embedding setting, respectively.

%\thOneConnectedLinear*

%\cref{le:1-bend} is used to prove the following general theorem on orthogonal drawings of planar $3$-graphs, which is the main result of this section.

%\begin{theorem}\label{th:1-connected-quadratic}
%	Let $G$ be a 1-connected planar $3$-graph distinct from $K_4$. There exists an optimal orthogonal representation of $G$, which can be computed in $O(n^2)$.
%\end{theorem}

%     ____  __                    ______      __    _
%    / __ \/ /___ _____  ___     / ____/_  __/ /_  (_)____
%   / /_/ / / __ `/ __ \/ _ \   / /   / / / / __ \/ / ___/
%  / ____/ / /_/ / / / /  __/  / /___/ /_/ / /_/ / / /__
% /_/   /_/\__,_/_/ /_/\___/   \____/\__,_/_.___/_/\___/

%    ______                 __
%   / ____/________ _____  / /_  _____
%  / / __/ ___/ __ `/ __ \/ __ \/ ___/
% / /_/ / /  / /_/ / /_/ / / / (__  )
% \____/_/   \__,_/ .___/_/ /_/____/
%                /_/

\section{Plane Triconnected Cubic Graphs with Flexible Edges}\label{se:fixed-embedding-cost-one}

%Namely, in \cref{sse:shape-cost-R}, the virtual edges of the skeleton of a rigid component have a flexibility that is equal to their spirality threshold, which is a key factor for the efficiency of the labeling algorithm.
%In this section we prove \cref{th:fixed-embedding-cost-one} by studying the problem of efficiently computing a cost-minimum orthogonal representation of a plane triconnected cubic graph with flexible edges.
As already explained in \cref{se:labeling}, the skeleton of a rigid component is modeled as a plane triconnected cubic graph~$G$ whose edges are given a non-negative integer that represents their flexibility.
More formally, let $0 \leq \flex(e) \leq 4$ denote the flexibility of any edge $e \in E(G)$ (recall that if $\flex(e)=0$, we also say that $e$ is called inflexible).
We aim at computing an embedding-preserving orthogonal representation of $G$ with minimum cost, i.e., a representation with cost $c(G) = \min\{c(H): H \textrm{~is an embedding-preserving}$ $\textrm{orthogonal representation of~} G\}$.
We recall that Rahman, Nakano, and Nishizeki describe a linear-time algorithm that computes a bend-minimum orthogonal representation of a plane triconnected cubic graph \cite{DBLP:journals/jgaa/RahmanNN99}. However, their algorithm does not consider graphs with flexible edges.
In \cref{sse:demanding-3-extrovert-reference-definition,sse:intersecting} we extend the approach of Rahman, Nakano, and Nishizeki to graphs with flexible edges and introduce the notion of demanding 3-extrovert cycles of $G$. In \cref{sse:cost-min-fixed-embedding} we characterize the cost $c(G)$ in terms of a closed formula that uses the cardinality of the set of demanding 3-extrovert cycles. This formula is extensively used in \cref{se:ref-embedding,se:bend-counter} to prove \cref{th:fixed-embedding-cost-one} and \cref{th:bend-counter}.
%Finally, in \cref{sse:fixed-embedding-cost-one} we give a constructive proof of \cref{th:fixed-embedding-cost-one}.

\medskip
Let $C$ be any 3-extrovert or 3-introvert cycle of $G$. The three leg vertices of $C$ split $C$ into three edge-disjoint paths, called the \emph{contour paths} of $C$. Let $C'$ be any 3-extrovert or 3-introvert cycle of $G$ distinct from $C$. We say that $C$ and $C'$ \emph{intersect} (equivalently, $C$ and $C'$ are \emph{intersecting}) if  they share at least one edge and none of the contour paths of one of $C$ and $C'$ is properly contained in a contour path of the other.
Suppose that $C$ and $C'$ are two distinct 3-extrovert cycles of $G$. If $C$ is included in $G(C')$ (i.e.\ $C \subset G(C')$), $C$ is a \emph{descendant} of $C'$ and $C'$ is an \emph{ancestor} of $C$; also, $C$ is a \emph{child-cycle} of $C'$ if $C$ is not a descendant of another descendant of $C'$.
\cref{fi:3-extro-relationship} depicts different 3-extrovert cycles of the same plane graph $G$. In \cref{fi:3-extro-relationship-a}, $C_2$ is a descendant of $C_1$; in particular, $C_2$ is a child-cycle of $C_1$. \cref{fi:3-extro-relationship-b,fi:3-extro-relationship-c} show examples of 3-extrovert cycles that do not have an inclusion relationship; in \cref{fi:3-extro-relationship-b} $C_3$ and $C_4$ are intersecting and in \cref{fi:3-extro-relationship-c} $C_5$ and $C_6$ are not intersecting.

%%%%%%%%
%%%%%%%%
%%%%%%%%
\subsection{Demanding 3-extrovert cycles}\label{sse:demanding-3-extrovert-reference-definition}

%Let $C$ and $C'$ be two distinct 3-extrovert cycles of $G$. If $C$ is included in $G(C')$ (i.e.\ $C \subset G(C')$), $C$ is a \emph{descendant} of $C'$ and $C'$ is an \emph{ancestor} of $C$; also, $C$ is a \emph{child-cycle} of $C'$ if $C$ is not a descendant of another descendant of $C'$.
%If $C$ and $C'$ are not in an ancestor-descendant relationship and they share at least one vertex (and, hence, at least one edge since $G$ has vertex degree at most three) then they are \emph{intersecting}.
%\cref{fi:3-extro-relationship} depicts different 3-extrovert cycles of the same plane graph $G$. In \cref{fi:3-extro-relationship-a}, $C_2$ is a descendant of $C_1$; in particular, $C_2$ is a child-cycle of $C_1$. \cref{fi:3-extro-relationship-b,fi:3-extro-relationship-c} show examples of 3-extrovert cycles that do not have an inclusion relationship; in \cref{fi:3-extro-relationship-b} $C_3$ and $C_4$ are intersecting and in \cref{fi:3-extro-relationship-c} $C_5$ and $C_6$ are not intersecting.
%
The following lemma rephrases
%using our terminology
Lemma~1 of~\cite{DBLP:journals/jgaa/RahmanNN99}.

\begin{figure}[tb]
	\centering
	\subfloat[]{\label{fi:3-extro-relationship-a}\includegraphics[width=0.2\columnwidth]{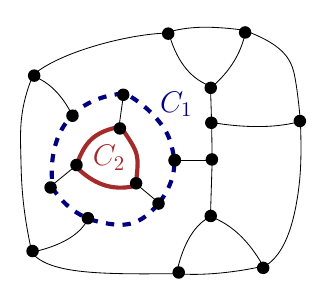}}
	\hfil
	\subfloat[]{\label{fi:3-extro-relationship-b}\includegraphics[width=0.2\columnwidth]{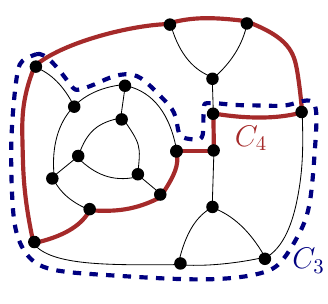}}
	\hfil
	\subfloat[]{\label{fi:3-extro-relationship-c}\includegraphics[width=0.2\columnwidth]{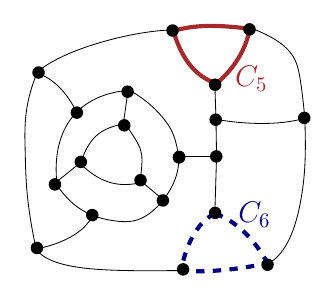}}
	\hfil
	\caption{Relationships among 3-extrovert cycles in a plane triconnected cubic graph. (a) Cycle $C_2$ is a child-cycle of cycle $C_1$. (b) $C_3$ and $C_4$ have no inclusion relationship and are intersecting. (c) $C_5$ and $C_6$ have no inclusion relationship and are not intersecting.}\label{fi:3-extro-relationship}
\end{figure}

\begin{lemma}[\cite{DBLP:journals/jgaa/RahmanNN99}]\label{le:independent-child-cycles}
	Let $C$ be a 3-extrovert cycle of a plane triconnected cubic graph $G$ and let $C_1$ and $C_2$ be any two 3-extrovert child-cycles of $G(C)$. Cycles $C_1$ and $C_2$ are not intersecting.
\end{lemma}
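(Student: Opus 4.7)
The plan is to argue by contradiction: assume that $C_1$ and $C_2$ are two intersecting $3$-extrovert child-cycles of $G(C)$, and build from them a $3$-extrovert descendant $C_3$ of $C$ that strictly contains (at least) one of $C_1,C_2$, thereby contradicting the child-cycle property. The construction of $C_3$ is natural: I would take $C_3$ to be the outer boundary of the closed planar region $G(C_1)\cup G(C_2)$. Since both $C_1$ and $C_2$ lie in $G(C)$ and $G(C)$ is simply connected, $C_3$ is a simple cycle of $G$ contained in $G(C)$ with $G(C_3)\supseteq G(C_1)\cup G(C_2)$.

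The first substantive step is a structural analysis of the intersection. Because $G$ is cubic, the edges shared by $C_1$ and $C_2$ form a collection of maximal edge-disjoint subpaths $P_1,\ldots,P_k$ with $k\geq 1$. At each endpoint $x$ of some $P_i$, the two cycles diverge, so all three edges at $x$ belong to $C_1\cup C_2$; thus $x$ carries no external leg of either cycle that points out of $C_1\cup C_2$. At every vertex interior to some $P_i$, in contrast, the third edge does not belong to $C_1\cup C_2$, so if it is a leg at all it must point out of the union. The hypothesis that $C_1$ and $C_2$ are intersecting, in the technical sense that no contour path of one cycle is contained in a contour path of the other, ensures that the leg vertices of the two cycles are interleaved along the $P_i$'s.

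The second step is to verify that $C_3$ is itself $3$-extrovert. The absence of external chords of $C_3$ follows immediately from the analogous property of $C_1$ and $C_2$: any external chord of $C_3$ would be embedded outside $G(C_1)\cup G(C_2)$ with both endpoints on $C_1$ or on $C_2$, violating the no-external-chord property of $C_1$ or of $C_2$. Counting the external legs of $C_3$ reduces to showing that exactly three of the six external legs of $C_1$ and $C_2$ survive as legs outside $G(C_1)\cup G(C_2)$: each endpoint of a shared subpath $P_i$ absorbs one leg of each cycle, the legs of each cycle incident to the interior of some $P_i$ either remain as legs of $C_3$ or are absorbed depending on the side, and the remaining legs of each cycle on its non-shared arcs survive as legs of $C_3$. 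A careful case analysis on the number $k$ of shared subpaths and on the distribution of the leg vertices of $C_1$ and $C_2$ along these subpaths, driven by the intersecting hypothesis and by the triconnectivity of $G$, yields exactly three surviving legs.

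Once $C_3$ has been shown to be a $3$-extrovert cycle of $G$ contained in $G(C)$ and strictly containing $C_1$ and $C_2$, both $C_1$ and $C_2$ turn out to be descendants of $C_3$, which is itself a descendant of $C$, contradicting the assumption that $C_1$ and $C_2$ are child-cycles of $G(C)$. The main obstacle I foresee is precisely the leg-count of the second step: although the overall bookkeeping ``three out of six'' is clean, a rigorous proof must track where legs fall relative to each $P_i$ and must exclude degenerate configurations (such as a leg vertex of one cycle coinciding with an endpoint of two consecutive shared subpaths) which the cubic and triconnected hypotheses on $G$ are needed to rule out.
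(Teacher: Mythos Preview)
The paper does not actually prove this lemma: it is quoted from~\cite{DBLP:journals/jgaa/RahmanNN99} (their Lemma~1), so there is no in-paper argument to compare against. Your contradiction strategy via the outer boundary $C_3$ of $G(C_1)\cup G(C_2)$ is nonetheless the right idea, and it is precisely the mechanism the paper itself uses later for the closely related Property~\ref{pr:intersecting-3-extrovert}.

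Your leg count, however, is wrong. If $C_1$ and $C_2$ are intersecting $3$-extrovert cycles then, as the paper argues in the proof of Property~\ref{pr:intersecting-3-extrovert}, \emph{two} of the three legs of $C_1$ are edges of $C_2$ and vice versa: at each of the two endpoints of the shared maximal subpath, the edge of $C_2$ leaving $G(C_1)$ is a leg of $C_1$, and symmetrically. (This also forces $k=1$: with $k\ge 2$ shared subpaths you would absorb $2k\ge 4$ legs of a $3$-legged cycle.) Hence only one leg of $C_1$ and one leg of $C_2$ survive as external legs of $C_3$, so $C_3$ is $2$-extrovert, not $3$-extrovert. Your promised ``three out of six'' bookkeeping cannot come out to three.

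The miscount is easy to repair and in fact streamlines the argument. A $2$-extrovert cycle properly inside $G$ exhibits a separation pair (its two leg vertices), contradicting triconnectivity directly---you never need to compare $C_3$ with $C$ or invoke the child-cycle hypothesis at all. As a secondary point, even had the count been three, your intended contradiction had a residual gap: nothing you wrote rules out $C_3=C$, in which case there is no intermediate $3$-extrovert cycle and $C_1,C_2$ could still be child-cycles of $C$.
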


\cref{le:independent-child-cycles} implies that if $C$ is a 3-extrovert cycle of $G$, the inclusion relationships among all the 3-extrovert cycles in $G(C)$ (including $C$) can be described by a \emph{genealogical tree}~\cite{DBLP:journals/jgaa/RahmanNN99} denoted as $T_C$: the root of $T_C$ corresponds to $C$ and any node of $T_C$ represents a 3-extrovert cycle of $G$ and is adjacent to the nodes of $T_C$ representing its child-cycles.
%The three leg vertices of a 3-extrovert cycle $C$ of $G$ split $C$ into three edge-disjoint paths, called the \emph{contour paths} of $C$.
The following lemma is proved in Lemmas~3 of~\cite{DBLP:journals/jgaa/RahmanNN99}.
\begin{lemma}\label{le:genealogicaltree_comp}
	Let $C$ be a 3-extrovert cycle of a plane triconnected cubic graph $G$. The genealogical tree $T_C$ can be computed in $O(n)$ time.
\end{lemma}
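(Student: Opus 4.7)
The plan is to prove the lemma in three stages: enumerating all 3-extrovert cycles of $G$ contained in $G(C)$, bounding their total combinatorial size by $O(n)$, and then assembling them into the tree $T_C$ using the laminarity guaranteed by \cref{le:independent-child-cycles}.

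First, for the enumeration, I would exploit the fact that $G$ is triconnected and cubic. Every 3-extrovert cycle $C'$ of $G$ is in bijection with a 3-edge cut of $G$: the three external legs of $C'$ are precisely a 3-edge cut separating $V(G(C'))$ from $V(G)\setminus V(G(C'))$, and conversely, in a cubic triconnected graph each 3-edge cut determines on each of its sides a unique cycle (if it were not a cycle, the smaller side would admit a 1- or 2-edge cut, contradicting triconnectivity). The number of 3-edge cuts in a cubic graph is $O(n)$, and they can be enumerated in $O(n)$ time either via the cactus representation of minimum edge cuts or, more conveniently here, by traversing the SPQR-tree of $G$, whose P- and certain S-structures expose all separation triples. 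A simple planarity test using the given embedding filters out, in constant time per cut, those cycles that do not lie in $G(C)$.

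Second, by \cref{le:independent-child-cycles} the surviving 3-extrovert cycles form a laminar family under the containment relation on their enclosed subgraphs. To build $T_C$ I would process these cycles in increasing order of $|V(G(\cdot))|$ via bucket sort in $O(n)$ time, and for each cycle locate its parent as the innermost already-discovered 3-extrovert cycle that strictly contains it. Equivalently, one can perform a single DFS of $G(C)$ that maintains a stack of ``currently open'' 3-extrovert cycles, pushing a cycle when the DFS first enters the region it encloses and popping it when the DFS leaves; the parent of every newly opened cycle is then the one on top of the stack. Because each vertex and each edge of $G(C)$ is touched only a constant number of times during this sweep, the construction step runs in linear time.

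The main obstacle will be ensuring that the whole procedure is truly $O(n)$ and does not degrade to $\Theta(n^2)$ via naive traversal of the contour paths of each cycle. The critical point is to represent each 3-extrovert cycle succinctly by its leg triple, as extracted from the 3-edge-cut decomposition (or from the SPQR-tree), rather than by its vertex list; the parent pointers are then determined by local operations at the SPQR-tree nodes corresponding to the cuts, and the DFS-based sweep only needs to consult these compact descriptors. With this succinct encoding in place, identifying the region enclosed by a cycle as ``opened'' or ``closed'' during the DFS reduces to constant-time tests, and the laminar-tree assembly becomes a standard linear-time task, yielding the claimed $O(n)$ bound.
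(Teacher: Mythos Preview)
The paper does not prove this lemma at all: the sentence immediately preceding it reads ``The following lemma is proved in Lemmas~3 of~\cite{DBLP:journals/jgaa/RahmanNN99},'' and no argument is given. So any self-contained proof you supply is necessarily ``different'' from the paper. That said, your outline has one real confusion and misses the much simpler route that the paper itself relies on elsewhere.

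The confusion is your appeal to the SPQR-tree of $G$. SPQR-trees encode \emph{separation pairs}, i.e.\ 2-vertex cuts, not ``separation triples'' or 3-edge cuts; and since $G$ is assumed triconnected, its SPQR-tree consists of a single R-node together with the Q-nodes for its edges, so it carries no useful decomposition information here. Your alternative suggestion of the cactus of minimum edge cuts is in principle viable (in a cubic triconnected graph the minimum edge connectivity is exactly~3), but it is heavier machinery than needed and you would still have to argue linear-time construction in the planar setting.

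The clean approach, which the paper in fact uses later (see the proofs of \cref{le:ref-embedding-testing} and \cref{le:ref-embedding}), goes through the planar dual: the three leg faces of any non-degenerate 3-extrovert cycle of $G$ form a separating 3-cycle in the dual $G^*$, and conversely each separating triangle of $G^*$ yields such a cycle. All triangles of a planar graph can be listed in $O(n)$ time by Chiba--Nishizeki~\cite{cn-asla-85}, and the separating ones among them are identified in constant time each. Once the 3-extrovert cycles contained in $G(C)$ are listed (each represented compactly by its three legs / leg faces), your laminarity-plus-DFS assembly of $T_C$ is a reasonable plan; \cref{le:independent-child-cycles} guarantees that sibling cycles are disjoint, so a single sweep of $G(C)$ with a stack of currently open cycles indeed yields the parent pointers in linear time. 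Replacing the SPQR-tree step with the dual-triangle enumeration would make your argument correct and aligned with the techniques the paper already invokes.
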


By \cref{th:RN03} in an orthogonal representation of a plane triconnected cubic graph every 3-extrovert cycle has at least one bend.
In the presence of flexible-edges inserting exactly one bend along a flexible edge does not imply an increase of $c(G)$. On the contrary, if a 3-extrovert cycle has no flexible edge, a bend along its edges contributes to $c(G)$ by one unit.

Since a bend may be shared by several 3-extrovert cycles, we are interested in finding a set of 3-extrovert cycles of minimum cardinality such that by inserting one bend on each cycle of the set, every 3-extrovert cycle that has no flexible edge has at least one bend.
%A minimum subset of 3-extrovert cycles that can receive one bend each such that no additional bend is needed for the others is called the set of ``demanding'' 3-extrovert cycles of $G$.
To find one such set we use a coloring rule that generalizes the one of Rahman et al.~\cite{DBLP:journals/jgaa/RahmanNN99} to the case of graphs with flexible edges.
%The set of demanding 3-extrovert cycles is formally defined below by using a coloring rule for the contour paths of each 3-extrovert cycle of $G$.
We use colors in the set $\{red, green, orange\}$; when coloring a 3-extrovert cycle $C$ we assume to have already colored the contour paths of its children in $T_C$ (if any).

%\medskip\noindent
%\begin{quote}
%	\textsc{3-Extrovert Coloring Rule:}
%\begin{enumerate}
%\item Base case: $C$ is the only vertex of $T_C$ (i.e. $C$ is a leaf). If $C$ does not have a %flexible edge, its three contour paths are green.
%Otherwise, all contour paths of $C$ that include a flexible edge are colored orange,
% the remaining contour paths of $C$ are colored red.
%\item Inductive case: two subcases are possible.
%\begin{itemize}
%\item $C$ neither shares edges with a green or an orange contour path of any of its child-cycles %nor it contains a flexible edge. The three contour paths of $C$ are colored green.
%\item Otherwise, let $P$ be a contour path of $C$; if $P$ contains a flexible edge, it is colored %orange, else if  $P$ shares edges with a green contour path of one the child-cycles of $C$, $P$ %is colored green; else $P$ is colored red.
%\end{itemize}
%\end{enumerate}

\begin{figure}[tb]
	\centering
	\subfloat[]{\label{fi:rahman_colouration-a}\includegraphics[width=0.33\columnwidth]{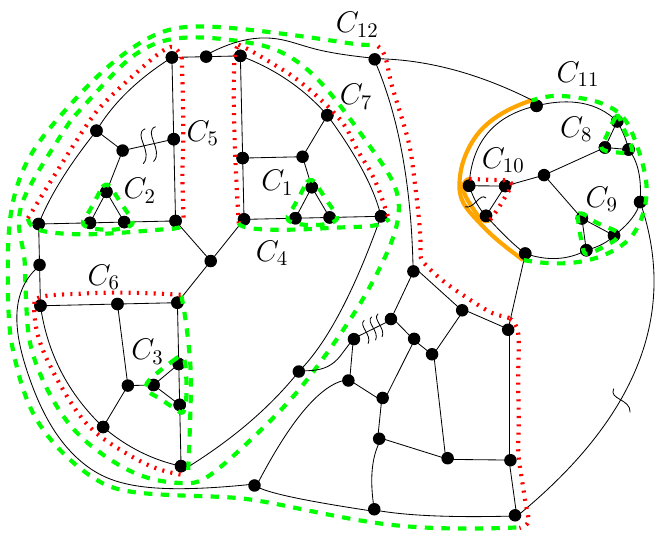}}
	\hfil
	\subfloat[]{\label{fi:rahman_colouration-b}\includegraphics[width=0.33\columnwidth]{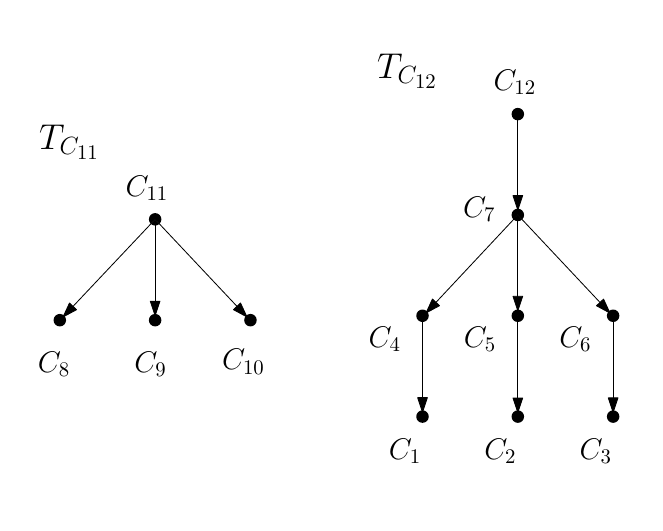}}
	\hfil
	\subfloat[]{\label{fi:rahman_colouration-d}\includegraphics[width=0.30\columnwidth]{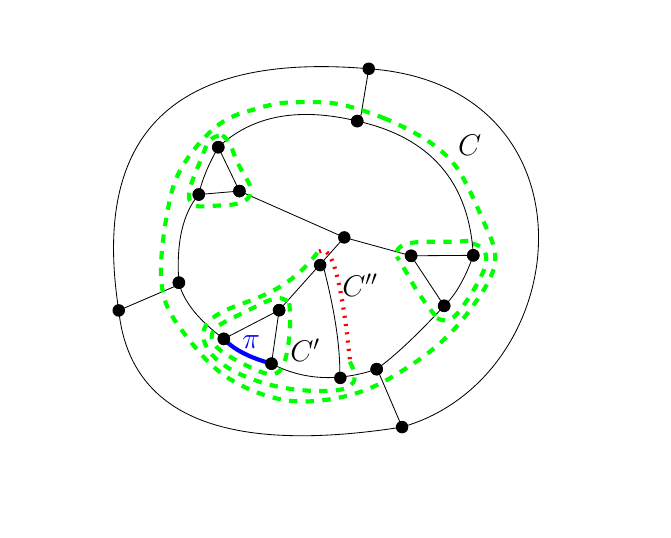}}
	\hfil
	\caption{(a) A plane triconnected cubic graph and the red-green-orange coloring of the contour paths of its 3-extrovert cycles. The red color is represented by a dotted line, the green color by a dashed line, and the orange color by a continuous line. (b) The genealogical trees $T_{C_{11}}$ and $T_{C_{12}}$. (c) A non-demanding 3-extrovert cycle $C$ whose contour paths are all green.}
	\label{fi:rahman_colouration}
\end{figure}

\medskip
%\begin{quote}
\noindent\textsc{3-Extrovert Coloring Rule}: Let $T_C$ be the genealogical tree rooted at $C$. The three contour paths of $C$ are colored according to the following two cases.
	\begin{enumerate}
		\item $C$ has no contour path that contains either a flexible edge or a green contour path of a child-cycle of $C$ in $T_C$; in this case the three contour paths of $C$ are colored green.
		\item Otherwise, let $P$ be a contour path of $C$.
		%\begin{itemize}
			(a) If $P$ contains a flexible edge then $P$ is colored orange.
			(b) If $P$ does not contain a flexible edge and it contains a green contour path of a child-cycle of $C$ in $T_C$, then $P$ is colored green.
			(c) In all other cases, $P$ is colored red.
		%\end{itemize}
	\end{enumerate}
%\end{quote}
\smallskip

A \emph{demanding 3-extrovert cycle} is a 3-extrovert cycle of $G$ whose contour paths are all green and that does not share edges with any green contour paths of its child-cycles. In other words, a demanding 3-extrovert cycle is a 3-extrovert cycle whose contour paths are colored according to Case~1 of the \textsc{3-Extrovert Coloring Rule}.

Examples of the \textsc{3-Extrovert Coloring Rule} and of demanding 3-extrovert cycles are given in \cref{fi:rahman_colouration}. The figure depicts a plane triconnected cubic graph and the genealogical trees $T_{C_{11}}$ and $T_{C_{12}}$ rooted at the 3-extrovert cycles $C_{11}$ and $C_{12}$, respectively.
Some of the edges of \cref{fi:rahman_colouration-a} are marked with one or more `\SymbolTilde'~symbols.
We shall use this notation to highlight flexible edges in our figures: the number of `\SymbolTilde'~symbols associated with an edge is the value of its flexibility.
The red-green-orange coloring of the contour paths of all 3-extrovert cycles of the graph are shown in \cref{fi:rahman_colouration-a}.
The leaves of $T_{C_{11}}$ are $C_8$, $C_9$, and $C_{10}$. The contour paths of $C_8$ and $C_9$ are all green; $C_{10}$ has a contour path with a flexible edge and, hence, $C_{10}$ has one orange contour path and two red contour paths. Cycle~$C_{11}$ has a contour path with a flexible edge and two contour paths containing two green contour paths of its child-cycles $C_9$ and $C_8$; hence, $C_{11}$ has one orange and two green contour paths.
The leaves of $T_{C_{12}}$ are $C_1$, $C_2$, and $C_3$; their contour paths are all green because they do not have any flexible edge.  The internal node $C_4$ of $T_{C_{12}}$ has a contour path sharing an edge with a green contour path of its child-cycle $C_1$; therefore, this contour path of $C_4$ is green and the other two are red. Similarly, each of $C_5$ and $C_6$ has a green contour path and two red contour paths. The internal node $C_7$ has no contour path sharing edges with a green contour path of one of its child-cycles and, hence, its three contour-paths are green. Finally, the root $C_{12}$ has one contour path sharing an edge with a green contour path of its child-cycle $C_5$; hence, this contour path is green and the other two are red.
In \cref{fi:rahman_colouration-a}, the cycles $C_1$, $C_2$, $C_3$, $C_7$, $C_8$ and $C_9$ are demanding 3-extrovert cycles. All other 3-extrovert cycles of the graph of \cref{fi:rahman_colouration-a} are not demanding. Observe that there may exist 3-extrovert cycles whose contour paths are all green and that are not demanding. See, for example, \cref{fi:rahman_colouration-d} where the 3-extrovert cycle $C$ has a demanding 3-extrovert cycle on each of its contour paths and, hence, it is not demanding.

The following property is an immediate consequence of the \textsc{3-Extrovert Coloring Rule} and of the definition of demanding 3-extrovert cycles.

\begin{property}\label{pr:stabbing-path}
Le $C$ be a non-demanding 3-extrovert cycle of $G$ and let $C'$ be a demanding 3-extrovert cycle such that $C'$ is a descendant of $C$ and such that $C \cap C'$ is a path $\pi$. Let $C''$ be any node of $T_C$ in the path from $C$ to $C'$. We have that $\pi \in C''$.
\end{property}

\begin{proof}
	We have that $C'\in T_{C''}$, which is implied by the fact that $C''$ is a node of $T_C$ in the path from $C$ to $C'$. Suppose for a contradiction that $\pi \not \in C''$.  Since $C''\in T_C$, we have $C''\in G(C)$. Since $C''\in G(C)$ and $\pi \in C$, if $\pi \not \in C''$ then $\pi\not \in G(C'')$ and consequently $C'\not \in T_{C''}$. A contradiction. Hence, $\pi \in C''$.
\end{proof}

For example in \cref{fi:rahman_colouration-d} $C''$ is a descendant of $C$ in $T_C$ and contains the path $\pi = C \cap C'$. An immediate consequence of \cref{pr:stabbing-path} is that also cycle $C''$ is non-demanding and inserting a bend along $\pi$ satisfies Condition ($iii$) of \cref{th:RN03} for $C$, $C'$, and $C''$.

\subsection{Intersecting 3-extrovert cycles}\label{sse:intersecting}
%$C$ has no contour path that contains either a flexible edge or a green contour path of a child-cycle of $C$ in $T_C$; in this case the three contour paths of $C$ are colored green.

\begin{property}\label{pr:legs}
Let $G$ be a triconnected cubic graph and let $C$ be a 3-extrovert cycle of $G$. The legs of~$C$ are either incident to a common vertex outside~$C$ or they are incident to three distinct vertices outside~$C$.
\end{property}
\begin{proof}
Suppose for a contradiction that two legs of $C$ are incident to a vertex $v_1$ while the remaining leg $l$ of $C$ is incident to a vertex $v_2 \neq v_1$. Let $v_3$ be the leg-vertex of $C$ that is an endvertex of $l$. Vertices $v_1$ and $v_3$ are a separation pair of $G$, contradicting the hypothesis that $G$ is triconnected.
\end{proof}

Based on \cref{pr:legs}, we say that $C$ is \emph{degenerate} if all its legs are incident to a common vertex and \emph{non-degenerate} when they are incident to three distinct vertices outside~$C$.

\begin{property}\label{pr:degenerate}
Let $G$ be a triconnected cubic plane graph and let $C$ be a degenerate 3-extrovert cycle of~$G$. The common vertex of the three legs of $C$ is a vertex of $C_o(G)$. Also, every edge of $C_o(G)$ is either an edge of $C$ or a leg of~$C$.
\end{property}
\begin{proof}
Let $v$ be the vertex shared by the three legs of $C$. We show that $G$ coincides with $G(C) \cup \{v\}$ which implies the statement.
Suppose that there existed a vertex $w \neq v$ of $G$ in the exterior of $C$. Since $G$ is connected there is a simple path $\pi$ connecting $v$ and $w$. Since $v$ has degree $3$, $\pi$ must include a leg of $C$, traverse $G(C)$ and eventually leave $C$ to reach $w$. Since $\pi$ is a simple path this would imply that $C$ is 4-legged, a contradiction.
\end{proof}

By \cref{pr:degenerate} each vertex of $C_o(G)$ is the endvertex of the three legs of a degenerate 3-extrovert cycle and each degenerate 3-extrovert cycle is such that its three legs are incident to a vertex of $C_o(G)$.

Let $C$ be a non-degenerate 3-extrovert cycle of $G$ that shares at least one edge with $C_o(G)$. Two legs of $C$ are edges of $C_o(G)$, while the third leg is an internal edge of $G$. Since $C$ is non-degenerate there exists another 3-extrovert cycle, that we call the \emph{twin} cycle of $C$ and denote as $C^t$, which has the same legs as $C$ and at least one edge in common with $C_o(G)$. See, for example, cycles $C_1$ and $C_1^t$ in \cref{fi:pasticca-b}.
The following properties are immediate consequences of the definition of twin cycle.

\begin{property}\label{pr:cct}
	Let $C$ be a non-degenerate 3-extrovert cycle of $G$ that shares at least one edge with $C_o(G)$ and let $C^t$ be the twin cycle of $C$. Every edge of $C_o(G)$ is either an edge of $C$, or an edge of $C^t$, or a leg shared by $C$ and $C^t$.
\end{property}

\begin{property}\label{pr:four-edges-along-Co}
	If $G$ has non-degenerate intersecting 3-extrovert cycles, $C_o(G)$ has at least four~edges.
\end{property}

\begin{property}\label{pr:intersecting-3-extrovert}
	Let $C_1$ and $C_2$ be two non-degenerate 3-extrovert cycles of a plane triconnected cubic graph $G$ that are intersecting. Each of $C_1$ and $C_2$ contains at least one edge of $C_o(G)$.
\end{property}
\begin{proof}
Since $C_1$ and $C_2$ share an edge and do not include each other, two of the three legs of $C_1$ are edges of $C_2$ and vice versa. See, e.g., cycles $C_1$ and $C_2$ of \cref{fi:pasticca-b}.
If $C_1$ and $C_2$ were not incident to the external face of $G$ (see, e.g., \cref{fi:pasticca-a}), then the external boundary $C$ of $G(C_1) \cup G(C_2)$ would be a 2-extrovert cycle in $G$: $C$ would have exactly two legs, that are the one of $C_1$ that is not in $C_2$ and the one of $C_2$ that is not in $C_1$. But this is impossible since $G$ is triconnected and hence it does not contain 2-extrovert cycles.
\end{proof}

%Observe that \cref{pr:intersecting-3-extrovert} provides an alternative proof of \cref{le:independent-child-cycles}.
The next three lemmas describe properties of 3-extrovert cycles that intersect each other.

\begin{lemma}\label{le:inclusion}
	Let $C_1$ and $C_2$ be two non-degenerate 3-extrovert cycles of a plane triconnected cubic graph $G$ that are intersecting.
	% and let $C_o(G)$ be the external boundary of $G$.
	The following properties hold:
\begin{itemize}
%\item[$(a)$] Each of $C_1$ and $C_2$ contains at least one edge of $C_o(G)$.
\item[$(a)$] If there is an edge $l$ that is a leg for both $C_1$ and $C_2$ then $l$ is an edge of $C_o(G)$,  $C_o(G) \subset C_1 \cup C_2 \cup \{l\}$, and $G = G(C_1) \cup G(C_2) \cup \{l\}$.
\item[$(b)$] If there is no leg shared by $C_1$ and $C_2$ then $C_o(G) \subset C_1 \cup C_2$ and $G = G(C_1) \cup G(C_2)$.
\item[$(c)$] $C^t_1 \subset G(C_2)$ and $C^t_2 \subset G(C_1)$.
\end{itemize}
\end{lemma}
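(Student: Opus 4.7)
The plan is to first analyze the combinatorial structure of $C_1 \cap C_2$ using the cubicness of $G$, and then exploit triconnectivity to rule out unwanted configurations.

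First I would observe that, since $G$ is cubic, at each vertex $v$ of $C_1 \cap C_2$ two of the three edges at $v$ belong to $C_1$ and two belong to $C_2$, so $C_1$ and $C_2$ share at least one edge incident to $v$. It follows that $C_1 \cap C_2$ is a disjoint union of paths. At each endpoint $w$ of one such path, the edge of $C_1$ leaving the shared portion is a leg of $C_2$ at $w$, and symmetrically the edge of $C_2$ leaving the shared portion is a leg of $C_1$ at $w$. Since $C_1$ and $C_2$ each have exactly three legs and no contour path of one is properly contained in a contour path of the other, a counting argument shows that $C_1 \cap C_2$ consists of a single path $P$. The two endpoints of $P$ account for two leg vertices of $C_1$ and two leg vertices of $C_2$; the third leg vertex $u_1$ of $C_1$ lies in $C_1 \setminus P$ (on the portion of $C_1$ outside $P$) and the third leg vertex $u_2$ of $C_2$ lies in $C_2 \setminus P$. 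Case~(a) is characterized by the third legs coinciding into a single edge $l=(u_1,u_2)$, while in case~(b) the third legs are distinct.

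For part~(a), the plan is to examine the cycle $C := C_1 \triangle C_2$, which is a simple cycle of $G$ enclosing the planar region $G(C_1) \cup G(C_2)$. The edge $l$ has both endpoints on $C$, so it is a chord of $C$ and, since $l$ lies neither in $G(C_1)$ nor in $G(C_2)$, it must be an \emph{external} chord of $C$. To prove $l \in C_o(G)$, I would suppose for a contradiction that $l$ is internal; then $l$ together with the exterior of $C \cup \{l\}$ would produce either a $2$-extrovert cycle or a separation pair of size two in $G$, contradicting triconnectivity (which forbids $2$-extrovert cycles). The same kind of argument, applied to any hypothetical vertex/edge lying outside $G(C_1) \cup G(C_2) \cup \{l\}$, shows that such a vertex would be separable from $C_1 \cup C_2$ by cutting at most two vertices (the two endpoints of the arc of $C$ bounding the extra region), again contradicting triconnectivity. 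Hence $G = G(C_1) \cup G(C_2) \cup \{l\}$, and $C_o(G)$ consists of $l$ together with the two paths of $C \setminus P$, so $C_o(G) \subset C_1 \cup C_2 \cup \{l\}$.

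For part~(b), by the structural setup the third legs of $C_1$ and $C_2$ are distinct, so the symmetric difference $C = C_1 \triangle C_2$ is again a simple cycle and the two remaining legs $l_1, l_2$ (of $C_1$ and $C_2$ respectively) are edges attached to $C$. The same triconnectivity-based argument as in part~(a), applied to any putative vertex in the exterior of $G(C_1) \cup G(C_2)$, forces $G = G(C_1) \cup G(C_2)$; and since the exterior face of $G$ is bounded by a subset of $C = C_1 \triangle C_2 \subseteq C_1 \cup C_2$, we obtain $C_o(G) \subset C_1 \cup C_2$.

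Finally, for part~(c) the plan is to use the characterization of the twin cycle together with parts~(a) and~(b). Recall that $C_1^t$ shares the three legs of $C_1$ and lies on the opposite side of those legs from $G(C_1)$, and that $C_1^t$ contains at least one edge of $C_o(G)$. In case~(b), $C_o(G) \subset C_1 \cup C_2$; the portion of $C_o(G)$ not in $C_1$ therefore lies in $C_2$, and this portion, together with the legs of $C_1$, is exactly $C_1^t$. Hence $C_1^t \subset G(C_2)$. In case~(a), the same reasoning applies with $C_o(G) \subset C_1 \cup C_2 \cup \{l\}$ and the shared leg $l$ playing the role of one of the legs of $C_1$, yielding $C_1^t \subset G(C_2)$. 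A symmetric argument gives $C_2^t \subset G(C_1)$. The main obstacle I expect is formalizing the triconnectivity contradiction cleanly in parts~(a) and~(b): exhibiting explicitly the $2$-extrovert cycle or separation pair that arises from any vertex lying outside the claimed union requires a careful case analysis of how the exterior region of $C \cup \{l\}$ can attach to $C$.
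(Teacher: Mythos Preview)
Your overall route is sound but differs from the paper's in each of the three parts, and in one place you are working much harder than necessary.

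For $(a)$ the paper does not analyze $C_1\cap C_2$ edge by edge or form $C_1\triangle C_2$. Instead it counts legs attached to the connected components of $G(C_1)\cap G(C_2)$: each component forces two distinct legs of $C_1$ and two distinct legs of $C_2$ on its external boundary (distinctness uses the non-inclusion part of ``intersecting''), so a single component already consumes four of the six available legs; if $C_1$ and $C_2$ additionally share a leg $l$, only five legs remain and a second component is impossible. From this, $l\in C_o(G)$ and $G=G(C_1)\cup G(C_2)\cup\{l\}$ follow directly. For $(b)$ the paper again avoids the symmetric difference entirely: it uses \cref{pr:cct} (every edge of $C_o(G)$ is in $C_1$, in $C_1^t$, or is a common leg of the two) and shows that every edge of $C_o(G)$ not in $C_1$ must lie in $C_2$. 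Your symmetric-difference argument is a legitimate alternative and arguably more uniform across $(a)$ and $(b)$; the paper's version trades that uniformity for brevity.

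The place where you should simplify is $(c)$. The paper's argument is essentially one line: by definition of twin, $C_1^t\cap G(C_1)=\emptyset$, so from $(a)$/$(b)$ we get $C_1^t\subset G(C_2)\cup\{l\}$ (respectively $C_1^t\subset G(C_2)$); since $l$ is a leg of $C_1^t$ rather than an edge of it, $C_1^t\subset G(C_2)$. Your plan to identify $C_1^t$ explicitly as a portion of $C_o(G)$ together with the legs is unnecessary.

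One caution about your route: the claim that at each endpoint $w$ of a maximal shared path the edge of $C_2\setminus C_1$ at $w$ is an \emph{external} leg of $C_1$ is not automatic. A priori that edge could be an internal leg (or internal chord) of $C_1$; you need the non-inclusion hypothesis, or equivalently the fact stated in the proof of \cref{pr:intersecting-3-extrovert} that exactly two legs of $C_1$ are edges of $C_2$ and vice versa, before your counting gives a single shared path.
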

\begin{proof}

Property~$(a)$: Observe that $C_1$ has two legs incident to the external boundary of every connected component of $G(C_1) \cap G(C_2)$ and that also $C_2$ has two legs incident to the external boundary of every connected component of $G(C_1) \cap G(C_2)$ (see, e.g., \cref{fi:pasticca-d}). For each connected component these four legs are all distinct because there is no inclusion relationship between $C_1$ and $C_2$ (as they are intersecting). Since $C_1$ and $C_2$ are 3-extrovert and they share a leg $l$, we have that $G(C_1) \cap G(C_2)$ has only one connected component. Therefore edge $l$ belongs to $C_o(G$). Since $G$ is connected and $C_1$ and $C_2$ are 3-extrovert cycles, we also have that $G = G(C_1) \cup G(C_2) \cup \{l\}$.

Property~$(b)$: Let $e$ be any edge of $C_o(G)$. By \cref{pr:cct}, $e$ is either an edge of $C_1$, or it is an edge of $C_1^t$, or it is a leg shared by $C_1$ and $C_1^t$ (see, e.g., \cref{fi:pasticca-b}). Since $C_1$ and $C_2$ intersect and there is no inclusion relationship between $C_1$ and $C_2$ (Property~$(a)$), the legs of $C_1$ that are edges of $C_o(G)$ are edges of $C_2$. Moreover, since $C_2$ is 3-extrovert and two of its legs belong to $C_1$, all edges of $C_1^t \cap C_o(G)$ are also edges of $C_2$. Since $G$ is connected and $C_1$ and $C_2$ are 3-extrovert cycles, we also have that $G = G(C_1) \cup G(C_2)$.

Property~$(c)$: We prove that $C^t_1 \subset G(C_2)$. The proof that $C^t_2 \subset G(C_1)$ is symmetric. Suppose first that $G = G(C_1) \cup G(C_2)$ (Property~$(b)$). Since $C^t_1 \cap G(C_1) = \emptyset$ then $C^t_1 \subset G(C_2)$. Suppose now that $G = G(C_1) \cup G(C_2) \cup \{l\}$ (Property~$(a)$). Since $l$ is also a leg of $C^t_1$, also in this case we have $C^t_1 \cap (G(C_1) \cup \{l\}) = \emptyset$ and, hence, $C^t_1 \subset G(C_2)$.
\end{proof}

\begin{figure}[h]
	\centering
	\subfloat[]{\includegraphics[width=0.24\columnwidth]{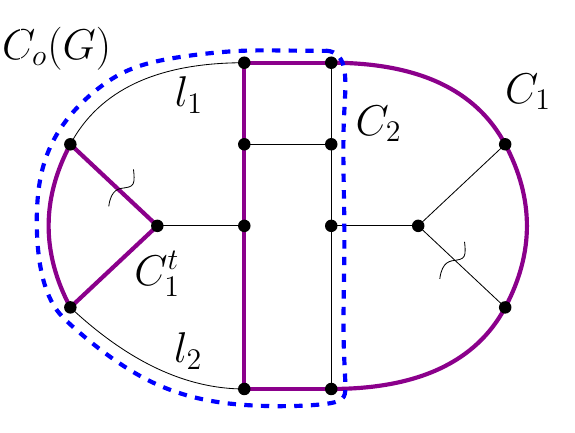}\label{fi:pasticca-b}}
	\hfill
	\subfloat[]{\includegraphics[width=0.24\columnwidth]{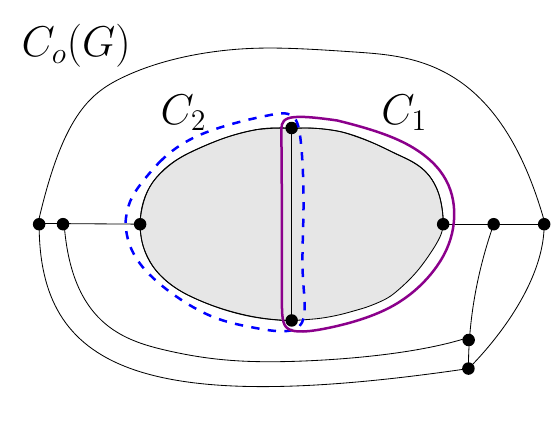}\label{fi:pasticca-a}}
	\hfill
	\subfloat[]{\includegraphics[width=0.24\columnwidth]{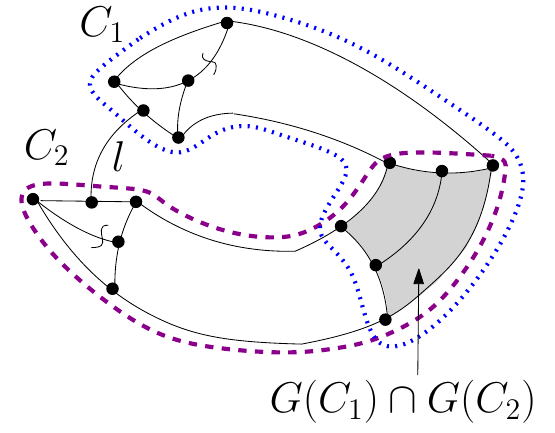}\label{fi:pasticca-d}}
	\hfill
	\subfloat[]{\includegraphics[width=0.24\columnwidth]{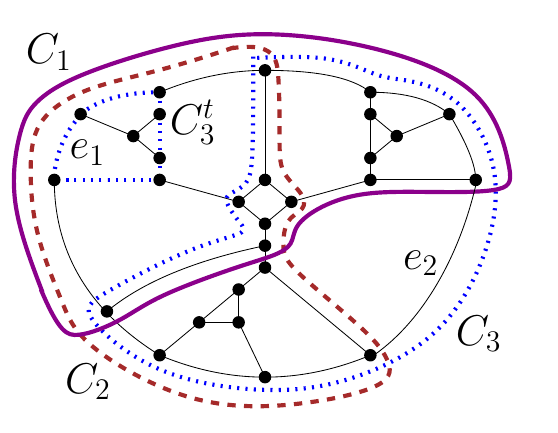}\label{fi:pasticca-c}}
	\caption{(a-b)~Illustrations for the proof of \cref{pr:intersecting-3-extrovert}. (c)~Illustration for the proof of \cref{le:inclusion}. (d)~Illustration for the proof of \cref{le:twins_fagiolinobend}.}\label{fi:pasticca}
\end{figure}

\begin{corollary}\label{co:inclusion}
 If a plane triconnected cubic graph $G$ contains at least one non-degenerate non-inter\-sec\-ting demanding 3-extrovert cycle $C$ that shares some edge with $C_o(G)$, then there are no two non-degenerate intersecting demanding 3-extrovert cycles in $G$.
\end{corollary}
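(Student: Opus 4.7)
The plan is to proceed by contradiction: assume that two non-degenerate intersecting demanding 3-extrovert cycles $C_1, C_2$ coexist with the given cycle $C$. By \cref{le:inclusion}(a)--(b), $G = G(C_1) \cup G(C_2)$ up to a possible shared-leg edge $l$, and $C_o(G) \subseteq C_1 \cup C_2 \cup \{l\}$. The edge $e$ that $C$ shares with $C_o(G)$ must therefore lie on $C_1$ or $C_2$ (the degenerate case $e=l$ can be handled separately by direct inspection). Without loss of generality suppose $e \in C_1$; then $C$ and $C_1$ share $e$, and since $C$ is non-intersecting, one of $C$ and $C_1$ is a descendant of the other in the genealogical tree of 3-extrovert cycles of $G$.

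The first case is $C \subset G(C_1)$. Here $\pi := C \cap C_1$ is a path containing $e$, and essentially the same observation used to prove \cref{pr:stabbing-path} shows that every cycle in the chain from $C$ to $C_1$ in $T_{C_1}$ contains $\pi$: any edge of $\pi$ lies on $\partial G(C_1)$ and is reached by the descendant $C$, so each intermediate cycle $C^*$ with $G(C^*) \subseteq G(C_1)$ and $\pi \subseteq G(C^*)$ must pass through $\pi$ as well. Since $C$ is demanding, any contour path of $C$ covering $\pi$ (or a sub-path of it) is green. I would then iterate Case~2(b) of the \textsc{3-Extrovert Coloring Rule} along this chain: at each step, the green contour path of the descendant sits inside a contour path of its parent, which is therefore colored green via Case~2(b). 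The argument culminates with the contour path of $C_1$ containing $\pi$ being colored green via Case~2(b) rather than via Case~1, which contradicts the hypothesis that $C_1$ is demanding.

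The main obstacle is the other case, $C_1 \subseteq G(C)$. The first key structural observation is that no edge of $C_o(G)$ can lie strictly in the interior of $G(C)$, since such an edge would border two internal faces of $G$; hence $P_1 := C_1 \cap C_o(G) \subseteq C$. If also $P_2 := C_2 \cap C_o(G) \subseteq C$, then $C \supseteq C_o(G)$, forcing $C = C_o(G)$, which is impossible because $C_o(G)$ admits no external legs and thus cannot be a 3-extrovert cycle. Hence some edges of $P_2$ lie outside $G(C)$. By \cref{le:inclusion}(c), $C_1^t \subset G(C_2)$, and since $P_2$ lies on $C_1^t$ by \cref{pr:cct}, a portion of $C_1^t$ lies outside $G(C)$. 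Tracing the three legs of $C_1$, which connect $C_1 \subseteq G(C)$ to $C_1^t$ and cannot cross $C$ in the planar embedding, one shows that some such leg must have its $C_1$-endpoint on $C$ itself, so $C$ acquires a leg-vertex that is also a vertex of $C_1$ on the boundary of $G(C)$. A careful analysis of this configuration yields an edge of $C_o(G)$ shared by $C$ and $C_2$. Once $C$ and $C_2$ share an edge, the non-intersecting hypothesis on $C$ puts them in a descendant/ancestor relation, and the argument of the first case applies with $C_2$ in place of $C_1$, contradicting $C_2$ being demanding.
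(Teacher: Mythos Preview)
Your Case A is correct and corresponds to the paper's argument. However, you overcomplicate Case B. Observe that Case B ($C_1 \subseteq G(C)$) is symmetric to Case A with the roles of $C$ and $C_1$ swapped: since $C_1$ is demanding and shares edge $e$ with its ancestor $C$, the green contour path of $C_1$ containing $e$ propagates up through the chain in $T_C$ exactly as in your Case~A argument, forcing the contour path of $C$ containing $e$ to be colored by Case~2(b) of the \textsc{3-Extrovert Coloring Rule}. Hence $C$ is not demanding, contradicting the hypothesis. This three-line observation replaces your entire Case~B.

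Your actual Case~B argument has real gaps. The ``careful analysis of this configuration'' that is supposed to yield an edge of $C_o(G)$ shared by $C$ and $C_2$ is never carried out, and your concluding step (``the argument of the first case applies with $C_2$ in place of $C_1$'') could a priori land you back in Case~B with $C_2$ in place of $C_1$. You would then need to rule out $C_2 \subseteq G(C)$; this is in fact easy---not both $C_1$ and $C_2$ can lie in $G(C)$, since $G = G(C_1)\cup G(C_2)\cup\{l\}$ would force $G\setminus G(C)\subseteq\{l\}$, contradicting that the non-degenerate $C$ has three external legs---but you do not address it.

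The paper's own proof is shorter still: it argues directly that, since $C$ intersects neither $C_1$ nor $C_2$ and $G = G(C_1)\cup G(C_2)\cup\{l\}$, the cycle $C$ must lie entirely in one of $G(C_1)$ or $G(C_2)$, landing immediately in your Case~A. The edge $e\in C\cap C_o(G)$ then lies on $C_1$, and since the demanding descendant $C$ sits on a contour path of $C_1$, the paper concludes that $C_1$ is not demanding.
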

\begin{proof}
Assume by contradiction that $G$ has two non-degenerate intersecting demanding 3-extrovert cycles $C_1$ and $C_2$. By \cref{le:inclusion} all edges of $C$ except at most one are edges of $G(C_1) \cup G(C_2)$. Hence, since $C$ is not intersecting, $C$ must be a subgraph of one between $G(C_1)$ and $G(C_2)$, say $G(C_1)$. Since $C$ is demanding and shares at least one edge $e$ with $C_o(G)$, it follows that $e$ is also an edge of $C_1$ and, therefore, $C_1$ is not demanding. A contradiction.
\end{proof}

\begin{lemma}\label{le:intersecting-demanding-transitive}
Let $C_1$, $C_2$, and $C_3$ be three non-degenerate demanding 3-extrovert cycles of a plane triconnected cubic graph $G$. If $C_1$ and $C_2$ are intersecting and $C_2$ and $C_3$ are intersecting, then $C_1$ and $C_3$ are also intersecting.
	% Let $C_1$ and $C_2$ be two non-degenerate demanding 3-extrovert cycles of a plane triconnected cubic graph $G$ that are intersecting.
	% Let $C_3$ be any other non-degenerate demanding 3-extrovert cycle of $G$ that contains some edge of $C_o(G)$. We have that $C_3$ is intersecting with both $C_1$ and $C_2$.
\end{lemma}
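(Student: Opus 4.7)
The plan is to argue by contradiction, assuming that $C_1$ and $C_3$ are not intersecting, and to derive a contradiction by combining the structural results on twins (\cref{le:inclusion}, \cref{pr:cct}) with the demandingness of $C_2$.

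First, I will show that $C_1$ and $C_3$ necessarily share at least one edge. Since $C_1$ and $C_2$ are non-degenerate intersecting 3-extrovert cycles, \cref{pr:intersecting-3-extrovert} ensures that $C_2$ has at least one edge on $C_o(G)$, so the twin $C_2^t$ is well defined, and \cref{le:inclusion}(c) yields $C_2^t \subseteq G(C_1)$. Applying the same reasoning to $C_2$ and $C_3$ gives $C_2^t \subseteq G(C_3)$. By \cref{pr:cct}, the contour path $A$ of $C_2^t$ that lies between the two outer legs of $C_2$ on $C_o(G)$ is a (non-empty) subpath of $C_o(G)$. Since edges of $C_o(G)$ lying in $G(C_1)$ must belong to $C_1$ (they cannot be strictly interior to $C_1$), we obtain $A \subseteq C_1$, and by symmetry $A \subseteq C_3$. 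Hence $C_1 \cap C_3 \supseteq A$, so they share at least one edge.

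Second, since $C_1$ and $C_3$ share edges but are assumed not to intersect, some contour path $P$ of one of them is properly contained in a contour path $Q$ of the other; by symmetry, take $P$ to be a contour path of $C_1$ and $Q$ a contour path of $C_3$. Let $v$ be an endpoint of $P$: it is a leg-vertex of $C_1$, and since $P \subsetneq Q$, it is an interior vertex of $Q$, hence not a leg-vertex of $C_3$. Analyzing the three edges incident to $v$ (of which exactly two lie on $C_1$, and exactly two lie on $C_3$), I will show that the leg $\ell$ of $C_1$ at $v$ either (a) coincides with the second $C_3$-edge at $v$, placing $\ell$ on $C_3$, or (b) the second $C_3$-edge is the edge of $C_1$ that is adjacent to $v$ and not on $P$, so that $C_1 \cap C_3$ extends past $v$ along $C_1 \setminus P$.

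Third, both sub-cases must be refuted. Tracing $C_1$ (which contains both $A \subseteq C_2^t$ and the path $\pi_{12} = C_1 \cap C_2$) together with the extra edges of $C_3$ forced by case (a) or (b) will identify a 3-extrovert descendant $C'$ of $C_2$ whose green contour path lies on an edge of $C_2$, or alternatively a contour-path containment between $C_2$ and $C_1$ (or between $C_2$ and $C_3$) that contradicts the intersection of the pair $(C_1,C_2)$ (resp. $(C_2,C_3)$); either way, the demandingness of $C_2$ (Case~1 of the \textsc{3-Extrovert Coloring Rule}) is violated. The main obstacle is the sub-case where $A$ reduces to a single edge, since then $A$ has no interior vertex to supply a shared leg of $C_1$ and $C_3$; in this boundary case one must reason instead about the two endpoints of $A$, which are the outer endpoints of two legs of $C_2$ and lie simultaneously on $C_1$, $C_3$, and $C_2^t$, and show that the forced local configuration around these endpoints still creates a green descendant cycle of $C_2$ contradicting its demandingness.
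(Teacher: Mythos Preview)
Your first step is correct and close in spirit to the paper's, though the paper argues more directly: by \cref{le:inclusion} (Properties~(a) and~(b)) applied to the pair $\{C_1,C_2\}$, all edges of $C_o(G)$ except at most one lie in $C_1 \cup C_2$; applying the same to $\{C_2,C_3\}$ and doing a short case split on whether $C_1$ and $C_2$ share a leg shows that $C_1$ and $C_3$ must share an edge. Your route through $C_2^t$ reaches the same conclusion.

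The real gap is in your second and third steps. You aim to contradict the demandingness of $C_2$ through an elaborate local case analysis at the endpoints of~$P$, and you yourself flag the boundary case where $A$ is a single edge as an unresolved obstacle. The paper sidesteps all of this with one observation: once $C_1$ and $C_3$ share an edge, the fact that \emph{both $C_1$ and $C_3$} are demanding already forces them to intersect. Concretely, if they shared an edge without intersecting, one would be a descendant of the other; but then the descendant (being demanding, hence all-green) would propagate a green contour path up to a child-cycle of the ancestor via the \textsc{3-Extrovert Coloring Rule}, contradicting the ancestor's demandingness (Case~1 of the rule). So the demandingness you should exploit is that of $C_1$ and $C_3$ themselves, not of $C_2$, and the conclusion is immediate---no case analysis on~$A$ is needed.
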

\begin{proof}

    %From~\cref{pr:intersecting-3-extrovert} we have that $C_3$ contains at least one edge of $C_o(G)$.
	%Since $C_1$, $C_2$, and $C_3$ are all demanding, there is no inclusion relationship between any two of these cycles.
    By \cref{le:inclusion} all edges of $C_o(G)$ except at most one belong to $C_1 \cup C_2$.
    If $C_1$ and $C_2$ do not share a leg, all edges of $C_o(G) \setminus C_2$ belong to $C_1$. Since $C_2$ is a 3-extrovert cycle, it is chordless and thus $C_o(G) \setminus C_2$ contains at least two edges. Since by \cref{le:inclusion} all edges of $C_o(G)$ except at most one belong to $C_2 \cup C_3$, at least one edge of $C_1 \cap C_o(G)$ is also an edge of $C_3$.
	%If $C_1$ and $C_2$ do not share a leg, by Property~$(b)$ of \cref{le:inclusion}, every edge of $C_3 \cap C_o(G)$ belongs to $C_1 \cup C_2$. If all edges in $C_3 \cap C_o(G)$ were edges of $C_2$, either $C_3$ would include $C_2$ or vice versa. Hence, $C_3$ is intersecting with $C_1$.
	If $C_1$ and $C_2$ share a leg $l$, by Property~$(a)$ of \cref{le:inclusion}, $l$ is an edge of $C_o(G)$ and $C_o(G) \subset C_1 \cup C_2 \cup \{l\}$. If $l$ is an edge of $C_3$ then $C_3$ contains at least one edge of $C_1$ which is adjacent to $l$.
	%Consider now the case that $l$ is not an edge of $C_3$. If $C_3$ has at least one edge of $C_o(G) \cap C_1 \cap C_2$ we are done.
	If $l$ is not an edge of $C_3$, $l$ is also a leg shared by $C_2$ and $C_3$. We have that one endvertex of $l$ is incident on $C_2$, and the other endvertex is incident to both $C_1$ and $C_3$. Since $G$ is cubic, $C_1$ and $C_3$ share at least one edge.
	%If $C_3$ has at least one edge of $C_1$ we are done.
	%Otherwise, suppose without loss of generality that $C_3$ does not contain edges of $C_1 \cap C_o(G)$. Since $C_3$ contains an edge $e \neq l$ of $C_o(G)$ such that $e$ does not belong to $C_1$, $e$ belongs to $C_2$. Since $C_2$ and $C_3$ have a common edge on the external face and are both demanding, there is no inclusion relationship between them, hence they are intersecting.
    %By \cref{le:inclusion}, $C_3$ must have at least one edge $e$ of $C_o(G)$ that does not belong to $C_2$. Since $e \neq l$ and since $C_o(G) \subset C_1 \cup C_2 \cup \{l\}$, $e$ belongs to $C_1$.
    % Also, since by Property~$(a)$ of \cref{le:inclusion} $G = G(C_1) \cup G(C_2) \cup \{l\}$ and since $C_3$ contains at least one edge distinct from $l$ that is not in $G(C_2)$, we have that $C_2$ contains at least one edge of $G(C_1)$. Since $G$ is a $3$-graph, this implies that $C_3$ shares at least one edge with $C_1$.
    Since $C_1$ and $C_3$ are demanding, there is no inclusion relationship with one another. It follows that $C_1$ and $C_3$ are intersecting.
\end{proof}	

\begin{lemma}\label{le:three-intesecting-demanding-covering}
Let $C_1$, $C_2$, and $C_3$ be three non-degenerate intersecting demanding 3-extrovert cycles of a plane triconnected cubic graph $G$. We have that $C_o(G) \subseteq C_1 \cup C_2 \cup C_3$.
\end{lemma}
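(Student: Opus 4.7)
The plan is to reduce the statement to showing that no edge of $C_o(G)$ escapes $C_1\cup C_2\cup C_3$. Applying \cref{le:inclusion} to the intersecting pair $(C_1,C_2)$ gives two cases: if $C_1$ and $C_2$ share no leg, then $C_o(G)\subseteq C_1\cup C_2$ and we are immediately done; otherwise $C_o(G)\subseteq C_1\cup C_2\cup\{l_{12}\}$ for a shared leg $l_{12}$. Since $l_{12}\notin C_1\cup C_2$, it is enough to prove the single fact that $l_{12}\in C_3$.

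To establish this, I would run \cref{le:inclusion} on the intersecting pairs $(C_1,C_3)$ and $(C_2,C_3)$ in turn. If either pair has no shared leg, the lemma gives $C_o(G)\subseteq C_1\cup C_3$ or $C_o(G)\subseteq C_2\cup C_3$, and $l_{12}\in C_3$ follows at once. Otherwise, let $l_{13}$ and $l_{23}$ be the corresponding shared legs; from $C_o(G)\subseteq C_1\cup C_3\cup\{l_{13}\}$ and $C_o(G)\subseteq C_2\cup C_3\cup\{l_{23}\}$ we deduce that $l_{12}\in C_3$ unless simultaneously $l_{12}=l_{13}$ and $l_{12}=l_{23}$. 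Thus the only remaining obstruction is the configuration in which a single edge $l:=l_{12}=l_{13}=l_{23}$ is a common shared leg of all three cycles.

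The core of the argument, and the place where the planar-cubic structure of $G$ really bites, is to rule this last configuration out. I would establish the following auxiliary fact: if $C$ and $C'$ are two intersecting non-degenerate 3-extrovert cycles sharing a leg $l$, then the leg-vertex of $l$ on $C$ is different from the leg-vertex of $l$ on $C'$. Suppose, for contradiction, that both leg-vertices coincide at a vertex $u$; since $G$ is cubic, the three edges incident to $u$ are exactly $l$ together with the two edges of $C$ at $u$, and the two edges of $C'$ at $u$ must be the very same two edges. Let $v$ be the other endpoint of $l$; then $v$ lies outside both $C$ and $C'$ because $l$ is an external leg of each. By \cref{le:inclusion}(a) we have $G=G(C)\cup G(C')\cup\{l\}$, so (i)~$v$ cannot lie strictly inside $G(C)$ or inside $G(C')$, else $l$ would be embedded in the interior of $C$ or $C'$ and could not be an external leg; and (ii)~$v$ cannot be a vertex outside $V(G(C))\cup V(G(C'))$, for then the only edge of $G$ incident to $v$ would be $l$, making $v$ a degree-one vertex in the cubic graph~$G$. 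This contradiction proves the fact.

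Applying this fact to the three pairs among $C_1,C_2,C_3$, the common leg $l$ would carry three pairwise-distinct leg-vertices (one per cycle), which is impossible because $l$ has only two endpoints. Hence the configuration $l_{12}=l_{13}=l_{23}$ cannot occur, so $l_{12}\in C_3$, and therefore $C_o(G)\subseteq C_1\cup C_2\cup C_3$.
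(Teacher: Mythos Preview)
Your argument is correct. Both your proof and the paper's begin identically: apply \cref{le:inclusion} to $(C_1,C_2)$, reduce to the case of a shared leg $l_{12}\in C_o(G)$, and then show $l_{12}\in C_3$. The difference lies in how this last fact is established.

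The paper argues directly and locally: assuming $l_{12}\notin C_3$, it looks at the two edges $e_1,e_2$ of $C_o(G)$ adjacent to $l_{12}$, shows $e_1\in C_1\setminus C_2$ and $e_2\in C_2\setminus C_1$, and then uses \cref{le:inclusion} on $(C_2,C_3)$ together with the non-degeneracy of $C_2$ (two legs of a non-degenerate cycle cannot share their outside endpoint) to force $e_1\in C_3$; symmetrically $e_2\in C_3$. Hence both endpoints of $l_{12}$ lie on $C_3$ and $l_{12}$ is a chord of $C_3$, contradicting that a $3$-extrovert cycle is chordless.

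You instead apply \cref{le:inclusion} to the remaining two pairs and reduce to the single bad configuration $l_{12}=l_{13}=l_{23}$, which you then rule out by your auxiliary fact that two intersecting cycles sharing a leg must pick opposite endpoints as leg-vertices, so three pairwise-distinct leg-vertices would be needed on a two-vertex edge. This is a clean pigeonhole argument; your proof of the auxiliary fact via $G=G(C)\cup G(C')\cup\{l\}$ and the resulting degree-one contradiction is sound (the intermediate observation that $C$ and $C'$ share both non-$l$ edges at $u$ is correct but not actually needed). The paper's route is slightly shorter and avoids the extra case split on whether $(C_1,C_3)$ and $(C_2,C_3)$ have shared legs, but your approach has the merit of isolating a reusable structural fact about shared legs of intersecting cycles.
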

\begin{proof}
By \cref{le:inclusion} all edges of $C_o(G)$ except at most a common leg belong to $C_1 \cup C_2$. If all edges of $C_o(G)$ belong to $C_1 \cup C_2$ then we are done. Otherwise, let $e \in C_o(G)$ be a common leg of $C_1$ and $C_2$ and suppose for a contradiction that $e \notin C_3$. Consider the two edges $e_1$ and $e_2$ of $C_o(G)$ adjacent to $e$ and assume without loss of generality that $e_1 \in C_1$ and $e_2 \in C_2$. Since $e_1 \notin C_2$ and since by \cref{le:inclusion} all edges of $C_o(G)$ except at most a common leg belong to $C_2 \cup C_3$ we have that either $e_1$ is a common leg of $C_2$ and $C_3$ or $e_1 \in C_3$. Since $C_2$ cannot have the two adjacent legs $e$ and $e_1$, it must be $e_1 \in C_3$. With analogous arguments it can be shown that $e_2 \in C_3$. It follows that $e$ is a chord of $C_3$, a contradiction.
\end{proof}

\begin{lemma}\label{le:twins_fagiolinobend}
	Let $\mathcal{C} = \{C_1,...,C_k\}$ be a set of pairwise intersecting 3-extrovert cycles of a plane triconnected cubic graph $G$ such that $C_o(G)$ has at least four edges. There exist two edges $e_1, e_2$ of $C_o(G)$ such that every $C_i$ ($i \in \{1, \dots, k\}$) contains either $e_1$ or $e_2$.
\end{lemma}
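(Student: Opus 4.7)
The plan is to pick a suitable reference cycle $C_1\in\mathcal{C}$ and use \cref{le:inclusion} to show that one carefully selected edge $e^\ast$ of $C_o(G)$ already lies in every $C_j\cap C_o(G)$; then any other edge of $C_o(G)$ completes the pair. As a preliminary step, I observe that because $G$ is triconnected cubic and $|C_o(G)|\ge 4$, every vertex of $C_o(G)$ carries exactly one interior edge, so $C_o(G)$ itself has $|C_o(G)|\ge 4$ legs and is never a 3-extrovert cycle. Hence $A_i:=C_i\cap C_o(G)\subsetneq C_o(G)$ for every $i$; combining \cref{pr:intersecting-3-extrovert} (for non-degenerate cycles) and \cref{pr:degenerate} (for degenerate ones), each $A_i$ is a non-empty contiguous arc of $C_o(G)$, whose complement $B_i:=C_o(G)\setminus A_i$ is also a contiguous arc.

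For the main case I fix a non-degenerate $C_1\in\mathcal{C}$. Its two legs $L_1,L_2$ lying on $C_o(G)$ are precisely the first and last edges of $B_1$, and non-degeneracy forces their outer endpoints to be distinct, so $L_1$ and $L_2$ cannot be adjacent in $C_o(G)$ and $|B_1|\ge 3$. Applying \cref{le:inclusion} to each pair $(C_1,C_j)$ with $C_j$ non-degenerate yields either $C_o(G)\subset C_1\cup C_j$ or $C_o(G)\subset C_1\cup C_j\cup\{l_{1j}\}$, where in the latter case $l_{1j}$ is an edge of $C_o(G)$ that is simultaneously a leg of $C_1$ on $C_o(G)$, so $l_{1j}\in\{L_1,L_2\}$. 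Either way $A_j\supseteq B_1\setminus\{L_1,L_2\}$, an arc containing at least one edge. Choosing $e^\ast$ from that arc and setting $e_1:=e^\ast$, $e_2:=$ any other edge of $C_o(G)$, completes the proof in this case since every $C_j$ contains $e_1$.

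The main obstacle I foresee is the extension to cycles $C_j\in\mathcal{C}$ that are degenerate, because \cref{le:inclusion} is stated only for non-degenerate cycles. I would dispatch a degenerate $C_j$ separately using \cref{pr:degenerate}: its two boundary legs are two specific $C_o(G)$-edges meeting at its unique leg vertex $v_j$, so $A_j$ equals all of $C_o(G)$ minus those two adjacent edges; a dedicated choice of $e^\ast$ inside the middle of $B_1$ (selected to avoid the finitely many boundary legs arising this way) keeps $e^\ast\in A_j$. In the remaining subcase where every cycle of $\mathcal{C}$ is degenerate, the structural identity $G=G(C)\cup\{v\}$ given by \cref{pr:degenerate}, together with cubicness at the two leg vertices, forces any two degenerate cycles with distinct leg vertices to produce either an external chord of one of them or a 2-extrovert cycle in $G$ — both forbidden in a triconnected cubic graph — so any two such cycles must coincide, reducing the situation to $k=1$ and trivializing the statement. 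Carrying out this degenerate-case verification cleanly is the part demanding the most care, while the generic non-degenerate case is a direct consequence of \cref{le:inclusion}.
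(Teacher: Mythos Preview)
Your core argument for non-degenerate $C_j$ is essentially the paper's: the arc $B_1\setminus\{L_1,L_2\}$ is precisely $C_1^t\cap C_o(G)$, and the paper likewise shows (via the twin cycle and \cref{le:inclusion}) that every non-degenerate $C_j$ intersecting $C_1$ contains all of $C_1^t\cap C_o(G)$. But two parts of your write-up are genuinely wrong.

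First, the sentence ``every $C_j$ contains $e_1$'' fails for $j=1$: you chose $e^\ast\in B_1=C_o(G)\setminus A_1$, so $e^\ast\notin C_1$. Consequently ``$e_2:=$ any other edge of $C_o(G)$'' does not suffice. The fix, which is exactly what the paper does, is to take the second edge inside $A_1=C_1\cap C_o(G)$; then $C_1$ is covered by that edge and every other non-degenerate $C_j$ by $e^\ast$.

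Second, your handling of degenerate cycles is incorrect in both subcases. In the mixed case, $B_1\setminus\{L_1,L_2\}$ may consist of a single edge, so you cannot always ``avoid the finitely many boundary legs''. In the all-degenerate case, your claim that two degenerate cycles with distinct leg vertices must coincide is false: by the remark following \cref{pr:degenerate} there is one degenerate 3-extrovert cycle per vertex of $C_o(G)$, and their coexistence forces neither an external chord nor a 2-extrovert cycle. The paper dispatches all degenerate cycles with a single observation: a degenerate cycle omits only two \emph{adjacent} edges of $C_o(G)$ (\cref{pr:degenerate}), so once $e_1$ and $e_2$ are non-adjacent every degenerate cycle contains one of them. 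With the corrected choice $e_1\in A_1$, $e_2=e^\ast\in B_1\setminus\{L_1,L_2\}$, non-adjacency is automatic (the legs $L_1,L_2$ separate the two arcs), and this replaces both of your degenerate-case arguments. After these fixes your proof coincides with the paper's.
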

\begin{proof}
    By \cref{pr:degenerate} for any pair of non-adjacent edges of $C_o(G)$ every degenerate cycle of $\mathcal{C}$ contains at least one of the edges in the pair. Since $C_o(G)$ has at least four edges such pair of edges always exists and if all cycles in $\mathcal{C}$ are degenerate we are done.
	Assume that $\mathcal{C}$ contains at least one non-degenerate 3-extrovert cycle, say $C_1$, and consider its twin cycle $C^t_1$. By the same reasoning as in the proof of Property~$(b)$ of \cref{le:inclusion}, every non-degenerate 3-extrovert cycle $C_i$ that intersects $C_1$ contains the edges of $C_1^t \cap C_o(G)$. Since, such a cycle $C_i$ intersects $C_1$, we have that $C_i$ contains all edges of $C_1^t \cap C_o(G)$. Therefore, we choose $e_1$ as any edge of $C_1 \cap C_o(G)$ and $e_2$ as any edge of $C_1^t \cap C_o(G)$. See, e.g., \cref{fi:pasticca-c}, where $k=3$.
	Finally note that $e_1$ and $e_2$ are not adjacent.
\end{proof}

\begin{lemma}\label{le:demanding-non-demanding-intersecting}
Let $\mathcal{C} = \{C_1,...,C_k\}$ be a set of non-degenerate 3-extrovert cycles of a plane triconnected cubic graph $G$ such that each $C_i$ is intersecting with at least one cycle $C_j$, $1 \leq i,j \leq k$.
%For any two edges $e_1, e_2$ of $C_o(G)$ we have that every $C_i$ ($i \in \{1, \dots, k\}$) contains either $e_1$ or $e_2$ with the exception of a subset $\mathcal{C'}$ of $\mathcal{C}$ such that no pair of cycles in $\mathcal{C'}$ are intersecting.
% \begin{itemize}
% \item[$(a)$] For any edge $e$ of $C_o(G)$, the subset $\mathcal{C'} \subseteq \mathcal{C}$ of cycles that do not contain $e$ is such that at most two cycles of $\mathcal{C'}$ are intersecting.
% \item[$(b)$]
For any two edges $e_1, e_2$ of $C_o(G)$, the subset $\mathcal{C'} \subseteq \mathcal{C}$ of cycles that contain neither $e_1$ nor $e_2$ is such that no two cycles of $\mathcal{C'}$ are intersecting.
% \end{itemize}
\end{lemma}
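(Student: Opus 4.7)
The plan is to argue by contradiction, exploiting the covering property of two intersecting non-degenerate $3$-extrovert cycles established in \cref{le:inclusion}. Specifically, I would assume that two cycles $C, C' \in \mathcal{C}'$ are intersecting, and derive a contradiction from the fact that the external boundary $C_o(G)$ is almost entirely contained in $C \cup C'$, leaving no room to hide \emph{two} distinct edges $e_1, e_2$ outside of $C \cup C'$.

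In more detail: suppose for contradiction that $C, C' \in \mathcal{C}'$ are intersecting. Since both $C$ and $C'$ are non-degenerate $3$-extrovert cycles of $G$ and they are intersecting, I would split into the two cases given by \cref{le:inclusion}. In the first case $C$ and $C'$ share no common leg; then by Property~$(b)$ of \cref{le:inclusion} we have $C_o(G) \subset C \cup C'$. But $e_1, e_2$ are two distinct edges of $C_o(G)$ that belong neither to $C$ nor to $C'$ (by the definition of $\mathcal{C}'$), giving the desired contradiction. In the second case, $C$ and $C'$ share a leg $l$; then by Property~$(a)$ of \cref{le:inclusion}, $l$ is an edge of $C_o(G)$ and $C_o(G) \subset C \cup C' \cup \{l\}$. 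Since $e_1, e_2 \in C_o(G)$ and $e_1, e_2 \notin C \cup C'$, it must hold that $e_1 = l = e_2$, contradicting the fact that $e_1$ and $e_2$ are distinct.

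Note that the hypothesis ``each $C_i$ is intersecting with at least one cycle $C_j$'' does not need to be invoked directly in the contradiction argument; it merely describes the setting in which the lemma is meant to be applied. The only substantive ingredient is \cref{le:inclusion}, whose two cases exactly match the dichotomy ``shared leg vs.\ no shared leg''. I do not foresee any significant obstacle, as the key structural lemma has already been established; the proof should be quite short, essentially a direct application of \cref{le:inclusion}.
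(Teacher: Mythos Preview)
Your proposal is correct and follows essentially the same approach as the paper: both argue by contradiction and invoke \cref{le:inclusion} to conclude that $C_o(G) \subset C \cup C'$ up to at most one shared leg, which leaves no room for two distinct edges $e_1,e_2$ outside $C \cup C'$. The only cosmetic difference is that the paper merges your two cases into a single sentence.
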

\begin{proof}
Suppose that $\mathcal{C'}$ contained two non-degenerate 3-extrovert cycles $C_1$ and $C_2$ such that $C1$ and $C_2$ intersect.
By \cref{le:inclusion} all edges of $C_o(G)$ are edges of $C_1 \cap C_2$ except, possibly, a leg shared by $C_1$ and $C_2$. If follows that at least one of $\{e_1,e_2\}$ is an edge of $C_1 \cap C_2$, a contradiction.
\end{proof}

%In \cref{se:inclusion-trees} we show how to efficiently identify those 3-extrovert cycles that are demanding.
%The next section studies the relationship between the number of demanding 3-extrovert cycles of a plane triconnected cubic graph $G$ and the cost of a cost-minimum orthogonal representation of~$G$.

\subsection{Cost of a cost-minimum orthogonal representation}\label{sse:cost-min-fixed-embedding}

Let $G$ be a plane triconnected cubic graph with some flexible edges. In this section we give a formula to compute the cost of a cost-minimum orthogonal representation $H$ of $G$ such that $H$ preserves the planar embedding of $G$.
%Namely, \cref{th:fixed-embedding-min-bend} at the end of this section extends results of~\cite{DBLP:journals/jgaa/RahmanNN99} to the case when $G$ has flexible edges.
Recall that an orthogonal representation $H$ of $G$ such that $H$ has bends along its edges can be transformed into a no-bend orthogonal representation $\rect{H}$, called its rectilinear image, by replacing the bends of $H$ with degree-2 vertices. Since $b(\rect{H})=0$, $\rect{G}$ is a good plane graph and it satisfies the three conditions of \cref{th:RN03}.
Also recall that an edge $e$ of $H$ has a cost $c(e)$, where $c(e) = b(e) - \flex(e)$. If $c(e) > 0$ we say that the rectilinear image $\rect{e}$ of $e$ in $\rect{G}$ has $c(e)$ \emph{costly (degree-2) vertices} and that $e$ has $c(e)$ \emph{costly bends} in~$H$.
%Analogously, we call \emph{costly} a degree-2 vertex of $\overline{G}$ that corresponds to a costly bend of $H$.
We shall compute $c(G)$ by suitably subdividing the edges of $G$ with as few costly vertices as possible in order to obtain a plane graph $\rect{G}$ that satisfies the three conditions of \cref{th:RN03}. Once we have constructed such $\rect{G}$, we compute an orthogonal representation $\rect{H}$ of $\rect{G}$ by means of the \textsf{NoBendAlg} (\cref{se:preliminaries}). The inverse $H$ of $\rect{H}$ is an embedding-preserving cost-minimum orthogonal representation of $G$.

\begin{lemma}\label{le:2-extrovert}
Let $e$ be an edge of $G$ and let $G'$ be the graph obtained by subdividing $e$ with one or more vertices of degree two. $G'$ has a 2-extrovert cycle $C^*$ if and only if $e$ is an edge of $C_o(G)$ and $C^*$ coincides with $C_o(G \setminus e)$.
\end{lemma}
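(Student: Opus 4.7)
The plan is to prove both directions of the equivalence, relying on the fact that $G$, being triconnected and cubic, has no 2-extrovert cycles: removing the two leg vertices of a hypothetical 2-extrovert cycle would disconnect its exterior from its interior, contradicting 3-connectivity.

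For the \emph{if} direction, I would assume $e = (u,v) \in C_o(G)$ and take $C^* := C_o(G \setminus e)$. Since $G$ is triconnected, $G \setminus e$ is biconnected, so $C^*$ is a simple cycle containing both $u$ and $v$. In $G'$, $C^*$ is still a cycle, and the subdivided path $\bar e$ from $u$ to $v$ lies in the face of $G \setminus e$ vacated by $e$ --- that is, externally with respect to $C^*$. The two edges of $\bar e$ incident to $u$ and $v$ are external legs of $C^*$, while the internal subdivision vertices (degree $2$) create no additional legs or chords. Hence $C^*$ is 2-extrovert in $G'$.

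For the \emph{only if} direction, suppose $C^*$ is a 2-extrovert cycle of $G'$. I would first rule out the case where $C^*$ contains a subdivision vertex: since each such vertex has degree $2$ with both edges forced onto $C^*$, a short propagation argument along $\bar e$ implies that $C^*$ contains the entire path $\bar e$. Writing $C^* = \bar e \cup q$ with $q$ a $u$-$v$ path of $G \setminus e$, the cycle $C := q \cup \{e\}$ in $G$ has exactly the same off-cycle edges as $C^*$ (subdivision vertices contribute none) and the same leg/chord and external/internal classification of those edges, since the embedding of $G'$ outside $C^*$ is identical to that of $G$ outside $C$. Thus $C$ would be 2-extrovert in $G$, a contradiction.

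Therefore $C^*$ contains no subdivision vertex and is already a cycle of $G$. A case analysis on the role of $e$ relative to $C^*$ in $G$ (not incident to $C^*$, leg, internal chord, or external chord) compares the external legs/chords of $C^*$ in $G'$ with those in $G$: subdividing $e$ only perturbs the contribution of $e$, adding two new external legs precisely when $e$ is an external chord of $C^*$. The only scenario compatible with $C^*$ being 2-extrovert in $G'$ but not in $G$ is that $e$ is an external chord of $C^*$ in $G$ accounting for both external legs in $G'$, and $C^*$ has no other external leg or external chord in $G$. This forces the exterior of $C^*$ in $G$ to contain no vertex other than those of $C^*$, so $C_o(G)$ is bounded by $e$ together with one of the two arcs of $C^*$ between $u$ and $v$; in particular $e \in C_o(G)$, and removing $e$ merges the two outer faces of $C^*$ into a single face whose boundary is $C^*$ itself, giving $C^* = C_o(G \setminus e)$. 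The main technical obstacle is the bookkeeping in the subdivision-vertex case: one must check carefully that every off-cycle edge of $C^*$ in $G'$ corresponds to an off-cycle edge of $C$ in $G$ with the same leg/chord and external/internal status, so that the triconnectivity of $G$ can actually be invoked to close the contradiction.
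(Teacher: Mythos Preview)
Your proof is correct and follows essentially the same approach as the paper: both directions hinge on the observation that a triconnected cubic graph has no 2-extrovert cycles, and that subdividing $e$ can only create a 2-extrovert cycle $C^*$ when $e$ is an external chord of $C^*$ accounting for both legs. You are in fact more careful than the paper on one point: you explicitly rule out the possibility that $C^*$ passes through a subdivision vertex (by propagating along $\bar e$ and contracting back to a 2-extrovert cycle of $G$), whereas the paper simply asserts that any 2-extrovert cycle of $G'$ ``by definition has $e$ as an external chord'' and proceeds from there. Your final step, arguing directly that the exterior of $C^*$ in $G$ contains no vertex off $C^*$, is a slightly more geometric variant of the paper's two-subcase contradiction ($e\notin C_o(G)$ versus $e\in C_o(G)$ but $C^*\neq C_o(G\setminus e)$), but the content is the same.
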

\begin{proof}
Any cycle of $G$ that becomes a 2-extrovert cycle of $G'$ by subdividing an edge $e \in G$ by definition has $e$ as an external chord. Let $u$ and $v$ be the endvertices of $e$. If $e$ is an edge of $C_o(G)$ and $C^*$ coincides with $C_o(G \setminus e)$, subdividing $e$ creates two legs outside $C^*$ in $G'$, one incident to $u$ and one incident to $v$. See for example \cref{fi:2-legged-introduced-a,fi:2-legged-introduced-b}.

Suppose that $G'$ has a 2-extrovert cycle $C^*$. We show that $e$ is an edge of the external face of $G$ and that $C^*$ coincides with $C_o(G \setminus e)$.
Suppose by contradiction that $e$ is not an edge of $C_o(G)$. Since $e$ is a chord of $C^*$, this implies that $C^*$ is not $C_o(G \setminus e)$.
Since $G$ is connected and cubic, there must be a vertex $w \in C^*$ and an edge $e'=(w,z)$ such that $w \neq u,v$ and $z$ is in the exterior of $C^*$, contradicting the hypothesis that $C^*$ is 2-extrovert in~$G'$.

Suppose now that $C^*$ does not coincide with $C_o(G \setminus e)$ and that $e$ is an edge of $C_o(G)$. Since $G$ is cubic, there must be at least two edges of $C^*$ on $C_o(G)$ different from $e$, one incident to $u$ and the other incident to $v$. Since $C^*$ does not coincide with $C_o(G \setminus e)$, $C^*$ has at least two legs on $C_o(G)$ and, when subdividing $e$, $C^*$ has at least four legs in $G'$, contradicting the hypothesis.
%Otherwise, let $C$ be any cycle of $G$ such that $C$ has $e$ as an external chord, $C$ is not the external boundary of $G \setminus e$, and $C$ has no other external chords (otherwise $C$ would not be a 2-extrovert cycle in $G'$). Since $G$ is triconnected and planar, there exists a vertex $w \in C$ and an $e'=(w,z)$ such that $w \neq u,v$ and $z$ is in the exterior of $C$. It follows that when we subdivide $e$, $C$ has at lest three legs: one incident to $u$, one incident to $v$, and one incident to $w$.
\end{proof}

\begin{figure}[ht]
	\centering
	\subfloat[]{\includegraphics[width=0.2\columnwidth]{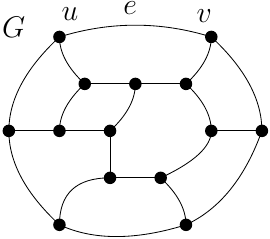}\label{fi:2-legged-introduced-a}}
	\hfil
	\subfloat[]{\includegraphics[width=0.2\columnwidth]{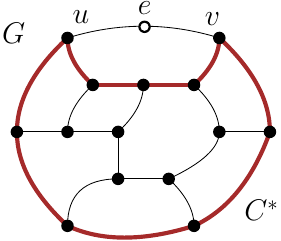}\label{fi:2-legged-introduced-b}}
	\caption{(a)~A plane triconnected cubic graph $G$ with external face $f$. (b)~Subdividing an edge $e$ of $f$ gives rise to a 2-extrovert cycle $C^*$.}
\end{figure}

\begin{property}\label{pr:non-demanding}
	Let $C$ be a non-demanding 3-extrovert cycle of a plane triconnected cubic graph $G$ and let $C'$ be a descendant of $C$ such that $C'$ is a demanding 3-extrovert cycle that has some edges on the external face. All edges of $C' \cap C_o(G)$ are also edges of $C$.
\end{property}
\begin{proof}
Suppose that there is an edge $e$ of $C' \cap C_o(G)$ that does not belong to $C$. Let $f'$ be the internal face of $G$ incident to $e$. Face $f'$ is in the interior of $C'$ but in the exterior of $C$. Therefore, $C'$ cannot be a descendant of $C$. A contradiction.
\end{proof}

\begin{figure}[htb]
	\centering
	\includegraphics[width=0.50\columnwidth]{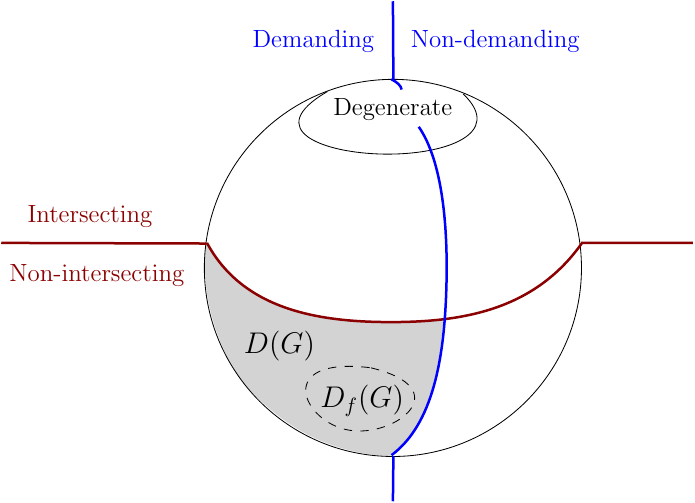}
	\caption{A classification of the 3-extrovert cycles of a graph $G$.}\label{fi:extrovert-classification}
\end{figure}

We denote by $D(G)$ be the set of non-degenerate demanding 3-extrovert cycles of $G$ such that no two cycles of $D(G)$ are intersecting. We denote by $f$ the external face of $G$ and by $D_{f}(G)$ the subset of elements of $D(G)$ sharing edges with~$f$. \cref{fi:extrovert-classification} shows the sets $D(G)$ and $D_f(G)$ and the general classification of the 3-extrovert cycles of $G$.
The cardinality of $D(G)$ and $D_{f}(G)$ is used in the next lemma to give an upper bound to the cost of a cost-minimum orthogonal representation of~$G$.
%extends and rephrases results of~\cite{DBLP:journals/jgaa/RahmanNN99} to the case when a plane triconnected cubic graphs has flexible edges.

\begin{lemma}\label{le:cost-upper-bound}
Let $G$ be a plane triconnected cubic graph with flexible edges  and let $f$ be the external face of~$G$. The cost $c(G)$ of a cost-minimum orthogonal representation of $G$ that preserves its planar embedding is such that $c(G) \leq |D(G)| + 4 - \min\{4, |D_{f}(G)| \}$.
Also, there exists an embedding-preserving orthogonal representation of $G$ whose cost is $|D(G)| + 4 - \min\{4, |D_{f}(G)| \}$ that satisfies Properties~\textsf{P1}, \textsf{P2}, and \textsf{P3} of \cref{th:fixed-embedding-cost-one}.
\end{lemma}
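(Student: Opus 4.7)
\medskip\noindent\textbf{Proof plan.}
The plan is to construct an embedding-preserving orthogonal representation $H$ of $G$ by subdividing selected edges of $G$ with ``costly'' degree-2 vertices so that the resulting plane graph $\rect{G}$ becomes good (i.e., it satisfies the three conditions of \cref{th:RN03}), and then applying Algorithm \textsf{NoBendAlg} (\cref{le:NoBendAlg}) to $\rect{G}$ to obtain a no-bend representation $\rect{H}$ whose inverse $H$ has the desired cost. The number of costly vertices we introduce will be exactly $|D(G)| + 4 - \min\{4, |D_{f}(G)|\}$, which will immediately bound $c(H)$ from above by this value.

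First, for each cycle $C \in D_{f}(G)$ I would pick one edge of $C \cap C_o(G)$ and subdivide it with one costly vertex. For each cycle $C \in D(G) \setminus D_{f}(G)$ I would pick an edge of an (internal) green contour path of $C$ and subdivide it with one costly vertex. Because the cycles of $D(G)$ are pairwise non-intersecting and because of the ancestor/descendant argument (an ancestor that is demanding cannot share an edge with a demanding descendant, otherwise the ancestor would inherit a shared green contour path), these edges can be chosen pairwise distinct, and each lies on a green contour path whose edges are inflexible, so every such subdivision contributes exactly $1$ to the cost. If $|D_{f}(G)| < 4$ I would then add $4 - |D_{f}(G)|$ further costly vertices on edges of $C_o(G)$, chosen as spread out as possible around the external cycle (and choosing inflexible edges only if no flexible edge of $f$ is available, so as to align with Properties~\textsf{P1}--\textsf{P3}). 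Summing up, the construction introduces at most $|D(G)| + 4 - \min\{4, |D_{f}(G)|\}$ costly vertices.

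The central step is to verify that $\rect{G}$ is good. Condition~$(i)$ is immediate because at least $\min\{4, |D_{f}(G)|\} + (4 - \min\{4,|D_{f}(G)|\}) = 4$ degree-2 vertices are placed on $C_o(G)$. Condition~$(ii)$ requires a small argument: by \cref{le:2-extrovert} the only 2-extrovert cycles of $\rect{G}$ are of the form $C_o(G \setminus e)$ for an external edge $e$ we subdivided; since we placed at least four degree-2 vertices on $C_o(G)$ and they can be spread over $C_o(G)$ (so that each external edge carries at most two of them), such a cycle contains at least two degree-2 vertices. Condition~$(iii)$ is the main obstacle: we must show that every 3-extrovert cycle of $\rect{G}$ has a degree-2 vertex. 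For cycles in $D(G)$ this is true by construction. For a non-demanding 3-extrovert cycle $C'$, the \textsc{3-Extrovert Coloring Rule} forces $C'$ to have either an orange contour path (containing a flexible edge, on which we are free to add a bend of cost $0$ since we remain within its flexibility) or a green contour path shared with a child cycle in its genealogical tree; in the latter case, iterating down the tree one eventually reaches a demanding cycle $C^{*} \in D(G)$ that sits below $C'$, and by \cref{pr:stabbing-path} the green path of $C^{*}$ lies inside every intermediate cycle, so the bend we placed for $C^{*}$ is also a bend of $C'$. The remaining subtlety concerns \emph{intersecting} demanding 3-extrovert cycles excluded from $D(G)$: I would invoke \cref{le:twins_fagiolinobend,le:demanding-non-demanding-intersecting,le:intersecting-demanding-transitive,le:three-intesecting-demanding-covering} to show that any such cycle is served either by the bends placed for $D_{f}(G)$ (since intersecting demanding cycles must share edges with $C_o(G)$) or by a descendant bend; degenerate demanding 3-extrovert cycles are always handled by the four bends on the external boundary via \cref{pr:degenerate}.

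Once $\rect{G}$ is good, \cref{le:NoBendAlg} yields in $O(n)$ time a no-bend orthogonal representation $\rect{H}$ of $\rect{G}$; its inverse $H$ is an embedding-preserving orthogonal representation of $G$ whose total cost equals the number of costly vertices, that is $|D(G)| + 4 - \min\{4, |D_{f}(G)|\}$. Finally, since each edge of $G$ is subdivided at most once, each inflexible edge of $H$ has at most one bend and each flexible edge $e$ has at most $\flex(e)$ bends, so Property~\textsf{P3} is satisfied in the generic case; Properties~\textsf{P1} and~\textsf{P2} are obtained by tracking what happens when $f$ is a triangle, in which case \cref{th:RN03} forces the four external bends to cluster so that one edge of $f$ (necessarily inflexible under \textsf{P1}, flexible under \textsf{P2}) must receive two bends, which is exactly the exception listed in those properties.
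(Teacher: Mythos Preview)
Your overall architecture matches the paper's: subdivide edges to make the plane graph good, apply \textsf{NoBendAlg}, and count costly subdivisions. However, the argument for Condition~$(iii)$ of \cref{th:RN03} has a real gap.

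You write that for each $C^{*}\in D(G)\setminus D_f(G)$ you ``pick an edge of an (internal) green contour path'' and later, for a non-demanding $C'$, you ``iterate down'' a green chain to some demanding $C^{*}$ and conclude that ``the bend we placed for $C^{*}$ is also a bend of~$C'$''. But the bend you placed on $C^{*}$ sits on an \emph{arbitrary} green contour path, while the chain from $C'$ singles out one \emph{specific} contour path $\pi\subseteq C'\cap C^{*}$. \cref{pr:stabbing-path} only guarantees that $\pi$ lies in every intermediate cycle; it says nothing about the other contour paths of~$C^{*}$. So your bend may miss $C'$ entirely. The paper closes this gap by reversing the quantification: for each $C'\in D(G)\setminus D_f(G)$ it first collects the set $\mathcal{C}'$ of non-demanding ancestors sharing edges with $C'$, observes (after subdividing at least two external edges, so that \cref{le:demanding-non-demanding-intersecting} applies) that these are nested, takes the outermost one, and inserts the costly vertex on an edge of $C'$ lying in the intersection with that outermost ancestor; \cref{pr:stabbing-path} then covers all of~$\mathcal{C}'$ at once.

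Two smaller points. First, ``we are free to add a bend of cost~$0$'' on an orange contour path is not a proof step; you must actually insert those non-costly subdivisions, and the paper does (it subdivides each flexible edge of a non-demanding cycle that shares no edge with $D(G)$ by $\flex(e)$ vertices). Second, your handling of \emph{intersecting} demanding 3-extrovert cycles is only a list of lemma names. The paper treats this by a case split: if $|D_f(G)|>0$ then \cref{co:inclusion} rules them out; if $|D_f(G)|=0$ it uses \cref{le:twins_fagiolinobend} to find two specific external edges $e_1,e_2$ so that every such cycle contains one of them, and two of the four external subdivisions are placed precisely on $e_1,e_2$. ``Spread out as possible'' does not achieve this.
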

\begin{proof}
We construct a good plane graph $\rect{G}$ by subdividing the edges of $G$ with at most $|D(G)| +4 - \min\{4, |D_{f}(G)| \}$ costly vertices satisfying Properties~\textsf{P1}, \textsf{P2}, and \textsf{P3} of \cref{th:fixed-embedding-cost-one}. Once we have constructed such $\rect{G}$, we apply the \textsf{NoBendAlg} (\cref{se:preliminaries}) and compute an orthogonal representation $\rect{H}$ of $\rect{G}$. For every path $\pi_e$ of $\rect{H}$ corresponding to a subdivided edge $e$ of $G$, we replace $\pi_e$ with a chain of horizontal and vertical segments that have the same orthogonal shape as $\pi_e$. The cost of the resulting orthogonal representation $H$ of $G$ is equal to the number of costly vertices of $\rect{G}$. We describe how to construct $\rect{G}$ incrementally, by subdividing one edge at a time. However, we will never subdivide an edge that results from a previous subdivision.

The set of 3-extrovert cycles of $G$ can be partitioned into degenerate and non-degenerate cycles. We shall satisfy Condition~$(i)$ of \cref{th:RN03} by subdividing at least three distinct edges of $C_o(G)$. Since every degenerate 3-extrovert cycle of $G$ contains all but two edges of $C_o(G)$, this guarantees that Condition~$(iii)$ of \cref{th:RN03} will be satisfied for all degenerate 3-extrovert cycles of~$G$. Hence, we focus on non-degenerate 3-extrovert cycles and distinguish between the following two cases.

\begin{itemize}
\item {$\mathbf{|D_f(G)| > 0}$:}
Since $|D_f(G)| > 0$ we have that $C_o(G)$ consists of at least four edges, because if the external face had three edges any 3-extrovert cycle sharing edges with $C_o(G)$ would be degenerate. Also, by \cref{co:inclusion}, $G$ does not have any (non-degenerate) intersecting demanding 3-extrovert cycles. It follows that every non-degenerate 3-extrovert cycle of $G$ is either an element of $D(G)$ (i.e., demanding non-intersecting) or it is a non-demanding (intersecting or non-intersecting) 3-extrovert cycle of $G$, denote this set of non-demanding 3-extrovert cycles of $G$ by $N(G)$.

For each cycle $C' \in D_f(G)$, we insert a degree-2 vertex along an edge of $C' \cap C_o(G)$. By \cref{pr:non-demanding}, this subdivision is sufficient to satisfy Condition~$(iii)$ of \cref{th:RN03} also for each non-demanding 3-extrovert cycle $C \in N(G)$ such that $C$ is an ancestor of $C'$ sharing edges with $C'$.
If $|D_f(G)| \geq 4$, Condition~$(i)$ of \cref{th:RN03} is already satisfied by the (at least four) costly vertices already introduced along $C_o(G)$. Otherwise, we introduce $\min\{4, |D_{f}(G)| \}$ additional degree-2 vertices along distinct edges of $C_o(G)$.

Denote by $N'(G)$ the subset of cycles in $N(G)$ not containing subdivision vertices along $f$. Since we subdivided at least two edges of $C_o(G)$, by \cref{le:demanding-non-demanding-intersecting} no two cycles in $N'(G)$ are intersecting.
It remains to satisfy Condition~$(iii)$ of \cref{th:RN03} for the cycles in $D(G) \setminus D_f(G)$ and cycles in $N'(G)$.
For each cycle $C' \in D(G) \setminus D_f(G)$ consider the subset $\mathcal{C'}$ of $N'(G)$ of cycles sharing edges with $C'$. Since no two cycles of $\mathcal{C'}$ are intersecting, there exists a cycle $C \in \mathcal{C'}$ such that $G(C)$ contains all other cycles in $\mathcal{C'}$. By \cref{pr:stabbing-path} there exists an edge $e$ that belongs to $C'$ and all cycles in $\mathcal{C'}$. We insert a costly degree-2 vertex along $e$. This satisfies Condition~$(iii)$ of \cref{th:RN03} for all cycles in $D(G) \setminus D_f(G)$ and for all cycles in $N'(G)$ sharing edges with some cycle of $D(G)$.
Let $C \in N'(G)$ be a cycle that does not share edges with any cycle of $D(G)$. By the \textsc{3-Extrovert Coloring Rule} $C$ contains at least one flexible edge. We subdivide each flexible edge $e$ of $C$ with $\flex(e)$ non-costly degree-2 vertices.

Overall, we inserted $|D_f(G)| + (|D(G)| - |D_f(G)|) = |D(G)|$ costly vertices in order to satisfy Condition~$(iii)$ of \cref{th:RN03} for all 3-extrovert cycles.

\item {$\mathbf{|D_f(G)| = 0}$:} Assume first that $G$ does not have two demanding 3-extrovert cycles that intersect each other. If $C_o(G)$ has at least four edges, in order to satisfy Condition~$(i)$ of \cref{th:RN03}, we insert four (possibly costly) subdivision vertices along distinct edges of $C_o(G)$. If $C_o(G)$ has three edges we insert these subdivision vertices along the edges of $C_o(G)$ in such a way that every edge of $C_o(G)$ is subdivided at least once. In particular, if one of the edges of $C_o(G)$ is a flexible edge we subdivide it twice.
In order to satisfy Condition~$(iii)$ of \cref{th:RN03} we apply the same strategy as in the case of $|D_f(G)| > 0$.

Assume now that $G$ has at least two demanding 3-extrovert cycles that intersect each other.
Since we are also assuming that these two cycles are non-degenerate, by \cref{pr:four-edges-along-Co}, $C_o(G)$ consists of at least four edges. By \cref{le:twins_fagiolinobend}, there exist two edges $e_1$ and $e_2$ of $C_o(G)$ such that every (non-degenerate) intersecting demanding 3-extrovert cycle of $G$ contains either $e_1$ or $e_2$. We subdivide $e_1$, $e_2$, and two other distinct edges $e_3$ and $e_4$ of $C_o(G)$ with one (possibly costly) degree-2 vertex. As above, this satisfies Condition~$(i)$ of \cref{th:RN03} and, in order to satisfy Condition~$(iii)$, we apply the same strategy as in the case of $|D_f(G)| > 0$.
\end{itemize}

It remains to show that Condition~$(ii)$ of \cref{th:RN03} is also satisfied for those 2-extrovert cycles that, by \cref{le:2-extrovert}, are created by the insertion of subdivision vertices along edges of $C_o(G)$. Let $C^*$ be one such 2-extrovert cycles. Since $C^*$ contains all the edges of $C_o(G)$ except the subdivided one, and since we have subdivided at least three distinct edges of $C_o(G)$, it follows that Condition~$(ii)$ is satisfied for~$C^*$.

Summarizing, the number of costly degree-2 vertices introduced along the edges of $G$ to construct $\rect{G}$ is at most $|D(G)| + 4 - \min\{4, |D_{f}(G)| \}$.
Finally, notice that every non-flexible edge has been subdivided at most once and every flexible edge $e$ has been subdivided $\flex(e)$ times, except when the external face is a 3-cycle, in which case one of its edges receives two bends according to Properties~\textsf{P1}, \textsf{P2}, and~\textsf{P3} of \cref{th:fixed-embedding-cost-one}.
\end{proof}

\begin{figure}[h]
	\centering
	\subfloat[]{\includegraphics[width=0.40\columnwidth]{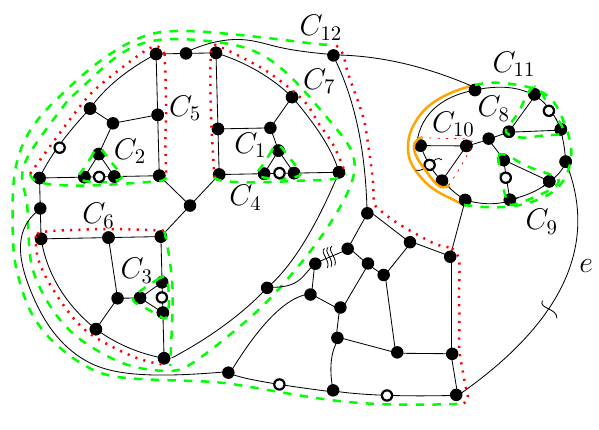}\label{fi:bends_D(G)+4}}
	\hfil
	\subfloat[]{\includegraphics[width=0.40\columnwidth]{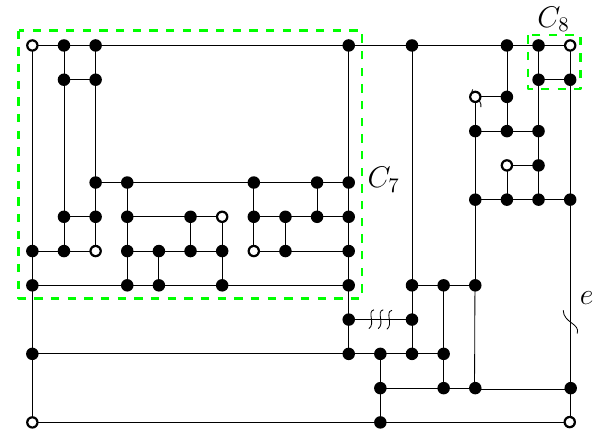}\label{fi:bends_D(G)+4drawing}}
	\caption{(a) The graph $\rect{G}$ obtained by subdividing the edges of the graph of \cref{fi:rahman_colouration-a} as in the proof of \cref{le:cost-upper-bound}; the subdivision vertices are filled white. (b) The orthogonal representation resulting from the application of the \textsf{NoBendAlg} to $\rect{G}$.}
\end{figure}

\begin{figure}[h]
	\centering
	\subfloat[]{\includegraphics[width=0.40\columnwidth]{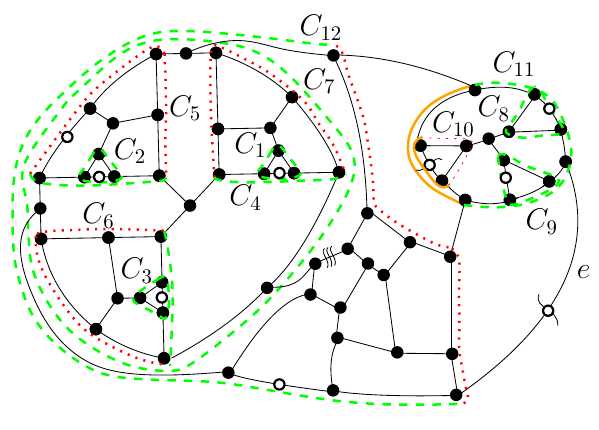}\label{fi:bends_D(G)+4-bis}}
	\hfil
	\subfloat[]{\includegraphics[width=0.40\columnwidth]{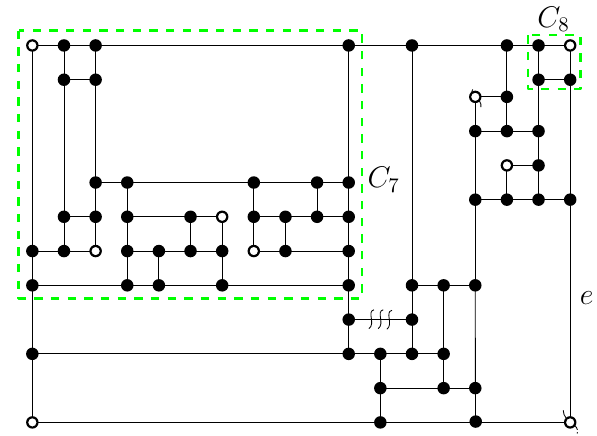}\label{fi:rahman_colouration-c}}
	\caption{(a) A subdivision of the graph of \cref{fi:rahman_colouration-a} with one fewer costly vertex than in \cref{fi:bends_D(G)+4}: the flexible edge $e$ has been subdivided; (b) the corresponding orthogonal representation.}
\end{figure}

\cref{fi:bends_D(G)+4} shows a possible subdivision of the edges of the graph of in \cref{fi:rahman_colouration-a} based on the proof of \cref{le:cost-upper-bound}. The set $D(G)$ is $D(G)= \{C_1, C_2, C_3, C_7, C_8,C_9\}$ and $D_f(G)= \{C_7, C_8\}$. Graph $G$ has an orthogonal representation with $|D(G)| + 4 - \min\{4, |D_{f}(G)| \} = 8$ bends, depicted in \cref{fi:bends_D(G)+4drawing}). Observe that one could reduce the number of costly vertices by placing bends along the flexible edges of $C_o(G)$.
For example, in \cref{fi:bends_D(G)+4-bis} one of the four vertices on $C_o(G)$ is obtained by subdividing the flexible edge $e$ and, hence, it does not correspond to a costly degree-2 vertex of $\rect{G}$. Therefore we can save one bend and obtain the orthogonal representation of \cref{fi:rahman_colouration-c}.

We prove the following lower bound for $c(G)$.

\begin{lemma}\label{le:cost-lower-bound}
Let $G$ be a plane triconnected cubic graph with flexible edges and let $f$ be the external face of~$G$. The cost $c(G)$ of a cost-minimum orthogonal representation of $G$ that preserves its planar embedding is such that $c(G) \geq |D(G)| + 4 - \min\{4, |D_{f}(G)| + \sum_{e \in f} \flex(e) \}$.
\end{lemma}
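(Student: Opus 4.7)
The plan is to fix an arbitrary embedding-preserving orthogonal representation $H$ of $G$, pass to its rectilinear image $\rect{G}$ (a good plane graph by \cref{th:RN03}), and count its \emph{costly} degree-2 subdivision vertices, i.e., those along an edge $e$ in excess of $\flex(e)$; the total number of such vertices equals $c(H)$. Writing $c(H) = c_f + c_{\mathrm{int}}$, where $c_f$ counts costly vertices on edges of $f$ and $c_{\mathrm{int}}$ counts those on internal edges, I will derive two lower bounds and then combine them by a short case analysis.

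The first step I would take is to show that the cycles in $D(G)$ are pairwise edge-disjoint, so that distinct cycles in $D(G)$ demand distinct costly vertices. Since every pair $C, C' \in D(G)$ is non-intersecting, they either already have no edge in common or satisfy the nested contour-path condition; in the nested case a planarity argument forces one cycle to be a descendant of the other in a common genealogical tree, and the demanding property---which forbids a demanding cycle from sharing edges with the green contour paths of any of its child-cycles, combined with the fact that every contour path of a demanding cycle is green---excludes parent--child sharing. Deeper descendant sharing is ruled out inductively: any edge shared by a demanding ancestor $C$ and a demanding descendant $C'$ must also lie on the intermediate child-cycle of $C$ whose pertinent subgraph contains $C'$, which propagates the obstruction one level at a time. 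Given edge-disjointness, \cref{th:RN03}(iii) guarantees at least one degree-2 vertex of $\rect{G}$ on some edge of each $C \in D(G)$; because $C$ has no flexible edges, that vertex is costly, and distinctness yields $c(H) \geq |D(G)|$. A second bound comes from \cref{th:RN03}(i): $\rect{G}$ has at least four degree-2 vertices on $f$, and a non-costly one must subdivide a flexible edge of $f$ within its flexibility budget, so at most $\sum_{e \in f}\flex(e)$ of them can be non-costly, giving $c_f \geq \max\{0,\, 4 - \sum_{e\in f}\flex(e)\}$.

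To combine the two bounds without double counting, I would introduce $x \in [0, |D_f(G)|]$ as the number of cycles of $D_f(G)$ whose guaranteed costly vertex is placed on $f$; since every $C \in D(G)\setminus D_f(G)$ has no edge on $f$, its costly vertex is automatically internal, and therefore $c_{\mathrm{int}} \geq |D(G)| - x$ and $c_f \geq \max\{x,\, 4 - \sum_{e\in f}\flex(e)\}$. Summing and minimizing the right-hand side over $x \in [0,|D_f(G)|]$ yields
\[
c(H) \;\geq\; |D(G)| + \max\!\left\{0,\; 4 - |D_f(G)| - \sum_{e\in f}\flex(e)\right\} \;=\; |D(G)| + 4 - \min\!\left\{4,\; |D_f(G)| + \sum_{e\in f}\flex(e)\right\},
\]
which, since it holds for every $H$, gives the claimed bound on $c(G)$. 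The hard part will be the edge-disjointness step, because the non-intersecting condition in the definition of $D(G)$ does not by itself preclude shared edges and the delicate situation is a demanding descendant separated from a demanding ancestor by a chain of intermediate non-demanding cycles, whose green contour paths must be tracked through the genealogical tree to propagate the demanding obstruction.
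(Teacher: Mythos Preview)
Your approach is correct and matches the paper's own proof in spirit, though you develop it with considerably more care than the paper does. The paper's proof of this lemma is only a few lines: it asserts (without justification) that each cycle in $D(G)$ forces a distinct costly subdivision vertex, giving $c(G)\geq |D(G)|$, and then argues informally that among the four degree-2 vertices required on $C_o(G)$ by Condition~(i), at most $|D_f(G)|+\sum_{e\in f}\flex(e)$ of them can be either non-costly or already accounted for by a cycle in $D_f(G)$.

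Your splitting $c(H)=c_f+c_{\mathrm{int}}$ and optimizing over the auxiliary parameter $x$ is a clean way to make the additivity of the two contributions rigorous; the paper leaves this implicit. Your identification of edge-disjointness of the cycles in $D(G)$ as the key missing step is exactly right, and the paper simply takes it for granted. The inductive argument you sketch does work: if demanding $C$ and demanding $C'$ share an edge with $C'$ a descendant of $C$, one shows that for every intermediate cycle $C_i$ on the path in $T_C$ the contour path $P_i$ containing the shared edge satisfies $P_{i+1}\subseteq P_i$ (same leg face), and then the coloring rule forces $P_1,\dots,P_{k-1}$ to be red while $P_0$ and $P_k$ are green, a contradiction at either end. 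So your ``hard part'' is genuine but tractable along the lines you propose.
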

\begin{proof}
We use the same notation as in the proof of \cref{le:cost-upper-bound}.
In order to satisfy Condition~$(iii)$ of \cref{th:RN03} each demanding 3-extrovert cycle in $D(G)$ has to be subdivided by one costly vertex, hence $c(G) \geq |D(G)|$. Also, in order to satisfy Condition~$(i)$ of \cref{th:RN03} $C_o(G)$ must be subdivided with four degree-2 vertices. These four degree-2 vertices could be non-costly vertices along flexible edges of $C_o(G)$ or some of them could subdivide edges of 3-extrovert cycles in $D_f(G)$. Therefore, at least $4 - \min\{4, |D_{f}(G)| + \sum_{e \in f} \flex(e) \}$ costly vertices are required to satisfy Condition~$(i)$.
\end{proof}

One may wonder whether the lower-bound of \cref{le:cost-lower-bound} is tight. \cref{fi:lower-bound} shows that this is not the case. In the graph of \cref{fi:lower-bound-a} $|D(G)|=|D_f(G)|=0$, $\sum_{e \in f} \flex(e) = \flex(e_0) = 4$, but any orthogonal representation $H$ of $G$ is such that $c(H) \geq 2$, see for example \cref{fi:lower-bound-d}.
\begin{figure}[tb]
	%	\centering
	\hfil
	\subfloat[]{\includegraphics[width=0.22\columnwidth,page=1]{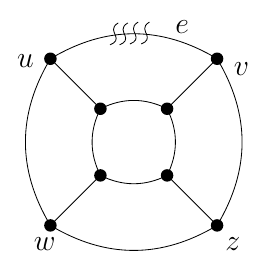}\label{fi:lower-bound-a}}
	\hfil
	% \subfloat[]{\includegraphics[width=0.22\columnwidth,page=2]{not_all_bends_in_an_edge}\label{fi:lower-bound-b}}
	% \hfil
	% % \subfloat[]{\includegraphics[width=0.22\columnwidth,page=3]{not_all_bends_in_an_edge}\label{fi:lower-bound-c}}
	% \hfil
	\subfloat[]{\includegraphics[width=0.22\columnwidth,page=4]{not_all_bends_in_an_edge}\label{fi:lower-bound-d}}
	\caption{(a)~An example of $G$ with an edge $e_0$ such that $\flex(e_0)=4$ on the external face. (b)~An orthogonal representation of $G$ with minimum cost.}\label{fi:lower-bound}
\end{figure}
In the rest of this section we prove that the cost $c(G)$ of a cost-minimum orthogonal representation $H$ of a plane triconnected cubic graph $G$ with flexible edges and external face~$f$ is as follows:
	\begin{equation}\label{eq:fixed-embedding-cost}
       \;\;\;\;\; c(G) = |D(G)| + 4 - \min\{4, |D_{f}(G)| + \flex(f) \}
	\end{equation}

where $\flex(f)$ is a function, called the \emph{flexibility of~$f$}, whose values are defined in \cref{le:fixed-embedding-min-bend-mf0,le:fixed-embedding-min-bend-mf1-leq-2,le:fixed-embedding-min-bend-mf1-eq-3,le:fixed-embedding-min-bend-mf1-eq-4,le:fixed-embedding-min-bend-mf2-part1,le:fixed-embedding-min-bend-mf2-part2,le:fixed-embedding-min-bend-mf3}. Intuitively, $\flex(f)$ is a measure of how much one can take advantage of the flexibility of the edges along $f$ in order to construct a good plane graph that satisfies the conditions of \cref{th:RN03}. %Before formally defining the value of $\flex(f)$ and proving the correctness of Equation~\ref{eq:fixed-embedding-cost}, it is worth making a remark.
%
%One could think that it is always convenient to insert degree two vertices (bends) along the flexible edges of $f$  as much as possible and that we can define $\flex(f)$ as the sum of the flexibilities of the edges along the boundary of $f$. As we shall see, this is often but not always the case. A bend along a flexible edge of $f$ generates a 2-extrovert cycle (see \cref{le:2-extrovert}) and a cost-minimum orthogonal representation of $G$ must suitably distribute the four bends on $f$ in order to also satisfy Condition~(ii) of \cref{th:RN03} without paying extra costs. If, for example, $f$ had only one flexible edge $e$ and this edge had flexibility $4$, bending $e$ four times would certainly satisfy Condition~(i) of \cref{th:RN03}, but this choice might not satisfy  Condition~(ii) of \cref{th:RN03} for the 2-extrovert cycle generated according to \cref{le:2-extrovert}, and the result would be a suboptimal solution unless some additional conditions are verified.
%

In the statements of \cref{le:fixed-embedding-min-bend-mf0,le:fixed-embedding-min-bend-mf1-leq-2,le:fixed-embedding-min-bend-mf1-eq-3,le:fixed-embedding-min-bend-mf1-eq-4,le:fixed-embedding-min-bend-mf2-part1,le:fixed-embedding-min-bend-mf2-part2,le:fixed-embedding-min-bend-mf3}, $G$ denotes a plane triconnected cubic graph that may have flexible edges, $f$ is the external face of $G$, and $m(f)$ is the number of flexible edges of~$f$.

\begin{lemma}\label{le:fixed-embedding-min-bend-mf0}
	If $m(f)=0$ then Equation~\ref{eq:fixed-embedding-cost} holds with $\flex(f)=0$.
	Also, there exists a cost-minimum embedding-preserving orthogonal representation of $G$ that satisfies Properties~\textsf{P1}, \textsf{P2}, and \textsf{P3} of \cref{th:fixed-embedding-cost-one}.
\end{lemma}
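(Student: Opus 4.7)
The plan is to observe that in the regime $m(f)=0$ the upper bound of Lemma~\ref{le:cost-upper-bound} and the lower bound of Lemma~\ref{le:cost-lower-bound} coincide, so Equation~\eqref{eq:fixed-embedding-cost} follows with $\flex(f)=0$ simply by setting these two bounds equal. Specifically, when $m(f)=0$ every edge of $f$ has flexibility zero, hence $\sum_{e\in f}\flex(e)=0$ and Lemma~\ref{le:cost-lower-bound} gives
\[
c(G)\;\geq\;|D(G)|+4-\min\{4,\,|D_f(G)|\},
\]
which is exactly the upper bound of Lemma~\ref{le:cost-upper-bound}. So the one-line argument for the cost formula is immediate.

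The remaining obligation is to exhibit a cost-minimum embedding-preserving orthogonal representation $H$ of $G$ that moreover satisfies Properties~\textsf{P1}, \textsf{P2}, \textsf{P3} of \cref{th:fixed-embedding-cost-one}. The plan is to reuse verbatim the construction from the proof of Lemma~\ref{le:cost-upper-bound}: subdivide at most $|D(G)|+4-\min\{4,|D_f(G)|\}$ edges of $G$ with costly degree-2 vertices so as to obtain a good plane graph $\overline{G}$, then apply \textsf{NoBendAlg} (via Lemma~\ref{le:NoBendAlg}) to get a no-bend orthogonal representation $\overline{H}$ of $\overline{G}$, and finally read off $H$ from $\overline{H}$ by replacing each subdivision vertex with a bend. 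This immediately inherits the bend distribution analyzed in the proof of Lemma~\ref{le:cost-upper-bound}.

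It then remains to check that this representation meets \textsf{P1}--\textsf{P3}. Since $m(f)=0$, $f$ has no flexible edges, and therefore the premise of \textsf{P2} (``at least one flexible edge of $f$ of flexibility one'') is vacuous; the only relevant properties are \textsf{P1} (the case $f$ is a 3-cycle, necessarily of inflexible edges when $m(f)=0$) and \textsf{P3} (the case $f$ is not a 3-cycle). In the construction of Lemma~\ref{le:cost-upper-bound}, each flexible edge $e$ not on $f$ is subdivided at most $\flex(e)$ times and each inflexible edge is subdivided at most once, except when $f$ is a 3-cycle, where one inflexible edge of $f$ receives two subdivision vertices in order to satisfy Condition~$(i)$ of \cref{th:RN03} with at least four degree-2 vertices on the boundary of $f$. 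This is precisely the distribution prescribed by \textsf{P1} in the 3-cycle subcase and by \textsf{P3} otherwise.

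The main (minor) obstacle I anticipate is double-checking the corner case in which $|D_f(G)|$ is small and $f$ is a 3-cycle, so that extra costly vertices must be placed on $f$ to fulfill Condition~$(i)$ of \cref{th:RN03}; here one must argue that one such vertex can always be absorbed by doubling-up on a single inflexible edge of $f$ (giving the ``one edge with two bends'' clause of \textsf{P1}) without violating the count $|D(G)|+4-\min\{4,|D_f(G)|\}$. This follows from the accounting already carried out in Lemma~\ref{le:cost-upper-bound}: the four subdivision vertices needed on $f$ are chosen on distinct edges whenever $f$ has at least four edges, and only when $f$ is a 3-cycle does a single edge need a second subdivision, which is exactly the exception allowed by \textsf{P1}.
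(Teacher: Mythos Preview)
Your proposal is correct and takes essentially the same approach as the paper: both observe that when $m(f)=0$ the upper bound of Lemma~\ref{le:cost-upper-bound} and the lower bound of Lemma~\ref{le:cost-lower-bound} coincide (since $\sum_{e\in f}\flex(e)=0$), and both invoke the construction of Lemma~\ref{le:cost-upper-bound} to obtain a cost-minimum representation satisfying \textsf{P1}--\textsf{P3}. Your detailed verification of \textsf{P1}--\textsf{P3} is in fact more explicit than needed, since Lemma~\ref{le:cost-upper-bound} already asserts these properties in its statement; the paper's proof simply cites that lemma directly.
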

\begin{proof}
We compute an orthogonal representation of $G$ by means of \cref{le:cost-upper-bound}.
The upper and lower bounds on $c(G)$, stated by \cref{le:cost-upper-bound,le:cost-lower-bound}, coincide and give $c(G) = |D(G)| + 4 - \min\{4, |D_{f}(G)|\}$.
\end{proof}

\begin{lemma}\label{le:fixed-embedding-min-bend-mf3}
	If $m(f) \geq 3$, then Equation~\ref{eq:fixed-embedding-cost} holds with $\flex(f) = \sum_{e \in C_o(G)}\flex(e)$.
	Also, there exists a cost-minimum embedding-preserving orthogonal representation of $G$ that satisfies Properties~\textsf{P1}, \textsf{P2}, and \textsf{P3} of \cref{th:fixed-embedding-cost-one}.
\end{lemma}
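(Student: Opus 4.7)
The lower bound $c(G) \geq |D(G)| + 4 - \min\{4, |D_f(G)| + \sum_{e\in C_o(G)}\flex(e)\}$ is immediate from \cref{le:cost-lower-bound}, so it suffices to exhibit a cost-minimum embedding-preserving orthogonal representation achieving this value and complying with \textsf{P1}--\textsf{P3}. I will follow the same general strategy as in \cref{le:cost-upper-bound}: incrementally subdivide the edges of $G$ so as to obtain a good plane graph $\rect{G}$, apply \textsf{NoBendAlg}, and then interpret the degree-2 subdivision vertices as bends of an orthogonal representation $H$ of $G$. The novelty with respect to \cref{le:cost-upper-bound} is that, since $m(f)\geq 3$, I can place up to $\sum_{e\in C_o(G)}\flex(e)$ of the four degree-2 vertices required on $C_o(G)$ by Condition~$(i)$ of \cref{th:RN03} \emph{non-costly}, i.e.\ on flexible external edges within their flexibility budget.

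\textbf{Main construction and case analysis.} Write $S=|D_f(G)|+\sum_{e\in C_o(G)}\flex(e)$. For Condition~$(iii)$ of \cref{th:RN03} I proceed exactly as in the proof of \cref{le:cost-upper-bound}: one costly subdivision vertex is placed on every demanding cycle of $D(G)$ (on an external edge if the cycle is in $D_f(G)$, otherwise on a stabbing edge guaranteed by \cref{pr:stabbing-path}), and the remaining non-demanding 3-extrovert cycles are handled by subdividing their flexible edges with non-costly vertices. This costs $|D(G)|$. It remains to satisfy Condition~$(i)$, which requires a total of at least four subdivision vertices on $C_o(G)$. If $S\geq 4$, the $|D_f(G)|$ external costly vertices already placed count, and $\max\{0,4-|D_f(G)|\}$ additional non-costly vertices are distributed among the flexible external edges: since $\sum_{e\in C_o(G)}\flex(e)\geq 4-|D_f(G)|$, and since $m(f)\geq 3$ provides at least three distinct flexible bins, no per-edge flexibility is exceeded and no new costly vertex is needed, yielding cost exactly $|D(G)|=|D(G)|+4-\min\{4,S\}$. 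If $S=3$, then necessarily $|D_f(G)|=0$, $m(f)=3$, and each flexible edge of $f$ has $\flex(e)=1$; I subdivide each such edge once (for free) and need one more external vertex. If $f$ is not a 3-cycle then $C_o(G)$ contains an inflexible edge on which I place a single costly vertex (Property~\textsf{P3}); if $f$ is a 3-cycle then all three of its edges are flexible with $\flex=1$, and I place the fourth vertex as a second bend on one of these edges, incurring one costly vertex and matching Property~\textsf{P2}. In both subcases the extra cost is $1=4-S$.

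\textbf{Verification and main obstacle.} Applying \textsf{NoBendAlg} to $\rect{G}$ and then straightening each subdivided path yields an orthogonal representation $H$ of $G$ whose cost equals the number of costly subdivision vertices, which matches the lower bound. By construction each inflexible edge carries at most one bend and every flexible edge $e$ carries at most $\flex(e)$ bends, with the single exception of one flexible edge of $f$ receiving a second bend in the 3-cycle subcase of $S=3$; this is exactly what \textsf{P2}/\textsf{P3} prescribe, and \textsf{P1} is vacuous since $m(f)\geq 3$ forbids $f$ from being a 3-cycle of inflexible edges. The main obstacle is to ensure that $\rect{G}$ remains a good plane graph after all these subdivisions: Condition~$(iii)$ follows from the \textsc{3-Extrovert Coloring Rule} exactly as in \cref{le:cost-upper-bound}, Condition~$(i)$ is enforced by the case analysis above, and Condition~$(ii)$ is the delicate point. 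By \cref{le:2-extrovert}, new 2-extrovert cycles are exactly those of the form $C_o(G)\setminus\{e\}$ for each subdivided external edge $e$; since in every case we insert subdivision vertices on at least three distinct edges of $C_o(G)$, each such cycle retains at least two degree-2 vertices, so Condition~$(ii)$ is satisfied.
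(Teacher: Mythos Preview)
Your proof is correct and follows essentially the same approach as the paper: both invoke the lower bound of \cref{le:cost-lower-bound} and modify the construction of \cref{le:cost-upper-bound} by placing the subdivision vertices needed for Condition~$(i)$ on the (at least three) flexible external edges at no extra cost, handling the $S=3$ subcase and the 3-cycle external face via Property~\textsf{P2} exactly as the paper does. The only minor difference is that the paper subdivides every flexible external edge $e$ fully with $\flex(e)$ vertices (so the ``at least three distinct subdivided edges of $C_o(G)$'' claim for Condition~$(ii)$ is automatic), whereas you distribute parsimoniously; to make your argument watertight you should state explicitly that the non-costly vertices are spread over at least three distinct flexible edges, which is always possible since $m(f)\geq 3$.
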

\begin{proof}
Since $m(f) \geq 3$, by \cref{le:inclusion} (Properties~$(a)$ and $(b)$), $G$ has no two non-degenerate demanding 3-extrovert cycles that are intersecting.
We subdivide the edges of $G$ with degree-2 vertices as in the proof of \cref{le:cost-upper-bound} except for the insertion of subdivision vertices along $C_o(G)$.
As in \cref{le:cost-upper-bound}, we shall subdivide at least three distinct edges of $C_o(G)$. It follows that Condition~$(iii)$ of \cref{th:RN03} is satisfied for all degenerate 3-extrovert cycles of~$G$.
We subdivide the edges of $C_o(G)$ as follows.
% Since $C_o(G)$ has at least three flexible edges and since a degenerate demanding 3-extrovert cycle includes all edges of $C_o(G)$ except exactly two, $G$ has no degenerate demanding 3-extrovert cycles.
% Since we shall subdivide at least three edges along $C_o(G)$ and since a degenerate 3-extrovert cycle includes all edges of $C_o(G)$ except exactly two, Conditions~$(iii)$ of \cref{th:RN03} will be satisfied for all degenerate 3-extrovert cycles of $G$.
%Let $e_0, e_1, \dots, e_k$, with $k \geq 3$, be the flexible edges of $f$.

\begin{itemize}
\item If $|D_f(G)| > 0$, $C_o(G)$ consists of at least four edges.
%Also, by \cref{le:cost-lower-bound}, $c(G) \geq |D(G)|$.
We subdivide each flexible edge $e$ of $C_o(G)$ with $\flex(e)$ degree-2 subdivision vertices; for each demanding 3-extrovert cycle $C$ of $D_f(G)$ we arbitrarily choose an edge in $C \cap C_o(G)$ and subdivide it with a costly degree-2 vertex.
%The orthogonal representation has $c(G) = |D(G)|$ costly bends which is cost-minimum by \cref{le:cost-lower-bound}.
%If follows that if $|D_f(G)| > 0$ we have $c(G) = |D(G)|$ and Equation~\ref{eq:fixed-embedding-cost} holds with $\flex(f) = \sum_{e \in C_o(G)}\flex(e)$.

\item If $|D_f(G)| = 0$ and $\sum_{e \in C_o(G)}\flex(e) \geq 4$ we satisfy Condition~$(i)$ of \cref{th:RN03} by subdividing each flexible edge $e$ of $C_o(G)$ with $\flex(e)$ degree-2 subdivision vertices. If $\sum_{e \in C_o(G)}\flex(e) = 3$ we satisfy Condition~$(i)$ of \cref{th:RN03} by subdividing each of the three flexible edges of $C_o(G)$ with one non-costly degree-2 vertex and by inserting a fourth costly degree-2 vertex along an arbitrarily chosen edge of $C_o(G)$ (possibly one of the three flexible edges if $C_o(G)$ consists of three edges complying with Property~\textsf{P2} of \cref{th:fixed-embedding-cost-one}).
\end{itemize}

The procedure above guarantees that Condition~$(i)$ of \cref{th:RN03} is satisfied.
Since at least three edges of $C_o(G)$ are subdivided, Condition~$(ii)$ of \cref{th:RN03} is satisfied for every 2-extrovert cycle created by the subdivision vertices inserted on $C_o(G)$ (see \cref{le:2-extrovert}).
Since at least two edges of $C_o(G)$ have been subdivided, by \cref{le:demanding-non-demanding-intersecting} the non-degenerate non-demanding 3-extrovert cycles that still do not satisfy Condition~$(iii)$ of \cref{th:RN03} are not intersecting.
This latter property allows us to subdivide the remaining edges of $G$ as in the proof of \cref{le:cost-upper-bound} to obtain a graph $\rect{G}$ that satisfies Conditions~$(i)$--$(iii)$ of \cref{th:RN03}. The orthogonal representation $H$ obtained from $\rect{G}$ satisfies Properties~\textsf{P1}--\textsf{P3} of \cref{th:fixed-embedding-cost-one}.

The total number of costly bends of $H$ is $|D(G)| + 4 - \min\{4, |D_f(G)| + \sum_{e \in C_o(G)}\flex(e)\}$, which is minimum since it coincides with the lower bound stated by \cref{le:cost-lower-bound}. From the above discussion it also follows that if $m_f \geq 3$, Equation~\ref{eq:fixed-embedding-cost} holds with $\flex(f) = \sum_{e \in C_o(G)}\flex(e)$.
\end{proof}

We now consider the cases when $m_f = 1$ and when $m_f = 2$.

\begin{lemma}\label{le:fixed-embedding-min-bend-mf1-leq-2}
	Let $m(f)=1$, let $e_0$ be the flexible edge of $f$, and let $\flex(e_0) \leq 2$. Equation~\ref{eq:fixed-embedding-cost} holds with $\flex(f) = \flex(e_0)$.
	Also, there exists a cost-minimum embedding-preserving orthogonal representation of $G$ that satisfies Properties~\textsf{P1}, \textsf{P2}, and \textsf{P3} of \cref{th:fixed-embedding-cost-one}.
\end{lemma}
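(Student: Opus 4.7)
The plan is to match the lower bound from \cref{le:cost-lower-bound}, which, since $\sum_{e\in C_o(G)}\flex(e)=\flex(e_0)$ when $m(f)=1$, already yields $c(G)\ge |D(G)|+4-\min\{4,|D_f(G)|+\flex(e_0)\}$, with an explicit construction of a good plane graph $\overline{G}$ whose number of costly degree-2 vertices attains this value. Running \textsf{NoBendAlg} on $\overline{G}$ and taking its inverse will then give the desired cost-minimum embedding-preserving orthogonal representation. The strategy parallels those of \cref{le:cost-upper-bound,le:fixed-embedding-min-bend-mf3}, but extra care is required because with only one external flexible edge of flexibility at most two the supply of non-costly subdivision vertices on $C_o(G)$ is scarce.

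First I would subdivide $e_0$ with its $\flex(e_0)\le 2$ non-costly degree-2 vertices, and then split on whether $|D_f(G)|>0$. If so, \cref{co:inclusion} applies and no two non-degenerate demanding 3-extrovert cycles of $G$ are intersecting; moreover $C_o(G)$ has at least four edges, since a 3-cycle external face cannot host a non-degenerate cycle of $D_f(G)$. For each $C\in D_f(G)$ I insert a costly degree-2 vertex on some edge of $C\cap C_o(G)$, obtaining $|D_f(G)|$ distinct such edges by the non-intersection property. If $|D_f(G)|+\flex(e_0)<4$ I then add $4-|D_f(G)|-\flex(e_0)$ further costly degree-2 vertices on still un-subdivided edges of $C_o(G)$ to fulfill Condition~$(i)$ of \cref{th:RN03}. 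If instead $|D_f(G)|=0$, I need $4-\flex(e_0)$ additional costly vertices on $C_o(G)$, and I place them with the aid of \cref{le:twins_fagiolinobend}: if $G$ has non-degenerate intersecting demanding 3-extrovert cycles, two of these vertices go on the pair of edges $e_1,e_2$ supplied by that lemma, so that each such cycle receives a bend.

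Having handled $C_o(G)$, I complete the subdivision following the recipe of \cref{le:cost-upper-bound}: for each $C\in D(G)\setminus D_f(G)$ I add a costly degree-2 vertex on an edge shared by $C$ and all its non-demanding 3-extrovert ancestors (such an edge exists by \cref{pr:stabbing-path}, and by \cref{le:demanding-non-demanding-intersecting} applied to the two already-subdivided external edges the remaining non-demanding 3-extrovert cycles form a non-intersecting family); for the remaining non-demanding cycles I use their flexible internal edges, subdividing each such $e$ with $\flex(e)$ non-costly vertices. The resulting $\overline{G}$ satisfies Conditions~$(i)$--$(iii)$ of \cref{th:RN03}: the first since at least three external edges end up subdivided (hence also degenerate 3-extrovert cycles are covered), the second by \cref{le:2-extrovert} since the only new 2-extrovert cycles are those bordering on the subdivided external edges, and the third by construction for every 3-extrovert cycle. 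A bookkeeping check shows the total number of costly degree-2 vertices is exactly $|D(G)|+4-\min\{4,|D_f(G)|+\flex(e_0)\}$.

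The anticipated main obstacle is the second case $|D_f(G)|=0$: with only $\flex(e_0)\in\{1,2\}$ non-costly vertices available on $C_o(G)$, I must place two or three additional costly vertices there in a manner that simultaneously reaches four subdivisions of $f$, keeps the bend counts per edge compatible with Properties~\textsf{P1}--\textsf{P3}, and bends every pair of non-degenerate intersecting demanding 3-extrovert cycles (which, unlike in \cref{le:fixed-embedding-min-bend-mf3}, may still be present). The combined use of \cref{le:twins_fagiolinobend} to handle intersecting cycles with just two well-chosen external subdivisions, and of \cref{le:demanding-non-demanding-intersecting} to reduce the leftover non-demanding cycles to a non-intersecting family tractable by \cref{pr:stabbing-path}, is the key technical ingredient. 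A final case split on whether $f$ is a 3-cycle and on whether $\flex(e_0)=1$ or $\flex(e_0)=2$ confirms that the only edge forced to carry two bends is either $e_0$ itself (matching \textsf{P2} when $\flex(e_0)=1$ and $f$ is a triangle) or falls within the allowance of \textsf{P1} or \textsf{P3}.
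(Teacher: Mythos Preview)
Your proposal is correct and follows essentially the same approach as the paper's proof. The only difference is organizational: the paper splits first on whether $G$ has intersecting demanding 3-extrovert cycles (observing, via \cref{le:inclusion,le:three-intesecting-demanding-covering}, that with $m(f)=1$ there can be at most two, sharing the leg $e_0$), whereas you split first on $|D_f(G)|>0$ and then handle the intersecting case inside the $|D_f(G)|=0$ branch using \cref{le:twins_fagiolinobend}; both routes lead to the same placement of subdivision vertices and the same cost count.
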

\begin{proof}
%We denote by $C^*_0$ the (possibly degenerate) 2-extrovert cycle created by inserting at least one degree-2 vertex along $e_0$ (see \cref{le:2-extrovert}). Cycle $C^*_0$ consists of the path $C_o(G) \setminus e_0$ and of the mirror path $\Pi$ of $e_0$.
% (see also the schematic illustration of \cref{fig:flexible_mf=1_d}).
To satisfy Condition~$(i)$ of \cref{th:RN03} we exploit the flexibility of $e_0$ and modify the procedure of \cref{le:cost-upper-bound} to reduce the number of costly degree-2 vertices inserted along the edges of $C_o(G)$. Since $m(f)=1$, by \cref{le:inclusion,le:three-intesecting-demanding-covering}, $G$ can have exactly two intersecting demanding 3-extrovert cycles (in which case they share leg~$e_0$).
We distinguish between the following two cases.
%In all the procedures described below we subdivide at least three distinct edges of $f$. This suffices to satisfy Condition~$(ii)$ of \cref{th:RN03} for every 2-extrovert cycle created by the subdivision vertices inserted along $C_o(G)$ (see \cref{le:2-extrovert}).

\begin{itemize}
\item \emph{Case 1: There is no intersecting demanding 3-extrovert cycle in $G$}.
If $f$ has at least four edges, we subdivide $e_0$ with $\flex(e_0)$ degree-2 vertices. We then subdivide $4 - \flex(e_0)$ distinct edges of $C_o(G) \setminus e_0$ with costly degree-2 vertices. These edges are chosen along distinct demanding 3-extrovert cycles of $D_f(G)$, if $|D_f(G)| > 0$.

If $f$ is a 3-cycle, observe that $|D_f(G)|=0$. We subdivide $e_0$ with two degree-2 vertices (one of them is costly if $\flex(e_0)=1$) while the other two (inflexible) edges of $f$ receive a costly vertex each.
% (thus complying with Properties~\textsf{P2} and \textsf{P3} of \cref{th:fixed-embedding-cost-one}).

Hence, we satisfy Condition~$(i)$ of \cref{th:RN03} by inserting $4 - \min\{4,|D_f(G)|+\flex(e_0)\}$ costly vertices.
%, which is best possible since we obtain $c(G) = |D(G)| + 4 - \min\{4,|D_f(G)|+\flex(e_0)\}$ which is the lower bound established by \cref{le:cost-lower-bound}.

\item \emph{Case 2: There are two intersecting demanding 3-extrovert cycles in $G$}.
By \cref{co:inclusion} $|D_f(G)| = 0$.
We insert $\flex(e_0)$ degree-2 vertices along $e_0$ and $4-\flex(e_0)$ costly degree-2 vertices along edges of $C_o(G) \setminus e_0$ chosen as in the proof of \cref{le:cost-upper-bound} so to satisfy Condition~$(iii)$ of \cref{th:RN03} for every intersecting demanding 3-extrovert cycle of $G$.
% By Properties $(c)$ and $(d)$ of \cref{le:inclusion} the union of any two demanding 3-extrovert cycles contains all edges of $f$ with the possible exception of a common leg. Since a demanding 3-extrovert cycle cannot have flexible edges, the flexible edge $e_0$ of $f$ can only be a leg shared by intersecting demanding 3-extrovert cycles. Since a leg cannot be shared by three demanding 3-extrovert cycles, $G$ can have only two intersecting demanding 3-extrovert cycles sharing the leg $e_0$.
% Let $C_1$ and $C_2$ be the two intersecting demanding 3-extrovert cycles of $G$.
% In order to satisfy Condition~$(iii)$ of \cref{th:RN03} for both $C_1$ and $C_2$, we insert a costly degree-2 vertex along an edge $e$ of $C_o(G) \setminus e_0$ shared by $C_1$ and $C_2$ (such edge $e$ always exists because $C_1$ and $C_2$ are intersecting and $G$ is cubic).
% Then, we insert $\flex(e_0)$ degree-2 vertices along $e_0$ and we insert the remaining degree-2 vertices needed to satisfy Condition~$(i)$ of \cref{th:RN03} along distinct edges of $C_o(G) \setminus \{e_0,e\}$.

Hence, also in this case, we satisfy Condition~$(i)$ of \cref{th:RN03} by inserting
 $4 - \flex(e_0) = 4 - \min\{4,|D_f(G)|+\flex(e_0)\}$ costly degree-2 vertices along $C_o(G)$.

\end{itemize}

The procedure above guarantees that Condition~$(i)$ of \cref{th:RN03} is satisfied.
Since at least three edges of $C_o(G)$ are subdivided, Condition~$(ii)$ of \cref{th:RN03} is satisfied for every 2-extrovert cycle created by the subdivision vertices inserted along $C_o(G)$ (see \cref{le:2-extrovert}).
Since at least two edges of $C_o(G)$ have been subdivided, by \cref{le:demanding-non-demanding-intersecting} the non-degenerate non-demanding 3-extrovert cycles that still do not satisfy Condition~$(iii)$ of \cref{th:RN03} are not intersecting.
This latter property allows us to subdivide the remaining edges of $G$ as in the proof of \cref{le:cost-upper-bound} to obtain a graph $\rect{G}$ that satisfies Conditions~$(i)$--$(iii)$ of \cref{th:RN03}. The orthogonal representation $H$ obtained from $\rect{G}$ satisfies Properties~\textsf{P1}--\textsf{P3} of \cref{th:fixed-embedding-cost-one}.

Hence, the total number of costly bends of $H$ is $|D(G)| + 4 - \min\{4, |D_f(G)| + \flex(e_0)\}$, which is minimum since it coincides with the lower bound stated by \cref{le:cost-lower-bound}. It follows that Equation~\ref{eq:fixed-embedding-cost} holds with $\flex(f) = \flex(e_0)$.
\end{proof}

%To complete the proof that Equation~\ref{eq:fixed-embedding-cost} holds, we need some additional definitions.
We now consider the case when $m(f)=1$ and the flexible edge $e_0$ of $C_o(G)$ is such that $\flex(e_0) \geq 3$.
By \cref{le:2-extrovert} the subdivision of $e_0$ creates a 2-extrovert cycle, denoted as $C^*_0$, that contains all edges $C_o(G)$ but the edge $e_0$. If we subdivided $e_0$ with $\flex(e_0)$ degree-2 vertices we could satisfy Condition~$(i)$ of \cref{th:RN03} by subdividing less than three distinct edges of $C_o(G)$. As a consequence Condition~$(ii)$ of \cref{th:RN03} may not be satisfied for $C^*_0$ by the degree-2 vertices inserted along $C_o(G)$. Also, Condition~$(iii)$ of \cref{th:RN03} may not be satisfied for some degenerate 3-extrovert cycles.
To handle these cases, we introduce the following concepts.

Let $e_0=(u,v)$ be a flexible edge of $f$ and let $f'$ be the face that shares $e_0$ with $f$. The boundary of $f'$ consists of edge $e_0$ and of a path $\Pi_{e_0}$ between $u$ and $v$ such that $e_0 \not\in \Pi_{e_0}$. Since $G$ is triconnected, no edge of $\Pi_{e_0}$ is an edge of the boundary of $f$. We call $\Pi_{e_0}$ the \emph{mirror path} of $e_0$; see, for example, \cref{fi:illustration_of_xe_mf}. The \emph{co-flexibility of $e_0$}, denoted as $\coflex(e_0)$, is the sum of the flexibilities of the edges in $\Pi_{e_0}$ plus the number of cycles of $D(G) \setminus D_f(G)$ that share some edges with~$\Pi_{e_0}$. For example, the mirror path $\Pi_{e_0}$ in \cref{fi:illustration_of_xe_mf} contains two flexible edges, one with flexibility two and the other with flexibility three; also there are two non-degenerate demanding 3-extrovert cycles that share edges with $\Pi_{e_0}$ and that do not have edges along $f$. Hence we have $\coflex(e_0) = 7$.
Let $v$ be a vertex of $C_o(G)$ and let $e_0$ and $e_1$ be the edges of $C_o(G)$ incident to $v$. Also let $e_2$ be the third edge incident to $v$.
The \emph{mirror path} of $v$ is denoted $\Pi_{v}$ and defined as $\Pi_{v} = \Pi_{e_0} \cup \Pi_{e_1} \setminus e_2$. The \emph{co-flexibility of $v$}, denoted as $\coflex(v)$, is the sum of the flexibilities of the edges in $\Pi_{v}$ plus the number of cycles of $D(G) \setminus D_f(G)$ that share some edges with~$\Pi_{v}$.
For example, in \cref{fi:illustration_of_xe_mf} the mirror path of $v$ is highlighted in blue, and the co-flexibility of $v$ is $\coflex(v) = 9$.

\begin{property}\label{pr:coflex}
Let $e_0=(u,v)$ be an edge of $C_o(G)$, let $e_u \neq e_0$ be the edge of $C_o(G)$ incident to $u$, and suppose that $\coflex(e_0) > 0$. We have that $\coflex(u) + \coflex(v) > 0$. Also, if $e_u$ belongs to a cycle of $D_f(G)$, then $\coflex(u) > 0$.
\end{property}
\begin{proof}
Each edge of $\Pi_{e_0}$ is also an edge of either $\Pi_u$ or $\Pi_v$ or both; therefore since $\coflex(e_0) > 0$ we have $\coflex(u) + \coflex(v) > 0$. If $e_u$ is an edge of a cycle in $D_f(G)$, the edge of $\Pi_{e_0}$ incident to $u$ is not flexible. Since all other edges of $\Pi_{e_0}$ also belong to $\Pi_u$ and since $\coflex(e_0) > 0$ we have $\coflex(u) > 0$.
\end{proof}

\begin{lemma}\label{le:fixed-embedding-min-bend-mf1-eq-3}
	Let $m(f)=1$, let $e_0=(u,v)$ be the flexible edge of $f$, and let $\flex(e_0) = 3$. Equation~\ref{eq:fixed-embedding-cost} holds with $\flex(f) = \min\{\flex(e_0),\coflex(e_0)+2\}$.
	Also, there exists a cost-minimum embedding-preserving orthogonal representation of $G$ that satisfies Properties~\textsf{P1}, \textsf{P2}, and \textsf{P3} of \cref{th:fixed-embedding-cost-one}.
\end{lemma}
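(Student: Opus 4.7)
The proof splits into an upper bound (a construction achieving cost $|D(G)| + 4 - \min\{4, |D_f(G)| + \flex(f)\}$) and a lower bound (showing no representation does better than the upper bound from \cref{le:cost-lower-bound} when $\coflex(e_0)=0$). The general strategy follows \cref{le:cost-upper-bound,le:fixed-embedding-min-bend-mf1-leq-2}: construct a subdivision $\rect{G}$ of $G$ that satisfies the three conditions of \cref{th:RN03}, run \textsf{NoBendAlg} on $\rect{G}$, and invert to get the orthogonal representation $H$. The novelty is the delicate interaction between $\flex(e_0)=3$, the 2-extrovert cycle $C^*_0 = C_o(G\setminus e_0)$ potentially created by subdividing $e_0$ (\cref{le:2-extrovert}), and the angle budget of the face $f'$ sharing $e_0$ with $f$.

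For the upper bound I do a case analysis on $\coflex(e_0)$. If $\coflex(e_0)\ge 1$, I subdivide $e_0$ with three non-costly degree-2 vertices, giving three of the four degree-2 vertices required by Condition~$(i)$ along $C_o(G)$; the fourth is taken either from a $D_f(G)$ cycle (if $|D_f(G)|\ge 1$) or as an additional costly subdivision on an edge of $C_o(G)\setminus e_0$. The positive co-flexibility lets me handle $C^*_0$ (when it is 2-extrovert) by using one non-costly subdivision on a flexible edge of $\Pi_{e_0}$, or by relocating a costly subdivision of a cycle in $D(G)\setminus D_f(G)$ onto its intersection with $\Pi_{e_0}$ so the $f'$-angle budget dictated by Property~\textsf{H2} absorbs the three bends turning into $f'$. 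The remaining 3-extrovert cycles are covered as in \cref{le:cost-upper-bound}. If $\coflex(e_0)=0$, I use only two non-costly bends on $e_0$ and apply the construction of \cref{le:fixed-embedding-min-bend-mf1-leq-2}; this yields a good subdivision whose cost matches $|D(G)|+4-\min\{4,|D_f(G)|+2\}$.

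For the lower bound, the bound from \cref{le:cost-lower-bound} already handles the case $\coflex(e_0)\ge 1$. The interesting case is $\coflex(e_0)=0$, where I must prove that at most two units of $\flex(e_0)$ can actually be used to save costly bends. The argument combines two observations: (a) if $e_0$ carries three bends (all turning into $f'$), Property~\textsf{H2} applied to $f'$ requires $3$ extra $270^\circ$ angles in $f'$, but with $\coflex(e_0)=0$ no edge of $\Pi_{e_0}$ is flexible and no demanding cycle of $D(G)\setminus D_f(G)$ is available to absorb the required costly bend on $\Pi_{e_0}$; (b) when $C^*_0$ is 2-extrovert in $\rect{G}$ (which happens whenever $G\setminus e_0$ is biconnected with $C_o(G\setminus e_0)$ having only the two legs at the endpoints of $e_0$), Condition~$(ii)$ of \cref{th:RN03} forces two degree-2 vertices on the inflexible boundary $C_o(G)\setminus e_0$, negating any saving from the third bend on $e_0$. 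Either way, at most two of the three flexibility units of $e_0$ translate into a non-costly contribution to Condition~$(i)$, which gives the desired lower bound.

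The main obstacle will be formalising item~(a) of the lower bound: carrying out the angle-balance argument on $f'$ in a way that cleanly shows that using three bends on $e_0$ when $\coflex(e_0)=0$ must be paid for by an extra costly bend on $C_o(G)\setminus e_0$ or inside $f'$, regardless of whether $C^*_0$ happens to be 2-extrovert. I expect this argument to need a careful case split on the placement of the third bend (into $f$ versus into $f'$) and on whether $G\setminus e_0$ remains biconnected, mirroring the preliminary reasoning used in \cref{le:2-extrovert} and the mirror-path bookkeeping introduced above \cref{pr:coflex}.
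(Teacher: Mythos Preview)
Your overall architecture (upper bound by construction, lower bound via \cref{le:cost-lower-bound} plus an extra argument when $\coflex(e_0)=0$) matches the paper, but two points need correction.

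\textbf{Lower bound.} The angle-balance argument on $f'$ that you flag as the ``main obstacle'' is unnecessary, and the worry about ``whether $C^*_0$ happens to be 2-extrovert'' is unfounded. Since $G$ is triconnected cubic and $e_0 \in C_o(G)$, \cref{le:2-extrovert} says that whenever $e_0$ is subdivided, $C^*_0 = C_o(G\setminus e_0)$ is \emph{always} a 2-extrovert cycle of $\rect{G}$; there is no conditional about biconnectivity to check. So your observation~(b) alone already does the job: if $e_0$ receives three subdivision vertices, Condition~$(ii)$ of \cref{th:RN03} demands two degree-2 vertices on $C^*_0$; with $\coflex(e_0)=0$ and $|D_f(G)|\le 1$ (the only case where the extra unit would matter), any second such vertex---whether on $\Pi_{e_0}$ or on a second edge of $C_o(G)\setminus e_0$---is costly and not reusable for a demanding cycle, so the third bend on $e_0$ buys nothing. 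This is precisely the paper's argument; drop~(a). The paper also separately checks the subcase where two intersecting demanding 3-extrovert cycles share the leg $e_0$ (then $|D_f(G)|=0$ and $\coflex(e_0)=0$ automatically), which your $\coflex(e_0)=0$ branch absorbs correctly.

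\textbf{Upper bound, $\coflex(e_0)\ge 1$.} After you put three vertices on $e_0$ and one on another edge $e''$ of $C_o(G)$, only \emph{two} edges of $C_o(G)$ are subdivided; you cannot simply defer to \cref{le:cost-upper-bound}, whose treatment of degenerate 3-extrovert cycles relies on at least three distinct external edges being subdivided. In particular the degenerate cycles $C_o(G\setminus u)$ and $C_o(G\setminus v)$ avoid $e_0$, and one of them may also avoid $e''$. The paper fixes this by (i) placing one subdivision vertex on $\Pi_{e_0}$ (using $\coflex(e_0)\ge 1$) and (ii) choosing $e''$ via a small case split on $|D_f(G)|$ and on whether $\coflex(u)$, $\coflex(v)$ are positive, invoking \cref{pr:coflex} to guarantee that each degenerate cycle through $u$ or $v$ is covered either by $e''$ or by the vertex on $\Pi_{e_0}\cap\Pi_u$ (resp.\ $\Pi_{e_0}\cap\Pi_v$). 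Your write-up needs this piece.
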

\begin{proof}
We modify the procedure of \cref{le:cost-upper-bound} to reduce the number of costly degree-2 vertices inserted along the edges of $C_o(G)$. Since $m(f)=1$, by \cref{le:inclusion,le:three-intesecting-demanding-covering}, $G$ can have exactly two intersecting demanding 3-extrovert cycles (in which case they share leg~$e_0$).

Edge $e_0$ is subdivided at least twice. To determine whether $e_0$ can be subdivided three times, we consider the co-flexibility of $e_0$, $u$, and $v$. Also, the choice of the other edge(s) of $C_o(G)$ to be subdivided depends on whether $e_0$ is a leg shared by two intersecting demanding 3-extrovert cycles of $G$ or not.
Namely, since $m(f)=1$, by \cref{le:inclusion}, $G$ can have at most two intersecting demanding 3-extrovert cycles (in which case $e_0$ is a leg for both cycles).

We modify the procedure of \cref{le:cost-upper-bound} subdividing the edges of~$C_o(G)$ as follows.

\begin{itemize}
\item \emph{Case 1: There is no intersecting demanding 3-extrovert cycle in $G$}.

Let $C^*_0$ be the cycle consisting of the path $C_o(G) \setminus e_0$ and of the mirror path $\Pi_{e_0}$ of $e_0$. Observe that, by \cref{le:2-extrovert}, when $e_0$ is subdivided $C^*_0$ becomes a 2-extrovert cycle.
Refer to \cref{fig:flexible_mf=1_g,fig:flexible_mf=1_h}. There are two subcases:
	\begin{figure}[tb]
		\centering
		\hfill
    	\subfloat[]{\includegraphics[width=0.27\columnwidth]{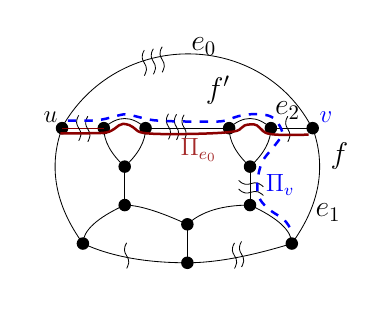}\label{fi:illustration_of_xe_mf}}
		%\subfloat[]{\label{fig:flexible_mf=1_d}\includegraphics[width=0.27\columnwidth]{}}
		\hfill
		% \subfloat[]{\label{fig:flexible_mf=1_e}\includegraphics[width=0.18\columnwidth]{}}
		% \hfil
		% \subfloat[]{\label{fig:flexible_mf=1_f}\includegraphics[width=0.18\columnwidth]{}}
		% \hfil
        \subfloat[]{\label{fig:flexible_mf=1_h}\includegraphics[width=0.27\columnwidth]{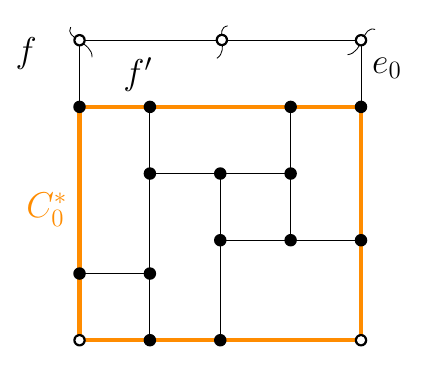}}
		\hfill
		\subfloat[]{\label{fig:flexible_mf=1_g}\includegraphics[width=0.27\columnwidth]{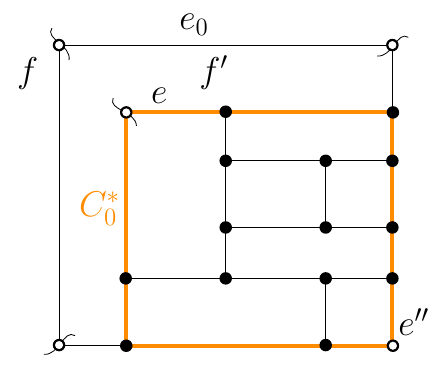}}
		\hfill
		
		% \subfloat[]{\label{fig:flexible_mf=1_i}\includegraphics[width=0.18\columnwidth]{}}
		% \\
		% \hfill
		% \subfloat[]{\label{fig:flexible_mf=1_PeitherC_a}\includegraphics[width=0.28\columnwidth]{}}
		% \hfill
		% \subfloat[]{\label{fig:flexible_mf=1_PeitherC_b}\includegraphics[width=0.28\columnwidth]{}}
		% \hfill
		% \subfloat[]{\label{fig:flexible_mf=1_PeitherC_c}\includegraphics[width=0.28\columnwidth]{}}
		% \hfill
		\caption{(a)~An example of co-flexibility: $e_0$ is a flexible edge such that $\flex(e_0)=3$ and $\coflex(e_0) = 7$; $\coflex(v)=9$. (b-c) Illustration for the proof of \cref{le:fixed-embedding-min-bend-mf1-eq-3}.}\label{fig:flexible_mf=1_def}
	\end{figure}
	\begin{itemize}		

	\item If $\coflex (e_0)=0$, we satisfy Condition~$(i)$ of \cref{th:RN03} by inserting two degree-2 vertices along $e_0$ and two costly degree-2 vertices along two distinct edges of $C_o(G) \setminus e_0$, possibly chosen along two distinct 3-extrovert cycles of $D_f(G)$. Since every degenerate 3-extrovert cycle includes all edges of $C_o(G)$ except two, Condition~$(iii)$ of \cref{th:RN03} is satisfied for all degenerate 3-extrovert cycles of~$G$.

	%subdividing $e_0$ with three degree-2 vertices may not yield a cost-minimum orthogonal representation of $G$. Indeed, satisfying Condition~$(ii)$ of \cref{th:RN03} for $C_0^*$ would require an additional degree-2 vertex inserted along $C_0^*$ (either along the mirror path of $e_0$ or along $C_o(G) \setminus e_0$). If $|D_f(G)| = 0$ adding this degree-2 vertex increments the cost of an orthogonal representation $H$ of $G$ by one unit (recall that $\coflex(e_0) = 0$). It follows that even if $\flex(e_0) = 3$ a cost-minimum orthogonal representation of $G$ with $|D_f(G)| = 0$ satisfies Condition~$(i)$ of \cref{th:RN03} if we insert two degree-2 vertices along $e_0$ and two costly degree-2 vertices along distinct edges of $C_o(G) \setminus e_0$.
	See, for example, \cref{fig:flexible_mf=1_h}.

	The procedure above guarantees that Condition~$(i)$ of \cref{th:RN03} is satisfied.
	Since at least three edges of $C_o(G)$ are subdivided, Condition~$(ii)$ of \cref{th:RN03} is satisfied for every 2-extrovert cycle created by the subdivision vertices inserted along $C_o(G)$ (see \cref{le:2-extrovert}).
	Since at least two edges of $C_o(G)$ have been subdivided, by \cref{le:demanding-non-demanding-intersecting} the non-degenerate non-demanding 3-extrovert cycles that still do not satisfy Condition~$(iii)$ of \cref{th:RN03} are not intersecting.
	This latter property allows us to subdivide the remaining edges of $G$ as in the proof of \cref{le:cost-upper-bound} to obtain a graph $\rect{G}$ that satisfies Conditions~$(i)$, $(ii)$ and $(iii)$ of \cref{th:RN03}. Also, the orthogonal representation $H$ obtained from $\rect{G}$ satisfies Properties~\textsf{P1}, \textsf{P2}, and \textsf{P3} of \cref{th:fixed-embedding-cost-one}.
	The total number of costly bends of $H$ is $|D(G)| + 4 - \min\{4,|D_f(G)| + \flex(f)\}$, where $\flex(f)= \min\{\flex(e_0), \coflex(e_0)+2\} = 2$. To show that this is minimum, consider the lower bound stated by \cref{le:cost-lower-bound}, which is obtained by subdividing three times $e_0$. Note that if $|D_f(G)| \geq 2$ subdividing $e_0$ two or three times gives rise to a cost that matches the lower bound of \cref{le:cost-lower-bound}.
	If $|D_f(G)| < 2$, if we subdivided $e_0$ three times, the 2-extrovert cycle $C^*_0$ would require one extra degree-2 vertex to satisfy Condition~$(ii)$ of \cref{th:RN03}, which would be costly because $\coflex(e_0) = 0$ and $|D_f(G)| < 2$.
    It follows that the lower bound stated by \cref{le:cost-lower-bound} cannot be matched. Since the number of costly degree-2 vertices of $\rect{G}$ equals this lower bound plus one, $H$ is a cost-minimum orthogonal representation of $G$.

	\item If $\coflex(e_0) \geq 1$, path $\Pi_{e_0}$ contains either a flexible edge $e$, or an edge $e'$ of a demanding 3-extrovert cycle of $D(G) \setminus D_f(G)$, or both. We subdivide either the flexible edge $e$ or the edge $e'$ of $\Pi_{e_0}$. Also, to satisfy Condition~$(i)$ of \cref{th:RN03} we insert three degree-2 vertices along $e_0$ and one degree-2 vertex along an edge $e''$ of $C_o(G) \setminus e_0$ chosen as follows.

    If $D_f(G)$ is not empty, we pick $e''$ along a demanding 3-extrovert cycle of $D_f(G)$.
    Indeed, if $e''$ is not adjacent to $e_0$, for all degenerate 3-extrovert cycles Condition~$(iii)$ of \cref{th:RN03} is satisfied by either the subdivision vertices along $e_0$ or along $e''$.
    If $e''$ and $e_0$ are incident to a common vertex, say $u$, by~\cref{pr:coflex} we have that $\coflex(u) > 0$. Hence, Condition~$(iii)$ of \cref{th:RN03} is satisfied for the degenerate 3-extrovert cycle $C_o(G \setminus u)$ by the subdivision vertex on $\Pi_{e_0} \cap \Pi_u$; all other degenerate 3-extrovert cycles of $G$ contain either $e_0$ or $e''$ and hence also satisfy Condition~$(iii)$ of \cref{th:RN03}.

    Suppose now that $|D_f(G)| = 0$. Since $\coflex(e_0) > 0$, by~\cref{pr:coflex} $\coflex(u) + \coflex(v) > 0$. If both $\coflex(u) > 0$ and $\coflex(v) > 0$ we choose $e''$ as any edge of $C_o(G)$ distinct from $e_0$. Assume now that $\coflex(u) = 0$ (the argument when $\coflex(v)=0$ is symmetric).We choose $e''$ as the edge of $C_o(G)$ sharing $v$ with $e_0$.

    We show that all degenerate 3-extrovert cycles of $G$ satisfy Condition~$(iii)$ of \cref{th:RN03}.
    If $\coflex(u) > 0$ and $\coflex(v) > 0$, both the degenerate 3-extrovert cycles $C_o(G \setminus u)$ and $C_o(G \setminus v)$ contain some subdivision vertex on $\Pi_{u}$ and $\Pi_{v}$. All other degenerate 3-extrovert cycles of $G$ contain $e_0$ and hence they also satisfy Condition~$(iii)$ of \cref{th:RN03}. If one of $\{\coflex(u),\coflex(v)\}$ is zero, say $\coflex(u)$, the subdivision vertex on $e''$ satisfies Condition~$(iii)$ of \cref{th:RN03} for the degenerate 3-extrovert cycle $C_o(G \setminus u)$, while the subdivision vertex along $\Pi_{e_0} \cap \Pi_v$ satisfies Condition~$(iii)$ of \cref{th:RN03} for the degenerate 3-extrovert cycle $C_o(G \setminus v)$. All other degenerate 3-extrovert cycles of $G$ contain either the subdivision vertices along $e_0$ or the subdivision vertex along $e''$ and so they also satisfy Condition~$(iii)$ of \cref{th:RN03}.

    Consider now the 2-extrovert cycle $C^*_0$. This cycle satisfies Condition~$(ii)$ of \cref{th:RN03} by means of the degree-2 vertex along $\Pi_{e_0}$ and by the degree-2 vertex along $e''$. See, for example, \cref{fig:flexible_mf=1_g}, where $\Pi_{e_0}$ contains edge $e$ and where $e''$ is placed along a 3-extrovert cycle of $D_f(G)$. For any other 2-extrovert cycle Condition~$(ii)$ of \cref{th:RN03} is satisfied by the three degree-2 vertices that subdivide $e_0$.

	The procedure above guarantees that Conditions~$(i)$ and $(ii)$ of \cref{th:RN03} are satisfied.
	Since at least two edges of $C_o(G)$ have been subdivided, by \cref{le:demanding-non-demanding-intersecting} the non-degenerate non-demanding 3-extrovert cycles that still do not satisfy Condition~$(iii)$ of \cref{th:RN03} are not intersecting.
	This property allows us to subdivide the remaining edges of $G$ as in the proof of \cref{le:cost-upper-bound} to obtain a graph $\rect{G}$ that satisfies Conditions~$(i)$, $(ii)$ and $(iii)$ of \cref{th:RN03}. Also, the orthogonal representation $H$ obtained from $\rect{G}$ satisfies Properties~\textsf{P1}, \textsf{P2}, and \textsf{P3} of \cref{th:fixed-embedding-cost-one}.
    It follows that, if $\flex(e_0)=3$ and $\coflex(e_0) \geq 1$, the total number of costly bends of $H$ is $|D(G)| + 4 - \min\{4,|D_f(G)| + \flex(f)\}$ where
    $\flex(f) = \min\{\flex(e_0),\coflex(e_0)+2\} = 3$, which is minimum since it coincides with the lower bound stated by \cref{le:cost-lower-bound}.
    \end{itemize}

\item \emph{Case 2: There are two intersecting demanding 3-extrovert cycles in $G$}.
Let $C_1$ and $C_2$ be the two intersecting demanding 3-extrovert cycles of $G$ sharing leg $e_0$. Observe that by \cref{co:inclusion} $|D_f(G)| = 0$. Also observe that $\coflex(e_0) = 0$: All edges of the mirror path $\Pi_{e_0}$ of $e_0$ either belong to $C_1$ or to $C_2$ and thus $\Pi_{e_0}$ cannot include any flexible edge or any edge of a demanding 3-extrovert cycle different from either $C_1$ or $C_2$. In order to satisfy Condition~$(i)$ of \cref{th:RN03} we insert two degree-2 subdivision vertices along $e_0$ and two costly degree-2 subdivision vertices along two distinct edges of $C_o(G) \setminus e_0$, one chosen along $C_1$ and the other along $C_2$. With an analogous procedure as in \emph{Case~1}, we obtain a good plane graph $\rect{G}$ whose number of
 of costly degree-2 vertices is $|D(G)| + 2 = |D(G)| + 4 - \min\{4,|D_f(G)| + \flex(f)\}$, where $\flex(f) = \min\{\flex(e_0), \coflex(e_0)+2\} = 2$ and an embedding-preserving orthogonal representation $H$ of $G$ that satisfies Properties~\textsf{P1}, \textsf{P2}, and \textsf{P3} of \cref{th:fixed-embedding-cost-one}.

Concerning minimality, by \cref{le:cost-lower-bound} at least one costly subdivision degree-2 vertex must be inserted along $C_o(G)$ because $\flex(e_0) = 3$. If $C_o(G) \cap C_1 \cap C_2 = \emptyset$ the lower bound cannot be matched since each of $C_1$ and $C_2$ needs a distinct subdivision vertex to satisfy Condition~$(iii)$ of \cref{th:RN03}.
If, otherwise, there exists an edge $e \in C_o(G) \cap C_1 \cap C_2$, inserting three subdivision vertices along $e_0$ and one subdivision vertex along $e$ would satisfy Condition~$(i)$ of \cref{th:RN03} and Condition~$(iii)$ of \cref{th:RN03} for $C_1$ or $C_2$. However, Condition~$(ii)$ of \cref{th:RN03} would not be satisfied for the 2-extrovert cycle $C^*_0$. Cycle $C^*_0$ would require one extra degree-2 vertex, which would be costly because $\coflex(e_0) = 0$ and $|D_f(G)| = 0$. Again, the lower bound stated by \cref{le:cost-lower-bound} cannot be matched.
Since the number of costly degree-2 vertices of $\rect{G}$ equals this lower bound plus one, we have that $H$ is a cost-minimum orthogonal representation of $G$.
\end{itemize}

Summarizing, when $m(f)=1$ and $\flex(e_0) = 3$ Equation~\ref{eq:fixed-embedding-cost} holds by setting $\flex(f)=\flex(e_0) = \min\{\flex(e_0),\coflex(e_0)+2\}$ and $G$ admits a cost-minimum embedding-preserving orthogonal representation that satisfies Properties~\textsf{P1}, \textsf{P2}, and \textsf{P3} of \cref{th:fixed-embedding-cost-one}.
\end{proof}

\begin{lemma}\label{le:fixed-embedding-min-bend-mf1-eq-4}
	Let $m(f)=1$, let $e_0$ be the flexible edge of $f$, and let $\flex(e_0) = 4$. Equation~\ref{eq:fixed-embedding-cost} holds with $\flex(f) = \min\{\flex(e_0),\coflex(e_0)+2\}$ if both $\coflex(u) > 0$ and $\coflex(v) > 0$, or with $\flex(f)= \min\{\flex(e_0)-1, \coflex(e_0)+2\}$ if $\coflex(u) = 0$ or $\coflex(v) = 0$.
	Also, there exists a cost-minimum embedding-preserving orthogonal representation of $G$ that satisfies Properties~\textsf{P1}, \textsf{P2}, and \textsf{P3} of \cref{th:fixed-embedding-cost-one}.
\end{lemma}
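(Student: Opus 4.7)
I would mirror the structure of the proof of \cref{le:fixed-embedding-min-bend-mf1-eq-3}, enriching the case analysis to exploit the fact that now $\flex(e_0)=4$ allows up to four non-costly degree-2 vertices along $e_0$. As in the previous lemmas, the plan is to build a good plane graph $\rect G$ by subdividing the edges of $G$ with as few costly vertices as possible, then apply \textsf{NoBendAlg} and invert the construction to get an orthogonal representation $H$ of $G$ satisfying Properties~\textsf{P1}--\textsf{P3} of \cref{th:fixed-embedding-cost-one}. The subdivisions of the internal edges are done exactly as in \cref{le:cost-upper-bound}; the novelty is only in how to subdivide $C_o(G)$ together with the mirror path $\Pi_{e_0}$.

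I would split in two top-level cases: Case~1, where $G$ has no pair of non-degenerate intersecting demanding 3-extrovert cycles; and Case~2, where $G$ has two such cycles (which by \cref{co:inclusion} forces $|D_f(G)|=0$ and, as in the proof of \cref{le:fixed-embedding-min-bend-mf1-eq-3}, share $e_0$ as a common leg so that $\coflex(e_0)=\coflex(u)=\coflex(v)=0$). Within Case~1 I would further split according to whether $\coflex(u)>0$ and $\coflex(v)>0$ (Subcase~1a) or at least one of them equals zero (Subcase~1b), and, inside each, according to whether $\coflex(e_0)\ge 2$, $\coflex(e_0)=1$, or $\coflex(e_0)=0$. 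In Subcase~1a with $\coflex(e_0)\ge 2$ I would subdivide $e_0$ four times; by \cref{le:2-extrovert} this creates the 2-extrovert cycle $C^*_0=C_o(G\setminus e_0)$ that consists of $(C_o(G)\setminus e_0)\cup\Pi_{e_0}$ and requires two additional degree-2 vertices by Condition~$(ii)$ of \cref{th:RN03}; these I place on two non-costly edges of $\Pi_{e_0}$, choosing one on $\Pi_u$ and one on $\Pi_v$ (possible since $\coflex(u),\coflex(v)>0$ and $\Pi_{e_0}\subseteq\Pi_u\cap\Pi_v$), which at the same time takes care of Condition~$(iii)$ for the only two degenerate cycles not containing $e_0$, namely $C_o(G\setminus u)$ and $C_o(G\setminus v)$. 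This yields $\flex(f)=4=\min\{\flex(e_0),\coflex(e_0)+2\}$. When $\coflex(e_0)=1$ I subdivide $e_0$ three times and use the unique non-costly degree-2 vertex of $\Pi_{e_0}$ plus one further subdivision on $C_o(G)\setminus e_0$, chosen along a cycle of $D_f(G)$ if any is available, exactly as in the corresponding branch of \cref{le:fixed-embedding-min-bend-mf1-eq-3}; this gives $\flex(f)=3$. When $\coflex(e_0)=0$ I apply the strategy of \cref{le:fixed-embedding-min-bend-mf1-leq-2}, subdividing $e_0$ twice and placing two costly vertices on two distinct edges of $C_o(G)\setminus e_0$, obtaining $\flex(f)=2$. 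In Subcase~1b (say $\coflex(u)=0$) I would proceed analogously, but with one subdivision of $e_0$ replaced by a subdivision of an edge of $C_o(G)\setminus e_0$ incident to $u$ (taking care of $C_o(G\setminus u)$), paying the corresponding unit of cost. This explains the reduction from $\flex(e_0)$ to $\flex(e_0)-1$ in the formula. Case~2 is handled as in the Case~2 of \cref{le:fixed-embedding-min-bend-mf1-eq-3}: $e_0$ is subdivided twice and two costly vertices are placed along $C_o(G)\setminus e_0$, one on $C_1$ and one on $C_2$, giving $\flex(f)=2$. In every case, the two or more subdivisions of $C_o(G)$ together with \cref{le:demanding-non-demanding-intersecting} ensure that the non-degenerate non-demanding 3-extrovert cycles not yet taken care of are pairwise non-intersecting, so that the internal subdivisions of \cref{le:cost-upper-bound} complete the construction of $\rect G$.

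For the matching lower bound I would invoke \cref{le:cost-lower-bound}, which gives $|D(G)|+4-\min\{4,|D_f(G)|+\flex(e_0)\}$, and observe that this bound is tight only in Subcase~1a when $\coflex(e_0)\ge 2$. In all other subcases I would argue, following the pattern used in \cref{le:fixed-embedding-min-bend-mf1-eq-3}, that subdividing $e_0$ one extra time beyond what my construction does is not profitable: either it forces an additional costly vertex on $C^*_0$ (because $\coflex(e_0)$ is too small to supply two non-costly vertices to satisfy Condition~$(ii)$ for $C^*_0$), or it leaves Condition~$(iii)$ unsatisfied for $C_o(G\setminus u)$ or $C_o(G\setminus v)$ (when $\coflex(u)$ or $\coflex(v)$ equals zero), so that the corresponding degenerate cycle demands an extra costly vertex. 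Both situations amount to paying the additional unit that makes the upper bound of my construction tight.

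The main obstacle I expect is the bookkeeping in Subcase~1a with $\coflex(e_0)\ge 2$: I must ensure that the two non-costly subdivisions of $\Pi_{e_0}$ can always be chosen so that one lies on $\Pi_u\cap\Pi_{e_0}$ and the other on $\Pi_v\cap\Pi_{e_0}$, handling the corner case in which the two candidate edges coincide or in which $\coflex(e_0)$ is contributed entirely by edges that lie in $\Pi_u\cap\Pi_v$ but not separately on $\Pi_u$ or $\Pi_v$; this calls for a small combinatorial argument on the intersection of $\Pi_u$, $\Pi_v$, and $\Pi_{e_0}$ inside the face adjacent to $f$ through $e_0$. Once this is established, extending the lower-bound argument of \cref{le:fixed-embedding-min-bend-mf1-eq-3} to account for $C_o(G\setminus u)$ and $C_o(G\setminus v)$ when the corresponding co-flexibility is zero is routine.
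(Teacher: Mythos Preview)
Your overall plan mirrors the paper's proof closely: the same Case~1/Case~2 split on intersecting demanding cycles, the same sub-split on $\coflex(e_0)$ and $\coflex(u),\coflex(v)$, the same use of $C^*_0$ and of the degenerate cycles $C_o(G\setminus u)$, $C_o(G\setminus v)$ to block further savings. So the strategy is right.

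There is, however, a concrete slip in Subcase~1b that would break the argument as written. When $\coflex(u)=0$ you propose to replace one subdivision of $e_0$ by a subdivision of the edge of $C_o(G)\setminus e_0$ \emph{incident to~$u$}, claiming this ``takes care of $C_o(G\setminus u)$''. It does not: the edge $e_u$ of $C_o(G)$ incident to $u$ is a \emph{leg} of the degenerate cycle $C_o(G\setminus u)$, not an edge of it, so a subdivision vertex on $e_u$ does nothing for Condition~$(iii)$ of \cref{th:RN03} on that cycle. What you actually need (and what the paper does) is to place the extra subdivision on an edge of $C_o(G)\setminus\{e_0,e_u\}$, preferably on a cycle of $D_f(G)$ when one exists; that edge \emph{does} lie on $C_o(G\setminus u)$. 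A related minor slip is the inclusion $\Pi_{e_0}\subseteq\Pi_u\cap\Pi_v$, which is false: by definition $\Pi_u=\Pi_{e_0}\cup\Pi_{e_u}\setminus\{e'_u\}$, so the first edge $e'_u$ of $\Pi_{e_0}$ at $u$ is excluded from $\Pi_u$ (and symmetrically at $v$). This actually simplifies your ``main obstacle'': because $\Pi_{e_0}\setminus\Pi_u$ and $\Pi_{e_0}\setminus\Pi_v$ each consist of a single edge, \emph{any} two distinct subdivision edges on $\Pi_{e_0}$ automatically give at least one on $\Pi_u$ and at least one on $\Pi_v$; no careful selection is needed.

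One further point: your lower-bound discussion is too coarse. Whether the bound of \cref{le:cost-lower-bound} is matched depends not only on $\coflex(e_0)$ and $\coflex(u),\coflex(v)$ but also on $|D_f(G)|$ (e.g., when $\coflex(e_0)=1$ and $|D_f(G)|\ge 1$ the construction with three subdivisions of $e_0$ already matches the lower bound). The paper handles each of these sub-subcases separately; your sketch would need the same refinement.
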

\begin{proof}
We shall subdivide edge $e_0$ at least twice. To determine whether $e_0$ can be subdivided three or four times, we consider the co-flexibility of $e_0$, of $u$, and of $v$. Also, the choice of the other edge(s) of $C_o(G)$ to be subdivided depends on whether $e_0$ is a leg shared by two intersecting demanding 3-extrovert cycles of $G$ or not.
Namely, since $m(f)=1$, by \cref{le:inclusion}, $G$ can have at most two intersecting demanding 3-extrovert cycles (in which case $e_0$ is a leg for both cycles).

We modify the procedure of \cref{le:cost-upper-bound} subdividing the edges of~$C_o(G)$ as follows.

\begin{itemize}
\item \emph{Case 1: There is no intersecting demanding 3-extrovert cycle in $G$}.

Let $C^*_0$ be the cycle consisting of the path $C_o(G) \setminus e_0$ and of the mirror path $\Pi_{e_0}$ of $e_0$. Observe that, by \cref{le:2-extrovert}, when $e_0$ is subdivided $C^*_0$ becomes 2-extrovert.

In order to satisfy Condition~$(i)$ of \cref{th:RN03} we insert: two degree-2 vertices along $e_0$ if $\coflex(e_0) = 0$; three degree-2 vertices along $e_0$ if $\coflex(e_0) = 1$; four degree-2 vertices along $e_0$ if $\coflex(e_0) \geq 2$, $\coflex(u) > 0$, and $\coflex(v) > 0$. If the subdivision vertices along $e_0$ is less than four, the degree-2 vertices needed to satisfy Condition~$(i)$ of \cref{th:RN03} are inserted along at most two distinct edges of $C_o(G) \setminus e_0$, possibly chosen along distinct 3-extrovert cycles of $D_f(G)$.

In addition to the subdivisions above, we also subdivide some internal edge as follows. If $\coflex(e_0) = 1$, we subdivide with a degree-2 vertex an edge of $\Pi_{e_0}$ that is flexible or that belongs to an element of $D(G) \setminus D_f(G)$. If $\coflex(e_0) \geq 2$, there exist two edges of $\Pi_{e_0}$ that are flexible or that belong to two distinct elements of $D(G) \setminus D_f(G)$. We subdivide such two edges with degree-2 vertices.

If $e_0$ is subdivided by two or three degree-2 vertices, all degenerate 3-extrovert cycles of $G$ satisfy Condition~$(iii)$ of \cref{th:RN03} by the same argument as in the proof of \cref{le:fixed-embedding-min-bend-mf1-eq-3}. If $e_0$ is subdivided by four degree-2 vertices, the degenerate 3-extrovert cycles $C_o(G \setminus u)$ and $C_o(G \setminus v)$ satisfy Condition~$(iii)$ of \cref{th:RN03} since the co-flexibility of $u$ and of $v$ is at least one. All other degenerate 3-extrovert cycles of $G$ satisfy Condition~$(iii)$ of \cref{th:RN03} since they contain $e_0$.

In all cases, the 2-extrovert cycle $C^*_0$ contains two subdivision vertices, either because of the degree-2 vertices introduced along $C_o(G) \setminus e_0$ to satisfy Condition~$(i)$ of \cref{th:RN03} or because of the degree-2 vertices introduced along $\Pi_{e_0}$. Hence, Condition~$(ii)$ of \cref{th:RN03} is satisfied for~$C^*_0$. For any other 2-extrovert cycle created by subdividing edges of $C_o(G)$,  Condition~$(ii)$ of \cref{th:RN03} is satisfied by the (at least two) degree-2 vertices that subdivide $e_0$.

Consider now the intersecting non-demanding 3-extrovert cycles of $G$. If at least two edges of $C_o(G)$ are subdivided, then by \cref{le:demanding-non-demanding-intersecting} any two non-degenerate non-demanding 3-extrovert cycles that do not satisfy Condition~$(iii)$ of \cref{th:RN03} are not intersecting.
If $e_0$ is the only subdivided edge of $C_o(G)$, let $C_1$ and $C_2$ be any two non-demanding 3-extrovert cycles of $G$ that are intersecting. If $C_1$ and $C_2$ do not share the leg $e_0$, then by \cref{le:inclusion} at least one of them contains $e_0$ and, hence, satisfies Condition~$(iii)$ of \cref{th:RN03}. If $C_1$ and $C_2$ share the leg $e_0$, then every edge of $\Pi_{e_0}$ belongs either to $C_1$ or to $C_2$. Therefore, Condition~$(iii)$ of \cref{th:RN03} is satisfied for at least one of them. In conclusion, any two non-degenerate non-demanding 3-extrovert cycles that do not satisfy Condition~$(iii)$ of \cref{th:RN03} are not intersecting.
This property allows us to subdivide the remaining edges of $G$ as in the proof of \cref{le:cost-upper-bound} to obtain a graph $\rect{G}$ that satisfies Conditions~$(i)$, $(ii)$ and $(iii)$ of \cref{th:RN03}. Also, the orthogonal representation $H$ obtained from $\rect{G}$ satisfies Properties~\textsf{P1}, \textsf{P2}, and \textsf{P3} of \cref{th:fixed-embedding-cost-one}.

We prove that the number of costly bends of $H$ is minimum.
We consider the following cases:
\begin{itemize}
\item If $\coflex(e_0) \geq 2$, $\coflex(u) > 0$, and $\coflex(v) > 0$, we have
inserted $4 = \flex(e_0) = \sum_{e \in f}\flex(e)$ non-costly bends along $C_o(G)$
and therefore the number of costly bends of $H$ coincides with the lower bound of \cref{le:cost-lower-bound}. Equation~\ref{eq:fixed-embedding-cost} holds with $\flex(f) = \min\{\flex(e_0), \coflex(e_0)+2\} = 4$.

\item If $\coflex(e_0) \geq 2$ but either $\coflex(u) = 0$ or $\coflex(v) = 0$, say $\coflex(u) = 0$, we have that $e_0$ has three bends in $H$. We have two subcases: (a)~If $|D_f(G)| > 0$, the number of costly bends of $H$ coincides with the lower bound of \cref{le:cost-lower-bound}. (b)~If $|D_f(G)| = 0$, subdividing $e_0$ four times would require one degree-2 vertex along the 3-extrovert cycle $C_o(G \setminus u)$ to satisfy Condition~$(iii)$ of \cref{th:RN03}, which would be costly since $\coflex(u) = 0$. It follows that the lower bound stated by \cref{le:cost-lower-bound} cannot be matched. Since the number of costly bends of $H$ equals this lower bound plus one, we have that $H$ is a cost-minimum orthogonal representation of $G$.
In both subcases Equation~\ref{eq:fixed-embedding-cost} holds with $\flex(f) = \min\{\flex(e_0)-1, \coflex(e_0)+2\} = 3$.

\item If $\coflex(e_0) = 1$, we have that $e_0$ has three bends in $H$. We have two subcases: (a)~If $|D_f(G)| > 0$, the number of costly bends of $H$ coincides with the lower bound of \cref{le:cost-lower-bound}.
(b)~If $|D_f(G)| = 0$, subdividing $e_0$ four times would require two degree-2 vertices along the 2-extrovert cycle $C^*_0$ to satisfy Condition~$(ii)$ of \cref{th:RN03}. One of these degree-2 vertices would be costly since $\coflex(e_0) = 1$. It follows that the lower bound stated by \cref{le:cost-lower-bound} cannot be matched. Since the number of costly bends of $H$ equals this lower bound plus one, we have that $H$ is a cost-minimum orthogonal representation of $G$.
In both subcases Equation~\ref{eq:fixed-embedding-cost} holds with $\flex(f) = 3$ independent of the values of $\coflex(u)$ and $\coflex(v)$.

\item When $\coflex(e_0) = 0$, we have that $e_0$ has two bends in $H$. We have three subcases: (a)~If $|D_f(G)| \geq 2$ the number of costly bends of $H$ coincides with the lower bound of \cref{le:cost-lower-bound}. (b)~If $|D_f(G)| = 1$, the number of costly bends of $H$ exceeds by one the lower bound stated by \cref{le:cost-lower-bound}. Subdividing $e_0$ three or more times would satisfy Condition~$(i)$ of \cref{th:RN03} but Condition~$(ii)$ of \cref{th:RN03} would not be satisfied for the 2-extrovert cycle $C^*_0$ that requires two degree-2 subdivision vertices. Since $\coflex(e_0) = 0$ and $|D_f(G)| = 1$, one of such vertices is costly and the lower bound stated by \cref{le:cost-lower-bound} cannot be matched. Since the number of costly degree-2 vertices of $\rect{G}$ equals this lower bound plus one, we have that $H$ is a cost-minimum orthogonal representation of $G$.
(c)~If $|D_f(G)| = 0$, the number of costly bends of $H$ exceeds by two the lower bound stated by \cref{le:cost-lower-bound}. Exploiting the flexibility of $e_0$ and introducing more than two bends along it, thus matching the lower bound, would not reduce the overall number of costly bends, as the 2-extrovert cycle $C^*_0$ would require two degree-2 subdivision vertices. Since $\coflex(e_0) = 0$ and $|D_f(G)| = 0$, these two vertices are both costly. In all subcases above Equation~\ref{eq:fixed-embedding-cost} holds with $\flex(f) = 2$ independent of the values of $\coflex(u)$ and $\coflex(v)$.
\end{itemize}

\item \emph{Case 2: There are two intersecting demanding 3-extrovert cycles in $G$}.
%By Properties $(c)$ and $(d)$ of \cref{le:inclusion} the union of any two demanding 3-extrovert cycles contains all edges of $f$ with the possible exception of a common leg. Since a demanding 3-extrovert cycle cannot have flexible edges, the flexible edge $e_0$ of $f$ can only be a leg shared by two intersecting demanding 3-extrovert cycles. Since a leg cannot be shared by three demanding 3-extrovert cycles, $G$ can have only two intersecting demanding 3-extrovert cycles sharing the leg $e_0$.
%
Let $C_1$ and $C_2$ be the two intersecting demanding 3-extrovert cycles of $G$.
Observe that by \cref{co:inclusion} $|D_f(G)| = 0$. Also observe that $\coflex(e_0) = 0$because all edges of the mirror path $\Pi_{e_0}$ either belong to $C_1$ or to $C_2$. Hence, $\Pi_{e_0}$ cannot include any flexible edge or any edge of a demanding 3-extrovert cycle different from either $C_1$ or $C_2$.
In order to satisfy Condition~$(i)$ of \cref{th:RN03} we insert two degree-2 vertices along $e_0$ and two costly degree-2 vertices along two distinct edges of $C_o(G) \setminus e_0$, one chosen along $C_1$ and the other along $C_2$.
Since at least three edges of $C_o(G)$ are subdivided, Condition~$(ii)$ of \cref{th:RN03} is satisfied for every 2-extrovert cycle created by the subdivision vertices inserted along $C_o(G)$ (see \cref{le:2-extrovert}).
Since at least two edges of $C_o(G)$ are subdivided, by \cref{le:demanding-non-demanding-intersecting} the non-degenerate non-demanding 3-extrovert cycles that still do not satisfy Condition~$(iii)$ of \cref{th:RN03} are not intersecting.
This latter property allows us to subdivide the remaining edges of $G$ as in the proof of \cref{le:cost-upper-bound} to obtain a good plane graph $\rect{G}$ that satisfies Conditions~$(i)$--$(iii)$ of \cref{th:RN03}. Also, the orthogonal representation $H$ of $G$ obtained from a rectilinear representation of $\rect{G}$ satisfies Properties~\textsf{P1}--\textsf{P3} of \cref{th:fixed-embedding-cost-one}.

The total number of costly bends of $H$  is $|D(G)| + 2 = |D(G)| + 4 - \min\{4,|D_f(G)| + \flex(f)\}$, where $\flex(f) = 2 = \min\{\flex(e_0), 2\} = \min\{\flex(e_0)-1, \coflex(e_0)+2\}$.
The minimality of such a number can be proved by considering that, even though $\flex(e_0) > 2$, if we subdivided $e_0$ more than twice the 2-extrovert cycle $C^*_0$ would require extra degree-2 vertices to satisfy Condition~$(ii)$ of \cref{th:RN03}, which would be costly because $\coflex(e_0) = 0$ and $|D_f(G)| = 0$. Hence, subdividing $e_0$ more than twice would not decrease the number of costly bends of~$H$.
%Therefore, also when $\flex(e_0) \geq 3$ and there are two intersecting demanding 3-extrovert cycles in $G$, Equation~\ref{eq:fixed-embedding-cost} holds by setting $\flex(f)=\flex(e_0) = \min\{\flex(e_0),\coflex(e_0)+2\}$.
\end{itemize}

Summarizing, when $m(f)=1$ and $\flex(e_0) = 4$ Equation~\ref{eq:fixed-embedding-cost} holds by setting $\flex(f) = \min\{\flex(e_0),$ $\coflex(e_0)+2\}$ if both $\coflex(u) > 0$ and $\coflex(v) > 0$, or with $\flex(f)= \min\{\flex(e_0)-1, \coflex(e_0)+2\}$ if $\coflex(u) = 0$ or $\coflex(v) = 0$. Also, $G$ admits an embedding-preserving orthogonal representation $H$ satisfying Properties~\textsf{P1}, \textsf{P2}, and \textsf{P3} of \cref{th:fixed-embedding-cost-one}.
\end{proof}	

It remains to prove Equation~\ref{eq:fixed-embedding-cost} when $m(f) = 2$.
Let $e_0$ and $e_1$ be the two flexible edges of $C_o(G)$. If they share a vertex $v$, there is a degenerate 3-extrovert cycle $\hat{C} = C_o(G \setminus v)$, whose legs are all incident to $v$ (see, for example \cref{fi:flexible_mf=2_e0e1adjacent_chat}).
The next two lemmas distinguish between the cases when $\hat{C}$ exists or not and, when it does, whether $\hat{C}$ is demanding or not.
%If $\hat C$ is demanding there is no other demanding 3-extrovert cycle incident to~$f$, because any demanding 3-extrovert cycle incident to~$f$ does not contain $e_0$ and $e_1$. This is the only configuration when there is a degenerate demanding 3-extrovert cycle that is not intersecting with any other demanding 3-extrovert cycle.

\begin{lemma}\label{le:fixed-embedding-min-bend-mf2-part1}
	Let $m(f)=2$ and let $e_0,e_1$ be the two flexible edges of $f$. If $e_0$ and $e_1$ are adjacent and $\hat C$ is demanding, then Equation~\ref{eq:fixed-embedding-cost} holds with $\flex(f) = \min\{3, \flex(e_0) + \flex(e_1)\}$.
	Also, there exists a cost-minimum embedding-preserving orthogonal representation of $G$ that satisfies Properties~\textsf{P1}, \textsf{P2}, and \textsf{P3} of \cref{th:fixed-embedding-cost-one}.
\end{lemma}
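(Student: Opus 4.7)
\medskip\noindent\textbf{Proof proposal.} The plan is to follow the template established by Lemmas~\ref{le:fixed-embedding-min-bend-mf1-eq-3} and~\ref{le:fixed-embedding-min-bend-mf1-eq-4}: modify the subdivision procedure of \cref{le:cost-upper-bound} to obtain a good plane graph $\overline{G}$ that uses the flexibilities of $e_0$ and $e_1$ to reduce costly vertices on $C_o(G)$, apply \textsf{NoBendAlg} to get $\overline{H}$, and verify that the resulting embedding-preserving orthogonal representation $H$ has exactly $|D(G)| + 4 - \min\{4, |D_f(G)| + \flex(f)\}$ costly bends with $\flex(f) = \min\{3, \flex(e_0)+\flex(e_1)\}$. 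Matching this with the lower bound of \cref{le:cost-lower-bound} (suitably strengthened to account for $\hat{C}$) will give equality.

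The first step will be a structural observation: since $\hat{C}$ is demanding and degenerate with common endpoint $v$ of $e_0$ and $e_1$, by \cref{pr:degenerate} it consists of all edges of $C_o(G)\setminus\{e_0,e_1\}$ together with the internal face path that closes up around $v$; in particular $\hat{C}\in D_f(G)$ and every edge of $\hat{C}$ incident to $f$ is inflexible. Hence, to satisfy Condition~$(iii)$ of \cref{th:RN03} for $\hat{C}$, at least one costly degree-2 vertex must be inserted along an edge of $C_o(G)\setminus\{e_0,e_1\}$; any such insertion also contributes one subdivision vertex to $f$ that doubly serves Condition~$(i)$, which is the reason why only three (not four) of the subdivisions on $f$ can be saved via the flexibility of $\{e_0,e_1\}$.

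I would then split into two cases according to $\flex(e_0)+\flex(e_1)$. When $\flex(e_0)+\flex(e_1)\le 3$ I set $\flex(f) = \flex(e_0)+\flex(e_1)$ and subdivide $e_0$ and $e_1$ with $\flex(e_0)$ and $\flex(e_1)$ non-costly degree-2 vertices respectively, then subdivide $4-\flex(e_0)-\flex(e_1)$ further edges of $C_o(G)\setminus\{e_0,e_1\}$ with costly degree-2 vertices, choosing them, whenever possible, along distinct cycles of $D_f(G)$ (one of which can always be chosen on $\hat{C}$). When $\flex(e_0)+\flex(e_1)\ge 4$ I set $\flex(f)=3$ and subdivide $e_0,e_1$ with a total of three non-costly degree-2 vertices (distributed as $(\flex(e_0),\flex(e_1)-1)$ if $\flex(e_1)\ge 2$, say), and insert exactly one costly degree-2 vertex along an edge of $\hat{C}\cap C_o(G)$, chosen so as to lie on an additional cycle of $D_f(G)\setminus\{\hat{C}\}$ whenever possible. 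In both cases, for every remaining cycle in $D(G)\setminus D_f(G)$ and every non-demanding 3-extrovert cycle, I subdivide an internal edge exactly as in the proof of \cref{le:cost-upper-bound}, which is legitimate because at least two edges of $C_o(G)$ will have been subdivided and hence \cref{le:demanding-non-demanding-intersecting} applies.

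The main obstacle will be a delicate verification of the three conditions of \cref{th:RN03} on $\overline{G}$. For Condition~$(i)$ the count is by construction. For Condition~$(ii)$ I will have to handle the 2-extrovert cycles that are created when $e_0$ or $e_1$ is subdivided (\cref{le:2-extrovert}): since at least three distinct edges of $C_o(G)$ are subdivided in all subcases, each such 2-extrovert cycle contains at least two degree-2 vertices. For Condition~$(iii)$ on \emph{degenerate} 3-extrovert cycles other than $\hat{C}$ (for example $C_o(G\setminus u)$ for a vertex $u\ne v$ of $C_o(G)$), I will argue, as in the proof of \cref{le:fixed-embedding-min-bend-mf1-eq-3}, that every such cycle contains either $e_0$ or $e_1$ or the edge of $C_o(G)\setminus\{e_0,e_1\}$ receiving the costly bend; the cycle $\hat{C}$ is the reason why we cannot save a fourth bend on $f$. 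For Condition~$(iii)$ on non-degenerate 3-extrovert cycles, the only subtlety is when two non-degenerate intersecting demanding 3-extrovert cycles exist; by \cref{le:inclusion} they must share a leg that is either $e_0$, $e_1$, or the internal edge at $v$, and in each subcase the subdivision vertices placed above witness Condition~$(iii)$ for both of them. Finally, matching against \cref{le:cost-lower-bound} strengthened with the observation that $\hat{C}$ forbids using a fourth non-costly bend on $f$ yields that the construction is optimal; Properties~\textsf{P1}--\textsf{P3} of \cref{th:fixed-embedding-cost-one} follow because no non-flexible edge is subdivided more than once, each flexible edge $e$ is subdivided at most $\flex(e)$ times, and $f$ has at least four edges.
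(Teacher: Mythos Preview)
Your proposal contains two concrete errors that would derail the proof.

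\textbf{First, the claim $\hat{C}\in D_f(G)$ is false by definition.} The set $D(G)$---and therefore $D_f(G)$---is defined to consist only of \emph{non-degenerate} demanding 3-extrovert cycles, whereas $\hat{C}$ is degenerate. More importantly, the key structural fact you are missing is that $|D_f(G)|=0$ under the hypotheses of the lemma. Indeed, any non-degenerate demanding 3-extrovert cycle $C$ sharing edges with $f$ cannot contain the flexible edges $e_0,e_1$, hence shares edges with $\hat{C}\cap C_o(G)$; by \cref{pr:stabbing-path} the child-cycle of $\hat{C}$ on the path to $C$ in $T_{\hat C}$ then has a green contour path along $\hat{C}$, so $\hat{C}$ would be colored by Case~2(b) of the \textsc{3-Extrovert Coloring Rule} rather than Case~1, contradicting that $\hat{C}$ is demanding. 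With $|D_f(G)|=0$ the target value is simply $|D(G)|+4-\min\{3,\flex(e_0)+\flex(e_1)\}$, and the single costly vertex you place on $\hat{C}\cap C_o(G)$ is there solely to satisfy Condition~$(iii)$ for $\hat{C}$ (and Condition~$(i)$), not to serve any cycle of $D_f(G)$. Your repeated references to ``cycles of $D_f(G)\setminus\{\hat{C}\}$'' are therefore vacuous, and the verification of the bend count against Equation~\eqref{eq:fixed-embedding-cost} would not go through as written. (A related minor point: with $m(f)=2$ there are no intersecting non-degenerate demanding 3-extrovert cycles at all, by Properties~$(a)$ and~$(b)$ of \cref{le:inclusion}, so that part of your argument is unnecessary.)

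\textbf{Second, the assertion that $f$ has at least four edges is unjustified and in fact false in general.} If $f$ is a 3-cycle with $\flex(e_0)=\flex(e_1)=1$ (and the third edge inflexible), your case $\flex(e_0)+\flex(e_1)\le 3$ asks for two costly subdivision vertices on distinct edges of $C_o(G)\setminus\{e_0,e_1\}$, but only one such edge exists. The paper handles this by placing the extra costly subdivision on $e_0$, giving $e_0$ two bends; this is precisely the situation covered by Property~\textsf{P2} of \cref{th:fixed-embedding-cost-one}. Without this case distinction you cannot establish Properties~\textsf{P1}--\textsf{P3}.
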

\begin{proof}
Since $m(f)=2$, by Properties~$(a)$ and~$(b)$ of \cref{le:inclusion}, $G$ does not have any intersecting demanding 3-extrovert cycle.

Also, since $\hat C$ is demanding and $e_0$ and $e_1$ are flexible edges, we have that $|D_f(G)| = 0$.
To satisfy Condition~$(i)$ of \cref{th:RN03} we modify the procedure of \cref{le:cost-upper-bound} to reduce the number of costly degree-2 vertices inserted along the edges of~$C_o(G)$ as follows.
We insert one costly degree-2 vertex along an edge $e$ of $C_o(G) \cap \hat C$ in order to satisfy Condition~$(iii)$ of \cref{th:RN03} for $\hat C$.
We then insert $\flex(e_0)$ degree-2 vertices along $e_0$ and $\flex(e_1)$ degree-2 vertices along $e_1$. See \cref{fi:flexible_mf=2_e0e1adjacent_chat}. If $\flex(e_0)+\flex(e_1) > 2$ Condition~$(i)$ of \cref{th:RN03} is satisfied. If $\flex(e_0)+\flex(e_1) = 2$ and the boundary of $f$ has at least four edges, we insert a costly degree-2 vertex along an edge of $C_o(G) \setminus \{e_0, e_1, e\}$. If $\flex(e_0)+\flex(e_1) = 2$ and the boundary of $f$ has three edges we further subdivide $e_0$ with a (costly) degree-2 vertex.

The procedure above guarantees that Condition~$(i)$ of \cref{th:RN03} is satisfied.
Since at least three edges of $C_o(G)$ are subdivided, Condition~$(ii)$ of \cref{th:RN03} is satisfied for every 2-extrovert cycle created by the subdivision vertices inserted along $C_o(G)$ (see \cref{le:2-extrovert}).
Since at least two edges of $C_o(G)$ have been subdivided, by \cref{le:demanding-non-demanding-intersecting} the non-degenerate non-demanding 3-extrovert cycles that still do not satisfy Condition~$(iii)$ of \cref{th:RN03} are not intersecting.
This latter property allows us to subdivide the remaining edges of $G$ as in the proof of \cref{le:cost-upper-bound} to obtain a graph $\rect{G}$ that satisfies Conditions~$(i)$--$(iii)$ of \cref{th:RN03}. The orthogonal representation $H$ obtained from $\rect{G}$ satisfies Properties~\textsf{P1}--\textsf{P3} of \cref{th:fixed-embedding-cost-one}.

The total number of costly bends of $H$ is $|D(G)| + 4 - \min\{4,|D_f(G)| + \flex(f)\}$, where $\flex(f)= \min\{3, \flex(e_0) + \flex(e_1)\}$. To show that this is minimum, observe that if $\flex(e_0) + \flex(e_1) \leq 3$, then $\flex(f) = \flex(e_0) + \flex(e_1)$ and the number of costly bends of $H$ matches the lower bound stated by \cref{le:cost-lower-bound}. Otherwise, the lower bound stated by \cref{le:cost-lower-bound} cannot be matched because every cost-minimum orthogonal representation of $G$ must have at least one costly bend along~$\hat C$ in addition to the non-costly bends along $e_0$ and $e_1$.
In this case, the number of costly bends of $H$ exceeds by one unit the lower bound of \cref{le:cost-lower-bound} and, hence, it is minimum.
It follows that Equation~\ref{eq:fixed-embedding-cost} holds with $\flex(f) = \min\{3, \flex(e_0)+\flex(e_1)\}$.
\end{proof}

\cref{fi:flexible_mf=2_e0e1adjacent_cstar12-ort} shows a minimum-cost orthogonal representation of the graph depicted in \cref{fi:flexible_mf=2_e0e1adjacent_chat}, obtained as described in the proof of~\cref{le:fixed-embedding-min-bend-mf2-part1}.

\begin{figure}[htb]
	\centering
	\subfloat[]{\label{fi:flexible_mf=2_e0e1adjacent_chat}\includegraphics[width=0.275\columnwidth]{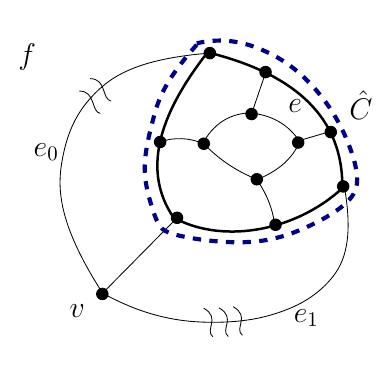}}
	\hfil
	%\subfloat[]{\label{fi:flexible_mf=2_e0e1adjacent_cstar12}\includegraphics[width=0.275\columnwidth]{}}
	%\hfill
	\subfloat[]{\label{fi:flexible_mf=2_e0e1adjacent_cstar12-ort}\includegraphics[width=0.275\columnwidth]{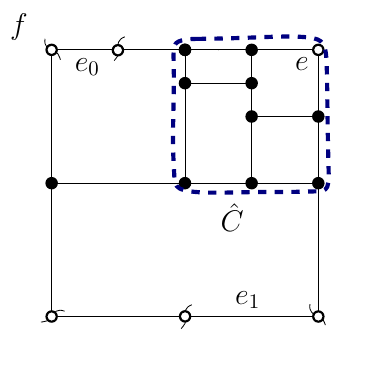}}
	\hfil
	% \subfloat[]{\label{fi:flexible_mf=2_e0e1adjacent_cstar12-ort2}\includegraphics[width=0.275\columnwidth]{}}
	\caption{(a) Cycle $\hat C$ is a degenerate demanding 3-extrovert cycle. (b) A cost-minimum orthogonal drawing of the graph in (a).}\label{fi:mf2-1}
\end{figure}

\begin{lemma}\label{le:fixed-embedding-min-bend-mf2-part2}
	Let $m(f)=2$ and let $e_0,e_1$ be the two flexible edges of $f$ such that $\flex(e_0) \geq \flex(e_1)$. If $e_0$ and $e_1$ are not adjacent or if they are adjacent but $\hat C$ is not demanding, then Equation~\ref{eq:fixed-embedding-cost} holds with $\flex(f)$ defined as follows: (a) If $\flex(e_0) \geq 3$ and $\flex(e_1) = 1$ then $\flex(f) = \coflex(e_0) + 3$; (b) Else $\flex(f) = \flex(e_0) + \flex(e_1)$.
	Also, there exists a cost-minimum embedding-preserving orthogonal representation of $G$ that satisfies Properties~\textsf{P1}, \textsf{P2}, and \textsf{P3} of \cref{th:fixed-embedding-cost-one}.
\end{lemma}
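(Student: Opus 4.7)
\medskip
\noindent\textbf{Proof plan for \cref{le:fixed-embedding-min-bend-mf2-part2}.} The plan is to proceed along the same lines as \cref{le:fixed-embedding-min-bend-mf1-leq-2,le:fixed-embedding-min-bend-mf1-eq-3,le:fixed-embedding-min-bend-mf1-eq-4,le:fixed-embedding-min-bend-mf2-part1}: modify the subdivision procedure of \cref{le:cost-upper-bound} along $C_o(G)$ in order to match the lower bound of \cref{le:cost-lower-bound} (or to exceed it by exactly one when it cannot be matched), and then invoke \textsf{NoBendAlg} to obtain an embedding-preserving orthogonal representation that fulfils Properties~\textsf{P1}--\textsf{P3} of \cref{th:fixed-embedding-cost-one}. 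First I would observe that, since $m(f)=2$, by Properties~$(a)$ and~$(b)$ of \cref{le:inclusion} $G$ cannot contain two intersecting non-degenerate demanding $3$-extrovert cycles, and that the only degenerate $3$-extrovert cycle whose legs are incident to a common vertex of $C_o(G)$ and whose cycle-edges could all be inflexible is precisely $\hat C$; hence, in the hypotheses of the lemma, every degenerate demanding $3$-extrovert cycle contains either $e_0$ or $e_1$. In the first step of the procedure the three conditions of \cref{th:RN03} for all degenerate $3$-extrovert cycles of $G$ can therefore be satisfied exclusively by inserting degree-$2$ vertices along $e_0$ and $e_1$, without ever needing an extra costly vertex on $C_o(G)\setminus\{e_0,e_1\}$ dedicated to degenerate cycles.

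In case~$(b)$, where $\flex(e_0)+\flex(e_1)\le 4$ and we are not in the exceptional configuration of case~$(a)$, the strategy is to subdivide $e_0$ with exactly $\flex(e_0)$ non-costly vertices and $e_1$ with $\flex(e_1)$ non-costly vertices, and then, if the total number of subdivision vertices so placed on $C_o(G)$ is still less than $\min\{4,|D_f(G)|+\flex(e_0)+\flex(e_1)\}$, to add costly vertices on distinct edges of $C_o(G)\setminus\{e_0,e_1\}$, preferring edges belonging to distinct cycles of $D_f(G)$ so as to simultaneously discharge Condition~$(iii)$ of \cref{th:RN03} for those cycles. In the subcase where the face has only three edges and $\flex(e_0)+\flex(e_1)=2$, a second costly vertex is instead placed on a flexible edge in compliance with Property~\textsf{P2} of \cref{th:fixed-embedding-cost-one}. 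The remaining edges of $G$ are subdivided exactly as in \cref{le:cost-upper-bound}: since at least two distinct edges of $C_o(G)$ have been subdivided, \cref{le:demanding-non-demanding-intersecting} guarantees that the non-degenerate non-demanding $3$-extrovert cycles still needing attention are pairwise non-intersecting, so \cref{pr:stabbing-path} lets us satisfy Condition~$(iii)$ with one costly vertex per element of $D(G)\setminus D_f(G)$. A direct count of costly vertices matches $|D(G)|+4-\min\{4,|D_f(G)|+\flex(e_0)+\flex(e_1)\}$, which coincides with the lower bound of \cref{le:cost-lower-bound}.

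The more delicate case is~$(a)$, where $\flex(e_0)\ge 3$ and $\flex(e_1)=1$. Here subdividing $e_0$ more than twice creates the $2$-extrovert cycle $C_0^*$ of \cref{le:2-extrovert}, and Condition~$(ii)$ of \cref{th:RN03} requires two degree-$2$ vertices along it. One such vertex is provided for free by the single subdivision of $e_1$ (which lies on $C_o(G)\setminus e_0\subset C_0^*$), and the second one must come either from the mirror path $\Pi_{e_0}$, where one can place up to $\coflex(e_0)$ non-costly vertices by subdividing flexible edges of $\Pi_{e_0}$ or edges belonging to cycles of $D(G)\setminus D_f(G)$ that cross $\Pi_{e_0}$, or, at the price of a costly vertex, from a further edge of $C_o(G)\setminus\{e_0,e_1\}$. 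Mimicking the argument of \cref{le:fixed-embedding-min-bend-mf1-eq-3,le:fixed-embedding-min-bend-mf1-eq-4}, I would place $\min\{\flex(e_0),\coflex(e_0)+2\}$ degree-$2$ vertices along $e_0$, one along $e_1$, and the corresponding non-costly vertices along $\Pi_{e_0}$; together with possible additional costly vertices on $C_o(G)\setminus\{e_0,e_1\}$ chosen inside cycles of $D_f(G)$, this yields a total of $\min\{4,|D_f(G)|+\coflex(e_0)+3\}$ non-costly or $D_f(G)$-absorbed vertices on $C_o(G)$. The resulting $\rect G$ satisfies Conditions~$(i)$--$(iii)$ of \cref{th:RN03}, and \textsf{NoBendAlg} returns a representation $H$ of cost $|D(G)|+4-\min\{4,|D_f(G)|+\coflex(e_0)+3\}$ compliant with Properties~\textsf{P1}--\textsf{P3} of \cref{th:fixed-embedding-cost-one}.

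The main obstacle is proving the matching lower bound in case~$(a)$, i.e.~that the extra ``$+3$'' truly caps the usefulness of $e_0$'s flexibility rather than $\flex(e_0)$ itself. The argument, to be carried out subcase by subcase on $\coflex(e_0)$ and $|D_f(G)|$ as in the proof of \cref{le:fixed-embedding-min-bend-mf1-eq-4}, shows that any attempt to place more than $\coflex(e_0)+2$ degree-$2$ vertices on $e_0$ forces an additional costly vertex on $C_0^*$ to fulfil Condition~$(ii)$ of \cref{th:RN03}, because neither $\Pi_{e_0}$ nor $C_o(G)\setminus\{e_0,e_1\}$ can absorb it without cost once the flexibility of $\Pi_{e_0}$ and the budget offered by $D_f(G)$ have been exhausted. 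Together with the observation that subdividing $e_1$ more than once is impossible since $\flex(e_1)=1$, this argument establishes that $\flex(f)=\coflex(e_0)+3$ is the correct value and closes the proof.
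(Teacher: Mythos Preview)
Your proposal is essentially correct and follows the same approach as the paper: modify the subdivision procedure of \cref{le:cost-upper-bound} along $C_o(G)$ to exploit $\flex(e_0)$ and $\flex(e_1)$, use the mirror path $\Pi_{e_0}$ via $\coflex(e_0)$ to feed the $2$-extrovert cycle $C_0^*$ in case~$(a)$, and argue optimality by showing the lower bound of \cref{le:cost-lower-bound} is unattainable only when $\coflex(e_0)=0$ and $|D_f(G)|=0$. Two minor remarks: your description of case~$(b)$ as ``$\flex(e_0)+\flex(e_1)\le 4$'' is not the right dichotomy (case~$(b)$ is $\flex(e_0)\le 2$ or $\flex(e_1)\ge 2$, which includes, e.g., $\flex(e_0)=\flex(e_1)=4$), though your actual construction there is unaffected; and in case~$(a)$ the paper never places more than three subdivision vertices on $e_0$ (since $3+1$ already meets Condition~$(i)$), whereas your formula $\min\{\flex(e_0),\coflex(e_0)+2\}$ may place four when $\flex(e_0)=4$ and $\coflex(e_0)\ge 2$---this still works but is unnecessary.
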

\begin{proof}
Since $m(f)=2$, by Properties~$(a)$ and~$(b)$ of \cref{le:inclusion}, $G$ does not have intersecting demanding 3-extrovert cycles.
If $e_0$ and $e_1$ are not adjacent, all degenerate 3-extrovert cycles of $G$ are not demanding because they contain at least one among $e_0$ and $e_1$. If $e_0$ and $e_1$ are adjacent, we have that the degenerate 3-extrovert cycle $\hat C$ (which is the only one not containing $e_0$ and $e_1$) is not demanding by hypothesis.

%To satisfy Condition~$(i)$ of \cref{th:RN03} we modify the procedure of \cref{le:cost-upper-bound} to reduce the number of costly degree-2 vertices inserted along the edges of~$C_o(G)$ as follows.
%Therefore, Condition~$(iii)$ of~\cref{th:RN03} is satisfied for all non-demanding degenerate 3-extrovert cycles by the procedure of \cref{le:cost-upper-bound}.
We modify the procedure of \cref{le:cost-upper-bound} and exploit the flexibilities of $e_0$ and $e_1$ in order to minimize the number of costly degree-2 vertices along $C_o(G)$ as follows.
We denote as $C^*_0$ the 2-extrovert cycle created when subdividing $e_0$ and as $C^*_1$ the 2-extrovert cycle created when subdividing $e_1$ (see \cref{le:2-extrovert}).

\smallskip\noindent
\emph{Case $(a)$: $\flex(e_0) \geq 3$ and $\flex(e_1) = 1$}. This case has the following subcases:
\begin{itemize}

\item $\coflex(e_0) = 0$ and $|D_f(G)| > 0$. We insert two degree-2 vertices along $e_0$, one degree-2 vertex along $e_1$, and one (costly) degree-2 vertex along an edge of $C_o(G) \cap C$ for each demanding 3-extrovert cycle $C \in D_f(G)$.
Since we subdivided at least four edges of $C_o(G)$, Condition~$(i)$ of \cref{th:RN03} is satisfied.
Since we subdivided at least three edges of $C_o(G)$, Condition~$(ii)$ of \cref{th:RN03} is satisfied for every 2-extrovert cycle created by the subdivision vertices (see \cref{le:2-extrovert}). See, for example, \cref{fi:flexible_mf=2_chatNotDemanding_flexe0=3-1}.
Since at least two edges of $C_o(G)$ have been subdivided, by \cref{le:demanding-non-demanding-intersecting} the non-degenerate non-demanding 3-extrovert cycles that still do not satisfy Condition~$(iii)$ of \cref{th:RN03} are not intersecting.
This latter property allows us to subdivide the remaining edges of $G$ as in the proof of \cref{le:cost-upper-bound} to obtain a graph $\rect{G}$ that satisfies Conditions~$(i)$--$(iii)$ of \cref{th:RN03}. The orthogonal representation $H$ obtained from $\rect{G}$ satisfies Properties~\textsf{P1}--\textsf{P3} of \cref{th:fixed-embedding-cost-one}.

Since the number of costly bends of $H$ matches the lower bound stated by \cref{le:cost-lower-bound}, $H$ is cost-minimum.
Hence, Equation~\ref{eq:fixed-embedding-cost} holds with $\flex(f) = \coflex(e_0) + 3 = 3$. % (and $|D_f(G)| + \flex(f) \geq 4$).

\item $\coflex(e_0) = 0$ and $|D_f(G)| = 0$. We insert two degree-2 vertices along $e_0$, one degree-2 vertex along $e_1$, and one costly degree-2 vertex along an edge of $C_o(G) \setminus \{e_0,e_1\}$ (see for example \cref{fi:flexible_mf=2_chatNotDemanding_flexe0=3-2}).
%As in the previous case, Condition~$(i)$ of \cref{th:RN03} is satisfied and, since we subdivided at least three distinct edges of $f$, Condition~$(ii)$ of \cref{th:RN03} is satisfied for every 2-extrovert cycle created by the subdivision vertices (see \cref{le:2-extrovert}).
Since we inserted four degree-2 vertex along the edges of $C_o(G)$, Condition~$(i)$ of \cref{th:RN03} is satisfied.
Since we subdivided three edges of $C_o(G)$, Condition~$(ii)$ of \cref{th:RN03} is satisfied for every 2-extrovert cycle created by the subdivision vertices (see \cref{le:2-extrovert}).
Since at least two edges of $C_o(G)$ have been subdivided, by \cref{le:demanding-non-demanding-intersecting} the non-degenerate non-demanding 3-extrovert cycles that still do not satisfy Condition~$(iii)$ of \cref{th:RN03} are not intersecting.
This latter property allows us to subdivide the remaining edges of $G$ as in the proof of \cref{le:cost-upper-bound} to obtain a graph $\rect{G}$ that satisfies Conditions~$(i)$--$(iii)$ of \cref{th:RN03}. The orthogonal representation $H$ obtained from $\rect{G}$ satisfies Properties~\textsf{P1}--\textsf{P3} of \cref{th:fixed-embedding-cost-one}.

Concerning optimality, if we subdivided $e_0$ three times, the 2-extrovert cycle $C^*_0$ would have required one extra bend to satisfy Condition~$(ii)$ of \cref{th:RN03}, which would have been costly because $\coflex(e_0) = 0$ and $|D_f(G)| = 0$. Hence, subdividing $e_0$ three times would not decrease the number of costly bends of $H$.
It follows that the lower bound stated by \cref{le:cost-lower-bound} cannot be matched.
Since the number of costly bends of $H$ exceeds by one unit the lower bound of \cref{le:cost-lower-bound}, $H$ is a cost-minimum orthogonal representation.
Hence, Equation~\ref{eq:fixed-embedding-cost} holds with $\flex(f) = \coflex(e_0) + 3 = 3$.

%a degree-2 vertex along $C^*_0 \setminus e_1$. Since $\coflex(e_0) = 0$ we have two cases, depending on the value of $|D_f(G)|$. If $|D_f(G)| > 0$ Condition~$(ii)$ of \cref{th:RN03} for $C_0^*$ is satisfied by the degree-2 vertex along $e_1$ and by a costly degree-2 vertex that we insert on an edge shared by $f$ and an element of $D_f(G)$ (see for example \cref{fi:flexible_mf=2_chatNotDemanding_flexe0=3-1} where $C \in D_f(G)$). Otherwise, if $|D_f(G)| = 0$ we insert one costly degree-2 vertex along an edge of $C_o(G) \setminus \{e_0,e_1\}$ (see for example \cref{fi:flexible_mf=2_chatNotDemanding_flexe0=3-2} where $|D_f(G)| = 0$).
%Since we subdivided at least three distinct edges of $f$, Condition~$(ii)$ of \cref{th:RN03} is satisfied for every 2-extrovert cycle created by the subdivision vertices (see \cref{le:2-extrovert}).
%Therefore, in this case the flexibilities of $e_0$ and $e_1$ can be exploited to save at most three of the four degree-2 vertices needed to satisfy Condition~$(i)$ of \cref{th:RN03}.

\item $\coflex(e_0) \geq 1$. We insert three degree-2 vertices along $e_0$ and one degree-2 vertex along $e_1$.
Since $\coflex(e_0) \geq 1$, there exists one edge $e$ of the mirror path $\Pi_{e_0}$ of $e_0$ that either is flexible or it belongs to a cycle in $D(G) \setminus D_f(G)$. We subdivide $e$ with a degree-2 vertex.

Since we inserted four degree-2 vertex along the edges of $C_o(G)$, Condition~$(i)$ of \cref{th:RN03} is satisfied.
Condition~$(ii)$ of \cref{th:RN03} is satisfied for $C_0^*$ by the degree-2 vertex inserted along $\Pi_{e_0}$ and by the degree-2 vertex inserted along $e_1$ (see for example \cref{fi:flexible_mf=2_chatNotDemanding_flexe0=3-2} where $\coflex(e_0) = 1$).
Condition~$(ii)$ of \cref{th:RN03} is satisfied for $C_1^*$ by the degree-2 vertices inserted along $e_0$.
Since at least two edges of $C_o(G)$ have been subdivided, by \cref{le:demanding-non-demanding-intersecting} the non-degenerate non-demanding 3-extrovert cycles that still do not satisfy Condition~$(iii)$ of \cref{th:RN03} are not intersecting.
This latter property allows us to subdivide the remaining edges of $G$ as in the proof of \cref{le:cost-upper-bound} to obtain a graph $\rect{G}$ that satisfies Conditions~$(i)$, $(ii)$ and $(iii)$ of \cref{th:RN03}. Also, the orthogonal representation $H$ obtained from $\rect{G}$ satisfies Properties~\textsf{P1}, \textsf{P2}, and \textsf{P3} of \cref{th:fixed-embedding-cost-one}.

The number of bends of $H$ is minimum since it matches the lower bound stated by \cref{le:cost-lower-bound}. Therefore, Equation~\ref{eq:fixed-embedding-cost} holds with $\flex(f) = \coflex(e_0) + 3 \geq 4$.
\end{itemize}

\begin{figure}[htb]
	\centering
	\subfloat[]{\label{fi:flexible_mf=2_chatNotDemanding_flexe0=3-1}\includegraphics[width=0.275\columnwidth]{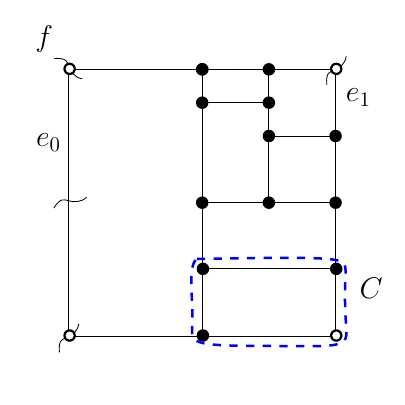}}
	\hfil
	%\subfloat[]{\label{fi:flexible_mf=2_e0e1adjacent_cstar12}\includegraphics[width=0.275\columnwidth]{flexible_mf=2_e0e1adjacent_cstar12}}
	%\hfill
	\subfloat[]{\label{fi:flexible_mf=2_chatNotDemanding_flexe0=3-2}\includegraphics[width=0.275\columnwidth]{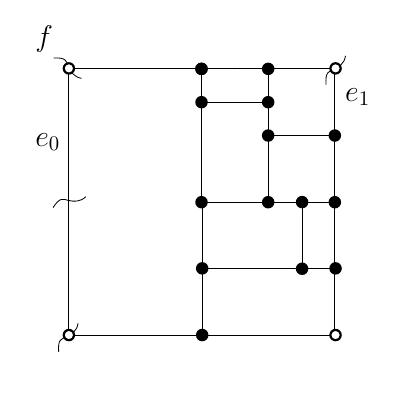}}
	\hfil
	\subfloat[]{\label{fi:flexible_mf=2_chatNotDemanding_flexe0=3-3}\includegraphics[width=0.275\columnwidth]{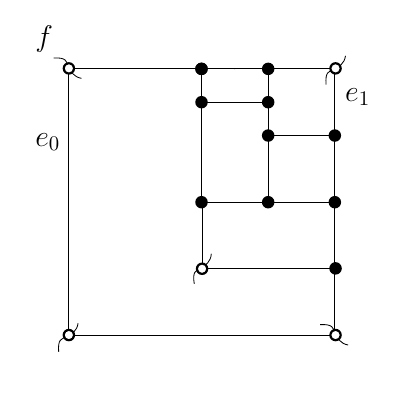}}
	\caption{Illustration for Case~(a) in the proof of \cref{le:fixed-embedding-min-bend-mf2-part2}. (a)~$\coflex(e_0) = 0$ and $|D_f(G)| >0$; (b)~$\coflex(e_0) = 0$ and $|D_f(G)| =0$; (c)~$\coflex(e_0) > 0$.}\label{fi:mf2-2}
\end{figure}

\smallskip\noindent
\emph{Case $(b)$: $1 \leq \flex(e_0) \leq 2$ or $\flex(e_1) > 1$}. We distinguish the following subcases:

\begin{itemize}

\item $1 \leq \flex(e_0) \leq 2$ and $\flex(e_1) = 1$. We insert $\flex(e_0)$ degree-2 vertices along $e_0$ and one degree-2 vertex along $e_1$. In order to satisfy Condition~$(i)$ of \cref{th:RN03} at most two additional degree-2 vertices are inserted on $C_o(G) \setminus \{e_0,e_1\}$. If $|D_f(G)| > 0$, these two vertices are chosen so to satisfy Condition~$(iii)$ of \cref{th:RN03} for some demanding 3-extrovert cycles of $D_f(G)$. Since we subdivided at least three distinct edges of $f$, Condition~$(ii)$ of \cref{th:RN03} is satisfied for every 2-extrovert cycle created by the subdivision vertices (see \cref{le:2-extrovert}).
Since we subdivided at least two distinct edges of $f$, by \cref{le:demanding-non-demanding-intersecting} the non-degenerate non-demanding 3-extrovert cycles that still do not satisfy Condition~$(iii)$ of \cref{th:RN03} are not intersecting.
This latter property allows us to subdivide the remaining edges of $G$ as in the proof of \cref{le:cost-upper-bound} to obtain a graph $\rect{G}$ that satisfies Conditions~$(i)$, $(ii)$ and $(iii)$ of \cref{th:RN03}. Also, the orthogonal representation $H$ obtained from $\rect{G}$ satisfies Properties~\textsf{P1}, \textsf{P2}, and \textsf{P3} of \cref{th:fixed-embedding-cost-one}.
The number of costly bends of $H$ matches the lower bound stated by \cref{le:cost-lower-bound} and Equation~\ref{eq:fixed-embedding-cost} holds with $\flex(f) = \flex(e_0) + \flex(e_1)$.

\item $\flex(e_0) \geq 2$ and $\flex(e_1) \geq 2$. We insert $\flex(e_0)$ degree-2 vertices along $e_0$ and $\flex(e_1)$ degree-2 vertices along $e_1$. These degree-2 vertices satisfy Condition~$(i)$ of \cref{th:RN03} and Condition~$(ii)$ of \cref{th:RN03} for $C_0^*$ and $C_1^*$.
Since we subdivided at least two distinct edges of $f$, by \cref{le:demanding-non-demanding-intersecting} the non-degenerate non-demanding 3-extrovert cycles that still do not satisfy Condition~$(iii)$ of \cref{th:RN03} are not intersecting.
This latter property allows us to subdivide the remaining edges of $G$ as in the proof of \cref{le:cost-upper-bound} to obtain a graph $\rect{G}$ that satisfies Conditions~$(i)$, $(ii)$ and $(iii)$ of \cref{th:RN03}. Also, the orthogonal representation $H$ obtained from $\rect{G}$ satisfies Properties~\textsf{P1}, \textsf{P2}, and \textsf{P3} of \cref{th:fixed-embedding-cost-one}.
The number of costly bends of $H$ matches the lower bound stated by \cref{le:cost-lower-bound} and Equation~\ref{eq:fixed-embedding-cost} holds with $\flex(f) = \flex(e_0) + \flex(e_1) \geq 4$.
\end{itemize}

% Since we shall subdivide at least these two edges of $f$, \cref{le:demanding-non-demanding-intersecting} guarantees that all non-degenerate non-demanding 3-extrovert cycles that do not satisfy Condition~$(iii)$ of \cref{th:RN03} will be not intersecting.
% Therefore, as described in the proof of \cref{le:cost-upper-bound}, we obtain $\rect{G}$ by inserting one costly degree-2 vertex along each element of $D(G) \setminus D_f(G)$, also satisfying all non-demanding 3-extrovert cycles of $G$.
\end{proof}

\cref{le:fixed-embedding-min-bend-mf0,le:fixed-embedding-min-bend-mf3,le:fixed-embedding-min-bend-mf1-leq-2,le:fixed-embedding-min-bend-mf1-eq-3,le:fixed-embedding-min-bend-mf1-eq-4,le:fixed-embedding-min-bend-mf2-part1,le:fixed-embedding-min-bend-mf2-part2} imply the following characterization of the cost $c(G)$ of an embedding-preserving cost-minimum orthogonal representation of a plane triconnected cubic graph $G$.
%; the second part considers the computational complexity of computing a cost minimum orthogonal representation of a plane plane triconnected cubic graph with flexible edges in the fixed embedding setting.

\begin{theorem}\label{th:fixed-embedding-min-bend}
	% Let $G$ be an $n$-vertex plane triconnected cubic graph which may have flexible edges. Let~$f$ be the external face of $G$ and let $m(f)$ be the number of flexible edges along the boundary of $f$. The cost $c(G)$ of a cost-minimum orthogonal representation of $G$ in the fixed embedding setting is $c(G) = |D(G)| + 4 - \min\{4, |D_{f}(G)| + \flex(f) \}$, where $\flex(f)$ is defined as follows.
	Let $G$ be an $n$-vertex plane triconnected cubic graph which may have flexible edges. Let~$f$ be the external face of $G$ and let $m(f)$ be the number of flexible edges along the boundary of $f$.
    There exists a cost-minimum embedding-preserving orthogonal representation of~$G$ that satisfies Properties~\textsf{P1}, \textsf{P2}, and \textsf{P3} of \cref{th:fixed-embedding-cost-one}.
    The cost of such orthogonal representation is $c(G) = |D(G)| + 4 - \min\{4, |D_{f}(G)| + \flex(f) \}$, where $\flex(f)$ is defined as follows.
\begin{itemize}
\item If $m(f) = 0$, then $\flex(f) = 0$.
\item If $m(f) = 1$ and the flexibility of the flexible edge $e_0$ of $f$ is $\flex(e_0) \leq 3$, then $\flex(f) = \min\{\flex(e_0),$ $\coflex(e_0)+2\}$.
\item If $m(f) = 1$ and the flexibility of the flexible edge $e_0=(u,v)$ of $f$ is $\flex(e_0) = 4$, then $\flex(f) = \min\{\flex(e_0),$ $\coflex(e_0)+2\}$ if both $\coflex(u) > 0$ and $\coflex(v) > 0$, while $\flex(f)= \min\{\flex(e_0)-1, \coflex(e_0)+2\}$ if $\coflex(u) = 0$ or $\coflex(v) = 0$.
% \item If $m(f) = 1$, let $e_0$ be the flexible edge of $f$.
% If $\flex(e_0) \leq 3$, then $\flex(f) = \min\{\flex(e_0),\coflex(e_0)+2\}$.
% If $\flex(e_0) = 4$, then $\flex(f) = \min\{\flex(e_0),\coflex(e_0)+2\}$ if both $\coflex(u) > 0$ and $\coflex(v) > 0$, while $\flex(f)= \min\{\flex(e_0)-1, \coflex(e_0)+2\}$ if $\coflex(u) = 0$ or $\coflex(v) = 0$.
\item If $m(f) = 2$ and there is a degenerate demanding 3-extrovert cycle, let $e_0$ and $e_1$ be the flexible edges of $f$. Then $\flex(f) = \min\{3,\flex(e_0)+\flex(e_1)\}$.
\item If $m(f) = 2$ and there is no degenerate demanding 3-extrovert cycle, let $e_0$ and $e_1$ be the two flexible edges of $f$ with $\flex(e_0) \geq \flex(e_1)$. If $\flex(e_0) \geq 3$ and $\flex(e_1) = 1$ then $\flex(f) = \coflex(e_0)+3$, else $\flex(f) = \flex(e_0) + \flex(e_1)$.
\item If $m(f) \geq 3$, $\flex(f) = \sum_{e \in C_o(G)}\flex(e)$.
\end{itemize}
% Also, the procedures described in \cref{le:fixed-embedding-min-bend-mf0,le:fixed-embedding-min-bend-mf3,le:fixed-embedding-min-bend-mf1-leq-2,le:fixed-embedding-min-bend-mf1-eq-3,le:fixed-embedding-min-bend-mf1-eq-4,le:fixed-embedding-min-bend-mf2-part1,le:fixed-embedding-min-bend-mf2-part2} compute an embedding-preserving orthogonal representation of $G$ with cost $c(G)$ that satisfies Properties~\textsf{P1}, \textsf{P2}, and \textsf{P3} of \cref{th:fixed-embedding-cost-one}.
\end{theorem}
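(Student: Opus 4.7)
The proof proceeds by case analysis on $m(f)$, the number of flexible edges along the boundary of the external face $f$, simply assembling the lemmas already established in this section into the unified formula. Specifically, I will partition into the cases: (i) $m(f)=0$ via \cref{le:fixed-embedding-min-bend-mf0}; (ii) $m(f)=1$ with the three subranges $\flex(e_0) \leq 2$, $\flex(e_0)=3$, $\flex(e_0)=4$ via \cref{le:fixed-embedding-min-bend-mf1-leq-2,le:fixed-embedding-min-bend-mf1-eq-3,le:fixed-embedding-min-bend-mf1-eq-4} respectively; (iii) $m(f)=2$ split according to whether the degenerate 3-extrovert cycle $\hat C$ exists and is demanding, via \cref{le:fixed-embedding-min-bend-mf2-part1,le:fixed-embedding-min-bend-mf2-part2}; and (iv) $m(f) \geq 3$ via \cref{le:fixed-embedding-min-bend-mf3}. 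These cases are exhaustive and mutually exclusive, and each lemma establishes both the correct value of $\flex(f)$ and the existence of an embedding-preserving cost-minimum orthogonal representation satisfying Properties \textsf{P1}--\textsf{P3} of \cref{th:fixed-embedding-cost-one}.

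The common framework behind all these lemmas is a two-step construction that I will recall only at a high level. First, one subdivides the edges of $G$ with as few costly degree-2 vertices as possible, producing a plane graph $\rect G$ that satisfies the three conditions of \cref{th:RN03}; then \textsf{NoBendAlg} (\cref{le:NoBendAlg}) yields an orthogonal representation whose inverse is the desired representation $H$ of $G$. The baseline strategy of \cref{le:cost-upper-bound} pays one bend per demanding 3-extrovert cycle in $D(G)$ plus up to four bends along $C_o(G)$. Each of the seven lemmas then refines this baseline by exploiting the flexibility of the edges of $f$ to replace costly bends on $C_o(G)$ with non-costly ones, and its proof in turn compares the resulting cost against the lower bound of \cref{le:cost-lower-bound}: in the best case the upper and lower bounds coincide, while in certain corner cases the achievable cost exceeds the lower bound by exactly one unit. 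The piecewise definition of $\flex(f)$ is engineered precisely to absorb this gap into the single closed-form expression $|D(G)| + 4 - \min\{4, |D_f(G)| + \flex(f)\}$.

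The main obstacle, which is entirely handled inside the individual lemmas, is the delicate interplay between subdividing flexible edges of $f$ and the appearance, by \cref{le:2-extrovert}, of new 2-extrovert cycles that themselves demand bends via Condition~$(ii)$ of \cref{th:RN03}. For example, over-exploiting the flexibility of a single edge $e_0 \in C_o(G)$ (i.e.\ inserting more than two bends on it) creates the 2-extrovert cycle $C^*_0$ whose mirror path $\Pi_{e_0}$ may force an additional costly bend unless $\coflex(e_0) > 0$ or $|D_f(G)|$ is large enough; this is why $\flex(f)$ in \cref{le:fixed-embedding-min-bend-mf1-eq-3,le:fixed-embedding-min-bend-mf1-eq-4} involves $\min\{\flex(e_0), \coflex(e_0) + 2\}$ rather than $\flex(e_0)$ alone, and why the $\flex(e_0)=4$ case further branches on $\coflex(u)$ and $\coflex(v)$. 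A parallel phenomenon arises in the $m(f)=2$ case, where the presence of a degenerate demanding $\hat C$ caps $\flex(f)$ at $3$ (\cref{le:fixed-embedding-min-bend-mf2-part1}), while its absence lets $\flex(f)$ grow up to $\flex(e_0)+\flex(e_1)$ (\cref{le:fixed-embedding-min-bend-mf2-part2}).

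Given all this, my proof of \cref{th:fixed-embedding-min-bend} reduces to verifying that the seven lemmas jointly cover the full parameter space and that the value of $\flex(f)$ stated in each branch of the theorem coincides with the one produced by the corresponding lemma. This verification is direct: for each value of $m(f)$ I match the hypotheses of the relevant lemma, read off its formula for $\flex(f)$, and substitute it into $c(G) = |D(G)| + 4 - \min\{4, |D_f(G)| + \flex(f)\}$, obtaining the claimed cost. The existence of a cost-minimum representation satisfying Properties \textsf{P1}--\textsf{P3} is inherited directly from the corresponding lemma, since \textsf{NoBendAlg} is applied to a $\rect G$ in which every non-flexible edge is subdivided at most once and every flexible edge $e$ at most $\flex(e)$ times, with the sole exception of the 3-cycle external face covered by Properties \textsf{P1} and \textsf{P2}.
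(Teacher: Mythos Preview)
Your proposal is correct and matches the paper's approach exactly: the paper states that \cref{le:fixed-embedding-min-bend-mf0,le:fixed-embedding-min-bend-mf3,le:fixed-embedding-min-bend-mf1-leq-2,le:fixed-embedding-min-bend-mf1-eq-3,le:fixed-embedding-min-bend-mf1-eq-4,le:fixed-embedding-min-bend-mf2-part1,le:fixed-embedding-min-bend-mf2-part2} together imply the theorem, and your case analysis is precisely this assembly (including the observation that for $\flex(e_0)\le 2$ the formula $\min\{\flex(e_0),\coflex(e_0)+2\}$ collapses to $\flex(e_0)$, unifying \cref{le:fixed-embedding-min-bend-mf1-leq-2,le:fixed-embedding-min-bend-mf1-eq-3} into the single bullet of the theorem).
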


\section{Reference Embeddings of Triconnected Cubic Graphs (\cref{th:fixed-embedding-cost-one})}\label{se:ref-embedding}

In this section we show how to compute in linear time an orthogonal representation of a triconnected cubic plane graph whose cost is given by \cref{th:fixed-embedding-min-bend} and that satisfies Properties~\textsf{P1}--\textsf{P3} of \cref{th:fixed-embedding-cost-one}.
%While \cref{th:fixed-embedding-cost-one} could be proven by a non-trivial adaptation of the approach described by Rahman and Nishizeki in~\cite{DBLP:conf/wg/RahmanN02} (recall that in their setting all the edges are inflexible), we present here a proof that is based on the relationship between 3-extrovert and 3-introvert cycles. This relationship is crucial when considering the variable embedding setting in \cref{se:bend-counter} and easily leads to a proof of \cref{th:fixed-embedding-cost-one}.
The proof of~\cref{th:fixed-embedding-cost-one} is based on the relationship between 3-extrovert and 3-introvert cycles, which is crucial when considering the variable embedding setting in \cref{se:bend-counter}.
In \cref{sse:demanding-3-extrovert-reference} we study plane graphs with a particular embedding, which we call ``reference embedding''.
In \cref{sse:demanding-3-extrovert-general} we extend the study to plane graphs without a reference embedding. Finally, in \cref{sse:fixed-embedding-cost-one} we prove \cref{th:fixed-embedding-cost-one}. In the remainder of this section we denote by $G_f$ a plane triconnected cubic graph $G$ whose external face is $f$.

\subsection{Computing demanding 3-extrovert cycles in a reference embedding}\label{sse:demanding-3-extrovert-reference}

The embedding of $G_f$ is a \emph{reference embedding} if no 3-extrovert cycle is incident to the external face~$f$, with the exception of the degenerate 3-extrovert cycles.
For example, the embedding of \cref{fi:reference-embedding-b1} is a reference embedding and the embedding of \cref{fi:reference-embedding-a1} is not a reference embedding.

\begin{figure}[t]
	\centering
	
	\subfloat[]{\label{fi:reference-embedding-b1}\includegraphics[width=0.25\columnwidth,page=1]{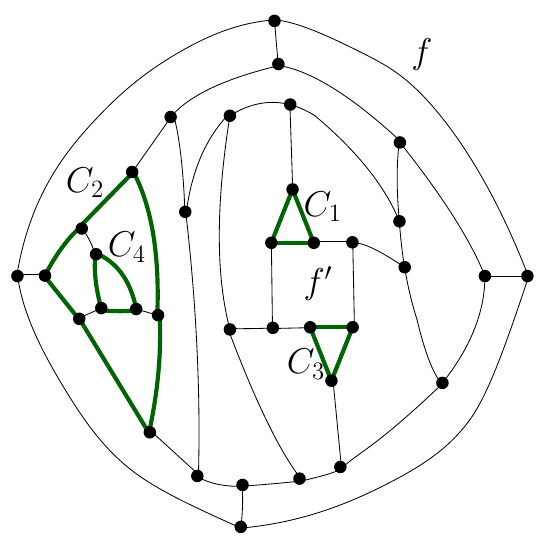}}
	\hfil
	\subfloat[]{\label{fi:reference-embedding-b2}\includegraphics[width=0.25\columnwidth,page=2]{reference-embedding-b}}
	\hfil
	\subfloat[]{\label{fi:reference-embedding-a1}\includegraphics[width=0.25\columnwidth,page=1]{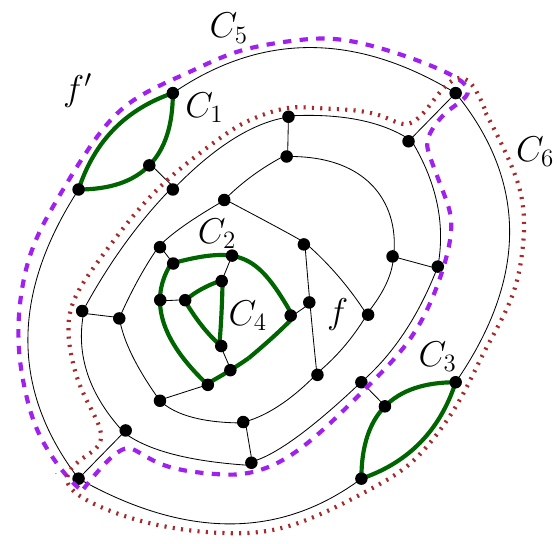}}
	\hfil
	\subfloat[]{\label{fi:reference-embedding-a2}\includegraphics[width=0.25\columnwidth,page=2]{reference-embedding-a}}
	\caption{
	(a) A plane triconnected cubic graph $G_f$ with a reference embedding. (b) The inclusion tree of $G_f$. (c) A non-reference embedding of the graph of Figure (a). (d) The inclusion DAG of $G_{f'}$.}\label{fi:reference-embedding}
\end{figure}

Let $C$ be a 3-extrovert cycle of $G_f$. The three faces of $G_f$ that are incident to the legs of $C$ are called \emph{leg faces} of $C$.
Since in a reference embedding all 3-extrovert cycles incident to $f$ are degenerate, the embedding of $G_f$ is a reference embedding if and only if $f$ is not a leg face of any non-degenerate 3-extrovert cycle; this provides an equivalent definition of reference embedding.
For example, the external face $f'$ in \cref{fi:reference-embedding-a1} is a leg face of the four non-degenerate 3-extrovert cycles $C_1$, $C_3$, $C_5$, and $C_6$ and the embedding is not a reference embedding. In the reference embedding of \cref{fi:reference-embedding-b1} the external face $f$ is not a leg face of any non-degenerate 3-extrovert cycle.
%Note that, since $G$ is a triconnected cubic graph, any two leg faces of the same 3-extrovert (3-introvert) cycle share exactly one leg.

\begin{lemma}\label{le:ref-embedding-testing}
	Let $G_f$ be an $n$-vertex plane triconnected cubic graph. There exists an $O(n)$-time algorithm that tests whether the embedding of $G_f$ is a reference embedding.
\end{lemma}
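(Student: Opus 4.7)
The plan rests on the following reformulation, immediate from the definitions: the face $f$ is a leg face of a 3-extrovert cycle $C$ if and only if at least one of the three legs of $C$ is an edge of the boundary $C_o(G_f)$. Indeed, each external leg of $C$ is incident to exactly two faces, both lying in the exterior of $C$ and hence among the (at most three) leg faces of $C$; conversely, by definition, a leg face is a face incident to some leg. Consequently, deciding whether the embedding is a reference embedding amounts to verifying, for every non-degenerate 3-extrovert cycle of $G_f$, that no leg of the cycle belongs to $C_o(G_f)$.

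To exploit this reformulation I would first enumerate all 3-extrovert cycles of $G_f$ together with their three legs in $O(n)$ total time. Since $G$ is triconnected and cubic, the number of 3-extrovert cycles is $O(n)$, and the enumeration can be produced via the following augmentation that makes \cref{le:genealogicaltree_comp} applicable: subdivide three pairwise non-adjacent edges of $C_o(G_f)$ with new vertices $u_1,u_2,u_3$ and add a virtual apex $v^*$ outside $f$ adjacent to $u_1,u_2,u_3$. The resulting graph $G^*$ is plane, cubic, and triconnected, and the subdivided outer boundary $C'$ is a (degenerate) 3-extrovert cycle of $G^*$. Invoking \cref{le:genealogicaltree_comp} on $C'$ yields the genealogical tree $T_{C'}$ in linear time; its nodes distinct from the root correspond bijectively to the 3-extrovert cycles of the original $G_f$, and their legs in $G_f$ are recovered by undoing the three subdivisions when necessary. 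I also precompute $C_o(G_f)$ and mark each of its edges with a boolean flag in $O(n)$ time.

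Then, for each enumerated 3-extrovert cycle $C$ with legs $l_1,l_2,l_3$, I would check in $O(1)$ time (i)~whether the three non-$C$ endpoints of $l_1,l_2,l_3$ are distinct (cf.\ \cref{pr:legs}), i.e., whether $C$ is non-degenerate; and, if so, (ii)~whether at least one $l_i$ is an edge of $C_o(G_f)$, using the precomputed flag. The embedding is a reference embedding iff no non-degenerate 3-extrovert cycle passes test~(ii), giving an overall $O(n)$-time algorithm. The main obstacle is the enumeration step, since \cref{le:genealogicaltree_comp} requires an initial 3-extrovert cycle while $C_o(G_f)$ itself is not 3-extrovert in $G_f$; the augmentation above resolves this provided it preserves planarity, cubicity, and triconnectivity, which holds once three pairwise non-adjacent edges of $C_o(G_f)$ are chosen for subdivision, and the small corner cases (e.g.\ $G=K_4$ or outer faces of length three) are dispatched directly in constant time.
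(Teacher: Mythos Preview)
Your reformulation (that $f$ is a leg face of $C$ iff some leg of $C$ lies on $C_o(G_f)$) is correct, but the enumeration step has a real gap. The asserted bijection between non-root nodes of $T_{C'}$ and 3-extrovert cycles of $G_f$ is false. If a 3-extrovert cycle $C$ of $G_f$ contains one of the three subdivided boundary edges \emph{as an edge of $C$} (not as a leg), then in $G^*$ the subdivision vertex $u_i$ lies on the corresponding cycle and the edge $(u_i,v^*)$ is an additional external leg; so that cycle is at least $4$-extrovert in $G^*$ and does not appear in $T_{C'}$. The cycles lost this way are precisely those that share edges with $C_o(G_f)$ --- i.e., exactly the cycles with $f$ as a leg face, the witnesses you are trying to detect. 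Concretely, let $C$ be a non-degenerate 3-extrovert cycle with $f$ as a leg face and let $C^t$ be its twin (see \cref{pr:cct}): the contour paths of $C$ and $C^t$ along $f$ together cover $C_o(G_f)$ minus the two shared legs. If the three subdivided edges hit both contour paths (easy to arrange whenever $|C_o(G_f)|\ge 6$), both $C$ and $C^t$ become $4$-extrovert in $G^*$, neither appears in $T_{C'}$, and the algorithm may wrongly report a reference embedding. A secondary issue: three pairwise non-adjacent boundary edges exist only when $|C_o(G_f)|\ge 6$; outer faces of length $4$ or $5$ are not covered by your sketch and are not dispatchable by the length-$3$ argument.

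The paper avoids enumerating 3-extrovert cycles altogether. It passes to the dual plane graph $G_f^*$: the three leg faces of any non-degenerate 3-extrovert cycle of $G_f$ form a separating $3$-cycle in $G_f^*$, so $G_f$ has a reference embedding iff the dual vertex of $f$ lies on no separating triangle. Since all triangles of a planar graph can be listed in $O(n)$ time~\cite{cn-asla-85}, this yields the linear-time test directly.
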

\begin{proof}
    Let $G^*_f$ be the dual plane graph of $G_f$. Observe that the three leg faces of any non-degenerate 3-extrovert cycle of $G_f$ form a separating 3-cycle in $G^*_f$. If the vertex of $G^*_f$ corresponding to $f$ does not belong to any separating 3-cycle we have that $G_f$ has a reference embedding. Since all separating 3-cycles of $G^*_f$ can be computed in $O(n)$ time~\cite{cn-asla-85}, the statement follows.
\end{proof}

The \emph{inclusion DAG} of $G_f$ is a single-source directed acyclic graph whose nodes are the non-degenerate 3-extrovert cycles of $G_f$ plus a node corresponding to cycle $C_o(G_f)$.
%Since every non-root node of $T_f$ represents a non-degenerate 3-extrovert cycle of $G_f$, we shall denote by $C$ both a cycle of $G_f$ and its corresponding node in~$T_f$.
Let $C_1$ and $C_2$ be two nodes of the inclusion DAG of $G_f$. The inclusion DAG has an arc oriented from $C_1$ to $C_2$ if $C_2$ is a child-cycle of $C_1$ in $G_f$. Let $C$ be a non-degenerate 3-extrovert cycle such that $C$ is not a child-cycle of any other 3-extrovert cycle of $G_f$. In the inclusion DAG we have an arc oriented from~$C_o(G_f)$ to~$C$.
%One could think that an inclusion tree can be defined for any choice of the external face $f$ of $G$ by simply connecting the roots of the genealogical trees of the maximal non-degenerate 3-extrovert cycles to a common node. Unfortunately, this is not always the case when we have intersecting 3-extrovert cycles.
%and we need to identify a reference embedding to guarantee the existence of an inclusion tree.
For example, \cref{fi:reference-embedding-a1} shows a plane triconnected cubic graph $G_f$ and \cref{fi:reference-embedding-a2} shows its inclusion DAG.

%Since $C_2$ is a child-cycle of both the two intersecting 3-extrovert cycles $C_5$ and $C_6$, if we connected the roots of the two genealogical trees $T_{C_5}$ and $T_{C_6}$ to a common node, we would obtain a cyclic graph. As we shall see, this cannot happen when the given embedding is a reference embedding.

\begin{lemma}\label{le:inclusion-tree}
Let $G_f$ be an $n$-vertex plane triconnected cubic graph with a reference embedding. The inclusion DAG of $G_f$ is a tree and it can be computed in $O(n)$ time.
\end{lemma}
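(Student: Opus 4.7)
My plan is to establish the lemma in two parts: first the tree structure of the inclusion DAG, and then its linear-time construction.

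For the tree structure, I would start from the defining property of a reference embedding: the external face $f$ is not a leg face of any non-degenerate 3-extrovert cycle. The first step is to observe that this forces every non-degenerate 3-extrovert cycle $C$ of $G_f$ to share no edge with $C_o(G_f)$. Indeed, for a non-degenerate $C$, the three contour paths and the three legs partition the exterior of $C$ into exactly three face-regions, each incident to two legs and to one contour path, and hence each being a leg face; so any face sharing an edge with $C$ on the exterior of $C$ is a leg face, and an edge of $C$ on $C_o(G_f)$ would force $f$ to be such a leg face. The contrapositive of Property~\ref{pr:intersecting-3-extrovert} then yields that no two non-degenerate 3-extrovert cycles of $G_f$ intersect. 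A short topological case analysis on two non-intersecting cycles $C_1$ and $C_2$ (distinguishing whether they share no edges, or share a contour path with the non-shared portion of one lying inside vs.\ outside the other) shows that either $G(C_1) \subseteq G(C_2)$, $G(C_2) \subseteq G(C_1)$, or the face-interiors of $G(C_1)$ and $G(C_2)$ are disjoint. Hence the family $\{G(C) : C \text{ non-degenerate 3-extrovert cycle of } G_f\}$ is laminar; together with $C_o(G_f)$ as the unique maximal element, this gives every non-root node a unique parent (the minimal non-degenerate 3-extrovert cycle strictly containing it, or $C_o(G_f)$ if none exists), so the inclusion DAG is a tree.

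For the linear-time construction, I would augment $G_f$ to reduce to \cref{le:genealogicaltree_comp}. Concretely, I plan to subdivide three suitably-chosen edges of $C_o(G_f)$ with new degree-two vertices $w_1, w_2, w_3$ and add a new vertex $v^*$ drawn inside $f$ adjacent to $w_1, w_2, w_3$. The resulting plane graph $G'$ is cubic; a direct verification using the triconnectivity of $G_f$ and the choice of the subdivided edges shows that $G'$ remains triconnected. In $G'$, the cycle $C^*$ obtained from $C_o(G_f)$ via the three subdivisions is a non-degenerate 3-extrovert cycle whose legs are $w_1 v^*, w_2 v^*, w_3 v^*$. Moreover, any 3-extrovert cycle of $G'$ that contains $v^*$ must be degenerate, as all three of its legs would emanate from the single degree-3 vertex $v^*$; consequently, the non-degenerate 3-extrovert cycles of $G_f$ are in bijection with the 3-extrovert cycles of $G'$ strictly contained in $G'(C^*)$. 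Therefore, the inclusion tree of $G_f$ coincides, up to relabeling $C^*$ as $C_o(G_f)$, with the genealogical tree $T_{C^*}$ of $G'$, which by \cref{le:genealogicaltree_comp} can be computed in $O(n)$ time.

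The main technical obstacle I foresee is twofold. On the combinatorial side, the laminar case analysis requires carefully tracking which side of a shared contour path the non-shared portion of a cycle lies on, and correctly identifying $G(C_1) \cap G(C_2)$ in each subcase. On the algorithmic side, the augmentation must guarantee both triconnectivity of $G'$ and an exact bijection of 3-extrovert cycles; in particular, a careful choice of the three edges of $C_o(G_f)$ to subdivide (for instance, three pairwise non-adjacent edges when they exist) is needed, and boundary cases where $C_o(G_f)$ has only three or four edges may require separate treatment, possibly by working in the planar dual to identify separating 3-cycles uniformly.
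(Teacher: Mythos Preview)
Your tree-structure argument is essentially the paper's: both hinge on the observation that in a reference embedding no non-degenerate 3-extrovert cycle touches $C_o(G_f)$, whence by (the contrapositive of) \cref{pr:intersecting-3-extrovert} no two such cycles can intersect. The paper phrases this as a direct contradiction on a hypothetical node with two incoming arcs (two parents force the parents to intersect, hence to lie on $C_o(G_f)$, hence to be degenerate), and so skips the explicit laminar case analysis you sketch; your version simply spells the same idea out more fully.

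For the $O(n)$ construction the two routes genuinely diverge. The paper observes that, once no two non-degenerate 3-extrovert cycles intersect, the inclusion tree is just $C_o(G_f)$ linked to the roots of the genealogical trees of the maximal such cycles, and invokes Lemma~3 of~\cite{DBLP:journals/jgaa/RahmanNN99} to compute all those trees in overall $O(n)$ time --- no augmentation, no triconnectivity check, no boundary cases. Your augmentation by $v^*$ also works and has the appeal of a single clean call to \cref{le:genealogicaltree_comp}; the bijection you claim between non-degenerate 3-extrovert cycles of $G_f$ and proper descendants of $C^*$ in $T_{C^*}$ does hold (degenerate cycles of $G_f$ pick up extra legs to $v^*$ and cease to be 3-extrovert, and triconnectivity rules out the $k+j=3$ cases with $j\in\{1,2\}$). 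However, your route carries exactly the overhead you flag --- verifying triconnectivity of $G'$ and handling external faces with at most four edges --- that the paper avoids entirely. One slip worth fixing: the cycle $C^*$ you build is \emph{degenerate}, not non-degenerate, since its three legs all meet at $v^*$; this is harmless because \cref{le:genealogicaltree_comp} does not require non-degeneracy, but the statement should be corrected.
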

\begin{proof}
We show that every node of the inclusion DAG of $G_f$ different from $C_o(G_f)$ has exactly one incoming edge. Suppose for a contradiction that there exists in the inclusion DAG of $G_f$ a node $C$ that has two incoming arcs from $C_1$ and $C_2$. This means that $C$ is a child-cycle of both $C_1$ and $C_2$ and thus $C_1$ and $C_2$ are intersecting 3-extrovert cycles of $G_f$ such that there is no inclusion relationship between $C_1$ and $C_2$. By \cref{pr:intersecting-3-extrovert} $C_1$ and $C_2$ contain some edges of $C_o(G_f)$. However, by definition of reference embedding, $G_f$ is such that all 3-extrovert cycles containing edges of $C_o(G_f)$ are degenerate, while $C_1$ and $C_2$ are not degenerate because they are nodes of the inclusion DAG, a contradiction.
Therefore, the inclusion DAG of $G_f$ is a tree. Namely, it is the tree obtained by connecting the node representing $C_o(G_f)$ to the roots of the genealogical trees of all non-degenerate 3-extrovert cycles that are not child-cycles of any other non-degenerate 3-extrovert cycle of~$G_f$. Since these genealogical trees can be computed in $O(n)$ time~\cite[Lemma 3]{DBLP:journals/jgaa/RahmanNN99}, also the inclusion DAG of $G_f$ can be computed in $O(n)$ time.
\end{proof}

The inclusion DAG of a plane graph $G_f$ whose embedding is a reference embedding is called \emph{inclusion tree} of $G_f$ and is denoted as $T_f$. For example, \cref{fi:reference-embedding-b1} shows a plane triconnected cubic graph $G_{f}$ and \cref{fi:reference-embedding-b2} shows its inclusion tree.

To describe a non-intersecting non-degenerate 3-extrovert cycle $C$ of $G_f$ we use three pointers to its contour paths. Each contour path $P$ of $C$ is represented by a sequence of edges and pointers to the contour paths of those child-cycles of $C$ sharing edges with $P$. Also, we assume to have pointers to the three legs of $C$ and to the three leg faces of $C$. We call such a representation an \emph{explicit representation} of the non-intersecting non-degenerate 3-extrovert cycle~$C$.

\begin{lemma}\label{le:explicit_representation}
Let $G_f$ be an $n$-vertex plane triconnected cubic graph whose embedding is a reference embedding. An explicit representation of the (non-intersecting) non-degenerate 3-extrovert cycles of $G_f$ can be computed in $O(n)$-time.
\end{lemma}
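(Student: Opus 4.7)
The plan is to exploit the tree structure given by \cref{le:inclusion-tree}: the inclusion DAG $T_f$ of $G_f$ is a tree, and every edge of $G_f$ that lies on the boundary of some non-degenerate 3-extrovert cycle belongs to a uniquely determined deepest such cycle $C^*(e)$, namely the unique node of $T_f$ closest to the leaves whose contour contains $e$. This observation already implies that the total size of all explicit representations is $O(n)$: each edge is listed explicitly at most once (within the contour paths of $C^*(e)$), each leg and leg face of each cycle is referenced $O(1)$ times, and each child-cycle pointer corresponds to a parent/child edge of $T_f$, which has $O(n)$ nodes. The goal is then to produce all representations within a budget proportional to this size.

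The construction proceeds in a single post-order traversal of $T_f$. First, compute $T_f$ in $O(n)$ time by \cref{le:inclusion-tree} and, for every cycle $C$, identify its three legs, three leg vertices, and three leg faces in $O(1)$ time per cycle: since $G_f$ is cubic, each vertex of $C$ is incident to exactly one edge not lying on $C$, and the parent pointer of $C$ in $T_f$ together with the embedding of $G_f$ identify the three such edges that leave $G(C)$. Then, when visiting $C$ in post-order, use the already-built representations of the child cycles of $C$ in $T_f$ to walk around the boundary of $C$, splitting it at the leg vertices into three contour paths and assembling, for each contour path, the required sequence of edges and child-cycle pointers: an edge $e$ encountered during the walk is a solo edge of $C$ (to be appended to the current contour path) if it does not belong to any descendant of $C$ in $T_f$, and otherwise it is the first edge of a sub-path shared with a child cycle $C'$ (to be replaced by a pointer into $C'$'s representation).

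The main technical obstacle is making the walk around $C$ cost time proportional to the size of $C$'s explicit representation, rather than to the length of $C$'s boundary, which could sum to $\Omega(n^2)$ across all cycles. This is handled by equipping every already-processed cycle $C'$ with constant-time navigation pointers: along with each contour path of $C'$ we store its two endpoint vertices on the parent cycle and the pair of edges of the parent that continue past those endpoints, as well as a mark on the first and last edge of the shared sub-path. When walking around $C$ and entering a shared sub-path with a child $C'$, we output a single pointer to the corresponding contour path of $C'$ and jump in $O(1)$ time to the next edge of $C$ beyond the shared sub-path; whenever the current edge is solo, we append it to the current contour path in $O(1)$ time as well. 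The work performed at $C$ is therefore proportional to the size of $C$'s explicit representation, and by the structural observation of the first paragraph the total time is $O(n)$.
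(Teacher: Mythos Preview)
Your proposal is correct and follows essentially the same approach as the paper: compute the inclusion tree $T_f$ via \cref{le:inclusion-tree}, do a post-order traversal, and for each cycle $C$ assemble each contour path by walking along the corresponding leg face, appending solo edges and jumping in $O(1)$ time over any sub-path already recorded as a contour path of a child cycle. The paper phrases the walk as a traversal of the leg-face boundary rather than of $C$ itself, which makes the jump mechanism slightly more transparent (you just resume the face traversal at the edge after the child's contour path), but the underlying idea and the $O(n)$ accounting are the same as yours.
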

\begin{proof}
By \cref{le:inclusion-tree} we compute in $O(n)$-time the inclusion tree $T_f$ of $G_f$.
We perform a post-order visit of $T_f$. For each leaf $C$ of $T_f$, we equip each contour path of $C$ with the sequence of its edges. Let $C$ be an internal node of $T_f$ and let $f_C$ be a leg face of $C$. Let $e_u$ and $e_v$ be the two legs of $C$ incident to $f_C$, and let $u$ and $v$ be the two leg vertices of $C$ incident to $e_u$ and $e_v$, respectively. We traverse the portion of the boundary of $f_C$ from $u$ to $v$ that does not include $e_u$ and $e_v$; this portion coincides with a contour path $P$ of $C$. Suppose that $e$ is the current edge encountered during this traversal. If $e$ is not the starting edge of a contour path $P'$ of a child $C'$ of $C$ in $T_f$, then we directly add $e$ to the sequence. Otherwise, we avoid visiting $P'$, we add a pointer to $P'$ to the sequence of edges and pointers associated with $P$, and we restart the traversal of $f_C$ from the edge following the last edge of $P'$. With this procedure we have that each edge of the graph is explicitly inserted in a sequence of a contour path only when it is encountered for the first time. Also, for each sequence of a contour path of a cycle $C$, we insert in the sequence a number of pointers to other contour paths bounded by the degree of node $C$ in $T_f$. Since the sum of the degrees of the nodes of $T_f$ is $O(n)$, the sum of the lengths of the sequences that describe all contour paths is $O(n)$, which implies that the time complexity of the procedure is $O(n)$.
\end{proof}

By \cref{le:explicit_representation} we have that it is possible to compute the representations above in $O(n)$ time by performing a traversal of $T_f$ and, from now on, we can assume to have such representations. We call \emph{implicit representation} one in which for each contour path of $C$ only the incident leg face and the attached legs are stored. We denote as $\fx(P)$ the number of flexible edges along $P$.
% and as $\Fx(G_f)$ the values $\fx(P)$ for all contour paths over all the non-degenerate 3-extrovert cycles of~$G_f$.

\begin{lemma}\label{le:fxe}
Let $G_f$ be an $n$-vertex plane triconnected cubic graph whose embedding is a reference embedding and let $\{P_1, P_2, \dots, P_h\}$ be the set of contour paths over all non-degenerate 3-extrovert cycles of~$G_f$. The values $\fx(P_1), \fx(P_2), \dots, \fx(P_h)$ can be computed in overall $O(n)$ time.
\end{lemma}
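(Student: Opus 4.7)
The plan is to exploit the explicit representation of the contour paths provided by \cref{le:explicit_representation} and to compute the values $\fx(P_i)$ by a single post-order traversal of the inclusion tree $T_f$ of $G_f$ (which exists by \cref{le:inclusion-tree} since the embedding is a reference embedding). First, we invoke \cref{le:inclusion-tree,le:explicit_representation} to build $T_f$ and, for every non-degenerate 3-extrovert cycle $C$, the explicit representation of each of its three contour paths as a sequence of edges directly belonging to $C$ and of pointers to contour paths of child-cycles of $C$ in $T_f$; this preprocessing takes $O(n)$ time.

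Next, I would perform a post-order visit of $T_f$ and maintain, at each cycle $C$, the three values of $\fx(\cdot)$ for its three contour paths. For a leaf $C$ of $T_f$, each contour path $P$ is a pure sequence of edges, so $\fx(P)$ is obtained by scanning the sequence and counting the edges $e$ with $\flex(e) > 0$. For an internal node $C$, a contour path $P$ of $C$ is a concatenation of edges directly stored in its sequence and of contour paths $P'_1, \dots, P'_k$ of child-cycles pointed to by the sequence; by \cref{le:independent-child-cycles} these pieces are edge-disjoint, so
\[
\fx(P) \;=\; \#\{e \in P \text{ stored directly}: \flex(e)>0\} \;+\; \sum_{j=1}^{k} \fx(P'_j),
\]
and each $\fx(P'_j)$ is already available when $C$ is visited. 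Since every edge of $G_f$ appears explicitly in at most one contour-path sequence (as guaranteed by the construction in the proof of \cref{le:explicit_representation}) and the total number of pointers across all sequences is $O(n)$ (being bounded by the sum of the degrees of the nodes of $T_f$), the total work done at all cycles is linear in the overall size of the explicit representations, which is $O(n)$.

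The only potential obstacle is verifying that the decomposition of each contour path via pointers yields a partition of its edges rather than an overlapping cover, so that the recursive formula above neither misses nor double-counts flexible edges. This is exactly the content of \cref{le:independent-child-cycles} together with the way sequences are filled in the proof of \cref{le:explicit_representation}: an edge $e$ of a contour path $P$ of $C$ is stored directly in the sequence of the \emph{innermost} descendant cycle whose contour path contains $e$, and it is reached from $P$ by following a unique chain of pointers. Hence the recursion is well-defined and the overall $O(n)$ bound follows.
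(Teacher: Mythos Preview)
Your proposal is correct and follows essentially the same approach as the paper: build the explicit representation via \cref{le:explicit_representation}, then perform a post-order traversal of $T_f$, computing $\fx(P)$ at leaves by a direct scan and at internal nodes by summing the contributions of directly stored edges and the already-computed values $\fx(P')$ for the pointed-to child contour paths. Your additional remark about the decomposition being a partition (so no double-counting occurs) is a useful clarification that the paper leaves implicit.
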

\begin{proof}
By \cref{le:inclusion-tree} the inclusion tree $T_f$ of $G_f$ can be computed in $O(n)$ time.
For each non-root node $C$ of $T_f$ we assume to have pointers to the child-cycles of $C$.
We compute $\fx(P_1), \fx(P_2), \dots, \fx(P_h)$ by performing a post-order traversal of $T_f$. For each contour path $P$ belonging to a leaf of $T_f$, we compute $\fx(P)$ by simply traversing all its edges. For each contour path $P$ belonging to an internal node of $T_f$, we compute $\fx(P)$ by traversing the sequence of edges and pointers representing $P$ and summing up one unit for each flexible edge of the sequence and summing up $\fx(P')$ units for each pointer of the sequence to some contour path $P'$. Since every element (edge or pointer) of the explicit representation of the non-degenerate 3-extrovert cycles of $G_f$ is visited $O(1)$ times, $\fx(P_1), \fx(P_2), \dots, \fx(P_h)$ are computed in overall $O(n)$ time.
\end{proof}

For a graph $G_f$ with a reference embedding we can efficiently compute the coloring of the contour paths of its non-degenerate 3-extrovert cycles.

\begin{lemma}\label{le:3-extrovert-coloring-linear}
	Let $G_f$ be an $n$-vertex plane triconnected cubic graph whose embedding is a reference embedding.
	The red-green-orange coloring of the non-degenerate 3-extrovert cycles of $G_f$ that satisfies the \textsc{3-Extrovert Coloring Rule} can be computed in $O(n)$ time. %Also, an explicit representation of the non-degenerate demanding 3-extrovert cycles of $G_f$ can be computed in $O(n)$ time.
\end{lemma}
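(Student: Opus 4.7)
The plan is to compute the coloring by a single post-order traversal of the inclusion tree $T_f$, leveraging all the data structures already constructed in this section. By \cref{le:inclusion-tree}, $T_f$ is available in $O(n)$ time; by \cref{le:explicit_representation}, every non-degenerate 3-extrovert cycle~$C$ of~$G_f$ comes equipped with an explicit representation of its three contour paths, whose sizes sum to~$O(n)$; and by \cref{le:fxe}, every contour path~$P$ is already labeled with the count $\fx(P)$ of its flexible edges. These three ingredients together give, for each contour path~$P$, $O(1)$-time access to the flexibility information and direct pointers to the contour paths of the child-cycles that share edges with $P$.

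For each cycle $C$ visited in post-order (so that the colors of the contour paths of all child-cycles of $C$ in $T_C$ are already available), the color of each contour path~$P$ of~$C$ is decided as follows. First, we read $\fx(P)$ in $O(1)$ time. Second, we scan the pointers appearing in the explicit representation of~$P$ and set a boolean ``green-child'' flag for $P$ to \emph{true} if at least one of those pointers refers to a contour path of a child-cycle of $C$ that has already been colored green. With these two pieces of information for each of the three contour paths of~$C$, we can decide in $O(1)$ whether Case~1 of the \textsc{3-Extrovert Coloring Rule} applies (no contour path of~$C$ has a flexible edge and no contour path has its green-child flag set), in which case all three paths are colored green; otherwise we apply Case~2 and color each~$P$ orange if $\fx(P)>0$, green if its green-child flag is set and $\fx(P)=0$, or red otherwise.

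Correctness follows directly from the definitions and from the post-order of the visit, which ensures that the colors of the child-cycles' contour paths are available when they are needed. For the running time, the work performed at $C$ is proportional to the size of the explicit representations of its three contour paths, so the total work across all cycles is bounded by the total size of all explicit representations, which is $O(n)$ by \cref{le:explicit_representation}. All auxiliary constructions ($T_f$, the explicit representations, and the values $\fx(P)$) are obtained in $O(n)$ time, yielding an overall $O(n)$-time algorithm.

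The only delicate point is to avoid re-traversing the edges of descendant cycles while checking, at $C$, whether any contour path of $C$ contains a green contour path of a child-cycle. This is resolved precisely by the explicit representation: each child contour path shared with $P$ appears as a single pointer in the sequence of~$P$, so a constant-time lookup per pointer is enough, and the total number of such pointers is bounded by a constant times the number of parent-child arcs of $T_f$, i.e.~$O(n)$.
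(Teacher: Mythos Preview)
Your proof is correct and takes essentially the same approach as the paper: a post-order traversal of $T_f$ using the precomputed values $\fx(P)$ and a two-phase coloring (tentative orange/green followed by resolving the remaining paths to green or red). The only cosmetic difference is that you locate the green child contour paths by scanning the pointers in the explicit representation of~$P$, whereas the paper phrases the same lookup as finding the child-cycles that share the leg face of~$P$; these identify exactly the same paths, and both analyses arrive at the $O(n)$ bound via the total size of the explicit representations.
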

\begin{proof}
 	
   Let $\{P_1, P_2, \dots, P_h\}$ be the set of contour paths over all non-degenerate 3-extrovert cycles of~$G_f$. We compute the values $\fx(P_1), \fx(P_2), \dots, \fx(P_h)$ in $O(n)$ time by \cref{le:fxe}.
   We color the contour paths of the non-degenerate 3-extrovert cycles of $G_f$ through a post-order visit of the inclusion tree~$T_f$.
   Let $C$ be a non-root node of $T_f$. When visiting $C$ we perform two steps. In the first step we assign an orange or green color to some contour paths of $C$ (the color of a contour path may remain undefined at the end of this step). In the second step we assign a color to each of the remaining contour paths of $C$. We now describe the two steps and then discuss their time complexity.
	
   \smallskip\noindent \textsf{Step 1}: Let $P$ be a contour path of $C$ and let $f'$ be the leg face of $C$ incident to $P$. Let $C_1, \dots, C_k$ be the child-cycles of $C$ in $T_f$ having $f'$ as a leg face (by the post-order visit the contour paths of $C_1, \dots, C_k$ are already colored). Denote by $P_i$ the contour path of cycle $C_i$ incident to $f'$, $1 \leq i \leq k$.
    \begin{itemize}
	   \item[--] If $\fx(P)>0$, $P$ is colored orange (see Case~2(a) of the \textsc{3-Extrovert Coloring Rule}).
	   \item[--] If $\fx(P)=0$ and there exists a path $P_i$, $1 \le i \le k$, such that $P_i$ is green, $P$ is colored green (see Case~2(b) of the \textsc{3-Extrovert Coloring Rule}).
	   \item[--] Otherwise the color of $P$ remains undefined.
    \end{itemize}
		
    \smallskip\noindent \textsf{Step 2}: If the color of each contour path of $C$ is undefined, we color each such path as green (see Case~1 of the \textsc{3-Extrovert Coloring Rule}) and this is the only case where $C$ is demanding. Else, each contour path of $C$ with undefined color is colored red (see Case~2(c) of the \textsc{3-Extrovert Coloring Rule}).
		
	\smallskip\noindent
	Since $T_f$ has $O(n)$ size and the coloring procedure visits each contour path $O(1)$ times, all contour paths of the non-degenerate 3-extrovert cycles of $G_f$ are colored in $O(n)$ time.
\end{proof}

% \begin{lemma}\label{le:demanding-3-extrovert-lineartime}
% 	Let $G_f$ be an $n$-vertex plane triconnected cubic graph with a reference embedding. The demanding 3-extrovert cycles of $G_f$ can be computed in $O(n)$ time.
% \end{lemma}

% \begin{proof}
% 	We compute in $O(n)$ time the inclusion tree $T_f$ of $G_f$ by \cref{le:inclusion-tree} and we compute in $O(n)$ time the red-green-orange coloring of the contour paths of the non-degenerate 3-extrovert cycles of $G_f$ by \cref{le:3-extrovert-coloring-linear}.
% 	Let $C$ be a non-root node of $T_f$. If $C$ has a contour path that is not green, $C$ is not demanding. Otherwise, $C$ has three green contour paths $P_1, P_2$, and $P_3$. Recall that each contour path $P_i$ of $C$, $i=1,2,3$, is represented by the sequence of edges and pointers to the contour paths of the child-cycles of $C$ that share edges with $P_i$. If, for some $i=\{1,2,3\}$, the sequence of $P_i$ contains a pointer to a green contour path, $C$ is not demanding, otherwise $C$ is demanding.
% 	Since each contour path is visited $O(1)$ times and we have in total $O(n)$ contour paths the statement follows.
% \end{proof}	

%%%%%%%%
%%%%%%%%
%%%%%%%%

\subsection{Computing demanding 3-extrovert cycles in an arbitrary embedding}\label{sse:demanding-3-extrovert-general}

In this section we show how to compute in linear time the set of demanding 3-extrovert cycles for a triconnected cubic plane graph $G_{f'}$ whose embedding is not a reference embedding.
The strategy is to first change the external face of $G_{f'}$ obtaining a triconnected cubic plane graph $G_f$ whose embedding is a reference embedding and then to efficiently compute the demanding 3-extrovert cycles of $G_{f'}$ by studying the properties of the demanding 3-extrovert and 3-introvert cycles of $G_f$.

Recall that a cycle $C$ is a \emph{3-introvert cycle} if $C$ has exactly three internal legs and $C$ has no internal chord. As for 3-extrovert cycles, the three faces that are incident to the legs of $C$ are called \emph{leg faces} of $C$.
For example, cycle $C_1$ of \cref{fi:3intro-3extro-a} is a 3-introvert cycle and faces $f$, $f''$, and $f'''$ are its leg faces.
A \emph{degenerate 3-introvert cycle} of a plane triconnected cubic graph is a 3-introvert cycle such that its three internal legs are incident to the same vertex. For example, cycle $C_2$ of \cref{fi:3intro-3extro-c} is a degenerate 3-introvert cycle. Note that, there is a degenerate 3-introvert cycle for each vertex non-incident to the external face.

\begin{figure}[t]
	\centering
	\subfloat[]{\label{fi:3intro-3extro-a}\includegraphics[width=0.25\columnwidth]{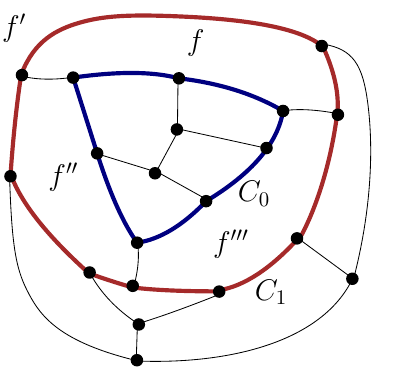}}
	\hfil
	\subfloat[]{\label{fi:3intro-3extro-c}\includegraphics[page=2,width=0.25\columnwidth]{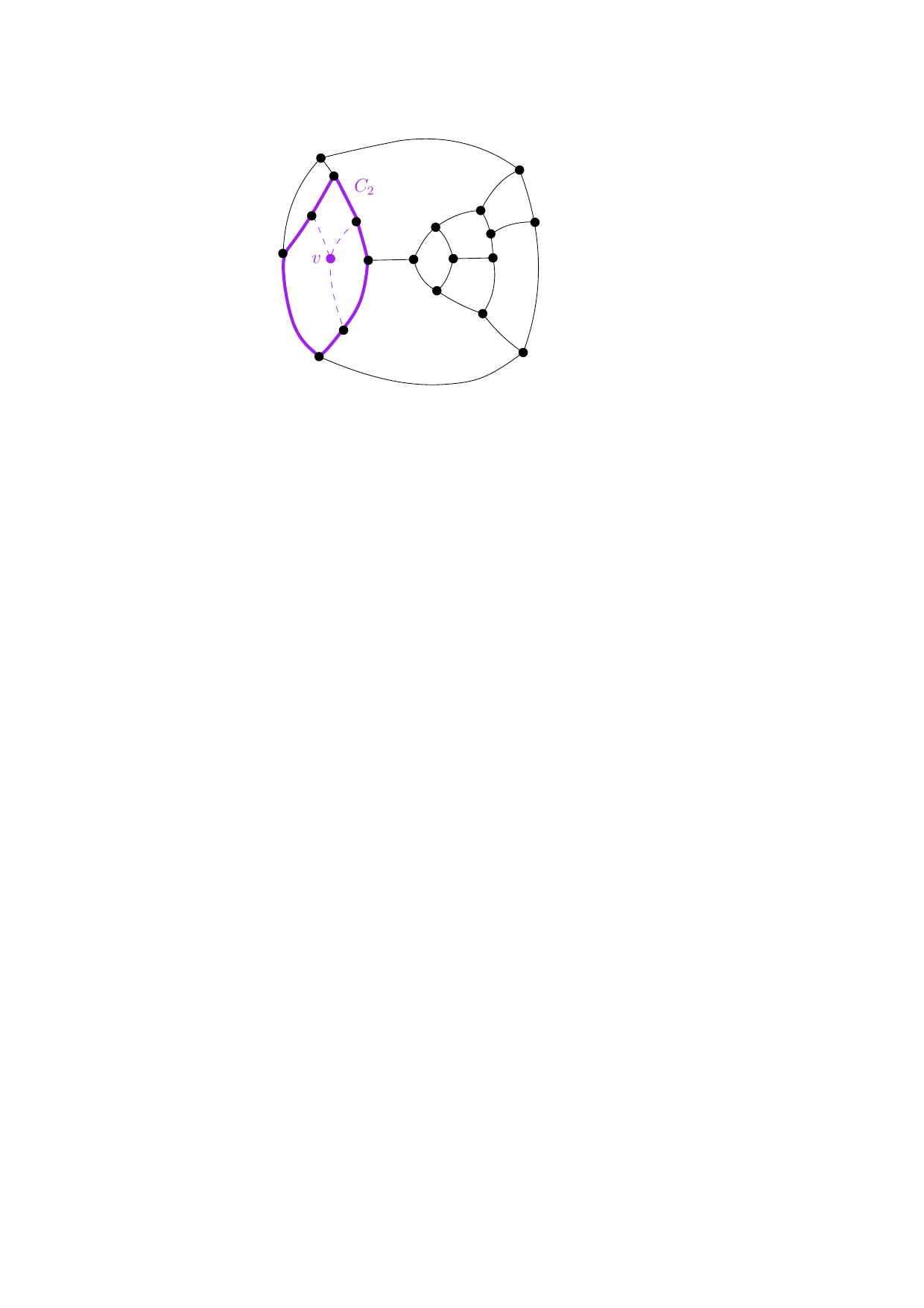}}
	\hfil
	\subfloat[]{\label{fi:3intro-3extro-b}\includegraphics[width=0.25\columnwidth]{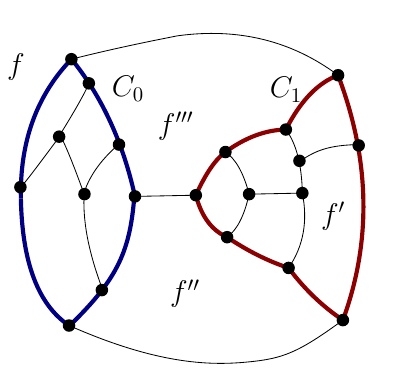}}
	\caption{(a) A reference embedding where~$C_0$ is a 3-extrovert cycle and $C_1$ is a 3-introvert cycle. (b)~A degenerate 3-introvert cycle $C_2$. (c)~A different embedding of the same graph in (a) and (b) where both $C_0$ and $C_1$ are 3-extrovert. The embedding in (c) is not a reference embedding.}\label{fi:3intro-3extro}
\end{figure}

Let $G_{f'}$ be a plane triconnected cubic graph whose embedding is not a reference embedding. The next lemma shows that there always exists a face $f$ of $G_{f'}$ such that the embedding of $G_f$ is a reference embedding.
Note that by changing the external face from $f'$ to $f$ some demanding 3-extrovert cycles of $G_{f'}$ are preserved, some disappear, and some new ones appear in $G_f$. Namely, a 3-extrovert cycle of $G_{f'}$ disappears when it becomes a 3-introvert cycle in $G_f$ and a new 3-extrovert cycle appears in $G_f$ when a 3-introvert cycle of $G_{f'}$ is ``turned inside-out'' in $G_f$. For example~\cref{fi:3intro-3extro-a,fi:3intro-3extro-b} show two different planar embeddings of a same planar triconnected cubic graph: in $G_{f'}$ (\cref{fi:3intro-3extro-a}) cycle $C_0$ is 3-extrovert and cycle $C_1$ is 3-introvert; in $G_{f}$ (\cref{fi:3intro-3extro-b}) cycle $C_1$ becomes 3-extrovert because of the different choice of the external face. Clearly, the change of the embedding does not change the leg faces of~$C_1$.

\begin{lemma}\label{le:ref-embedding}
	Any $n$-vertex planar triconnected cubic graph $G$ admits a reference embedding, which can be computed in $O(n)$ time.
\end{lemma}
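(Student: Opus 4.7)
The plan is to reduce the problem to finding a vertex of the planar dual $G^*$ of $G$ that avoids every separating 3-cycle. Since $G$ is a triconnected planar cubic graph, $G^*$ is a triconnected planar triangulation (every face of $G^*$ corresponds to a vertex of $G$ and is thus a triangle). As already used in the proof of \cref{le:ref-embedding-testing}, the three leg faces of a non-degenerate 3-extrovert cycle of $G_f$ form a separating 3-cycle of $G^*$ through the dual vertex of $f$, and conversely every separating 3-cycle of $G^*$ arises in this way; hence $G_f$ is a reference embedding if and only if the dual vertex of $f$ lies on no separating 3-cycle of $G^*$.

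To show existence, I would assume $G^*$ has at least one separating 3-cycle (otherwise every face of $G$ works). I would then choose a separating 3-cycle $T$ of $G^*$ whose bounded disk $D$ is inclusion-minimal among all disks bounded by separating 3-cycles; such a $T$ exists because the family of such disks is finite. Since $T$ is separating, $V(D)\setminus V(T)$ is non-empty; let $x$ be any of its elements. If some separating 3-cycle $T'$ of $G^*$ passed through $x$, a case analysis on $k=|V(T')\cap V(T)|$ would yield a contradiction. For $k\in\{0,2\}$, one verifies that $T'$ is topologically contained in $\overline{D}$ and $T'\neq T$, contradicting the minimality of $D$. For $k=1$, writing $V(T')=\{u,x,y\}$ with $u\in V(T)$, either $y$ is interior to $D$ (reducing to the previous case) or $y$ lies outside $D$, in which case the edge $xy$ would have to cross the Jordan curve $T$ without passing through a vertex of $T$, violating planarity. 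Finally, $k=3$ forces $T'=T$, contradicting $x\notin V(T)$.

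The algorithm first builds $G^*$ in $O(n)$ time, then enumerates all separating 3-cycles of $G^*$ in $O(n)$ time using the algorithm of Chiba and Nishizeki~\cite{cn-asla-85} (the same tool invoked in \cref{le:ref-embedding-testing}), marks each vertex that appears on at least one separating 3-cycle, and scans for an unmarked vertex $x^*$; by the existence argument such a vertex exists. Returning as outer face the face $f$ of $G$ dual to $x^*$ gives the desired reference embedding, and the entire procedure runs in $O(n)$ time.

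The main obstacle I expect lies in the case $k=2$ of the existence proof: one must carefully justify that the triangle $T'=uvx$, sharing the edge $uv$ of $T$ and using the interior vertex $x$, is entirely drawn inside $\overline{D}$, so that it qualifies as a separating 3-cycle of $G^*$ contained in~$D$. This follows from the fact that the two edges $ux$ and $vx$ are routed through the interior of $D$ in the planar embedding, but needs a careful application of the Jordan curve theorem together with the local planarity around $u$ and $v$ to rule out that one of them escapes to the other side of $T$.
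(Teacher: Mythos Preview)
Your proof is correct, and the approach is close in spirit to the paper's but differs in execution. The paper also reduces to the dual characterization (reference embedding $\Leftrightarrow$ the dual vertex of $f$ avoids all separating 3-cycles of $G^*$), but then stays in the primal: it picks an arbitrary non-degenerate 3-extrovert cycle $C_1$, computes its genealogical tree $T_{C_1}$ via \cref{le:genealogicaltree_comp}, takes any internal face of a \emph{leaf} $C_2$, and rules out a bad cycle through that face by invoking \cref{le:inclusion} (Property~$(c)$: if $C_2$ intersected some $C_3$, then $C_2$ would contain $C_3^t$, contradicting that $C_2$ is a leaf). Your inclusion-minimal separating triangle plays exactly the role of the paper's leaf $C_2$, but you trade the appeal to \cref{le:inclusion} for the direct Jordan-curve case analysis you flagged. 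That analysis does go through: for $k=2$ one checks that the edges $ux$ and $vx$ cannot exit $D$ without crossing $T$, and that the third vertex $w$ of $T$ lies in the unbounded component of $\mathbb{R}^2\setminus T'$, which forces $D'\subsetneq D$. Your algorithm (enumerate all separating triangles, mark their vertices, return any unmarked one) is a clean alternative to the paper's genealogical-tree computation; both are $O(n)$.

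One wording slip: not every non-degenerate 3-extrovert cycle of $G_f$ yields a separating 3-cycle ``through the dual vertex of $f$'' --- only those having $f$ as a leg face do. Your conclusion (the iff characterization) is correct, but that intermediate sentence should be tightened.
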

\begin{proof}
    Compute a planar embedding of $G$ and choose any face $f'$ as its external face.
    By \cref{le:ref-embedding-testing} we can test in $O(n)$ time whether $G_{f'}$ has a reference embedding; if so, we are done. Otherwise, we search for a candidate face $f$ that is not the leg face of any non-degenerate 3-introvert or 3-extrovert cycle of $G_{f'}$. Let $D_{f'}$ be the dual plane graph of $G_{f'}$.
    As observed in the proof of \cref{le:ref-embedding-testing}, any separating 3-cycle of $D_{f'}$ corresponds to a non-degenerate 3-extrovert cycle of $G_{f'}$. Since the embedding of $G_{f'}$ is not a reference embedding, the external face $f'$ is the leg face of at least one non-degenerate 3-extrovert cycle and $D_{f'}$ has at least one separating 3-cycle $C$. Let $C_1$ be the non-degenerate 3-extrovert cycle of $G_{f'}$ corresponding to~$C$.

    We compute the genealogical tree $T_{C_1}$ by \cref{le:genealogicaltree_comp}. Let $C_2$ any leaf of $T_{C_1}$ (possibly $C_2=C_1$). We change the embedding of $G_{f'}$ by choosing as a new external face any face $f$ of $G_{f'}(C_2)$. To show that the embedding of $G_f$ is a reference embedding we prove that $f$ is not the leg face of any non-degenerate 3-extrovert cycle of~$G_{f}$.
    To this aim it suffices to prove that $f$ is neither a leg face of any non-degenerate 3-extrovert cycle nor a leg face of any non-degenerate 3-introvert cycle of~$G_{f'}$. %Otherwise, every non-degenerate 3-extrovert or 3-introvert cycle of $G_{f'}$ having $f$ as a leg face would become a non-degenerate 3-extrovert cycle of $G_f$ still having $f$ as a leg face and, hence, $G_f$ would not have a reference embedding.

    If there existed a non-degenerate 3-extrovert cycle $C_3$ of $G_{f'}$ having $f$ as a leg face then $C_3$ would not belong to $G_{f'}(C_2)$ because $C_2$ is a leaf of $T_{C_1}$. It follows that $C_2$ and $C_3$ must be intersecting. By Property~$(c)$ of \cref{le:inclusion} $C_2$ would contain $C^t_3$, contradicting again the fact that $C_3$ is a leaf of $T_{C_1}$.

    Suppose now that there existed a non-degenerate 3-introvert cycle $C_3$ of $G_{f'}$ having $f$ as a leg face.
    Since $C_3$ is non-degenerate, its three legs are also the legs of a 3-extrovert cycle $C'_3$ of $G_{f'}$. Cycle $C'_3$ would have a leg face inside $C_2$, which contradicts the fact that $C_2$ is a leaf of $T_{C_1}$ as discussed above.

    Since all separating 3-cycles of $D_{f'}$ can be computed in $O(n)$ time~\cite{cn-asla-85}
    and the genealogical tree $T_{C_1}$ can also be computed in $O(n)$ time~\cite{DBLP:journals/jgaa/RahmanNN99}, a reference embedding can be computed in $O(n)$ time.
\end{proof}

%%%%%%%%
%%%%%%%%
%%%%%%%%

\subsubsection{Demanding 3-introvert cycles of a reference embedding}\label{ssse:demanding-3-introvert}

In order to efficiently compute the demanding 3-extrovert cycles of a plane triconnected cubic graph $G_{f'}$ that does not have a reference embedding, we shall use \cref{le:ref-embedding} to choose a face $f$ such that $G_f$ has a reference embedding and consider the 3-introvert cycles of $G_f$ that correspond to demanding 3-extrovert cycles of $G_{f'}$.
In the following we study the properties of the 3-introvert cycles of $G_f$ their relationship with the 3-extrovert cycles of $G_{f'}$.

The next lemma establishes a one-to-one correspondence between the set of non-degenerate 3-extrovert cycles of a reference embedding and the set of its non-degenerate 3-introvert cycles.
% In fact, when choosing a different external face of $G$, some of its demanding 3-extrovert cycles may be preserved, some may disappear, and some new ones may appear. A 3-extrovert cycle disappears when it becomes a 3-introvert cycle in the new embedding; a 3-extrovert cycle appears when a 3-introvert cycle is ``turned inside-out'' in the new embedding. For example~\cref{fi:3intro-3extro-a,fi:3intro-3extro-b} show two different planar embeddings of a same planar triconnected cubic graph: in $G_{f'}$ (\cref{fi:3intro-3extro-a}) cycle $C_0$ is 3-extrovert and cycle $C_1$ is 3-introvert; in $G_{f}$ (\cref{fi:3intro-3extro-b}) cycle $C_1$ becomes 3-extrovert because of the different choice of the external face. Therefore, when computing an orthogonal representation of $G_{f'}$, $C_1$ does not need a costly degree-2 vertex while it does when computing an orthogonal representation of $G_{f}$.

\begin{lemma}\label{le:3-extro-3-intro}
Let $G_f$ be a plane triconnected cubic graph whose embedding is a reference embedding and let $\mathcal{E}$ and $\mathcal{I}$ be the sets of non-degenerate 3-extrovert and non-degenerate 3-introvert cycles of $G_f$, respectively. There is a one-to-one correspondence $\phi: \mathcal{E} \rightarrow \mathcal{I}$ such that $C$ and $\phi(C)$ have the same legs, for every $C \in \mathcal{E}$. 	
\end{lemma}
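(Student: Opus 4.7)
The plan is to construct $\phi$ explicitly through the leg faces of $C$ and to verify bijectivity via a symmetric construction. Let $C \in \mathcal{E}$ have external legs $l_1, l_2, l_3$ with leg vertices $u_1, u_2, u_3 \in V(C)$ and outside endpoints $v_1, v_2, v_3$, which are three distinct vertices by \cref{pr:legs}. The three leg faces $f_1, f_2, f_3$ of $C$ are distinct, and each $f_i$ is incident to exactly the two legs $l_i, l_{i+1}$ (indices modulo $3$); hence $\partial f_i$ decomposes into the two legs, a sub-path of $C$ from $u_{i+1}$ to $u_i$, and a sub-path $P_i$ of $\partial f_i$ from $v_i$ to $v_{i+1}$ lying on the side of $f_i$ opposite to $C$. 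I set $\phi(C) := P_1 \cup P_2 \cup P_3$. The inverse $\psi : \mathcal{I} \to \mathcal{E}$ is defined analogously: given $C' \in \mathcal{I}$ with internal legs $l_1, l_2, l_3$ and leg faces $f_1, f_2, f_3$, I would take the sub-path $Q_i$ of $\partial f_i$ on the side opposite to $C'$ and set $\psi(C') := Q_1 \cup Q_2 \cup Q_3$.

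Next I would verify that $\phi(C) \in \mathcal{I}$ and that the legs of $\phi(C)$ are exactly $l_1, l_2, l_3$. Since $G_f$ is triconnected, any two distinct faces share at most one edge; hence $f_i$ and $f_{i+1}$ share only the edge $l_{i+1}$ together with its endpoints $u_{i+1}$ (which lies on $C$, not on any $P_j$) and $v_{i+1}$. Consequently the paths $P_1, P_2, P_3$ meet pairwise only at $v_1, v_2, v_3$, so $\phi(C)$ is a simple cycle. Each $f_i$ lies on the $G(C)$-side of $\phi(C)$ because $\partial f_i$ is split by $\phi(C)$ into $P_i$ and a path through $C$; therefore $G(C) \cup \{l_1, l_2, l_3\}$ is contained in the region bounded by $\phi(C)$ that does not contain $f$. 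The edges $l_i$ then have endpoint $v_i$ on $\phi(C)$ and endpoint $u_i$ in that interior region, so they are internal legs of $\phi(C)$; any other vertex $w \neq v_i$ on $\phi(C)$ (necessarily internal to some $P_j$) has its third edge leaving $P_j$ on the side opposite to $f_j$, which lies outside $\phi(C)$, so $w$ contributes no internal edge. Non-degeneracy of $\phi(C)$ is immediate from the distinctness of $v_1, v_2, v_3$. For the bijection, observing that the leg faces of $\phi(C)$ coincide with $f_1, f_2, f_3$ (with the interior/exterior sides swapped), I would conclude that $\psi(\phi(C)) = C$; the equality $\phi(\psi(C')) = C'$ follows symmetrically.

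The main subtlety I expect is the absence of internal chords of $\phi(C)$: I must rule out every edge $(x, y)$ with both endpoints on $\phi(C)$ and lying inside the region bounded by $\phi(C)$. The analysis of the previous paragraph shows that the only vertices of $\phi(C)$ with an edge directed into this interior are $v_1, v_2, v_3$, and that these interior edges are precisely $l_1, l_2, l_3$, each terminating at a $u_i \in V(C) \setminus V(\phi(C))$; thus no such chord can exist, and $\phi(C)$ is genuinely $3$-introvert. The reference-embedding hypothesis is used implicitly throughout: it guarantees that $f \notin \{f_1, f_2, f_3\}$ for every $C \in \mathcal{E}$, so each $P_i$ is a genuine sub-path of $\partial f_i$ and the inside/outside classification of $\phi(C)$ is well-defined with respect to $f$. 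This ensures that $\phi$ has codomain $\mathcal{I}$ (and, by symmetry, $\psi$ has codomain $\mathcal{E}$), completing the bijection.
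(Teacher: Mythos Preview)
Your proof is correct and follows essentially the same construction as the paper: $\phi(C)$ is the union of the three sub-paths of the leg-face boundaries on the side opposite to $C$, and the paper's inverse (taking $C_o(G')$ where $G'$ is $G_f(C')$ with $C'$ and its legs removed) coincides with your symmetric map $\psi$. Your verification is in fact more detailed than the paper's, which simply asserts that $\phi(C)$ is a non-degenerate $3$-introvert cycle once simplicity and internality of the leg faces are observed.

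One small slip: non-degeneracy of $\phi(C)$ as a $3$-introvert cycle means its three internal legs do not share a common vertex, and in a cubic graph such a shared vertex would have to be the common \emph{interior} endpoint; so the relevant distinctness is that of $u_1,u_2,u_3$ (the leg vertices of $C$), not of $v_1,v_2,v_3$. This is immediate anyway, since the $u_i$ are three distinct vertices on $C$.
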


\begin{proof}
	Consider a non-degenerate 3-extrovert cycle $C \in \mathcal{E}$. For $i = 1,2,3$, let $e_i=(u_i,v_i)$ be the legs of $C$, where $u_i$ belongs to $C$. Let $f_{ij}$ be the leg face of $C$ that contains the legs $e_i$ and $e_j$ $(1 \leq i<j \leq 3)$.
	Since $G$ is triconnected and $C$ is non-degenerate, by \cref{pr:legs} we have that $v_1$, $v_2$, and $v_3$ are all distinct vertices. Hence, we can consider the cycle $\phi(C)$ formed by the union of the three paths from $v_i$ to $v_j$ along $f_{ij}$, not passing through any leg of $C$. \cref{fi:reference-extro-intro} shows a plane graph $G_f$ with a reference embedding, a non-degenerate 3-extrovert cycle $C$ and its corresponding 3-introvert cycle $\phi(C)$.
	Cycle $\phi(C)$ is simple because any two leg faces of $C$ only share a leg.
	Since the embedding of $G_f$ is a reference embedding every $f_{ij}$ is an internal face. It follows that $\phi(C)$ is a non-degenerate 3-introvert cycle with legs $e_i=(u_i,v_i)$ $(i = 1,2,3)$.
	
	\begin{figure}[tb]
		\centering
		\subfloat[]{\label{fi:reference-extro-intro}\includegraphics[width=0.35\columnwidth]{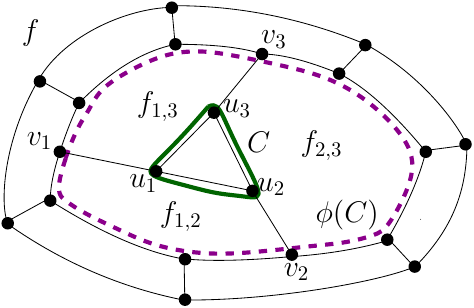}}
		\hfil
		\subfloat[]{\label{fi:3-introvert-cycles-intersecting}\includegraphics[width=0.35\columnwidth]{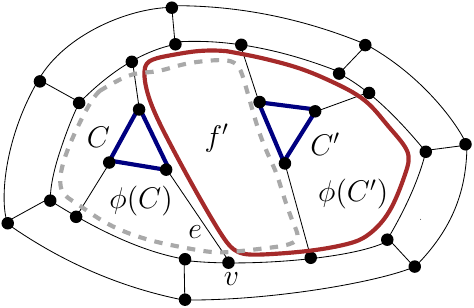}}
		\caption{(a) A plane graph $G_f$ with a reference embedding, a non-degenerate 3-extrovert cycle $C$ and its corresponding 3-introvert cycle $\phi(C)$. (b) Two intersecting 3-introvert cycles.}
	\end{figure}
	
	Vice versa, let $C'$ be a non-degenerate 3-introvert cycle of $G_f$. Consider the plane subgraph $G' \subseteq G_f(C')$ obtained by removing $C'$ and its legs. Since $C'$ is not degenerate and $G$ is a plane triconnected cubic graph, cycle $C_o(G')$ is simple and we can set $\phi^{-1}(C')=C_o(G')$. Note that $\phi^{-1}(C')$ is a non-degenerate 3-extrovert cycle whose legs coincide with those of $C'$ and that $\phi(C_o(G')) = C'$.	
\end{proof}

\noindent The following observations describe trivial properties of $\phi(C)$.

\begin{observation}\label{ob:spartizione-faccia}
Let $\phi(C)$ be the non-degenerate 3-introvert cycle associated with a non-degenerate 3-extrovert cycle $C$ of $G_f$. Let $f'$ be a leg face of $\phi(C)$. The boundary of face $f'$ consists of two legs of both $\phi(C)$ and $C$, and the two contour paths of $\phi(C)$ and $C$ connecting the two legs.
\end{observation}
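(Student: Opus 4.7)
The plan is to argue directly from the explicit construction of $\phi(C)$ given in the proof of Lemma~\ref{le:3-extro-3-intro}. Recall the setup there: the three legs of $C$ are $e_i=(u_i,v_i)$ with $u_i\in C$ and $v_i\in \phi(C)$ ($i=1,2,3$), and $f_{ij}$ denotes the leg face of $C$ that contains $e_i$ and $e_j$. By construction, $\phi(C)$ is obtained by joining, for each pair $\{i,j\}\subset\{1,2,3\}$, the path from $v_i$ to $v_j$ along $f_{ij}$ that avoids the legs of $C$. In particular, the leg faces of $\phi(C)$ are exactly the $f_{ij}$'s, which are the leg faces of $C$. So the first thing to observe is that a leg face $f'$ of $\phi(C)$ is of the form $f_{ij}$, and therefore $f'$ is also a leg face of $C$, sharing the two legs $e_i$ and $e_j$ with $\phi(C)$.

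Next I would analyse the boundary of $f'=f_{ij}$ as a simple cycle (simple since $G_f$ is triconnected). The four vertices $u_i,u_j,v_i,v_j$ all lie on this cycle, and the two legs $e_i=(u_i,v_i)$, $e_j=(u_j,v_j)$ are two of its edges. Deleting these two edges splits the boundary into exactly two arcs: one arc $\pi_{\phi}$ from $v_i$ to $v_j$, lying in $\phi(C)$ (it is the very path used in the construction of $\phi(C)$), and one arc $\pi_C$ from $u_i$ to $u_j$, lying in $C$ (it is the portion of $C$ incident to the face $f_{ij}$, and such a portion exists because $C$ contributes at least one edge to the boundary of $f_{ij}$; otherwise $e_i$ and $e_j$ would be adjacent in $G_f$, making $C$ degenerate, contrary to hypothesis).

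Finally, I would identify $\pi_C$ and $\pi_\phi$ with contour paths. By definition, the three leg vertices $u_1,u_2,u_3$ split $C$ into three edge-disjoint contour paths; the arc $\pi_C$ connects $u_i$ to $u_j$ without passing through $u_k$ (since $u_k$ lies on a different leg face of $C$), so $\pi_C$ is exactly the contour path of $C$ joining $u_i$ and $u_j$. The same reasoning, applied to $\phi(C)$ with leg vertices $v_1,v_2,v_3$, shows that $\pi_\phi$ is the contour path of $\phi(C)$ joining $v_i$ and $v_j$. Putting the four pieces $e_i$, $\pi_\phi$, $e_j$, $\pi_C$ together yields the full boundary of $f'$, which is what the observation claims.

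I do not foresee a real obstacle here: the statement is essentially a bookkeeping consequence of the definition of $\phi$ and of the fact that the reference embedding forces each $f_{ij}$ to be an internal simple face. The only point deserving a line of care is the non-degeneracy argument that guarantees the arc $\pi_C$ is non-empty (i.e., that $C$ actually contributes a sub-path, not just the two leg endpoints, to the boundary of $f_{ij}$); this follows from $C$ being non-degenerate together with $G_f$ being a triconnected cubic graph.
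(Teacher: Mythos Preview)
Your argument is correct and is precisely the natural unpacking of the construction of $\phi(C)$ given in Lemma~\ref{le:3-extro-3-intro}. The paper itself does not supply a proof: it introduces this statement as one of the ``trivial properties of $\phi(C)$'' and follows it only with the illustrative example of face $f_{2,3}$ in \cref{fi:reference-extro-intro}. So there is nothing to compare against beyond noting that the paper regards the claim as immediate from the definition, and your write-up makes that immediacy explicit.

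One small point worth tightening: when you assert that removing $e_i$ and $e_j$ from the boundary of $f_{ij}$ yields an arc $\pi_C$ from $u_i$ to $u_j$ \emph{lying in $C$}, you rely on the phrase ``it is the portion of $C$ incident to the face $f_{ij}$'' without quite saying why no non-$C$ edge can appear on that arc. The clean justification is that, since $G_f$ is cubic and the legs are external, at each leg vertex $u_\ell$ the two external faces meeting there are separated by the leg $e_\ell$; hence walking along the boundary of $f_{ij}$ from $u_i$ via an edge of $C$ one stays on $C$ until the next leg vertex, where the boundary turns onto a leg. That leg must be $e_j$ (it cannot be $e_k$, since $e_k$ lies on $f_{ik}$ and $f_{jk}$ but not on $f_{ij}$), so the walk ends at $u_j$ and $\pi_C$ is exactly the contour path of $C$ between $u_i$ and $u_j$. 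This is surely what you intend; spelling it out would remove the only spot where a reader might pause.
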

For example, face $f_{2,3}$ of \cref{fi:reference-extro-intro} is bounded by the two legs $(u_2,v_2)$ and $(u_3,v_3)$, by the contour path of $C$ from $u_2$ to $u_3$, and by the contour path of $\phi(C)$ from $v_2$ to $v_3$.

% \begin{observation}\label{ob:3-introvert-intersencting}
% Two non-degenerate 3-introvert cycles $\phi(C)$ and $\phi(C')$ of $G_f$ may intersect.
% \end{observation}
% For example, cycles $\phi(C)$ and $\phi(C')$ of \cref{fi:3-introvert-cycles-intersecting} intersect.

\begin{property}\label{pr:3-introvert-intersecting}
If two non-degenerate 3-introvert cycles $\phi(C)$ and $\phi(C')$ of $G_f$ share a leg face $f'$, they intersect. Also, there are at least four edges incident to~$f'$.
\end{property}
\begin{proof}
Let $f'$ be the leg face shared by $\phi(C)$ and $\phi(C')$.
See, for example,  \cref{fi:3-introvert-cycles-intersecting}.
Face $f'$ is a leg face also shared by the non-degenerate 3-extrovert cycles $C$ and $C'$. Since $G_f$ is triconnected, $C$ and $C'$ share at most one leg. Since: (i) the legs of $C$ the legs of $ \phi(C)$ coincide and (ii) the legs of $C'$ and $\phi(C')$ coincide, it follows that also $\phi(C)$ and $\phi(C')$ share at most one leg. Consider a leg $e$ of $C$ incident to $f'$ and that is not a leg of $C'$. Let $v$ be the end-vertex of $e$ not belonging to $C$. Clearly $v$ belongs to $\phi(C)$. Observe that $\phi(C')$ contains all vertices of $f'$ that are not in $C'$. Therefore $v$ belongs to both $\phi(C)$ and $\phi(C')$. Since in a cubic graph any two cycles that share a vertex also share an edge, we have that $\phi(C)$ and $\phi(C')$ intersect.
Since
\end{proof}

%Let $C$ be any non-degenerate 3-extrovert cycle of $G_f$ and let $\phi(C)$ be the non-degenerate 3-introvert cycle associated with $C$ according to \cref{le:3-extro-3-intro}. If we change the planar embedding of $G_f$ by choosing any external face $f' \neq f$, $\phi(C)$ may become a non-degenerate 3-extrovert cycle in $G_{f'}$ and $C$ may no longer be a non-degenerate 3-extrovert cycle of $G_{f'}$.

Since we are interested in efficiently computing the terms of Equation~\ref{eq:fixed-embedding-cost} (and in particular $D(G)$) for all possible choices of the external face, we need to identify those non-degenerate 3-introvert cycles of $G_f$ that may become non-degenerate demanding 3-extrovert cycles when a face $f' \neq f$ is chosen as external face. We call such 3-introvert cycles of $G_f$ the \emph{demanding 3-introvert cycles} of $G_f$. In order to decide whether $\phi(C)$ is demanding, we need to look at the cycle $C$, at the parent of cycle $C$, and at the siblings of $C$ in $T_f$. The following observation relates $\phi(C)$ to the siblings of $C$ in $T_f$.

\begin{observation}\label{ob:intro-siblings}
	Let $\phi(C)$ be the non-degenerate 3-introvert cycle associated with a non-degenerate 3-extrovert cycle $C$ of $G_f$. Let $C'$ be a sibling of $C$.
	Let $P$ be a contour path of $\phi(C)$ and let $f'$ be the leg face of $\phi(C)$ incident to $P$. $P$ contains a contour path of $C'$ if and only if
	$C'$ has a contour path incident to $f'$ (i.e., $f'$ is also a leg face of $C'$).
\end{observation}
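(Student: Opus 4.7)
The statement is a biconditional about how a sibling $C'$ of $C$ in the inclusion tree $T_f$ can interact with the leg face $f'$ of $\phi(C)$ incident to a prescribed contour path $P$ of $\phi(C)$. The forward direction is immediate; the reverse direction is the real work and will be driven by the structural description of $\partial f'$ given by Observation~\ref{ob:spartizione-faccia}, combined with the fact that siblings in $T_f$ are non-intersecting (Lemma~\ref{le:independent-child-cycles}).

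\textbf{Forward direction.} Suppose $P$ contains a contour path $Q$ of $C'$. Since $f'$ is the leg face of $\phi(C)$ whose boundary contains $P$, the path $Q \subseteq P$ lies on $\partial f'$. As $Q$ is a contour path of $C'$, the face $f'$ is by definition a leg face of~$C'$.

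\textbf{Reverse direction (main step).} Suppose $f'$ is a leg face of $C'$, and let $Q$ be the contour path of $C'$ lying on $\partial f'$. Apply Observation~\ref{ob:spartizione-faccia} to write
\[
\partial f' \;=\; e_i \,\cup\, P_C^{ij} \,\cup\, e_j \,\cup\, P,
\]
where $e_i,e_j$ are the two legs shared by $C$ and $\phi(C)$ that are incident to $f'$, $P_C^{ij}$ is the contour path of $C$ incident to $f'$, and $P$ is the contour path of $\phi(C)$ incident to $f'$. The goal is to show $Q \subseteq P$, which I would establish by ruling out every other placement of $Q$ along $\partial f'$: first, $Q$ cannot contain either leg $e_i$ or $e_j$ (since an endpoint of such a leg lies on $\phi(C)$, forcing $C'$ to share a vertex with $C$ and hence, by the cubic hypothesis, to share an edge with $C$, which we then rule out as in the next case), and second, $Q$ cannot share an edge with $P_C^{ij}$. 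For the second claim I would use Lemma~\ref{le:independent-child-cycles}: if $Q$ and $P_C^{ij}$ shared an edge, then $C$ and $C'$ would share an edge, so by non-intersection some contour path of one would be properly contained in a contour path of the other; tracking which side of the shared subpath lies inside each cycle (using that $f'$ is sandwiched between $C$ and $\phi(C)$) forces either $C' \subseteq G_f(C)$ or $C \subseteq G_f(C')$, contradicting that $C$ and $C'$ are siblings in~$T_f$.

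\textbf{Main obstacle.} The delicate case is when $Q$ would be a proper subpath of $P_C^{ij}$ with both endpoints internal to $P_C^{ij}$. Here the legs of $C'$ at those endpoints are the unique third edges (by cubicity) at vertices interior to $P_C^{ij}$, and those third edges point into $G_f(C)$; I then need to argue carefully that this traps the entire interior of $C'$ on one fixed side of $Q$, making $C'$ a descendant of $C$ in $T_f$, which contradicts the sibling assumption and completes the proof.
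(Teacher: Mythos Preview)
The paper records this as an observation with only an illustrative example and no proof, so you are supplying what the paper leaves implicit. Your plan---decompose $\partial f'$ via Observation~\ref{ob:spartizione-faccia} and show the contour path $Q$ of $C'$ misses $P_C^{ij}$ and the two legs---is the natural one and is correct.

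One simplification removes both the detour through the contour-path clause of Lemma~\ref{le:independent-child-cycles} and your ``main obstacle''. Any edge $e\in Q\cap P_C^{ij}$ has, on its side opposite $f'$, a face lying in the interior of both $C$ and $C'$ (each has $f'$ as a leg face). Because $T_f$ is a tree (Lemma~\ref{le:inclusion-tree}), the nodes of $T_f$ whose cycles contain a fixed internal face form a root-to-node chain, on which two siblings cannot both lie---an immediate contradiction with no sub-case analysis needed.

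One caution on the leg case: the cubicity step does yield a shared edge of $C$ and $C'$, but that edge may lie on the \emph{other} contour path of $C$ through $u_i$, not on $P_C^{ij}$, so you are not literally back ``in the next case''. Your propagation idea still works---$e_i\in C'$ forces the adjacent leg face $f''$ of $C$ into the interior of $C'$, and the internal-face test applied along $\partial f''$ then pushes the contour path of $C$ there onto $C'$ as well---but the eventual contradiction may be that $C'$ acquires extra external legs (at interior vertices of a contour path of $C$ it has absorbed) rather than that $C'$ becomes a descendant of~$C$.
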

For example, in \cref{fi:3-introvert-cycles-intersecting} $C$ and $C'$ are siblings in $T_f$ and share a leg face $f'$; hence, a contour path of $\phi(C)$ contains a contour path of $C'$.

\smallskip

We shall give a characterization of the non-degenerate demanding 3-introvert cycles of $G_f$ in \cref{le:demanding-3-introvert-charact}. The lemma is based on a red-green-orange coloring of the contour paths of the non-degenerate 3-introvert cycles,
which we introduce in the following, and on some properties of these coloring that are stated in \cref{le:extrovert-extrovert-coloring,le:extrovert-introvert-coloring}.

Assume that the contour paths of the non-degenerate 3-extrovert cycles of $G_f$ are colored according to the \textsc{3-Extrovert Coloring Rule} (\cref{sse:demanding-3-extrovert-reference-definition}).
Let $C$ be any non-degenerate 3-extrovert cycle of $G_f$, let $\widetilde{C}$ be the parent node of $C$ in the inclusion tree $T_f$ of $G_f$, and let $\phi(C)$ be the 3-introvert cycle corresponding to $C$.
The coloring of the contour paths of $\phi(C)$ depends on the coloring of the siblings of $C$ in $T_f$ and, when $\widetilde{C}$ is not the root of $T_f$, on the coloring of $\phi(\widetilde{C})$. (If $\widetilde{C}$ is the root of $T_f$, cycle $\phi(\widetilde{C})$ is not defined.)

%\begin{quote}
\medskip\noindent \textsc{3-Introvert Coloring Rule}: The three contour paths of $\phi(C)$ are colored according to these two cases.
\begin{enumerate}
%\item $\phi(C)$ has no contour path that contains either a flexible edge or a green contour path of a sibling of $C$ in $T_f$ or a green contour path of $\phi(C')$ (if $\phi(C')$ is defined); in this case all three contour paths of $\phi(C)$ are colored green.
\item Each contour path of $\phi(C)$ contains neither a flexible edge, nor a green contour path of a sibling of $C$ in $T_f$, nor a green contour path of $\phi(\widetilde{C})$ (if $\phi(\widetilde{C})$ is defined); in this case all three contour paths of $\phi(C)$ are colored green.
\item Otherwise, let $P$ be a contour path of $\phi(C)$.
%\begin{itemize}
	(a) If $P$ contains a flexible edge then $P$ is colored orange.
	(b) If $P$ does not contain a flexible edge and it contains either a green contour path of a sibling of $C$ in $T_f$ or a green contour path of $\phi(\widetilde{C})$ (if $\phi(\widetilde{C})$ is defined), then $P$ is colored green.
	(c) In all other cases $P$ is colored red.
%\end{itemize}
\end{enumerate}
\medskip
%\end{quote}

For example, \cref{fi:introvert-colouration-1-a} shows the inclusion tree of graph $G_f$ of \cref{fi:introvert-colouration-1-b}.
%and the different cases of the \textsc{3-Introvert Coloring Rule}.
Consider the 3-extrovert cycle $C_1$ and the 3-introvert cycle $\phi(C_1)$, whose leg faces are shaded in \cref{fi:introvert-colouration-1-b}: Each contour path of $\phi(C_1)$ neither contains a flexible edge nor a green contour path of a sibling of $C_1$ (namely $C_2$ and $C_3$). The parent node of $C_1$ is $C_o(G_f)$, that is not associated with a 3-introvert cycle. Hence, Case~1 of the \textsc{3-Introvert Coloring Rule} applies and the three contour paths of $\phi(C_1)$ are colored green.
\cref{fi:introvert-colouration-1-c} highlights the 3-extrovert cycle $C_2$ and the corresponding 3-introvert cycle $\phi(C_2)$. Also in this case the parent node of $C_2$ in $T_f$ is $C_o(G_f)$. Cycle $\phi(C_2)$ does not contain flexible edges and one of its contour paths contains a green path of a sibling of $C_2$ (namely a contour path of $C_3$). Hence, Case~2 of the \textsc{3-Introvert Coloring Rule} applies: the contour path that contains the green contour path of $C_3$ is colored green (Case~2(b)) while the other two contour paths of $\phi(C_2)$ are colored red
(Case~2(c)).
The case of cycle $\phi(C_3)$ is similar to that of $\phi(C_2)$ and is illustrated in \cref{fi:introvert-colouration-1-d}.
\cref{fi:introvert-colouration-2-a} highlights the 3-extrovert cycle $C_4$ and the corresponding 3-introvert cycle $\phi(C_4)$. The parent node of $C_4$ is $C_1$ and $C_4$ does not have any sibling in $T_f$. $\phi(C_4)$ has a contour path containing a flexible edge and there is no contour path of $\phi(C_4)$ containing a green contour path of $\phi(C_1)$. Hence, $\phi(C_4)$ has one orange contour path and two red contour paths according to Case~2(a) and Case~2(c) of the \textsc{3-Introvert Coloring Rule}.
Finally, \cref{fi:introvert-colouration-2-b} highlights the 3-extrovert cycle $C_5$ and the corresponding 3-introvert cycle $\phi(C_5)$. The parent node of $C_5$ is $C_3$ and $C_5$ does not have any sibling in $T_f$. $\phi(C_5)$ has a contour path containing a green contour path of $\phi(C_3)$ and no contour path of $\phi(C_5)$ contains a flexible edge. Hence, two contour paths of $\phi(C_5)$ are red ad one is green according to Case~2(b) and Case~2(c) of the \textsc{3-Introvert Coloring Rule}.

\begin{figure}[!ht]
	\centering
	\hfil
	\subfloat[]{\label{fi:introvert-colouration-1-a}\includegraphics[width=0.33\columnwidth]{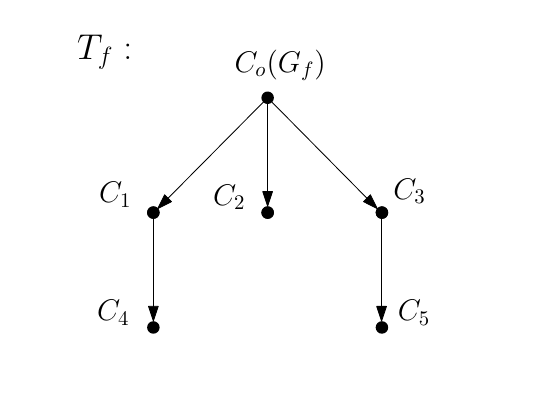}} % tree
	\hfil
	\subfloat[]{\label{fi:introvert-colouration-1-b}\includegraphics[width=0.37\columnwidth]{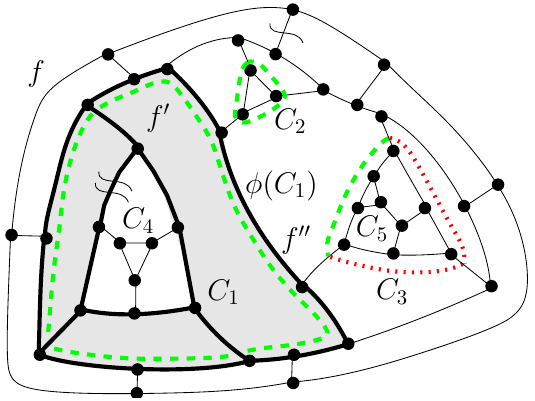}} % devoted to phi(C_1)
	\hfil
	\subfloat[]{\label{fi:introvert-colouration-1-c}\includegraphics[width=0.37\columnwidth]{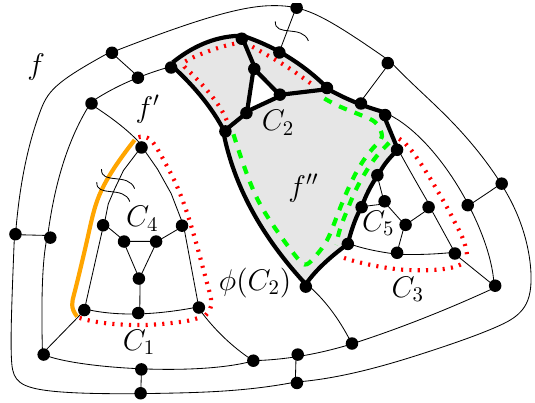}} % devoted to phi(C_2)
	\hfil
	\subfloat[]{\label{fi:introvert-colouration-1-d}\includegraphics[width=0.37\columnwidth]{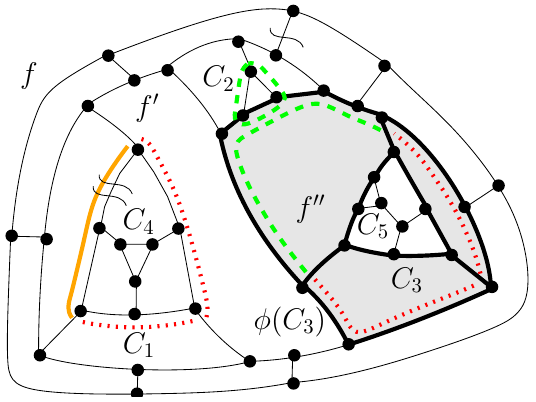}} % devoted to phi(C_3)
	\hfil
	\subfloat[]{\label{fi:introvert-colouration-2-a}\includegraphics[width=0.37\columnwidth]{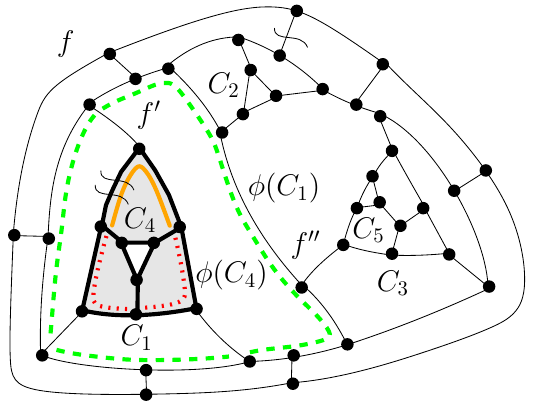}} % devoted to phi(C_4)
	\hfil
	\subfloat[]{\label{fi:introvert-colouration-2-b}\includegraphics[width=0.37\columnwidth]{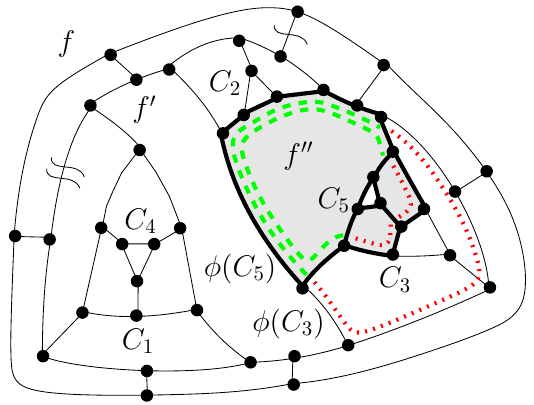}} % devoted to phi(C_5)
	\hfil
	\caption{Red-green-orange coloring of the contour paths of 3-introvert cycles. (a) An inclusion tree $T_f$ of a plane graph $G_f$ whose embedding is a reference embedding.
		(b--f) The red-green-orange coloring of the contour paths of $\phi(C_1)$, $\phi(C_2)$, $\phi(C_3)$, $\phi(C_4)$, $\phi(C_5)$, respectively. The contour paths of the 3-extrovert cycles $C_1$, $C_2$, and $C_3$ are also shown.
	}\label{fi:introvert-colouration-1}
\end{figure}

%\cref{le:introvert-coloring-algo,le:demanding-3-introvert-algo} show that we can efficiently color the 3-introvert cycles and compute the demanding 3-introvert cycles of $G_f$. The proof of \cref{le:demanding-3-introvert-algo} exploits the characterization of \cref{le:demanding-3-introvert-charact}, which in turn uses \cref{le:extrovert-extrovert-coloring} and \cref{le:extrovert-introvert-coloring}.

%We now show how to use the \textsc{3-Introvert Coloring Rule} to identify the non-degenerate demanding 3-introvert cycles.
The following observation clarifies when a 3-cycle changes from 3-extrovert to 3-introvert and vice versa (a similar observation can be found in~\cite{DBLP:journals/ieicet/RahmanEN05}).

\begin{observation}\label{ob:extrovert-introvert}
	%Let $C$ be a 3-extrovert (3-introvert) cycle of $G_f$ and let $f'$ be any face of $G_f$. If $f'$ is a face of $G_f(C)$, then $C$ becomes a 3-introvert (3-extrovert) cycle in $G_{f'}$. If $f'$ is not a face of $G_f(C)$, then $C$ remains a non-degenerate 3-extrovert (3-introvert) cycle in $G_{f'}$.
	Let $C$ be a 3-extrovert cycle and let $\phi(C)$ be a 3-introvert cycle of $G_f$, respectively. Let $f'$ be any face of $G_f$. If $f'$ is chosen as the new external face then: (i) $C$ becomes a 3-introvert cycle in $G_{f'}$ if and only if $f'$ is an internal face of $G_f(C)$; (ii) $\phi(C)$ becomes a 3-extrovert cycle in $G_{f'}$ if and only if $f'$ is an internal face of $G_f(\phi(C))$.
\end{observation}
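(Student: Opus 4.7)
My plan is to treat this as a purely topological fact about simple cycles in planar embeddings. The key point is that any simple cycle $C$ of $G$ partitions the sphere $\mathbb{S}^2$ (on which $G$ is drawn, with the external face as the point at infinity) into two open disks $R_1,R_2$; each leg and each chord of $C$ lies entirely in one of these disks, since it is a connected subgraph of $G\setminus C$. The choice of external face merely decides which disk is declared ``outside'': the legs/chords in that disk are the external ones, the rest are internal. Hence the observation amounts to saying that switching the external face across $C$ swaps internal and external for all legs and chords simultaneously, while the defining conditions of being 3-extrovert and being 3-introvert are obtained from each other by exactly such a swap.

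For part (i), I would fix $R_1,R_2$ and note that a face $f'$ of $G_f$ is an internal face of $G_f(C)$ iff $f'$ lies in the disk opposite to the one containing $f$. Since $C$ is 3-extrovert in $G_f$, the disk containing $f$ holds exactly three legs of $C$ and no chord. If $f'$ lies in the same disk as $f$, nothing changes and $C$ remains 3-extrovert in $G_{f'}$; if $f'$ lies in the opposite disk, those three legs are now internal and the absence of a chord on the $f$-side becomes the absence of an internal chord. One small verification is that $C$ still has \emph{exactly} three internal legs after the swap, which holds because any legs that were in the opposite disk in $G_f$ (internal legs of $C$ in $G_f$) become external in $G_{f'}$.

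For part (ii), I would apply the symmetric argument to $\phi(C)$, using \cref{le:3-extro-3-intro} to know that $\phi(C)$ is a simple cycle with the same three legs as $C$ and that it is 3-introvert in $G_f$. Now the disk of $\mathbb{S}^2\setminus\phi(C)$ \emph{not} containing $f$ is precisely the open region of $G_f(\phi(C))$ minus $\phi(C)$, it carries all three legs of $\phi(C)$, and it contains no chord of $\phi(C)$. Moving the external face to any $f'$ in that opposite disk (i.e., any internal face of $G_f(\phi(C))$) swaps the roles of the two disks, so $\phi(C)$ becomes 3-extrovert in $G_{f'}$; otherwise it stays 3-introvert. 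I do not expect any real obstacle: the whole argument is just bookkeeping of how the external/internal labels flip when the external face moves across the cycle. The only care needed is to confirm the chord condition (no external chord versus no internal chord) and the leg count translate correctly under the swap, both of which are immediate from the definitions.
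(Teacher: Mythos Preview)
Your proposal is correct. The paper does not actually prove this statement: it is presented as an \emph{observation} with no accompanying argument, only a parenthetical remark that ``a similar observation can be found in~\cite{DBLP:journals/ieicet/RahmanEN05}''. The authors evidently regard it as immediate from the definitions, and your topological justification via the two disks of $\mathbb{S}^2\setminus C$ (resp.\ $\mathbb{S}^2\setminus\phi(C)$) is exactly the natural way to make that immediacy explicit. One minor remark: for part~(ii) you do not need to invoke \cref{le:3-extro-3-intro}, since the observation already hypothesizes that $\phi(C)$ is a 3-introvert cycle of $G_f$; the argument for~(ii) is then literally the same disk-swap bookkeeping as for~(i), with ``internal'' and ``external'' interchanged.
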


%Consider for example faces $h$ and $h'$ of graph $G_f$ in \cref{fi:introvert-colouration-1-b} and focus on the 3-introvert cycle $\phi(C_1)$. By choosing $h$ as a new external face we obtain the plane graph $G_h$ shown in \cref{fi:introvert-colouration-3-a}. Note that $\phi(C_1)$ has become a 3-extrovert cycle of $G_h$ because $h$ is a face of $G_f(\phi(C_1))$. By choosing $h'$ as new external face we obtain the plane graph $G_{h'}$ shown in \cref{fi:introvert-colouration-3-b}. In this case $\phi(C_1)$ is a 3-introvert cycle, since $h'$ is not a face of $G_f(\phi(C_1))$ and it is a face of $G_h(\phi(C_1))$.

The next two lemmas prove that the coloring of the contour paths of the non-degenerate 3-extrovert and 3-introvert cycles is independent of the choice of the external face.

\begin{lemma}\label{le:extrovert-extrovert-coloring}
    Let $G_f$ be a plane triconnected cubic graph whose embedding is a reference embedding.
	Let $C$ be a non-degenerate 3-extrovert cycle of $G_f$ and let $f'$ be any face of $G_f$ such that $C$ is a (non-degenerate) 3-extrovert cycle also in $G_{f'}$. The coloring of any contour path of $C$ obtained by applying the \textsc{3-Extrovert Coloring Rule} to $C$ is the same in $G_f$ and in $G_{f'}$. Also, $C$ is demanding in $G_f$ if and only if it is demanding in $G_{f'}$.
\end{lemma}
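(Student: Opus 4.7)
The proof plan: the key insight is that if $C$ remains 3-extrovert in $G_{f'}$, then by Observation~\ref{ob:extrovert-introvert} the face $f'$ lies in the exterior of $C$ in $G_f$; since $f$ also lies in the exterior of $C$ in $G_f$, the subgraphs $G_f(C)$ and $G_{f'}(C)$ coincide as plane embedded subgraphs (same vertex set, edge set, and combinatorial embedding of their interiors). First I would establish this equality, then extend the argument to every cycle $C''$ contained in $G_f(C)=G_{f'}(C)$: the interior of $C''$ is a subset of the interior of $C$ in both embeddings, so by a further application of Observation~\ref{ob:extrovert-introvert} the cycle $C''$ is 3-extrovert in $G_f$ if and only if it is 3-extrovert in $G_{f'}$. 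Furthermore, nesting of any two such subcycles is determined by the relation $C_1''\subset G_f(C_2'')=G_{f'}(C_2'')$ and is hence preserved. Therefore the subtree of the inclusion tree $T_f$ rooted at $C$ coincides with the subtree of $T_{f'}$ rooted at $C$, together with the child-cycle relations and the contour-path decompositions.

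Next I would prove the coloring statement by induction on the height of the common subtree rooted at $C$. In the base case $C$ has no 3-extrovert child-cycles, and each contour path is colored solely according to whether it contains a flexible edge, a property of $G$ that is independent of the choice of external face. For the inductive step the \textsc{3-Extrovert Coloring Rule} colors each contour path $P$ of $C$ based on (a)~the presence of flexible edges along $P$ and (b)~the presence of green contour paths of child-cycles of $C$ contained in $P$. Ingredient (a) is intrinsic to $G$; ingredient (b) is identical in $G_f$ and $G_{f'}$ by the inductive hypothesis applied to the (common) child-cycles of $C$. Hence $P$ receives the same color in both embeddings, and so do all three contour paths of $C$.

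Finally, the demanding property of $C$ is defined purely in terms of this coloring (Case~1 of the \textsc{3-Extrovert Coloring Rule}, i.e.\ all three contour paths of $C$ are green and $C$ shares no edge with a green contour path of any of its child-cycles). Both conditions refer only to data that are identical in $G_f$ and $G_{f'}$ by the argument above, so $C$ is demanding in $G_f$ if and only if it is demanding in $G_{f'}$. The only subtle point, and the place where one must be careful, is the initial claim that $G_f(C)=G_{f'}(C)$ as embedded subgraphs (not merely as abstract graphs); beyond that, the induction is routine and no substantial obstacle is expected.
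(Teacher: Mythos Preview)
Your proposal is correct and follows essentially the same approach as the paper: use Observation~\ref{ob:extrovert-introvert} to place $f'$ outside $G_f(C)$, deduce that every 3-extrovert cycle inside $C$ stays 3-extrovert and that the genealogical tree $T_C$ is unchanged, and conclude that the coloring and demanding status are preserved. The paper compresses your explicit height induction into a single sentence (``the application of the \textsc{3-Extrovert Coloring Rule} to $C'$ gives the same result''), and it speaks of the genealogical tree $T_C$ rather than the subtree of the inclusion tree $T_{f'}$ rooted at $C$---a safer choice of terminology since $G_{f'}$ need not have a reference embedding and hence $T_{f'}$ need not be a tree---but the substance is the same.
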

\begin{proof}
	By \cref{ob:extrovert-introvert} and by the fact that $C$ is 3-extrovert both in $G_f$ and in $G_{f'}$, we have that $f'$ is not in $G_f(C)$. Consider any 3-extrovert cycle $C'$ of $G_f(C)$. Since $f'$ is not in $G_f(C)$, by \cref{ob:extrovert-introvert} we have that $C'$ is also a 3-extrovert cycle in $G_{f'}$. It follows that the genealogical tree $T_{C}$ in $G_{f'}$ is the same as the genealogical tree $T_{C'}$ in $G_f$.
	Hence, the application of the \textsc{3-Extrovert Coloring Rule} to $C'$ gives the same result in $G_{f'}$ as in $G_{f}$. Finally, since also the coloring of the contour paths of the 3-extrovert cycles in $T_{C}$ has not changed, $C$ is demanding in $G_f$ if and only if it is demanding in $G_{f'}$.
\end{proof}

\begin{lemma}\label{le:extrovert-introvert-coloring}
    Let $G_f$ be a plane triconnected cubic graph whose embedding is a reference embedding.
	Let $\phi(C)$ be a non-degenerate 3-introvert cycle of $G_f$ and let $f'$ be any face of $G_f$ such that $\phi(C)$ is a (non-degenerate) 3-extrovert cycle in $G_{f'}$. The coloring of any contour path of $\phi(C)$ obtained by applying the \textsc{3-Introvert Coloring Rule} to $\phi(C)$ in $G_f$ coincides with that obtained by applying the \textsc{3-Extrovert Coloring Rule} to $\phi(C)$ in $G_{f'}$.
\end{lemma}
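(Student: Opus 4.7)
The goal is to show that, for each contour path $P$ of $\phi(C)$, the color assigned by the \textsc{3-Introvert Coloring Rule} in $G_f$ agrees with the color assigned by the \textsc{3-Extrovert Coloring Rule} applied to $\phi(C)$ in $G_{f'}$. The orange case is immediate: $P$ is orange iff it contains a flexible edge, and flexibility is a property of the edge alone, independent of the chosen external face. Hence the only thing to establish is that the sets of \emph{green contour paths} stabbing $P$ as seen by the two rules coincide and carry the same colors.

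The core of the argument is therefore to identify, for each contour path $P$ of $\phi(C)$, the cycles referenced by each rule. The \textsc{3-Extrovert Coloring Rule} in $G_{f'}$ consults the child-cycles of $\phi(C)$ in the genealogical tree $T_{C_o(G_{f'})}$ whose contour paths lie in $P$; by Observation~\ref{ob:intro-siblings} and Observation~\ref{ob:spartizione-faccia}, any such child-cycle must share with $\phi(C)$ the leg face $f_P$ incident to $P$. The plan is to show that, in $G_f$, the non-degenerate 3-extrovert cycles of $G_{f'}$ sharing $f_P$ as a leg face with $\phi(C)$ are exactly: (i) every sibling $C''$ of $C$ in $T_f$ having $f_P$ as one of its leg faces; and (ii) the cycle $\phi(\widetilde{C})$, whenever $\widetilde{C}$ is not the root of $T_f$ and $\phi(\widetilde{C})$ has $f_P$ as a leg face. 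For direction ``$\supseteq$'' one uses Observation~\ref{ob:extrovert-introvert} to check that each such cycle in $G_f$ is still a non-degenerate 3-extrovert cycle of $G_{f'}$ (its interior in $G_f$ does not contain $f'$, because $f'$ lies in $G_f(\phi(C))$ on the $\phi(C)$-side of $f_P$ opposite to those interiors), and that no intermediate 3-extrovert cycle of $G_{f'}$ can separate it from $\phi(C)$ in the genealogical tree, since any such intermediate cycle would have to contain $f_P$ on its interior side and would then produce a child-cycle relation already visible in $T_f$. For direction ``$\subseteq$'', any 3-extrovert cycle of $G_{f'}$ that is a child-cycle of $\phi(C)$ and shares an edge with $P$ either has its interior in $G_{f'}$ sitting in $G_f$ outside $C$ (in which case it corresponds to $\phi(\widetilde{C})$) or sitting inside $G_f(\widetilde{C})$ but outside $G_f(C)$ (in which case it is a sibling of $C$ in $T_f$), covering the two cases.

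Once this correspondence is established, the remainder is straightforward: each cycle appearing in the correspondence is non-degenerate and 3-extrovert in both $G_f$ and $G_{f'}$, so by Lemma~\ref{le:extrovert-extrovert-coloring} its contour path coloring is identical under the two embeddings. Consequently the set of \emph{green contour paths stabbing $P$} is the same for the two rules, and combining with the already observed coincidence of orange paths yields the same green/red/orange label on $P$ in both colorings. Iterating over the three contour paths of $\phi(C)$ yields the lemma.

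The main obstacle is the combinatorial identification in the middle step: one must argue cleanly that re-rooting the planar embedding from $f$ to an $f' \in G_f(\phi(C))$ transforms $\phi(\widetilde{C})$ (when it exists) and the appropriate siblings of $C$ in $T_f$ precisely into the child-cycles of $\phi(C)$ in the new genealogical tree, and that no new or spurious child-cycles arise. This requires a careful case distinction on whether $f'\in G_f(C)$ or $f'\in G_f(\phi(C))\setminus G_f(C)$, and a careful use of Observation~\ref{ob:extrovert-introvert} and of the fact that leg faces of $\phi(C)$ coincide with leg faces of $C$, to rule out any other 3-extrovert cycle of $G_{f'}$ sitting between $\phi(C)$ and the claimed children.
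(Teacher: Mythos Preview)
Your proposal correctly identifies the structural correspondence (child-cycles of $\phi(C)$ in $G_{f'}$ are the siblings of $C$ in $T_f$ together with $\phi(\widetilde{C})$ when $\widetilde{C}$ is not the root), and the orange case is fine. But there is a genuine gap in the step where you write: ``each cycle appearing in the correspondence is non-degenerate and 3-extrovert in both $G_f$ and $G_{f'}$, so by Lemma~\ref{le:extrovert-extrovert-coloring} its contour path coloring is identical under the two embeddings.'' This is false for $\phi(\widetilde{C})$: in $G_f$ the cycle $\phi(\widetilde{C})$ is a 3-\emph{introvert} cycle (that is precisely what the map $\phi$ produces), so its coloring in $G_f$ is given by the \textsc{3-Introvert Coloring Rule}, not the \textsc{3-Extrovert Coloring Rule}, and Lemma~\ref{le:extrovert-extrovert-coloring} does not apply to it.

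To know that the green contour paths of $\phi(\widetilde{C})$ are the same in the two embeddings, you need exactly the statement you are proving, but for $\phi(\widetilde{C})$ instead of $\phi(C)$. As written, your argument is circular. The paper breaks this circularity by induction on the distance $d$ from $C_o(G_f)$ to $C$ in $T_f$: in the base case $d=1$, the parent $\widetilde{C}$ is the root, $\phi(\widetilde{C})$ does not exist, and only siblings of $C$ appear (these \emph{are} 3-extrovert in $G_f$, so Lemma~\ref{le:extrovert-extrovert-coloring} suffices); in the inductive step, $\widetilde{C}$ is at distance $d-1$, so the inductive hypothesis gives that the coloring of $\phi(\widetilde{C})$ by the \textsc{3-Introvert Coloring Rule} in $G_f$ matches its coloring by the \textsc{3-Extrovert Coloring Rule} in $G_{f'}$. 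Adding this inductive scaffolding fixes your argument and makes it essentially the paper's proof.
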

\begin{proof}
	Let $c(P_1)$, $c(P_2)$, and $c(P_3)$ be the red-green-orange coloring of the three contour paths $P_1$, $P_2$, and $P_3$ of $\phi(C)$ defined according to the \textsc{3-Introvert Coloring Rule} applied to $\phi(C)$ in $G_{f}$. We prove that the coloring of $P_1$, $P_2$, and $P_3$ defined according to the \textsc{3-Extrovert Coloring Rule} applied to $\phi(C)$ in $G_{f'}$ coincides with $c(P_1)$, $c(P_2)$, and $c(P_3)$, respectively.
	For each $P_i$ ($1 \leq i \leq 3$), we consider two cases.

\begin{itemize}	
	\item {\bf $c(P_i)$ is orange in $G_f$:} In this case $P_i$ contains a flexible edge in $G_f$ according to Case~2(a) of the \textsc{3-Introvert Coloring Rule}. The contour path $P_i$ contains a flexible edge also in $G_{f'}$ and $c(P_i)$ is orange in $G_{f'}$ according to Case~2(a) of the \textsc{3-Extrovert Coloring Rule}.
	
	%Let $C$ be the node of $T_f$ such that $\phi(C)$ is its corresponding 3-introvert cycle.
	\item {\bf $c(P_i)$ is either red or green in $G_f$:}
	Let $C$ be the non-degenerate 3-extrovert cycle associated with $\phi(C)$ according to \cref{le:3-extro-3-intro}. We proceed by induction on the distance $d$ from $C_o(G_f)$ to $C$ in the inclusion tree $T_f$. Since $C$ cannot coincide with $C_o(G_f)$, in the base case we have $d=1$, i.e., $C$ is a child-cycle of $C_o(G_f)$. Let $T_{\phi(C)}$ be the genealogical tree of the 3-extrovert cycle $\phi(C)$ in $G_{f'}$. Since $d=1$, the child-cycles of $\phi(C)$ in $T_{\phi(C)}$ are the siblings of $C$ in $T_f$. By \cref{ob:intro-siblings} a leg face of the 3-extrovert cycle $\phi(C)$ in $G_{f'}$ is incident to a contour path of one of its child-cycles in $T_{\phi(C)}$ if and only if a leg face of the 3-introvert cycle $\phi(C)$ in $G_f$ is incident to a contour path of one of the siblings of $C$ in $T_f$. By \cref{le:extrovert-extrovert-coloring} the coloring of the contour paths of the siblings of $C$ in $T_f$ is preserved when they become child-cycles of $\phi(C)$ in $T_{\phi(C)}$. Hence, the coloring of $P_i$ in $G_{f'}$ is $c(P_i)$.
	
	Suppose now that $d>1$ and that the statement is true for all nodes of $T_f$ at distance $k < d$ from $C_o(G_f)$. Let $C$ be a node at distance $d$ in $T_f$ and let $\widetilde{C}$ be the parent of $C$ in $T_f$ (see \cref{fi:siblings-inversion-a,fi:siblings-inversion-a-tree}). Since each internal face of $G_f(\phi(C))$ is also an internal face of $G_f(\phi(\widetilde{C}))$, $f'$ is in the interior of $\phi(\widetilde{C})$ and,
	by \cref{ob:extrovert-introvert}, $\phi(C)$ and $\phi(\widetilde{C})$ are 3-extrovert cycles of $G_{f'}$ (see \cref{fi:siblings-inversion-b,fi:siblings-inversion-b-tree}). Also, $\phi(\widetilde{C})$ and the siblings of $C$ in $T_f$ are exactly the child-cycles of $\phi(C)$ in $T_{\phi(C)}$. By induction, the coloring of the contour paths of the 3-introvert cycle $\phi(\widetilde{C})$ in $G_f$ is the same as the coloring of the 3-extrovert cycle $\phi(\widetilde{C})$ in $G_{f'}$. By \cref{ob:intro-siblings} we have that in $G_{f'}$ a leg face $f''$ of the 3-extrovert cycle $\phi(C)$ is incident to a contour path of one of the child-cycles of $\phi(C)$ in $T_{\phi(C)}$ if and only if in $G_f$ one of the following cases holds: $(i)$ $f''$ is a leg face of one of the siblings of $C$ in $T_f$; $(ii)$ $f''$ is a leg face of $\phi(\widetilde{C})$.
	%incident to a contour path of one of the siblings of $C$ in $T_f$ or to a contour path of $\phi(\widetilde{C})$.
	Hence, the coloring of $P_i$ in $G_{f'}$ is~$c(P_i)$.
\end{itemize}
\par\par
\end{proof}

\begin{figure}[h!]
	\centering
	\subfloat[]{\label{fi:siblings-inversion-a}\includegraphics[width=0.3\columnwidth]{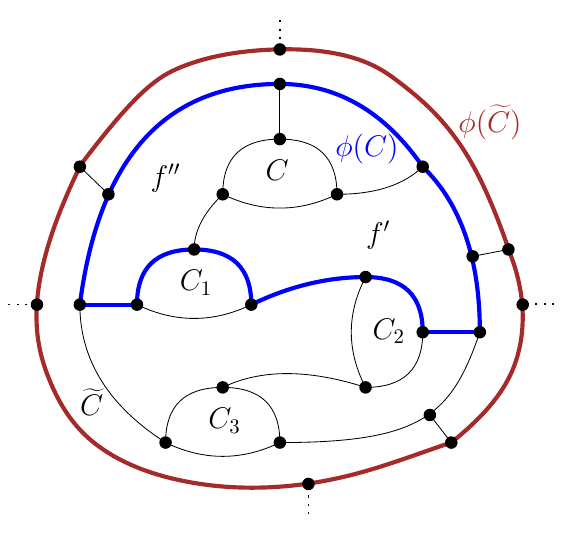}}
	\hfil
	\subfloat[]{\label{fi:siblings-inversion-a-tree}\includegraphics[width=0.17\columnwidth,page=2]{siblings-inversion-a}}
	\hfil
	\subfloat[]{\label{fi:siblings-inversion-b}\includegraphics[width=0.3\columnwidth]{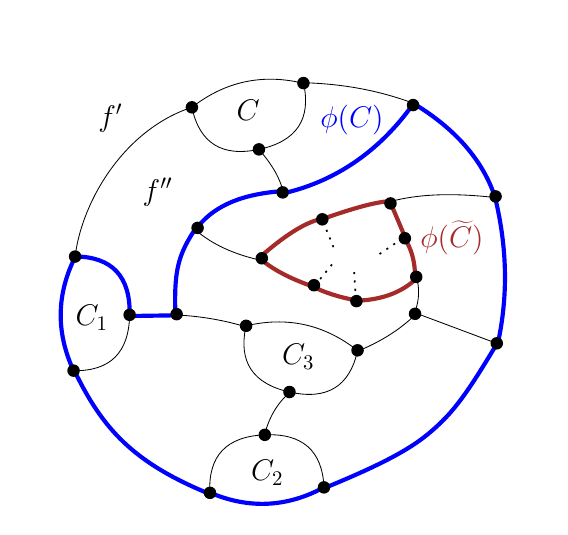}}
	\hfil
	\subfloat[]{\label{fi:siblings-inversion-b-tree}\includegraphics[width=0.17\columnwidth,page=2]{siblings-inversion-b}}
	\caption{Illustration for the proof of \cref{le:demanding-3-introvert-charact} when $d > 1$.
	}\label{fi:siblings-inversion}
\end{figure}

We are now ready to characterize the demanding 3-introvert cycles of~$G_f$.

\begin{lemma}\label{le:demanding-3-introvert-charact}
 %    Let $G_f$ be a plane triconnected cubic graph with a reference embedding.
	% Let $C$ be a non-degenerate 3-extrovert cycle of $G_f$ and let $\phi(C)$ be the corresponding (non-degenerate) 3-introvert cycle. Cycle $\phi(C)$ is demanding if and only if the following two conditions hold: $(i)$ $\phi(C)$ contains neither an orange contour path nor a green contour path of a sibling of $C$ in $T_f$; $(ii)$ either the parent $C'$ of $C$ is the root of $T_f$ or $\phi(C)$ does not contain a green contour path of $\phi(C')$.
    Let $G_f$ be a plane triconnected cubic graph whose embedding is a reference embedding.
	Let $C$ be a non-degenerate 3-extrovert cycle of $G_f$, let $\phi(C)$ be the corresponding (non-degenerate) 3-introvert cycle, and let $\widetilde{C}$ be the parent of $C$ in the inclusion tree $T_f$ of $G_f$.
	Denote by $S$ the set of the siblings of $C$ in $T_f$ union the non-degenerate 3-introvert cycle $\phi(\widetilde{C})$ if $\widetilde{C}$ is not the root of $T_f$.
	Cycle $\phi(C)$ is demanding if and only if its three contour paths are green and none of them contains a green contour path of a cycle in~$S$.
\end{lemma}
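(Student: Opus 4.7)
The plan is to reduce the statement to the definition of a demanding (3-extrovert) cycle by re-rooting the embedding so that $\phi(C)$ becomes a non-degenerate 3-extrovert cycle, and then to invoke \cref{le:extrovert-introvert-coloring} to transfer the coloring from $G_f$ to the new embedding. Since $\phi(C)$ is non-degenerate, $G_f(\phi(C))$ contains at least one internal face; I pick any such face $f'$. By \cref{ob:extrovert-introvert}, $\phi(C)$ is a 3-extrovert cycle in $G_{f'}$, and it is still non-degenerate because it has the same legs as in $G_f$.

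First I would identify the child-cycles of $\phi(C)$ in the genealogical tree $T_{\phi(C)}$ computed in $G_{f'}$. Mimicking the analysis used in the proof of \cref{le:extrovert-introvert-coloring}, the 3-extrovert cycles of $G_{f'}$ contained in $G_{f'}(\phi(C))$ that are not strictly inside any other 3-extrovert cycle of $G_{f'}$ contained in $G_{f'}(\phi(C))$ are precisely: every sibling $C'$ of $C$ in $T_f$ (because each such $C'$ lies in $G_f(\widetilde{C})\setminus G_f(C)$, hence in $G_{f'}(\phi(C))$, and remains 3-extrovert by \cref{ob:extrovert-introvert} since $f'$ is not inside $G_f(C')$), together with $\phi(\widetilde{C})$ whenever $\widetilde{C}$ is not the root of $T_f$ (because in that case $\widetilde{C}$ becomes 3-introvert in $G_{f'}$ and its associated non-degenerate 3-extrovert cycle in $G_{f'}$ is $\phi(\widetilde{C})$, which lies in $G_{f'}(\phi(C))$ by \cref{ob:spartizione-faccia}). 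Thus the set of child-cycles of $\phi(C)$ in $T_{\phi(C)}$ is exactly $S$.

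Second, I would apply \cref{le:extrovert-introvert-coloring} to $\phi(C)$, obtaining that the color of each contour path of $\phi(C)$ produced by the \textsc{3-Introvert Coloring Rule} in $G_f$ coincides with the color produced by the \textsc{3-Extrovert Coloring Rule} in $G_{f'}$. By \cref{le:extrovert-extrovert-coloring} the same holds for the contour paths of every sibling of $C$ in $T_f$ (each of which is 3-extrovert also in $G_{f'}$), and again by \cref{le:extrovert-introvert-coloring} the colors of the contour paths of $\phi(\widetilde{C})$ are preserved. Hence, sharing a green contour path with an element of $S$ in $G_f$ is equivalent to sharing a green contour path with a child-cycle in $T_{\phi(C)}$ in $G_{f'}$.

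Third, I would combine the two previous steps with the definition of demanding. A non-degenerate 3-extrovert cycle is demanding exactly when its three contour paths are all green and none of them contains a green contour path of one of its child-cycles in the genealogical tree (this is precisely Case~1 of the \textsc{3-Extrovert Coloring Rule}, interpreted as in the definition of demanding right after the rule). Applying this to $\phi(C)$ in $G_{f'}$ and translating the colors and the child-cycles back to $G_f$ via steps one and two yields the claimed equivalence. The main obstacle I anticipate is carefully verifying the complete identification of the child-cycles of $\phi(C)$ in $T_{\phi(C)}$ with the set $S$: one must check that no other non-degenerate 3-extrovert cycle of $G_{f'}$ sits strictly between $\phi(C)$ and the siblings of $C$ or between $\phi(C)$ and $\phi(\widetilde{C})$. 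This is handled by the geometric observation that each leg face of $\phi(C)$ (which separates $\phi(C)$ from a sibling of $C$ or from $\phi(\widetilde{C})$, see \cref{ob:spartizione-faccia,ob:intro-siblings}) is bounded by contour paths of $\phi(C)$ and of the corresponding element of $S$ joined through the common legs, so no intermediate non-degenerate 3-extrovert cycle can exist in $G_{f'}$.
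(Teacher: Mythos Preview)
Your proposal is correct and follows essentially the same approach as the paper's proof: re-embed so that $\phi(C)$ becomes 3-extrovert, identify the child-cycles of $\phi(C)$ in the new embedding as exactly the set $S$, transfer the colorings via \cref{le:extrovert-extrovert-coloring,le:extrovert-introvert-coloring}, and read off the demanding condition from the \textsc{3-Extrovert Coloring Rule}. The paper chooses $f'$ specifically as a leg face of $\phi(C)$ rather than an arbitrary internal face of $G_f(\phi(C))$, but your more general choice works equally well since what matters is only that $f'$ lies in the interior of $\phi(C)$ (and hence of $\phi(\widetilde{C})$), which forces the interior/exterior swap that makes $S$ the set of child-cycles.
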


\begin{proof}
	Let $f'$ be a leg-face of $\phi(C)$ and consider the plane triconnected cubic graph $G_{f'}$ obtained from $G_f$ by choosing $f'$ as external face (refer to \cref{fi:siblings-inversion-a}). Since $\phi(C)$ is a 3-introvert cycle in $G_f$, face $f'$ is in the interior of $\phi(C)$. Also, if $\phi(\widetilde{C})$ is defined, face $f'$ is also in the interior of $\phi(\widetilde{C})$. Hence, by \cref{ob:extrovert-introvert} both $\phi(C)$ and $\phi(\widetilde{C})$ are 3-extrovert cycles in the embedding of $G_{f'}$.
	%By \cref{le:extrovert-extrovert-coloring} it is sufficient to prove the statement for $G_{f'}$.
	We prove that $\phi(C)$ satisfies the statement in $G_{f'}$. Indeed, by \cref{le:extrovert-extrovert-coloring} proving that $\phi(C)$ is demanding in $G_{f'}$ implies that $\phi(C)$ is demanding in any embedding in which $\phi(C)$ is a 3-extrovert cycle.

    Let $C_1, C_2, \dots, C_k$ be the siblings of $C$ in $T_f$. Since $G_f$ is a plane triconnected cubic graph with a reference embedding and $C_1, C_2, \dots, C_k$ are non-degenerate 3-extrovert cycles, they do not share any edges with the external face of $G_f$.
    By \cref{pr:intersecting-3-extrovert} they do not intersect with $C$. It follows that $C_1, C_2, \dots, C_k$ are not in the interior of $\phi(C)$ and thus $f'$ is in the exterior of $C_1, C_2, \dots, C_k$. By \cref{ob:extrovert-introvert}, $C_1, C_2, \dots, C_k$ are 3-extrovert cycles also in $G_{f'}$ (see, for example, \cref{fi:siblings-inversion-b}).

    Since $f'$ is in the interior of $\phi(C)$ in $G_f$, when $f'$ becomes the external face in $G_{f'}$, we have that all cycles in the interior (resp. exterior) of $\phi(C)$ in $G_f$ are moved to the exterior (resp. interior) of $\phi(C)$ in $G_{f'}$. It follows that the child-cycles of $\phi(C)$ in $G_{f'}$ are exactly the cycles of $S$.

    By \cref{le:extrovert-extrovert-coloring,le:extrovert-introvert-coloring} the coloring of the cycles in $S$ and the coloring of $\phi(C)$ are the same both in $G_f$ and in $G_{f'}$.
    If the three contour paths of $\phi(C)$ are green and none of them contains a green contour path of any cycle in $S$, then $\phi(C)$ satisfies the first condition of the \textsc{3-Extrovert Coloring Rule} and it is a demanding 3-extrovert cycle in $G_{f'}$. Conversely, if $\phi(C)$ is not a demanding 3-extrovert cycle in $G_{f'}$, then by \textsc{3-Extrovert Coloring Rule} $\phi(C)$ must contain either a red/orange contour path or a green contour path of some cycle in $S$.
\end{proof}

\begin{figure}[h!]
	\centering
	\subfloat[]{\label{fi:introvert-colouration-3-a}\includegraphics[width=0.4\columnwidth]{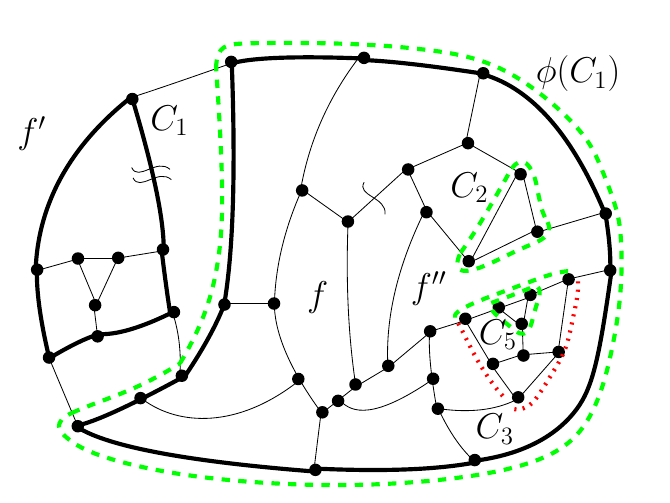}}
	\hfil
	\subfloat[]{\label{fi:introvert-colouration-3-b}\includegraphics[width=0.4\columnwidth]{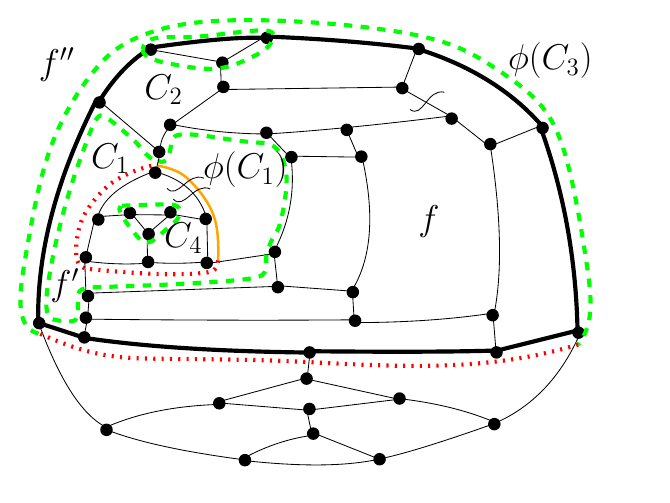}}
	\caption{A different embedding of the same graph depicted in \cref{fi:introvert-colouration-1}.}\label{fi:introvert-colouration-3}
\end{figure}

In \cref{fi:introvert-colouration-1-b} $\phi(C_1)$ is a 3-introvert cycle of $G_f$ that satisfies the conditions of \cref{le:demanding-3-introvert-charact} and, therefore, it is demanding. Indeed, $\phi(C_1)$ is a demanding 3-extrovert cycle in \cref{fi:introvert-colouration-3-a} where the external face has been changed from $f$ to $f'$.
Conversely, in \cref{fi:introvert-colouration-1-d} $\phi(C_3)$ is a 3-introvert cycle of $G_f$ that does not satisfy \cref{le:demanding-3-introvert-charact} because it contains a green contour path of the sibling $C_2$ of $C_3$ in $T_f$. Therefore, $\phi(C_3)$ is not demanding. Indeed, $\phi(C_3)$ is a non-demanding 3-extrovert cycle in \cref{fi:introvert-colouration-3-b} where the external face has been changed from $f$ to $f''$.

\cref{le:demanding-3-introvert-charact} together with \cref{le:extrovert-introvert-coloring} directly imply the following.

\begin{corollary}\label{co:introvert-extrovert-demanding}
    Let $G_f$ be a plane triconnected cubic graph whose embedding is a reference embedding.
	Let $\phi(C)$ be a non-degenerate 3-introvert cycle of $G_f$ and let $f'$ be any face of $G_f$ such that $\phi(C)$ is a (non-degenerate) 3-extrovert cycle in $G_{f'}$. Cycle $\phi(C)$ is a demanding 3-extrovert cycle in $G_{f'}$ if and only if it is a demanding 3-introvert cycle in $G_f$.
\end{corollary}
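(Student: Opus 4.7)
The plan is to deduce \cref{co:introvert-extrovert-demanding} as a direct consequence of \cref{le:extrovert-introvert-coloring} and \cref{le:demanding-3-introvert-charact}, by carefully matching up the ``local neighborhood'' of $\phi(C)$ that controls the demanding status in the two settings. Since $\phi(C)$ is a 3-extrovert cycle in $G_{f'}$, by \cref{ob:extrovert-introvert} the external face $f'$ lies in the interior of $\phi(C)$ in $G_f$; and if $\widetilde{C}$ denotes the parent of $C$ in $T_f$, then $f'$ also lies in the interior of $\phi(\widetilde{C})$ in $G_f$, so by \cref{ob:extrovert-introvert} the cycle $\phi(\widetilde{C})$ (when defined) is a 3-extrovert cycle of $G_{f'}$ as well.

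First I would argue that the children of $\phi(C)$ in the genealogical tree of $\phi(C)$ in $G_{f'}$ are exactly the cycles in the set $S$ of \cref{le:demanding-3-introvert-charact}, namely the siblings of $C$ in $T_f$ together with $\phi(\widetilde{C})$ (if $\widetilde{C}$ is not the root of $T_f$). This is precisely the identification already made inside the proof of \cref{le:demanding-3-introvert-charact}: the cycles in the interior of $\phi(C)$ in $G_f$ are mapped to cycles in the exterior of $\phi(C)$ in $G_{f'}$ after flipping, and \cref{pr:intersecting-3-extrovert} together with the reference-embedding hypothesis ensures that the siblings of $C$ in $T_f$ remain 3-extrovert in $G_{f'}$ and become its child-cycles, while $\phi(\widetilde{C})$ contributes the additional child-cycle coming from the ``inversion'' of $\widetilde{C}$.

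Next I would invoke \cref{le:extrovert-introvert-coloring} to conclude that the coloring of each contour path of $\phi(C)$ under the \textsc{3-Introvert Coloring Rule} in $G_f$ coincides with its coloring under the \textsc{3-Extrovert Coloring Rule} in $G_{f'}$. Combined with \cref{le:extrovert-extrovert-coloring} applied to every sibling of $C$ in $T_f$, and with \cref{le:extrovert-introvert-coloring} applied to $\phi(\widetilde{C})$, this implies that the colors of all contour paths that influence the demanding test for $\phi(C)$ are the same in both settings.

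Finally I would conclude by matching definitions. By \cref{le:demanding-3-introvert-charact}, $\phi(C)$ is a demanding 3-introvert cycle of $G_f$ exactly when its three contour paths are green and none of them contains a green contour path of a cycle in $S$. By the \textsc{3-Extrovert Coloring Rule} (Case~1), $\phi(C)$ is a demanding 3-extrovert cycle of $G_{f'}$ exactly when its three contour paths are green and none of them contains a green contour path of a child-cycle in its genealogical tree in $G_{f'}$. Since the set of child-cycles coincides with $S$ and the colors of the relevant contour paths coincide, the two conditions are equivalent, proving the corollary. The only real obstacle is the bookkeeping of the first step, i.e.\ verifying that no other 3-extrovert cycle in $G_{f'}$ sneaks in as a child-cycle of $\phi(C)$; this is already handled by the argument in \cref{le:demanding-3-introvert-charact}, so the corollary truly is immediate.
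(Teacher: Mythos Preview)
Your proposal is correct and follows essentially the same route as the paper, which simply notes that \cref{le:demanding-3-introvert-charact} together with \cref{le:extrovert-introvert-coloring} directly imply the corollary. You have unpacked this implication in detail: the identification of the child-cycles of $\phi(C)$ in $G_{f'}$ with the set $S$ is exactly what is established inside the proof of \cref{le:demanding-3-introvert-charact}, and your use of \cref{le:extrovert-introvert-coloring} and \cref{le:extrovert-extrovert-coloring} to match colorings is precisely the mechanism that makes the two demanding conditions coincide. The only minor difference is that the proof of \cref{le:demanding-3-introvert-charact} fixes $f'$ to be a leg face of $\phi(C)$ and then invokes \cref{le:extrovert-extrovert-coloring} to transfer the conclusion to any other $f'$, whereas you argue directly for a general $f'$; since $G_{f'}(\phi(C))$ (and hence the set of child-cycles of $\phi(C)$) is the same for every $f'$ in the interior of $\phi(C)$, both variants are valid and equivalent.
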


%%%%%%%%
%%%%%%%%
%%%%%%%%
\subsubsection{Computing demanding 3-introvert cycles of a reference embedding}\label{ssse:computing-demanding-3-introvert}

This section shows how to efficiently compute the set of non-degenerate demanding 3-introvert cycles for a plane graph whose embedding is a reference embedding.
We first consider the problem of efficiently computing a coloring according to the \textsc{3-Introvert Coloring Rule}.
% and as $\Fx_I(G_f)$ the set of values $\fx(P)$ for all contour paths of the non-degenerate 3-introvert cycles of $G_f$.
%
Differently from non-degenerate 3-extrovert cycles, our strategy is to avoid an explicit representation of the edges composing the contour paths of non-degenerate 3-introvert cycles. Indeed, since non-degenerate 3-introvert cycle may intersect (see \cref{pr:3-introvert-intersecting}), such an explicit representation may require an overall superlinear time.
Instead, we represent a contour path $P$ of $\phi(C)$ in terms of the contour path $P'$ of $C$ incident to the same (leg) face as $P$ (see \cref{ob:spartizione-faccia}). We assume to have pointers from $P$ to $P'$ and vice versa, and from $P$ and $P'$ to their common legs and their common leg face.
We call such a representation an \emph{implicit representation} of the non-degenerate 3-introvert cycle $\phi(C)$.
As in \cref{sse:demanding-3-extrovert-reference}, we denote as $\fx(P)$ the number of flexible edges on a contour path $P$ of a non-degenerate 3-introvert cycle of $G_f$.
We prove the following.

\begin{lemma}\label{le:fxi}
Let $G_f$ be an $n$-vertex plane triconnected cubic graph whose embedding is a reference embedding and let $\{P_1, P_2, \dots, P_h\}$ be the set of contour paths over all non-degenerate 3-introvert cycles of~$G_f$. The values $\fx(P_1), \fx(P_2), \dots, \fx(P_h)$ can be computed in overall $O(n)$ time.
\end{lemma}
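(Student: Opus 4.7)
The plan is to reduce $\fx(P)$ for each contour path $P$ of a non-degenerate 3-introvert cycle to a constant-time lookup based on three quantities that are all precomputed in overall linear time.

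First, by \cref{le:inclusion-tree}, compute the inclusion tree $T_f$ of $G_f$ in $O(n)$ time. Since the non-degenerate 3-introvert cycles of $G_f$ are in one-to-one correspondence with the non-degenerate 3-extrovert cycles of $G_f$ (\cref{le:3-extro-3-intro}), and the latter are the non-root nodes of $T_f$, there are $O(n)$ non-degenerate 3-introvert cycles and hence $h=O(n)$. By \cref{le:explicit_representation} compute the explicit representation of the non-degenerate 3-extrovert cycles, and by \cref{le:fxe} compute $\fx(P')$ for every contour path $P'$ of every non-degenerate 3-extrovert cycle in overall $O(n)$ time.

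Next, compute in $O(n)$ time, for every face $f'$ of $G_f$, the quantity $\fx(f')$ defined as the number of flexible edges on the boundary of $f'$. This is done by a single pass over the edges of $G_f$: each edge belongs to exactly two faces and contributes $1$ to each of them if it is flexible. Because $G_f$ has $O(n)$ edges, this takes $O(n)$ time in total.

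Now process the 3-introvert cycles. Let $\phi(C)$ be a non-degenerate 3-introvert cycle, let $P$ be a contour path of $\phi(C)$, and let $f'$ be the leg face of $\phi(C)$ incident to $P$. By \cref{ob:spartizione-faccia}, the boundary of $f'$ consists of $P$, of the contour path $P'$ of $C$ incident to $f'$, and of the two legs $\ell_1,\ell_2$ that $\phi(C)$ and $C$ share on $f'$. Since these four pieces partition the edges of $\partial f'$,
\[
\fx(P) \;=\; \fx(f') - \fx(P') - \fx(\ell_1) - \fx(\ell_2),
\]
where $\fx(\ell_i)\in\{0,1\}$ is readable in $O(1)$ time. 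The implicit representation of $\phi(C)$ provides direct pointers from $P$ to $f'$, to $P'$, and to $\ell_1,\ell_2$, so the right-hand side is evaluated in $O(1)$ time using the values already computed. Iterating this over the $h=O(n)$ contour paths of the non-degenerate 3-introvert cycles takes $O(n)$ time. Adding the $O(n)$ cost of the three preprocessing steps yields the claimed overall $O(n)$ bound.

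The only subtlety, and the step that required the preparation above, is that one must not try to traverse each contour path of a 3-introvert cycle edge by edge: two distinct non-degenerate 3-introvert cycles may share a leg face (\cref{pr:3-introvert-intersecting}), so their contour paths may overlap and the naive traversal could cost $\omega(n)$ in total. Routing the computation through $\fx(f')$ and $\fx(P')$, both of which are charged to a fixed edge or to the explicit representation of the 3-extrovert side (each edge or token being visited $O(1)$ times), is exactly what avoids this blow-up.
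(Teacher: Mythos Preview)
Your proof is correct and follows essentially the same approach as the paper: precompute $\fx(f')$ for every face and $\fx(P')$ for every 3-extrovert contour path, then use \cref{ob:spartizione-faccia} to obtain $\fx(P)=\fx(f')-\fx(P')-c$ with $c\in\{0,1,2\}$ the number of flexible legs on $f'$. Your closing remark about why edge-by-edge traversal of 3-introvert contour paths must be avoided is a nice addition that the paper makes just before the lemma rather than inside the proof.
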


\begin{proof}
By \cref{le:inclusion-tree,le:fxe} we compute in $O(n)$ time the inclusion tree $T_f$ of $G_f$ and the numbers $\fx(P'_1), \fx(P'_2), \dots, \fx(P'_h)$ of flexible edges along the contour paths $\{P'_1, P'_2, \dots, P'_h\}$ of the non-degenerate 3-extrovert cycles of $G_f$.
Let $C$ be a non-root node of $T_f$ (which corresponds to a 3-extrovert cycle of $G_f$), let $P'$ be a contour path of $C$, and let $P$ be the corresponding contour path of $\phi(C)$. Also, let $f'$ be the leg face of $C$ (and of $\phi(C)$) incident to $P'$ (and to $P$). Finally, let $e_1$ and $e_2$ be the two legs of $C$ (and of $\phi(C)$) incident to $f'$.
We assume that $f'$ is equipped with a counter $\fx(f')$ that reports the number of flexible edges in $f'$. The set of these counters for all faces of $G_f$ can be computed in $O(n)$ time by a visit of $G_f$.

The value $\fx(P)$ is obtained in $O(1)$ time as $\fx(P) = \fx(f') - \fx(P') - c$, where $c \in \{0,1,2\}$ is the number of flexible edges in $\{e_1,e_2\}$.
It follows that $\fx(P_1), \fx(P_2), \dots, \fx(P_h)$ can be computed in $O(n)$ time.
\end{proof}

\begin{lemma}\label{le:introvert-coloring-algo}
    Let $G_f$ be an $n$-vertex plane triconnected cubic graph with a reference embedding.
	The red-green-orange coloring of the non-degenerate 3-introvert cycles of $G_f$ that satisfies the \textsc{3-Introvert Coloring Rule} can be computed in $O(n)$ time.
\end{lemma}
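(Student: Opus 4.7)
The plan is to combine the coloring of non-degenerate 3-extrovert cycles (\cref{le:3-extrovert-coloring-linear}), the flexibility counts on 3-introvert contour paths (\cref{le:fxi}), and a top-down traversal of the inclusion tree $T_f$. First I would compute $T_f$ in $O(n)$ time (\cref{le:inclusion-tree}), the red-green-orange coloring of every contour path of every non-degenerate 3-extrovert cycle in $O(n)$ time (\cref{le:3-extrovert-coloring-linear}), and the value $\fx(P)$ for every contour path $P$ of every non-degenerate 3-introvert cycle in $O(n)$ time (\cref{le:fxi}). For each internal face $f'$ of $G_f$ I would also attach the list of non-degenerate 3-extrovert cycles having $f'$ as a leg face, together with the color of the corresponding contour path; the total size of these lists is $O(n)$ because each non-degenerate 3-extrovert cycle contributes to exactly three of them. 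By \cref{le:3-extro-3-intro}, the same lists identify the non-degenerate 3-introvert cycles incident to each face.

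Next I would visit $T_f$ top-down. Before descending from a node $\widetilde{C}$ to its children, I would bucket the children of $\widetilde{C}$ by their leg faces: for every leg face $f'$ of a child $C$, put $(C,f')$ in a bucket $B(\widetilde{C},f')$ and maintain the count $g(\widetilde{C},f')$ of children in $B(\widetilde{C},f')$ whose contour path at $f'$ is colored green by the \textsc{3-Extrovert Coloring Rule}. All such buckets over all $\widetilde{C}$ can be built in overall $O(n)$ time by a single pass through the $O(n)$ (child, leg-face) pairs in the inclusion tree.

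For each non-root node $C \in T_f$ with parent $\widetilde{C}$ and for each leg face $f'$ of $C$, I would color the contour path $P$ of $\phi(C)$ incident to $f'$ as follows. If $\fx(P) > 0$, color $P$ orange (Case 2(a) of the \textsc{3-Introvert Coloring Rule}). Otherwise, perform in $O(1)$ time the following two tests: (i) whether $B(\widetilde{C},f')$ contains a green contour path of a cycle different from $C$, by comparing $g(\widetilde{C},f')$ with the indicator that $C$'s own contour path at $f'$ is green, which by \cref{ob:intro-siblings} is equivalent to $P$ containing a green contour path of a sibling of $C$ in $T_f$; and (ii) whether $\widetilde{C}$ is not the root of $T_f$, $f'$ is also a leg face of $\widetilde{C}$, and the contour path of $\phi(\widetilde{C})$ at $f'$ is green, which is known because $\widetilde{C}$ was processed before $C$ in the top-down traversal. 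If either test succeeds, color $P$ green (Case 2(b)); otherwise mark $P$ as undetermined. After all three contour paths of $\phi(C)$ have been processed, if all three are undetermined, color them green (Case 1); otherwise color the undetermined ones red (Case 2(c)).

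\textbf{Main obstacle.} The delicate point is to test the sibling condition in $O(1)$ per query. Since two distinct non-degenerate 3-introvert cycles that share a leg face must intersect (\cref{pr:3-introvert-intersecting}), an explicit scan of the edges of $P$ searching for a green contour path of a sibling could be too expensive and yield an $O(n^2)$ algorithm in the worst case, because an edge of $G_f$ can lie on many 3-introvert contour paths simultaneously. The bucketing strategy circumvents this: via \cref{ob:intro-siblings} the sibling test reduces to a single counter lookup on $B(\widetilde{C},f')$, while the parent test reduces to a constant-time inspection of the already computed coloring of $\phi(\widetilde{C})$ at its three leg faces. Combined with the precomputed values $\fx(P)$, this gives the claimed $O(n)$ running time.
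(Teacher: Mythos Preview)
Your proposal is correct and follows essentially the same approach as the paper's proof. The paper also performs a pre-order traversal of $T_f$ and, for each node $\widetilde{C}$, sets up a counter $\textsf{green}(f')$ over the leg faces $f'$ of the children of $\widetilde{C}$ together with $\phi(\widetilde{C})$; your scheme is equivalent, except that you split this counter into the sibling bucket $g(\widetilde{C},f')$ and a separate $O(1)$ check for $\phi(\widetilde{C})$ at $f'$, which is a harmless presentational difference.
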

\begin{proof}
    By means of \cref{le:inclusion-tree,le:fxi}, we compute in $O(n)$ time the inclusion tree $T_f$ of $G_f$ and the values $\fx(P_1), \fx(P_2), \dots, \fx(P_h)$ for the contour paths $\{P_1, P_2, \dots, P_h\}$ of the non-degenerate 3-introvert cycles of~$G_f$. Also, by \cref{le:3-extrovert-coloring-linear}, we compute in $O(n)$ time the red-green-orange coloring of the contour paths of the non-degenerate 3-extrovert cycles of $G_f$.
	
	We perform a pre-order visit of $T_f$. Let $C_1, C_2, \dots, C_k$ be the children of a node $C$ of $T_f$. We color the contour paths of $\phi(C_1), \phi(C_2), \dots, \phi(C_k)$ by performing two algorithmic steps. Since $T_f$ is visited in pre-order, the contour paths of $\phi(C)$ are already colored (unless $C$ is the root of $T_f$ in which case $\phi(C)$ is not defined).
	%unless $C=C_o(G_f)$, in which case there is no 3-introvert cycle associated with $C$.
    Denote by $S$ the set of the non-degenerate 3-extrovert cycles $C_1, \dots, C_k$ plus the non-degenerate 3-introvert cycle $\phi(C)$ (if it exists).
    Observe that the cycles in $S$ are disjoint.
    %unless $C$ is the root of $T_f$ in which case $\phi(C)$ is not defined).
	Let $\mathcal{F}$ be the set of leg faces of the cycles in $S$.
	For each face $f' \in \mathcal{F}$ let $\textsf{green}(f')$ be the number of green contour paths that belong to cycles in $S$ and that are incident to $f'$.

	The coloring algorithm consists of two steps: In the first step, the algorithm assigns the orange color to every contour path that has some flexible edges and it assigns the green color to some other paths. At the end of the first step, the color for some contour paths of $\phi(C_1), \phi(C_2), \dots, \phi(C_k)$ may remain undefined.  In the second step, the undecided contour paths are colored either green or red. More precisely:

\smallskip	\noindent\textsf{Step 1}: Let $P$ be a contour path of $\phi(C_i)$ ($1 \leq i \leq k$) and let $f'$ be the leg face of $\phi(C_i)$ incident to $P$. We use $\fx(P)$ and $\textsf{green}(f')$ to decide if $P$ is colored orange, green, or if its color remains undefined, as follows.
		\begin{itemize}
			\item[--] If $\fx(P) > 0$ (i.e., $P$ includes a flexible edge) then $P$ is colored orange (see Case~2(a) of the \textsc{3-Introvert Coloring Rule}).
			\item[--] If $\fx(P) = 0$ then:
			\begin{itemize}
				\item[--] If $\textsf{green}(f') > 1$ (i.e., there exists a cycle $C'\in S$ that has a green contour path incident to $f'$) then $P$ is colored green (see Case~2(b) of the \textsc{3-Introvert Coloring Rule}).
				\item[--] If $\textsf{green}(f') = 1$ and $C_i$ does not have a green contour path incident to $f'$, by \cref{ob:intro-siblings} $P$ includes a green contour path of a cycle $C'\in S \setminus \{C_i\}$. In this case $P$ is colored green (see Case~2(b) of the \textsc{3-Introvert Coloring Rule}).
				\item[--] Otherwise, the color of $P$ remains undefined.
			\end{itemize}
		\end{itemize}
		
\smallskip \noindent\textsf{Step 2}: This step considers the 3-introvert cycles $\phi(C_1), \dots, \phi(C_k)$ having at least one contour path with undefined color at the end of the previous step. Let $\phi(C_i)$ be one such cycle ($i=1, \dots, k$).
		
		\begin{itemize}
			\item[--] If all three contour paths of $\phi(C_i)$ have undefined color (i.e., $\phi(C_i)$ does not have an orange contour path and it does not share a green edge with a cycle $C'\in S \setminus \{C_i\}$), then they are colored green (see Case~1 of the \textsc{3-Introvert Coloring Rule}).
			\item[--] Otherwise, the contour paths of $\phi(C_i)$ with undefined color are colored red (see Case~2(c) of the \textsc{3-Introvert Coloring Rule}).
		\end{itemize}
	
	%The correctness of the above described coloring algorithm is easily established by observing that whenever a contour path is given the red, green, or orange color, this is done in accordance with the \textsc{3-Introvert Coloring Rule}.
	Regarding the time complexity, observe that the union of all sets $\mathcal{F}$ and of all sets $S$ has size $O(n)$, which implies that all values $\textsf{green}(\cdot)$ can be computed in overall $O(n)$ time. Also, each contour path is considered $O(1)$ times and the number of contour paths is $O(n)$.
\end{proof}

The next result is about how to efficiently compute the set of non-degenerate demanding 3-introvert cycles for a plane graph with a reference embedding.

\begin{lemma}\label{le:demanding-3-introvert-algo}
    Let $G_f$ be an $n$-vertex plane triconnected cubic graph with a reference embedding.
	An implicit representation of the non-degenerate demanding 3-introvert cycles of $G_f$ can be computed in $O(n)$~time.
\end{lemma}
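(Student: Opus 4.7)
The plan is to combine the characterization of demanding 3-introvert cycles given by \cref{le:demanding-3-introvert-charact} with the linear-time colouring machinery already set up in \cref{le:inclusion-tree,le:3-extrovert-coloring-linear,le:introvert-coloring-algo}. Since the output must be the implicit representation of each demanding 3-introvert cycle, at no point will we need to expand a contour path of $\phi(C)$ as a sequence of edges, which would be too expensive because non-degenerate 3-introvert cycles may intersect (\cref{pr:3-introvert-intersecting}).

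First, I would compute in $O(n)$ time the inclusion tree $T_f$ of $G_f$ (\cref{le:inclusion-tree}), an explicit representation of every non-degenerate 3-extrovert cycle (\cref{le:explicit_representation}), the colouring of their contour paths according to the \textsc{3-Extrovert Coloring Rule} (\cref{le:3-extrovert-coloring-linear}), and the colouring of the contour paths of every non-degenerate 3-introvert cycle $\phi(C)$ under the \textsc{3-Introvert Coloring Rule} (\cref{le:introvert-coloring-algo}). Recall that a non-degenerate 3-introvert cycle $\phi(C)$ is represented implicitly: we store pointers from each contour path $P$ of $\phi(C)$ to the unique leg face $f'$ incident to $P$, to the two legs of $\phi(C)$ on $f'$, and to the companion contour path of $C$ on $f'$ (see \cref{ob:spartizione-faccia}). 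All of this takes $O(n)$ time and space.

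Next I would decorate every face $f'$ of $G_f$ with the integer $\textsf{green}(f')$ already used in \cref{le:introvert-coloring-algo}, defined as the total number of green contour paths of non-degenerate 3-extrovert and non-degenerate 3-introvert cycles that are incident to $f'$. Since each contour path is incident to exactly one leg face and the total number of contour paths is $O(n)$, these counters can be computed in $O(n)$ total time by a single pass over all coloured contour paths. I will also equip every non-degenerate 3-extrovert cycle $C$ with pointers to its three leg faces (already available from the explicit representation) and to the contour path of $\phi(C)$ on each such leg face, so that we can move between $C$, $\phi(C)$, and the adjacent leg faces in~$O(1)$.

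Finally, I would perform one traversal of $T_f$ and, for every non-root node $C$, decide in $O(1)$ whether $\phi(C)$ is demanding using the criterion of \cref{le:demanding-3-introvert-charact}. Let $\widetilde{C}$ be the parent of $C$ in $T_f$ and let $S$ be the set of siblings of $C$ in $T_f$ together with $\phi(\widetilde{C})$ when $\widetilde{C}$ is not the root. Cycle $\phi(C)$ is demanding if and only if (i) each of its three contour paths is green and (ii) none of them contains a green contour path of a cycle in~$S$. Condition~(i) is checked directly from the precomputed 3-introvert colouring of $\phi(C)$. Condition~(ii) is checked via the counters: letting $f'_1, f'_2, f'_3$ be the three leg faces of $\phi(C)$, by \cref{ob:intro-siblings} a contour path of some $C'\in S$ lies on the contour path of $\phi(C)$ incident to $f'_j$ exactly when $C'$ has a green contour path incident to $f'_j$; using $\textsf{green}(f'_j)$ and the colour of the contour path of $\phi(C)$ on $f'_j$ (which contributes $1$ to the counter whenever it is green), one can in $O(1)$ decide whether any cycle in $S$ has a green contour path incident to $f'_j$. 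Whenever both conditions hold, we output the implicit representation of $\phi(C)$ already stored at~$C$. The main obstacle is the bookkeeping needed to turn ``no green sibling contour path on face $f'_j$'' into a constant-time test without iterating over $S$; this is exactly what the global counters $\textsf{green}(\cdot)$ achieve, combined with the observation that any cycle in $S$ contributing to $\textsf{green}(f'_j)$ other than $\phi(C)$ itself is witnessed by $\textsf{green}(f'_j)$ exceeding the contribution of $\phi(C)$ on~$f'_j$. The overall running time is $O(n)$, dominated by the preprocessing and by one traversal of $T_f$, which has $O(n)$ nodes.
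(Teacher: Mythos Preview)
Your overall plan is sound, but the constant-time test for condition~(ii) contains a genuine gap stemming from a misreading of the counter in \cref{le:introvert-coloring-algo}. There, $\textsf{green}(f')$ is \emph{local}: it counts green contour paths only of cycles in the set $S=\{C_1,\dots,C_k,\phi(\widetilde{C})\}$ associated with the node $\widetilde{C}$ currently being processed. You instead make it a \emph{global} count over all non-degenerate 3-extrovert and 3-introvert cycles, and then subtract only the contribution of~$\phi(C)$. That is not enough. The leg face~$f'_j$ of~$\phi(C)$ is also a leg face of the 3-extrovert cycle~$C$ itself (they have the same legs), so $C$'s contour path on~$f'_j$ contributes to your global counter even though $C\notin S$; moreover, for every sibling $C'$ of~$C$ with $f'_j$ as a leg face, the 3-introvert cycle $\phi(C')$ also has $f'_j$ as a leg face and may contribute, yet $\phi(C')\notin S$; and if $C$ shares edges with ancestors in~$T_f$, those ancestors (and their $\phi$'s) can have $f'_j$ as a leg face too. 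Concretely, take a reference embedding with a single non-root node~$C$ in~$T_f$, no flexible edges anywhere: then $C$ is demanding (all contour paths green), $S=\emptyset$, and $\phi(C)$ is demanding by \cref{le:demanding-3-introvert-charact}. Your global counter on each leg face~$f'_j$ is~$2$ (one from $C$, one from $\phi(C)$); after subtracting $\phi(C)$'s contribution you get~$1>0$ and wrongly declare $\phi(C)$ non-demanding.

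The paper's own proof avoids all of this by not introducing a second pass at all: it simply piggybacks on the algorithm of \cref{le:introvert-coloring-algo} and marks $\phi(C)$ as demanding precisely when all three of its contour paths remain \emph{undefined} at the end of \textsf{Step~1} (and are therefore coloured green in \textsf{Step~2}). Because \textsf{Step~1} already uses the local $\textsf{green}(\cdot)$ restricted to~$S$, this is exactly the condition of \cref{le:demanding-3-introvert-charact}, and it adds only $O(1)$ work per node. If you wish to salvage your separate-pass approach, you would need per-face counters restricted to the relevant local set~$S$, which is essentially reverting to the paper's computation.
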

\begin{proof}
	% By \cref{le:3-extrovert-coloring-linear,le:introvert-coloring-algo} we color $O(n)$ time the contour paths of the non-degenerate 3-extrovert and 3-introvert cycles of $G_f$.
	%We traverse $T_f$ and for each 3-introvert cycle $\phi(C)$ corresponding to a 3-extrovert cycle $C$ of $T_f$ we check the two conditions in \cref{le:demanding-3-introvert-charact}. This check can be performed in time proportional to the degree of the parent $C'$ of $C$ in $T_f$, which yields an overall $O(n)$ time.
	We compute the non-degenerate demanding 3-introvert cycles of $G_f$ by checking the condition of \cref{le:demanding-3-introvert-charact}.
    Let $C$ be a 3-extrovert cycle of $G_f$ and let $C'$ be the parent node of $C$ in $T_f$. With the same notation as in \cref{le:demanding-3-introvert-charact}, we denote by $S$ the set of the siblings of $C$ in the inclusion tree $T_f$ of $G_f$ union the non-degenerate 3-introvert cycle $\phi(C')$ if $C'$ is not the root of $T_f$.
	We use the algorithm described in the proof of \cref{le:introvert-coloring-algo}: if at the end of \textsf{Step 1} the three contour paths of $\phi(C)$ have undefined color and if at the end of \textsf{Step 2} they are colored green, we mark $\phi(C)$ as demanding.
	In fact, since \textsf{Step 1} did not assign any color to the contour paths of $\phi(C)$, $\phi(C)$ does not contain a flexible edge and does not share a green edge with any cycle in $S$. Also, after \textsf{Step 2} the contour paths of $\phi(C)$ are all colored green. Therefore, $\phi(C)$ satisfies the condition of \cref{le:demanding-3-introvert-charact}.
	Since the algorithm of \cref{le:introvert-coloring-algo} can be executed in $O(n)$ time, the non-degenerate demanding 3-introvert cycles of $G_f$ can be computed in $O(n)$ time.
\end{proof}

%%%%%%%%%%
%%%%%%%%%%
%%%%%%%%%%
%%%%%%%%%%
\subsection{Proof of \cref{th:fixed-embedding-cost-one}}\label{sse:fixed-embedding-cost-one}

The general strategy is to apply the procedure in the proof of \cref{le:cost-upper-bound} combined with \cref{le:fixed-embedding-min-bend-mf0,le:fixed-embedding-min-bend-mf3,le:fixed-embedding-min-bend-mf1-leq-2,le:fixed-embedding-min-bend-mf1-eq-3,le:fixed-embedding-min-bend-mf1-eq-4,le:fixed-embedding-min-bend-mf2-part1,le:fixed-embedding-min-bend-mf2-part2}.
We begin with the following lemma.

\begin{lemma}\label{le:demanding-extrovert-cycles-linear-time}
	Let $G_{f}$ be an $n$-vertex plane triconnected cubic graph. The set of non-degenerate demanding 3-extrovert cycles of $G_{f}$ can be computed in $O(n)$ time in such a way that any non-intersecting cycle of the set has an explicit representation, while any intersecting cycle of the set has an implicit representation.
\end{lemma}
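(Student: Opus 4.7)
\smallskip\noindent\textbf{Proof plan.} The plan is to exploit the machinery developed for reference embeddings and then transfer the information to the given (possibly non-reference) external face $f$. First, using \cref{le:ref-embedding-testing}, test in $O(n)$ time whether the embedding of $G_f$ is already a reference embedding. If yes, apply \cref{le:3-extrovert-coloring-linear} to color the contour paths of the non-degenerate 3-extrovert cycles in $O(n)$ time and use \cref{le:explicit_representation} to equip each such cycle with an explicit representation; by \cref{le:inclusion} all non-degenerate 3-extrovert cycles in a reference embedding are pairwise non-intersecting (they cannot both touch the external face), hence the explicit representations of the demanding ones already fulfill the lemma.

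Otherwise, apply \cref{le:ref-embedding} to compute in $O(n)$ time a face $f^*$ such that $G_{f^*}$ is a reference embedding. In $G_{f^*}$, compute the set $\mathcal{D}_E$ of non-degenerate demanding 3-extrovert cycles with explicit representation (\cref{le:3-extrovert-coloring-linear} and \cref{le:explicit_representation}) and the set $\mathcal{D}_I$ of non-degenerate demanding 3-introvert cycles with implicit representation (\cref{le:demanding-3-introvert-algo}), all in $O(n)$ time. By \cref{le:extrovert-extrovert-coloring} and \cref{co:introvert-extrovert-demanding}, the non-degenerate demanding 3-extrovert cycles of $G_f$ are exactly the union of: (i) those cycles of $\mathcal{D}_E$ whose interior in $G_{f^*}$ does not contain $f$; and (ii) those cycles $\phi(C) \in \mathcal{D}_I$ whose interior in $G_{f^*}$ contains $f$. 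Using a traversal of the inclusion tree $T_{f^*}$ one can locate in $O(n)$ time the deepest cycle of $T_{f^*}$ containing $f$; all ancestors of this cycle in $T_{f^*}$ are the cycles of $\mathcal{D}_E$ that must be discarded in (i) and give, via the map $\phi$, the candidates to be added in (ii). A second $O(n)$-time scan checks, for each candidate $\phi(C) \in \mathcal{D}_I$, whether $f$ belongs to the interior of $\phi(C)$ in $G_{f^*}$, using the implicit representation and the information about the leg faces of $\phi(C)$ inherited from the reference embedding.

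Next, classify each surviving cycle as intersecting or non-intersecting in $G_f$. By \cref{pr:intersecting-3-extrovert}, any intersecting non-degenerate 3-extrovert cycle must share an edge with $f$. Cycles of type (i) that do not share an edge with $f$ are therefore non-intersecting and keep their explicit representation inherited from $G_{f^*}$; since two cycles of $\mathcal{D}_E$ are never pairwise intersecting in $G_{f^*}$ (by \cref{le:independent-child-cycles} they are either disjoint or in ancestor--descendant relation, which yields contour-path inclusion rather than intersection), the only cycles that can cause intersections in $G_f$ are the cycles from (ii) together with cycles from (i) touching $f$. For such cycles we switch to the implicit representation (trivially derivable in $O(1)$ per cycle from either representation, by storing pointers to the three legs and three leg faces), satisfying the lemma's requirement. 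Both the detection of the edges of each cycle lying on $f$ and the final assignment of representations can be carried out in $O(n)$ total time by walking along the boundary of $f$ and marking incident cycles.

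The main obstacle is to ensure the $O(n)$ total bound for the classification and for producing explicit representations only for the truly non-intersecting cycles, avoiding duplicated work caused by cycles of $\mathcal{D}_I$ sharing leg faces. This is handled by charging all pointer updates to the edges of $f$ and to the edges stored in the explicit representations of $\mathcal{D}_E$, both of which have total size $O(n)$ by \cref{le:explicit_representation} and by the fact that $C_o(G_f)$ is simple, so the entire procedure takes $O(n)$ time.
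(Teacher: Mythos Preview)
Your overall strategy coincides with the paper's: pass to a reference embedding, compute the demanding 3-extrovert and 3-introvert cycles there, and translate back via \cref{ob:extrovert-introvert}, \cref{le:extrovert-extrovert-coloring}, and \cref{co:introvert-extrovert-demanding}. However, two steps in your plan do not go through as written.

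\textbf{Identifying the cycles of type~(ii).} You locate the deepest node $C^*$ of $T_{f^*}$ whose interior contains $f$ and declare that the ancestors of $C^*$ ``give, via the map $\phi$, the candidates to be added in~(ii).'' This misses a whole class of cycles. A 3-introvert cycle $\phi(C)$ of $G_{f^*}$ contains $f$ in its interior in exactly two (mutually exclusive) ways: either $C$ lies on the root--$C^*$ path of $T_{f^*}$, or $f$ is one of the three leg faces of $C$ even though $C$ is \emph{not} an ancestor of $C^*$. Your scan over ancestors captures only the first case. The paper handles the second case explicitly: ``for each unmarked 3-extrovert cycle $C'$ of $T_{f'}$ we check whether any of the three leg faces shared by $C'$ and $\phi(C')$ is $f$''. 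Your ``second $O(n)$-time scan'' is phrased as ranging over the candidates already produced, so it never examines these additional cycles.

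\textbf{Producing explicit representations for non-intersecting cycles of type~(ii).} You assign implicit representations to ``the cycles from~(ii) together with cycles from~(i) touching $f$'', arguing that only these can intersect. But cycles $\phi(C)$ of type~(ii) with $C$ a strict ancestor of $C^*$ do \emph{not} have $f$ as a leg face, hence do not touch $f$ in $G_f$; by \cref{pr:intersecting-3-extrovert} they are non-intersecting and the lemma requires an \emph{explicit} representation for them. In $G_{f^*}$ these cycles are 3-introvert, so you only have their implicit representation from \cref{le:demanding-3-introvert-algo}; you never explain how to upgrade this to an explicit representation within the $O(n)$ budget. This is the crux the paper addresses directly (``for each cycle of $\mathcal{I}\setminus\mathcal{I}_f$ we compute its explicit representation in $O(n)$ time with the same strategy as in the proof of \cref{le:3-extrovert-coloring-linear}''). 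Conversely, you also switch some genuinely non-intersecting cycles of type~(i) that touch $f$ to implicit, which contradicts the lemma's requirement; the paper keeps all of $\mathcal{E}$ explicit and restricts implicit representations to $\mathcal{I}_f$.
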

\begin{proof}
   If the embedding of $G_{f}$ is a reference embedding the statement follows from
   \cref{le:explicit_representation}
   %\cref{le:3-extrovert-coloring-linear}
  (by \cref{le:ref-embedding-testing} testing whether $G_{f}$ is a reference embedding can be done in $O(n)$ time).
  So, suppose otherwise. By \cref{le:ref-embedding}, we select in $O(n)$ time a face $f'$ of $G_{f}$ such that the embedding of $G_{f'}$ is a reference embedding. Let $C'$ be a 3-extrovert cycle of $G_{f'}$ and let $\phi(C')$ be its corresponding 3-introvert cycle.
  Observe that the internal faces of $G(\phi(C'))$ are the internal faces of $G(C')$ plus the three leg faces shared by $\phi(C')$ and $C'$.

  We first compute the inclusion tree $T_{f'}$ of $G_{f'}$ by \cref{le:inclusion-tree} and mark all 3-extrovert and 3-introvert cycles of $G_{f'}$ containing $f$ as follows. Let $C$ be the deepest (i.e., furthest from the root of $T_{f'}$) 3-extrovert cycle that contains $f$. We mark $C$ and all its ancestors in $T_{f'}$ as those 3-extrovert cycles containing $f$. For any such marked 3-extrovert cycle, we also mark its corresponding 3-introvert cycle as one containing $f$. To complete the marking, for each unmarked 3-extrovert cycle $C'$ of $T_{f'}$ we check whether any of the three leg faces shared by $C'$ and $\phi(C')$ is $f$: In the affirmative case we also mark $\phi(C')$ as a cycle that contains $f$. Since we have the implicit representation of the 3-introvert cycles, the explicit representation of the 3-extrovert cycles, and the inclusion tree $T_{f'}$, the overall marking can be executed in $O(n)$ time.

  We then compute the explicit representation of the demanding 3-extrovert cycles of $G_{f'}$ and the implicit representation of the demanding 3-introvert cycles of $G_{f'}$; by \cref{le:3-extrovert-coloring-linear} and by \cref{le:demanding-3-introvert-algo}, both computations can be executed in $O(n)$ time.

  By \cref{ob:extrovert-introvert}, the set of demanding 3-introvert cycles that are marked in $G_{f'}$ are demanding 3-extrovert cycles in $G_{f}$; we denote this set by $\mathcal{I}$. By the same observation, the set of demanding 3-extrovert cycles that are not marked in $G_{f'}$  are also demanding 3-extrovert cycles of $G_{f}$; we denote this set by $\mathcal{E}$. Note that $\mathcal{E} \cup \mathcal{I}$ is the set of the non-degenerate demanding 3-extrovert cycles of $G_{f}$ (the marked 3-extrovert cycles of $G_{f'}$ become 3-introvert cycles of $G_f$).
  We have an explicit representation of the cycles in $\mathcal{E}$ and an implicit representation of the cycles in $\mathcal{I}$.  Let $\mathcal{I}_{f}$ be the (possibly empty) subset of those cycles in $\mathcal{I}$ having $f$ as a leg face. By \cref{pr:3-introvert-intersecting} any two cycles of $\mathcal{I}_{f}$ intersect.
  %Conversely, by \cref{pr:intersecting-3-extrovert} all intersecting 3-extrovert cycles have $f$ as a leg face. It follows that the set of intersecting demanding 3-extrovert cycles of $G_{f}$ coincides with $\mathcal{I}_{f}$.
  For each cycle of $\mathcal{I}_f$ we keep the implicit representation while for each cycle of $\mathcal{I} \setminus \mathcal{I}_f$ we compute its explicit representation in $O(n)$ time with the same strategy as in the proof of \cref{le:3-extrovert-coloring-linear}.
\end{proof}

We are now ready to prove \cref{th:fixed-embedding-cost-one}.
Let $G = G_f$ be an $n$-vertex plane triconnected cubic graph with $f$ as external face. By \cref{le:demanding-extrovert-cycles-linear-time}, we compute in $O(n)$ time an explicit representation of the cycles in the set $D(G)$ of non-degenerate non-intersecting demanding 3-extrovert cycles of $G$ and an implicit representation of the non-degenerate intersecting demanding 3-extrovert cycles of $G$.

We compute a cost-minimum embedding-preserving orthogonal representation $H$ of $G$ that satisfies Properties~\textsf{P1}--\textsf{P3} of \cref{th:fixed-embedding-cost-one}. We first insert the four subdivision vertices on the external face of $G$ according to \cref{le:fixed-embedding-min-bend-mf0,le:fixed-embedding-min-bend-mf3,le:fixed-embedding-min-bend-mf1-leq-2,le:fixed-embedding-min-bend-mf1-eq-3,le:fixed-embedding-min-bend-mf1-eq-4,le:fixed-embedding-min-bend-mf2-part1,le:fixed-embedding-min-bend-mf2-part2}.
To this aim, we need to efficiently compute the following:
\begin{itemize}
\item $D_f(G)$: The subset $D_f(G) \subseteq D(G)$ can be computed in $O(n)$ time by selecting from $D(G)$ those cycles that have $f$ as a leg face.
\item $m(f)$: This value can be trivially computed in $O(n)$ time by traversing $C_o(G)$.
\item $\coflex(\cdot)$ values:
If $m(f) = 1$, denoted by $e_0=(u,v)$ the flexible edge of $f$, we compute $\coflex(e_0)$, $\coflex(u)$, and $\coflex(v)$. Namely, we traverse the explicit representation of each cycle of $D(G)$ and mark its edges.
Let $f'$ be the face incident to $e_0$ and different from $f$; let $f'_u$ be the face distinct from $f$ and from $f'$ and incident to $u$; let $f'_v$ be the face distinct from $f$ and from $f'$ and incident to $v$.
By traversing $f'$, $f'_u$, and $f'_v$, we compute $\coflex(e_0)$, $\coflex(u)$, and $\coflex(v)$ in overall $O(n)$ time.
If $m(f) = 2$, denoted by $e_0$ and $e_1$ the two flexible edges of $f$, where it is assumed $\flex(e_0) \geq \flex(e_1)$, we compute $\coflex(e_0)$ in $O(n)$ time with the same strategy as above.

\item Demanding degenerate 3-cycle $\hat C$:
If $m(f) = 2$ and the two flexible edges $e_0$ and $e_1$ of $f$ share a vertex $v$, we need to determine whether the degenerate 3-cycle $\hat C = C_o(G \setminus \{v\})$ is demanding. To this aim we traverse the external face of $G \setminus \{v\}$ and check whether its edges are not flexible and not marked as belonging to cycles in $D(G)$. This can be done in $O(n)$ time.

\item $\flex(f)$: This value can be computed in $O(n)$ time according to the statement of \cref{th:fixed-embedding-min-bend} by using the values listed above.

\end{itemize}

Denote by $G'$ the embedded graph obtained from $G$ after the addition of the four  subdivision vertices on the external face. We augment $G'$ to a graph $G''$ by adding a 4-cycle of inflexible edges in the external face and planarly connecting the vertices of the 4-cycle to the four subdivision vertices by means of inflexible edges. Observe that $G''$ is a plane triconnected cubic graph whose embedding is a reference embedding. Also observe that a cost-minimum orthogonal representation $H$ of $G$ satisfying Properties~\textsf{P1}--\textsf{P3} of \cref{th:fixed-embedding-cost-one} can be obtained from any cost-minimum orthogonal representation $H''$ of $G''$ satisfying Properties~\textsf{P1}--\textsf{P3} of \cref{th:fixed-embedding-cost-one} by removing the external cycle and replacing the four subdivision vertices added along $f$ with four bends. Indeed, since each edge of $H''$ has at most one bend then the four edges of the external face of $G''$ have one bend each. If there existed an orthogonal representation $\hat H$ with $c(\hat H) < c(H)$ satisfying Properties~\textsf{P1}--\textsf{P3} of \cref{th:fixed-embedding-cost-one}, then we could obtain an orthogonal representation $\hat H''$ of $G''$ satisfying Properties~\textsf{P1}--\textsf{P3} of \cref{th:fixed-embedding-cost-one} with $c(\hat H'') < c(H'')$ by replacing $H$ with $\hat H$ in $H''$, contradicting the hypothesis that $H''$ is cost-minimum.

Since the embedding of $G''$ is a reference embedding, it has no intersecting 3-extrovert cycles. In particular, possibly intersecting 3-extrovert cycles of $G$ either correspond to 4-extrovert cycles of $G''$ or they correspond to 3-extrovert cycles of $G''$ that are not incident to the external face of $G''$, and hence they are not intersecting by \cref{pr:intersecting-3-extrovert}.
Based on \cref{le:explicit_representation}, we can compute in $O(n)$ time an explicit representation of the non-degenerate 3-extrovert cycles of $G''$.

Since $G''$ has no flexible edges on the external face, the upper-bound provided by \cref{le:cost-upper-bound} and the lower-bound provided by \cref{le:cost-lower-bound} coincide. Therefore, in order to compute $H''$ we can apply the same procedure as the one used in the proof of \cref{le:cost-upper-bound} where $|D_f(G'')|=0$ and only Condition~$(iii)$ of \cref{th:RN03} has to be satisfied for all 3-extrovert cycles of $G''$. We prove that this procedure can be executed in $O(n)$ time.

Let $\mathcal{C}$ be the set of the 3-extrovert cycles of $G''$ and let $C$ be any cycle in~$\mathcal{C}$. We shall insert exactly one costly degree-2 subdivision vertex along each demanding 3-extrovert cycle $C'$ of $\mathcal{C}$ in such a way to satisfy Condition~$(iii)$ of \cref{th:RN03} for $C'$ and for all the non-demanding cycles of $\mathcal{C}$ that share edges with $C'$.

We assume to have: (a) a flag telling whether $C$ is demanding or not; (b) an explicit representation of $C$; (c) for each contour path $P$ of $C$, the color of $P$ according to the \textsc{3-Extrovert Coloring Rule}; and (d) for each green contour path $P$ of $C$ a flag, called \emph{bend-marker} and initialized to \texttt{false}, marking the contour path as the one that must contain a bend to satisfy the ancestors of $C$. Also, we assume to have the inclusion tree $T''$ of $G''$, which can be computed in $O(n)$ time by \cref{le:inclusion-tree}.

We traverse $T''$ top-down.
Let $C$ be the current 3-extrovert cycle and assume first that $C$ is non-demanding.
We traverse the sequences of edges and pointers that represent its contour paths.
When some flexible edge $e$ is found in $C$, we introduce $\flex(e)$ non-costly degree-2 vertices along $e$. This is sufficient to satisfy Condition~$(iii)$ of \cref{th:RN03} for all non-demanding 3-extrovert cycles containing flexible edges. In fact, if one such cycle $C$ does not contain a flexible edge in a sequence representing one of its contour paths, then at least a sequence representing one of its contour paths contains the pointer to the orange contour path of a child of $C$, and the flexible edge will be detected and subdivided by a descendant of $C$.
If no flexible edge or orange border path is found in $C$, by the \textsc{3-Extrovert Coloring Rule}, since $C$ is non-demanding it has at least one green contour path. If one green contour path $P$ of $C$ has the bend-marker set to \texttt{true}, let $P'$ be one green contour path of a child of $C$ contained in the sequence of edges and pointers representing $P$. We set to \texttt{true} the bend-marker of $P'$. Otherwise, if no green contour path of $C$ is marked, we arbitrarily choose one green contour path $P$ of $C$, we traverse its sequence of edges and pointers, and we set to \texttt{true} the bend-marker of one green contour path $P'$ of a child of $C$.
Observe that, during the top-down traversal of $T''$ at most one green contour path of each 3-extrovert cycle has the bend-marker set to \texttt{true}.

Assume now that the current 3-extrovert cycle $C$ is demanding. If one of its contour paths $P$ has the bend-marker set to \texttt{true}, then we subdivide an arbitrarily chosen edge of $P$ with a costly degree-2 vertex. Otherwise, we subdivide an arbitrarily chosen edge of $C$.

Once all 3-extrovert cycles have been visited the obtained graph is a good plane graph. Therefore, we run \textsf{NoBendAlg} and, by \cref{le:NoBendAlg}, we obtain in $O(n)$ time a rectilinear representation that, after the smoothing of the subdivision vertices, produces a cost-minimum orthogonal representation $H''$ of $G''$ with Properties~\textsf{P1}--\textsf{P3} of \cref{th:fixed-embedding-cost-one} and, hence, a cost-minimum orthogonal representation $H$ of $G$ with Properties~\textsf{P1}--\textsf{P3} of \cref{th:fixed-embedding-cost-one}.
Since all operations above can be performed in $O(n)$ time, the orthogonal representation $H$ can be computed in $O(n)$ time.

%%%%%%%%%%%
%%%%%%%%%%%
%%%%%%%%%%%
\section{Triconnected Cubic Graphs in the Variable Embedding Setting (\cref{th:bend-counter})}\label{se:bend-counter}

The proofs of \cref{le:shape-cost-set-inner-R,le:labeling-R} rely on the data structure of \cref{th:bend-counter} which is called \texttt{Bend-Counter}.
%Since the skeleton of a rigid component can be modeled as a triconnected cubic planar graph with flexible edges,
%%and since the reference edge is always incident to the external face,
%in this section we describe the \texttt{Bend-Counter} data structure referring to a generic triconnected cubic planar graph with flexible edges.

%Namely, they both assume that the skeleton of a rigid component is equipped with a \texttt{Bend-Counter} that returns in $O(1)$ time the cost of a cost-minimum orthogonal representation of the skeleton for any choice of its reference edge. Also, such data structure can be constructed in linear time and updated in constant time when the flexibility of an edge of the skeleton is changed to any value in $\{1, 2, 3, 4\}$, which happens when a new reference edge for the skeleton is chosen.
%Since the skeleton of a rigid component can be modeled as a triconnected cubic graph with flexible edges,
%%and since the reference edge is always incident to the external face,
%in this section we study cost-minimum orthogonal representations of triconnected cubic graphs with flexible edges in the variable embedding setting.

Let $G$ be an $n$-vertex plane triconnected cubic graph with flexible edges and assume to change its embedding:
Since $G$ is triconnected, two distinct planar embeddings of $G$ only differ for the choice of the external face. As in the previous section, we denote by $G_f$ the plane graph $G$ having face $f$ as its external face.
Roughly speaking, the \texttt{Bend-Counter} of $G$ stores information that makes it possible to efficiently compute how the terms in Equation~\ref{eq:fixed-embedding-cost} change when choosing a different external face $f$ of $G$. Namely, the \texttt{Bend-Counter} returns in $O(1)$ time the values $|D(G_f)|$, $|D_f(G_f)|$, and $\flex(f)$ for every choice of the external face $f$ of $G$.

%%%%%%%%%%%
%%%%%%%%%%%
%%%%%%%%%%%
\medskip
Before describing the \texttt{Bend-Counter} data structure, we give three lemmas that will be used in \cref{sse:bend-counter} and \cref{sse:use-of-bend-counter} to prove some properties of this data structure.

%\subsection{Properties of intersecting demanding cycles}\label{se:shared-leg-face}
%\subsection{Properties of intersecting demanding cycles in the variable embedding setting}\label{se:shared-leg-face}

%In this section we study under which conditions a set of non-degenerate demanding cycles sharing a leg face can intersect.
%The lemmas that follow will be used to prove some properties of the \texttt{Bend-Counter} data structure.

\begin{lemma}\label{le:fagiolo-nero-extrovert}
	Let $f'$ be a face of $G_f$ that is a leg face of at least two non-degenerate demanding 3-introvert cycles of $G_f$.
	Let $\cal C$ be the set of non-degenerate demanding 3-introvert or 3-extrovert cycles of $G_f$ having $f'$ as a leg face. The following holds: $(i)$ any two cycles in $\cal C$ intersect; $(ii)$ $\cal C$ contains at most one 3-extrovert cycle; $(iii)$ there exist two edges $e_1, e_2$ of $f'$ such that every cycle in $\cal C$ contains either $e_1$ or $e_2$.
\end{lemma}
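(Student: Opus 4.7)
The plan is to prove the three parts in the order (i), (iii), (ii), exploiting the correspondence $\phi$ between non-degenerate 3-extrovert and 3-introvert cycles together with the decomposition of $\partial f'$ from \cref{ob:spartizione-faccia}. A central tool will be a \emph{face swap}: by choosing $f'$ as the new external face, every cycle of $\mathcal{C}$ becomes a non-degenerate 3-extrovert cycle of $G_{f'}$, since for a 3-introvert $\phi(C) \in \mathcal{C}$ we have $f' \in G_f(\phi(C))$ (its leg faces lie in its interior) and for a 3-extrovert $C' \in \mathcal{C}$ we have $f' \notin G_f(C')$ (its leg faces lie in its exterior), so \cref{ob:extrovert-introvert} applies in each case. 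Since whether two cycles are intersecting depends only on their edge sets and on the cyclic positions of their legs, this property is preserved under the face swap.

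For part~(i) I would proceed by case analysis. When $C_a, C_b \in \mathcal{C}$ are both 3-introvert, \cref{pr:3-introvert-intersecting} yields the claim directly. In the mixed case with $C_a = \phi(C)$ 3-introvert and $C_b = C'$ 3-extrovert, both with $f'$ as a leg face, I adapt the reasoning of \cref{pr:3-introvert-intersecting}: the triconnectivity of $G$ forces $C$ and $C'$ to share at most one leg on $\partial f'$, so there is a leg $e$ of $C$ on $\partial f'$ that is not a leg of $C'$, and its non-$C$ endpoint $v$ lies on $\phi(C)$ and on $\partial f'$. Analyzing where $v$ sits in the $(C',\phi(C'))$-decomposition of $\partial f'$, and tracking how $\phi(C)$ must alternate between $G_f(C')$ and its exterior (because $f' \in G_f(\phi(C)) \setminus G_f(C')$), yields a vertex shared by $\phi(C)$ and $C'$; in a cubic graph this gives a shared edge. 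The no-inclusion part of the definition of intersection is immediate, because the two cycles have their own three distinct legs.

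Part~(iii) follows quickly from~(i) via the face swap. In $G_{f'}$ every cycle of $\mathcal{C}$ is a non-degenerate 3-extrovert cycle incident to $C_o(G_{f'}) = \partial f'$. By hypothesis $f'$ is a leg face of at least two non-degenerate 3-introvert cycles of $G_f$, so \cref{pr:3-introvert-intersecting} gives $|\partial f'| \geq 4$. Combined with the pairwise intersection from~(i), \cref{le:twins_fagiolinobend} applied to $G_{f'}$ produces two edges $e_1, e_2 \in \partial f' = C_o(G_{f'})$ such that every cycle in $\mathcal{C}$ contains $e_1$ or $e_2$, which is exactly~(iii).

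The main obstacle is part~(ii). Suppose by contradiction that two distinct 3-extrovert demanding cycles $C_1, C_2$ lie in $\mathcal{C}$. By~(i) they intersect, and by \cref{le:inclusion}(c) their twins satisfy $C_1^t \subset G_f(C_2)$ and $C_2^t \subset G_f(C_1)$. Passing to $G_{f'}$, both $C_1$ and $C_2$ remain 3-extrovert and, by \cref{le:extrovert-extrovert-coloring}, remain demanding. I plan to derive the contradiction by producing, via the twin containment, a descendant of $C_1$ in the inclusion tree of $G_{f'}$ whose green contour path (guaranteed by the demanding property of $C_2$ transported to the interior of $G_{f'}(C_1)$ through \cref{co:introvert-extrovert-demanding}) shares an edge with a contour path of $C_1$ on $\partial f' = C_o(G_{f'})$. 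This violates the ``no green child-cycle path shared with $C_1$'s contour paths'' condition of the \textsc{3-Extrovert Coloring Rule} that defines a demanding 3-extrovert cycle. Making this propagation of colors through the inclusion tree of $G_{f'}$ fully rigorous, in the presence of both twins and common leg faces, is expected to be the most delicate step of the proof.
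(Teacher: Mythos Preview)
Your proof has a structural gap. In part~(i) you handle only the two-introvert and the mixed cases, never the case of two 3-extrovert cycles in $\mathcal{C}$. Yet your part~(ii) opens with ``By~(i) they intersect'' applied to precisely that missing case, so the argument is circular as written.

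The paper organizes~(i) differently and avoids this. After the face swap to $G_{f'}$ (which you set up correctly), every member of $\mathcal{C}$ is a non-degenerate demanding 3-extrovert cycle with edges on $C_o(G_{f'})=\partial f'$, and by hypothesis two of them --- the former 3-introvert cycles --- already intersect (\cref{pr:3-introvert-intersecting}). From this single intersecting pair the paper uses \cref{le:intersecting-demanding-transitive} to conclude that \emph{every} pair in $\mathcal{C}$ intersects, covering the mixed case and the two-extrovert case in one stroke. Your direct geometric argument for the mixed case (tracking where $v$ lands in the $(C',\phi(C'))$-decomposition of $\partial f'$ and how $\phi(C)$ ``alternates'') is vague exactly where it matters; note also that ``they have their own three distinct legs'' speaks to inclusion, not to the contour-path containment that the definition of intersecting actually requires. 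And even if you complete it, you still need transitivity (or \cref{co:inclusion}) to reach the two-extrovert case.

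Your approach to~(ii) is also far more involved than necessary. Once~(i) gives that two hypothetical non-degenerate demanding 3-extrovert cycles $C_1,C_2\in\mathcal{C}$ intersect, \cref{pr:intersecting-3-extrovert} forces both of them to have edges on $C_o(G_f)$; but $G_f$ is a reference embedding, so no non-degenerate 3-extrovert cycle touches $C_o(G_f)$. That is the entire contradiction --- no twins, no inclusion-tree descent, no green-path propagation are needed.

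Your part~(iii) matches the paper's proof.
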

\begin{proof}
	 Denote by $\phi(C_1), \dots, \phi(C_k)$ the 3-introvert cycles of $G_f$ in $\cal C$, with $k \geq 2$.
	
	\smallskip\noindent{\sf Property $(i)$.}
    Consider any two non-degenerate demanding 3-introvert cycles $\phi(C_i)$ and $\phi(C_j)$ ($i \neq j$, $1 \leq i,j \leq k$) (see, for example, \cref{fi:fagiolonero-c}).
	By \cref{pr:3-introvert-intersecting}, since $\phi(C_i)$ and $\phi(C_j)$ share a leg face, they intersect.
	Let $C$ be a non-degenerate 3-extrovert demanding cycle of $G_f$ in $\cal C$.
    Since face $f'$ is a leg face of $\phi(C_i)$, of $\phi(C_j)$, and of $C$, we have that $f'$ is an internal face of $G_f(\phi(C_i))$ and of $G_f(\phi(C_j))$, while it is not an internal face of $G_f(C)$.
	If $f'$ is chosen as the new external face (see, for example, \cref{fi:fagiolonero-c-rovesciata}), by \cref{ob:extrovert-introvert} we have that in $G_{f'}$ cycles $\phi(C_i)$ and $\phi(C_j)$ become 3-extrovert, while $C$ remains a 3-extrovert cycle. By \cref{co:introvert-extrovert-demanding} $\phi(C_i)$ and $\phi(C_j)$ are demanding in $G_{f'}$ and by \cref{le:extrovert-extrovert-coloring} $C$ is demanding in $G_{f'}$.
	Since: (a) cycles $\phi(C_i)$, $\phi(C_j)$, and $C$ are demanding; (b) they share edges with the external face $f'$ of $G_{f'}$; and (c) cycles $\phi(C_i)$ and $\phi(C_j)$ intersect, by \cref{le:intersecting-demanding-transitive} it follows that $C$ intersects $\phi(C_i)$ and $\phi(C_j)$ (both in $G_f$ and in $G_{f'}$).
	Analogously, if there existed another demanding 3-extrovert cycle $C'$ in $\cal C$, then $C'$ would intersect $\phi(C_i)$ and $\phi(C_j)$, and by \cref{le:intersecting-demanding-transitive} it would also intersect $C$.

	\smallskip\noindent{\sf Property $(ii)$.} As observed above, if $\cal C$ contained two non-degenerate demanding 3-extrovert cycles $C$ and $C'$ they would intersect in $G_f$. By \cref{pr:intersecting-3-extrovert} both $C$ and $C'$ have some edges on the boundary of the external face of $G_f$. However, since $G_f$ has a reference embedding the only demanding 3-extrovert cycles that can be incident to the external face are degenerate, contradicting the fact that $C$ and $C'$ are non-degenerate.

	\smallskip\noindent{\sf Property $(iii)$.} Consider the embedding of $G_{f'}$. As already observed, all cycles in $\cal C$ are non-degenerate demanding 3-extrovert cycles of $G_{f'}$. By \cref{le:3-extro-3-intro} and since $k\ge 2$, face $f'$ is a leg face of the 3-extrovert cycle $C_1$. Observe that $f'$ is incident to at least four edges. In fact, suppose for a contradiction that $f'$ is incident to three edges. We have that the legs of $C_1$ incident to $f'$ are incident to a same vertex $v_a$; denoted by $v_b$ be the edge of $C_1$ incident to the other leg of $C_1$, in this case $v_a$ and $v_b$ is a separation pair. A contradiction. By Property $(i)$, all cycles in $\cal C$ are intersecting. By \cref{le:twins_fagiolinobend} applied to $G_{f'}$ there exist two edges $e_1$ and $e_2$ of $f'$ such that every cycle in $\cal C$ contains either $e_1$ or $e_2$.
\end{proof}

\begin{figure}[htb]
	\centering
	\subfloat[]{\label{fi:fagiolonero-c}\includegraphics[width=0.33\columnwidth]{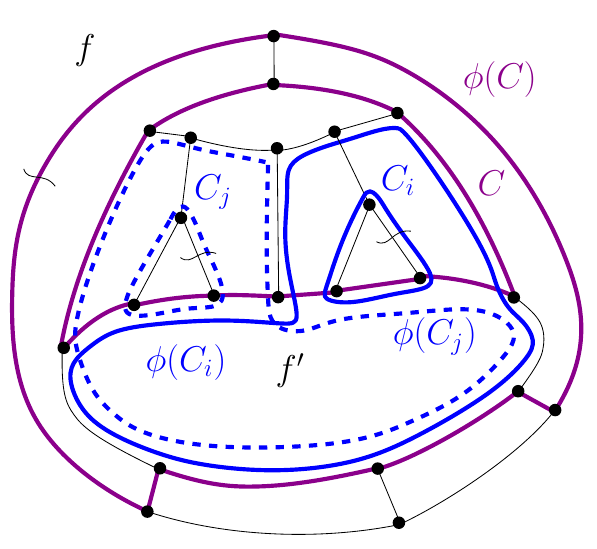}}
	\hfil
 	\subfloat[]{\label{fi:fagiolonero-c-rovesciata}\includegraphics[width=0.33\columnwidth]{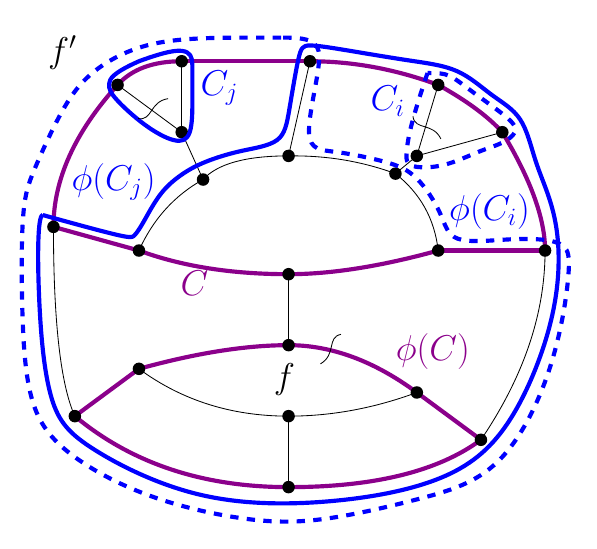}}
 	\subfloat[]{\label{fi:fagiolobianco-b}\includegraphics[width=0.33\columnwidth]{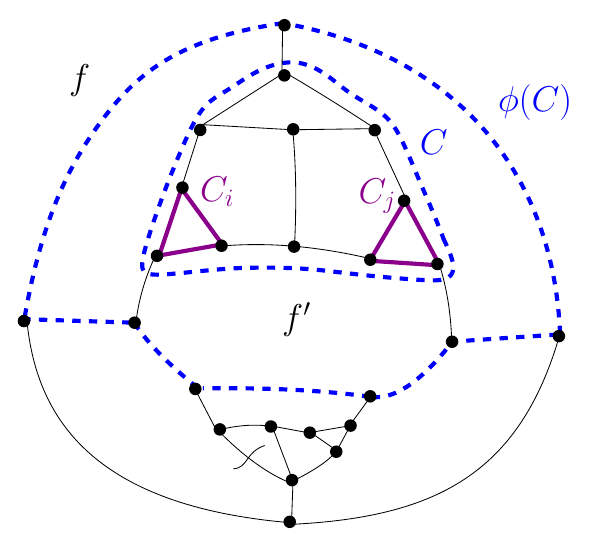}}

	\caption{(a) Two non-degenerate demanding 3-introvert cycles $\phi(C_i)$ and $\phi(C_j)$, and a non-degenerate demanding 3-extrovert cycle $C$, all sharing the leg face $f'$. (b) The same graph of (a) where the external face is $f'$.
	(c) Two non-degenerate demanding 3-extrovert cycles $C_i$ and $C_j$, and a non-degenerate demanding 3-introvert cycle $\phi(C)$, all sharing the leg face $f'$.
	}\label{fagiolonero2}
\end{figure}

\begin{lemma}\label{le:fagiolo-bianco}
	Let $f'$ be a face of $G_f$ that is a leg face of at least two non-degenerate demanding 3-extrovert cycles of $G_f$.
	Let $\cal C$ be the set of non-degenerate demanding 3-introvert or 3-extrovert cycles of $G_f$ having $f'$ as a leg face. Then: $(i)$ no two cycles in $\cal C$ intersect; and $(ii)$ $\cal C$ contains at most one 3-introvert~cycle.
\end{lemma}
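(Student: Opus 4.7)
The plan is to mirror the strategy used for \cref{le:fagiolo-nero-extrovert}: exploit that $G_f$ has a reference embedding (a standing assumption in this section, as in the proof of \cref{le:fagiolo-nero-extrovert}), re-root the embedding at $f'$ so that every cycle of $\mathcal{C}$ becomes a non-degenerate demanding 3-extrovert cycle incident to the new external face, and then apply \cref{co:inclusion} in $G_{f'}$ to rule out intersections. Property $(ii)$ will follow from Property $(i)$ via \cref{pr:3-introvert-intersecting}.

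I will first handle Property $(i)$ for pairs of 3-extrovert cycles in $\mathcal{C}$. Because $G_f$ has a reference embedding, no non-degenerate 3-extrovert cycle of $G_f$ shares edges with $C_o(G_f)=f$. By \cref{pr:intersecting-3-extrovert}, any two intersecting non-degenerate 3-extrovert cycles must share such edges, which is impossible here. Hence $C_1,\dots,C_k$ are pairwise non-intersecting in $G_f$, and since the intersection relation is invariant under the choice of external face, they remain so in $G_{f'}$. Switching the external face to $f'$: for each 3-extrovert $C_i\in\mathcal{C}$, $f'\notin G_f(C_i)$ (as $f'$ is a leg face of $C_i$), so by \cref{ob:extrovert-introvert} $C_i$ stays 3-extrovert and by \cref{le:extrovert-extrovert-coloring} stays demanding; for each 3-introvert $\phi(C)\in\mathcal{C}$, $f'\in G_f(\phi(C))$, so $\phi(C)$ becomes 3-extrovert in $G_{f'}$ and is demanding by \cref{co:introvert-extrovert-demanding}. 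Every cycle of $\mathcal{C}$ now has a contour path on $C_o(G_{f'})=f'$.

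Suppose for contradiction that a pair $A,B\in\mathcal{C}$ intersects. Since $k\geq 2$, pick two distinct $C_j,C_{j'}\in\{C_1,\dots,C_k\}$. I claim that at least one of them --- say $C_j$ --- intersects neither $A$ nor $B$: whenever one of $A,B$ lies in $\{C_1,\dots,C_k\}$, Step 1 already rules out intersection with $C_j$; whenever one of $A,B$ is a 3-introvert $\phi(C)$, if both $C_j$ and $C_{j'}$ intersected $\phi(C)$, then \cref{le:intersecting-demanding-transitive} would force $C_j$ and $C_{j'}$ to intersect (through $\phi(C)$), contradicting their pairwise non-intersection; transitivity through the intersecting pair $A,B$ then transfers non-intersection with one member of the pair to non-intersection with the other. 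Thus $C_j$ is a non-degenerate demanding 3-extrovert cycle of $G_{f'}$, incident to $C_o(G_{f'})$, and non-intersecting with both $A$ and $B$. Invoking the internal argument of \cref{co:inclusion} (whose proof uses only the hypothesis that the witness avoids the purported intersecting pair, not a stronger global non-intersection), we conclude that $A$ or $B$ cannot be demanding, the desired contradiction. This proves Property $(i)$. For Property $(ii)$, if $\mathcal{C}$ contained two non-degenerate demanding 3-introvert cycles $\phi(C),\phi(C')$, then since they share the leg face $f'$, \cref{pr:3-introvert-intersecting} would force $\phi(C)$ and $\phi(C')$ to intersect in $G_f$, contradicting Property $(i)$.

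The main obstacle is the transitivity bookkeeping in the third paragraph: one must always locate, among the at least two demanding 3-extrovert cycles of $\mathcal{C}$, a witness cycle that avoids intersecting the hypothetical intersecting pair, regardless of whether the pair is 3-extrovert/3-extrovert, 3-extrovert/3-introvert, or 3-introvert/3-introvert. This is precisely where the assumption $k\geq 2$ is essential and where the reasoning leans on \cref{le:intersecting-demanding-transitive} together with the reference-embedding consequence that $C_1,\dots,C_k$ are pairwise non-intersecting.
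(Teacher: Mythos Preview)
Your argument is correct and uses the same core ingredients as the paper: the reference-embedding fact that no two non-degenerate demanding 3-extrovert cycles of $G_f$ can intersect (via \cref{pr:intersecting-3-extrovert}), the passage to $G_{f'}$ so that every member of $\mathcal{C}$ becomes a demanding 3-extrovert cycle incident to the new external face (via \cref{ob:extrovert-introvert}, \cref{le:extrovert-extrovert-coloring}, \cref{co:introvert-extrovert-demanding}), and \cref{pr:3-introvert-intersecting} for Property~$(ii)$.

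Where you and the paper diverge is in how the contradiction for Property~$(i)$ is packaged. The paper argues case by case: given a 3-introvert $\phi(C)\in\mathcal{C}$, if $\phi(C)$ intersected some $C_i$, then transitivity together with the second extrovert $C_j$ forces $C_i$ and $C_j$ to intersect (and similarly when two 3-introverts are present). You instead build a uniform ``witness'' $C_j$ that avoids the hypothetical intersecting pair $A,B$ and then reopen the proof of \cref{co:inclusion} to contradict the demandingness of $A$ or $B$. Both routes work; yours is a bit heavier because it relies on inspecting the \emph{proof} of \cref{co:inclusion} (its statement requires global non-intersection, not just avoidance of $A,B$), whereas the paper stays at the level of \cref{le:intersecting-demanding-transitive}. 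If you want to keep your witness construction but avoid the meta-appeal to \cref{co:inclusion}'s internals, note that once you have $C_j$ avoiding both $A$ and $B$, your own transitivity step already yields the direct contradiction the paper uses: any second extrovert $C_{j'}$ intersecting $A$ (or $B$) would, through $A$--$B$, intersect $B$ (or $A$), and then through $C_j$'s non-intersection you are back to forcing $C_j$ and $C_{j'}$ to intersect.
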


\begin{proof}
    Refer to \cref{fi:fagiolobianco-b}.
	Denote by $C_1, \dots, C_k$ the 3-extrovert cycles of $G_f$ in $\cal C$, with $k \geq 2$.
	Consider any two non-degenerate demanding 3-extrovert cycles $C_i$ and $C_j$ ($i \neq j$, $1 \leq i,j \leq k$). We can prove that $C_i$ and $C_j$ do not intersect with the same argument used to prove {\sf Property $(ii)$} of \cref{le:fagiolo-nero-extrovert}. Namely, suppose for a contradiction that $C_i$ and $C_j$ intersect. By \cref{pr:intersecting-3-extrovert} both $C_i$ and $C_j$ must have some edges along the boundary of the external face of $G_f$. However, since $G_f$ has a reference embedding the only demanding 3-extrovert cycles that can be incident to the external face are degenerate.
	Since $C_i$ and $C_j$ are not degenerate, we have a contradiction. Consider a non-degenerate demanding 3-introvert cycle $\phi(C)$ in $\cal C$. Observe that $f'$ is in the exterior of cycles $C_i$ and $C_j$ and it is in the interior of cycle $\phi(C)$. By \cref{ob:extrovert-introvert} if $f'$ is chosen as the new external face, we have that $C_i$ and $C_j$ remain 3-extrovert and $\phi(C)$ becomes a 3-extrovert cycle in~$G_{f'}$. Also, by \cref{le:extrovert-extrovert-coloring,co:introvert-extrovert-demanding} $C_i, C_j$ and $\phi(C)$ are demanding also in $G_{f'}$.
	If $\phi(C)$ intersected one of $C_i$ or $C_j$, by \cref{le:intersecting-demanding-transitive} also $C_i$ and $C_j$ would intersect one another, which is impossible because of the argument above.
	Finally, suppose that $\cal C$ contains two non-degenerate demanding 3-introvert cycles $\phi(C)$ and $\phi(C')$. By \cref{pr:3-introvert-intersecting}, since $\phi(C)$ and $\phi(C')$ share a leg face, they intersect. By the same reasoning as above applied to the embedding of $G_{f'}$, we would have that also $C_i$ and $C_j$ intersect one another, which is impossible. It follows that no two cycles of $\cal C$ intersect ({\sf Property $(i)$}) and that $\cal C$ cannot contain two distinct non-degenerate demanding 3-introvert cycles ({\sf Property $(ii)$}).
\end{proof}

\cref{le:fagiolo-nero-extrovert,le:fagiolo-bianco} do not consider the case when only one non-degenerate demanding 3-extrovert cycle $C$ shares a leg face with only one non-degenerate demanding 3-introvert cycle $\phi(C')$. Clearly, if $C$ coincides with $C'$ we have that $C$ and $\phi(C')$ do not intersect.
\cref{le:solo-due-fagioli} handles the remaining cases.% even when $C$ and $\phi(C')$ are not demanding.

\begin{lemma}\label{le:solo-due-fagioli}
    Let $f'$ be a face of $G_f$ that is a leg face of exactly two non-degenerate demanding cycles $C$ and $\phi(C')$ of $G_f$, such that $C$ is 3-extrovert, $\phi(C')$ is 3-introvert, and $C' \neq C$.
    Cycles $C$ and $\phi(C')$ intersect if and only if $C'$ is a descendant of $C$ in the inclusion tree of $G_f$.
\end{lemma}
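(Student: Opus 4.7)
The plan is to work in the flipped embedding $G_{f'}$ in which $f'$ is the external face. Since $f'$ is a leg face of the 3-extrovert cycle $C$ it lies outside $G_f(C)$, and since $f'$ is a leg face of $\phi(C')$ it lies inside $G_f(\phi(C'))$. By \cref{ob:extrovert-introvert} the cycle $C$ remains 3-extrovert in $G_{f'}$ with $G_{f'}(C)=G_f(C)$, while $\phi(C')$ becomes 3-extrovert in $G_{f'}$; by \cref{le:extrovert-extrovert-coloring} and \cref{co:introvert-extrovert-demanding} both cycles are non-degenerate and demanding 3-extrovert in $G_{f'}$, and each shares edges with $C_o(G_{f'})=f'$. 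Because $C'$ has the same three legs as $\phi(C')$, is non-degenerate 3-extrovert in $G_{f'}$ (as $f'$ lies outside $G_f(C')$), and shares its contour path on $f'$ with the external boundary, the twin of $\phi(C')$ in $G_{f'}$ is $\phi(C')^{t}=C'$.

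For the forward direction, intersection is a combinatorial property, so if $C$ and $\phi(C')$ intersect in $G_f$ they also intersect in $G_{f'}$. Applying Property~$(c)$ of \cref{le:inclusion} to the non-degenerate intersecting 3-extrovert cycles $C$ and $\phi(C')$ in $G_{f'}$ yields $C'=\phi(C')^{t}\subseteq G_{f'}(C)=G_f(C)$; since $C'\neq C$ the inclusion is strict, and $C'$ is a descendant of $C$ in~$T_f$.

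For the backward direction, assume $G_f(C')\subsetneq G_f(C)$. The two legs of $C'$ incident to $f'$ have their non-$C'$ endpoints on $\phi(C')$ located outside $G_f(C)$; hence by planarity each of these legs must have its other endpoint (the leg vertex on $C'$) on $C$. At a vertex of $C$ the only edge directed outside $G_f(C)$ is the leg of $C$ at that vertex (which exists iff the vertex is a leg vertex of $C$), so these two legs of $C'$ must coincide with the two legs of $C$ incident to $f'$. Since the three legs of a non-degenerate 3-extrovert cycle uniquely determine it in a triconnected planar graph and $C'\neq C$, the third leg of $C'$ must be internal to $G_f(C)$, placing its endpoint on $\phi(C')$ inside $G_f(C)$. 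Thus $\phi(C')$ has vertices on both sides of $C$, so by planarity it shares a vertex with $C$; in the cubic graph $G$ this forces $C$ and $\phi(C')$ to share at least one edge. To rule out contour path containment I will argue that the contour path of $\phi(C')$ on $f'$ coincides with the contour path of $\phi(C)$ on $f'$ and hence lies entirely on $\phi(C)$ (which is vertex-disjoint, hence edge-disjoint, from $C$ by non-degeneracy), while each of the other two contour paths of $\phi(C')$ has at least one endpoint not on $C$; a symmetric argument on the endpoints of the contour paths of $C$ (which for the two legs on $f'$ lie on $C'$ rather than on $\phi(C')$) excludes the reverse containment.

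The main obstacle will be the rigorous treatment of the no-contour-path-containment step in the backward direction, which requires a careful case analysis depending on whether the third leg of $C'$ lands strictly inside $G_f(C)$ or at a non-leg vertex of $C$, together with a careful tracking of how the two interior contour paths of $\phi(C')$ route between their endpoints relative to the edges and leg vertices of $C$.
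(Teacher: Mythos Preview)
Your forward direction is correct and in fact cleaner than the paper's argument: switching to $G_{f'}$, identifying $C'$ as the twin of $\phi(C')$, and invoking Property~$(c)$ of \cref{le:inclusion} is a neat way to obtain $C'\subset G_{f'}(C)=G_f(C)$. The paper instead argues directly in $G_f$ that two legs of $\phi(C')$ (hence of $C'$) lie on $C$, deduces that $C$ and $C'$ share edges, and then uses the reference embedding and \cref{pr:intersecting-3-extrovert} to force an ancestor--descendant relation.

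Your backward direction, however, has a genuine error. The opening claim that ``the two legs of $C'$ incident to $f'$ have their non-$C'$ endpoints on $\phi(C')$ located outside $G_f(C)$'' is false in general. Because $C'\subset G_f(C)$ and both cycles have $f'$ as a leg face, the contour path of $C'$ on $f'$ is a subpath of the contour path of $C$ on $f'$. At an endpoint $w$ of this subpath, the leg of $C'$ at $w$ runs along the boundary of $f'$; if $w$ is an interior vertex of the contour path of $C$, this leg is an \emph{edge of $C$}, so its non-$C'$ endpoint lies on $C$ (hence in $G_f(C)$), not outside it. Your conclusion that both legs of $C'$ on $f'$ coincide with the legs of $C$ would force $C$ and $C'$ to share two legs; but in the simple dual of a triconnected cubic graph the three leg faces form a triangle, and two common legs would pin down all three dual vertices and hence the third dual edge as well, giving $C=C'$, contrary to hypothesis. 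So at least one leg of $C'$ on $f'$ is an edge of $C$, which is exactly the mechanism the paper uses: that leg's non-$C'$ endpoint lies simultaneously on $\phi(C')$ and on $C$, yielding a shared vertex and, by cubicity, a shared edge. The ``no contour-path containment'' part (needed for the technical definition of intersecting) then follows because both $C$ and $\phi(C')$ are demanding in $G_{f'}$: a proper containment of a contour path would put a green contour path of one on a contour path of the other, violating the \textsc{3-Extrovert Coloring Rule} for a demanding cycle.
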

\begin{proof}
Let $f'$ be the leg face shared by $C$ and $\phi(C')$. Since $\phi(C')$ is 3-introvert, $f'$ is not the external face of $G_f$.
If $C$ and $\phi(C')$ intersect, we have that two legs of $\phi(C')$ are edges of $C$. Since the legs of $\phi(C')$ coincide with the legs of $C'$, we have that also $C'$ has two of its legs along $C$. This also imply that $C'$ and $C$ intersect. If there was no ancestor-descendant relationship between $C'$ and $C$ in the inclusion tree $T_f$ of $G_f$, by \cref{pr:intersecting-3-extrovert} $C'$ would contain at least one edge in $C_o(G_f)$ which is impossible because $C'$ is not degenerate and the embedding of $G_f$ is a reference embedding.
Observe that $\phi(C')$ does not share any edge with $C'$; hence, it cannot intersect any cycle in the subtree of $T_f$ rooted at $C'$. It follows that if $\phi(C')$ intersects $C$ then $C'$ is a descendant of $C$ in $T_f$.
Assume, vice versa, that $C'$ is a descendant of $C$ in $T_f$. Since $C$ and $\phi(C')$ share a leg face $f'$, also $C'$ and $C$ share the leg face $f'$. Since $C$ and $C'$ are 3-extrovert cycles, $f'$ is external to both cycles.
It follows that the edges of $C'$ that are incident to $f'$ are also edges of $C$ and that at least one leg $e$ of $C'$ is an edge of $C$. Since $e$ is also a leg of $\phi(C')$ (and all vertices of $G_f$ have degree three), $\phi(C')$ intersects $C$.
\end{proof}

% \begin{figure}[htb]
% 	\centering
% 	\subfloat[]{\label{fi:fagiolobianco-a}\includegraphics[width=0.35\columnwidth]{}}
% 	\hfil
% 	\subfloat[]{\label{fi:fagiolobianco-b}\includegraphics[width=0.35\columnwidth]{fagiolobianco-b}}
% 	\caption{
% 	Illustration for the proof of \cref{le:fagiolo-bianco}\label{fagiolobianco}}
% \end{figure}

%%%%%%%%%%
%%%%%%%%%%
%%%%%%%%%%
%%%%%%%%%%
\subsection{The \texttt{Bend-Counter} Data Structure}\label{sse:bend-counter}
Let $f$ be a face of $G$ such that the embedding of $G_f$ is a reference embedding.
Let $C_o(G_f)$ be the root of $T_f$, let $C$ be any non-root node of $T_f$, and let $\phi(C)$ be the 3-introvert cycle corresponding to $C$ (see \cref{le:3-extro-3-intro}).
We assume that $C$ has a pointer to $\phi(C)$ and vice versa. Also, we assume to have an implicit representation for both $C$ and $\phi(C)$. Since the implicit representation of $C$ is part of its explicit representation and since the implicit representation of $\phi(C)$ coincides with that of $C$, all the implicit representations for the cycles associated with the nodes of $T_f$ can be computed in overall $O(n)$ time by means of \cref{le:explicit_representation}.
Also, based on~\cite{ht-fafnca-84}, in the reminder we assume that, after a linear-time pre-processing, one can determine in $O(1)$ time whether a node $C'$ is a descendant of a node $C$ in $T_f$ ($C \neq C'$).

The \emph{\texttt{Bend-Counter} of $G$ with respect to $f$}, denoted as $\mathcal{B}(G_f)$, is a data structure that stores several information about the cycles, the faces, and the flexible edges of $G_f$, as described below. See also \cref{fi:bend-counter}.

\smallskip\noindent{\bf Information stored for the cycles of $G_f$.} $\mathcal{B}(G_f)$ stores the number $|D(G_f)|$ of non-degenerate non-intersecting demanding 3-extrovert cycles of $G_f$. For each node $C$ of $T_f$, $\mathcal{B}(G_f)$ stores these information:

\begin{itemize}
	\item A Boolean $d_{\tt extr}(C)$ that is equal to $\texttt{true}$ if and only if $C$ is demanding in $G_f$.
	\item A Boolean $d_{\tt intr}(C)$ that is equal to $\texttt{true}$ if and only if $\phi(C)$ is demanding in $G_f$.
	\item The number $\extr(C)$ of demanding 3-extrovert cycles along the path from the root of $T_f$ to $C$ (including $C$).
	\item The number $\intr(C)$ of demanding 3-introvert cycles along the path from the root of $T_f$ to $C$ (including $\phi(C)$).
\end{itemize}

%A \texttt{Bend-Counter} of $G$ is any \texttt{Bend-Counter} of $G$ with respect to a face $f$ for which the embedding of $G_f$ is a reference embedding.

In the \texttt{Bend-Counter} $\mathcal{B}(G_f)$ of \cref{fi:bend-counter}, $d_{\tt extr}(C_1)=\texttt{false}$ because $C_1$ is not demanding (it contains a flexible edge),  while $d_{\tt intr}(C_1)=\texttt{true}$ because $\phi(C_1)$ is demanding; also, $\extr(C_1)=0$ and $\intr(C_1)=1$ since $C_1$ is a child of the root of $T_f$.

\smallskip\noindent{\bf Information stored for the faces of $G_f$.} For each node $C$ of $T_f$, let $F_C$ be the set of faces of $G_f$ that belong to $G_f(C)$ and that do not belong to $G_f(C')$ for any child-cycle $C'$ of $C$ in $T_f$. Note that the sets $F_C$ over all nodes $C$ of $T_f$ partition the face set of $G_f$, that is each face belongs to exactly one $F_C$ (the external face $f$ of $G_f$ belongs to $F_{C_o(G_f)}$).
%We recall that the face array $A_f$ has one entry for every face of $G_f$.
%For the entry $f$ the face array stores the cost $c(G_f)$ of a cost-minimum orthogonal representation of $G_f$.
For each face $f'$ of $G_f$, $\mathcal{B}(G_f)$ stores the following information.

\begin{itemize}
	\item A pointer $\tau(f')=C$ that maps $f'$ to the node $C$ of $T_f$ such that $f' \in F_C$.
	% For the external face $\tau(f)$ is null.

    \item The number $\delta_{\tt extr}(f')$ of non-degenerate demanding 3-extrovert cycles of $G_f$ having $f'$ as leg face. Also, if $\delta_{\tt extr}(f') = 1$ a pointer $p_{\tt extr}(f')$ to the unique non-degenerate demanding 3-extrovert cycle that has $f'$ as leg face (note that for $f$ we have $\delta_{\tt extr}(f)=0$, since $G_f$ has a reference embedding).

    \item The number $\delta_{\tt intr}(f')$ of non-degenerate demanding 3-introvert cycles of $G_f$ having $f'$ as leg face. Also, if $\delta_{\tt intr}(f') = 1$ a pointer $p_{\tt intr}(f')$ to the unique non-degenerate demanding 3-introvert cycle that has $f'$ as leg face (note that $\delta_{\tt intr}(f)=0$, since, by definition, the external face is leg face only of 3-extrovert cycles).

    %\item A set $Neighbor(f')$ that, if $\delta_e(f') + \delta_i(f') = 1$ contains the two leg faces different from $f'$ of the unique demanding cycle that has $f'$ as leg face. Set $Neighbor(f')$ is the empty set if $\delta_e(f') + \delta_i(f') \neq 1$.

    \item The number $m(f')$ of flexible edges incident to $f'$ and the sum $s(f')$ of their flexibilities. Also, if $m(f') = 1$ a pointer $p_0(f')$ to the unique flexible edge $e_0$ of $f'$. If $m(f') = 2$ the pointers $p_0(f')$ and $p_1(f')$ to the two flexible edges $e_0$ and $e_1$ of $f'$.

\end{itemize}

For example, in \cref{fi:bend-counter} we show the information associated with an internal face $f'$ of $G_f$. Since $f'$ is a face of $G_f(C_1)$ and it is not a face of $G_f(C_4)$ we have that $\tau(f')=C_1$. Since $f'$ is a leg face of one demanding 3-extrovert cycle, namely $C_4$, we have that $\delta_{\tt extr}(f') = 1$ and $p_{\tt extr}(f')=C_4$. Since $f'$ is not a leg face of any 3-introvert cycle, we have that $\delta_{\tt intr}(f') = 0$. Also, $f'$ has a single flexible edge, namely $e$ with flexibility $2$, and thus we have $m(f')=1$, $p_0(f')=e$, and $s(f')=2$.

% \begin{figure}[tb]
% 	\centering
% 	\subfloat[]{\label{fi:bend-counter-a}\includegraphics[width=0.4\columnwidth]{}}
% 	\hfil
% 	\subfloat[]{\label{fi:bend-counter-c}\includegraphics[width=0.4\columnwidth]{}}
% 	\hfil
% 	\subfloat[]{\label{fi:bend-counter-d}\includegraphics[width=0.4\columnwidth]{}}
% 	\hfil
% 	\subfloat[]{\label{fi:bend-counter-b}\includegraphics[width=0.4\columnwidth]{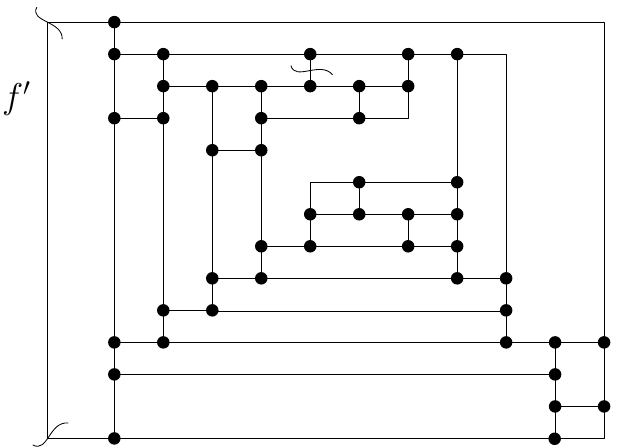}}
% 	\hfil
% %	\subfloat[]{\label{fi:bend-counter-d}\includegraphics[width=0.5\columnwidth]{bend-counter-d}}
% 	\caption{(a) A plane graph $G_f$; (b) The inclusion tree $T_f$ and the value of $|D(G_f)|$; (c) The face array $A_f$; (d) A minimum-bend orthogonal representation of $G_{f^*}$. Note that $\flex(f)=2$ and $|D_{f^*}(G_{f^*})|=1$. We have $|D(G_{f^*})|=|D(G_{f})|-\extr(C_1)+\intr(C_1)+|D_{f^*}(G_{f^*})|-\delta_e(f^*)=4$. Hence, $c(G_{f^*})=|D(G_{f^*})|+4-\min\{4,|D_{f^*}(G_{f^*})|+\flex(f^*)\}=5$.}\label{fi:bend-counter}
% \end{figure}

\begin{figure}[tb]
	\centering
	\includegraphics[width=0.8\columnwidth]{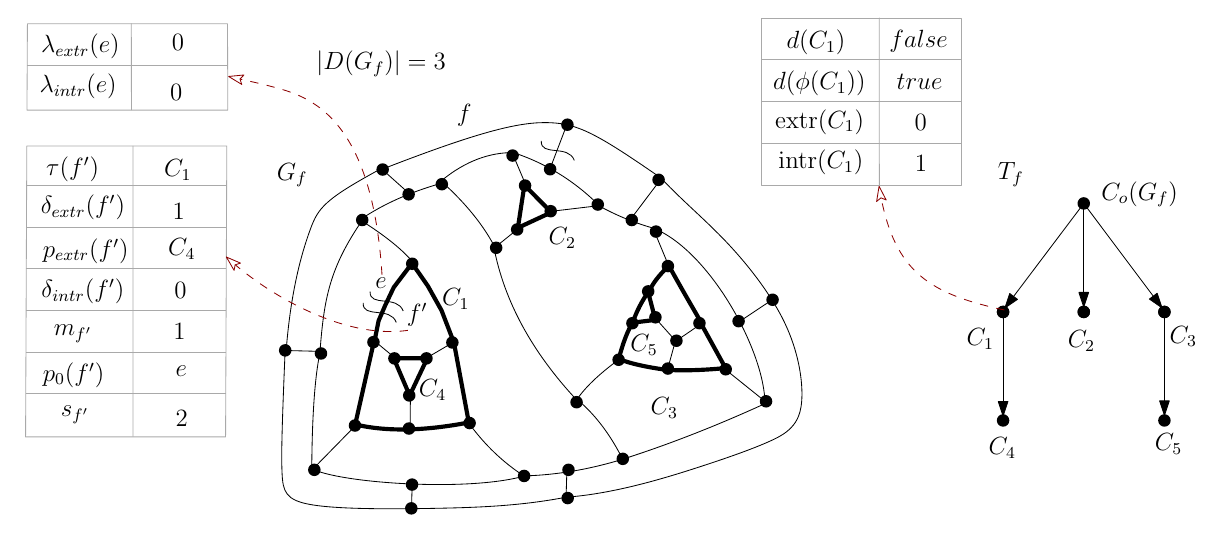}
	% \hfil
	% \subfloat[]{\label{fi:bend-counter-c}\includegraphics[width=0.4\columnwidth]{bend-counter-c}}
	% \hfil
	% \subfloat[]{\label{fi:bend-counter-d}\includegraphics[width=0.4\columnwidth]{bend-counter-d}}
	% \hfil
	% \subfloat[]{\label{fi:bend-counter-b}\includegraphics[width=0.4\columnwidth]{bend-counter-b}}
	\hfil
%	\subfloat[]{\label{fi:bend-counter-d}\includegraphics[width=0.5\columnwidth]{bend-counter-d}}
	\caption{The information stored by the \texttt{Bend-Counter} for an internal node $C$ of $T_f$ and
	for a face $f'$ of $G_f$. The figure also shows the value $|D(G_f)|$.}\label{fi:bend-counter}
\end{figure}

\smallskip\noindent{\bf Information stored for the flexible edges of $G_f$.} For each flexible edge $e$ of $G_f$,
$\mathcal{B}(G_f)$ stores the following information.

\begin{itemize}
\item A non-negative integer $\lambda_{\tt extr}(e)$ that records the number of non-degenerate demanding 3-extrovert that have $e$ as a leg.
\item A non-negative integer $\lambda_{\tt intr}(e)$ that records the number of non-degenerate demanding 3-introvert cycles that have $e$ as a leg.
%\item An integer $\coflex_{f_1}(e)$ that records the co-flexibility of $e$ when $f_1$  is the external face.
%\item An integer $\coflex_{f_2}(e)$ that records the co-flexibility of $e$ when $f_2$  is the external face.
\end{itemize}

%%%%%%%%%
\bigskip\noindent
The next three lemmas prove that all the information stored in the \texttt{Bend-Counter} $\mathcal{B}(G_f)$ can be computed in linear time.

\begin{lemma}\label{le:addinformation-in-linear-time}
	The values $d_{\tt extr}(\cdot)$, $d_{\tt intr}(\cdot)$, $\extr(\cdot)$, and $\intr(\cdot)$ for all non-root nodes of $T_f$ can be computed in overall $O(n)$ time.
\end{lemma}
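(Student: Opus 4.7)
The plan is to reduce the lemma to three previously established ingredients plus one straightforward top-down sweep of $T_f$. Since $T_f$ has $O(n)$ nodes by \cref{le:inclusion-tree}, any linear-work-per-node traversal is automatically $O(n)$ overall.

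First, I would compute $d_{\tt extr}(C)$ for every non-root node $C$ of $T_f$. By \cref{le:3-extrovert-coloring-linear} the red--green--orange coloring of all contour paths of the non-degenerate 3-extrovert cycles of $G_f$ can be produced in $O(n)$ time; recall that a cycle is demanding precisely when Case~1 of the \textsc{3-Extrovert Coloring Rule} applies, i.e.\ all three of its contour paths are green. During (or after) the post-order visit used in \cref{le:3-extrovert-coloring-linear} I would therefore set $d_{\tt extr}(C)$ to $\texttt{true}$ iff $C$ has three green contour paths and none of them shares edges with a green contour path of a child-cycle. This costs $O(1)$ per node, hence $O(n)$ in total.

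Second, I would compute $d_{\tt intr}(C)$ for every non-root node $C$ of $T_f$. By \cref{le:introvert-coloring-algo} together with the characterization in \cref{le:demanding-3-introvert-charact} (summarized algorithmically in \cref{le:demanding-3-introvert-algo}), the non-degenerate demanding 3-introvert cycles of $G_f$ can be identified in $O(n)$ time. Concretely, running the two-step coloring of \cref{le:introvert-coloring-algo} and marking $\phi(C)$ as demanding exactly when, after Step~1, none of its three contour paths received a color and, after Step~2, all three are green, yields the Boolean $d_{\tt intr}(C)$ at each node in overall $O(n)$ time.

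Finally, with $d_{\tt extr}$ and $d_{\tt intr}$ available, the values $\extr(C)$ and $\intr(C)$ are computed by a single top-down traversal of $T_f$ using the obvious recurrences: initialize $\extr(C_o(G_f))=\intr(C_o(G_f))=0$, and for each non-root node $C$ with parent $C'$ set
\[
\extr(C)=\extr(C')+[d_{\tt extr}(C)=\texttt{true}], \qquad \intr(C)=\intr(C')+[d_{\tt intr}(C)=\texttt{true}],
\]
where $[\cdot]$ is the Iverson bracket. This traversal performs $O(1)$ work at each node and thus takes $O(n)$ time. There is no real obstacle here beyond carefully invoking the previous lemmas; the bookkeeping for $\extr$ and $\intr$ is the routine aggregation step, and the only substantive content has already been done by \cref{le:3-extrovert-coloring-linear,le:introvert-coloring-algo,le:demanding-3-introvert-algo}. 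Summing the three phases gives the claimed $O(n)$ total running time. \qed
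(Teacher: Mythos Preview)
Your proposal is correct and follows essentially the same approach as the paper: invoke \cref{le:3-extrovert-coloring-linear} for $d_{\tt extr}$, invoke \cref{le:demanding-3-introvert-algo} for $d_{\tt intr}$, and then do a single pre-order (top-down) sweep of $T_f$ to accumulate $\extr$ and $\intr$ via the obvious parent-to-child recurrences. The paper states this more tersely but the content is identical.
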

\begin{proof}
	Every non-degenerate (non-intersecting) 3-extrovert cycle of $G_f$ is a non-root node of $T_f$. By using \cref{le:3-extrovert-coloring-linear} we can compute the demanding 3-extrovert cycles in $O(n)$ time. Hence, the values $d_{\tt extr}(\cdot)$ of the non-root nodes of $T_f$ can be computed in overall $O(n)$ time. Also, by \cref{le:demanding-3-introvert-algo}, the values $d_{\tt intr}(\cdot)$ can be computed in overall $O(n)$ time. For every cycle $C$, the values $\extr(C)$ and $\intr(C)$ can be easily computed from $d_{\tt extr}(C)$ and $d_{\tt intr}(C)$ through a pre-order visit of $T_f$. This takes overall $O(n)$ time.
\end{proof}

\begin{restatable}{lemma}{leExtDemandingCounter}\label{le:ext-demanding-counter}
	The values $\tau(\cdot)$, $\delta_{\tt extr}(\cdot)$, $\delta_{\tt intr}(\cdot)$, $p_{\tt extr}(\cdot)$, $p_{\tt intr}(\cdot)$, $m(\cdot)$, $s(\cdot)$, $p_0(\cdot)$, and $p_1(\cdot)$ for all faces of $G_f$ can be computed in overall $O(n)$ time.
\end{restatable}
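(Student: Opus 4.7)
The plan is to split the computation into three groups and handle them separately, using the inclusion tree $T_f$ (\cref{le:inclusion-tree}) together with the explicit and implicit representations of the non-degenerate 3-extrovert and 3-introvert cycles of $G_f$ (\cref{le:explicit_representation}), all of which are available in $O(n)$ time from the preprocessing already described. The groups are: (i) the \emph{local} edge-related quantities $m(f'),s(f'),p_0(f'),p_1(f')$; (ii) the map $\tau(\cdot)$; and (iii) the counters and pointers $\delta_{\tt extr},\delta_{\tt intr},p_{\tt extr},p_{\tt intr}$.

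For group (i), I would traverse the boundary of each face $f'$ once, summing the flexibilities of the flexible edges it encounters and recording pointers to the first one or two such edges. Since each edge of $G_f$ is incident to exactly two faces, the total work is $\sum_{f'}|\partial f'|=O(n)$. For group (iii), I would first initialise all counters to $0$ and all pointers to null. Then, using the flags $d_{\tt extr}(\cdot)$ and $d_{\tt intr}(\cdot)$ (computed in $O(n)$ time by \cref{le:addinformation-in-linear-time}), I iterate over each non-root node $C$ of $T_f$ and read off its three leg faces $f_1,f_2,f_3$ from the implicit representation (these are, by construction, also the three leg faces of $\phi(C)$). If $d_{\tt extr}(C)$ is true, I increment $\delta_{\tt extr}(f_i)$ for $i=1,2,3$, setting $p_{\tt extr}(f_i)=C$ whenever the counter becomes~$1$; symmetrically, if $d_{\tt intr}(C)$ is true, I update $\delta_{\tt intr}(f_i)$ and $p_{\tt intr}(f_i)=\phi(C)$. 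Since $|T_f|=O(n)$ and each node contributes $O(1)$ work, this step costs $O(n)$.

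The main obstacle is group (ii): computing $\tau(\cdot)$ in total $O(n)$ time in spite of the fact that a single edge of $G_f$ may lie on the boundary of many nested cycles of $T_f$. My approach is to first preprocess, for every edge $e$ that belongs to some cycle of $T_f$, the deepest cycle $C_{\mathrm{dp}}(e)\in T_f$ containing $e$ and the topmost cycle $C_{\mathrm{top}}(e)\in T_f$ containing $e$. The deepest cycle is read off the explicit representation directly, because each edge appears verbatim in exactly one contour-path description, namely that of its deepest cycle; the topmost cycle is obtained by walking up from $C_{\mathrm{dp}}(e)$ along the ancestor chain of $T_f$ as long as the parent's contour path contains the current cycle through its pointer chain. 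I then perform a BFS on the dual graph of $G_f$, starting at the external face $f$ with $\tau(f)=C_o(G_f)$: crossing a non-cycle edge leaves $\tau$ unchanged, while crossing a cycle edge $e$ switches between the two possible $\tau$-values in $O(1)$. Specifically, if the current face lies in $F_C$ for $C$ equal to $C_{\mathrm{dp}}(e)$ or a descendant, then the adjacent face lies in $F_{\hat C}$ where $\hat C$ is the parent of $C_{\mathrm{top}}(e)$ in $T_f$; otherwise the adjacent face lies in $F_{C_{\mathrm{dp}}(e)}$. Since each face is visited once and each edge is crossed at most twice, the BFS runs in $O(n)$.

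The most delicate point will be arguing that the preprocessing of $C_{\mathrm{top}}(e)$ is $O(n)$ overall and not $O(n)$ per edge. The key observation is that the ancestor chains traversed for different edges share their upper portions, and the pointers in the explicit representation that witness ``$C$ contains $e$'' are exactly the pointer chain from $C_{\mathrm{dp}}(e)$ up to $C_{\mathrm{top}}(e)$; a standard amortised charging of every such pointer to the corresponding contour-path pointer in the explicit representation (whose total size is $O(n)$ by \cref{le:explicit_representation}) then gives the required bound.
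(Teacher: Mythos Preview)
Your handling of the local edge data $m,s,p_0,p_1$ and of the counters $\delta_{\tt extr},\delta_{\tt intr},p_{\tt extr},p_{\tt intr}$ matches the paper essentially verbatim: iterate once over the edges (resp.\ over the non-root nodes of $T_f$, reading their three leg faces from the implicit representation) and update in $O(1)$ per step. One cosmetic point you omit but the specification requires: clear $p_{\tt extr}(f')$ (resp.\ $p_{\tt intr}(f')$) once the corresponding counter exceeds~$1$.

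For $\tau(\cdot)$ you take a genuinely different route. The paper performs a bottom-up leaf elimination on $T_f$: when $C$ is a current leaf, assign $\tau(f')=C$ to every face still inside $C$, collapse $G_f(C)$ to a single degree-$3$ vertex, and delete $C$ from $T_f$; the faces touched across all iterations are disjoint, giving $O(n)$ total. Your dual-graph BFS with precomputed $C_{\mathrm{dp}}(e)$ and $C_{\mathrm{top}}(e)$ is also sound: for a cycle edge $e$ the two incident faces lie precisely in $F_{C_{\mathrm{dp}}(e)}$ and in $F_{\mathrm{parent}(C_{\mathrm{top}}(e))}$ (so your ``or a descendant'' clause is never triggered), and for a non-cycle edge both faces lie in the same $F_C$.

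Where your argument does not go through as written is the amortisation for $C_{\mathrm{top}}$. Walking up \emph{per edge} and charging each step to the contour-path pointer it follows is not $O(n)$: a single pointer from a contour path $P$ of $C$ to a child contour path $P'$ is traversed once for \emph{every} edge whose chain passes through $P'$, i.e.\ for all of the fully-expanded edges under $P'$; summing this over all pointers can be $\Theta(n\cdot\mathrm{depth}(T_f))$. Your ``shared upper portions'' observation is exactly the right idea, but it must be made structural: compute $C_{\mathrm{top}}$ once per \emph{contour path} rather than per edge, top-down in $T_f$ (if the parent's explicit representation contains a pointer to $P$, set $C_{\mathrm{top}}(P)$ equal to the parent contour path's value, otherwise to the current cycle). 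Then each of the $O(n)$ pointers is followed exactly once, and $C_{\mathrm{top}}(e)$ is just $C_{\mathrm{top}}$ of the contour path where $e$ appears verbatim. With this fix your approach is linear; the paper's collapse-from-the-leaves method is simply more direct and sidesteps the need for $C_{\mathrm{top}}$ altogether.
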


\begin{proof}
	Regarding the computation of $\tau(\cdot)$, we
	recursively remove leaves from $T_f$.
	%Each time a leaf $C$ is removed, the corresponding cycle in $G_f$ is collapsed into a degree-3 vertex.
	Let $C$ be the current leaf of $T_f$.
	For each face $f'$ inside $C$, we set $\tau(f')=C$, collapse $C$ in $G_f$ into a degree-3 vertex, and remove the leaf $C$ from $T_f$.
	Once $C=C_o(G_f)$ is processed, each internal face $f'$ of $G_f$ is assigned a cycle $\tau(f')$. For the external face $f$ we set $\tau(f)=C_o(G_f)$.
	When $C$ is collapsed, the boundaries of its three leg faces can be updated in a time that is linear in the size of $C$. Also, the values $\tau(\cdot)$ for all faces inside $C$ can be computed by traversing the edges of $C$ and the edges in the interior of $C$; this takes a time that is linear in the sum of the sizes of all faces inside $C$. Since the sum of the sizes of all faces of $G_f$ is $O(n)$, we can compute all values $\tau(\cdot)$ in overall $O(n)$ time.

	Regarding the computation of the values $\delta_{\tt extr}(\cdot)$ and $\delta_{\tt intr}(\cdot)$,
	we first apply the technique of \cref{le:addinformation-in-linear-time} to compute the values $d_{\tt extr}(C)$ and $d_{\tt intr}(C)$ for every node $C$ of $T_f$. We then initialize to zero the values $\delta_{\tt extr}(f')$ and $\delta_{\tt intr}(f')$ for every face $f'$. We visit $T_f$ and for each node $C$ and for each leg face $f'$ of $C$ we increment $\delta_{\tt extr}(f')$ by one if $d_{\tt extr}(C)=\texttt{true}$ and we increment $\delta_{\tt intr}(f')$ by one if $d_{\tt intr}(C)=\texttt{true}$. When we set $\delta_{\tt extr}(f') = 1$ we also set $p_{\tt extr}(f') = C$; when we set $\delta_{\tt intr}(f') = 1$ we also set $p_{\tt intr}(f') = \phi(C)$. If $\delta_{\tt extr}(f') > 1$ we delete $p_{\tt extr}(f')$ and if $\delta_{\tt intr}(f') > 1$ we delete $p_{\tt intr}(f')$.
	Since every 3-extrovert cycle represented in $T_f$ has three leg faces and since there are $O(n)$ 3-extrovert cycles in $G_f$, all values $\delta_{\tt extr}(f')$, $\delta_{\tt intr}(f')$, $p_{\tt extr}(f')$, and $p_{\tt intr}(f')$ can be computed in overall $O(n)$ time.
	
	Finally, we describe how to compute $m(f')$, $s(f')$, $p_0(f')$, and $p_1(f')$. For each face $f'$, we initially set $m(f') = s(f') = 0$.
	For each edge $e$ of $G_f$, if $e$ is flexible we increment $m(f'')$ and $m(f''')$ for the two faces $f''$ and $f'''$ incident to $e$. Also, we sum the flexibility $\flex(e)$ to $s(f'')$ and to $s(f''')$.
	When we set $m(f') = 1$ for some face $f'$, we also set $p_0(f')=e$; when we set $m(f') = 2$, we also set $p_1(f')=e$. If instead we set $m(f')$ to a value greater than $2$, we delete $p_0(f')$ and $p_1(f')$. All the operations described above can be performed in overall $O(n)$ time.
\end{proof}

\begin{lemma}\label{le:lambda}
The values $\lambda_{\tt extr}(\cdot)$ and $\lambda_{\tt intr}(\cdot)$ for all flexible edges of $G_f$ can be computed in overall $O(n)$ time. Also, for every flexible edge $e$ of $G_f$ we have $\lambda_{\tt extr}(e) + \lambda_{\tt intr}(e) \leq 2$.
\end{lemma}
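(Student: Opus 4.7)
The lemma has two independent statements, and the plan addresses them in turn.

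For the $O(n)$-time computation, the approach is a single traversal of the inclusion tree $T_f$. By \cref{le:inclusion-tree} and \cref{le:addinformation-in-linear-time}, $T_f$ together with the flags $d_{\tt extr}(\cdot)$ and $d_{\tt intr}(\cdot)$ is already available, and by \cref{le:explicit_representation,le:3-extro-3-intro} each node $C$ carries pointers to the three legs of $C$, which coincide with the three legs of $\phi(C)$. After initializing every $\lambda_{\tt extr}(e)$ and $\lambda_{\tt intr}(e)$ to zero, I will visit each non-root node $C$ and, for each flexible leg $e$ of $C$, increment $\lambda_{\tt extr}(e)$ whenever $d_{\tt extr}(C)$ is \texttt{true} and $\lambda_{\tt intr}(e)$ whenever $d_{\tt intr}(C)$ is \texttt{true}. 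Since $|T_f|=O(n)$ and each node contributes $O(1)$ work, the overall cost is $O(n)$.

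For the bound, let $e=(u,v)$ be flexible and let $f_1,f_2$ be the two faces of $G_f$ incident to $e$. Any non-degenerate demanding cycle (3-extrovert or 3-introvert) that has $e$ as a leg must have both $f_1$ and $f_2$ among its three leg faces, and at whichever of $u,v$ is its leg vertex for $e$ it must use the two edges of $G_f$ at that vertex other than $e$. I will prove the sharper per-endpoint statement that, at each of $u$ and $v$, at most one non-degenerate demanding cycle (of either type) has that endpoint as its leg vertex for $e$; summing over the two endpoints yields $\lambda_{\tt extr}(e)+\lambda_{\tt intr}(e)\le 2$. The per-endpoint argument combines three ingredients: the reference-embedding hypothesis together with \cref{pr:intersecting-3-extrovert} forces non-degenerate 3-extrovert cycles sharing a vertex to be nested in $T_f$, and via the bijection $\phi$ of \cref{le:3-extro-3-intro} this nesting transports to the associated 3-introvert cycles; two nested cycles sharing $e$ as a leg at $u$ have contour paths on $f_1$ that both emanate from $u$ along the unique edge of $\partial f_1$ incident to $u$ and distinct from $e$, so one is a subpath of the other; this contour-path containment, combined with the \textsc{3-Extrovert Coloring Rule} and \cref{le:demanding-3-introvert-charact} and supplemented by \cref{le:fagiolo-nero-extrovert,le:fagiolo-bianco,le:solo-due-fagioli}, is meant to rule out the coexistence of two demanding cycles in the nested chain.

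The main obstacle will be carrying out this exclusion uniformly across the four mixed cases (extrovert--extrovert, extrovert--introvert, introvert--extrovert, introvert--introvert). The extrovert--extrovert case is clean, since the green contour path of the descendant immediately violates Case~1 of the \textsc{3-Extrovert Coloring Rule} at the ancestor. The cases involving a 3-introvert cycle will be the delicate ones, because \cref{le:demanding-3-introvert-charact} only compares a 3-introvert cycle with the siblings of its preimage and with the $\phi$-image of the preimage's parent, so the characterization cannot be applied directly to two nested $\phi$-cycles. I plan to handle them by invoking Property~(iii) of \cref{le:fagiolo-nero-extrovert,le:fagiolo-bianco}, which forces all the demanding cycles incident to $f_1$ to share one of two distinguished edges of $\partial f_1$, and then exploiting the cubicity of $G_f$ at the shared leg vertex to force the two candidate cycles to coincide.
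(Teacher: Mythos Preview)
Your $O(n)$-time computation is exactly the paper's procedure, so that half is fine.

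For the bound, your per-endpoint reformulation (at most one demanding cycle has $u$ as its leg vertex for $e$) is a true statement and is equivalent to what the paper establishes, but the route you sketch has real gaps. First, \cref{le:fagiolo-bianco} has only Properties~$(i)$ and~$(ii)$; there is no Property~$(iii)$ to invoke, and Property~$(iii)$ of \cref{le:fagiolo-nero-extrovert} only says that all demanding cycles through $f_1$ hit one of two distinguished edges of $\partial f_1$---this does not force two candidate cycles at $u$ to coincide. Second, and more substantively, in the mixed case where a demanding 3-extrovert $C_1$ and a demanding 3-introvert $\phi(C_2)$ share the leg vertex $u$, you cannot apply \cref{le:demanding-3-introvert-charact} directly, because $C_1$ need not be a sibling of $C_2$: it may be a proper descendant of some sibling $C_1'$ of $C_2$, and you would need an additional green-propagation step (from $C_1$ up to $C_1'$ along the inclusion tree) before the characterization bites. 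You gesture at this obstacle but do not supply the argument.

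The paper avoids this entirely by a different case split. Instead of working per endpoint, it assumes two demanding 3-extrovert cycles $C_1,C_2$ share leg $e$ and shows they must be edge-disjoint: the reference embedding and \cref{pr:intersecting-3-extrovert} forbid intersection, and nesting is ruled out because a green contour path of the inner cycle would make the outer non-demanding. Cubicity then caps the count at two and forces any 3-introvert cycle with leg $e$ to share edges with one of $C_1,C_2$, hence to be non-demanding. The 3-introvert case is symmetric: two demanding $\phi(C_1),\phi(C_2)$ sharing $e$ force the corresponding $C_1,C_2$ to be edge-disjoint (nesting would yield contour-path containment between $\phi(C_1)$ and $\phi(C_2)$, contradicting that they intersect by \cref{le:fagiolo-nero-extrovert}), and again cubicity finishes. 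This global edge-disjointness argument is shorter and sidesteps the sibling-versus-deeper-descendant difficulty that your per-endpoint plan runs into.
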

\begin{proof}
The inclusion tree $T_f$ of $G_f$ can be computed in $O(n)$ time by \cref{le:inclusion-tree}. Also, by \cref{le:addinformation-in-linear-time} we can compute in overall $O(n)$ time all values $d_{\tt extr}(\cdot)$ and $d_{\tt intr}(\cdot)$ for the nodes of $T_f$. For every non-root node $C$ of $T_f$ consider the tree pointers to the three legs $e_1$, $e_2$, and $e_3$ of $C$ (and of $\phi(C)$). If $d_{\tt extr}(C)=\texttt{true}$ and $e_i$ is a flexible edge of $G_f$, we increment $\lambda_{\tt extr}(e_i)$ by one unit ($i=1,2,3$).
 Similarly, we increment $\lambda_{\tt intr}(e_i)$ by one unit if $d_{\tt intr}(C)=\texttt{true}$ and $e_i$ is a flexible edge of $G_f$. Clearly, this can be executed in $O(1)$ time for every leg $e_i$ that is flexible. It follows that all values $\lambda_{\tt extr}(\cdot)$ and $\lambda_{\tt intr}(\cdot)$ associated with the flexible edges of $G_f$ can be computed in overall $O(n)$ time.

Let $e$ be a flexible edge of $G_f$. We now show that $\lambda_{\tt extr}(e) + \lambda_{\tt intr}(e) \leq 2$. Since $G_f$ has a reference embedding, every non-degenerate demanding 3-extrovert cycle shares no edge with the external face $f$. Hence,  $\lambda_{\tt extr}(e) + \lambda_{\tt intr}(e) = 0$ for every flexible edge $e$ along the boundary of $f$. We now assume $e$ to be a flexible edge that is not incident to $f$. If $e$ is a leg shared by exactly one non-degenerate demanding 3-extrovert cycle and by exactly one non-degenerate demanding 3-introvert cycle we are done.  The remaining cases are as follows.

\begin{itemize}

\item The flexible edge $e$ is a leg of two non-degenerate demanding 3-extrovert cycles $C_1$ and $C_2$. Cycles $C_1$ and $C_2$ are edge disjoint because, by \cref{pr:intersecting-3-extrovert} and since $G_f$ has a reference embedding. Since in a cubic planar graph a leg can be shared by at most two edge-disjoint cycles, it follows that $C_1$ and $C_2$ are the only two non-degenerate demanding 3-extrovert cycles sharing leg $e$. Hence, in this case $\lambda_{\tt extr}(e) = 2$. Also, any 3-introvert cycle that has $e$ as a leg either shares some edges with $C_1$ or it shares some edges with $C_2$ and therefore it cannot be demanding. It follows that $\lambda_{\tt intr}(e) = 0$.

\begin{figure}[tb]
	\centering
	\hfil
	\subfloat[]{\label{fi:same-leg-containment}\includegraphics[width=0.25\columnwidth]{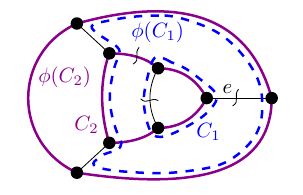}}
	\hfil
	\subfloat[]{\label{fi:same-leg-intersecting}\includegraphics[width=0.25\columnwidth]{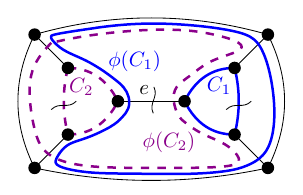}}
	\hfil
	\caption{Illustration for the proof of \cref{le:lambda}. The case when two 3-introvert cycles share a leg $e$. (a) If $C_1$ is in the interior of $C_2$ and $\phi(C_2)$ is demanding then $\phi(C_1)$ is not demanding.
	(b) $\phi(C_1)$ and $\phi(C_2)$ are both demanding and $C_1$ and $C_2$ are edge disjoint.}\label{fi:same-leg}
\end{figure}

\item The flexible edge $e$ is a leg of two non-degenerate demanding 3-introvert cycles $\phi(C_1)$ and $\phi(C_2)$. Since $\phi(C_1)$ and $\phi(C_2)$ share a leg, they also share a leg face and, by \cref{le:fagiolo-nero-extrovert}, they intersect. Let $C_1$ and $C_2$ be the corresponding 3-extrovert cycles of $\phi(C_1)$ and of $\phi(C_2)$, respectively. Cycles $C_1$ and $C_2$ have the same legs as $\phi(C_1)$ and $\phi(C_2)$. If $C_1$ is in the interior of $C_2$, one of the contour paths of $\phi(C_1)$ is a subset of a contour path of $\phi(C_2)$, impossible because $\phi(C_1)$ and $\phi(C_2)$ are both demanding (see, for example, \cref{fi:same-leg-containment}). It follows that $C_1$ and $C_2$ are edge disjoint and also share the leg $e$ (see, for example, \cref{fi:same-leg-intersecting}). If there were a third non-degenerate demanding 3-introvert cycle $\phi(C_3)$ sharing leg $e$ with $\phi(C_1)$ and $\phi(C_2)$, the  3-extrovert cycles $C_1$, $C_2$, and $C_3$ should be edge disjoint and share a leg, which is impossible since $G$ is cubic. Hence,  $\lambda_{\tt intr}(e) = 2$. Also, if there were a non-degenerate demanding 3-extrovert cycle sharing leg $e$ with $\phi(C_1)$ and $\phi(C_2)$, this cycle would also share some edges with one of the two 3-introvert cycles, which contradicts the hypothesis that $\phi(C_1)$ and $\phi(C_2)$ are both demanding. It follows that $\lambda_{\tt extr}(e) = 0$.
\end{itemize}
\end{proof}

\subsection{Proof of \cref{th:bend-counter}}\label{sse:use-of-bend-counter}
For each face $f$ of $G$ such that the embedding of $G_f$ is a reference embedding, we have a different $\mathcal{B}(G_f)$. Each of these $\mathcal{B}(G_f)$ is called a \emph{\texttt{Bend-Counter} of $G$}.

\begin{lemma}\label{th:bend-counter_computation}
	Let $G$ be an $n$-vertex planar triconnected cubic graph with flexible edges.
	A \texttt{Bend-Counter} of $G$ can be computed in $O(n)$ time.
\end{lemma}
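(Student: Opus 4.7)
The plan is to prove the lemma by composing the linear-time subroutines developed throughout Sections~\ref{se:ref-embedding} and~\ref{se:bend-counter}, showing that all fields of a single \texttt{Bend-Counter} $\mathcal{B}(G_f)$ can be populated in $O(n)$ time.

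First I would use \cref{le:ref-embedding} to compute, in $O(n)$ time, a face $f$ of $G$ such that the embedding of $G_f$ is a reference embedding; this is a prerequisite for the data structure, since $\mathcal{B}(G_f)$ is defined only relative to a reference embedding. Next, I would construct the inclusion tree $T_f$ in $O(n)$ time using \cref{le:inclusion-tree}, together with an explicit representation of the non-degenerate 3-extrovert cycles and an implicit representation of the corresponding non-degenerate 3-introvert cycles via \cref{le:explicit_representation}, linking each 3-extrovert cycle $C$ to its counterpart $\phi(C)$ as guaranteed by \cref{le:3-extro-3-intro}. To support the constant-time ancestor-descendant queries that are implicitly needed (e.g.\ when reading off information indexed by $T_f$), I would run the linear-time preprocessing of~\cite{ht-fafnca-84} on $T_f$.

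The second phase populates the cycle-, face-, and edge-indexed fields of $\mathcal{B}(G_f)$ by direct invocation of three lemmas already proved. Specifically, \cref{le:addinformation-in-linear-time} yields $d_{\tt extr}(C)$, $d_{\tt intr}(C)$, $\extr(C)$, and $\intr(C)$ for every non-root $C$ of $T_f$ in overall $O(n)$ time; \cref{le:ext-demanding-counter} yields $\tau(f')$, $\delta_{\tt extr}(f')$, $\delta_{\tt intr}(f')$, $p_{\tt extr}(f')$, $p_{\tt intr}(f')$, $m(f')$, $s(f')$, $p_0(f')$, $p_1(f')$ for every face $f'$ of $G_f$ in overall $O(n)$ time; and \cref{le:lambda} yields $\lambda_{\tt extr}(e)$ and $\lambda_{\tt intr}(e)$ for every flexible edge $e$ in overall $O(n)$ time. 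Finally, the single global counter $|D(G_f)|$ is obtained in $O(n)$ time by traversing the nodes of $T_f$ and summing one unit for each non-root $C$ with $d_{\tt extr}(C)=\texttt{true}$ whose corresponding 3-extrovert cycle is non-intersecting; by \cref{le:independent-child-cycles} and the fact that $G_f$ has a reference embedding, all non-degenerate demanding 3-extrovert cycles indexed by $T_f$ are automatically non-intersecting, so the sum is simply $\sum_{C \in T_f,\, C \neq \text{root}} [d_{\tt extr}(C)]$.

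Summing the time complexities of the individual steps gives an overall running time of $O(n)$, which establishes the lemma. I do not anticipate a main obstacle: the substantive combinatorial work (characterizing demanding 3-introvert cycles, establishing that the relevant coloring is independent of the external face, bounding $\lambda_{\tt extr}(e)+\lambda_{\tt intr}(e)$, etc.) has already been discharged in the earlier results of this section; the present proof is essentially an assembly argument that verifies each field of $\mathcal{B}(G_f)$ against the subroutine that computes it. The only care needed is to ensure that the various pieces share a single, consistent representation of $T_f$ together with the pointers $C \leftrightarrow \phi(C)$ and the face-to-node map $\tau(\cdot)$, so that no cross-referencing step incurs more than constant time per access.
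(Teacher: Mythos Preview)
Your proposal is correct and follows essentially the same approach as the paper: compute a reference embedding via \cref{le:ref-embedding}, build $T_f$ via \cref{le:inclusion-tree}, then invoke \cref{le:addinformation-in-linear-time,le:ext-demanding-counter,le:lambda} to fill the cycle-, face-, and edge-indexed fields, and finally obtain $|D(G_f)|$ by a traversal of $T_f$ counting nodes with $d_{\tt extr}(C)=\texttt{true}$. Your added remark that all non-degenerate demanding 3-extrovert cycles of a reference embedding are automatically non-intersecting (justifying the simple count) is correct and is left implicit in the paper.
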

\begin{proof}
	We compute a reference embedding of $G$ by means of \cref{le:ref-embedding}. Denoted by $f$ the external face of this reference embedding, we compute the inclusion tree $T_f$ by means of \cref{le:inclusion-tree}. We then compute $\mathcal{B}(G_f)$ in $O(n)$ time. Namely, the information stored in $\mathcal{B}(G_f)$ that are associated with the nodes of $T_f$, with the faces of $G_f$, and with the flexible edges of $G_f$ are computed by means of \cref{le:addinformation-in-linear-time,le:ext-demanding-counter,le:lambda}. Also, we compute $|D(G_f)|$ by performing a traversal of $T_f$ and by counting all nodes $C$ such that $d_{\tt extr}(C)=\texttt{true}$.
\end{proof}

We now show how to use a \texttt{Bend-Counter} $\mathcal{B}(G_f)$ of $G$ to compute in constant time the different terms of Equation~\ref{eq:fixed-embedding-cost}, for any possible choice of the external face $f^*$ of~$G$.

\begin{lemma}\label{le:demanding-external-algo}
	Let $f^*$ be any face of~$G$. The number $|D_{f^*}(G_{f^*})|$ of non-degenerate non-intersecting demanding 3-extrovert cycles of $G_{f^*}$ incident to $f^*$ can be computed in $O(1)$ time.
\end{lemma}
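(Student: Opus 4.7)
\textbf{Proof plan for Lemma \ref{le:demanding-external-algo}.}

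The plan is to reduce the computation to a constant-time case analysis on the two values $k_e = \delta_{\tt extr}(f^*)$ and $k_i = \delta_{\tt intr}(f^*)$ stored by the Bend-Counter $\mathcal{B}(G_f)$, supplemented by a single $O(1)$ descendant query on the inclusion tree $T_f$ (which, by a standard linear-time preprocessing, supports such queries in $O(1)$).

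First I would show that the non-degenerate demanding 3-extrovert cycles of $G_{f^*}$ incident to $f^*$ form a set $\mathcal{C}$ of exactly $k_e + k_i$ cycles. Indeed, by Lemma~\ref{le:extrovert-extrovert-coloring} the $k_e$ non-degenerate demanding 3-extrovert cycles of $G_f$ having $f^*$ as a leg face remain demanding 3-extrovert in $G_{f^*}$; and by Observation~\ref{ob:extrovert-introvert} together with Corollary~\ref{co:introvert-extrovert-demanding} the $k_i$ non-degenerate demanding 3-introvert cycles of $G_f$ having $f^*$ as a leg face become demanding 3-extrovert in $G_{f^*}$. Moreover, by Property~\ref{pr:intersecting-3-extrovert} every pair of intersecting non-degenerate 3-extrovert cycles of $G_{f^*}$ must consist of cycles incident to $f^*$, hence any intersection between non-degenerate demanding 3-extrovert cycles of $G_{f^*}$ occurs within $\mathcal{C}$. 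Combined with Corollary~\ref{co:inclusion}, this yields the clean dichotomy that either no two cycles of $\mathcal{C}$ intersect, in which case $|D_{f^*}(G_{f^*})| = k_e + k_i$, or some pair in $\mathcal{C}$ intersects, in which case $|D_{f^*}(G_{f^*})| = 0$.

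Next I would handle the cases on $(k_e, k_i)$. When $k_e + k_i \le 1$, the answer is $k_e + k_i$ trivially, since $\mathcal{C}$ contains no pair. When $k_i \ge 2$, Lemma~\ref{le:fagiolo-nero-extrovert} forces all cycles of $\mathcal{C}$ to pairwise intersect, and the answer is $0$. When $k_e \ge 2$, Lemma~\ref{le:fagiolo-bianco} guarantees that no two cycles of $\mathcal{C}$ intersect, and the answer is $k_e + k_i$. Note that these last two cases are mutually exclusive, since Lemma~\ref{le:fagiolo-nero-extrovert}(ii) and Lemma~\ref{le:fagiolo-bianco}(ii) imply that $k_i \ge 2$ forces $k_e \le 1$ and vice versa. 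All of these sub-cases are handled in $O(1)$ using values read directly from $\mathcal{B}(G_f)$.

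The only remaining, and main, obstacle is the case $k_e = k_i = 1$. Here I would use the pointers $p_{\tt extr}(f^*) = C$ and $p_{\tt intr}(f^*) = \phi(C')$ provided by the Bend-Counter to access in $O(1)$ the two candidate cycles. If $C' = C$, then $C$ and $\phi(C)$ are edge-disjoint by the construction of $\phi$ in Lemma~\ref{le:3-extro-3-intro}, so they do not intersect and the answer is $2$. Otherwise Lemma~\ref{le:solo-due-fagioli} reduces the intersection test to deciding whether $C'$ is a descendant of $C$ in $T_f$, which is answered in $O(1)$ by the ancestor-query preprocessing. If $C'$ is a descendant of $C$, the answer is $0$; otherwise the answer is $2$. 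This completes the $O(1)$-time computation.
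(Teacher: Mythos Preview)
Your proposal is correct and follows essentially the same approach as the paper: identify the candidate set $\mathcal{C}$ of size $\delta_{\tt extr}(f^*)+\delta_{\tt intr}(f^*)$ via \cref{le:extrovert-extrovert-coloring}, \cref{ob:extrovert-introvert}, and \cref{co:introvert-extrovert-demanding}, then branch on the same four cases using \cref{le:fagiolo-nero-extrovert}, \cref{le:fagiolo-bianco}, and \cref{le:solo-due-fagioli} with an $O(1)$ ancestor test. Your explicit invocation of \cref{co:inclusion} to collapse the analysis into a clean all-or-nothing dichotomy is a slight presentational improvement over the paper, which instead carries the intersecting subset $\mathcal{C}^\times$ through the cases, but the underlying logic is identical.
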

\begin{proof}
	If $f^*$ coincides with the external face $f$ of the reference embedding, we have that $|D_{f^*}(G_{f^*})|=0$. Assume that $f^* \neq f$ and let $C^*=\tau(f^*)$. By \cref{le:extrovert-extrovert-coloring,co:introvert-extrovert-demanding} a demanding 3-extrovert cycle of $G_{f^*}$ may be either a demanding 3-extrovert cycle of $G_f$ or a demanding 3-introvert cycle of $G_f$. In order to compute $|D_{f^*}(G_{f^*})|$ we consider the set $\mathcal{C}$ of non-degenerate demanding 3-extrovert or 3-introvert cycles having $f^*$ as a leg face in $G_{f}$. Note that $|\mathcal{C}| = \delta_{\tt extr}(f^*) + \delta_{\tt intr}(f^*)$.	

	All cycles in $\mathcal{C}$ are non-degenerate demanding 3-extrovert cycles incident to the external face in $G_{f^*}$. Indeed, let $C'$ be a demanding 3-extrovert cycle of $G_f$ in $\mathcal{C}$; by \cref{ob:extrovert-introvert} and since $f^*$ is not a face of $G_f(C')$ we have that cycle $C'$ is a demanding 3-extrovert cycle also of $G_{f^*}$. Also, let $\phi(C'')$ be a demanding 3-introvert cycle of $G_f$ in $\mathcal{C}$; by \cref{ob:extrovert-introvert} and since $f^*$ is a face of $G_f(\phi(C''))$, we have that $\phi(C'')$ is also a demanding 3-extrovert cycle of $G_{f^*}$.
	Let $\mathcal{C}^\times \subseteq \mathcal{C}$ be the subset of cycles that are intersecting.
	From the discussion above it follows that $|D_{f^*}(G_{f^*})| = |\mathcal{C} - \mathcal{C}^\times|$ i.e., $|D_{f^*}(G_{f^*})| = \delta_{\tt extr}(f^*) + \delta_{\tt intr}(f^*) - |\mathcal{C}^\times|$. We distinguish between four cases.
	
    \begin{itemize}
		
		\item [(a)] $\delta_{\tt intr}(f^*) > 1$. In this case, by \cref{le:fagiolo-nero-extrovert} we have that any two demanding cycles having $f^*$ as a leg face intersect each other. It follows that $|\mathcal{C}^\times| = \delta_{\tt extr}(f^*) + \delta_{\tt intr}(f^*)$ and, hence, $|D_{f^*}(G_{f^*})| = 0$.

		\item [(b)] $\delta_{\tt extr}(f^*) > 1$. In this case, by \cref{le:fagiolo-bianco} we have that no two demanding cycles having $f^*$ as a leg face intersect each other. it follows that $|\mathcal{C}^\times| = 0$ and, hence, $|D_{f^*}(G_{f^*})| = \delta_{\tt intr}(f^*) + \delta_{\tt extr}(f^*)$.

		\item [(c)] $\delta_{\tt intr}(f^*) = \delta_{\tt extr}(f^*) = 1$. In this case pointers $p_{\tt intr}(f^*)$ and $p_{\tt extr}(f^*)$ refer to the 3-introvert cycle $\phi(C')$ and to the 3-extrovert cycle $C$ that have $f^*$ as a leg face, respectively.
		If the 3-extrovert cycle $C'$ corresponding to $\phi(C')$ according to \cref{le:3-extro-3-intro} coincides with $C$, then $C$ and $\phi(C')$ do not intersect. We have $|\mathcal{C}^\times| = 0$ and, hence, $|D_{f^*}(G_{f^*})| = \delta_{\tt intr}(f^*) + \delta_{\tt extr}(f^*) = 2$.
		Otherwise, by \cref{le:solo-due-fagioli} $C$ and $\phi(C')$ intersect if and only if $C'$ is a descendant of $C$ in $T_f$, which can be checked in constant time~\cite{ht-fafnca-84}. If $C'$ is a descendant of $C$ we have that $|\mathcal{C}^\times| = 2$ and $|D_{f^*}(G_{f^*})| = \delta_{\tt intr}(f^*) + \delta_{\tt extr}(f^*) - |\mathcal{C}^\times| = 0$, else $|\mathcal{C}^\times| = 0$ and $|D_{f^*}(G_{f^*})| = 2$.

		\item [(d)] $\delta_{\tt intr}(f^*) + \delta_{\tt extr}(f^*) \leq 1$. In this case trivially $|\mathcal{C}^\times| = 0$ and $|D_{f^*}(G_{f^*})| = \delta_{\tt intr}(f^*) + \delta_{\tt extr}(f^*)$.
		
	\end{itemize}
	
	The proof is concluded by observing that $\mathcal{B}(G_f)$ returns the values of $\tau(f^*)$, $\delta_{\tt intr}(f^*)$, $\delta_{\tt extr}(f^*)$, $p_{\tt intr}(f^*)$, and $p_{\tt extr}(f^*)$ in $O(1)$ time and that the above analysis can be executed in $O(1)$ time.
\end{proof}

\begin{lemma}\label{le:demanding-algo}
	Let $f^*$ be any face of~$G$. The number $|D(G_{f^*})|$ of non-degenerate non-intersecting demanding 3-extrovert cycles of $G_{f^*}$ can be computed in $O(1)$ time.
\end{lemma}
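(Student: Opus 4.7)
The plan is to derive a closed-form expression for $|D(G_{f^*})|$ purely in terms of $O(1)$-retrievable quantities stored by $\mathcal{B}(G_f)$ together with $|D_{f^*}(G_{f^*})|$ (which by Lemma~\ref{le:demanding-external-algo} is computable in $O(1)$). Writing $T(G_{f^*})$ for the total number of non-degenerate demanding 3-extrovert cycles of $G_{f^*}$ (ignoring the non-intersecting requirement) and $I(G_{f^*})$ for the number of such cycles that do intersect another one, I will establish $|D(G_{f^*})|=T(G_{f^*})-I(G_{f^*})$ and compute each term separately.

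To compute $T(G_{f^*})$, I will use Observation~\ref{ob:extrovert-introvert} to track which cycles of $G_f$ survive as demanding 3-extrovert cycles in $G_{f^*}$. Letting $C^*=\tau(f^*)$, a non-degenerate demanding 3-extrovert cycle $C$ of $G_f$ becomes 3-introvert in $G_{f^*}$ precisely when $C$ lies on the path from the root to $C^*$ in $T_f$; by Lemma~\ref{le:extrovert-extrovert-coloring} the others remain demanding 3-extrovert. Dually, a demanding 3-introvert cycle $\phi(C)$ of $G_f$ becomes 3-extrovert in $G_{f^*}$ iff $f^*$ is internal to $G_f(\phi(C))$, i.e., either $C$ is an ancestor of $C^*$ or $f^*$ is a leg face of $C$; by Corollary~\ref{co:introvert-extrovert-demanding} demandingness is preserved. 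These two sub-cases are disjoint (leg faces of $C$ are external to $G_f(C)$), so they contribute $\intr(C^*)$ and $\delta_{\tt intr}(f^*)$ respectively. Finally, since $G_f$ is a reference embedding, Property~\ref{pr:intersecting-3-extrovert} implies that $G_f$ has no intersecting non-degenerate demanding 3-extrovert cycles, so $T(G_f)=|D(G_f)|$; hence
\[
T(G_{f^*}) \;=\; |D(G_f)| \;-\; \extr(C^*) \;+\; \intr(C^*) \;+\; \delta_{\tt intr}(f^*).
\]

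For $I(G_{f^*})$, Property~\ref{pr:intersecting-3-extrovert} again says every intersecting non-degenerate demanding 3-extrovert cycle of $G_{f^*}$ must share an edge with $C_o(G_{f^*})=f^*$. The set of all demanding 3-extrovert cycles of $G_{f^*}$ incident to $f^*$ is precisely $\mathcal{C}$ from the proof of Lemma~\ref{le:demanding-external-algo}, of size $\delta_{\tt extr}(f^*)+\delta_{\tt intr}(f^*)$; subtracting the non-intersecting part yields $I(G_{f^*})=\delta_{\tt extr}(f^*)+\delta_{\tt intr}(f^*)-|D_{f^*}(G_{f^*})|$. Combining,
\[
|D(G_{f^*})| \;=\; |D(G_f)| \;-\; \extr(C^*) \;+\; \intr(C^*) \;-\; \delta_{\tt extr}(f^*) \;+\; |D_{f^*}(G_{f^*})|,
\]
with $C^*=\tau(f^*)$. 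Each summand is returned in $O(1)$: $|D(G_f)|$ is stored; $\tau(f^*)$, $\delta_{\tt extr}(f^*)$ are stored at $f^*$; $\extr(C^*)$, $\intr(C^*)$ are stored at $C^*$; and $|D_{f^*}(G_{f^*})|$ is obtained via Lemma~\ref{le:demanding-external-algo}. The boundary case $f^*=f$ is consistent because $\extr=\intr=\delta_{\tt extr}=|D_f(G_f)|=0$ in the reference embedding.

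The main obstacle is the bookkeeping around intersections: one must be confident that $T(G_f)=|D(G_f)|$, that $\intr(C^*)$ and $\delta_{\tt intr}(f^*)$ partition exactly the 3-introvert cycles that flip to 3-extrovert (no double counting, since a leg face of $C$ is never interior to $G_f(C)$), and that every intersecting demanding 3-extrovert cycle of $G_{f^*}$ is captured by Lemma~\ref{le:demanding-external-algo}'s analysis at $f^*$. All three rely crucially on the reference-embedding assumption together with Property~\ref{pr:intersecting-3-extrovert} and Observation~\ref{ob:extrovert-introvert}; once these are invoked, the formula and the $O(1)$ complexity follow immediately.
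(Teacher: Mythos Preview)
Your argument is correct and follows essentially the same route as the paper: track which demanding cycles flip between 3-extrovert and 3-introvert along the root-to-$\tau(f^*)$ path (the paper's Claim~\ref{cl:cases} is exactly your disjointness observation), then subtract the intersecting ones, all of which must touch $f^*$ by Property~\ref{pr:intersecting-3-extrovert}. The one presentational difference is that you invoke Lemma~\ref{le:demanding-external-algo} as a black box to compute the number of intersecting cycles via $I(G_{f^*})=\delta_{\tt extr}(f^*)+\delta_{\tt intr}(f^*)-|D_{f^*}(G_{f^*})|$, which collapses the paper's case analysis on $(\delta_{\tt extr}(f^*),\delta_{\tt intr}(f^*))$ into the single closed form
\[
|D(G_{f^*})| = |D(G_f)| - \extr(C^*) + \intr(C^*) - \delta_{\tt extr}(f^*) + |D_{f^*}(G_{f^*})|,
\]
whereas the paper recomputes $\Delta^\times$ directly by reproducing those cases; the two expressions are equivalent.
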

\begin{proof}
	If $f^* = f$ the value of $|D(G_{f})|$ is returned by $\mathcal{B}(G_f)$. Assume that $f^* \neq f$ and let $C^*=\tau(f^*)$.
	
	Let $\Pi_{C^*}$ be the path of $T_f$ from the root $C_o(G_f)$ to $C^*$.
    By \cref{ob:extrovert-introvert}, when we choose $f^*$ as external face, any 3-extrovert cycle $C$ becomes 3-introvert in $G_{f^*}$ if and only if $f^*$ is a face of $G_f(C)$. All such 3-extrovert cycles are those along path $\Pi_{C^*}$ (including $C^*$). Again by \cref{ob:extrovert-introvert}, any 3-introvert cycle $\phi(C)$ becomes 3-extrovert in $G_{f^*}$ if and only if $f^*$ is a face of $G_f(\phi(C))$.

    \begin{claim}\label{cl:cases}
    %The number of non-degenerate non-intersecting 3-introvert cycles that contain $f^*$ is the number of 3-introvert cycles along $\Pi_{C^*}$ plus the number of 3-introvert cycles having $f^*$ as a leg face.
    $G_f(\phi(C))$ contains $f^*$ if and only if one of the following cases holds:
    $(i)$ $\phi(C)$ is associated with a node of $\Pi_{C^*}$; or $(ii)$ $\phi(C)$ has $f^*$ as a leg face.
    Also, these two cases are mutually exclusive.
    \end{claim}

	\medskip
    {\em Proof of the claim.}
        If $\phi(C)$ is associated with a node $C$ of $\Pi_{C^*}$, we have that $C$ is an ancestor or coincides with $C^*$. It follows that $G_f(C)$ contains $f^*$. Since $G_f(C) \subset G_f(\phi(C))$, $f^*$ is a face also of $G_f(\phi(C))$.
        If $\phi(C)$ has $f^*$ as a leg face, $f^*$ is a face of $G(\phi(C))$ since the leg faces of a 3-introvert cycle are in the interior of the cycle.
        This proves the sufficiency.

    	Suppose now that $G_f(\phi(C))$ contains $f^*$. If $f^*$ is a leg face of $\phi(C)$ we are done. If is not a leg face of $\phi(C)$ then $f^*$ must be a face of $G_f(C)$ because the leg faces of $\phi(C)$ (and of $C$) are the only internal faces of $G_f(\phi(C))$ that are not faces of $G_f(C)$. It follows that either $C$ coincides with $C^*$ or $C$ is an ancestor of $C^*$ in $T_f$, i.e., $\phi(C)$ is associated with a node of $\Pi_{C*}$. This proves the necessity.

        Finally, the two cases are mutually exclusive. In fact, if $\phi(C)$ is associated with a node $C$ of $\Pi_{C^*}$, then $f^*$ is not a leg face of $\phi(C)$ because $f^*$ is in the interior of $C$, while the leg faces of $\phi(C)$ are in the exterior of~$C$.
    	This concludes the proof of the claim.
    	\medskip

    We are now ready to compute the number $|D(G_{f^*})|$ of non-degenerate non-intersecting demanding 3-extrovert cycles of $G_{f^*}$. Observe that, by \cref{ob:extrovert-introvert} and \cref{cl:cases}, $|D(G_{f^*})|$ can be computed from $|D(G_{f})|$ by: subtracting $\extr(C^*)$; adding the number $\intr(C^*) + \delta_{\tt intr}(f^*)$ of non-degenerate demanding 3-introvert cycles that become 3-extrovert; and subtracting the number $\Delta^\times$ of non-degenerate demanding 3-extrovert cycles of $G_{f^*}$ that intersect. Namely, $|D(G_{f^*})|=|D(G_{f})|-\extr(C^*)+\intr(C^*)+\delta_{\tt intr}(f^*)-\Delta^\times$. In order to compute $\Delta^\times$ we make the following remarks.

    Consider the non-degenerate demanding 3-extrovert cycles of $G_f$ that are not along $\Pi_{C^*}$. By \cref{le:extrovert-extrovert-coloring} they are also demanding 3-extrovert cycles in $G_{f^*}$. No two of them intersect each other in $G_{f^*}$ because the non-degenerate 3-extrovert cycles in $G_f$ have no edge on the external face and  by \cref{pr:intersecting-3-extrovert}.

    Consider a demanding 3-introvert cycle $\phi(C')$ associated with a node $C'$ of $\Pi_{C^*}$. We have that $\phi(C')$ is a non-degenerate demanding 3-extrovert cycle in $G_{f^*}$ and it does not intersect any other non-degenerate demanding 3-extrovert cycle of $G_{f^*}$. Indeed, by \cref{pr:intersecting-3-extrovert} any two non-degenerate demanding 3-extrovert cycles of $G_{f^*}$ that intersect have at least one edge on the external face $f^*$, that is, $f^*$ is a leg face of both cycles. However, since $\phi(C')$ associated with a node $C'$ of $\Pi_{C^*}$, by \cref{cl:cases} $f^*$ is not a leg face of $\phi(C')$.

    By the above two remarks it follows that any pair of non-degenerate demanding 3-extrovert cycles of $G_{f^*}$ that intersect includes a non-degenerate demanding 3-introvert cycle of $G_f$ that has $f^*$ as a leg face.
    In order to establish the number of such pairs we consider the following cases based on the values of $\delta_{\tt intr}(f^*)$:

    \begin{itemize}
    %%%% $\delta_i(f^*) = 0$	
    %%%%
    \item $\delta_{\tt intr}(f^*) = 0$. We trivially have $\Delta^\times = 0$. Hence $|D(G_{f^*})|=|D(G_{f})|-\extr(C^*)+\intr(C^*)$.

	%%%% $\delta_i(f^*) = 1$	
	%%%%
	\item $\delta_{\tt intr}(f^*) = 1$. We have three subcases:
		\begin{itemize}
			\item $\delta_{\tt extr}(f^*) = 0$. There is only one non-degenerate demanding 3-extrovert cycle of $G_{f^*}$ having $f^*$ as a leg face and, hence $\Delta^\times = 0$. Hence $|D(G_{f^*})|=|D(G_{f})|-\extr(C^*)+\intr(C^*)+1$.
			
			\item $\delta_{\tt extr}(f^*) = 1$. Let $\phi(C'')$ be the 3-introvert cycle referred by pointer $p_{\tt intr}(f^*)$ and let $C'''$ be the 3-extrovert cycle referred by pointer $p_{\tt extr}(f^*)$. By \cref{le:solo-due-fagioli}, $\phi(C'')$ and $C'''$ intersect if and only if  $C''$ is a descendant of $C'''$ in $T_f$. The two cycles belong to $D(G_{f^*})$ only if they do not intersect. Hence, if $C''$ is a descendant of $C'''$ we have $\Delta^\times = 2$ and $|D(G_{f^*})|=|D(G_{f})|-\extr(C^*)+\intr(C^*)-1$, else $\Delta^\times = 0$ and $|D(G_{f^*})|=|D(G_{f})|-\extr(C^*)+\intr(C^*)+1$.
			
			\item $\delta_{\tt extr}(f^*) > 1$. By \cref{le:fagiolo-bianco} no two cycles having $f^*$ as a leg face intersect and $\Delta^\times = 0$. Hence $|D(G_{f^*})|=|D(G_{f})|-\extr(C^*)+\intr(C^*)+1$.
		\end{itemize}

    %%%% $\delta_i(f^*) > 1$	
    %%%%
    \item $\delta_{\tt intr}(f^*) > 1$, by \cref{le:fagiolo-nero-extrovert} all non-degenerate demanding 3-introvert and 3-extrovert cycles having $f^*$ as a leg face intersect. It follows that $\Delta^\times = \delta_{\tt intr}(f^*) + \delta_{\tt extr}(f^*)$. Hence $|D(G_{f^*})|=|D(G_{f})|-\extr(C^*)+\intr(C^*) + \delta_{\tt intr}(f^*) - \Delta^\times = |D(G_{f})|-\extr(C^*)+\intr(C^*) - \delta_{\tt extr}(f^*)$.
    \end{itemize}

    \medskip

	Concerning the time complexity, $\intr(f^*)$, $\extr(f^*)$, $\delta_{\tt intr}(f^*)$, $\delta_{\tt extr}(f^*)$, $p_{\tt intr}(f^*)$, and $p_{\tt extr}(f^*)$ can be accessed in $O(1)$ time. Also, $\mathcal{B}(G_f)$ can determine in constant time whether two nodes of $T_f$ are in a descendant-ancestor relationship. Hence, $|D(G_{f^*})|$ can be computed in $O(1)$ time in all the cases above.
\end{proof}

\begin{figure}[tb]
	\centering
	\hfil
	\subfloat[]{\label{fi:co-flex-a}\includegraphics[width=0.25\columnwidth]{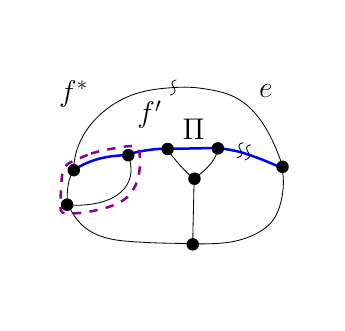}}
	\hfil
	\subfloat[]{\label{fi:co-flex-b}\includegraphics[width=0.25\columnwidth]{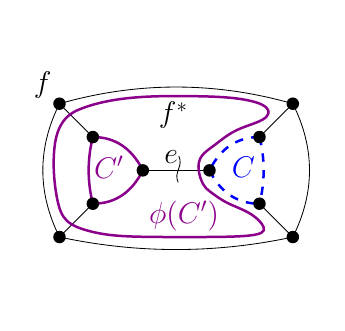}}
	\hfil
	\subfloat[]{\label{fi:co-flex-c}\includegraphics[width=0.25\columnwidth]{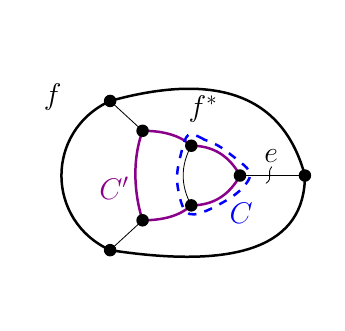}}
	\hfil
	\subfloat[]{\label{fi:flexible_mf_2_reloaded}\includegraphics[width=0.25\columnwidth]{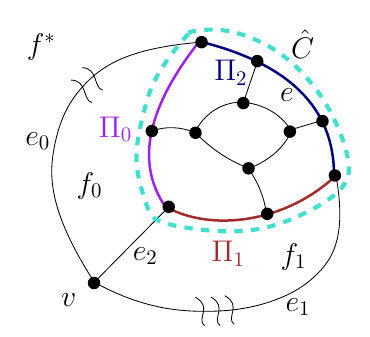}}
	\hfil
	\caption{(a)-(c) Illustration for the proof of \cref{le:co-flex-algo}. (d) Illustration for the proof of \cref{le:flex-f-algo}.}\label{fi:co-flex}
\end{figure}

\begin{lemma}\label{le:co-flex-algo}
Let $f^*$ be any face of~$G$, and let $e$ be a flexible edge of $f^*$. The value $\coflex(e)$ can be computed in $O(1)$ time.
\end{lemma}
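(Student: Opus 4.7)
The plan is to decompose $\coflex(e)$ into its two defining contributions and show that each can be retrieved from $\mathcal{B}(G_f)$ in $O(1)$ time. First, identify in $O(1)$ time the face $f'$ that shares $e$ with $f^*$ (this is immediate from the edge--face incidences stored with $G$). Since the mirror path $\Pi_e$ is $\partial f' \setminus \{e\}$, the flexibility contribution to $\coflex(e)$ equals $s(f') - \flex(e)$, which is accessed in $O(1)$ time via $\mathcal{B}(G_f)$.

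The second contribution is the number of cycles of $D(G_{f^*}) \setminus D_{f^*}(G_{f^*})$ that share some edges with $\Pi_e$. I would first establish the following characterization: a cycle $C \in D(G_{f^*})$ shares edges with $\Pi_e$ if and only if $f'$ is a leg face of $C$ in $G_{f^*}$ (since any two edges of $C$ that lie on $\partial f'$ must belong to a contour path of $C$ at $f'$). Moreover, such a $C$ has \emph{both} $f'$ and $f^*$ as leg faces if and only if one of its legs lies on $\partial f' \cap \partial f^* = \{e\}$, i.e., if and only if $e$ is a leg of $C$. Hence the required count equals $N_1 - N_2$, where $N_1$ is the number of cycles of $D(G_{f^*})$ having $f'$ as a leg face, and $N_2$ is the number of cycles of $D(G_{f^*})$ having $e$ as a leg.

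To compute $N_1$, I would adapt the case analysis used in the proof of \cref{le:demanding-external-algo} to the face $f'$ instead of $f^*$. By \cref{le:extrovert-extrovert-coloring,co:introvert-extrovert-demanding}, the non-degenerate demanding cycles of $G_{f^*}$ incident to $f'$ are exactly the $\delta_{\tt extr}(f') + \delta_{\tt intr}(f')$ cycles that are demanding (either 3-extrovert or 3-introvert) in $G_f$ and have $f'$ as a leg face; the only subtlety is to subtract those pairs that intersect in $G_{f^*}$. The intersection pattern is fully governed by \cref{le:fagiolo-nero-extrovert,le:fagiolo-bianco,le:solo-due-fagioli}, leading to a case analysis on the values $\delta_{\tt extr}(f')$, $\delta_{\tt intr}(f')$ identical in structure to the one in \cref{le:demanding-external-algo}, with the pointers $p_{\tt extr}(f')$ and $p_{\tt intr}(f')$ and the $O(1)$-time ancestor test on $T_f$ resolving the ambiguous case $\delta_{\tt extr}(f') = \delta_{\tt intr}(f') = 1$. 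To compute $N_2$, I would use the fact that a demanding cycle of $G_{f^*}$ with $e$ as a leg must have $f^*$ as one of its leg faces and hence must have been a demanding cycle of $G_f$ with $e$ as a leg; by \cref{le:lambda}, the number of such cycles is $\lambda_{\tt extr}(e) + \lambda_{\tt intr}(e) \leq 2$, so a brute-force check of whether these at most two cycles remain non-intersecting in $G_{f^*}$ (again via \cref{le:fagiolo-nero-extrovert,le:fagiolo-bianco,le:solo-due-fagioli}) takes $O(1)$ time.

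The main obstacle is the correctness of the case analysis for $N_1$, because both the cycles counted by $\delta_{\tt extr}(f')$ and those counted by $\delta_{\tt intr}(f')$ contribute to demanding 3-extrovert cycles of $G_{f^*}$, and intersections can arise between them when the external face is moved to $f^*$. However, the crucial fact that at most two cycles can share a flexible leg (\cref{le:lambda}), combined with the dichotomy provided by \cref{le:fagiolo-nero-extrovert,le:fagiolo-bianco} and the ancestor--descendant test of \cref{le:solo-due-fagioli}, ensures that only a constant number of configurations need to be inspected. Putting these pieces together, $\coflex(e)$ is obtained as $s(f') - \flex(e) + N_1 - N_2$ in $O(1)$ total time.
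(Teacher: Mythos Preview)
Your proposal is correct and follows essentially the same approach as the paper: decompose $\coflex(e)$ into the flexibility part $s(f')-\flex(e)$ and the cycle-count part $N_1-N_2$, then resolve $N_2$ via $\lambda_{\tt extr}(e)+\lambda_{\tt intr}(e)\le 2$ and the case analysis furnished by \cref{le:fagiolo-nero-extrovert,le:fagiolo-bianco,le:solo-due-fagioli}. The one simplification you miss is that your $N_1$ is exactly $|D_{f'}(G_{f'})|$ (since every demanding cycle of $G_{f^*}$ with $f'$ as leg face is 3-extrovert in $G_{f'}$ and vice versa, $e$ being flexible), so instead of ``adapting the case analysis'' of \cref{le:demanding-external-algo} you can invoke that lemma verbatim with $f'$ in place of $f^*$; this is what the paper does.
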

\begin{proof}
Let $f'$ be the internal face of $G_{f^*}$ that shares $e$ with $f^*$. Let $\Pi$ be the mirror path of $e$ (i.e., the path consisting of all edges along the boundary of $f'$ except $e$).  We recall that the co-flexibility of $e$ is $\coflex(e) = \alpha + \beta - \gamma$, where:
%\begin{itemize}
%\item
$\alpha$ is the sum of the flexibilities overall the flexible edges of $\Pi$;
%\item
$\beta$ is  the number of non-degenerate non-intersecting demanding 3-extrovert cycles sharing edges with~$\Pi$; and
%\item
$\gamma$ is the number of non-degenerate non-intersecting demanding 3-extrovert cycles sharing edges both with $\Pi$ and with the external face~$f^*$.
%\end{itemize}
For example, in \cref{fi:co-flex-a} $\alpha = 2$, $\beta = 2$, and $\gamma = 1$.

Observe that $\alpha = s(f') - \flex(e)$ and that, given $\mathcal{B}(G_f)$, $s(f')$ is accessible in $O(1)$ time. Also, note that $\beta = |D_{f'}(G_{f'})|$: If we choose $f'$ as a new external face, all non-degenerate non-intersecting demanding 3-extrovert cycles that have $f'$ as a leg face cannot include $e$ (which is a flexible edge) and must share edges with $\Pi$. Thus, the value $\beta$ can be returned in $O(1)$-time by \cref{le:demanding-external-algo}.

Concerning the value $\gamma$, observe that it corresponds to the number of non-degenerate non-intersecting demanding 3-extrovert cycles that have $e$ as a leg in $G_{f^*}$. Let $C$ be one such cycle: By \cref{le:extrovert-extrovert-coloring,co:introvert-extrovert-demanding}, $C$ is a non-degenerate demanding (3-introvert or 3-extrovert) cycle also in $G_f$. We use the values $\lambda_{\tt extr}(e)$ and $\lambda_{\tt intr}(e)$ to determine $\gamma$. By \cref{le:lambda}, $\lambda_{\tt extr}(e) + \lambda_{\tt intr}(e) \leq 2$. We consider these cases.

\begin{itemize}

\item $\lambda_{\tt intr}(e) = 2$. In this case $\lambda_{\tt extr}(e)=0$. Let $\phi(C_1)$ and $\phi(C_2)$ be the two non-degenerate demanding 3-introvert cycles of $G_f$ that have the flexible edge $e$ as a common leg. Observe that sharing a leg implies sharing a leg face and, by \cref{le:fagiolo-nero-extrovert}, $\phi(C_1)$ and $\phi(C_2)$ intersect. Hence in this case $\gamma=0$.

\item $\lambda_{\tt intr}(e)+\lambda_{\tt extr}(e) = 1$.  We have that $e$ is a leg of exactly one non-degenerate demanding 3-extrovert cycle in $G_{f^*}$. Let $C$ be such cycle. Since $e$ is both a leg of $C$ and an edge along the boundary of $f^*$, we have that face $f^*$ is a leg face of $C$. We have to check whether $C$ intersects some other non-degenerate demanding 3-extrovert cycle $C'$ in $G_{f^*}$. If $C$ and $C'$ intersected, by Properties~($a$) and~($b$) of \cref{le:inclusion} all edges of the boundary of $f^*$ would belong to either $C$ or $C'$ with the possible exception of a common leg. Since $e$ is not a leg shared by $C$ and $C'$ and since $e$ does not belong to $C$, $e$ must be an edge of $C'$, which is impossible since $C'$ is demanding and $e$ is flexible. Therefore, $C$ is a non-degenerate non-intersecting demanding 3-extrovert cycle and $\gamma=1$.

\item $\lambda_{\tt intr}(e) = \lambda_{\tt extr}(e) = 1$. Let $C$ and $\phi(C')$ be the non-degenerate demanding 3-extrovert cycle and the non-degenerate demanding 3-introvert cycle of $G_f$, respectively, that share the leg $e$. Since both $C$ and $\phi(C')$ share leg $e$, they also share the leg face $f^*$. If $C$ were in the exterior of $\phi(C')$ a contour path of $C$ would be contained into a contour path of $\phi(C')$ (see, for example, \cref{fi:co-flex-b}). However, this would contradict the fact that $C$ and $\phi(C')$ are both demanding. Hence, $C$ is in the interior of $\phi(C')$.
Observe that $e$ is a leg also of the 3-extrovert cycle $C'$ corresponding to $\phi(C')$ and thus $C$ and $C'$ also share the leg $e$. Either $C$ is also in the interior of $C'$ or it coincides with $C'$ (for example in \cref{fi:co-flex-c} $C$ is in the interior of $C'$). In both cases $C'$ is not a descendant of $C$ in $T_f$ and, by \cref{le:solo-due-fagioli}, $C$ and $\phi(C')$ do not intersect. By \cref{le:intersecting-demanding-transitive} there cannot be any other non-degenerate demanding 3-extrovert cycles of $G_{f^*}$ intersecting either $C$ or $\phi(C')$. Hence $\gamma=2$.

\item $\lambda_{\tt extr}(e) = 2$. In this case $\lambda_{\tt intr}(e) = 0$. Let $C_1$ and $C_2$ be the two non-degenerate demanding 3-extrovert cycles that have the flexible edge $e$ as a common leg (and hence have $f^*$ as a common leg face). By \cref{le:fagiolo-bianco} they do no intersect each other and they do not intersect any other non-degenerate demanding 3-extrovert or 3-introvert cycle in $G_f$. It follows that $\gamma = 2$.

\end{itemize}

Since the values $\lambda_{\tt intr}(e)$ and $\lambda_{\tt extr}(e)$ are returned in $O(1)$ time by $\mathcal{B}(G_f)$, it follows that also $\gamma$ can be computed in $O(1)$ time and thus $\coflex(e)$ can be computed in $O(1)$ time.
\end{proof}

\begin{lemma}\label{le:co-flex-v-algo}
	Let $f^*$ be any face of~$G$, and let $v$ be a vertex incident to $f^*$. It is possible to test if $\coflex(v)>0$ in $O(1)$ time.
\end{lemma}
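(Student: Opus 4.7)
The plan is to mirror the argument of \cref{le:co-flex-algo}, replacing the role of a single flexible edge by the pair of edges of $f^*$ incident to $v$. First I would identify the two edges $e_0,e_1$ of $f^*$ incident to $v$ and the third edge $e_2$ at $v$, together with the two internal faces $f'_0,f'_1$ of $G_{f^*}$ that share $e_0$ and $e_1$ with $f^*$ respectively. Since $G$ is triconnected cubic, the faces $f^*,f'_0,f'_1$ are pairwise distinct, both $f'_0$ and $f'_1$ contain $e_2$, and the only edge shared by $f'_0$ and $f'_1$ is $e_2$. Hence $\Pi_v=(\Pi_{e_0}\cup\Pi_{e_1})\setminus\{e_2\}$ with $\Pi_{e_0}\cap\Pi_{e_1}=\{e_2\}$. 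Writing $\coflex(v)=\alpha_v+\beta_v$, where $\alpha_v$ is the sum of flexibilities along $\Pi_v$ and $\beta_v$ is the number of cycles of $D(G_{f^*})\setminus D_{f^*}(G_{f^*})$ sharing an edge with $\Pi_v$, it suffices to test $\alpha_v>0$ and $\beta_v>0$ separately in $O(1)$ time.

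For $\alpha_v>0$, the Bend-Counter returns $s(f'_0),s(f'_1),\flex(e_0),\flex(e_1),\flex(e_2)$ in $O(1)$ time, and by the decomposition of $\Pi_v$ above the sum of flexibilities along $\Pi_v$ equals $s(f'_0)+s(f'_1)-\flex(e_0)-\flex(e_1)-\flex(e_2)$; its positivity is thus decided in $O(1)$ time.

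For $\beta_v>0$, any witness cycle $C$ must have $f'_0$ or $f'_1$ as a leg face in $G_{f^*}$ (because the shared edge lies on $\partial f'_0\cup\partial f'_1$) while not having $f^*$ as a leg face (because $C\notin D_{f^*}(G_{f^*})$). I would therefore apply the case analysis of \cref{le:co-flex-algo} independently to the pairs $(e_0,f'_0)$ and $(e_1,f'_1)$, using $\delta_{\tt extr}(f'_i),\delta_{\tt intr}(f'_i),p_{\tt extr}(f'_i),p_{\tt intr}(f'_i)$ and \cref{le:demanding-external-algo} as an $O(1)$ subroutine, and take a logical OR of the two outcomes. The candidate witnesses must then be filtered to exclude those whose only contribution to $\Pi_{e_i}$ is the edge $e_2$; the bound $\lambda_{\tt extr}(e_2)+\lambda_{\tt intr}(e_2)\le 2$ from \cref{le:lambda}, together with the pointers stored by $\mathcal{B}(G_f)$, identifies in $O(1)$ time the at most two demanding cycles of $G_f$ having $e_2$ as a leg, so these are discarded from the count whenever they happen to be the only candidates.

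The hard part will be the bookkeeping around the shared edge $e_2$: a cycle sharing only $\{e_2\}$ with both $\Pi_{e_0}$ and $\Pi_{e_1}$ is not a witness of $\coflex(v)>0$ and must be discarded, while a cycle sharing $e_2$ together with another edge of $\Pi_{e_0}\cup\Pi_{e_1}$ is a witness and must be counted exactly once. Using the invariants proved in \cref{le:fagiolo-nero-extrovert,le:fagiolo-bianco,le:solo-due-fagioli,le:lambda} the set of locally-relevant demanding cycles at each of $f'_0$ and $f'_1$ is of bounded size and is exactly reconstructible from the Bend-Counter pointers in $O(1)$ time, so the full case analysis resolves in a constant number of lookups and the test runs in $O(1)$ time overall.
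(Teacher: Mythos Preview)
Your approach is workable in spirit but substantially more complicated than the paper's, and it contains a couple of slips.

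The paper's proof is a one-liner: with $e_0,e_1$ the two edges of $f^*$ at $v$ and $e_2$ the third edge at $v$, it simply tests whether $\coflex(e_0)+\coflex(e_1)-2\flex(e_2)>0$, invoking \cref{le:co-flex-algo} on $e_0$ and $e_1$. The point is that this quantity equals $\alpha_v+\beta_0+\beta_1$ (in your notation), and $\beta_0+\beta_1>0$ is equivalent to $\beta_v>0$. You never need to worry about cycles whose only intersection with $\Pi_{e_0}\cup\Pi_{e_1}$ is $e_2$: any cycle $C\in D(G_{f^*})\setminus D_{f^*}(G_{f^*})$ containing $e_2$ would pass through $v$ and hence contain one of $e_0,e_1$, putting an edge of $C$ on $f^*$ and forcing $C\in D_{f^*}(G_{f^*})$---a contradiction. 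So the ``hard part'' you anticipate is vacuous.

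Two concrete issues in your write-up: first, your formula for $\alpha_v$ is off---$e_2$ lies on both $f'_0$ and $f'_1$, so you must subtract $2\flex(e_2)$, not $\flex(e_2)$; as written your test can give a false positive when $\alpha_v=0$ but $e_2$ is flexible. Second, your filtering plan around $e_2$ is confused: $\lambda_{\tt extr}(e_2)$ and $\lambda_{\tt intr}(e_2)$ count demanding cycles having $e_2$ as a \emph{leg}, not cycles containing $e_2$ as an \emph{edge}, and moreover the $\lambda$ values are stored only for flexible edges. Since no filtering is actually needed (by the argument above), dropping that part and fixing the $\alpha_v$ formula would make your approach correct---but at that point you have essentially rediscovered the paper's reduction to \cref{le:co-flex-algo}.
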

\begin{proof}
	Let $e_0$ and $e_1$ be the two edges incident to both $v$ and $f^*$ and let $e_2$ be the other edge incident to $e_2$. We have $\Pi_v=\Pi_{e_0}\cup \Pi_{e_1}\setminus e_2$. Also, $e_2\in \Pi_{e_0}$ and $e_2\in \Pi_{e_1}$. Hence, in order to test if $\coflex(v)>0$, it suffices to test if $\coflex(e_0)+\coflex(e_1)-2\flex(e_2)>0$. This can be done in $O(1)$ by \cref{le:co-flex-algo} and since $\flex(e_2)$ can be accessed in $O(1)$ time.
\end{proof}

\begin{lemma}\label{le:flex-f-algo}
    Let $f^*$ be any face of~$G$. The value  $\flex(f^*)$ can be computed in $O(1)$ time.
\end{lemma}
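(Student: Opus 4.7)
The plan is to apply the closed-form expression for $\flex(f^*)$ given by \cref{th:fixed-embedding-min-bend} and to show that every quantity appearing in it can be retrieved or computed from $\mathcal{B}(G_f)$ in $O(1)$ time. The Bend-Counter already stores $m(f^*)$, $s(f^*)$, $p_0(f^*)$, $p_1(f^*)$, $\delta_{\tt extr}(f^*)$, and $\delta_{\tt intr}(f^*)$, all accessible in $O(1)$, and from each $p_i(f^*)$ I can read the endpoints and flexibility of the corresponding flexible edge in $O(1)$. If $m(f^*)=0$ then $\flex(f^*)=0$; if $m(f^*)\geq 3$ then $\flex(f^*)=s(f^*)$. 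If $m(f^*)=1$, letting $e_0=p_0(f^*)=(u,v)$, I invoke \cref{le:co-flex-algo} to compute $\coflex(e_0)$ and, in the subcase $\flex(e_0)=4$, \cref{le:co-flex-v-algo} to decide whether $\coflex(u)>0$ and $\coflex(v)>0$; the correct formula from \cref{th:fixed-embedding-min-bend} is then evaluated in $O(1)$.

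For $m(f^*)=2$, I compare the endpoints of $e_0=p_0(f^*)$ and $e_1=p_1(f^*)$ in $O(1)$ to determine whether they share a common vertex $v$. If they are not adjacent, or if they are adjacent but the degenerate 3-cycle $\hat C=C_o(G\setminus v)$ is not demanding, I apply the formula of \cref{le:fixed-embedding-min-bend-mf2-part2}, which involves only $\flex(e_0)$, $\flex(e_1)$ and, in the subcase $\flex(e_0)\geq 3,\ \flex(e_1)=1$, the value $\coflex(e_0)$ from \cref{le:co-flex-algo}. If $e_0$ and $e_1$ are adjacent and $\hat C$ is demanding, I use $\flex(f^*)=\min\{3,\flex(e_0)+\flex(e_1)\}$. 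Hence the whole task reduces to deciding in $O(1)$ whether $\hat C$ is demanding in $G_{f^*}$, which is the main obstacle.

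The heart of the proof is the following characterization, which I plan to prove and then test in $O(1)$: $\hat C$ is demanding in $G_{f^*}$ if and only if $\coflex(v)=0$ and $\delta_{\tt extr}(f^*)+\delta_{\tt intr}(f^*)=0$. Let $e_2$ be the third edge at $v$ and let $f'_0, f'_1$ be the faces of $G_{f^*}$ sharing $e_0, e_1$ with $f^*$ (both incident to $e_2$). Then $\hat C$ decomposes into three contour paths $P_{ab}\subset f^*$, $P_{ac}\subset f'_0$, $P_{bc}\subset f'_1$, with $P_{ac}\cup P_{bc}=\Pi_v$. By unwinding the \textsc{3-Extrovert Coloring Rule} and tracing green contour paths of child-cycles down to demanding descendants, I will show that $\hat C$ is demanding if and only if $\hat C$ has no flexible edge and no demanding 3-extrovert cycle of $G_{f^*}$ in the interior of $\hat C$ shares edges with $\hat C$. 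Since $m(f^*)=2$ forces $P_{ab}$ to be free of flexible edges, the first condition reduces to $\fx(\Pi_v)=0$, which is subsumed by $\coflex(v)=0$. For the second condition, a demanding cycle sharing edges with $\hat C$ either has $f^*$ as a leg face (if it shares edges with $P_{ab}\subset f^*$) or shares edges with $\Pi_v$; by \cref{pr:intersecting-3-extrovert} every intersecting demanding cycle of $G_{f^*}$ has $f^*$ as a leg face, so the ``not on $f^*$'' contribution is exactly what the set $D(G_{f^*})\setminus D_{f^*}(G_{f^*})$ in the definition of $\coflex(v)$ counts, while by \cref{le:extrovert-extrovert-coloring} and \cref{co:introvert-extrovert-demanding} the demanding cycles of $G_{f^*}$ with $f^*$ as a leg face correspond to the demanding 3-extrovert and 3-introvert cycles of $G_f$ with $f^*$ as a leg face, counted by $\delta_{\tt extr}(f^*)+\delta_{\tt intr}(f^*)$. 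Since $\coflex(v)=0$ is decidable in $O(1)$ via \cref{le:co-flex-v-algo} and $\delta_{\tt extr}(f^*),\delta_{\tt intr}(f^*)$ are read directly from $\mathcal{B}(G_f)$, the characterization is testable in $O(1)$, and assembling all cases yields an $O(1)$-time computation of $\flex(f^*)$.
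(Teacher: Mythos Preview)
Your proposal is correct and follows essentially the same case analysis as the paper's proof. The only notable difference is in the $m(f^*)=2$ adjacent-edges case: where the paper tests whether $\hat C$ is demanding by examining each of the three leg faces $f^*,f_0,f_1$ separately (checking $s(f_i)-\flex(e_i)-\flex(e_2)>0$ and $|D_{f_i}(G_{f_i})|>0$ via \cref{le:demanding-external-algo}), you package the checks on $f_0$ and $f_1$ into the single condition $\coflex(v)=0$ and replace the $|D_{f^*}(G_{f^*})|=0$ test with the equivalent $\delta_{\tt extr}(f^*)+\delta_{\tt intr}(f^*)=0$; both routes rest on the same primitives and on the (implicit) fact that no two demanding $3$-extrovert cycles of $G_{f^*}$ intersect when $m(f^*)=2$, which you should state explicitly since it is what makes $\delta_{\tt extr}(f^*)+\delta_{\tt intr}(f^*)$ coincide with $|D_{f^*}(G_{f^*})|$.
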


\begin{proof}
    According to \cref{th:fixed-embedding-min-bend} the value $\flex(f^*)$ depends on the value $m(f^*)$ which is stored in $\mathcal{B}(G_f)$. If $m(f^*) = 0$ or $m(f^*) \geq 3$, we have $\flex(f^*)=s(f^*)$ and, since $\mathcal{B}(G_f)$ returns $s(f^*)$ in $O(1)$ time, the statement follows.
    If $m(f^*) = 1$, $\mathcal{B}(G_f)$ returns in $O(1)$ time a pointer $p_0(f^*)$ to the unique flexible edge $e_0=(u,v)$ of $f^*$. By using \cref{le:co-flex-algo} and \cref{le:co-flex-v-algo} we can compute $\coflex(e_0)$ and we can test if both $\coflex(u)>0$ and $\coflex(v)>0$ in $O(1)$. It remains to show how to compute $\flex(f^*)$ when $m(f^*)=2$. If $m(f^*)=2$, $\mathcal{B}(G_f)$ returns in $O(1)$ time the value $s(f^*)$ and the two flexible edges $e_0$ and $e_1$ of $f^*$ referred by the two pointers $p_0(f^*)$ and $p_1(f^*)$, respectively. We recall that if $e_0$ and $e_1$ share a vertex $v$, there is degenerate 3-extrovert cycle $\hat{C}$ whose legs are all incident to $v$ (see, e.g. \cref{fi:flexible_mf_2_reloaded}). As stated by \cref{th:fixed-embedding-min-bend}, the value $\flex(f^*)$ depends on whether $\hat{C}$ exists and, if so, on whether it is demanding or not. Clearly, determining whether $e_0$ and $e_1$ share a vertex can be executed in $O(1)$ time. If they do not share a vertex, $\flex(f^*)$ is determined as stated in \cref{le:fixed-embedding-min-bend-mf2-part2}, that is by comparing $\flex(e_0)$ with $\flex(e_1)$ and possibly computing the co-flexibility of one of the two flexible edges. Since all these operations can be executed in $O(1)$ time (see also \cref{le:co-flex-algo}), it follows that  $\flex(f^*)$ can be computed in $O(1)$ time when $e_0$ and $e_1$ do not share a vertex.

Assume now that $m(f^*)=2$ and that $e_0$ and $e_1$ share a vertex $v$. Refer to \cref{fi:flexible_mf_2_reloaded}. Let $f_0$ and $f_1$ be the internal faces of $G_{f^*}$ incident to $e_0$ and to $e_1$, respectively. Let $e_2$ be the edge shared by $f_0$ and $f_1$ and let $\Pi_0$, $\Pi_1$, and $\Pi_2$ be the three contour paths of $\hat{C}$ incident to $f_0$, $f_1$, and $f^*$, respectively. To test whether $\hat{C}$ is demanding, we start with the following two remarks.

\begin{itemize}

  \item[R1:] No demanding 3-extrovert cycle of $G_{f^*}$ different from $\hat{C}$ is degenerate: By \cref{pr:degenerate}, every degenerate 3-extrovert cycle of $G_{f^*}$ includes all edges of the boundary of $f^*$ except two edges that are two of the three legs of the cycle. Since $e_0$ and $e_1$ are flexible and are legs of $\hat{C}$, any degenerate 3-extrovert cycle different from $\hat{C}$ must contain either $e_0$ or $e_1$ and it cannot be demanding.

  \item[R2:] No two demanding 3-extrovert cycles of $G_{f^*}$ intersect each other: Suppose there existed two demanding 3-extrovert cycles $C$ and $C'$ in $G_{f^*}$ such that $C$ and $C'$ are intersecting. By Properties~$(a)$ and $(b)$ of \cref{le:inclusion} at least one of the two flexible edges $e_0$ or $e_1$ belongs to either $C$ or $C'$, which contradicts the assumption that the two cycles are demanding.

\end{itemize}

We are now ready to show how to efficiently test whether $\hat{C}$ is demanding: $\hat{C}$ is demanding if and only if none of $\Pi_0$, $\Pi_1$, and $\Pi_2$ contains either a flexible edge or an edge of some demanding 3-extrovert cycle of $G_{f^*}$. We perform three tests, one for each contour path, as follows.

Path $\Pi_0$ contains a flexible edge if and only if $s(f_0) - \flex(e_0) - \flex(e_2) > 0$. By Remarks R1 and R2 and by the fact that $e_0$ is flexible, we have that
a demanding 3-extrovert cycle $C$ shares edges with $\Pi_0$ if and only if $C \in D_{f_0}(G_{f_0})$. Indeed, if we choose $f_0$ as a new external face, every demanding 3-extrovert cycle that has some edges along the boundary of $f_0$ shares these edges with $\Pi_0$.
Hence, there exists a demanding 3-extrovert cycle $C$ sharing edges with $\Pi_0$ if and only if $|D_{f_0}(G_{f_0})| > 0$, which can be checked in $O(1)$ time by means of \cref{le:demanding-external-algo}.
%
%If $\Pi_0$ shares edges with a demanding 3-extrovert cycle $C$, by the two remarks above and by the fact that $e_0$ is flexible, we have that $C \in D_{f_0}(G_{f_0})$ (if we choose $f_0$ as a new external face, every demanding 3-extrovert cycle that has some edges along the boundary of $f_0$ shares these edges with $\Pi_0$). Hence, we can test whether there exists a demanding 3-extrovert cycle $C$ sharing edges with $\Pi_0$ by checking whether $|D_{f_0}(G_{f_0})| > 0$, which can be done in $O(1)$ time by means of \cref{le:demanding-external-algo}.
%
Similarly, we test in $O(1)$ time whether $\Pi_1$ contains either a flexible edge or an edge of some demanding 3-extrovert cycle of $G_{f^*}$ by testing whether $s(f_1) - \flex(e_1) - \flex(e_2) > 0$ and whether $|D_{f_1}(G_{f_1})| > 0$ by means of \cref{le:demanding-external-algo}.

Since $m_{f^*} = 2$ and neither $e_0$ nor $e_1$ is an edge of $\Pi_2$, $\Pi_2$ does not contain flexible edges.  By Remarks R1 and R2 we have that a demanding 3-extrovert cycle $C$ shares edges with $\Pi_2$ if and only if $C \in D_{f^*}(G_{f^*})$. Hence, we can test whether there exists a demanding 3-extrovert cycle $C$ sharing edges with $\Pi_2$ by checking whether $|D_{f^*}(G_{f^*})| > 0$, which can be done in $O(1)$ time by means of \cref{le:demanding-external-algo}.
It follows that when $m_{f^*} = 2$ and the two flexible edges are legs of a degenerate cycle $\hat{C}$, we can  test in $O(1)$ time whether $\hat{C}$ is demanding and, by using \cref{th:fixed-embedding-min-bend} and possibly \cref{le:co-flex-algo}, compute $\flex(f^*)$ in $O(1)$ time.
\end{proof}

\noindent \cref{le:demanding-algo,le:demanding-external-algo,le:flex-f-algo} yield the following.

\begin{lemma}\label{th:bend-counter-first-part}
    Let $f^*$ be any face of~$G$. The cost $c(G_{f^*})$ of a cost-minimum orthogonal representation of $G_{f^*}$ can be computed in $O(1)$ time.
\end{lemma}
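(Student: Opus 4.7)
The plan is to obtain $c(G_{f^*})$ by direct application of the closed-form cost formula
\[
c(G_{f^*}) \;=\; |D(G_{f^*})| + 4 - \min\{4,\; |D_{f^*}(G_{f^*})| + \flex(f^*)\},
\]
which is supplied by \cref{th:fixed-embedding-min-bend}. Thus, once the three quantities $|D(G_{f^*})|$, $|D_{f^*}(G_{f^*})|$, and $\flex(f^*)$ are available, the cost itself is obtained by a constant number of arithmetic and comparison operations.

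First I would invoke \cref{le:demanding-algo} on the \texttt{Bend-Counter} $\mathcal{B}(G_f)$ (where $f$ is the reference face with respect to which the data structure was built by \cref{th:bend-counter_computation}) to retrieve $|D(G_{f^*})|$ in $O(1)$ time. Next, I would invoke \cref{le:demanding-external-algo} to retrieve $|D_{f^*}(G_{f^*})|$ in $O(1)$ time; note that this value is zero when $f^* = f$ (since $G_f$ has a reference embedding) and is otherwise computed through a constant-time case analysis on the quantities $\delta_{\tt extr}(f^*)$, $\delta_{\tt intr}(f^*)$, $p_{\tt extr}(f^*)$, $p_{\tt intr}(f^*)$ stored in the \texttt{Bend-Counter}, combined with an $O(1)$-time ancestor/descendant query on the inclusion tree $T_f$ (supported via the linear-time preprocessing of~\cite{ht-fafnca-84}).

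Finally, I would invoke \cref{le:flex-f-algo} to obtain $\flex(f^*)$ in $O(1)$ time; internally this lemma already branches on $m(f^*)$ and, in the delicate cases $m(f^*) \in \{1,2\}$, it uses \cref{le:co-flex-algo} and \cref{le:co-flex-v-algo} to access the co-flexibilities of the flexible edges of $f^*$ and, when $m(f^*)=2$ with the two flexible edges sharing a vertex, it tests in constant time whether the induced degenerate $3$-cycle $\hat C$ is demanding by querying $|D_{f_0}(G_{f_0})|$, $|D_{f_1}(G_{f_1})|$, and $|D_{f^*}(G_{f^*})|$ via \cref{le:demanding-external-algo}. Since each of the three quantities is produced in $O(1)$ time and the closed-form combination is $O(1)$, the total time is $O(1)$.

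The main conceptual obstacle here is not the running time argument, which is essentially bookkeeping, but rather the implicit reliance on correctness: the \texttt{Bend-Counter} is constructed only for a single reference embedding $G_f$, yet it must answer queries correctly for an arbitrary external face $f^*$ that may destroy the reference property. Correctness is granted precisely because \cref{le:demanding-algo,le:demanding-external-algo,le:flex-f-algo} translate each query about $G_{f^*}$ into queries about invariants of $G_f$ (namely the inclusion tree $T_f$, the $\tau$, $\delta$, $p$, $m$, $s$, $\lambda$, $\extr$, $\intr$ labels, and the flexibility values), exploiting \cref{ob:extrovert-introvert}, \cref{le:extrovert-extrovert-coloring}, \cref{co:introvert-extrovert-demanding}, and the intersection characterizations of \cref{le:fagiolo-nero-extrovert,le:fagiolo-bianco,le:solo-due-fagioli}.
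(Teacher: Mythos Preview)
Your proposal is correct and follows essentially the same approach as the paper: invoke Equation~\ref{eq:fixed-embedding-cost} from \cref{th:fixed-embedding-min-bend} and compute its three terms $|D(G_{f^*})|$, $|D_{f^*}(G_{f^*})|$, and $\flex(f^*)$ in $O(1)$ time via \cref{le:demanding-algo}, \cref{le:demanding-external-algo}, and \cref{le:flex-f-algo}, respectively. The paper's proof is just these three lemma invocations; your additional remarks on the internal workings of those lemmas and on why correctness survives the change of external face are accurate elaboration but not part of the argument needed here.
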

\begin{proof}
By \cref{th:fixed-embedding-min-bend} the cost $c(G_{f^*})$ of a cost-minimum orthogonal representation of $G_{f^*}$ is given by Equation~\ref{eq:fixed-embedding-cost}.
We compute the different terms of Equation~\ref{eq:fixed-embedding-cost} in $O(1)$ time as follows: We compute $|D(G_{f^*})|$ by \cref{le:demanding-algo}; $|D_{f^*}(G_{f^*})|$ by \cref{le:demanding-external-algo}; $\flex{f^*}$ by \cref{le:flex-f-algo}.
\end{proof}

For example, if we change the planar embedding in \cref{fi:bend-counter} by choosing $f'$ as the new external face, the cost $c(G_{f'})$ of a cost-minimum orthogonal representation of $G_{f'}$, as expressed by Equation~\ref{eq:fixed-embedding-cost}, can be computed in $O(1)$ time using the \texttt{Bend-Counter}. Namely, by \cref{le:demanding-external-algo} $|D_{f'}(G_{f'})|=1$; by \cref{le:demanding-algo}, $|D(G_{f'})|= |D(G_{f})|-\extr(C_1)+\intr(C_1)+|D_{f'}(G_{f'})|-\delta_{\extr}(f')=4$; finally, by \cref{le:flex-f-algo} $\flex(f)=2$. Hence,  $c(G_{f'}) = |D(G_{f'})|+4-\min\{4,|D_{f'}(G_{f'})|+\flex(f')\}=5$. \cref{fi:bend-counter-1} shows a cost-minimum orthogonal representation of $G_{f'}$.

\begin{figure}[tb]
	\centering
	\includegraphics[width=0.4\columnwidth]{}
	\hfil
	\caption{A cost-minimum orthogonal representation of the graph in \cref{fi:bend-counter} where $f'$ is the new external face.}\label{fi:bend-counter-1}
\end{figure}

To complete the proof of \cref{th:bend-counter} we show how to efficiently update $\mathcal{B}(G_f)$ when the flexibility of one edge of $G$ changes in the set $\{1,2,3,4\}$.

\begin{lemma}\label{th:bend-counter-update}
	Let $e$ be a flexible edge of $G$. If $\flex(e)$ is changed to any value in $\{1,2,3,4\}$, $\mathcal{B}(G_f)$ can be updated in $O(1)$ time.
\end{lemma}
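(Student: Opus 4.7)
The plan is to observe that the \texttt{Bend-Counter} is almost entirely invariant under a change of $\flex(e)$ within the set $\{1,2,3,4\}$: since $e$ stays flexible (its flexibility never drops to $0$), essentially all combinatorial quantities stored in $\mathcal{B}(G_f)$ remain valid, and only a single numerical field for each of the two faces incident to $e$ needs to be adjusted.

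First I would argue that the coloring of every contour path of every non-degenerate 3-extrovert or 3-introvert cycle of $G_f$ is unaffected by the update. Inspecting the \textsc{3-Extrovert Coloring Rule} and the \textsc{3-Introvert Coloring Rule}, the only role played by flexibility is whether a contour path $P$ satisfies $\fx(P) > 0$, i.e., whether $P$ contains at least one flexible edge. Since $e$ is flexible both before and after the update, $\fx(P)$ is unchanged for every contour path $P$, so all colorings, and hence the status of every cycle as demanding or non-demanding, are preserved. As a consequence, for every node $C$ of the inclusion tree $T_f$ the flags $d_{\tt extr}(C)$ and $d_{\tt intr}(C)$ and the counters $\extr(C)$, $\intr(C)$ do not change, and the value $|D(G_f)|$ stays the same. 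Similarly, for every face $f'$ the quantities $\tau(f')$, $\delta_{\tt extr}(f')$, $\delta_{\tt intr}(f')$, $p_{\tt extr}(f')$, $p_{\tt intr}(f')$ are unaffected (they depend on the embedding and on which cycles are demanding, not on specific flexibility values), and for the flexible edge $e$ the counters $\lambda_{\tt extr}(e)$, $\lambda_{\tt intr}(e)$ also remain correct for the same reason.

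Next I would identify exactly which stored values do need updating. The number $m(f')$ of flexible edges incident to each face $f'$ does not change, because $e$ was and still is flexible; consequently the pointers $p_0(f')$ and $p_1(f')$ also remain correct. The only fields that change are the values $s(f')$ representing the sum of flexibilities along the boundary of a face $f'$, and these change only for the two faces $f_1$ and $f_2$ incident to~$e$. Writing $\flex_{\mathrm{old}}(e)$ and $\flex_{\mathrm{new}}(e)$ for the flexibility of $e$ before and after the change, the update rule is simply
\[
s(f_i) \leftarrow s(f_i) + \bigl(\flex_{\mathrm{new}}(e) - \flex_{\mathrm{old}}(e)\bigr), \qquad i = 1,2,
\]
together with storing the new value $\flex(e) = \flex_{\mathrm{new}}(e)$ on the edge itself. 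Since the two faces incident to $e$ can be retrieved in $O(1)$ time from the incidence information of $e$, this step is constant time.

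The conclusion then follows: all combinatorial fields of $\mathcal{B}(G_f)$ are automatically still correct after the update, and the only two numerical fields that are affected can be refreshed in $O(1)$ time. There is no real obstacle here, since the argument is essentially a case check that each piece of data stored by the \texttt{Bend-Counter} is invariant under a strictly-positive flexibility change; the main thing to verify carefully is that no coloring or demandingness flag can flip when $\flex(e)$ moves within $\{1,2,3,4\}$, which is immediate from the fact that both coloring rules only test flexibility by means of the predicate $\fx(P) > 0$.
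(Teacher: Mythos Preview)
Your proposal is correct and follows essentially the same approach as the paper: since $e$ remains flexible, the red-green-orange coloring and hence all demandingness information is unchanged, so only $s(f')$ and $s(f'')$ for the two faces incident to $e$ need updating. Your write-up is actually more thorough than the paper's, which simply asserts invariance of the coloring and identifies the two $s(\cdot)$ values as the only affected fields.
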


\begin{proof}
	Since $e$ is still flexible, changing its flexibility does not affect the red-green-orange coloring of the contour paths of the 3-extrovert and 3-introvert cycles of $G$. Hence, the set of demanding 3-extrovert (3-introvert) cycles is not modified and the tree $T_f$ of the \texttt{Bend-Counter} is not changed. Let $f'$ and $f''$ be the two faces incident to $e$. The only values affected by the change of $\flex(e)$ are the sums $s(f')$ and $s(f'')$ of the flexibilities of the edges incident to $f'$ and $f''$ respectively, which can be updated in $O(1)$ time.
\end{proof}

\noindent \cref{th:bend-counter_computation} together with \cref{th:bend-counter-first-part,th:bend-counter-update} imply \cref{th:bend-counter}.

%We summarize \cref{th:bend-counter_computation,le:demanding-external-algo} as follows.

% \begin{theorem}\label{th:bend-counter}
% 	Let $G$ be an $n$-vertex planar triconnected cubic graph, which may have flexible edges.
% 	The \texttt{Bend-Counter} data structure for $G$ can be computed in $O(n)$ time and, for each possible face $f^*$ of $G$, it returns in $O(1)$ time the cost of a cost-minimum orthogonal representation of $G$ with $f^*$ as external face.
% \end{theorem}

%%%%%%%

%\input{labeling-algorithm}

\section{Conclusions and Open Problems}\label{se:conclusions}
We have solved a long-standing open problem by proving that an orthogonal representation of a planar $3$-graph with the minimum number of bends can be computed in $O(n)$ time in the variable embedding setting. Furthermore, our construction is optimal in terms of curve-complexity.
%has at most one bend per edge, which is also optimal.
We conclude by listing some open problems that we find interesting to investigate.

\begin{enumerate}
\item A key ingredient of our linear-time result is the fact that a bend-minimum orthogonal representation of a planar $3$-graph does not need to ``roll-up'' too much. This may be true also for other subfamilies of planar 4-graphs. For example, can we efficiently compute bend-minimum orthogonal representations of series-parallel 4-graphs? We remark that this problem can be solved in $O(n)$ time  in the fixed embedding setting~\cite{DBLP:journals/algorithmica/DidimoKLO23}. In the variable embedding setting, testing whether a series-parallel graph admits a planar orthogonal drawing without bends can be solved in $O(n^2)$ and a logarithmic lower bound is proved to the spirality of no-bend orthogonal drawings of series-parallel graphs~\cite{DBLP:journals/jgaa/DidimoKLO23}.

\item The bend-minimization problem for orthogonal graph drawing has been extended to constrained scenarios in which additional properties of the drawing are required. For example, the \emph{HV-planarity testing} problem asks whether a given planar graph admits a rectilinear drawing with prescribed horizontal and vertical orientations of the edges. This problem is NP-complete also for planar $3$-graphs~\cite{dlp-hvpac-19}. An $O(n^3 \log n)$-time algorithm is known for series-parallel graphs~\cite{DBLP:journals/jcss/GiacomoLM22}. It is not hard to see that the techniques in~\cite{DBLP:journals/jgaa/DidimoKLO23} can be used to reduce the time complexity to $O(n^2)$. It is open whether more efficient algorithms can be devised for HV-planarity testing. 

\item \emph{Orthogonal-upward graph drawing} for digraphs is a model introduced several years ago by Fo{\ss}meier and Kaufmann~\cite{DBLP:conf/gd/FossmeierK94} and recently studied in~\cite{DBLP:conf/isaac/Didimo0LOP23}. In addition to drawing each edge as a chain of horizontal and vertical segments, this model forbids edges that point downward according to their orientation. In~\cite{DBLP:conf/isaac/Didimo0LOP23} it is proved the NP-completeness of deciding whether a digraph admits a planar orthogonal-upward drawing without bends, and a cubic-time algorithm is given for series-parallel digraphs. Devising linear-time algorithms for bend-minimum orthogonal-upward drawings of series-parallel digraphs or for digraph of maximum vertex-degree three is an interesting research line.

\end{enumerate}

\bibliographystyle{siamplain}
\bibliography{bibliography-arxiv}

\end{document}